\newcommand\notsotiny{\@setfontsize\notsotiny{7}{8}}
\definecolor{myurlcolor}{rgb}{0,0,0.7}
\definecolor{myrefcolor}{rgb}{0.1,0,0.9}
\definecolor{light-gray}{gray}{0.95}
    \newtcolorbox{inputBox}{textmarker,
        borderline west={6pt}{0pt}{black},
        colback=black!2!white}
    \newtcolorbox{outputBox}{textmarker,
        borderline west={6pt}{0pt}{black},
        colback=black!8!light-gray}
\renewcommand{\eqref}[1]{Eq.~(\ref{#1})} 
\def\app#1#2{%
  \mathrel{%
    \setbox0=\hbox{$#1\sim$}%
    \setbox2=\hbox{%
      \rlap{\hbox{$#1\propto$}}%
      \lower1.1\ht0\box0%
    }%
    \raise0.25\ht2\box2%
  }%
}
\newcommand{\inBox}[1]{\begin{inputBox} \textbf{Input:} #1 \end{inputBox}}
\newcommand{\outBox}[1]{\begin{outputBox} \textbf{Output:} #1 \end{outputBox}}
\newcommand\ReferencesList{}
\newcommand\AddtoRefsList[1]{\xdef\ReferencesList{\ReferencesList,#1}}
\patchcmd{\blx@addbackref@i}{\c@refsection}{\c@savedrefsection}{}{}
\newcounter{savedrefsection}
\newcommand\saverefsection{%
  \protected@write\@mainaux{}{\string\setcounter{savedrefsection}{\the\c@refsection}}%
}
\newcommand{\Hil}{\mathcal{H}}
\newcommand{\Mod}[1]{\,\mathrm{mod}\,#1}
\newcommand{\Stab}[1]{\,\textnormal{Stab}\,#1}
\newcommand{\HC}{(HC)_{1,2}}
\def\ket#1{{|{#1}\rangle}} 
\newtheorem{theorem}{Theorem}
\newtheorem{corollary}[theorem]{Corollary}
\newtheorem{lemma}[theorem]{Lemma}
\newtheorem{proposition}{Proposition}
\newtheorem{conjecture}{Conjecture}
\newtheorem{observation}{Observation}
\newtheorem{remark}{Remark}
\renewcommand{\eqref}[1]{Eq.~(\ref{#1})} 
\def\app#1#2{%
  \mathrel{%
    \setbox0=\hbox{$#1\sim$}%
    \setbox2=\hbox{%
      \rlap{\hbox{$#1\propto$}}%
      \lower1.1\ht0\box0%
    }%
    \raise0.25\ht2\box2%
  }%
}
\newenvironment{proof}[1][\protect\proofname]{\par
	\normalfont\topsep6\p@\@plus6\p@\relax
	\ivlist
	\itemindent\parindent
	\item[\hskip\labelsep\scshape #1]\ignorespaces
}{%
	\endtrivlist\@endpefalse
}
\providecommand{\proofname}{Proof}
\providecommand{\factname}{Fact}
\providecommand{\theoremname}{Theorem}
\providecommand{\claimname}{Claim}
\providecommand{\lemmaname}{Lemma}
\providecommand{\definitionname}{Definition}
\def\bbbone{{\mathchoice {\rm 1\mskip-4mu l} {\rm 1\mskip-4mu l}
{\rm 1\mskip-4.5mu l} {\rm 1\mskip-5mu l}}}
\definecolor{KB}{rgb}{0.4,0.3,0.9}
\definecolor{THc}{rgb}{0.9,0.3,0.2}
\newcommand{\be}{\begin{equation}}
\newcommand{\ee}{\end{equation}}
\newcommand{\ba}{\begin{eqnarray}}
\newcommand{\ea}{\end{eqnarray}}
\newcommand{\pur}{\operatorname{Pur}}
\newcommand{\st}[1]{\ketbra{#1}{#1}}
\newcommand{\stab}{\operatorname{STAB}}
\newcommand{\pstab}{\operatorname{PSTAB}}
\newcommand{\sma}[1]{M^{(NL,\epsilon)}_{RS}(#1)}
\newtheorem{definition}{\protect\definitionname}
\def\d{\mathrm{d}}
\definecolor{shadecolor}{rgb}{0.90,0.90,0.90}
\numberwithin{equation}{section}
\def\beq{\begin{eqnarray}}\def\eeq{\end{eqnarray}}
\def\be{\begin{equation}}\def\ee{\end{equation}}
\def\nn{\nonumber}
\def\s{\sigma}
\def\b{\beta}
\def\d{\delta}
\def\D{\Delta}
\def\bz{\bar{z}}
\def\mc{{\mathcal{C}}}
\def\tr{{\rm tr~}}
\def\l{\eta_0}
\newcommand{\bmh}{\bar{\mathcal{H}}}
\newcommand{\mA}{\mathcal{A}}
\begin{document}

\pagenumbering{roman}
\title{Bit by Bit: Gravity Through the Lens of Quantum Information}
\author{William Richard Munizzi}
\degreeName{Doctor of Philosophy}
\defensemonth{April}
\gradmonth{May}
\gradyear{2024}
\chair{Cynthia Keeler, Chair \\ Maulik Parikh\\ Matthew Baumgart\\ Kevin Schmidt}		
\maketitle
\doublespace
\begin{abstract}

Computable properties of quantum states are given a dual gravitational interpretation via the AdS/CFT correspondence. For holographic states, boundary entanglement entropy is dual to the area of bulk geodesics, known as Ryu-Takayanagi surfaces. Furthermore, the viability of states to admit a holographic dual at all is constrained by their entanglement structure. Entanglement therefore defines a coarse classification of states in the Hilbert space. Similarly, how a state transforms under a group of operators also provides a classification on the Hilbert space. Certain states, e.g. stabilizer states, are invariant under large sets of operations, and consequently can be simulated on a classical computer. Cayley graphs offer a useful representation for a group of operators, where vertices represent group elements and edges represent group generators. In this representation, the orbit of a state under action of the group can also be represented as a ``reachability graph'', defined as a quotient of the group Cayley graph. Reachability graphs can be dressed to encode entanglement information, making them a useful tool for studying entanglement dynamics under quantum operations. Further quotienting a reachability graph by group elements that fix a chosen state property, e.g. entanglement entropy, builds a ``contracted graph''. Contracted graphs provide explicit bounds on state parameter evolution under quantum circuits. In this work, an upper bound on entropy vector evolution under Clifford group action is presented. Another important property of quantum systems is magic, which quantifies the difficulty of classically simulating a quantum state. Magic and entanglement are intimately related, but the two are not equivalent measures of complexity. Nonetheless, entanglement and magic play complementary roles when describing emergent gravitational phenomena in AdS/CFT. This manuscript describes the interplay between entanglement and magic, and offers a holographic interpretation for magic as cosmic brane back-reaction.

\end{abstract}

\begin{acknowledgements}
I would like to first express my deepest gratitude to my advisor, Dr. Cindy Keeler, whose continuous support, guidance, and motivation have led me to this moment. Throughout each and every stage of my graduate tenure, Cindy consistently prioritized my development as a thinker and as a physicist. Cindy taught me how to refine and sharpen my logical arguments, how to improve my expository writing and presentation skills, and how to organize my excursions into each new idea. I am tremendously thankful for her years of advice and the wonderful physics we studied.

I wish to personally thank each member of my doctoral committee: Dr. Kevin Schmidt, Dr. Matt Baumgart, and Dr. Maulik Parikh. I especially want to thank Kevin for his instruction during my preliminary graduate coursework, as well as his profound insights into just about any physics topic, from which I have learned so much. Likewise I thank Matt for teaching me quantum field theory, and always providing thought-provoking questions during my seminars. I am grateful to Maulik for entertaining my many naive questions about general relativity during journal club, presentations, or ski trips.

I give special thanks to Dr. Bert de Jong for his collaboration and mentorship while I was at Berkeley Lab. Through Bert's guidance I began 
refining some of my lofty ideas for practical implementation, a skill which I feel has improved my performance as a theorist.

I am perpetually grateful to my parents Jim and Carol, for their unwavering support through the many years of my academic journey. I likewise thank my sister Jaime for her sustained encouragement.

For my partner Serena, who fastened enumerable moments of solace out of otherwise stressful times, I am so very thankful. 

During my doctoral studies I was privileged to work alongside a collection of wonderful collaborators, for whom I hold special regard. I am continually inspired by my collaborator Howard Schnitzer, whose lifelong passion for thought and dedication to science and mathematics I can only hope to emulate. I am particularly grateful to Jason Pollack, Charles Cao, Carlos Cardona, and Ning Bao, all of whom I look up to as physicists and each of whom was always willing to give me practical advice and assist me in navigating the academic world. I strongly value my current collaborators Claire Zukowski, Temple He, Huo Chen, Dawei Zhong, Gong Cheng, Lorenzo Leone, and Savatore Oliviero for stimulating scientific conversations and from whom I have learned so much.

To my dear friends and fellow physicists Chris Knill, Shep Bryan, Sean Tilton, Aditya Dhumuntarao, Sasha Sypkens, and Farzad Faramarzi, I am very grateful for the years of countless memories, the years of conversations both scientific and otherwise, and the continued support. 

I give special thanks to my friends and collaborators within the American Physical Society (APS), specifically Farah Dawood, Sarah Monk, Charlotte Selton, Stephanie Lough, Dave Austin, Eric Switzer, David Stillwell, Bhavay Tyagi, Sophia Turner, and Joe Moscoso, who I look forward to seeing at every APS event, and who have provided me with so many laughs and so much inspiration over the years.

Finally I wish to acknowledge additional physicists and mathematicians who were formative to my career, William Terrano, Michael Treacy, Christian Arenz, Alicia Magann, Alexander Jahn, David Austin, Kingshuk Majumdar, Richard Vallery, Julien Paupert, Aaron Szasz, Mads Bahrami and David Polleta.

\end{acknowledgements}

\renewcommand{\cftlabel}{CHAPTER}

\tableofcontents
\addtocontents{toc}{~\hfill Page\par}
\newpage

\addcontentsline{toc}{part}{LIST OF TABLES}
\renewcommand{\cftlabel}{Table}
\listoftables
\addtocontents{lot}{Table~\hfill Page \par}
\newpage

\addcontentsline{toc}{part}{LIST OF FIGURES}
\addtocontents{toc}{CHAPTER \par}
\renewcommand{\cftlabel}{Figure}
\listoffigures

\addtocontents{lof}{Figure~\hfill Page \par}
\newpage
\renewcommand{\cftlabel}{APPENDIX}

\doublespace
\pagenumbering{arabic}
\chapter{INTRODUCTION}

\textit{``We do not know what the rules of the game are; all we are allowed to do is to watch the playing. Of course, if we watch long enough, we may eventually catch on to a few of the rules. The rules of the game are what we mean by fundamental physics.''} -Richard P. Feynman\\

In recent years, many research efforts have centered around an emergent connection between gravity and quantum information. The academic works compiled herein contribute only a small piece to this ongoing story, applying techniques and understandings from quantum information to gain further insight into the nature of gravity. Along the way, a number of interesting results regarding entanglement, magic, and the structure of quantum systems were fortuitously acquired as well. Looking eagerly towards the future of quantum computation, it is my hope that some of these results may find application in both scientific and technological settings.

\section{Entanglement in Quantum Systems}

One of the quintessential (and perhaps most bizarre) characteristics of quantum systems is the concept of entanglement. Contrary to classical intuition, the mechanism of entanglement renders a collection of systems, even in the absence of classical interaction, independently indescribable. While a complete understanding of the entanglement phenomenon is presently lacking, we know such a mechanism exists and we may nonetheless exploit its many consequences. In particular, with the advent of quantum computing, entanglement offers an invaluable resource for extending computation and understanding beyond what is classically achievable. In the following sections we will give an overview describing the quantitative measures of entanglement, as well as its role in quantum gravity and quantum computation.

When working with quantum-mechanical systems it is often most natural to consider a pure state $\ket{\Psi}$, which exists as some vector in a Hilbert space $\Hil$. Since $\ket{\Psi}$ is pure, its ``quantum'' characteristics are limited to those which are classically-achievable. However, if we decompose $\ket{\Psi}$ into a collection of constituent subsystems we observe a far richer structure. Performing this decomposition requires that we assume $\Hil$ itself can be decomposed into the tensor product of Hilbert spaces $\Hil_i$
\begin{equation}\label{HilDecomp}
    \Hil \equiv \bigotimes_{i} \Hil_i.
\end{equation}
\eqref{HilDecomp} suitably describes the setting for a system of qubits, a lattice of spins, or any discrete set of interacting quantum systems.

\subsection{Entropies of Entanglement}

Assuming the bipartition $\Hil = \Hil_A \otimes \Hil_{\bar{A}}$ exists, we can ask what an observer confined to $\Hil_A$ can learn about a full state $\ket{\Psi}$. The quantitative measure of this limitation is known as entanglement entropy, and is computed according to the von Neumann entropy in \eqref{introEE}. To calculate the entanglement entropy of some state in a $\Hil$ subspace we define subsystems $\rho_A \in \Hil_A$ and $\rho_{\bar{A}} \in \Hil_{\bar{A}}$, such that $\rho_{A} \cup \rho_{\bar{A}}$ forms the pure state $\ket{\Psi} \in \Hil$. The entanglement entropy of $\rho_A$, with respect to its complement $\rho_{\bar{A}}$, is then
\begin{equation}\label{introEE}
    S_A \equiv -\Tr \rho_{A}\log_{\rho_A}.
\end{equation}
The object $\rho_A$ is the reduced density matrix on the subregion $A$, and is constructed via a partial trace of $\ket{\Psi}$ over everything in $\Hil_{\bar{A}}$.

The entanglement entropy in \eqref{introEE} is one instance of a family of entropies, known as Renyi entropies. The Renyi entropies, defined
\begin{equation}\label{introRE}
    S_{\alpha} \equiv \frac{1}{1-\alpha} \ln \Tr \rho^{\alpha}, \qquad \forall \alpha \in [0,\infty],
\end{equation}
provide a generalization of the classical Shannon entropy. \eqref{introEE} is recovered from \eqref{introRE} in the $\alpha \to 1$ limit. In fact, this method for computing entanglement entropy as a limit of the generalized Renyi entropy often proves simpler for field theories in particular \cite{Rangamani:2016dms}. Chapters \ref{Chapter3}--\ref{Chapter6} of this dissertation rely specifically on the properties of entanglement entropy, while \ref{Chapter7} extends a more general interpretation for properties of Renyi entropies in holography.

Given a composite state $\rho \in \Hil$, comprised of $n$ disjoint subsystems, there are $2^n-1$ unique entanglement entropies that can be computed using \eqref{introEE}. Organizing this set of subsystem entanglement entropies into a $2^n-1$ component tuple builds the entropy vector \cite{Bao2015} for $\rho$. For example, a tripartite state with disjoint subsystems indexed $A, \,B,$ and $C$ yields an entropy vector of the form
\begin{equation}\label{introEVec}
    \vec{S}_{\rho} = (S_A,\, S_B,\,S_C,\,S_{AB},\,S_{AC},\,S_{BC},\,S_{ABC}).
\end{equation}
In this way, the entropy vector $\vec{S}_{\rho}$ for a state provides a complete description of subsystem entanglement entropy in $\rho$.

\subsection{Entropy Inequalities}

Knowledge of a state's entropy vector is sufficient to establish a classification on states in $\Hil$. One way to determine this classification is by validating the satisfaction, saturation, or failure of certain entropy inequalities \cite{He:2023aif,Linden:2013kal,Schnitzer:2022exe,Munizzi:2023ihc,Schnitzer:2020lcr}. An entropy inequality constrains the structure of a state's entropy vector in a way that is consistent with some otherwise understood property. For example, it is well-known that all quantum states are strongly subadditive, therefore every quantum state must possess an entropy vector which satisfies
\begin{equation}\label{introSSA}
    S_{AB} + S_{BC} \geq S_B + S_{ABC}.
\end{equation}
While \eqref{introSSA} is trivially satisfied for arbitrary quantum states, additional entropy inequalities provide stricter constraints on a state's entanglement structure \cite{Bao2015,Bao:2020zgx,Bao:2020mqq}. 

The monogamy of mutual information (MMI), constitutes an entropy inequality which is satisfied by a strict subset of quantum states. States which obey MMI must have an entropy vector which satisfies
\begin{equation}\label{introMMI}
    S_{AB} + S_{AC} + S_{BC} \geq S_A + S_B + S_C + S_{ABC}.
\end{equation}
\eqref{introMMI} describes a critical property of holographic quantum states \cite{Hayden:2011ag}, those states which admit a smooth classical description in a dual gravity theory via AdS/CFT. We will review the AdS/CFT correspondence and discuss holographic states further in the next section.

\section{Holography and AdS/CFT}

In 1997 a monumental duality was discovered, revealing that gravitational degrees of freedom in certain $(d+1)$-dimensional spacetimes can be encoded into the degrees of freedom for a class of $d$-dimensional quantum field theories \cite{Maldacena:1997re}. This conjecture, known as the AdS/CFT correspondence, is an explicit realization of the holographic principle and provides a host of tools for rigorously probing quantum gravity. Perhaps most useful is the nature of the AdS/CFT correspondence, a strong/weak coupling duality, in which the parameters of strongly-correlated field theories are expressible as classical (weakly-coupled) gravitational objects. Accordingly, certain properties which are inherently difficult to compute in one theory may be ported over to their corresponding dual and evaluated using the complement theory. While various extensions and generalizations of the AdS/CFT formulation exist, we will focus solely on the initial prescription in this document.

\subsection{Anti de-Sitter Spacetime}

The gravitational theory in AdS/CFT takes place on an Anti-de Sitter (AdS) spacetime, a maximally-symmetric solution to Einstein's equations with constant negative curvature. More precisely we require only that our spacetime be asymptotically AdS, i.e. the manifold behaves like AdS at large length scales. In this large radial limit, the spacetime metric is given by \cite{Headrick:2019eth}
\begin{equation}\label{introAdS}
    ds^2 = \frac{\ell^2}{z^2}\left(ds^2_B + dz^2 \right),
\end{equation}
where $\ell$ is known as the AdS radius, and $ds^2_B$ provides the metric for the conformal boundary $B$. 

For simplicity, it is often preferred to compute quantum field theories on flat spacetimes. Fortunately, $d+1$ dimensional AdS admits a boundary at spatial infinity which is precisely $d$-dimensional Minkowski space%
\footnote{The boundary manifold my be thought of as ordinary $d$-dimensional Minkowski space with the addition of several points at infinity to resolve conformal mappings from $\mathcal{M}_d$ to $\infty$.} %
under a particular choice of compactification. This conformal compactification also permits operator action under $SO(d,2)$, thus allowing conformal field theories to exist on the boundary of AdS with the necessary symmetry structure that enables the AdS/CFT correspondence \cite{Witten:1998qj}.

When the AdS length scale grows large compared to the Planck scale, string fluctuations become negligible and the gravitational theory in the bulk AdS spacetime behaves classically. Since AdS/CFT is a strong/weak coupling duality, this weak string coupling in the gravity theory corresponds precisely to the limit when degrees of freedom in the boundary CFT grow large, the so-called large $N$ limit \cite{HOOFT1974461}. This relationship between bulk and boundary degrees of freedom (see \eqref{introCentralCharge}) determines the energy sector of the theory, with the UV found nearest to the boundary and the IR located deep in the AdS bulk.

Until now we have not specified the dimension of our AdS spacetime. Perhaps the strongest demonstration of the AdS/CFT correspondence equates Type IIB supergravity on $AdS_5 \times S^5$ with $4$-dimensional $\mathcal{N}=4$ super Yang-Mills theory \cite{Maldacena:1997re,Gubser:1998bc,Witten:1998qj}. In this paper however, for reasons which will soon become clear, we concern ourselves only with the relationship between $3$-dimensional gravity ($AdS_3$) and $2$-dimensional conformal field theories ($CFT_2$). We now give a brief introduction to conformal field theories and the unique properties that emerge in two dimensions.

\subsection{Conformal Field Theories}

The second component of the AdS/CFT correspondence centers around a special class of quantum field theories with particularly friendly structure. Conformal field theories (CFTs) are quantum field theories invariant under the set of conformal transformations \cite{di1996conformal}, i.e. metric transformations of the form
\begin{equation}\label{introConformalInvariance}
    g_{\mu \nu}'(\textbf{x}') = \Lambda(\textbf{x})g_{\mu \nu}(\textbf{x}).
\end{equation}
The set of all transformations in \eqref{introConformalInvariance} forms a group, known as the conformal group, which contains the Poincare group as well as the set of all special conformal transformations on $g_{\mu \nu}$. Explicitly stated, the conformal group describes the following set of coordinate transformations \cite{di1996conformal}
\begin{equation}\label{introConformalGroup}
\begin{split}
    \textnormal{Translation:}& \quad x^{'\mu} = x^{\mu} + a^{\mu},\\
     \textnormal{Dilation:}& \quad x^{'\mu} = \lambda x^{\mu},\\
      \textnormal{Rotation:}& \quad x^{'\mu} = A^{\mu}_{\nu} x^{\nu},\\
       \textnormal{Special Conformal Transformation:}& \quad x^{'\mu} = \frac{x^{\mu} - b^{\mu}\vec{x}^2}{1-2 \vec{b} \cdot \vec{x} + b^2\vec{x}^2}.\\
\end{split}
\end{equation}
Alternatively, the above set of transformations may be thought of as the set of operations which locally preserve the angles of intersecting curves at a point. 

In two dimensions there exists an infinite set of unique local conformal transformations, which constrains the form of $n$-point correlation functions, therefore enabling exact solutions for $2$-dimensional CFTs \cite{BELAVIN1984333}. More rigorously stated, the generators of $2$-dimensional conformal transformations, shown in \eqref{introVirasoro}, satisfy an infinite-dimensional algebra known as the Virasoro algebra \cite{Virasoro:1969zu,Fubini:1971ce}. 

Upon quantizing $2$-dimensional conformal fields, the arrival of a symmetry deformation gives rise to a central charge $c$ often called the conformal anomaly. This central charge results in the addition of a $c$-dependent term, such that the Virasoro algebra is defined
\begin{equation}\label{introVirasoro}
\begin{split}
    [L_n, L_m] & = (n-m)L_{n+m} + \frac{c}{12}n(n^2-1)\delta_{n+m,0},\\
    [\bar{L}_n, \bar{L}_m] & = (n-m)\bar{L}_{n+m} + \frac{c}{12}n(n^2-1)\delta_{n+m,0},\\
    [L_n,\bar{L}_n] & = 0.\\
\end{split}
\end{equation}
Generators without an overline in \eqref{introVirasoro} act only on the holomorphic sector of the theory, whereas those with a bar act on the antiholomorphic sector.

In the large $c$ limit, the Virasoro algebra reduces to the group $SO(d,1)$. Beginning with two identical copies of the Virasoro algebra in the limit of large central charge allows us to recover $SO(d,2)$, precisely the isometry group of $AdS_3$! This magnificent symmetry matching allows the central charge of a double-copied boundary CFT to be expressed using bulk gravitational parameters \cite{Brown:1986nw}
\begin{equation}\label{introCentralCharge}
    c = \frac{3\ell}{2G_N}.
\end{equation}
We observe in \eqref{introCentralCharge} one of the foundational equivalences provided by the AdS/CFT correspondence.

\subsection{Entanglement In Holography}

Now that we have established a connection between boundary quantum fields and bulk geometry, we can begin performing explicit calculations. One of the most famous computations performed in the AdS/CFT framework is the explicit interpretation of boundary entanglement as areas of bulk geodesics. Formulated in the celebrated Ryu-Takayanagi conjecture \cite{Ryu_2006}, this result enables the practical calculation entanglement in strongly-correlated CFTs using rudimentary techniques of general relativity.

Before stating the Ryu-Takayanagi conjecture, we first recall the origin of black hole entropy. Remarkable insight from Bekenstein and Hawking demonstrated that the thermal entropy of a black hole is proportional to its horizon area \cite{PhysRevD.7.2333}, specifically
\begin{equation}\label{introBH}
    S_{BH} = \frac{1}{4G_N}A[H],
\end{equation}
where $H$ represents the black hole horizon. \eqref{introBH} allows the black hole to be cast as a thermodynamic system to an observer located outside the event horizon. 

Placing a black hole in the center of AdS, allows us to inquire about corresponding thermal properties in the boundary field theory. Specifically, the existence of an AdS black hole implies that the boundary CFT is in a deconfined%
\footnote{Here this deconfinement phase transition in the field theory occurs when the thermodynamic parameters are of order $c$.}
phase \cite{Headrick:2019eth}. Assuming a bipartition of the boundary Hilbert space, as defined in \eqref{HilDecomp}, into subspaces $A$ and $\bar{A}$, we construct the thermal state
\begin{equation}\label{introBH}
    \ket{\psi} = \frac{1}{\sqrt{Z}} \sum_{n} e^{-\beta E_n/2} \ket{n}_A \otimes \ket{n}_{\bar{A}}.
\end{equation}
The state $\ket{\psi}$ is known as the thermofield double state, and is interpreted as a two-sided black hole in the dual AdS geometry \cite{Eberhardt:2019ywk,Zaffaroni:2000vh}.

The state in \eqref{introBH} possesses several properties which enable an interpretation of the Beckenstein-Hawking entropy as an entanglement measure. First, the entanglement structure of $\ket{\psi}$ enables local operators, acting on $A$ and $\bar{A}$ respectively, to have non-zero correlation. Additionally, tracing out $\bar{A}$ yields a thermal density matrix on $A$. The black hole entropy now corresponds to the entanglement entropy between subregions of the boundary state $\ket{\psi}$. The black hole horizon then generalizes to the bulk AdS surface of minimal area which separates $A$ from its complement $\bar{A}$. This generalization reveals the Ryu-Takayanagi conjecture, which states
\begin{equation}\label{introRT}
    S_A = \frac{1}{4G_N} A[\gamma_A],  
\end{equation}
where $\gamma_A$ now denotes the minimal area bulk surface homologous to boundary region $A$. \eqref{introRT} is especially useful for computing entanglement in field theories, where the log of a reduced density operator is non-trivially defined \cite{Rangamani:2016dms}.

In the following section we provide an introduction to quantum computation and its related role in investigating quantum gravity. Accordingly, we transition towards a discrete treatment of quantum systems, focusing on the group-theoretic properties of operators on $\Hil$. We use this mathematical framework to define the set of stabilizer states, a superset of the holographic states, which admit similar entanglement structure at low qubit number.

\section{Quantum Computing}

An intuitive introduction to quantum computing relies heavily on terminology inherited from its classical analog. Much like a classical storage device containing information about the state of our computer, we consider a quantum system which likewise possesses some computable parameters of interest. The system is free to evolve under a variety of operations which, in the simplest case, act as logic gates. A sequence, in time, of these quantum gates constitutes a quantum circuit. The system is measured at selected intervals along the evolution, and information is extracted accordingly.

A natural starting point for discussing quantum gates is the set of Pauli matrices, defined
\begin{equation}\label{introPauli}
    I=\begin{pmatrix}1&0\\0&1\end{pmatrix}, \,\, \sigma_X=\begin{pmatrix}0&1\\1&0\end{pmatrix}, \,\,
    \sigma_Y=\begin{pmatrix}0&-i\\i&0\end{pmatrix}, \,\,
    \sigma_Z=\begin{pmatrix}1&0\\0&-1\end{pmatrix}.
\end{equation}
As a matrix object, each Pauli is a Hermitian unitary with eigenvalue $\pm 1$. In the \{$\ket{0},\ket{1}\}$ basis the matrices in \eqref{introPauli} act as operators on the Hilbert space $\mathcal{H} \equiv \mathbb{C}^2$. As generators of a multiplicative group, the operators $\{I,\, \sigma_X,\, \sigma_Y,$ and $\sigma_Z\}$ build the single-qubit Pauli group $\Pi_1$.

We can generalize the action of the Pauli group to arbitrary qubit number by composing strings of Pauli operators. A Pauli string generalizes local Pauli action to an $n$-qubit setting by building each operator as the $n$-fold tensor product of Paulis. For example, $\sigma_Z$ acting on the $k^{th}$ qubit of an $n$-qubit system is constructed
\begin{equation}\label{introPauliString}
    I^1\otimes\ldots\otimes I^{k-1} \otimes \sigma_Z^k \otimes I^{k+1} \otimes \ldots \otimes I^n.
\end{equation}
We refer to the number of non-identity factors in \eqref{introPauliString} as the weight of the Pauli string. The full $n$-qubit Pauli group $\Pi_n$ can then be generated from all weight-1 Pauli strings.

Another useful set of quantum operations is the $n$-qubit Clifford group $\mathcal{C}_n$. The group $\mathcal{C}_n$ is a set of unitaries which normalizes the Pauli group, i.e. elements of $\mathcal{C}_n$ map Pauli operators to Pauli operators via conjugation. We alternatively define $\mathcal{C}_n$ as the multiplicative matrix group generated by the Hadamard, phase, and CNOT operations \cite{sylvester1867lx,hadamard1893resolution}
\begin{equation}\label{introCliffordGates}
    H\equiv\frac{1}{\sqrt{2}}\begin{pmatrix}1&1\\1&-1\end{pmatrix}, \qquad P\equiv\begin{pmatrix}1&0\\0&i\end{pmatrix}, \qquad CNOT_{1,2} \equiv \begin{pmatrix}
            1 & 0 & 0 & 0\\
            0 & 1 & 0 & 0\\
	    0 & 0 & 0 & 1\\
	    0 & 0 & 1 & 0
            \end{pmatrix}.
\end{equation}
The Hadamard and phase gates in \eqref{introCliffordGates} are local unitaries which act on a single qubit in some $n$-qubit system. Conversely, the multi-qubit gate $CNOT_{i,j}$ acts simultaneously on two qubits, a control bit $i$ and a target bit $j$. The CNOT gate interprets the state of the control bit, and flips the state of the target bit iff the control bit is ``on''.

We again may generalize the action of the gates in \eqref{introCliffordGates} to arbitrary $n$-qubit systems by building Clifford strings, exactly analogous to the Pauli string construction presented in \eqref{introPauliString}. Accordingly, the group $\mathcal{C}_n$ is generated by all weight-1 Clifford strings, and contains the group $\Pi_n$ as a subgroup. Note, for the case of a single qubit, the group $\mathcal{C}_1$ is generated using only Hadamard and phase. The group $\mathcal{C}_n$ holds special regard in quantum computing since Clifford circuits, i.e. all circuits composed of Clifford gate sequences, are efficiently-simulable on a classical computer \cite{Gottesman:1997zz,Gottesman:1998hu,}. In other words, with access to only Clifford gates and stabilizer measurements, we can perform any task of interest in polynomial time using classical computation techniques \cite{aaronson_improved_2004}.

\subsection{Stabilizer Formalism}

For any state $\ket{\psi} \in \Hil$, there exists%
\footnote{At the very least, there exists an identity operation on $\Hil$ which acts trivially on arbitrary $\ket{\psi}$.} %
a set of operations on $\Hil$ which leave $\ket{\psi}$ fixed. For some chosen $\ket{\psi}$, and group of operators $G$ acting on $\Hil$, the subgroup of operations which leave $\ket{\psi}$ invariant is referred to as the stabilizer group of $\ket{\psi}$, defined
\begin{equation}
    G_\ket{\Psi} \equiv \{g\in G\: | \:g\ket{\Psi}=\ket{\Psi}\}.
\end{equation}

When considering the action of $\Pi_1$ on $\Hil$, every state in the Hilbert space is trivially stabilized by $\sigma_{I}$. Certain states however, are stabilized by a larger subset of $\Pi_1$ elements, specifically
\begin{equation}\label{introOneQubitStabilizerStates}
    S_1 \equiv \left\{ \ket{0}, \ket{1}, \ket{\pm}\equiv\frac{1}{\sqrt{2}}(\ket{0}\pm \ket{1}), \ket{\pm i}\equiv\frac{1}{\sqrt{2}}(\ket{0}\pm i\ket{1})  \right\}.
\end{equation}
The $6$ states in \eqref{introOneQubitStabilizerStates} comprise the set of single-qubit stabilizer states. The definition of stabilizer states as the set of states in $\Hil$ which are invariant under the largest subset of $\Pi_n$ generalizes to all $n$.

An alternative way to construct all $n$-qubit stabilizer states is to begin with any state in the canonical measurement basis, e.g. $\ket{0}^{\otimes n}$, and act with $\mathcal{C}_n$ on that state \cite{aaronson_improved_2004}. Since $\mathcal{C}_n$ is finite the orbit of any state under $\mathcal{C}_n$ is likewise finite \cite{garcia2017geometry}, and the orbit of any measurement basis state is precisely the $n$-qubit stabilizer state set $S_n$. The order of both are known, and are given by \cite{Gross_2013}
\begin{equation}
\begin{split}
    \left|C_n\right| &= 2^{n^2+2n} \prod_{k=1}^{n} (4^k-1),\\
    \left|S_n\right| &= 2^n \prod_{k=0}^{n-1} (2^{n-k}+1).
\end{split}
\end{equation}

The Clifford group forms a set of measure zero in the space of linear operations on $\Hil$. In fact, to perform any process which could yield quantum supremacy we must move beyond the set $\mathcal{C}_n$. The property of quantum magic characterizes non-stabilizer behavior, and offers a valuable resource for near-term quantum computing \cite{Oliviero:2022euv}.

\subsection{Quantum Magic}

As previously stated $\mathcal{C}_n$ is not a universal quantum gate set, i.e. elements of $\mathcal{C}_n$ are not sufficient to approximate any unitary with arbitrary precision. However, the set $\mathcal{C}_n$ along with any $V \notin \mathcal{C}_n$ is a universal gate set for quantum computing. The operation $V$ can be implemented as a gate in the quantum computer, or more cleverly inserted via the coupling of a non-stabilizer state to the system. Such states which encode this non-Clifford action are known as magic states, e.g.
\begin{equation}
    \ket{T} \equiv \cos \beta \ket{0} + e^{i\frac{\pi}{4}}\sin \beta \ket{1}, \qquad \cos(2\beta) = \frac{1}{\sqrt{3}}.
\end{equation}
The state $\ket{T}$ cannot be reached by applying only Clifford operations to a single-qubit measurement state.

Currently there is no universal magic measure, however many quantities help characterize the magic in a quantum system. One such quantity, known as the trace distance of magic \cite{Cao:2024nrx}, is defined
\begin{equation}\label{introMagic}
    M_{dist.}(\rho) \equiv \min_{\sigma \in \textnormal{Stab$_0$}} \frac{1}{2} ||\rho-\sigma||_1,
\end{equation}
where $Stab_0$ is the set of all states with zero stabilizer Renyi entropy \cite{Leone:2021rzd}. \eqref{introMagic} offers an intuitive notion of magic, computed as the trace distance to the nearest stabilizer state. Perhaps less-intuitive is the fact that magic can also exist non-locally in entangled quantum states \cite{Bao_2022}, with certain entanglement structures yielding states with higher magic than others \cite{Leone:2021rzd}. The nuanced relationship between entanglement and magic is discussed in detail in Chapter \ref{Chapter7}, as well as the role of magic in holography.

The papers in this thesis follow the order of their production. 
\begin{itemize}
    \item In Chapter \ref{Chapter2} we discuss \cite{Cardona:2021kyh}, where the modular conformal bootstrap technique is applied to four-point scalar functions in the lightcone limit. The OPE spectral decomposition of the Virasoro vacuum is computed in the large-dimension limit. A kernel ansatz method is further presented to generalize calculations beyond this large dimension limit. The author's primary contributions to this work include calculations of OPE spectral coefficients, as well as functional analysis.
    \item In Chapter \ref{Chapter3} we discuss \cite{Keeler:2022ajf}, where entropy vectors of stabilizer states are computed and analyzed using a graph-theoretic framework. The nature of these reachability graphs is investigated, and important dimension-dependent characteristics are highlighted. We offer a protocol for constructing restricted graphs, where group action on the Hilbert space is constrained to a chosen subgroup of operators. We observe stabilizer states which violate holographic entropy conditions, and remark on their relation to other states under Clifford operations. The author's primary contributions include developing a graph model for state orbits, programming and package development, calculating stabilizer entropy vectors, and analysis of reachability graphs.
    \item In Chapter \ref{Chapter4} we discuss \cite{Keeler:2023xcx}, where the structure of the $n$-qubit Clifford group is studied in depth. Useful relations between different Clifford circuits are analyzed, and a state-independent method for constructing reachability graphs is presented. The quotient protocol derived herein is used to explore the Clifford orbit of non-stabilizer states. The author's primary contributions include determining a quotient structure on Cayley graphs to yield reachability graphs, programming and package development, and an analysis of graph and group theoretic structures.
    \item In Chapter \ref{Chapter5} we discuss \cite{Munizzi:2023ihc}, where a formula for computing entanglement entropies in Dicke states is derived. This calculation of entanglement entropies enables us to generate all Dicke state entropy vectors, and accordingly describe the Dicke state entropy cone. Dicke state stabilizers under Clifford and Pauli group action are give, and orbits of Dicke states under these two groups are shown. The author's primary contributions to this work include developing a generalized form of Dicke state entanglement entropy, programming and package development, and Dicke state orbit calculations.
    \item In Chapter \ref{Chapter6} we discuss \cite{Keeler:2023shl}, where a new graph quotient, known as a contracted graph, is presented. Contracted graphs enables a rigorous analysis of state parameter evolution under group action on a Hilbert space. Furthermore, the structure of contracted graphs provides a strict bound on state parameter evolution under circuits composed of the group operators. We use the contracted graph technique to give an upper-bound on achievable entropy vectors built using Clifford circuits. The author's primary contributions to this work include developing a Cayley graph quotient that reproduces parameter evolution, formalising group and graph-theoretic concepts, programming and package development, and exploring applications for contracted graph protocols.
    \item In Chapter \ref{Chapter7} we discuss \cite{Cao:2024nrx}, where various magic measures and their associated estimates are introduced and compared. The relation between entanglement and magic is explored, and we conjecture on the role of non-local magic in conformal field theories. Finally, a gravitational dual for non-local magic in holographic CFTs is proposed. The author's primary contributions to this work include calculations of non-local magic in CFTs, and some analyses of the relationship between entanglement entropy and magic.
    \item In Chapter \ref{Chapter8} we conclude this dissertation. We propose a collection of interesting open questions, and suggest extensions of the work herein that are worthy of future study.
\end{itemize}

\clearpage
\begin{singlespace}
\printbibliography[heading=subbibliography]
\end{singlespace}

\chapter{FOUR-POINT CORRELATION MODULAR BOOTSTRAP FOR OPE DENSITIES}\label{Chapter2}

\textit{The contents of this chapter were originally published in the Journal of High Energy Physics \cite{Cardona:2021kyh}.}

\textit{In this work we apply the lightcone bootstrap to a four-point function of scalars in two-dimensional conformal field theory. We include the entire Virasoro symmetry and consider non-rational theories with a gap in the spectrum from the vacuum and no conserved currents. For those theories, we compute the large dimension limit ($h/c\gg1$) of the OPE spectral decomposition of the Virasoro vacuum.  We then propose a kernel ansatz that generalizes the spectral decomposition beyond $h/c\gg1$. Finally, we estimate the corrections to the OPE spectral densities from the inclusion of the lightest operator in the spectrum.}

\section{Introduction and motivation}

The study of conformal field theories in dimensions greater than two has seen a rapidly-increasing interest in recent years due to successful  application of the bootstrap program to conformally symmetric correlation functions,  revived in \cite{Rattazzi:2008pe, ElShowk:2012ht}.  In particular, by taking the lightcone limit of the crossing equation and expanding in large spin, \cite{Fitzpatrick:2014vua, Komargodski:2012ek, Alday:2016njk, Alday:2015ewa} obtained analytic results for the OPE coefficients and the anomalous dimensions of the spectrum. In this regime, the resultant quantities reveal properties universal to all conformal field theories in dimension three and higher. 

In two dimensions, the conformal symmetry extends to an infinite-dimensional Virasoro algebra. In the semi-classical limit, where the central charge is very large, this algebra reduces to the same finite-dimensional group $SO(d,1)$ as in the higher-dimensional case.  Thus, all the successful  techniques applied in higher dimensions can, in principle, be used in the semiclassical limit of any (non-rational) two-dimensional CFT.  However, applying these higher-dimensional techniques to the two-dimensional case does not produce a 
straightforward result due to several features of the two-dimensional theory. First, the two-dimensional theory has the infinite-dimensional Virasoro algebra instead of just the global conformal algebra. Next, in more than two dimensions,  all non-vacuum operators have higher weight.  In two dimensions,  instead,  all the operators in the Virasoro vacuum Verma module, such as the stress tensor and its composites, have zero weight; therefore a larger family of operators contributes in the same multiplet.

Two-dimensional CFTs themselves can be further divided into  rational and non-rational theories.  Techniques including integrability,  vertex algebras \cite{kac1998vertex} \footnote{For a nice recent account on the use of vertex algebras in this context see \cite{Cheng:2020srs}.},  and the quantum groups approach \cite{Ponsot:2000mt, Ponsot:1999uf} have led to impressive progress towards classifying the space of rational field theories.  Unfortunately for generic non-rational conformal field theories in two dimensions, this analysis is far from complete even in more-controllable regions of the parameter space, such as the lightcone limit. 

In order to translate the large spin results achieved in higher dimensions to the two-dimensional case,  we require a decomposition that accounts for the entire extended Virasoro symmetry.   A powerful tool that allows for a clean  analysis of large spin quantities in higher-dimensional conformal field theories is the Lorentzian inversion formula \cite{Caron-Huot:2017vep,  Simmons-Duffin:2017nub}, which shows the analyticity in spin of OPE coefficients (see \cite{Liu:2018jhs, Cardona:2018dov,  Cardona:2018qrt} 
 for earlier applications of the inversion formula to compute anomalous dimensions).  Unfortunately, we cannot directly apply this inversion formula to the two-dimensional case as it does not incorporate the entire Virasoro symmetry.  Remarkably,  progress is still possible:  there is  an analogous inversion formula relating t-channel to s-channel data that has been known for two decades \cite{Ponsot:2000mt, Ponsot:1999uf, Teschner:2001rv}.  By leveraging this impressive tool,  \cite{Collier:2018exn,  Collier:2019weq} studied the universal OPE large spin asymptotics for non-rational CFTs.
 
In the present work, we aim to reproduce some of these results via more traditional bootstrap methods.  Even though explicit forms for the Virasoro blocks are only known in particular limits, e.g. ~the semi-classical limit, their form is still constrained enough to justify certain conclusions regarding the large spin OPE spectral densities.  Previous work using the same strategy was done in \cite{Das:2017cnv, Kusuki:2019avm,Kusuki:2018wcv},  but we are able to extend their results by proposing a kernel ansatz that goes beyond the strict semi-classical limit previously considered.  

Most of the previous work using the modular bootstrap focuses on the partition function; for example, \cite{Kraus:2016nwo,  Keller:2014xba} studied the asymptotic formula for the average value of light-heavy-heavy three-point coefficients,  in a generalization of Cardy’s formula for the high energy density of states, while \cite{Benjamin:2016fhe, Benjamin:2019stq,Cho:2017fzo,  Keller:2017iql,  Dyer:2017rul, Lin:2021udi} further generalized these results. For the four-point function, \cite{Das:2017cnv} follows a similar approach. Numerical work on the modular bootstrap has been undertaken in \cite{Collier:2016cls, Friedan:2013cba,  Hartman:2014oaa}, mostly focusing on the partition function. Our work here is highly motivated by these references.  

In this paper we study a slight modification of the lightcone limit, which we term the \emph{modular lightcone limit}.  In particular, we examine the crossing equation for the scalar four-point function, extracting the large spin OPE spectral densities using familiar bootstrap techniques that require the Virasoro conformal blocks.  We hope this paper will provide a step towards a generalization of the lightcone and the Euclidean bootstrap, both well understood in higher dimensions using conformal blocks, to the two-dimensional case where the Virasoro blocks are needed for computation. We begin by establishing notation and reviewing the Virasoro crossing equation in Section \ref{sec:Vir}.  In Section \ref{sec:Vac} we re-establish the results for the vacuum contribution to the OPE coefficient spectral density using a standard bootstrap approach, and in Section \ref{sec:BeyondVac} we extend these results to next order in the semi-classical limit by incorporating the first correction beyond the vacuum contribution.  We conclude and discuss the gravitational implications of our results in Section \ref{sec:Discussion}.

\section{Virasoro Crossing Equation}\label{sec:Vir}

We consider a scalar four-point function.  By conformal invariance,  this four-point function is given by
\be \label{4p}
\langle \prod_{i=1}^4{\cal O}_{\varphi}(x_i) \rangle = 
  \frac{1}{(x_{12}^2)^{\frac12(\Delta_1+\Delta_2)}(x_{34}^2)^{\frac12(\Delta_3+\Delta_4)}}
A(z,\bz),
\ee
where $A(z,\bz)$ characterizes the cross-ratio dependence.  The cross ratios themselves are given by
\be
z\bz={x_{12}^2x_{34}^2\over x_{13}^2x_{24}^2},\,\quad (1-z)(1-\bz)={x_{14}^2x_{23}^2\over x_{13}^2x_{24}^2}\,.
\ee

In two dimensions, fields factorize into holomorphic and anti-holomorphic components.  The total conformal dimension of an operator, $\D=h+\bar{h}$,  is the sum of its holomorphic  weight $h$ and its anti-holomorphic weight $\bar{h}$.  The operator's spin is $j=|h-\bar{h}|$, and  $c$ is the central charge of the CFT \footnote{Here we restrict to non-rational CFTs, therefore setting $c>1$ to avoid the minimal models.  When we do numerics, we further specialize to $c>25$. }. 

Here we study the OPE decomposition of the four-point correlation function  by a traditional bootstrap approach,  i.e.  by using Virasoro conformal blocks $\mathcal{F}$ and the crossing equation.  For the case where all external operators have the same conformal dimensions,  i.e. $\D_i=\D_0$ for  $i=1,\cdots, 4$, the crossing equation simplifies to
\begin{equation}
	\sum_{{\cal O}^s_i} C_{i}^2\, \mathcal{F}\left(h_i\middle|z\right)\bar{\mathcal{F}}\left(\bar{h}_j\middle|\bar{z}\right) = \sum_{{\cal O}^t_j} C_{j}^2\, \mathcal{F}\left(h_j\middle|1-z\right)\bar{\mathcal{F}}\left(\bar{h}_j\middle| 1-\bar{z}\right)\,.
\end{equation}
We introduce a spectral density of OPE coefficients defined as
\be\label{spectralC}
C(h)= \sum_{i}C_i\,\d (h-h_i),
\ee
where $h_i$ denotes the conformal dimension of the exchange operator ${\cal O}_i$, with an analogous definition for the barred coefficients.  In this way the crossing equation takes the form
\beq\label{crossing2}
	&&\int_{0}^{\infty}d h_s d\bar{h}_s \,C(h_s)^2\,\mathcal{F}\left(h_s\middle|z\right)\bar{\mathcal{F}}\left(\bar{h}_s\middle|\bar{z}\right)\nn\\
	 &&= 	\int_{0}^{\infty}d h_t d\bar{h}_t \,C(h_t)^2\, \mathcal{F}\left(h_t\middle|1-z\right)\bar{\mathcal{F}}\left(\bar{h}_t\middle| 1-\bar{z}\right).
\eeq

An important comment is in order here.   As will be demonstrated in subsequent sections,  we will obtain solutions to the spectral density which are approximated by smooth functions rather than linear combinations of distributions.  The proper statement would be that those solutions approximate the RHS of equation \eqref{spectralC} in a smeared sense,  i.e.  after integration against an appropriate test function.  The smearing mechanism required to make this statement rigorous has been recently explored through the use of Tauberian theorems, for example \cite{Mukhametzhanov:2019pzy,  Pal:2019zzr,  Das:2020uax}%
\footnote{ We give special thanks to Alex Maloney for providing a clear explanation of this issue.}.%

For the Virasoro blocks,  we will use the elliptic representation found by Zamolodchikov \cite{Zamolodchikov:426555}:
\be\label{Virasoros}
{\cal F}_{h_0}(h_s|z)=(1-z)^{\frac{c-1}{24}-2h_{0}}\left(z\right)^{\frac{c-1}{24}-2h_{0}}[\theta_{3}\left(q\right)]^{\frac{c-1}{2}-16h_{0}}\left(16q\right)^{h_s-\frac{c-1}{24}}H(h_s,q)\,,
\ee
where $q$  is known as the elliptic nome and $h_0$ is the conformal dimension of the external operators, while $h_s$ is the exchange dimension. The elliptic nome can be thought of as a conformal transformation:
\beq\label{nome}
q=e^{i\pi\tau(z)}\,, \qquad
\tau(z)=i{K(1-z)\over K(z)}\,,\qquad K(z)={1\over 2}\int_0^1{dt\over \sqrt{t(1-t)(1-zt)}}\,,
\eeq
so $K$ is an elliptic integral of the first kind.  In \eqref{Virasoros}, the function $\theta_{3}$ is the Jacobi theta function
\begin{equation}
    \theta_3(q) \equiv \sum_{n\in \mathcal{Z}}q^{n^2},
\end{equation}
and the function $H(h_s,q)$ in \eqref{Virasoros} is unknown in closed form, but can be computed recursively as a power expansion in $q$ to very high order \cite{Zamolodchikov:1987, Zamolodchikov:1985ie}.  In the semi-classical regime, where $h_s\gg c$, $H(h_s,q) \approx 1$; the overall prefactors  in \eqref{Virasoros} thus capture the semi-classical behavior.
Later we will also use the notation $\tilde{q}=q(-{1\over \tau})$,  corresponding to the modular S transformation.  

Given the elliptic representation for the Virasoro blocks \eqref{Virasoros},  we are mainly interested in two particular limits of the crossing equation.  For later convenience,  let us define  $\tau \equiv {i\beta\over \pi}$.   First we consider the limit $\b \to 0$, with $\bar{\b}$ fixed.   Then,  as we will show,  the limit $\bar{\b}\to 0$ (Euclidean) simply becomes a copy of $\b \to 0$, with all quantities replaced by their bar counterparts.  We then study the limit $\bar \b \to \infty$ while keeping  $\b\to 0$, which we term the modular lightcone limit due to its similarities with the better-known global lightcone limit.

\section{Spectral density OPE for vacuum}\label{sec:Vac}

In this section we study the OPE spectral density in the s-channel in the limit where the t-channel contribution is only due to the vacuum.  Specifically, we first take the $\b\to 0$ limit of  the four-point function crossing equation while leaving $\bar{\b}$ fixed, isolating the vacuum contribution in the t-channel; we will return to contributions beyond this vacuum limit in Section \ref{sec:BeyondVac}. In the $\b \to 0$ limit, we then study the s-channel contribution,  proposing  a kernel ansatze for the spectral density $C(h_s)$ in both the Euclidean limit $\bar \b \to 0$ and the modular lightcone limit $\bar \b \to \infty$. 
\subsection{Small $\b$ limit} \label{smallbeta}

In order to isolate the t-channel vacuum contribution, we need to consider the leading behavior of the conformal blocks when $\b\to 0$, as we now show. We begin by inserting the elliptic representation for the Virasoro blocks \eqref{Virasoros} into the crossing equation \eqref{crossing2}.  We note that the t-channel blocks depend on $(1-z)$, and thus they take the form \eqref{Virasoros} except with $z\leftrightarrow 1-z$ and $q\rightarrow \tilde q$, where $\tilde q = q(-1/\tau)$ as in \eqref{nome}.

To compare the terms independent of exchange dimension, we rewrite the $\theta_{3}\left(\tilde q \right)$ from the t-channel blocks using the modular transformation of the theta function
\be
 \theta_{3}\left(\tilde q\right)=\left({\beta\over \pi}\right)^{1/2}\theta_{3}\left(q\right),
\ee
where we have used $\tau={i\beta\over \pi}$ and $\tilde q = q(-1/\tau)$.
Accordingly, the terms independent of exchange dimension in \eqref{Virasoros} cancel when we plug into the crossing equation, up to powers of $\beta$ and $\bar\beta$.  Explicitly the crossing equation \eqref{crossing2} becomes
\beq\label{crossing1}
	&&\int_{0}^{\infty}d h_s d\bar{h}_s \,C(h_s)^2 \,\left(16\right)^{h_s+\bar{h}_s}\,
	e^{-\b(h_s-\frac{c-1}{24})}e^{-\bar{\b}(\bar{h}_s-\frac{c-1}{24})}H(h_s,e^{-\b})H(\bar{h}_s,e^{-\bar{\b}})\\
	 &&= \left(\frac{\b \bar \b}{\pi^2}\right)^{\frac{c-1}{4}- 8h_{0}}\int_{0}^{\infty}d h_t d\bar{h}_t \,C(h_t)^2 \left(16\right)^{h_t+\bar{h}_t}	e^{-{\pi^2}(h_t-\frac{c-1}{24})/\b}e^{-{\pi^2}(\bar{h}_t-\frac{c-1}{24})/ \bar{\b}}
	 H(h_t,e^{-\pi^2/\b})H(\bar{h}_t,e^{-\pi^2/\bar{\b}})\,.\nn
\eeq

In the limit $\b\to0$, the leading contribution in the holomorphic t-channel comes from the operator with smallest weight $h_t$, because any heavier operator is exponentially suppressed by the $e^{-1/\beta}$ terms. In general dimensions, the smallest weight operator would just be the vacuum; in two dimensions,  the vacuum Verma module captures contributions of all its descendants,  which includes the stress tensor.  We will nonetheless still use the term vacuum block to refer to it.


In order to have only vacuum block contributions at leading order in the small $\b$ limit of the t-channel, we also need to disallow contributions from representations with weight $h=0$ but $\bar h \neq 0$ (and vice versa). Stated in another way, we do not want to allow for conserved currents, because a CFT possessing a primary operator that is also a conserved current will have a vanishing gap between the vacuum block and the rest of the spectrum, as has been recently shown in \cite{Benjamin:2020swg}.       Instituting this restriction, the dominant contribution given by the  holomorphic vacuum block $h=0$ only couples to the corresponding anti-holomorphic vacuum block $\bar{h}=0$. The crossing equation thus simplifies to
\beq
	&&\int_{0}^{\infty}d h_s d\bar{h}_s \,C(h_s) C(\bar{h}_s)\left(16\right)^{h_s+\bar{h}_s}\,
	e^{-\b(h_s-\frac{c-1}{24})}e^{-\bar{\b}(\bar{h}_s-\frac{c-1}{24})}H(h_s,e^{-\b})H(\bar{h}_s,e^{-\bar{\b}})\nn\\
	 &&= \left(\b\bar \b /\pi^2\right)^{\frac{c-1}{4}-8h_{0}}e^{{\pi^2}(\frac{c-1}{24})/\b}e^{{\pi^2}(\frac{c-1}{24})/\bar\b}\,C(0)^2
	 H(0,e^{-\pi^2/\b})H(0,e^{-\pi^2 /\bar{\b}}).
\eeq
As we mentioned at the end of Section \ref{sec:Vir},  the function $H(h, q)$ can be computed as a series expansion in small $q$ recursively, and we have $H(h,q) \sim 1+\mathcal{O}(q)$. Since the limit $\b\to 0$ implies  $q\to 0$ in the t-channel blocks,  we can simplify the crossing equation even further, obtaining
\beq\label{crossing_total}
	&&\int_{0}^{\infty}d h_s d\bar{h}_s \,C(h_s)^2\,\left(16\right)^{h_s+\bar{h}_s}\,
	e^{-\b(h_s-\frac{c-1}{24})}e^{-\bar{\b}(\bar{h}_s-\frac{c-1}{24})}H(h_s,e^{-\b})H(\bar{h}_s,e^{-\bar{\b}})\nn\\
	 &&=(\b\bar \b/\pi^2)^{\frac{c-1}{4}-8h_{0}}e^{{\pi^2}(\frac{c-1}{24})/\b}e^{{\pi^2}(\frac{c-1}{24})/\bar\b}\,C(0)^2 H(0,e^{-\pi^2 /\bar{\b}}).
\eeq

\subsection{Saddle point}\label{holosaddle}
Just as in the lightcone limit in higher dimensions \cite{Komargodski:2012ek,  Fitzpatrick:2014vua},  we can see that the crossing equation \eqref{crossing_total} develops an essential singularity on the righthand side as $\b\to 0$.  On the lefthand side, all of the terms in the integrand can be expanded in small $\beta$.  The only way a Taylor series in $\beta$ can equal an essential singularity is if it has an infinite number of terms, so we should expect infinite contributions to the s-channel sum on the lefthand side.  Accordingly, this sum should be dominated by the tail, i.e. the s-channel expansion of the vacuum block must be dominated by large values of $h_s$.  We now verify this intuition with a saddle point analysis.

Before moving on to solving the crossing equation we have to address an important subtlety.  
In order to solve for the spectral densities,  we need an explicit form for the block $H(h_s,e^{-\b})$,  which is unfortunately not known in general.  However,  as we argued in the previous paragraph,  we mainly need its behavior for large values of $h_s$. Fortunately,  it turns out that the blocks simplify dramatically in this limit  \cite{Cardona:2020cfy}.  At leading order in inverse powers of  $h_s$,  the Virasoro blocks are approximated by%
\footnote{Beyond first order has been considered in \cite{Das:2020fhs}.}%
\be\label{virq}
H(h_s,\b)\equiv 1-{H_{-1}\over  h_s}\left({E_2(\b)-1\over 24}\right) +{\cal O}(1/h_s^2)\,,
\ee
where 
\be
H_{-1}={((c+1)-32 h_0) ((c+5)-32 h_0) \over 16}\,,
\ee 
and $E_2(\b)$ is an Eisenstein series of weight two. 

The needed limit $\b\to 0$ is not expected to commute with the large  $h_s$ approximation,  as has been recently argued numerically \cite{Das:2020uax}.  However, if we constrain ourselves to a region where the constant $H_{-1}$ in \eqref{virq} becomes small, i.e. either
$32 h_0\sim c+1$ or $32 h_0\sim c+5$, then the $\mathcal{O} (1/h_s)$ terms become small regardless of the value of $\b$.  If we constrain the relation between the external operator dimension $h_0$ and the central charge $c$ in this way, we can then take the limit $\b \to 0$ and then take $h_s$ large, at the cost of keeping $c$ finite.%
\footnote{See \cite{Das:2020uax} for details.}
Thus, at zeroth order in $1/h_s$,  with $h_0\sim {c+1\over 32 }$ and in the $\b\to 0$ limit,   the  crossing equation \eqref{crossing_total} becomes%
\footnote{The reader may be concerned that we have approximated our integrand at large $h_s$ without modifying the integral as a whole.  It turns out the saddle point expansion that follows will actually provide a justification for this approximation in hindsight. We can rewrite any integral $\int_0^{\infty} \mathcal{F}(\mathcal{H},e^{-\beta}) d\mathcal{H}$ as $\int_0^{\mathcal{H}_{\Lambda}} \mathcal{F}(\mathcal{H},e^{-\beta}) d\mathcal{H} + \int_{\mathcal{H}_{\Lambda}}^{\infty} \mathcal{F}(\mathcal{H},e^{-\beta}) d\mathcal{H}$, for some finite $\mathcal{H}_{\Lambda}$. $\int_0^{\mathcal{H}_{\Lambda}} \mathcal{F}(\mathcal{H},e^{-\beta}) d\mathcal{H} $ remains finite in the limit $\beta \rightarrow 0$; specifically it is bounded by the finite value $\int_0^{\mathcal{H}_{\Lambda}} \mathcal{F}(\mathcal{H},1) d\mathcal{H}$. For our case, $\int_{\mathcal{H}_{\Lambda}}^{\infty} \mathcal{F}(\mathcal{H},e^{-\beta}) d\mathcal{H}$ diverges as $\beta \rightarrow 0$, as long as we pick $\mathcal{H}_{\Lambda}$ below the saddle, as we can see from the divergence of the saddle itself.  Since an infinite contribution will always dominate over a finite one, the approximation within the integrand is justified.}%
\beq\label{smallb_largeh}
	&&\int_{0}^{\infty}d h_s \int_{0}^{\infty}d\bar{h}_s \,C(h_s)^2\, \left(16\right)^{h_s+\bar{h}_s}\,
e^{-\b(h_s-\frac{c-1}{24})}e^{-\bar{\b}(\bar{h}_s-\frac{c-1}{24})}H(\bar{h}_s,e^{-\bar{\b}})\nn\\
	 &&= (\b\bar \b/\pi^2)^{\frac{c-1}{4}-8h_{0}}e^{{\pi^2}(\frac{c-1}{24})/\b}e^{{\pi^2}(\frac{c-1}{24})/\bar\b}\,C(0)^2 H(0,e^{-\pi^2 /\bar{\b}})\, .
\eeq

We first examine the holomorphic part
\be
\int_{0}^{\infty}d h_s  \,C(h_s)\,\left(16\right)^{h_s}\,
e^{-\b(h_s-\frac{c-1}{24})}
= (\b/\pi)^{\frac{c-1}{4}-8h_{0}}e^{{\pi^2}(\frac{c-1}{24})/\b}\,C(0)\,.
\ee
Actually, we should be more precise; previous arguments  \cite{Collier:2016cls,  Benjamin:2019stq} have shown that any compact, unitary two-dimensional conformal field theory must have a gap smaller than $\mathcal{C}\equiv \frac{c-1}{24}$. Although it is a choice here, in the following section, the integration over $h_s$ `knows' about this maximal gap.  Accordingly, we will set the lower limit to be $\mathcal{C}$, not zero.  Rewriting in terms of $\mathcal{C}$, we have
\be\label{holomorphiccrossing}
\int_{\mathcal{C}}^{\infty}d h_s  \,C(h_s) \,\left(16\right)^{h_s}\,
e^{-\b(h_s-\mathcal{C})}
= (\pi/\b)^{\l}e^{{\pi^2}\mathcal{C}/\b}\,C(0)\,,
\ee
where we have additionally defined a shifted external operator weight $\l \equiv  8h_0 - \frac{c-1}{4}$ for future convenience.  The left hand side of this equation is now the Laplace transform of $C(h_s) 16^{h_s}$, with Laplace parameter $\beta$ and integration variable $h_s-\mathcal C$.  

Given the form of \eqref{holomorphiccrossing}, we can solve for the spectral OPE density $C(h)$ by applying the inverse Laplace transform. We find 
\be
C(h_s) 16^{h_s} = \frac{C(0)\pi^{\l}}{2\pi i} \int_{-i \infty}^{i\infty} \b^{-\l}\exp\left[\beta (h_s-\mathcal{C})+\frac{\pi^2 \mathcal{C}}{\b}\right]\, d\b.
\ee
The saddle point for this integral is given by
\begin{equation}\label{saddle1}
	\beta_s = \frac{\l}{(h_s - \mathcal{C})} \pm \frac{\sqrt{(-\l)^2 + 4\pi^2 \mathcal{C}(h_s - \mathcal{C})}}{2(h_s - \mathcal{C})}\sim  \frac{\l}{\mathcal{C}({h_s\over \mathcal{C}} - 1)} \pm \frac{\pi}{\sqrt{{h_s\over \mathcal{C}} - 1}}\,,
\end{equation}
%
We can see from this saddle point that the $\b\to 0$ limit is indeed dominated by large values of $h_s$.   As usual the integral at the saddle point approximation is a simple Gaussian that can be straightforwardly evaluated, giving the OPE spectral density in the large $h_s$ limit as
\be\label{spectral_saddle}
C(h_s) \, \sim {C(0)\over \left(16\right)^{h_s} \sqrt{\pi \mathcal{C}}}(\b_s)^{-\l-3/2}e^{\b_s(h_s-\mathcal{C})+ \frac{\pi^2}{\beta_s}\mathcal{C} }\,,
\ee
with $\b_s$ given by \eqref{saddle1}.  

In the limit  $h_s\gg\mathcal{C}$,  the second term from \eqref{saddle1} dominates. We can write the OPE spectral density in the semi-classical limit more explicitly as
\be\label{spectral_saddle_leading}
C(h_s) \, \sim {C(0)\over \left(16\right)^{h_s} \sqrt{\pi \mathcal{C}}}\left({h_s\over \mathcal{C}} - 1\right)^{-\l/2-3/4}e^{2\pi\mathcal{C}\sqrt{{h_s\over \mathcal{C}} - 1}}\,.
\ee
This result is in agreement with previous similar analyses done in \cite{Kraus:2016nwo,  Das:2017cnv, Kusuki:2018nms}.

\subsection{Kernel ansatz}
In section above,  we have computed an approximate solution for $C(h_s)$ by means of a saddle point analysis.  However,  we did need to assume that this saddle point method is valid for any sufficiently large value of $h_s$.  Examining the evaluated saddle point \eqref{saddle1},  we see the $\b\to 0$ limit is indeed reliable at this point when $h_s/\mathcal{C}\gg1$.  In this section, we propose a kernel ansatz that generalizes the result for $C(h_s)$ from the saddle point analysis and henceforth allows matching the t-channel vacuum via integration in the s-channel in a wider regime, down towards $h_s/\mathcal{C}\sim \mathcal{O}(1)$.  

We begin with the integral 
 \be\label{hologuess}
 \int_0^{\infty}d\mathcal{H}\,\cosh(2\pi  \sqrt{\mathcal{H}\mathcal{C}} )\,\mathcal{H}^{a^2}e^{- \b\mathcal{H}}
 =\beta ^{-a^2-1} \Gamma \left(a^2+1\right) \, _1F_1\left(a^2+1;\frac{1}{2};\frac{\mathcal{C} \pi ^2}{\beta }\right).
 \ee
The format of this integral matches the left hand side of the holomorphic crossing equation \eqref{holomorphiccrossing}, provided we 
introduce the convenient variable
\be
 \mathcal{H} \equiv h_s-\mathcal{C}\, .
\ee
We can match\footnote{The choice of integrand for the left hand side of \eqref{hologuess} is not unique, but rather belongs to a family of integrands that correctly reproduce the leading order terms on the right hand side after integration. When considering higher order terms, such as in \eqref{kernel_with_polynomial}, any appropriate selection of integrand yielding the correct matching order by order is sufficient for this calculation.} the right hand side of  \eqref{holomorphiccrossing} by taking the $\b \to 0$ limit of \eqref{hologuess}:
\be\label{hologuessLimit}
 \int_0^{\infty}d\mathcal{H}\,\cosh(2\pi  \sqrt{\mathcal{H}\mathcal{C}} )\,\mathcal{H}^{a^2}e^{- \b\mathcal{H}} 
 \approx {(\mathcal{C}\pi^2)^{a^2+1/2}\over \b^{2a^2+3/2}}e^{\pi^2\mathcal{C}\over \b}\,.
\ee
We set
\be
	a^2=\l/2-3/4=4 h_0-3 \mathcal{C}-3/4
\ee
by matching the $\b$ exponent in \eqref{holomorphiccrossing}.  We can then identify the spectral density reproducing the vacuum as
\begin{align}\label{kernel_vac}
C(h_s) &=  
 \frac{C(0) \pi^{1/2}}{16^{h_s} \mathcal{C}^{\eta_0/2 - 1/4}}\cosh\left[2\pi \sqrt{\mathcal{H}\mathcal{C}}\right](\mathcal{H})^{\eta_0/2-3/4}
\\
&=
\frac{C(0) \pi^{1/2}}{16^{h_s}\mathcal{C}^{1/2}}\cosh\left[2\pi \mathcal{C} \sqrt{\frac{h_s}{\mathcal{C}}-1}\right] \left(\frac{h_s}{\mathcal{C}}-1\right)^{\eta_0/2-3/4}.
\nn
\end{align}
As we anticipated in our saddle point analysis in \eqref{holomorphiccrossing},  the lower integration limit of \eqref{hologuess},    written in terms of $dh_s$, is $h_s= \mathcal{C}$.  Here we have a stronger justification: the integral `knows' about the maximal size of the gap via the kernel choice.  This lower bound is again in line with previous arguments about the size of the gap \cite{Collier:2016cls,  Benjamin:2019stq}. The vacuum in the t-channel is reproduced by the large $h_s$ tail of the integral, as expected from the saddle point analysis in Section \ref{holosaddle}. 

This result agrees with the saddle point analysis for $h_s\gg\mathcal{C}$. As we have explicitly shown, it also reproduces the vacuum block when $h_s\sim \mathcal{C}$.  Therefore, the kernel ansatz \eqref{kernel_vac} should be thought of as a generalization of the spectral density that is computed by an inverse Laplace transform followed by a saddle point analysis to fix the constants. 

As a final step,  we use the spectral densities to define an average value for the OPE coefficients \cite{Das:2017cnv}
\be\label{average0}
 \mathbb{C}(h_s) ={C(h_s)\over\mathcal{S}_0}\,,
\ee
where $\mathcal{S}_0$ is the asymptotic density of primary states.%
\footnote{For a more rigorous treatment of this average coefficients from Tauberian theorems of distributions, see \cite{Das:2020uax,  Mukhametzhanov:2019pzy} } %
 In other words, $\mathcal{S}_0$ is the fusion kernel for the vacuum character decomposition associated to the partition function  \cite{Zamolodchikov:2001ah,  Collier:2018exn}, given by
\be\label{partition_fusion}
\mathcal{S}_0=4\sqrt{2} \sinh\left(2\pi b \sqrt{\mathcal{H}}\right) \sinh\left(2\pi b^{-1}\sqrt{\mathcal{H}}\right),
\ee
where $b$ relates to the central charge as $c= 1+ 6(b+1/b)^2$.  This density of primaries is a refined version of Cardy's formula  \cite{Cardy:1986ie}.  In the limit $h_s/\mathcal{C}\gg1$ equation \eqref{average0} can be written
\be\label{average_asym}
 \mathbb{C}(\mathcal{H})\sim \mathcal{H}^{\l/2-3/4} e^{-2\pi \sqrt{\mathcal{HC}}}.
\ee
This result is compatible with previous results \cite{Das:2017cnv,  Collier:2018exn,  Collier:2019weq}. 
\subsection{Anti-holomorphic piece}
When evaluating the Euclidean limit of the anti-holomorphic piece, $\bar{\b}\to 0$, the analysis is exactly the same as was performed for its holomorphic counterpart. Therefore the result for the anti-holomorphic spectral density in the Euclidean limit is given by \eqref{kernel_vac} with $\bar{h}_s$ substituted for $h_s$.  The opposite limit,  when $\bar{\b}\to \infty$, is more interesting and more involved,  as we will demonstrate in this section. 

Consider the anti-holomorphic portion of our crossing equation \eqref{crossing_total}:
 \be
\int_{0}^{\infty}d\bar{h}_s \,C(\bar{h}_s)\left(16\right)^{\bar{h}_s}\,e^{-\bar{\b}(\bar{h}_s-\mathcal{C})}H(\bar{h}_s,e^{-\bar{\b}})= \left(\bar{\b}/\pi\right)^{6\mathcal{C}-8h_0}e^{\pi^2 \mathcal{C}/\bar\b} \,
\bar C(0)
H(0,e^{-\pi^2/\bar{\b}})\,.
\ee
As in the holomorphic case, we do not have a useful form for the Virasoro blocks at finite $q$ and finite $\bar{h}$.  However, taking the limit $\bar{\b}\to \infty$ we can approximate
 \be\label{AntiHoloCrossApprox}
\int_{0}^{\infty}d\bar{h}_s \,\bar{C}(\bar{h}_s)\left(16\right)^{\bar{h}_s}\,e^{-\bar{\b}(\bar{h}_s-\mathcal{C})}= (\bar{\b}/\pi)^{6\mathcal{C}-8h_0}e^{\pi^2\mathcal{C}/\bar\b} \,\bar{C}(0)H(0,\tilde{q}\to 1)\,.
\ee
Evaluating the last term in this limit is subtle, but we can estimate using numerics.  Restricting to the case of central charge $c=30$ and external dimensions $h_0=1$,  which is close to the point ${c+1\over 32}$,  we computed%
\footnote{We have computed up to order 50 for computational time convenience,  but we checked that above order $130$,  the expansion becomes asymptotic in the limit $\tilde{q}\to 1$.} %
 $H(0,\tilde{q})$ up to  $\tilde{q}^{50}$,  and have found that the following function provides a good fit for the Virasoro block:
\begin{figure}
\begin{center}
\includegraphics[scale=0.6]{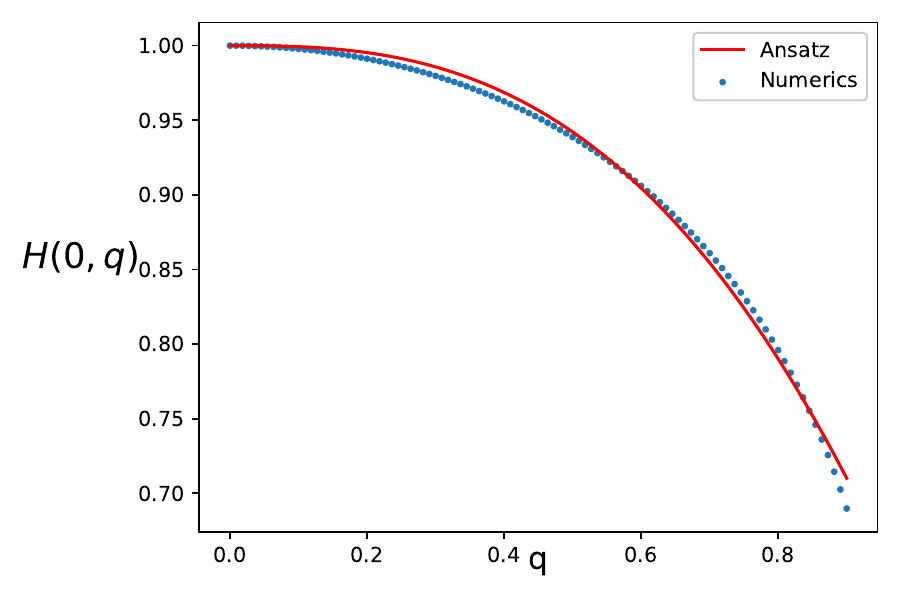}
\caption{ The dotted line corresponds to the vacuum Virasoro block computed numerically with the aid of Zamolodchikov recursion relations for the particular values $c=30$, $h_0=1$.  The solid red line is an approximation fit, explicitly given by equation \eqref{Hansatz}. }\label{H0vsq}
\end{center}
\end{figure}
\be\label{Hansatz}
H(0,\tilde{q})=1-a_1\,e^{-a_2\pi^2/\bar{\b}}\,.
\ee
Here we have introduced constants $a_1 = 0.8$ and $a_2 = 2.8$,  which are simply an approximation to the values given by the interpolating function fitting the numerical values for $H(0,\tilde{q})$.  If we instead did a numeric fit for the Virasoro block for different $c$ and $h_0$, we expect these values would change.  

%
\footnote{Here  we are not being too careful in finding the best fit to the numerical data,  since we just want to show how to deal with the anti-holomorphic piece once we interpolate the numerics.   We plan to  leverage the numerics in a more meaningful way in a future publication. } %
Inserting this approximation into the anti-holomorphic crossing equation \eqref{AntiHoloCrossApprox} we have 
 \be\label{cross_anti}
\int_{0}^{\infty}d\bar{h}_s \,\bar C(\bar{h}_s)\left(16\right)^{\bar{h}_s}\,e^{-\bar{\b}(\bar{h}_s-\mathcal{C})}= (\bar{\b}/\pi)^{6\mathcal{C}-8h_0}e^{\pi^2\mathcal{C}/\bar\b} \,\bar C(0)\left(1-a_1\,e^{-a_2\pi^2/\bar{\b}}\right)\,,
\ee
keeping in mind fixed parameters $h_0=1$ and $c=30$.  At leading order in $\bar{\b}\to \infty$ we can then equate
 \be\label{cross_anti_zero_order}
\int_{0}^{\infty}d\bar{h}_s \,\bar C(\bar{h}_s)\left(16\right)^{\bar{h}_s}\,e^{-\bar{\b}(\bar{h}_s-\mathcal{C})}= (\bar{\b}/\pi)^{6\mathcal{C}-8h_{0}}\,\bar C(0)\,.
\ee
We solve this equation using the same saddle point method as before,  employing a similar kernel ansatz as in the holomorphic analysis, with the important difference that we now take the  $\bar{\b}\to \infty$  limit of our result to compare with the expression above.  Explicitly, we consider the following integral
 \be\label{anitholoSimpleKernel}
\int_0^{\infty}d\bar{\mathcal{H}}\,\cosh(2\pi  \sqrt{\bar{\mathcal{H}}\mathcal{C}} )\,\bar{\mathcal{H}}^{\bar{a}^2}e^{- \bar{\b}\bar{\mathcal{H}}}= \bar{\b} ^{-\bar{a}^2-1} \Gamma \left(\bar{a}^2+1\right) \, _1F_1\left(\bar{a}^2+1;\frac{1}{2};\frac{\mathcal{C} \pi ^2}{ \bar{\b} }\right)\,,
\ee
where we have, as in the holomorphic sector, rewritten in terms of $\bar{\mathcal{H}}=\bar{h}_s-\mathcal{C}$, and set the lower bound at the maximal gap $\bar{h}_s=\mathcal{C}$.

In the limit $\bar \b \to \infty$, the argument of the hypergeometric function becomes $0$, so the lowest order term in $_1F_1$ is just one. Matching the lowest order power of $\bar \b$ with the right hand side of our crossing equation \eqref{cross_anti_zero_order},  we find 
\be\label{baraResult}
	\bar{a}^2=8 h_0-6\mathcal{C}-1=\l -1,
\ee
 where we have used our earlier definition of the shifted external weight $\l = 8h_0 - 6\mathcal{C}$. Solving for the OPE coefficient spectral density,  we find 
\begin{align}
\bar{C}(\bar{h}_s) &=\frac{\bar{C}(0)\pi^{\l}}{ 16^{\bar{h}_s} \Gamma\left(\l\right)}\bar{\mathcal{H}}^{\l-1}\cosh\left(2\pi  \sqrt{\bar{\mathcal{H}}\mathcal{C}} \right)
\nn\\
\label{kernel_vac_anti}&=  \frac{\bar{C}(0)\pi^{\l}}{ 16^{\bar{h}_s} \Gamma\left(\l\right)}\left(\bar{h}_s-\mathcal{C}\right)^{\l-1}\cosh\left(2\pi  \sqrt{(\bar{h}_s-\mathcal{C})\mathcal{C}} \right)\,.
\end{align}

This game can be continued to higher inverse-power orders in  $ \bar{\b}$. We wish to build a kernel ansatz for $\bar{C}({\bar{h}_s})$ that recovers the anti-holomorphic t-channel contribution on the righthand side of \eqref{cross_anti}. To simplify the algebra, we first expand this expression in powers of $x\equiv \pi^2\mathcal{C}/\bar{\beta}$: 
\begin{equation}
\label{Power_expansion_antiholo}
	\bar{C}(0) \left(\frac{\pi \mathcal{C}}{x} \right)^{-\l} \left(\sum_{k=0}^\infty \frac{x^k}{k!} \right) \left(1-a_1 \sum_{n=0}^\infty \frac{(-1)^n}{n!} \left(\frac{a_2 x}{\mathcal{C}} \right)^n \right).
\end{equation}

In order to match this t-channel expression, we need to upgrade the kernel \eqref{anitholoSimpleKernel}, to include a polynomial series in $\bmh$: 
\begin{align}
\label{kernel_with_polynomial}
	\int_0^\infty d\bmh \, \tilde{\mA}& \cosh(2\pi \sqrt{\bmh\mc}) \sum_{n=0}^{\infty} \mA_n \bmh^{\bar{a}^2+n} e^{-\frac{\pi^2}{x}\mc\bmh}
	\\\nn
	&=\tilde{\mA} \left( \frac{\pi^2 \mc}{x}\right)^{-\bar{a}^2-1} \sum_{n=0}^\infty \mA_n \left(\frac{x}{\pi^2 \mc}\right)^n \Gamma(\bar{a}^2+1+n) \, {}_1F_1\left(\bar{a}^2+1+n;\frac{1}{2}; x \right)\, .
\end{align}
We will fix the constants $\tilde{\mA}$ and $\mA_n$ by expanding in powers of $x$ and matching the coefficients to \eqref{Power_expansion_antiholo}:
\begin{align}\label{crossing_with_sums}
\tilde{\mA} \left( \frac{\pi^2 \mc}{x}\right)^{-\l} \sum_{n=0}^\infty \mA_n &\left(\frac{x}{\pi^2 \mc}\right)^n 
\sum_{k=0}^\infty\frac{ \Gamma(\l+n+k)\,\Gamma\left(\frac{1}{2}\right)}{\Gamma\left(\frac{1}{2}+k\right)}\frac{x^k}{k!}
\\\nn
&=\bar{C}(0) \left(\frac{\pi \mc}{x} \right)^{-\l} \left(\sum_{k=0}^\infty \frac{x^k}{k!} \right) \left(1-a_1 \sum_{n=0}^\infty \frac{(-1)^n}{n!} \left(\frac{a_2 x}{\mc} \right)^n \right).
\end{align}
Here we have already matched the lowest power of $x$, finding $\bar{a}^2+1=\l$ as in \eqref{baraResult}.
Matching the coefficient of the $x^{-\l}$ terms sets
\begin{equation}
	\tilde{\mathcal{A}} =\frac{\bar{C}(0) (1-a_1)}{\Gamma(\l)}\pi^{\l},
\end{equation}
where we have also chosen $\mA_0=1$.
Matching coefficients of the higher powers of $x$ in (\ref{crossing_with_sums}) gives a recursion relation for the coefficients $\mA_n$  for $n>0$:
\be
\mA_n=\frac{(\pi^2 \mc)^n \Gamma\left(\l\right)}{n! \,\Gamma \left(\l+n\right)}\left(\frac{1-a_1\left(1-\frac{a_2}{\mc}\right)^n}{1-a_1}\right)-\sum_{k=0}^{n-1} \frac{\left(\pi^2\mc\right)^{n-k}}{(n-k)!}\frac{\Gamma\left(\frac{1}{2}\right)}{\Gamma \left(\frac{1}{2}+n-k\right)}\mA_k\,.
\ee 
Solving this recursion relation for $\mathcal{A}_n$, we find the closed form solution
\be
\mA_n = \frac{\pi^{2n}\mc^n \Gamma(\l)}{1-a_1}\sum_{k=1}^n \left(\frac{1-a_1\left(1-\frac{a_2}{\mc}\right)^k}{k! \,\Gamma(\l +k)}-\frac{(1-a_1)\Gamma\left(\frac{1}{2}\right)}{k! \,\Gamma(\l) \,\Gamma\left(\frac{1}{2}+k\right)}\right)\sum_{\{\lambda_{n-k}\}}\prod_{i=1}^j \left(\frac{-\Gamma\left(\frac{1}{2}\right)}{\lambda_i ! \Gamma \left(\frac{1}{2}+\lambda_i\right)}\right)\,,
\ee
where the sum over $\left\{\lambda_{n-k}\right\}$ is taken over all sets of positive integers $\{\lambda_1, \lambda_2,\, \ldots ,\lambda_j\}$ such that $\sum_{i=1}^j \lambda_i=n-k$.  If $n-k=0$, this factor becomes 1.

Finally,  we can identify the anti-holomorphic OPE density as
\be
C(\bar{h}_s)= \tilde{\mA} \cosh\left(2\pi \sqrt{\bmh\mc}\right) \sum_{n=0}^{\infty} \mA_n \bmh^{\eta_0+n-1}\,.
\ee
Even though this result depends on the numerical fitting \eqref{Hansatz} with particular values $a_1=0.8,\,a_2=2.8$,  we have observed that for a wide enough range of values $\{h_0, c\}$,  the numerical  anti-holomorphic vacuum block can be approximated by the same function  \eqref{Hansatz}  at different values of the fitting constants $a_1$ and $a_2$.  The fact that we have found a closed form expression for the coefficients $\mA_n$, as functions of $a_1$ and $a_2$,  indicates a fruitful path towards numerically exploring the OPE density vacuum contribution in the modular lightcone limit.

\section{Spectral density OPE beyond vacuum contribution}\label{sec:BeyondVac}

So far we have considered OPE spectral densities that reproduce the vacuum block in the crossed channel, provided a gap separates this block from the rest of the spectrum.  We now go beyond the vacuum contribution in the t-channel, allowing a contribution from the Virasoro block associated to the primary operator closest to the vacuum.  We refer to this next-heaviest block as the lightest operator block, and we now compute the correction to the spectral density arising (in the s-channel) when we consider this lightest operator block in the t-channel of the crossing equation.

We first use a saddle point approximation to solve for the leading order correction for the spectral densities at $h_s/\mathcal{C}\gg1$ due to the presence of the  lightest operator block.  We then propose a generalization beyond $h_s/\mathcal{C}\gg1$ by introducing a kernel ansatz for the densities.
\subsection{Saddle point}\label{NonVacuumSaddleSection}
We start the discussion by writing the crossing equation explicitly in this case.  In Section \ref{smallbeta}, the $\b\to0$ limit, at leading order in small $\b$ and large exchange dimension  \eqref{smallb_largeh}, gave only the vacuum contribution in the t-channel.  If we want to include the next order corrections to the crossing equation \eqref{crossing_total} in small $\beta$,  a similar argument leads us to%
\footnote{In this section we are going to consider the holomorphic sector only,   mainly due to the fact that for the anti-holomorphic correction we don't have much information and  would be forced to rely on numerics alone. }%
\beq\label{hol_crossing_correction}
&&\int_{0}^{\infty}d h_s  \,\left(C^{0}(h_s)+\delta C^{0}(h_s)\right)\, \left(16\right)^{h_s}\,
e^{-\b(h_s-\frac{c-1}{24})}\nn\\
&&= (\b/\pi)^{\frac{c-1}{4}-8h_{0}}e^{{\pi^2}(\frac{c-1}{24})/\b}\,C(0) 
+\,(\b/\pi)^{\frac{c-1}{4}-8h_{0}}16^{h_{min}}C(h_{min})	e^{-{\pi^2\over\b}(h_{min}-\frac{c-1}{24})}.
\eeq
The new term $\delta C^0 (h_s)$ in the s-channel represents the corrections to the spectral OPE density produced from the lightest operator in the t-channel,  whose block contribution is given by
the second term in the second line.%
\footnote{ In the second line of \eqref{hol_crossing_correction} we have neglected higher order terms in small $\b$ from expanding $H\sim 1+\mathcal{O}(q)$, both when multiplied by the vacuum and the lightest operator.   It is possible that the $\mathcal{O}(q)$ terms in $C(0)\, H$ dominate over the leading term $C(h_{min})$ for the lightest operator.  However, here and in previous sections we have considered $h_{min}<{\cal C}$  with ${\cal C}$ to be of order one or less, while the $\mathcal{O}(q)$ term in $H$ has a more negative exponent set by a number greater than one, e.g. $a_2$ in \eqref{Hansatz}.  Thus, the second term in the second line of \eqref{hol_crossing_correction} is the dominant correction.  } %
The vacuum is of course already solved by \eqref{kernel_vac},  or in other words,  the first integration term at the first line of \eqref{hol_crossing_correction} equals the first term in the second line and therefore the correction terms satisfy
\be\label{crossing_beyond_vac}
\int_{0}^{\infty}d h_s  \,\delta C(h_s) \left(16\right)^{h_s}\,
e^{-\b(h_s-\frac{c-1}{24})}=(\b/\pi)^{\frac{c-1}{4}-8h_{0}}16^{h_{min}}C(h_{min})	e^{-{\pi^2\over\b}(h_{min}-\frac{c-1}{24})}\,.
\ee
By using again an inverse Laplace transform to solve for $\delta C$,   we find the saddle point of the resulting integral over $\b$ to be at
\be 
\b_s=\frac{4
  h_0-3 \mathcal{C}}{(h_s-\mathcal{C})}+\frac{\sqrt{(8 h_0-6 \mathcal{C})^2-4\pi ^2 (h_s-\mathcal{C}) \left(  h_{min}-\mathcal{C}\right)}}{2 (h_s-\mathcal{C})}\sim\frac{4h_0-3 \mathcal{C}}{(h_s-\mathcal{C})}+\pi\sqrt{  ( \mathcal{C}-h_{min})\over (h_s-\mathcal{C})}\, .
\ee
Notice that here we need $h_{min}<\mathcal{C}$ for the saddle to be real,  which is nevertheless automatically satisfied by the minimal gap.

The saddle point computation  leads to,
\be 
\d C(h_s) \sim {C(h_{min})16^{h_{min}-h_s }\over  \sqrt{\pi(\mathcal{C}-h_{min})}}\b_s^{6\mathcal{C}-8 h_0+3/2}\,e^{\b_s(h_s-\mathcal{C})-{\pi^2\over \b_s}(h_{min}-\mathcal{C})}
\ee
Taking the dominant second term in the saddle point, we can write this as
\be\label{saddle_beyond_vac}
\d C(h_s) \sim {C(h_{min})16^{h_{min}}\over  \sqrt{\pi(\mathcal{C}-h_{min})}}\left(\sqrt{  ( \mathcal{C}-h_{min})\over (\mathcal{C}-h_s)}\right)^{6\mathcal{C}-8 h_0+3/2}\,e^{2\pi\sqrt{  ( \mathcal{C}-h_{min})(h_s-\mathcal{C})}}\,.
\ee
As in the vacuum case,  the saddle point result is reliable as long as $h_s/\mathcal{C}\gg1$.  We would like to  generalize the spectral density OPE correction using the same technique as for the vacuum block contribution by proposing a similar spectral density ansatz.
\subsection{Kernel  ansatz}
Based on the saddle point analysis of Section \ref{NonVacuumSaddleSection} and the results from \cite{Collier:2019weq,  Collier:2018exn},  we propose the following ansatz as a  generalization for the correction from the lightest operator to the spectral density:
\be 
\d C(h_s) \sim C(h_{min})16^{h_{min}-h_s }\left((h_{min}- \mathcal{C})^{3\mathcal{C}-4 h_0+1/4}\over (\mathcal{C}-h_s)^{3\mathcal{C}-4 h_0+3/4}\right)\,
\cosh\left({2\pi\sqrt{ h_s-\mathcal{C}}}\right)^{\sqrt{  \mathcal{C}-h_{min}}}\, .
\ee
This ansatz reproduces the saddle point result  \eqref{saddle_beyond_vac} in the limit ${h_s\over \mathcal{C}}\gg1$.  However,  a stronger  check would be to prove that the integration over $h_s$ reproduces the lightest $h_{min}$ block in the t-channel.   Specifically, we want to perform the integral
\be\label{kernel_int_bv}
{C(h_{min})16^{h_{min} }\over (h_{min}- \mathcal{C})^{-3\mathcal{C}+4 h_0-1/4}}\int_{0}^{\infty} \,dh_s\,\left(1\over (\mathcal{C}-h_s)^{3\mathcal{C}-4 h_0+3/4}\right)\,
\cosh\left({2\pi\sqrt{ h_s-\mathcal{C}}}\right)^{\sqrt{  \mathcal{C}-h_{min}}}e^{-\b (h_s-\mathcal{C})}\,.
\ee
\begin{figure}
	\centering
		\begin{overpic}[scale=0.8]{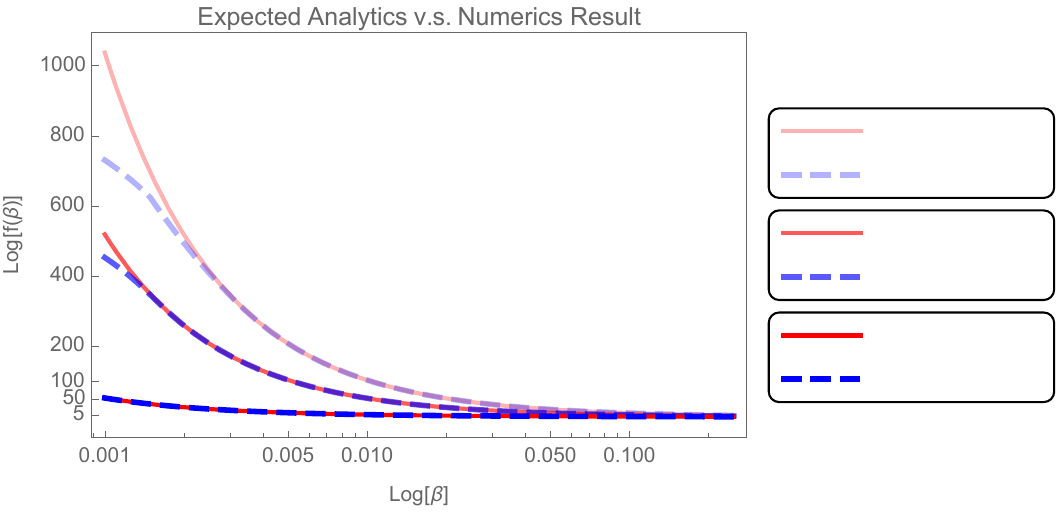}
			\put (82,33.45) {$h_{min} = 0.8 \mathcal{C}$}
			\put (82,23.75) {$h_{min} = 0.9 \mathcal{C}$}
			\put (82,14) {$h_{min} = 0.99 \mathcal{C}$}
		\end{overpic}
		\caption{Comparison of analytic t-channel crossing with results of numerical integration over kernel ansatz for relevant values of $h_{min}$. The solid red lines display the functional form expected from the t-channel crossing, given by the righthand side of \eqref{crossing_beyond_vac}. The dashed blue lines exhibit successive results of numerically integrating over the kernel \eqref{kernel_int_bv}. }
		\label{fig:NumericsFittingWithHmin}
\end{figure}
While we did not find a clever way to perform this integration, we have numerically evaluated the integral for several values of the difference $h_{min} - \mathcal{C}$ in the range $(0,1)$.  In figure \ref{fig:NumericsFittingWithHmin} we display three such cases for central charge $c=30$ and external dimension $h_0 =1$,  observing convincing agreement between the numerical result and the analytic t-channel of the crossing equation \eqref{crossing_beyond_vac}.  As the value of $h_{min}$ approaches the value of $\mathcal{C}$, the analytic expectation becomes closer to the numerical integration. This behavior is expected, since the smaller  $\mathcal{C}-h_{min}$ becomes,  the less suppressed is the contribution from the lightest operator;  that is,  for smaller $\mathcal{C}-h_{min}$, the ansatz is a better approximation in the range of smaller  $\b'$s,  which is our initial limiting condition.  When the difference between $\mathcal{C}$ and $h_{min}$ is increased,  deviation between the plots occurs for larger values of $\beta$,  which is evident from the saddle point  \eqref{saddle_beyond_vac}.

Beyond the numerical check,  we can, one last time, resort to a saddle point analysis. This time the saddle point analysis is on the integration \eqref{kernel_int_bv} over $h_s$ in the asymptotic limit, so we find 

\beq
I(\b)&=&\int_{0}^{\infty} \,dh_s\,\left( \mathcal{C}-h_s\right)^{-3\mathcal{C}+4 h_0-3/4}\,
e^{{2\pi\sqrt{ (h_s-\mathcal{C})(\mathcal{C}-h_{min})}}}e^{-\b (h_s-\mathcal{C})}\nn\\
&\sim &{2\pi(\mathcal{C}-h_{min})^{1/2}\over \b^{3/2} } e^{\pi^2\left({\mathcal{C}-h_{min}\over \b}\right)} \left({\pi^2(h_{min}-\mathcal{C})\over\b^2}\right)^{-3\mathcal{C}+4 h_0-3/4}\nn\\
&=& e^{\pi^2\left({\mathcal{C}-h_{min}\over \b}\right)} \left({(h_{min}-\mathcal{C})^{-3\mathcal{C}+4 h_0-1/4}\over\b^{(-6\mathcal{C}+8h_0)}}\right)\,.
\eeq
From this result, we obtain the right hand side of  \eqref{crossing_beyond_vac} after multiplying by the overall factor in front of the integral \eqref{kernel_int_bv}.  

Just as in the definition \eqref{average0},  we now define an average correction by dividing by the generalized Cardy formula \eqref{partition_fusion},  namely,
\be 
\d \mathbb{C}(h_s)\equiv{\d C(h_s)\over \mathcal{S}_0}\,.
\ee
Taking the limit ${h_s\over \mathcal{C}}\gg1$, 
\be
\mathcal{S}_0=4\sqrt{2} \sinh\left(2\pi b \sqrt{(h_s-\mathcal{C})}\right) \sinh\left(2\pi b^{-1}\sqrt{(h_s-\mathcal{C})}\right) \to e^{\left(4\pi \sqrt{\mathcal{C}(h_s-\mathcal{C})}\right) },
\ee
so we find 
\be
\d \mathbb{C}(h_s) \sim {C(h_{min})16^{h_{min}-h_s }\over  \sqrt{\pi(\mathcal{C}-h_{min})}}\left(\sqrt{  ( \mathcal{C}-h_{min})\over (\mathcal{C}-h_s)}\right)^{6\mathcal{C}-8 h_0+3/2}\ e^{-4\pi\sqrt{\mathcal{C}(h_s-\mathcal{C})}\left(1-{1\over 2}\sqrt{1-{h_{min}\over \mathcal{C}}}\right) }.
\ee
This result supports the claim that corrections to the spectral densities from including  non-vacuum contributions  are suppressed with respect to the leading contribution from the vacuum \eqref{spectral_saddle_leading}.  In fact, these corrections are exponentially suppressed.  This result corresponds to the analogous result for the one point function  in \cite{Kraus:2016nwo}. 
\section{Discussion and conclusions}\label{sec:Discussion}

In this paper we studied OPE spectral densities for the four-point correlation function of scalars in the large exchange dimension limit.  Our technique was to solve the modular bootstrap in an appropriate limit (the so-called modular lightcone limit),  allowing us to decouple the Virasoro vacuum  from the rest of the conformal dimensions spectrum.  We further restricted ourselves to theories that do not have conserved currents, and insisted on a twist gap in the spectrum,  allowing us to identify 
the contributions to the spectral densities of OPE coefficients at large spin from the Virasoro vacuum.

First we solved the crossing equations by resorting to a saddle point computation that allowed us to capture the OPE spectral density in the limit where the  dimension of the operator being exchanged is large in units of ${\cal C}$,  i.e.  ${h\over \mathcal{C}}\gg1$.  We then use this result as a leverage point to generalize the spectral densities beyond that region and towards ${h\over \mathcal{C}}\sim1$ by proposing an ansatz  that solves the crossing equations in the modular lightcone limit.   Just as the Cardy formula measures the density of primary operators in the large dimension limit,  our results can be understood as an extension of the Cardy formula to the density of OPE primary coefficients.

An obstruction to the development of a full-fledged large spin  perturbation theory comes from the lack of a practical closed form for the Virasoro blocks.  However,  we have shown in this paper that despite the limited knowledge we have of the blocks,  it is sufficient to allow for a leading order analysis.  We were able to perform an analytical analysis in the holomorphic sector of the OPE expansion,  but needed some numerical aid for the counterpart in the anti-holomorphic sector of the lightcone limit.  The numerical data needed for the Virasoro blocks was been obtained by solving the Zamolodchikov recursion relations numerically.

Recently,  some results for the Virasoro blocks at large exchange dimension have surfaced \cite{Cardona:2020cfy,  Das:2020fhs, Das:2020uax, Kashani-Poor:2013oza},  which offer hope of going beyond leading order in the large spin analysis, at least numerically.  We have in fact already used some of these results in the main body of the text.

Importantly,  we did not make any further assumptions on the theory under consideration,  beyond the existence of a gap separating the Virasoro vacuum from the rest of the spectrum and the absence of global conserved currents.  Henceforth,  we can think of our results as universal up to those assumptions.  Recently it was shown that this universality for the OPE coefficients of heavy operators is nicely captured by the DOZZ OPE of Liouville theory \cite{Collier:2019weq}.  In particular,  our result \eqref{average_asym} can be written in terms of the DOZZ coefficient.

Although we do not directly explore a gravitational interpretation of our work,  we expect it to provide a reliable semi-classical description of $AdS_3$ gravity even at finite values of $c$ \cite{Kraus:2016nwo,  Ghosh:2019rcj}.  Along the same lines,  it would be interesting to derive some of the results of this paper from a purely gravitational analysis,  in particular through the computation of the Witten diagrams corresponding to the four-point function of scalars considered here,  by using several of the methods developed recently  \cite{Costantino:2020vdu,Yuan:2017vgp,Carmi:2019ocp,Cardona:2017tsw,  Bertan:2018khc}.  Our results will be useful for studies of coarse-graining CFTs to produce gravitational duals, along the lines of \cite{Kraus:2016nwo,Das:2017cnv,Benjamin:2019stq,Miyaji:2021ktr, Benjamin:2021wzr}.  In those papers,  a coarse-grained average of the CFT result is compared to a black hole geometric result.  We specifically think it would be interesting to consider if a set of (non-minimal, as we consider) CFTs has a condition for successful coarse graining to a gravity theory,  where  the condition of $c=c_{crit}$ would be set by the asymptotic behavior of three-point spectral density instead of  the asymptotic density of states,  as proposed in \cite{Benjamin:2021wzr}.

\textit{We thank Alex Maloney and Eric Perlmutter for useful conversations.   This work was supported by the U.S. Department of Energy, Office of High Energy Physics, under Award No. DE-SC0019470 and C.C. by the National Science Foundation Award  No. PHY-2012195 at Arizona State University.}

\clearpage
\begin{singlespace}
\printbibliography[heading=subbibliography]
\end{singlespace}

\chapter{AN ENTROPIC LENS ON STABILIZER STATES}\label{Chapter3}

\textit{The contents of this chapter were originally published in Physical Review A \cite{Keeler:2022ajf}.}

\textit{The $n$-qubit stabilizer states are those left invariant by a $2^n$-element subset of the Pauli group. The Clifford group is the group of unitaries which take stabilizer states to stabilizer states; a physically--motivated generating set, the Hadamard, phase, and CNOT gates which comprise the Clifford gates, imposes a graph structure on the set of stabilizers. We explicitly construct these structures, the ``reachability graphs,'' at $n\le5$. When we consider only a subset of the Clifford gates, the reachability graphs separate into multiple, often complicated, connected components. Seeking an understanding of the entropic structure of the stabilizer states, which is ultimately built up by CNOT gate applications on two qubits, we are motivated to consider the restricted subgraphs built from the Hadamard and CNOT gates acting on only two of the $n$ qubits. We show how the two subgraphs already present at two qubits are embedded into more complicated subgraphs at three and four qubits. We argue that no additional types of subgraph appear beyond four qubits, but that the entropic structures within the subgraphs can grow progressively more complicated as the qubit number increases. Starting at four qubits, some of the stabilizer states have entropy vectors which are not allowed by holographic entropy inequalities. We comment on the nature of the transition between holographic and non-holographic states within the stabilizer reachability graphs.}

\section{Introduction}

Are all quantum states in a Hilbert space created equal? Viewed at the most abstract level, every pure quantum state can be rotated to any other by a unitary change of basis. But when the Hilbert space is endowed with some structure, different states play different roles with respect to that structure.  One natural structure is given by specifying a 
\emph{factorization} of the Hilbert space: an isomorphism between the abstract $\Hil$ and the tensor product Hilbert space $\bigotimes_{i=1}^N \Hil_i.$ For finite Hilbert spaces of composite dimension, the factorization associates to each pure state $\ket{\Psi}\in\Hil$ an \emph{entropy vector}, the collection of von Neumann entropies of the $2^N-1$ reduced density matrices formed by tracing out each possible tensor product formed from factors $\Hil_i$.

The entropy vectors provide a classification of the states in a tensor product Hilbert space, but because every state has an associated entropy vector they do not, by themselves, pick out any states as special. One way to accomplish this is by fixing one or more preferred operators acting on the Hilbert space. As a familiar example, choosing a particular Hermitian operator acting on a Hilbert space to be the Hamiltonian, the generator of time translations, specifies a basis of energy eigenstates, and every state in the Hilbert space can then be expanded in the energy basis. Most states are superpositions of more than one energy eigenstate, but not all: fixing a Hamiltonian picks out the basis of eigenstates of the Hamiltonian, the energy eigenstates themselves, which are the states where the Hamiltonian acts trivially, as scalar multiplication. The particular scalar is just given by the eigenvalue of the energy eigenstate, and if we just want to find the set of energy eigenstates this is unimportant: we could instead say that the energy eigenstates are the states where the projector onto the energy eigenspaces acts as the identity.

A similar procedure applies when instead of specifying a single Hermitian operator we pick a (multiplicative) group of Hermitian operators. Given the group $G\in L(\Hil)$, we can classify every state $\ket{\Psi}\in\Hil$ by the number of group elements that act trivially on this state: the dimension of the stabilizer subgroup $G_\ket{\Psi}$. Almost every state will have $\mathrm{dim}\:G_\ket{\Psi}=1$: the only group element that acts trivially is the identity operator. But some will have more, and the states that have the largest stabilizer subgroups relative to $G$ are called the \emph{stabilizer states} (which we will define more precisely below). For example, if the Hilbert space is that of a qubit, the stabilizer states of the Pauli group generated by $\langle X,Y,Z \rangle$ are the six states stabilized (up to sign) by the identity and one additional Pauli operator. Stabilizer states, especially with respect to the Pauli group on $n$ qubits, play an important role in the theory of quantum error correction, and in the fundamentals of quantum computing \cite{Gottesman:1997zz,Gottesman:1998hu,aaronson2004improved,knill2004fault,bravyi2005universal}. But here they emerged directly as ``generalized eigenstates,'' the states which play nicely with a specified group of operators. 

This paper initiates a research program aimed at combining these two classifications of states in Hilbert space. We seek an understanding of the stabilizer states, picked out by their interaction with a specified group of operators, in terms of their entropic structure, given by the underlying factorization of the Hilbert space. In particular, we will focus on the stabilizer states with respect to the Pauli group acting on $n$ qubits. In this setting, any stabilizer state can be reached by starting with any other stabilizer state and applying a unitary quantum circuit comprised of the Clifford gates: two one-qubit gates, the Hadamard and phase gates, and a single two-qubit gate, the controlled NOT gate. Since unitary operations on a single tensor factor do not change the entropy vector, it is already clear that moving between entropy vectors can only be accomplished via a CNOT gate. But not all such gate applications alter the entropic structure, and the picture we will find will be far richer than simply counting CNOT gates.

Besides understanding the general entropic structure of stabilizer states, we are additionally motivated by the connections between stabilizer states and holography. Stabilizer error-correcting codes satisfy a complementary recovery property which implies, and is equivalent to, an operator-algebraic version of the Ryu-Takayanagi formula \cite{Ryu:2006bv} relating entropies of states in the boundary/physical Hilbert space to the expectation value of an ``area'' operator acting on the bulk/logical Hilbert space \cite{Harlow:2016vwg,Pollack:2021yij}. A holographic bulk geometry can be discretized into a graph \cite{Bao:2015bfa}; in the limit of large bond dimension a tensor network built from random stabilizer tensors saturates the RT formula with probability one \cite{Hayden:2016cfa,Nezami:2016zni}.

Hence all holographic entropy vectors can be represented by stabilizer states, but the converse is not true: the holographic \emph{entropy cone} consisting of space of all allowed holographic entropy vectors is contained within the stabilizer entropy cone, and this containment is strict starting at three regions. Because the holographic entropy cone is well--characterized---explicitly at up to five regions \cite{HernandezCuenca:2019wgh}, and implicitly at arbitrary finite region number by the methods pioneered in \cite{Bao:2015bfa}---but the stabilizer entropy cone, and even the larger quantum entropy cone of all quantum states, are poorly understood, it is of great interest to understand in more detail how the holographic states are embedded into the larger space of stabilizer states. The stabilizer graph constructions presented in this paper allow this question to be attacked.

\subsection{Summary of Results}

\begin{figure}[!htb]
     \centering
     \begin{subfigure}{0.43\textwidth}
         \centering
         \includegraphics[width=\textwidth]{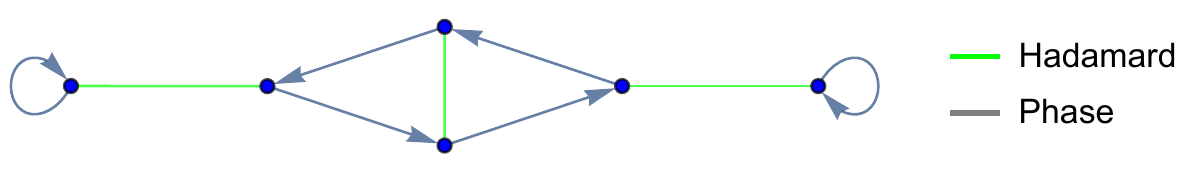}
         \caption{}
         \label{fig:OneQubitReachabilityGraph}
     \end{subfigure}
     \hfill
     \begin{subfigure}{0.47\textwidth}
         \centering
         \includegraphics[width=\textwidth]{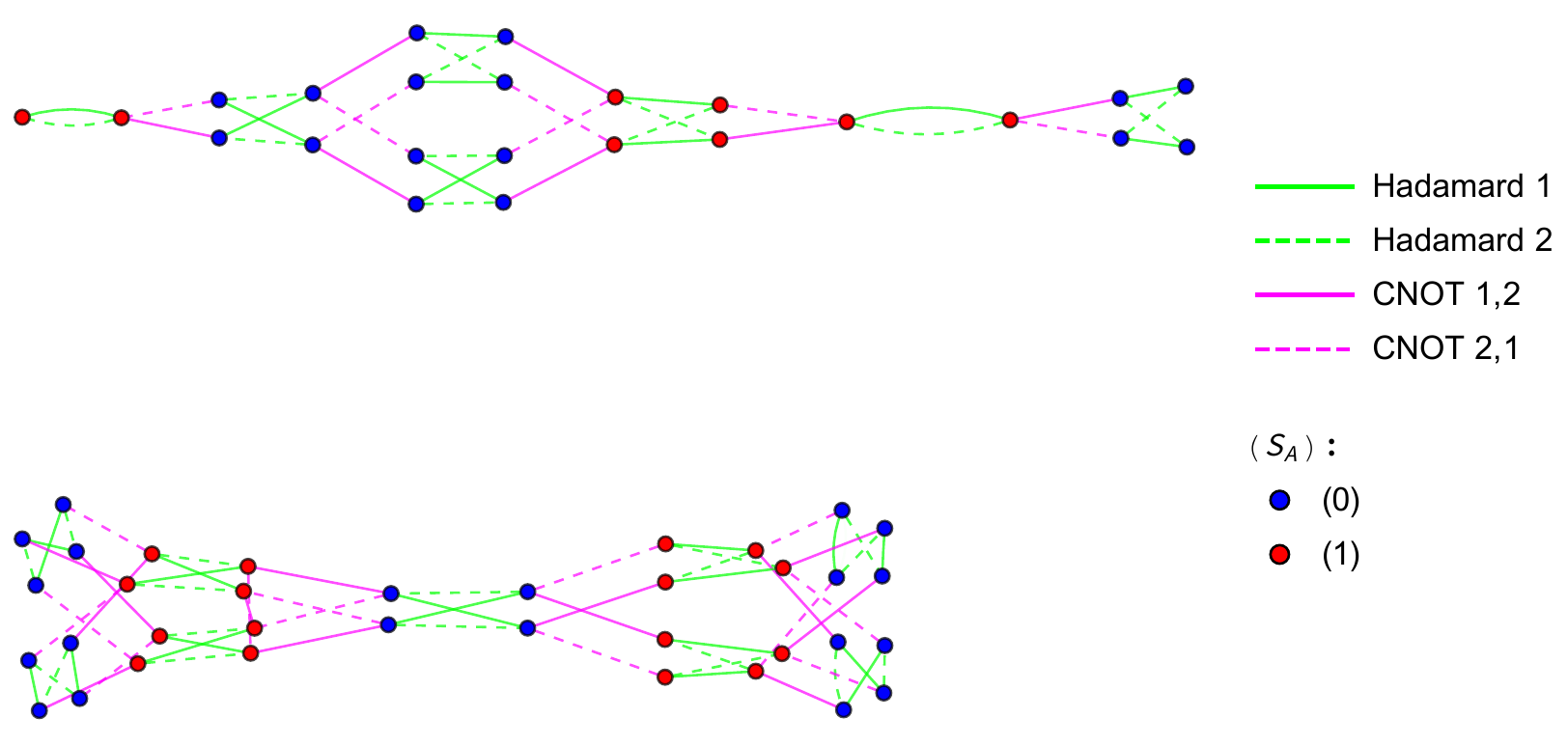}
         \caption{}
         \label{fig:TwoQubitG24G36}
     \end{subfigure}
     \hfill
     \begin{subfigure}{0.69\textwidth}
         \centering
         \includegraphics[width=\textwidth]{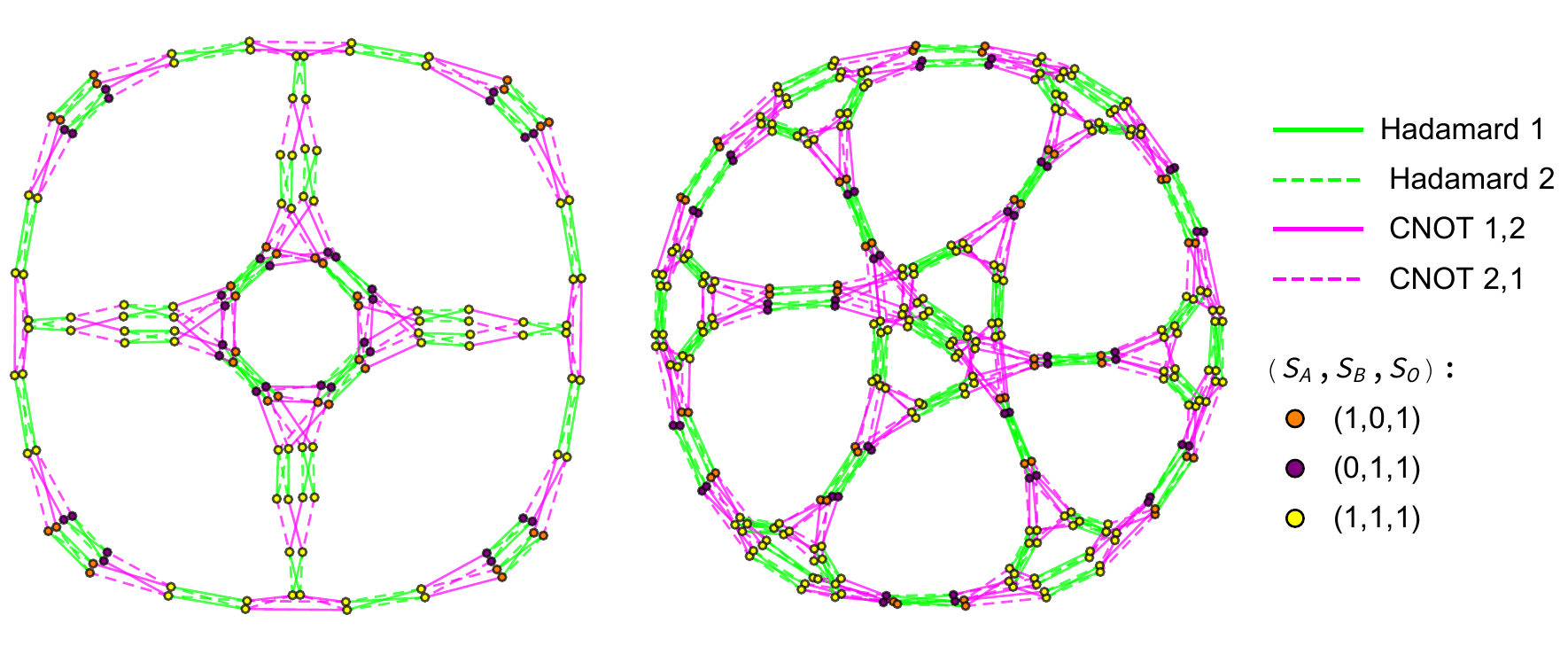}
         \caption{}
         \label{fig:ThreeQubitNewStructures}
     \end{subfigure}
     \hfill
     \begin{subfigure}{0.8\textwidth}
         \centering
         \includegraphics[width=\textwidth]{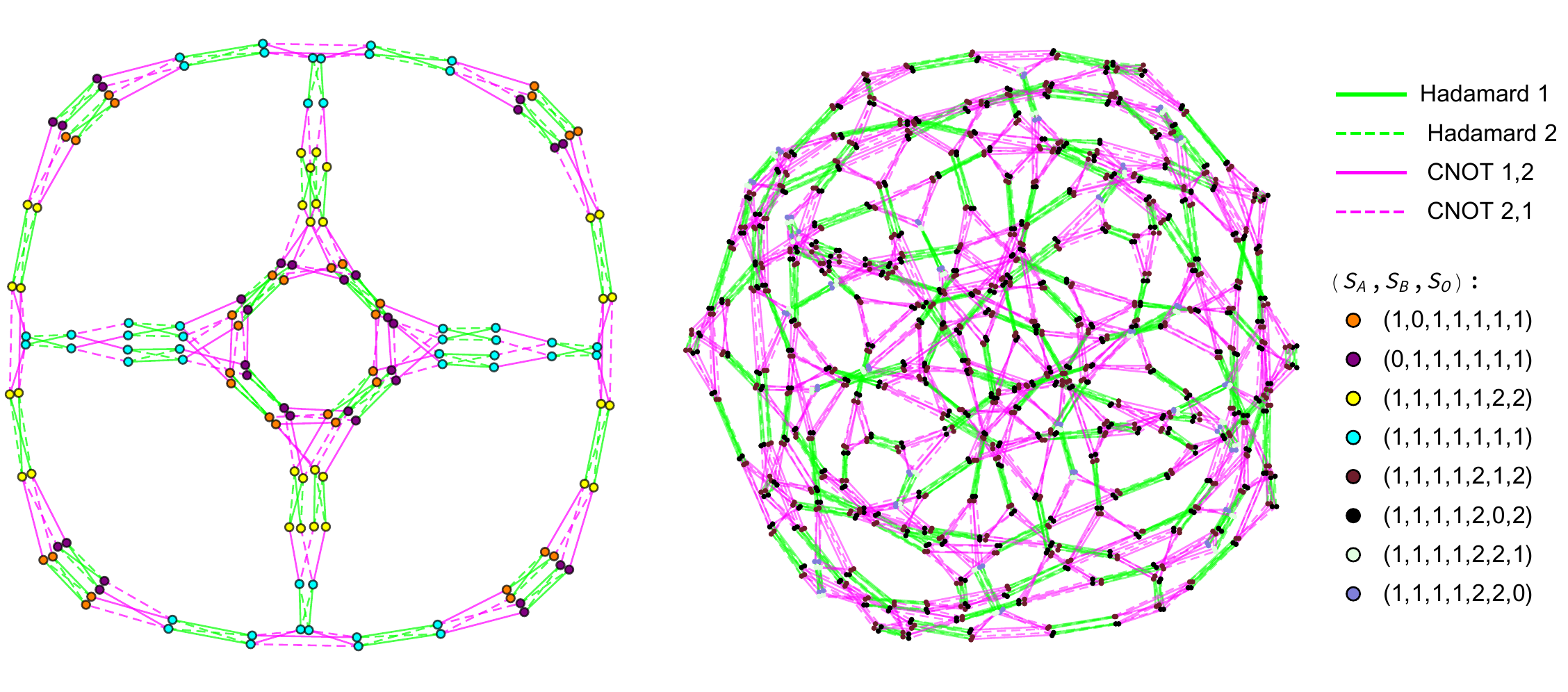}
         \caption{}
         \label{fig:4-Qubit}
     \end{subfigure}
        \caption{The collection of subgraph structures up through five qubits. (a) 1-Qubit reachability diagram displaying all single-qubit operations. (b) 2-Qubit restricted graph consisting of subgraph structures $g_{24}$ and $g_{36}$. (c) Two new subgraph structures occur at three qubits, $g_{144}$ and $g_{288}$. (d) At four qubits we witness the arrival of a new subgraph $g_{1152}$, as well as an increase of entropy vectors on $g_{144}$. No new subgraph structures emerge beyond $g_{1152}$, but the number of different entropy vectors present in each subgraphs continues to increase for higher qubit number. At four qubits we witness the first instance of non-holographic states, appearing on subgraph $g_{144}$.}
        \label{fig:AllStructures}
\end{figure}

Figure \ref{fig:AllStructures} presents a preview of our results. All of the objects in the figure are ``reachability graphs,'' which display the structure of a (subset of) the stabilizer states. Each vertex is a particular stabilizer state, colored by its entropy vector. The edges connecting them correspond to the action of a particular Clifford gate. The first panel presents the full reachability graph at one qubit, showing the action of the two one-qubit Clifford gates, Hadamard and phase, on the six one-qubit stabilizer states. The remaining panels show reachability graphs at higher qubits, constructed from a subset of the Clifford gates consisting of the Hadamard and CNOT gates applied to two particular qubits; we will argue below that this restricted set of gates is sufficient to capture the changes in entropic structure as we move between different stabilizer states. The second panel presents the two subgraphs found in the restricted two-qubit reachability graph: one with 24 vertices and one with 36. 

At three qubits, the restricted graph has 16 connected components, made up of copies of the 24- and 36-vertex subgraphs, as well as two additional structures, with 144 and 288 vertices, respectively, which are presented in the third panel. We will show how these more complicated structures, as well as an 1152-vertex subgraph that appears at four qubits, can be constructed in a simple way by certain ``lifts'' of states in the structures found at two qubits to stabilizer states in a larger Hilbert space. By understanding the lifts, we will also understand how to map entropy vectors at lower qubit numbers to entropy vectors at higher qubit numbers: for example, although only three distinct entropy vectors appear in the 144-vertex subgraphs at three qubits, at four qubits there are versions of this subgraph with four distinct entropy vectors, as seen in the final panel of the Figure.

At four qubits, the stabilizer states can have any of eighteen different entropy vectors (see Table \ref{tab:FourQubitEntropyVectors}), and one of these entropy vectors, shown in blue in the fourth panel of Figure \ref{fig:AllStructures}, is not holographic. The reachability graphs allow us to see how, at four and five qubits, applications of Clifford gates can move states out of, and back into, the holographic entropy cone. We see, for example, that moving from the inner octagonal structure to the outer ring cannot be accomplished without passing through non-holographic states.

\subsection{Structure of the Paper}

The remainder of this paper is organized as follows. Section \ref{sec:review} reviews the notions of stabilizer states and the entropy cone, which are of fundamental importance for the rest of the paper. Section \ref{sec:two} initiates the study of the intersection of these two concepts by presenting the two-qubit stabilizer graph colored by entropy vector. We illustrate that additional insight can be gained into the graph structure by considering the \emph{restricted graphs} generated by only a subset of the Clifford gates; in particular, we highlight the special role of the restricted graph generated by only Hadamard and CNOT gates.

Section \ref{sec:three} extends the discussion to the three-qubit stabilizer graph. We show that much of the three-qubit graph can be understood as the natural extension of the two-qubit graph, but that nontrivial new structures appear in addition. We argue that because the Clifford gates act on at most two qubits it is natural to consider the restricted graphs generated by gates that act on any two of the three qubits. In Section \ref{sec:general}, we pass to a discussion of the situation for higher qubit numbers, illuminated by some direct results at four and five qubits. We show that no fundamentally new objects appear as we go to higher qubit number, but the existing objects develop progressively more complicated entropic structures. At higher qubit numbers, entropy vectors appear which cannot be represented by holographic states, and we comment on how they fit into the stabilizer graph. Finally, in Section \ref{sec:discussion} we discuss and conclude. Additional results and graphs are presented in appendices.

\section{Reminder: Stabilizer States and the Entropy Cone}\label{sec:review}

\subsection{Review of Stabilizer States}\label{sec:StabilizerReview}

The Pauli matrices
\begin{equation}
    I=\begin{pmatrix}1&0\\0&1\end{pmatrix}, \,\, \sigma_X=\begin{pmatrix}0&1\\1&0\end{pmatrix}, \,\,
    \sigma_Y=\begin{pmatrix}0&-i\\i&0\end{pmatrix}, \,\,
    \sigma_Z=\begin{pmatrix}1&0\\0&-1\end{pmatrix},
\end{equation}
are a set of four Hermitian and unitary matrices with eigenvalues $\pm 1$, which, given a fixed basis \{$\ket{0},\ket{1}\}$, can be interpreted as operators acting on the Hilbert space $\mathbb{C}^2$ of a single qubit. The (nontrivial) Pauli operators $\{\sigma_X,\sigma_Y,\sigma_Z\}$ generate the full algebra of linear operators $L(\mathbb{C}^2)$. More importantly for our purposes, they generate a 16-element multiplicative matrix group, the Pauli group on one qubit 
\begin{equation}
    \Pi_1\equiv\langle\sigma_X,\sigma_Y,\sigma_Z\rangle
    = c \{I, \sigma_X, \sigma_Y, \sigma_Z\}, 
    c\in\{\pm 1,\pm i\}.
\end{equation} 

Recall that for a pure state $\ket{\Psi}\in \Hil$ and a group of operators $G\subset L(\Hil)$, the stabilizer group of $\ket{\Psi}$ is defined by
\begin{equation}
    G_\ket{\Psi} \equiv \{g\in G\: | \:g\ket{\Psi}=\ket{\Psi}\}.
\end{equation}
That is, the stabilizer group of $\ket{\Psi}$ consists of the operators in $G$ for which $\ket{\Psi}$ is an eigenvector with eigenvalue one.  By inspection, the Pauli group on one qubit $\Pi_1$ has one element, $I$, with two unit eigenvalues, which therefore stabilizes all states in $\mathbb{C}^2$; one element with two negative eigenvalues, $-I$, which stabilizes no states; eight elements with purely imaginary eigenvalues, which also stabilize no states; and six elements with one unit eigenvalue, $\{\pm \sigma_X,\pm \sigma_Y, \pm \sigma_Z\}$, which therefore each stabilize one state in $\mathbb{C}^2$. Hence there are six states in the Hilbert space, the \emph{one-qubit stabilizer states}, which are stabilized by a two-element subgroup:
\begin{equation}\label{OneQubitStabilizerStates}
    S_1 \equiv \left\{ \ket{0}, \ket{1}, \ket{\pm}\equiv\frac{1}{\sqrt{2}}(\ket{0}\pm \ket{1}), \ket{\pm i}\equiv\frac{1}{\sqrt{2}}(\ket{0}\pm i\ket{1})  \right\},
\end{equation}
and all other states in the Hilbert space are stabilized by only the identity. For example, $\ket{+}$ is stabilized by $I$ and $\sigma_X$, while $\ket{1}$ is stabilized by $I$ and $-\sigma_Z$.

If we fix a state $\ket{\Psi}\in S_1$, then there are a limited number of operations we can do that will map $\ket{\Psi}$ to some other $\ket{\Psi^\prime}\in S_1$. In particular, we can ask what unitary operators $U\in L(\mathbb{C}^2)$ are guaranteed to map \emph{any} state in $S_1$ back into $S_1$. Since we defined the stabilizer states by their property of having the largest stabilizer group with respect to $\Pi_1$, we can equivalently ask which unitaries $U$ \emph{normalize} the Pauli group, i.e.\ take group elements to group elements under conjugation by U. Clearly every element of the Pauli group itself has this property, but in general there is a larger group of unitaries which do as well. The group of unitaries which normalize the Pauli group is called the Clifford group,
\begin{equation}
    C_1=\left\{ U\in L(\mathbb{C}^2) \: | \: 
    UgU^\dagger\: \forall g \in \Pi_1\right\}.
\end{equation}
It suffices to check that a unitary takes $\sigma_X$ and $\sigma_Z$ to elements of $C_1$. For example, the Hadamard \cite{sylvester1867lx,hadamard1893resolution} and phase gates,
\begin{equation}
    H\equiv\frac{1}{\sqrt{2}}\begin{pmatrix}1&1\\1&-1\end{pmatrix}, \qquad P\equiv\begin{pmatrix}1&0\\0&e^{\frac{i \pi}{2}}\end{pmatrix}=\begin{pmatrix}1&0\\0&i\end{pmatrix},
\end{equation}
are both elements of the Clifford group, since $H\sigma_XH^\dagger=\sigma_Z$, $H\sigma_ZH^\dagger=\sigma_X$, and $P\sigma_XP^\dagger=\sigma_Y$, $P\sigma_ZP^\dagger=\sigma_Z$. In fact, these two gates suffice to generate the Clifford group, $C_1=\langle H, P \rangle$. We see, in particular, that because $PP=\sigma_Z$, we can easily construct the Paulis themselves out of $H$ and $P$.

	\begin{figure}[t]
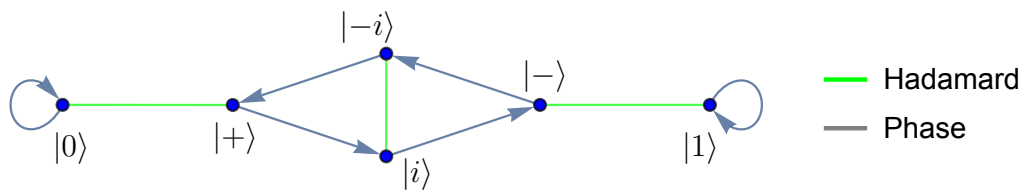

		\begin{center}
		\begin{overpic}[width=0.9\textwidth]{OneQubitReachabilityDiagram.pdf}
		\put (5,2.5) {$\ket{0}$}
		\put (66,2.5) {$\ket{1}$}
		\put (20.5,3.5) {$\ket{+}$}
		\put (50.5,9.5) {$\ket{-}$}
		\put (38.8,0) {$\ket{i}$}
		\put (32.8,14.2) {$\ket{-i}$}
        \end{overpic}
		\caption{Complete one-qubit reachability diagram. Because the Hadamard gate is its own inverse we have depicted edges corresponding to Hadamard gate applications as undirected, but gates corresponding to phase gates as directed.}
		\label{OneQubitReachabilityDiagram}
	\end{center}
	\end{figure}

Given the set of stabilizer states $S_1$ and a set of generators of the Clifford group $C_1$, which we call \emph{Clifford gates} \cite{Gottesman:1997zz}, we can arrange the states into a \emph{reachability graph}, with each vertex labeled by a stabilizer state and each edge between vertices labeled by the Clifford gate which maps one vertex to the other. (The Hadamard gate is its own inverse, so it can be represented by an undirected edge, but the phase gate is not, so it is represented by a directed edge.) The reachability graph for $S_1$ is shown in Figure \ref{OneQubitReachabilityDiagram}. Note that the reachability graph depends on a particular choice of generators of $C_1$, e.g. the Clifford gates. The significance of choosing $H$ and $P$, in particular, as our generators is that both operators have physical significance and can be implemented experimentally (relatively) easily.

Note that the reachability graph consists of a single component: as implied by the definition of the Clifford group, we can use the Clifford gates to get (in some number of steps) from any initial stabilizer state, e.g. $\ket{0}$, to any other stabilizer state. Furthermore, the graph contains many cycles: trivial ones, where a Clifford gate acts as the identity on a particular stabilizer state, but also longer ones. For example, because $H^2=P^4=I$, every edge corresponding to a Hadamard application is part of a cycle of length two, and every edge corresponding to a nontrivial phase application is part of a cycle of length four, such as the diamond representing a nontrivial cycle consisting of four phase gates we see at the center of Figure \ref{OneQubitReachabilityDiagram}. More interestingly, we have cycles consisting of more than one type of gate, such as the triangles with two phase gates and one Hadamard.

To better understand these cycles, we can apply two basic group-theoretic results. First, Lagrange's theorem says that the order of any subgroup $H$ of a finite group $G$ gives an integer partition of that group \cite{Alperin1995}: explicitly,
	\begin{equation}\label{LagrangeTheorem}
		|G| = [G:H]\cdot |H|, \hspace{.5cm} \forall H \leq G,
	\end{equation}
with $[G:H]$ the index of $H$ in $G$. Second, the orbit-stabilizer theorem says that, when $H$ is a stabilizer subgroup of $G$ with respect to some state $x\in X$, i.e.\ $H=G_x$, there exists a bipartition between the orbit of x in G, $|G \cdot x|$, and the set of cosets of the stabilizer subgroup in $G$, $G/H$. Hence these two objects have the same dimension:
\begin{equation}
    |G \cdot x|=\frac{|G|}{|H|} = [G:H]\label{eq:orbit_stabilizer},
\end{equation}
where in the last equality we have used \eqref{LagrangeTheorem}.

The Clifford group $C_1$ is a group of operators acting on the one-qubit Hilbert space $\mathbb{C}^2$. As we have discussed, for each $\ket{\Psi} \in S_n \subset \mathbb{C}^2$, there exists a subgroup $G_\ket{\Psi}$. Hence we can apply this group-theoretic machinery to our case of interest. Substituting $\ket{\Psi}\in S_1$ for $x\in X$, $C_1$ for $G$, and $G_{\ket{\Psi}}$ for $H$ in \eqref{eq:orbit_stabilizer} gives
	\begin{equation}\label{Burnside+Lagrange}
		|C_1 \cdot \ket{\Psi}| = \frac{|C_1|}{|G_{\ket{\Psi}}|},
	\end{equation}
where $|C_1 \cdot \ket{\Psi}|$ denotes the length of the orbit of $\ket{\Psi}$. When we represent the action a of group element by a graph, the orbit length is the largest number of vertices in any connected component of the graph.

Explicitly, consider the set of one-qubit stabilizer states $S_1$ defined in \ref{OneQubitStabilizerStates}. The one-qubit Clifford group $C_1$ is constructed from the generating set $\langle H_1,P_1 \rangle$. The orbit of each $\ket{\Psi} \in S_1$ can be computed directly using Equation \eqref{Burnside+Lagrange}, with results shown in Table \ref{tab:OneQubitFullOrbits}. One can easily verify these results by comparing with the reachability diagram in Figure \ref{OneQubitReachabilityDiagram}.
\begin{table}
\begin{center}
\begin{tabular}{|c||c|c|c|}
	\hline
	$\ket{\Psi}$ & $|C_1|$ & $|C_{1_{\ket{\Psi}}}|$ & $|C_1 \cdot \ket{\Psi}|$ \\ 
	\hline
	\hline
	$\ket{0}$ & 192 & 32 &6\\
	\hline
	$\ket{1}$ & 192 & 32 &6\\
	\hline
	$\ket{+}$ & 192 & 32 &6\\
	\hline
	$\ket{-}$ & 192 & 32 &6\\
	\hline
	$\ket{i}$ & 192 & 32 &6\\
	\hline
	$\ket{-i}$ & 192 & 32 &6\\
	\hline
\end{tabular}
    \caption{Orbit lengths for each single-qubit stabilizer state under the one-qubit Clifford group $C_1$.}
    \label{tab:OneQubitFullOrbits}
    \end{center}
\end{table}

We can also use this machinery to consider the orbits of states $\ket{\Psi} \in S_1$ under subgroups of $C_1$, see Table \ref{tab:OneQubitSubgroupOrbits}. Let $G_P < C_1$ denote the subgroup generated by only the phase gate. This group contains the $4$ unique elements $P_1, P_1^2, P_1^3,$ and $P_1^4$ (recall $P_1^4 = I$). Each element acts trivially on $\ket{0}$ and $\ket{1}$, and thus these two states are stabilized by all elements of $G_P$. The remaining states $(\ket{+},\ket{-},\ket{i},\ket{-i})$ form a cycle of length $4$ under operation of $P_1$, each stabilized only by the identity $P_1^4$. These orbits manifest as subgraphs of the reachability graph as seen in Figure \ref{OneQubitH1OnlyP1Only}.

\begin{table}
\begin{center}
\begin{tabular}{|c||c|c|c|}
	\hline
	$\ket{\Psi}$ & $|G_P|$ & $|G_{P_{\ket{\Psi}}}|$ & $|G_P \cdot \ket{\Psi}|$ \\ 
	\hline
	\hline
	$\ket{0}$ & 4 & 4 & 1\\
	\hline
	$\ket{1}$ & 4 & 4 & 1\\
	\hline
	$\ket{+}$ & 4 & 1 & 4\\
	\hline
	$\ket{-}$ & 4 & 1 & 4\\
	\hline
	$\ket{i}$ & 4 & 1 & 4\\
	\hline
	$\ket{-i}$ & 4 & 1 & 4\\
	\hline
\end{tabular}
\hspace{.2cm}
\begin{tabular}{|c||c|c|c|}
	\hline
	$\ket{\Psi}$ & $|G_H|$ & $|G_{H_{\ket{\Psi}}}|$ & $|G_H \cdot \ket{\Psi}|$ \\ 
	\hline
	\hline
	$\ket{0}$ & 2 & 1 & 2\\
	\hline
	$\ket{1}$ & 2 & 1 & 2\\
	\hline
	$\ket{+}$ & 2 & 1 & 2\\
	\hline
	$\ket{-}$ & 2 & 1 & 2\\
	\hline
	$\ket{i}$ & 2 & 1 & 2\\
	\hline
	$\ket{-i}$ & 2 & 1 & 2\\
	\hline
\end{tabular}
    \caption{Orbit lengths for each single-qubit stabilizer state under subgroups $G_P<C_1$ and $G_H<C_1$, generated by only the phase gate and Hadamard gate respectively.}
    \label{tab:OneQubitSubgroupOrbits}
    \end{center}
\end{table}

	\begin{figure}
		\begin{center}
		\begin{overpic}[width=0.9\textwidth]{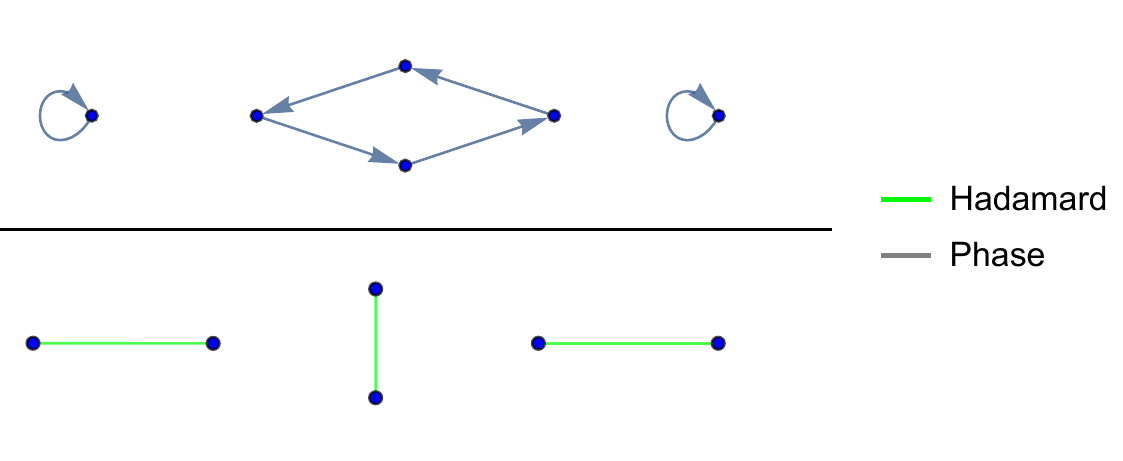}
		\put (8,26.5) {$\ket{0}$}
		\put (63.5,26.5) {$\ket{1}$}
		\put (32.5,36) {$\ket{i}$}
		\put (1.7,6) {$\ket{0}$}
		\put (62.8,6.2) {$\ket{1}$}
		\put (29,15) {$\ket{i}$}
        \end{overpic}
	\caption{The one-qubit reachability diagram restricted to only phase (Top) and only Hadamard (Bottom) operations reveals disconnected orbits of varying length for states $\ket{\Psi} \in S_1$.}
	\label{OneQubitH1OnlyP1Only}
	\end{center}
	\end{figure}
Similarly, consider the subgroup $G_H<C_1$, generated by only the Hadamard gate. This group only has $2$ elements since $H_1 = H_1^{-1}$ and $H_1^2 = I$. Distinctly, the Hadamard gate stabilizes no element of $S_n$, and thus each state in $S_1$ is stabilized by only the identity $(H_1^2)$. Each $\ket{\Psi}\in S_1$ subsequently has an orbit of length $2$, as shown in Table \ref{tab:OneQubitSubgroupOrbits}, which results in the decomposition of the reachability graph into pairs of states (Figure \ref{OneQubitH1OnlyP1Only}), equivalent up to a change of basis. 

In the remainder of the paper, we will often consider splitting the reachability graph into subgraphs constructed from a subset of the Clifford gates. Ultimately, the underlying structure behind this graphical representation is precisely the partitioning of the stabilizer set by orbit length.

As the alert reader will have realized from our notation, we can extend the definitions of the Pauli group, stabilizer states, and Clifford gates to more than one qubit. The Pauli group $\Pi_n$ on $n$ qubits consists of ``Pauli strings'' acting on each of the qubits, and is generated by\footnote{Note that writing a Pauli string requires not just a factorization of the $2^n$ dimensional Hilbert space into tensor factors representing qubits (each of which has a specified basis $\{0,1\}$), but a particular \emph{ordering} of the qubits from $1$ to $n$: we write $\ket{a_1\ldots a_n}\equiv \ket{a_1}_1\otimes\ldots\ket{a_n}_n$. It should be clear that the set of length-one Pauli strings as a whole are the independent of choice of ordering, and hence so is the Pauli group $\Pi_n$ they generate. This will also be the case for the Clifford group $C_n$ and the set of all Clifford gates. However, individual gates will of course depend on the choice of ordering. We will often consider a subset of the Clifford gates which act only on the first two qubits, which again depends on the choice of ordering, but there is an equivalent subset which acts on any two specified qubits, so although the position of a given state within the graphs we will generate depends on a choice of ordering, the overall graph structures themselves will not.} the length-one Pauli strings like $I^1\otimes\ldots\otimes I^{k-1} \otimes \sigma_Z^k \otimes I^{k+1} \otimes \ldots \otimes I^n$. The stabilizer states $S_n$ are those states $\ket{\Psi}\in (\mathbb{C}^2)^{\otimes n}$ with stabilizer groups of maximal size, $\mathrm{dim}\: G_\ket{\Psi} = 2^n$.\cite{aaronson2004improved,garcia2017geometry} 
\begin{equation}
    \left|S_n\right| = 2^n \prod_{k=0}^{n-1} (2^{n-k}+1)
    = 2 (2^n +1) \left|S_{n-1}\right| \approx 2^{(.5+o(1))n^2}.
\end{equation}
We have given a closed-form expression, recursion relation (with base case $|S_1|=6$, as constructed explicitly above), and asymptotic expression.

The Clifford group $C_n$ is again the group of unitaries which normalize $\Pi_n$, which contains the Hadamard and phase gates acting on each individual qubit. However, these gates no longer suffice to generate the full Clifford group: because the gates act only on a single qubit, they cannot change the entanglement structure of a state. Yet not every stabilizer state has the same entanglement structure. One subset of the two-qubit stabilizer states consists of the tensor product of a one-qubit stabilizer state on the first qubit and another one-qubit stabilizer state on the second qubit: these states are, of course, product states. But, for example, the Bell state $\frac{1}{\sqrt{2}}(\ket{00}+\ket{11})$ is a two-qubit stabilizer state, stabilized by $\{I^1I^2,\sigma_X^1\sigma_X^2,-\sigma_Y^1\sigma_Y^2,\sigma_Z^1\sigma_Z^2\}$. Hence any set of operators generating $C_n$ must contain operators which map product states to entangled states and vice versa.

One convenient gate which accomplishes this task is the $CNOT_{i,j}$ gate, which performs a controlled $NOT$ operation on the $j$th qubit depending on the state of the $i$th qubit:
\begin{equation}
		CNOT_{1,2} \equiv \begin{pmatrix}
            1 & 0 & 0 & 0\\
            0 & 1 & 0 & 0\\
	    0 & 0 & 0 & 1\\
	    0 & 0 & 1 & 0
            \end{pmatrix} \in L(\mathbb{C}^4).
\end{equation}
To check that $CNOT_{i,j}\in C_n$, it suffices to check its action on the length-one Pauli strings consisting of $\{\sigma_X^i,\sigma_Z^i,\sigma_X^j,\sigma_Z^j\}$ tensored with identities on every other qubit. A standard calculation shows that conjugation by $CNOT_{i,j}$ maps these four strings to $\{\sigma_X^i\otimes\sigma_X^j,\sigma_Z^i,\sigma_X^j,\sigma_Z^i\otimes\sigma_Z^j\}$ tensored with identities, respectively. Hence the CNOT gates are indeed members of the Clifford group; together with the Hadamards and phase gates, they generate\footnote{In fact, we only need half of the CNOT gates, the $n(n-1)/2$ gates $CNOT_{i,j}$ with $i<j$, because we have the relation $CNOT_{j,i}=H_i H_j CNOT_{i,j} H_j H_i$. Note that this expression is not unique, because $H_i$ and $H_j$ acting on distinct qubits commute. Following convention, we will nevertheless take the Clifford gates to include all $n(n-1)$ CNOT gates; e.g.\ our two-qubit reachability graph will show the actions of both $CNOT_{1,2}$ and $CNOT_{2,1}$.\label{fn:cnot_dependence}} $C_n$, so the \emph{$n$-qubit Clifford gates} are taken to be the $n$ Hadamard gates $H_n$, the $n$ phase gates $P_n$, and the $n(n-1)$ gates $CNOT_{i,j}$ (for $i\ne j$).

We can thus construct, for the Hilbert space of $n$ qubits $\mathbb{C}^{2n}$, a reachability graph for the stabilizer states $S_n$ using the Clifford gates which generate $C_n$. We will devote the rest of the paper to studying this object, and the subgraphs formed from it by restricting to a subset of the Clifford gates, at various qubit numbers $n>1$. Before we begin this study, however, we will first categorize the various possible allowed entropic structures of the stabilizer states $S_n$.
    
\subsection{Review of the Entropy Cone}\label{sec:EntropyCone}

All stabilizer states themselves are pure states; that is, for a given stabilizer state $\ket{\psi}$, the density matrix $\rho_{\psi}$ is idempotent and thus has zero total entropy:
\begin{equation}
    \rho_{\psi}=\ket{\psi}\bra{\psi}, \quad \rho_{\psi}^2=\rho_{\psi}, \quad S(\psi)=-\tr \rho_\psi \log_2 \rho_\psi=0.
\end{equation}
For this paper, we measure entropy in \emph{bits}: every $\log$ should be interpreted as $\log_2$ throughout.  As we will see shortly, this convention results in positive integer entries for every element in the entropy vector for all stabilizer states.

Non-trivial entropic structure arises when we consider how one subset of qubits relates to its complement.  Suppose we pick a $p$-qubit subset $I$ of the $n$ qubits in a full stabilizer state.  Then, the entanglement entropy between the $p$-qubit subset $I$ and its $(n-p)$-qubit complement $\bar{I}$ is given by
\begin{equation}
    \rho_{I}=\tr_{I} \ket{\psi}\bra{\psi}, \quad S_I=-\tr \rho_I \log_2 \rho_I.
\end{equation}
Here the trace $\tr_{I}$ is taken over only the $p$ qubits in the subset $I$. The density matrix $\rho_I$ is called a reduced density matrix, and since stabilizer states are pure only their reduced density matrices have nonzero entropy.    Since the entropy for the full state is zero, we also have $S_I=S_{\bar{I}}$.

For an $n$-qubit stabilizer state, there are thus $2^{n-1}-1$ entropies.  Listing all of these entropies produces the \emph{entropy vector} for a given state. As an example, 2-qubit stabilizer states have full entropy vector $\vec{S}=(S_A,S_O,S_{AO})$.  However, since the state is pure, $S_{AO}=0$ and $S_A=S_O$. We thus write the 2-qubit entropy vector as just $\vec{S}=(S_A)$. Here, we have labelled our last qubit with $O$ to indicate it acts as a purifier for the other qubits.

Similarly, for a $3$-qubit state, we write $\vec{S}=(S_A,S_B,S_O)$, or equivalently, $\vec{S}=(S_A,S_B,S_{AB})$. For $4$ qubits we have $\vec{S}=(S_A,S_B,S_C,S_O,S_{AB},S_{AC},S_{AO})$.  We could have written $S_O=S_{ABC}$ and $S_{AO}=S_{BC}$ instead; some sources choose a different ordering for the entropy vector accordingly.

As reviewed in Section \ref{sec:StabilizerReview}, only CNOT gates can create or destroy entanglement entropy.  We can now refine this statement: the $CNOT_{i,j}$ gate can only alter entropies $S_I$ where qubit $i\in I$ but qubit $j\in \bar{I}$, or vice versa.

In addition to the equation $S_I=S_{\bar{I}}$, which holds for any pure state, entropies for subsets of qubits also obey entropy inequalities.  The full set of entropy inequalities obeyed by a given set of states defines the \emph{entropy cone} \cite{10.1109/18.641561,1193790}. The quantum entropy cone is the largest region we will discuss; any quantum state obeys the inequalities that define its boundaries. The Araki-Lieb inequality \cite{cmp/1103842506} $S_{IJ}+S_I\geq S_J$ and subadditivity $S_I+S_J\geq S_{IJ}$, where $I,J$ are disjoint sets of qubits, are both examples of inequalities
obeyed by all quantum states. The full quantum cone at arbitrary qubit number is not
known, but many classes of inequalities are \cite{681320,4215134,Linden:2004ebt,Schnitzer:2022exe}. 

Instead, we will be interested in two smaller cones: the stabilizer cone, and the holographic entropy cone. The stabilizer cone \cite{Linden:2013kal,doi:10.1063/1.4818950,HernandezCuenca:2019wgh,Bao:2020zgx,Bao:2020mqq} is defined as the smallest convex cone which contains all stabilizer states.  Since all states we study are stabilizer states, they will all lie within the stabilizer cone. Although we will not study this cone in further detail, we will use the fact that it is larger than our next cone: the holographic entropy cone.

As defined in \cite{Bao:2015bfa}, the holographic entropy cone is the smallest convex cone in entropy space which contains all quantum states that have a dual representation as a classical gravity state.  The Ryu-Takayanagi formula relates the entanglement entropies for subregions of field theoretic states to areas of extremal surfaces in their dual holographic geometries.  The geometry of these extremal surfaces constrains the allowed entropy vectors.  The first such constraint was the monogamy of mutual information%
\footnote{As discussed in \cite{Bao:2015bfa,HernandezCuenca:2019wgh}, at 6 qubits (5 regions), further entropy inequalities arise not described here.  Since we limit our detailed discussion to 5 qubits or fewer, the Araki-Lieb, subadditivity, and monogamy inequalities are sufficient to test if a state lies within the holographic entropy cone. }%
\ \cite{Hayden:2011ag}, 
\begin{equation}\label{eq:MonogamyofMutualInformation}
    S_{IJ}+S_{IK}+S_{JK}\geq S_{IJK}+S_I+S_J+S_K.
\end{equation}
Here again $I,\, J,\, K$ are disjoint sets of qubits. This inequality is not obeyed by all quantum states, nor by all stabilizer states, but it is obeyed by all states which have a dual smooth classical geometry.  As a consequence, it sets the first boundary between the stabilizer and holographic cones.  As we will review below, beginning at four qubits (or, in the holographic dual language, three regions plus a purifier), some stabilizer states cannot have a smooth holographic dual, because they do not lie within the holographic cone. One of our interests in studying the reachability diagrams is to understand what gate actions on a given state can move it from within the holographic cone to outside of it.  These gate actions then describe how to create a state whose geometry is definitely nonclassical.

\section{The Two-Qubit Stabilizer Graph}\label{sec:two}

At two qubits, the reachability graph contains $60$ vertices, representing the full set of 2-qubit stabilizer states. These states are connected by the six Clifford gates: $H_1,\, H_2,\, P_1,\, P_2,\, CNOT_{1,2},$ and $CNOT_{2,1}$. The full graph is visible in Figure \ref{TwoQubitCompleteGraph}. 

As discussed in section \ref{sec:EntropyCone}, 2-qubit stabilizer states have a one-component reduced entropy vector $S_A$. For these states, $S_A$ is either zero or one,%
\footnote{Since we are working with qubits, we measure entropies using $\log_2$. That is, the reduced density matrix of one qubit in a maximally-entangled pair has von Neumann entropy 1.} %
so states on the reachability graph (represented by vertices) are either unentangled (blue) or form an maximally entangled pair (red). Since only a CNOT gate can alter the entropy vector, the graph has two subgraphs which are connected only by CNOT gates (pink lines).  One subgraph has all of the entangled stabilizer states, while the other has all of the unentangled ones.  At any number of qubits, removing all of the CNOT gates breaks the full graph into subcomponents. Each subcomponent has the same entropy vector throughout.

The graph here depicts every gate acting on each vertex. Since some of these gate actions act trivially on particular states, the graph contains loops. These loops thus represent gate actions which stabilize the state represented by the vertex attached to the loop. This graph also contains degenerate gate action, i.e. multiple edges that map one vertex to another as can be seen in the bottom-rightmost pair connected by $H_1$ and $H_2$. Beginning in section \ref{2qubitHCNOT}, we will suppress trivial loops since we are most interested in understanding the gates that move us between states. 

\begin{figure}[h]
\begin{center}\includegraphics[scale=0.45]{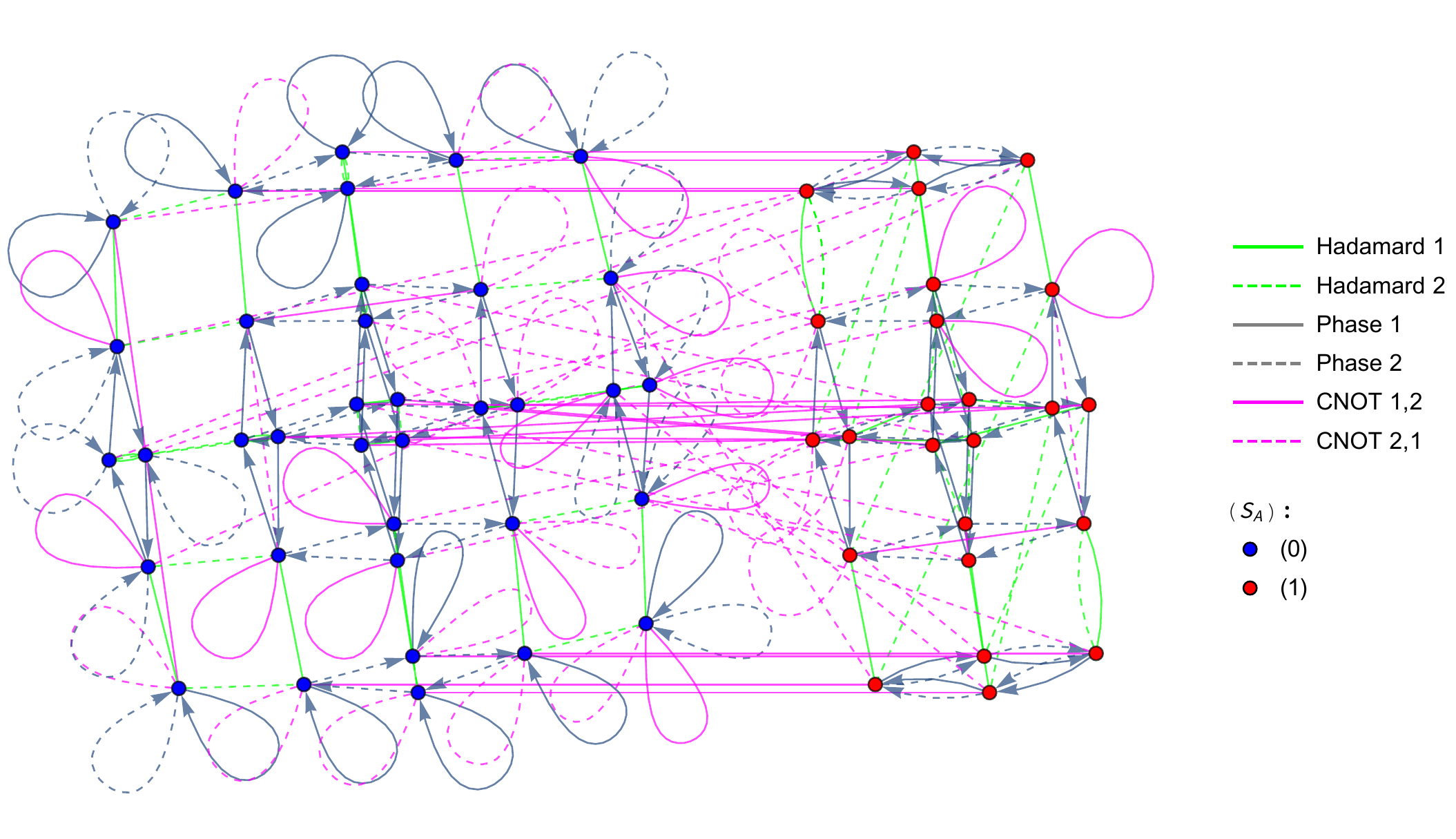}
\caption{This complete 2-qubit reachability graph depicts the full map between all stabilizer states under action of the Clifford group. Edges depicting $H_i$ and $CNOT_{i,j}$ are undirected since they are each their own inverse.  $P$ is not its own inverse since $P^4 = I$;  consequently the phase gates are represented by directed edges.  The color of the edge indicates the type of gate, and the line texture (solid vs. dashed) indicates the qubits which the gate acts on.}
\label{TwoQubitCompleteGraph}
\end{center}
\end{figure}

While Figure \ref{TwoQubitCompleteGraph} completely describes the connections between 2-qubit stabilizer states via the Clifford gates, the graph's complexity obscures some of the important features.  At higher qubit numbers, the full graphs quickly increase in complexity; we thus relegate their complete graphs to Appendix \ref{Two-Qubit Graphs}.  In order to further explore the structure of the 2-qubit graph, and to extend our understanding to higher qubits, we will now explore restricted graphs which only depict the action of various subsets of the Clifford gates.  

\subsection{Restricted Graphs}

Beginning at three qubits, we will further restrict our focus to the restricted graphs composed of the Hadamard and CNOT gates on only the first two qubits. To help motivate why we concentrate on this gate subset, we begin by constructing several different restricted graphs for two qubits.

We construct a restricted graph by considering only select operations of the full Clifford group. This restriction corresponds to removing edges, representing eliminated gate operations, from the complete reachability graph. Each restricted graph reveals different details about the 
connectivity and physics of the stabilizer states.

\subsubsection{Two-Qubit Hadamard}

We begin by considering only the two Hadamard gates on two qubits, $H_1$ and $H_2$, as in  Figure \ref{TwoQubitH1H2}. The Hadamard gate $H_i$ enacts a basis change on the $i$th qubit.  Consequently, Hadamards on different qubits commute.  Since each Hadamard gate also satisfies $H_i^2=\mathbbm{1}$, each subgraph can have at most four different states.  Indeed, the majority of the 2-qubit states organize themselves into squares, consisting of a starting state $\ket{\psi}$ and the states $H_1\ket{\psi}$, $H_2\ket{\psi}$, and $H_1H_2\ket{\psi}=H_2H_1\ket{\psi}$. Additionally, since Hadamard gates cannot change entanglement, each square has the same entropy vector throughout.%
\begin{figure}[h]
\begin{center}
\includegraphics[scale=0.8]{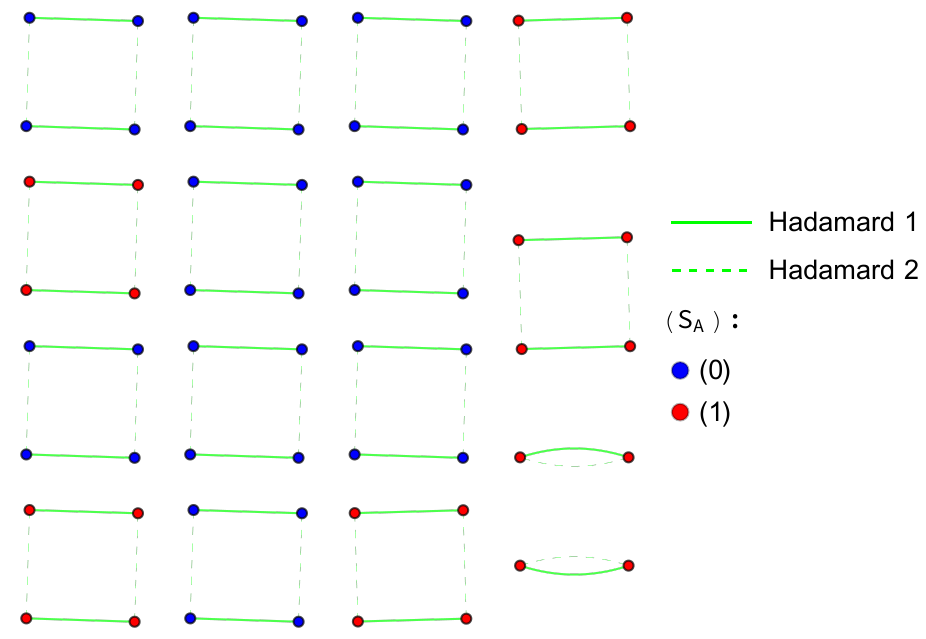}
\caption{The $14$ squares and $2$ connected pair subgraphs in the graph restricted to $H_1$ and $H_2$ at two qubits. Since $[H_1,H_2]=0$ and $H_1^2=H_2^2=\mathbbm{1}$, these subgraphs are the only allowed shapes. The four states which connect in pairs rather than squares are given in Equation \eqref{2qubitHadamardPairs}.}
\label{TwoQubitH1H2}
\end{center}
\end{figure}%

The four states arranged in pairs in the lower right of Figure \ref{TwoQubitH1H2} are worth further note.  They are
\begin{align}\nonumber
    \ket{00}+\ket{11}, \qquad \ket{00}+\ket{01}+\ket{10}-\ket{11}&=H_1\left(\ket{00}+\ket{11}\right)=H_2\left(\ket{00}+\ket{11}\right);
    \\\label{2qubitHadamardPairs}
    \ket{01}-\ket{10}, \qquad \ket{00}-\ket{01}-\ket{10}-\ket{11}&=H_1\left(\ket{01}-\ket{10}\right)=H_2\left(\ket{01}-\ket{10}\right).
\end{align}
Since $H_1$ and $H_2$ produce the same action on each of these states, their subgraphs thus form degenerate pairs instead of full squares. That is, the four states given in \eqref{2qubitHadamardPairs} are eigenstates of $H_1\otimes H_2$; the two states given in the first line have eigenvalue $+1$, while the two states on the second line have eigenvalue $-1$.

\subsubsection{Two-Qubit Phase and Hadamard}\label{TwoQubitPandHSection}

Considering both the Hadamard and phase gates yields the restricted graph depicted in Figure \ref{TwoQubitH1H2P1P2}. The right subgraph of Figure \ref{TwoQubitH1H2P1P2} contains all entangled states. The left subgraph of this figure consists entirely of unentangled states. This bisection of the restricted graph is required by the removal of CNOT edges since neither Hadamard nor phase can alter the entanglement entropies.
\begin{figure}[h]
\begin{center}
\includegraphics[scale=0.42]{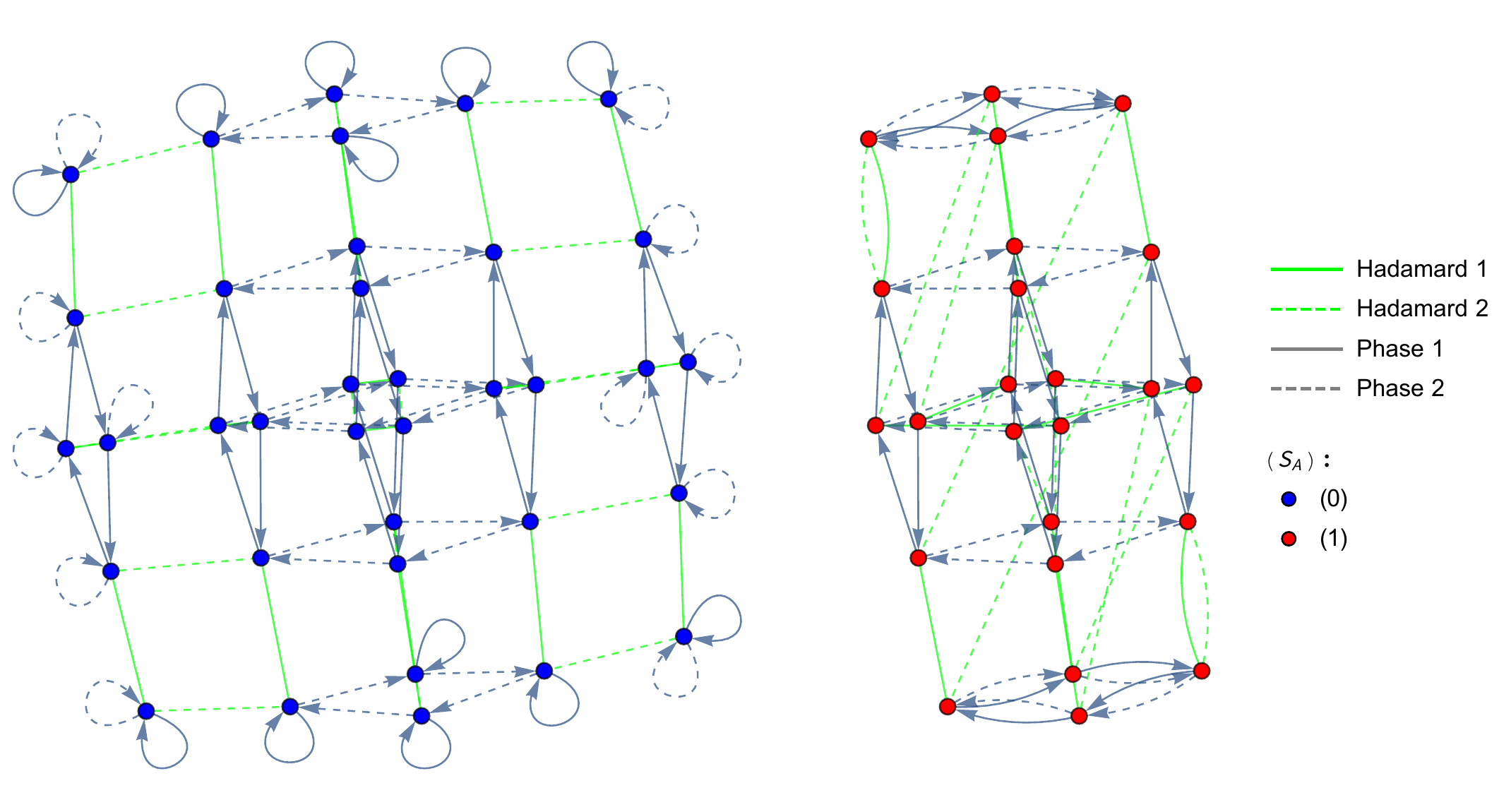}
\caption{Removing the set of CNOT operations from the full reachability graph Figure \ref{TwoQubitCompleteGraph} reveals two disconnected subgraphs.  Since only the CNOT gates can change the entropy, each subgraph has the same entropy vector for all of its states.}
\label{TwoQubitH1H2P1P2}
\end{center}
\end{figure}%

The Hadamard boxes from Figure \ref{TwoQubitH1H2} are clearly visible in the unentangled subgraph.  For the entangled state subgraph, the two degenerate pairs are present at the upper left and lower right, while the Hadamard boxes are still present but slightly harder to visualize.  Removing the phase gates of course reproduces the Hadamard-only Figure \ref{TwoQubitH1H2}, while removing both Hadamard operations yields the phase-only restricted graph (Figure \ref{TwoQubitP1P2} in Appendix \ref{Two-Qubit Graphs}).

The four basis states  $(\ket{00},\ket{01},\ket{10},\ket{11})$, are located at the corners of the unentangled subgraph.%
\footnote{Qubits added to a system are appended to the right of the quantum register, as described in section \ref{sec:StabilizerReview}. Thus, an $n$-qubit product state is represented by $\ket{a_1\ldots a_n}\equiv \ket{a_1}_1\otimes\ldots\ket{a_n}_n$.}
Since both phase gates act trivially on basis states, each corner has two attached directed loops.  For the other sixteen states on the outside of the unentangled subgraph (four on each side), one of the phase gates is trivial while the other makes a square (since $P_i^4=\mathbbm{1}$).  As an example, the state $\ket{+0}=H_1\ket{00}$ satisfies
\begin{equation}
    P_2\ket{+0}=\ket{+0}, \quad P_1\ket{+0}=\ket{i0}, \quad P_1^4\ket{+0}=\ket{+0}.
\end{equation}
The remaining 16 states in the center of the unentangled graph are connected by 4 $P_1$ squares and 4 $P_2$ squares, arising because $P_1^4=P_2^4=\mathbbm{1}$.

In the entangled subgraph, again the 16 states in the center are connected by 4 $P_1$ squares and 4 $P_2$ squares.  The remaining 8 entangled states, 4 at the top and 4 at the bottom of the entangled subgraph in Figure \ref{TwoQubitH1H2P1P2}, are in degenerate pairs.  For these states, either $P_1\ket{\psi}=P_2^3\ket{\psi}$ or $P_1\ket{\psi}=P_2\ket{\psi}$. More precisely, $\ket{01}+\ket{10}$ is one of the states at the top of the figure, so it satisfies
\begin{equation}
    P_1(\ket{01}+\ket{10})=P_2^3(\ket{01}+\ket{10})
\end{equation}
while $\ket{00}+\ket{11}$ is at the bottom of the figure and satisfies
\begin{equation}
    P_1(\ket{00}+\ket{11})=P_2(\ket{00}+\ket{11}.
\end{equation}
Consequently, the phase squares degenerate to connected pairs on all eight of these states.

We can also see the single-qubit reachability diagram reflected here.  In the unentangled graph, removing $H_2$ results in 6 copies of the one-qubit diagram in Figure \ref{OneQubitReachabilityDiagram}, arranged vertically in Figure \ref{TwoQubitH1H2P1P2}. The leftmost copy results from tensoring the six stabilizer states on the first qubit $\ket{0},\, \ket{1}, \, \ket{\pm},\,\ket{\pm i}$ with the state $\ket{0}$ on the second qubit.  Similarly the rightmost copy includes the states
$\ket{01},\, \ket{11},\, \ket{\pm,1},\,\ket{\pm i,1}$.  The four copies in the middle, still connected by the phase gate $P_2$, are constructed similarly except with $\ket{\pm},\, \ket{\pm i}$ for the second qubit.  This tensoring is our first example of a lift, in this case from a one-qubit structure to a two-qubit structure.

In general, a lift of a $k$-qubit state $\ket{\psi}$ to $n$ qubits is a quantum channel which maps $\ket{\psi}$ into $\mathbb{C}^{2n}$ by first tensoring on a $(n-k)$-qubit state $\ket{\phi}$ and then applying an operator $\mathcal{O}$:
\begin{equation}
    \ket{\psi} \rightarrow \mathcal{O}\left(\ket{\Psi}\otimes\ket{\phi} \right): \mathcal{O}\in L(\mathbb{C}^{2n}),\:\ket{\phi}\in\mathbb{C}^{2(n-k)}.
\end{equation}
 We will always have in mind lifts which take stabilizer states to stabilizer states, so we will take $\ket{\phi}$ to be an $(n-k)$-qubit stabilizer state and $\mathcal{O}$ to be a product of $n$-qubit Clifford gates. Given these restrictions, all lifts from $\mathbb{C}^{2k}$ to $\mathbb{C}^{2n}$ are on the same footing, and indeed lifts of $\ket{\psi}$ also successfully lift any other $k$-qubit stabilizer state. What makes a particular lift useful is that it preserves some of the structure seen at $k$ qubits when we go to $n$ qubits. In the example above, the lift given by $\ket{\phi}=\ket{0}$, $\mathcal{O}=I$ mapped all six one-qubit stabilizer states to six two-qubit stabilizer states, preserving their arrangement in the one-qubit reachability graph.

The entangled subgraph is not composed of tensor products of one-qubit stabilizer states, so it is unsurprising that the one-qubit reachability diagram has become more complicated.  We still see four almost-copies of the one-qubit diagram, except the Hadamard gate which connects $\ket{\pm i}$ in Figure \ref{OneQubitReachabilityDiagram} instead connects the four almost-copies to each other. Last, removing $H_1$ instead of $H_2$ results in exactly the same structures; we have just chosen to arrange the entangled subgraph to prioritize the $H_1$ structure.

\subsubsection{Two-Qubit Phase and CNOT}

Considering only operations of the subgroup generated by phase and CNOT, we observe a reduced graph composed of five disconnected substructures (Figure \ref{TwoQubitP1P2CNOT12CNOT21}). Each substructure is inherited from a combination of phase-only and CNOT-only restricted graphs (Figures \ref{TwoQubitP1P2} and \ref{TwoQubitCNOT12CNOT21} in Appendix \ref{Two-Qubit Graphs}). 

\begin{figure}[h]
\begin{center}
\includegraphics[scale=0.5]{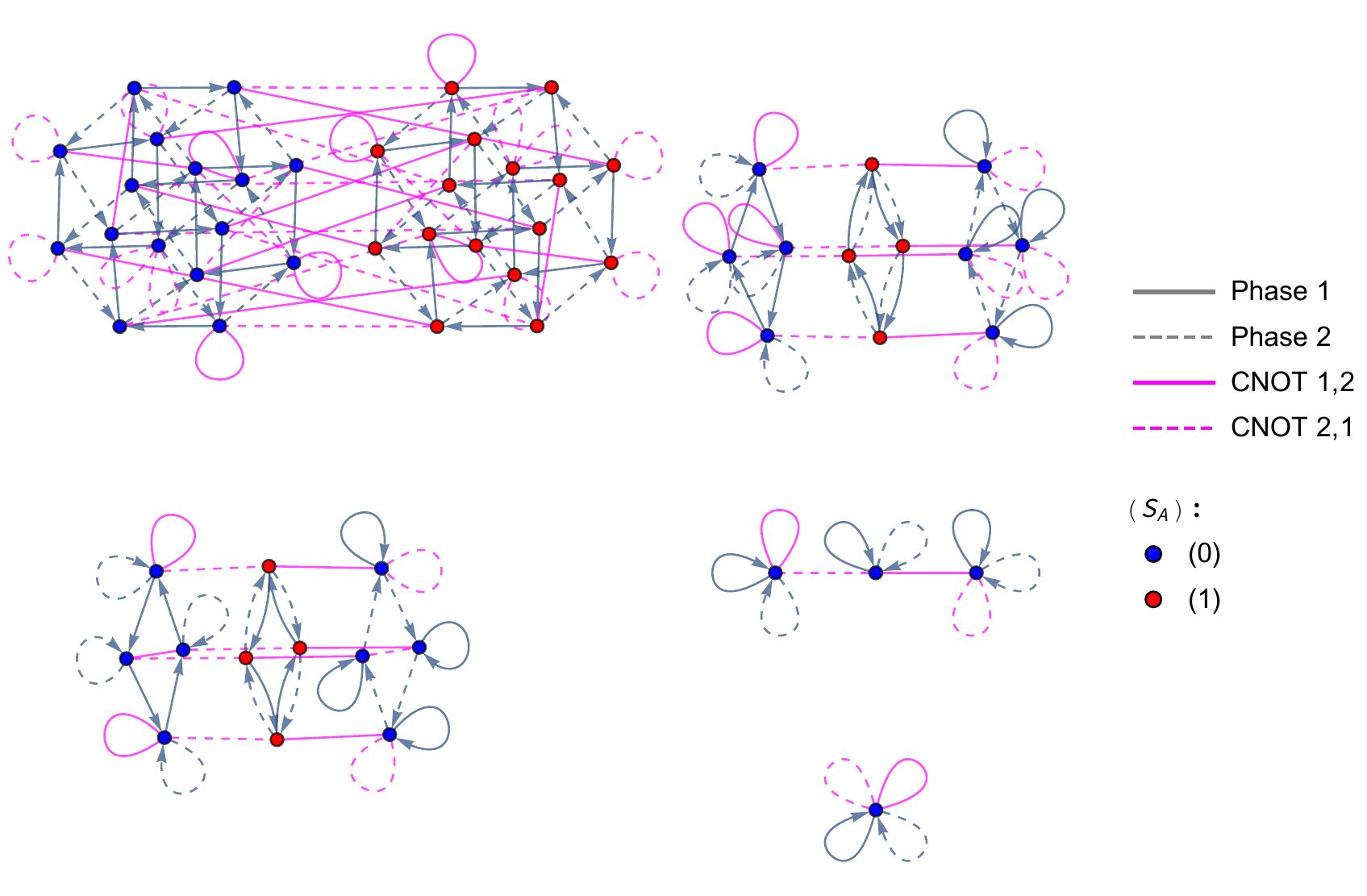}
\caption{Subgraph of 2-qubit complete reachability graph restricted the subgroup generated by CNOT and phase operations. Included structures are various attachments of the simpler structures found in the only-phase and only-CNOT restricted graphs in Figures \ref{TwoQubitP1P2} and \ref{TwoQubitCNOT12CNOT21} in Appendix \ref{Two-Qubit Graphs}.}
\label{TwoQubitP1P2CNOT12CNOT21}
\end{center}
\end{figure}

Again, only CNOT gates can alter the entropy, so states with different entanglement are connected only via CNOT gates.  However, only some CNOT gates modify the entropy.  For example,
\begin{equation}
    CNOT_{1,2}\ket{00}=CNOT_{2,1}\ket{00}=\ket{00}.
\end{equation}
Since $P_1$ and $P_2$ also stabilize $\ket{00}$, this state is represented by the isolated vertex in Figure \ref{TwoQubitP1P2CNOT12CNOT21}. The CNOT gates permute the remaining basis states
$\ket{01},\, \ket{10},\, \ket{11}$
among each other since CNOT can only flip a bit, not introduce a superposition. As with $\ket{00}$, phase acts trivially on all the remaining basis states.

The upper right subgraph in Figure \ref{TwoQubitP1P2CNOT12CNOT21} consists of $8$ unentangled states and $4$ entangled states. The unentangled states arrange into a $P_1^4$ cycle to the left, and a $P_2^4$ cycle to the right. For the four central entangled states, $P_1\ket{\psi}=P_2\ket{\psi}$, so they are linked in one phase cycle. The entangled states and unentangled states are necessarily connected by CNOT gates.  In this subgraph, all CNOT gates either act trivially or move between entangled and unentangled states.  The CNOT gates alone connect the states into 4 lines of 3 states each.

The bottom left subgraph is similar, except the central four entangled states satisfy
$P_1\ket{\psi}=P_2^{-1}\ket{\psi}$ instead. The three top states, and the three bottom states, both have either trivial CNOT action or CNOT moves between an entangled and unentangled states. For the middle states, however, both CNOT gates act nontrivially on each state, producing a hexagon.  Explicitly, we have the cycle
\begin{equation}
    \ket{1i}=CNOT_{2,1}CNOT_{1,2}CNOT_{2,1}CNOT_{1,2}CNOT_{2,1}CNOT_{1,2}\ket{1i},
\end{equation}
where only two of the states along the way are entangled.

Finally the largest structure, located in the upper left of Figure \ref{TwoQubitP1P2CNOT12CNOT21}, contains only states stabilized by neither $P_1$ nor $P_2$. Again the CNOT gates are the only connections between the entangled and unentangled states. Considering only these CNOT gates, this subgraph contains the remaining $2$ hexagonal cycles from the CNOT-only graph (Appendix \ref{Two-Qubit Graphs}, Figure \ref{TwoQubitCNOT12CNOT21}), as well as $6$ 3-state lines of paired CNOT edges, and $2$ additional states invariant under both CNOT gates.

\subsection{Two-Qubit Hadamard and CNOT}\label{2qubitHCNOT}

For the remainder of this work, we will focus on restricted graphs generated by $H_1,\, H_2,$  $CNOT_{1,2},$ and $CNOT_{2,1}$ (see Figure \ref{TwoQubitH1H2CNOT12CNOT21} for the two-qubit version). In order to understand changes in entanglement, we need to consider a restricted graph which includes CNOT gates since they are the only gates which alter entropy.  We specifically choose to include the Hadamard gates (and exclude the phase gates) because in the Hadamard-CNOT restricted graphs, each subgraph contains states with different entropy vectors.  Additionally, at two qubits, the Hadamard-CNOT graph will have only two subgraphs; we will see echoes of these structures repeated at higher qubit number.  When we go to higher qubit number, we will continue to use only the gate set  $H_1,\,H_2,\,CNOT_{1,2},CNOT_{2,1}$ because all entropic arrangements found in stabilizer states can be built from successive bipartite entanglements.%
\begin{figure}[h]
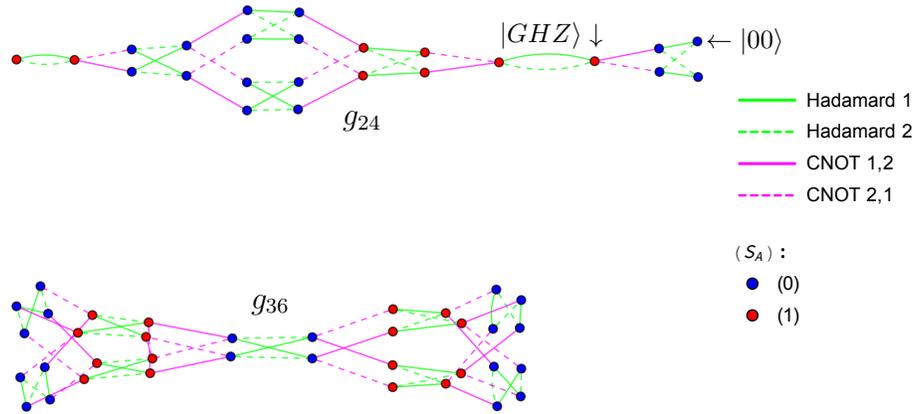

\begin{center}
    \begin{overpic}[width=0.8\textwidth]{TwoQubitH1H2CNOT12CNOT21.pdf}
		\put (37,33) {$g_{24}$}
		\put (27,13) {$g_{36}$}
		\put (76.6,41.4) {\footnotesize{$\leftarrow \ket{00}$}}
		\put (54.1,42) {\footnotesize{$\ket{GHZ} \downarrow$}}
		\end{overpic}
\caption{The 2-qubit subgraph restricted to $H_1,H_2,CNOT_{1,2},$ and $CNOT_{2,1}$ has two subgraphs which are connected only via phase gates. Trivial loops have been removed in this representation of the 2-qubit $H_1,H_2,CNOT_{1,2},CNOT_{2,1}$ restricted graph.}
\label{TwoQubitH1H2CNOT12CNOT21}
\end{center}
\end{figure}%

Beginning with this graph, and continuing below, we omit any gate whose action is the identity; thus no further trivial loops will appear. Because we have four possible gates that can act on each state, a vertex with valency $4-k$ has $k$ trivial loops. We also label the subgraph structures by the number of vertices they contain, so e.g. we use $g_{24}$ for the $24$-vertex substructure in Figure \ref{TwoQubitH1H2CNOT12CNOT21}. As noted in Footnote \ref{fn:cnot_dependence} above, we have the relation
\begin{equation}
    CNOT_{2,1}=H_1 H_2 CNOT_{1,2} H_2 H_1,
\end{equation}
which can be checked explicitly for $2$-qubit states using the figure; hence the presence of the $CNOT_{2,1}$ edges is completely fixed by the structure of the other three gates.

The subgraph $g_{24}$ contains all 2-qubit stabilizer states connected to the basis states $(\ket{00},\ket{01},\ket{10},\ket{11})$ via only Hadamard and CNOT operations. As we can see from the graph, acting a Hadamard and then a CNOT on $\ket{00}$ produces the GHZ state, which is entangled:
\begin{equation}
    CNOT_{1,2}H_1\ket{00}=\ket{GHZ}.
\end{equation}

Because the phase gate is the only Clifford gate with imaginary matrix elements, all states in $g_{24}$ can be written as superpositions of the basis states with purely real coefficients. States located in $g_{36}$, on the other hand, have relative phases $\pm i$ between different basis components. Accordingly, a phase gate is required to move between the two subgraphs.  In fact, every phase gate either acts trivially, or connects states in $g_{24}$ with states in $g_{36}$, since the CNOT and Hadamard gates are both Hermitian, but phase is not. Thus, the only way a product of CNOTs and Hadamards can have the same action as a non-Hermitian operator $O$ on a state is when the state has support only on the eigenspaces of the operator $O$ with real eigenvalues. For $O=\mathrm{phase}$, the only eigenspace with real eigenvalue has eigenvalue one. Hence the phase gate either acts as the identity or its action is non-Hermitian (and thus moves us between the $g_{24}$ and $g_{36}$ subgraphs).  We will see echoes of this structure when comparing subgraphs in the $H_1,\, H_2,\, CNOT_{1,2},\, CNOT_{2,1}$ restricted graphs at higher qubit number.

In our analysis at higher qubits, we will also rely on Hamiltonian paths and Hamiltonian cycles.  Hamiltonian paths visit every vertex in a graph only once (and thus do not self-intersect). Hamiltonian cycles are closed loops with the same property. The subgraph $g_{24}$ has no Hamiltonian paths, and therefore no Hamiltonian cycles either.
Subgraph $g_{36}$ does have Hamiltonian paths (although again no Hamiltonian cycles). One example is shown in Figure \ref{TwoQubitHamiltonianPath}.  The specific circuit depicted is
\begin{equation}\label{g36HamiltonianCircuit}
\begin{split}
	\mathcal{C} \equiv (H_2,&H_1,H_2,CNOT_{1,2},H_2,H_1,H_2,CNOT_{1,2},H_2,CNOT_{1,2},H_2,CNOT_{1,2},\\
&H_1,H_2,H_1,CNOT_{1,2},H_1,CNOT_{1,2},H_2,CNOT_{1,2},H_2,H_1,H_2,CNOT_{1,2},\\
&H_1,CNOT_{1,2},H_2,CNOT_{1,2},H_2,H_1,H_2,CNOT_{2,1},H_2,H_1,H_2).
\end{split}
\end{equation}
When applying this circuit to the graph, the leftmost gate acts first and the rightmost gate last. 
\begin{figure}[h]
\begin{center}
    \begin{overpic}[width=0.95\textwidth]{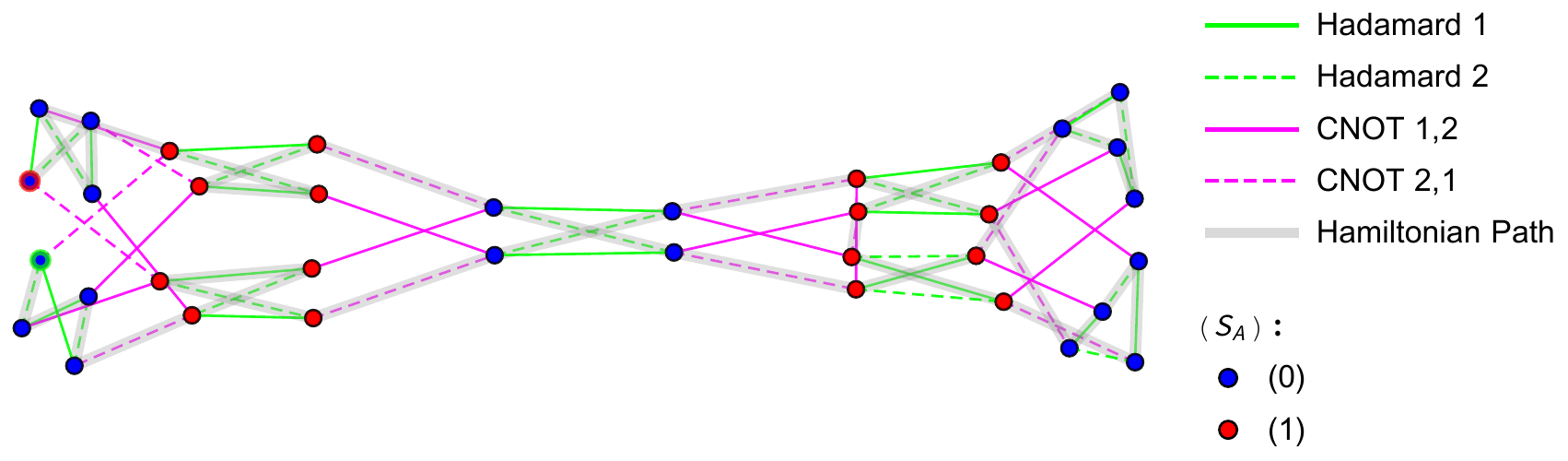}
		\put (-6.5,18) {\footnotesize{$\ket{-,-i}$}}
		\put (-3,12) {\footnotesize{$\ket{i,1}$}}
		\end{overpic}
\caption{Hamiltonian path $\mathcal{C}$ beginning on state $\ket{i,1}$ (encircled in green) and ending on state $\ket{-,-i}$ (encircled in red).}
\label{TwoQubitHamiltonianPath}
\end{center}
\end{figure}

Note that the path $\mathcal{C}$ starting on $\ket{i1}$ is not unique; other Hamiltonian paths exist on $g_{36}$.  First, $\mathcal{C}$ also traverses a Hamiltonian path starting on $\ket{-i,1}$ instead. Another example is $\mathcal{C^\top}$, which swaps qubits 1 and 2 in \eqref{g36HamiltonianCircuit}.  Two further possibilities are the inverse paths $\mathcal{C}^{-1}$ and $(\mathcal{C}^\top)^{-1}$, which simply apply the respective circuits in reverse order. In fact, there exists a Hamiltonian path on $g_{36}$ starting from $30$ (out of $36$) of its vertices.

For definiteness, we will use the Hamiltonian path $\mathcal{C}$, starting on $\ket{i1}$, in the lifting procedure introduced in Section \ref{liftTwoToThree}, where we turn to the three-qubit stabilizer graph and a detailed analysis of its reduced graphs. 

\section{The Three-Qubit Stabilizer Graph}\label{sec:three}

At three qubits there are now $1080$ stabilizer states, and accordingly the full reachability graph, which we defer to Figure \ref{ThreeQubitCompleteGraph} in Appendix \ref{Two-Qubit Graphs}, becomes unwieldy; we thus proceed in this section immediately to the restricted graphs. For three qubits, many of the reduced graphs show features similar to two qubits.  As an example, the Hadamard-only restricted graph at three qubits contains cubes and degenerate squares instead of squares and degenerate pairs.  As before, each subgraph has only one entropy type, since Hadamard gates cannot alter the entropy vector.  Figure \ref{ThreeQubitH1H2H3} shows the structures exhibited in the $H_1,\, H_2,\, H_3$ restricted three-qubit graph.  The phase graph at three qubits similarly extends, as exhibited by the full phase graph $P_1,\, P_2,\, P_3$, shown in Appendix \ref{Two-Qubit Graphs} in Figure \ref{ThreeQubitP1P2P3Subgraphs}.%
\begin{figure}[h]
\begin{center}
\includegraphics[scale=0.8]{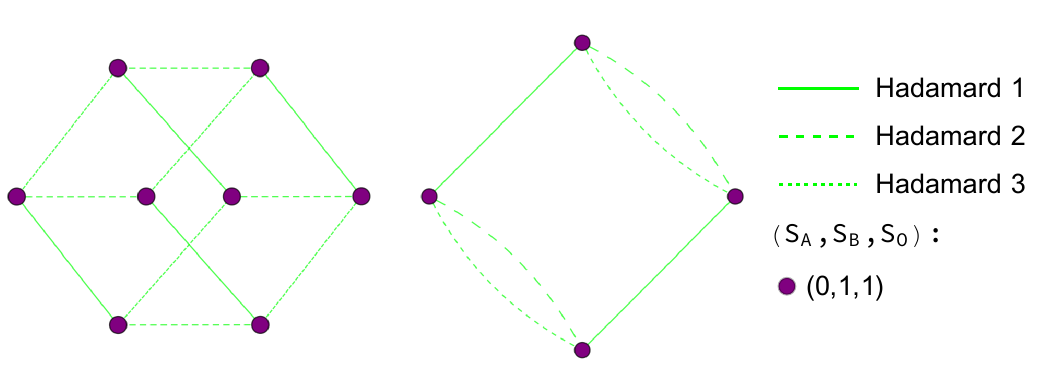}
\caption{Pictured are the two unique subgraph types that occur in the 3-qubit restricted graph of only $H_1,H_2,$ and $H_3$ operations. Degenerate pairs in the 2-qubit $H_1,H_2$ restricted graph (Figure \ref{TwoQubitH1H2}) are promoted to boxes at three qubits. Boxes from the 2-qubit $H_1,H_2$ graph are promoted to cubes by the addition of the $H_3$ gate.}
\label{ThreeQubitH1H2H3}
\end{center}
\end{figure}

In all of these graphs, the color of the vertex indicates the entropy vector of the associated state. As reviewed in section \ref{sec:EntropyCone} the entropy vector at three qubits is $\vec{S}=(S_A,S_B,S_{AB})$.  There are five different entropy vectors among the 3-qubit stabilizer states, as shown in Table \ref{tab:ThreeQubitEntropyVectors}.  All five entropy vectors lie within the holographic cone (that is, they satisfy the inequalities required for a holographic state, as discussed in Section \ref{sec:EntropyCone}).

Just as in the 2-qubit case, the entropy vector can only be changed by the action of a CNOT gate.  Accordingly the phase-- and Hadamard--restricted graphs do not allow us to study changes in entropy. Instead, as discussed in Section \ref{2qubitHCNOT}, we are most interested in the restricted graph which considers only $H_1,\, H_2,\, CNOT_{1,2},$ and $CNOT_{2,1}$, since it will allow us to understand the changes in entropy induced by the CNOT gates on a pair of qubits.

\subsection{Three-Qubit CNOT+Hadamard on 1 and 2 only}

We extend our analysis to three qubits, restricting the full stabilizer group to the subgroup generated by $H_1,\,H_2,\,CNOT_{1,2}$, and $CNOT_{2,1}$.  We concentrate on this gate set in order to focus on the entropic structure of qubits $1$ and $2$. The choice of qubits $1$ and $2$ is arbitrary; any pair would do. Because all stabilizer gates act on at most two qubits, and all entropic structure can be built from these bipartite interactions, our analysis of qubits $1$ and $2$ is sufficient to understand reachability for all $n$ qubits. 

The full restricted graph for the gate set $H_1,\,H_2,\,CNOT_{1,2}, \, CNOT_{2,1}$ is shown in Figure \ref{ThreeQubitH1H2CNOT12CNOT21} of Appendix \ref{Two-Qubit Graphs}. There are only four types of subgraph, shown in Figure \ref{ThreeQubitH1H2CNOT12CNOT21Subgraphs}, which arise in the full restricted graph. The full graph consists of $6$ copies of $g_{24}$ and $g_{36}$, $3$ copies of $g_{144}$, and a single copy of $g_{288}$, where as in the previous subsection the subscript denotes the number of vertices in the subgraph. 
\begin{figure}[h]
\begin{center}
        \begin{overpic}[width=0.94\textwidth]{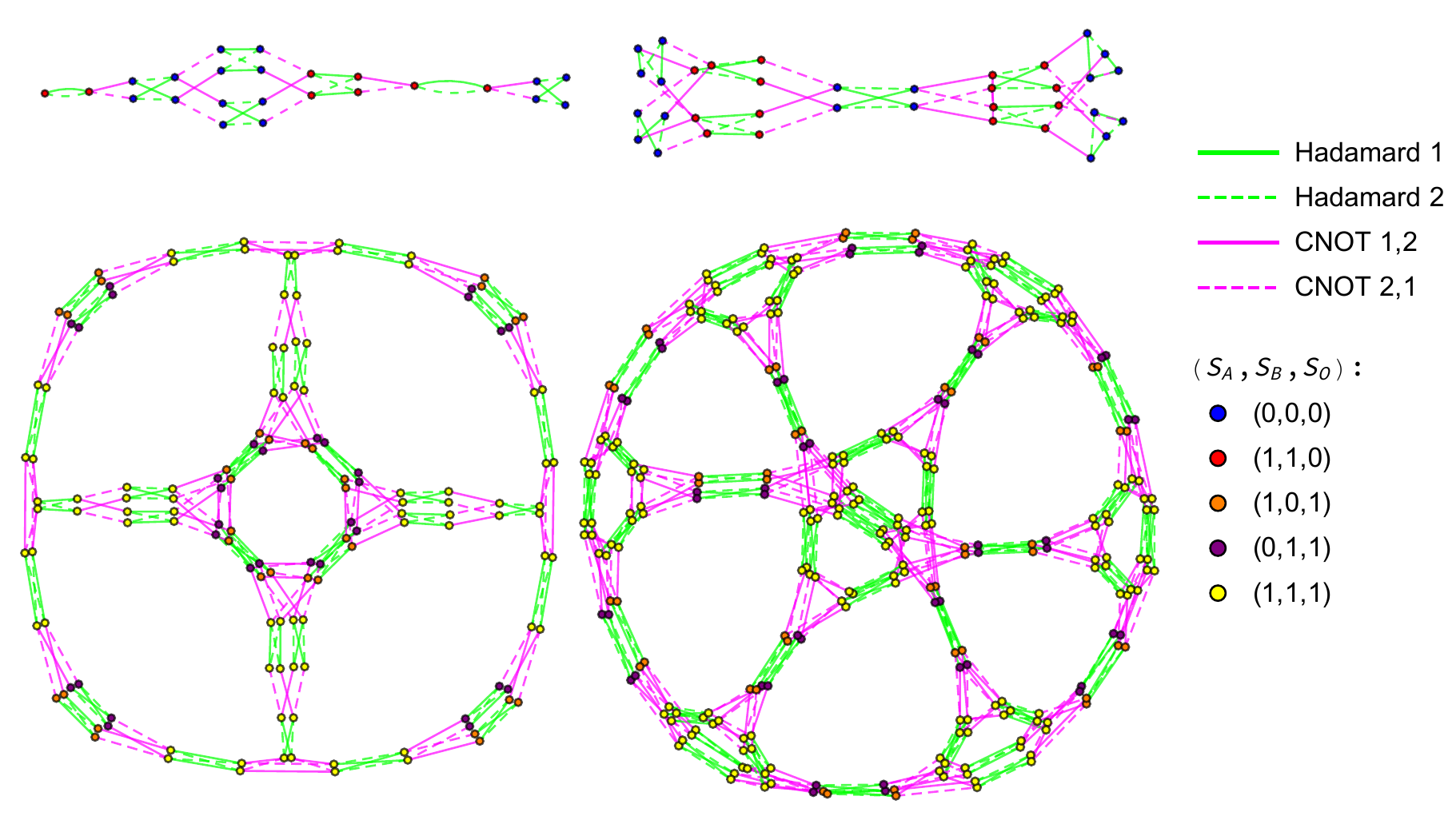}
		\put (20,54) {$g_{24}$}
		\put (58,53) {$g_{36}$}
		\put (17.1,42) {$g_{144}$}
		\put (56.6,43) {$g_{288}$}
		\end{overpic}
\caption{This figure depicts the four unique subgraphs which arise in the full 3-qubit $H_1,H_2,CNOT_{1,2},CNOT_{2,1}$ restricted graph, pictured in Figure \ref{ThreeQubitH1H2CNOT12CNOT21} of Appendix \ref{Two-Qubit Graphs}. Each subgraph is labeled by its corresponding vertex count. The full graph consists of $6$ copies of $g_{24}$ and $g_{36}$, $3$ copies of $g_{144}$, and a single copy of $g_{288}$.}
\label{ThreeQubitH1H2CNOT12CNOT21Subgraphs}
\end{center}
\end{figure}

As mentioned above, there are only five different entropy vectors at three qubits. In Table \ref{tab:ThreeQubitEntropyVectors}, we record the number of stabilizer states with each of these entropy vectors, and also record the subgraph types those states appear in.
\begin{table}[h]
    \begin{center}
    \begin{tabular}{|c||c|c|c|} 
\hline
Holographic & $\left(S_A, S_B, S_{AB}\right)$ & Number of States & Subgraph\\
\hline
\hline
 Yes & $\textcolor{blue}{\bullet} (0,0,0)$ & $216$ & $g_{24}, g_{36}$\\
 \hline
 Yes & $\textcolor{red}{\bullet}(1,1,0)$ & $144$ & $g_{24}, g_{36}$\\
 \hline
 \hline
 Yes & $\textcolor{violet}{\bullet}(0,1,1)$ & $144$ & $g_{144}, g_{288}$\\
 \hline
 Yes & $\textcolor{orange}{\bullet}(1,0,1)$ & $144$ & $g_{144}, g_{288}$\\
 \hline
 Yes & $\textcolor{yellow}{\bullet}(1,1,1)$ & $432$ & $g_{144}, g_{288}$\\
 \hline
\end{tabular}
\end{center}
\caption{Table of 3-qubit entropy vectors. The last column lists which subgraph types exhibit each entropy vector.}
\label{tab:ThreeQubitEntropyVectors}
\end{table}

As discussed in section \ref{2qubitHCNOT}, phase actions on states in the $H_1,\, H_2,\, CNOT_{1,2},\, CNOT_{2,1}$ restricted graph are always either trivial, or move you to a different subgraph.  However, unlike in the 2-qubit case, in addition to the phase gates we are also not representing $H_3$, nor any CNOT gate involving the third qubit.  Thus,
not every subgraph can be reached by phase actions alone: applying a phase gate does not change the entropy vector, and two of the five entropy vectors appear only in $g_{24}, g_{36}$ while the other three appear in only $g_{144}, g_{288}$. We are also not representing $H_3$, nor any CNOT gate involving the third qubit.

In particular, the subgraphs of type $g_{144}$ and $g_{288}$ contain the entropy vectors $(0,1,1)$, $(1,0,1)$, and $(1,1,1)$, but the subgraphs of type $g_{24}$ and $g_{36}$ only contain the entropy vectors $(0,0,0)$ and $(1,1,0)$.  This separation occurs because CNOT gates involving the third qubit are required to change between the two sets of entropy vectors.
As reviewed in section \ref{sec:EntropyCone}, the $CNOT_{i,j}$ gate can only alter entropies $S_I$ where qubit $i\in I$ but qubit $j\in \bar{I}$.  Our entropy vectors are listed as $(S_A,S_B,S_{AB})$, where $A$ refers to qubit 1 and $B$ refers to qubit 2.  So the $CNOT_{1,2}$ or $CNOT_{2,1}$ gates can only affect $S_A$ and $S_B$, not $S_{AB}$.  Thus, entropy vectors with different $S_{AB}$ can only show up on separate subgraphs in our restricted graph.

As we will see in the next section, we can understand both the number of copies of each subgraph, as well as the shape of the subgraphs themselves, by seeing how the 2-qubit states and subgraphs can be embedded into the 3-qubit structures.

\subsection{Lifting States from Two Qubits to Three Qubits}\label{liftTwoToThree}

We apply the lifting procedure introduced in Section \ref{TwoQubitPandHSection} to lift 2-qubit states to three qubits. States in the 2-qubit $g_{24}$ subgraph all lift to states in the 3-qubit $g_{24}$ subgraphs by tensoring on a third qubit. For example, starting with the state $\ket{00}$ on qubits 1 and 2,  we can tensor on $\ket{0}$ on qubit 3 to find
    \begin{equation}\label{g24Lift}
        \ket{000} = \ket{00} \otimes \ket{0}.
    \end{equation}
There are $6$ possible states of the third qubit:\ $\{\ket{0},\, \ket{1}, \, \ket{\pm},\,\ket{\pm i}\}$. Tensoring on all $6$ of these states, as in \eqref{g24Lift}, to the $24$ states in the 2-qubit $g_{24}$ subgraph generates $144$ of the 3-qubit stabilizer states. These $144$ states make up the $6$ copies of $g_{24}$ at three qubits (shown in Figure \ref{ThreeQubitH1H2CNOT12CNOT21} of Appendix \ref{Two-Qubit Graphs}).

The next simplest set of lifts begins with states in the 2-qubit $g_{36}$ subgraph, and then tensors on a third qubit. We start with the 2-qubit state $\ket{i1}$, since the circuit  $\mathcal{C}$ \eqref{g36HamiltonianCircuit} defines a Hamiltonian path on $g_{36}$ starting at this state (Figure \ref{TwoQubitHamiltonianPath}). The process
    \begin{equation}\label{g36Lift}
        \ket{i1+} =  \ket{i1} \otimes \ket{+},
    \end{equation}
lifts $\ket{i1}$ to one of the six 3-qubit $g_{36}$ subgraphs. Since there are again 6 1-qubit stabilizer states available for the third qubit, we generate a further 216 states via this lift.  These 216 states are all located on one of the $6$ copies of $g_{36}$ at three qubits, completely covering those subgraphs. 

When we tensor on a third qubit, we extend the entropy vector of the original 2-qubit state.  For a 2-qubit state with entropy vector $(S_A)$, we have $S_A=S_B$ since it is a pure state. When we tensor on the third qubit, these two values stay the same. Additionally, the third qubit is just tensored on, so it is not entangled; thus $S_O=S_{AB}=0$.  Accordingly, states with the entropy vector $(1)$ lift to states with $(1,1,0)$ and those with entropy vector $(0)$ lift to $(0,0,0)$ under the tensoring lift of Equations \eqref{g24Lift} and \eqref{g36Lift}.

We have accounted so far for all of the states in the $g_{24}$ and $g_{36}$ subgraphs at three qubits.  As shown in Table \ref{tab:ThreeQubitEntropyVectors}, these are the only states with entropy vector $(0,0,0)$ or $(1,1,0)$.   To modify entanglement and reach new 3-qubit entropy arrangements, we must act with CNOT gates involving the third qubit, i.e.\ $CNOT_{1,3},\,CNOT_{3,1},\,CNOT_{2,3},$ or $CNOT_{3,2}$. For example, acting with $CNOT_{3,1}$ on the lifted state $\ket{i1+}$ in Equation \eqref{g36Lift} gives
    \begin{equation}\label{CNOT31OnLiftedState}
        CNOT_{3,1}\ket{i1+} = \ket{010} + i\ket{011}+ i\ket{110}  + \ket{111}.
    \end{equation}
This procedure lifts $\ket{i1}$ from the 2-qubit $g_{36}$ subgraph to a state on one $g_{144}$ subgraph at three qubits. Acting with $CNOT_{3,1}$ on $\ket{i1+}$ entangles qubits $1$ and $3$, resulting in a state with entropy vector $\vec{S} = (1,0,1)$, indicated with an orange vertex in Figure \ref{ThreeQubitH1H2CNOT12CNOT21Subgraphs}. Similarly replacing the third qubit in $\ket{i1+}$ with $\ket{-}, \, \ket{i},$ or $\ket{-i}$ instead also results in states on the same copy of $g_{144}$.  $\ket{i10}$ and $\ket{i11}$, however, return to the same copies of $g_{36}$ they started on under the action of $CNOT_{3,1}$.

Acting a different gate on a lifted version of $\ket{i1}$ can result in a lifted state on a different copy of $g_{144}$. For example, the gate $CNOT_{1,3}$ on state $\ket{i10} \equiv \ket{i1} \otimes \ket{0}$ gives
    \begin{equation}\label{CNOT13LiftedState}
        CNOT_{1,3}\ket{i10} = \ket{010} + i\ket{111},
    \end{equation}
which resides on a different copy of $g_{144}$ than the state lifted in Equation \eqref{CNOT31OnLiftedState}. As before, the third ket can be replaced with $\ket{1},\, \ket{i}$ or $\ket{-i}$ resulting in three other states on the same copy of $g_{144}$, but using $\ket{\pm}$ results in $CNOT_{1,3}$ mapping back to the same copies of $g_{36}$.

The third copy of $g_{144}$ is phase-separated from the previous two. Lifting 2-qubit starting states to this subgraph requires a phase gate. One possible such lift to this subgraph is
    \begin{equation}\label{CNOT13LiftedState}
        P_3(CNOT_{1,3}\ket{i10}) = \ket{010} - \ket{111}.
    \end{equation}
Again, the third qubit can be replaced with $\ket{1},\, \ket{i}$ or $\ket{-i}$ giving three further states on the last copy of $g_{144}$.

There are two ways that we can lift the 2-qubit state $\ket{i1}$ to the 3-qubit $g_{288}$ subgraph. The first begins by tensoring on a third qubit, then applying the appropriate CNOT gate to move the lifted state from $g_{36}$ directly to $g_{288}$. For example,
    \begin{equation}\label{CNOT32LiftTo288}
        CNOT_{3,2}\ket{i1+}) = \ket{010} + i\ket{011} + \ket{100} + i\ket{101}
    \end{equation}
resides on $g_{288}$. It has entropy vector $\vec{S} = (0,1,1)$, and is represented by a purple vertex in $g_{288}$ in Figure \ref{ThreeQubitH1H2CNOT12CNOT21Subgraphs}.  The same procedure works when we replace the third qubit by $\ket{-}, \,\ket{i},$ or $\ket{-i}$.

The second method first lifts state $\ket{i1}$ to $g_{144}$ as in Equation \eqref{CNOT31OnLiftedState}, entangling qubits $1$ and $3$. Applying a second CNOT gate then entangles qubit $2$ with the other qubits.  Explicitly, we have e.g.
    \begin{equation}\label{CNOT32CNOT31LiftTo288}
        CNOT_{3,2}(CNOT_{3,1}\ket{i1+}) = \ket{010} + i\ket{011} + i\ket{100} + \ket{101}.
    \end{equation}
This process results in a final state on $g_{288}$. Its entropy vector  is $(1,1,1)$, represented by a yellow vertex in $g_{288}$.  Three further states arise by replacing the third qubit $\ket{+}$ with $\ket{-}, \,\ket{i},$ or $\ket{-i}$.

In the next subsection, we will use the lifted states described here, combined with the 2-qubit Hamiltonian path $\mathcal{C}$, to reach the remaining states as well as to understand the new $g_{144}$ and $g_{288}$ subgraph structures that arise at three qubits.

\subsection{Lifting Paths from Two Qubits to Three Qubits}

At two qubits, the Hamiltonian path $\mathcal{C}$ \eqref{g36HamiltonianCircuit} starting from the state $\ket{i1}$ covered the $g_{36}$ subgraph.  Similarly, $\mathcal{C}$ starting from the lifted states described in the previous section, listed in Table \ref{tab:OneFullCoveringOfLiftedStates}, cover every vertex exactly once on each of the $g_{36}$, $g_{144}$, and $g_{288}$ subgraphs of the $H_1,\, H_2,\, CNOT_{1,2},\, CNOT_{2,1}$ restricted graph at three qubits.
\begin{table}[h]
\centering
\begin{tabular}{|c|c|c|}
 \hline
 Lift & Starting State & Subgraph  \\
 \hline
  $\ket{i1}\otimes\ket{0}$ & $\ket{i10}$   & $g^1_{36}$ \\
  \hline
  $\ket{i1}\otimes\ket{1}$ & $\ket{i11}$   & $g^2_{36}$ \\
  \hline
  $\ket{i1}\otimes\ket{+}$ & $\ket{i1+}$   & $g^3_{36}$ \\
  \hline
  $\ket{i1}\otimes\ket{-}$ & $\ket{i1-}$   & $g^4_{36}$ \\
  \hline
  $\ket{i1}\otimes\ket{i}$ & $\ket{i1i}$   & $g^5_{36}$ \\
  \hline
  $\ket{i1}\otimes\ket{-i}$ & $\ket{i1-i}$   & $g^6_{36}$ \\
  \hline
  \hline
  $CNOT_{3,1}(\ket{i1}\otimes\ket{+})$ & $\ket{i10}+i\ket{-i,11}$   & \multirow{4}{*}{$g_{144}^1$} \\
  \cline{1-2}
  $CNOT_{3,1}(\ket{i1}\otimes\ket{-})$ & $\ket{i10}-i\ket{-i,11}$   & \\
  \cline{1-2}
  $CNOT_{3,1}(\ket{i1}\otimes\ket{i})$ & $\ket{i10}-\ket{-i,11}$   &  \\
  \cline{1-2}
  $CNOT_{3,1}(\ket{i1}\otimes\ket{-i})$ & $\ket{i10}+\ket{-i,11}$   & \\
  \hline
  \hline
    $CNOT_{1,3}(\ket{i1}\otimes\ket{0})$ & $\ket{010}+i\ket{111}$   & \multirow{4}{*}{$g_{144}^2$} \\
  \cline{1-2}
  $CNOT_{1,3}(\ket{i1}\otimes\ket{1})$ & $\ket{011}+i\ket{110}$   & \\
  \cline{1-2}
  $CNOT_{1,3}(\ket{i1}\otimes\ket{i})$ & $\ket{01i}-\ket{11,-i}$   &  \\
  \cline{1-2}
  $CNOT_{1,3}(\ket{i1}\otimes\ket{-i})$ & $\ket{01,-i}+\ket{11i}$   & \\
  \hline
  \hline
    $P_3CNOT_{1,3}(\ket{i1}\otimes\ket{0})$ & $\ket{010}-\ket{111}$   & \multirow{4}{*}{$g_{144}^3$} \\
  \cline{1-2}
  $P_3CNOT_{1,3}(\ket{i1}\otimes\ket{1})$ & $\ket{011}+\ket{110}$   & \\
  \cline{1-2}
  $P_3CNOT_{1,3}(\ket{i1}\otimes\ket{i})$ & $\ket{01-}-\ket{11+}$   &  \\
  \cline{1-2}
  $P_3CNOT_{1,3}(\ket{i1}\otimes\ket{-i})$ & $\ket{01+}+\ket{11-}$   & \\
  \hline
  \hline
    $CNOT_{3,2}(\ket{i1}\otimes\ket{+})$ & $\ket{i10}+\ket{i01}$   & \multirow{8}{*}{$g_{288}$} \\
  \cline{1-2}
  $CNOT_{3,2}(\ket{i1}\otimes\ket{-})$ & $\ket{i10}-\ket{i01}$   & \\
  \cline{1-2}
  $CNOT_{3,2}(\ket{i1}\otimes\ket{i})$ & $\ket{i10}+i\ket{i01}$   &  \\
  \cline{1-2}
  $CNOT_{3,2}(\ket{i1}\otimes\ket{-i})$ & $\ket{i10}-\ket{i01}$   & \\
  \cline{1-2}
  $CNOT_{3,2}CNOT_{3,1}(\ket{i1}\otimes\ket{+})$ & $\ket{i10}+i\ket{-i,01}$   & \\
  \cline{1-2}
  $CNOT_{3,2}CNOT_{3,1}(\ket{i1}\otimes\ket{-})$ & $\ket{i10}-i\ket{-i,01}$   &  \\
  \cline{1-2}
  $CNOT_{3,2}CNOT_{3,1}(\ket{i1}\otimes\ket{i})$ & $\ket{i10}-\ket{-i,01}$   & \\
  \cline{1-2}
  $CNOT_{3,2}CNOT_{3,1}(\ket{i1}\otimes\ket{-i})$ & $\ket{i10}+\ket{-i,01}$   & \\
  \hline
\end{tabular}
\caption{Using the 26 starting states in this table, and the path $\mathcal{C}$ \eqref{g36HamiltonianCircuit}, we cover every vertex on the $g_{36}$, $g_{144}$, and $g_{288}$ subgraphs of the $H_1,\, H_2,\, CNOT_{1,2},\, CNOT_{2,1}$ restricted graph at 3 qubits.  The $g_{24}$ subgraphs cannot be covered by a single path, but all states in them can be generated from states in the 2-qubit $g_{24}$ subgraph via the lift described in \eqref{g36Lift}.}
\label{tab:OneFullCoveringOfLiftedStates}
\end{table}

For the $g_{36}$ subgraphs, the lift only involved a simple tensor product with the third qubit. Additionally, the lift actually worked on every state in the $g_{36}$ subgraph, so the structure of $g_{36}$ in each of the $6$ copies is completely preserved. All lifted states from the 2-qubit $g_{36}$ subgraph lift to the same relative position in the 3-qubit $g_{36}$ subgraphs. Therefore, applying $\mathcal{C}$ on each lift of $\ket{i1}$ on a 3-qubit copy $g_{36}$ still builds a Hamiltonian path on each subgraph. Accordingly, lifting $\mathcal{C}$ from two qubits to three qubits by lifting the starting state gives a complete vertex covering of each 3-qubit copy of $g_{36}$, just as in Figure \ref{TwoQubitHamiltonianPath}. This covering of higher qubit $g_{36}$ subgraphs by the 2-qubit $g_{36}$ structure persists to arbitrary qubit number.

Lifting $\mathcal{C}$ to the three copies of $g_{144}$ and the single $g_{288}$ subgraph illustrates how the two-qubit subgraph $g_{36}$ is embedded into larger subgraphs at three qubits. Let us consider a specific example, beginning with the second $g_{144}$ subgraph, which we have termed $g_{144}^2$ in Table \ref{tab:OneFullCoveringOfLiftedStates}. As described in Equation \eqref{CNOT13LiftedState}, the lifted state $CNOT_{1,3}\ket{i10}$ is on this subgraph.  Applying the circuit $\mathcal{C}$ to this state produces a non-intersecting path on $g_{144}$, covering $1/4$ of its states as in Figure \ref{ThreeQubitLiftedHamiltonianPath}.
\begin{figure}[h]
\begin{center}
\begin{overpic}[width=0.8\textwidth]{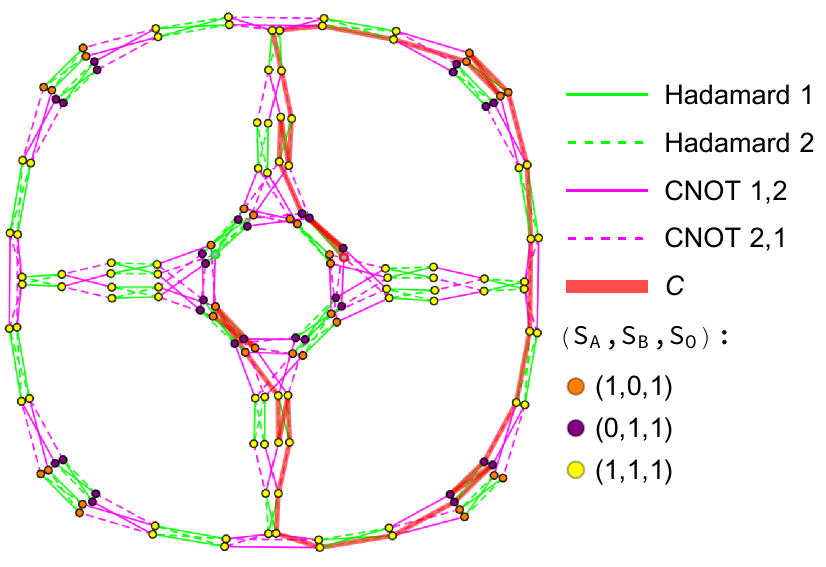}
		\put (27,37.3) {\footnotesize{$\leftarrow \ket{\psi}$}}
		\end{overpic}
\caption{Lift of Hamiltonian path $\mathcal{C}$ defined in Equation \eqref{g36HamiltonianCircuit}, from the 2-qubit $g_{36}$ subgraph to a copy of $g_{144}$ at three qubits. The path begins on $\ket{\psi} \equiv CNOT_{1,3}\ket{i10}$, the state lifted in Equation \eqref{CNOT13LiftedState}.}
\label{ThreeQubitLiftedHamiltonianPath}
\end{center}
\end{figure}

Each of the lifted starting states listed under $g_{144}^2$ in Table \ref{tab:OneFullCoveringOfLiftedStates} have entropy vector $\vec{S} = (1,0,1)$, and are located around the central octagonal structure of $g_{144}$. Applying $\mathcal{C}$ to each of the lifted states defines a separate path, covering $36$ vertices each, on $g_{144}$. The union of these $4$ lifts of $\mathcal{C}$ defines a vertex covering of $g_{144}$, displayed in Figure \ref{CoveringG144}. The other two $g_{144}$ subgraphs can be covered similarly, by four copies of $\mathcal{C}$ starting on each of the lifted starting states listed in the table.
\begin{figure}[h]
\begin{center}
\begin{overpic}[width=0.8\textwidth]{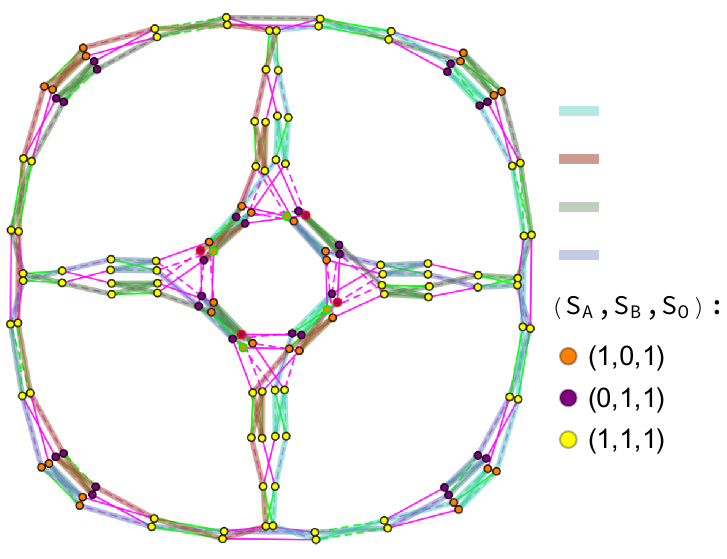}
		\put (84,60.5) {\footnotesize{$CNOT_{1,3}\ket{i,1,0}$}}
		\put (84,54) {\footnotesize{$CNOT_{1,3}\ket{i,1,1}$}}
		\put (84,47.5) {\footnotesize{$CNOT_{1,3}\ket{i,1,i}$}}
		\put (84,40.8) {\footnotesize{$CNOT_{1,3}\ket{i,1,-i}$}}
		\end{overpic}
\caption{Four copies of Hamiltonian path $\mathcal{C}$ that give a vertex cover of $g_{144}^2$. Each path is indicated by the lifted starting state where it begins.}
\label{CoveringG144}
\end{center}
\end{figure}

A covering of the $g_{288}$ subgraph can also be constructed from lifts of $\mathcal{C}$.  First, we lift the 2-qubit $g_{36}$ state $\ket{i1}$ via Equations \eqref{CNOT32LiftTo288} and  \eqref{CNOT32CNOT31LiftTo288}, as listed in Table \ref{tab:OneFullCoveringOfLiftedStates}. We then apply $\mathcal{C}$ on these $8$ lifted states, covering every vertex of $g_{288}$ once.

We have now reached all of the 3-qubit states, and we have covered the six subgraphs $g_{36}$, three subgraphs $g_{144}$, and single subgraph $g_{288}$ with lifted versions of the 2-qubit Hamiltonian path $g_{36}$.

Just as we described at the end of section \ref{2qubitHCNOT} for the 2-qubit $g_{36}$ subgraph, these coverings are not unique.  We could have lifted the 2-qubit state $\ket{-i,1}$ instead, since $\mathcal{C}$ is also a Hamiltonian path from that starting point.  Alternatively, we could have used the path $\mathcal{C}^\top$ or $\mathcal{C}^{-1}$ instead, or any other Hamiltonian path from the 2-qubit $g_{36}$ subgraph.

Sticking to just $\mathcal{C}$, $\mathcal{C}^\top$, $\mathcal{C}^{-1}$, and $(\mathcal{C}^\top)^{-1}$, each path has two possible starting states on the 2-qubit $g_{36}$ subgraph.  Lifting these eight states to four starting states each on $g_{144}^2$, we hit all 32 purple and orange states in the central octagon in Figure \ref{CoveringG144}.  The same statement works for the other $g_{144}$ subgraphs, so every central octagon state is a lifted starting state for some 2-qubit Hamiltonian path, which together with 3 other states, fully covers the $g_{144}$ subgraph. A similar situation arises for the $g_{288}$ subgraph.

Rather than exploring these other possible coverings, we will instead move on to further structures which first arise at three qubits: Hamiltonian cycles and Eulerian paths.

\subsection{Hamiltonian cycles and Eulerian paths}\label{sec:3-qubitHamiltonianEulerian}

At one qubit, the complete reachability diagram did not have a Hamiltonian path.  At two qubits, the $H_1,\, H_2, \, CNOT_{1,2},\, CNOT_{2,1}$ restricted subgraph $g_{24}$ also did not have a Hamiltonian path, but $g_{36}$ did.
Beginning at three qubits, we are able to construct Hamiltonian cycles as well, on subgraphs $g_{144}$ and $g_{288}$.  Subgraph $g_{288}$ will additionally have an Eulerian cycle, which traverses every path exactly once (instead of visiting every vertex once).

Hamiltonian cycles cannot be built on the $g_{24}$ and $g_{36}$ subgraphs.  For $g_{24}$ this check is simple: it contains a vertex whose removal disconnects the graph, known as a cut-vertex; the $GHZ$ state in Figure \ref{TwoQubitH1H2CNOT12CNOT21} is one example. Subgraph $g_{36}$ contains no cut-vertex; however, it can be verified to have no Hamiltonian cycle by exhaustive search.%
\footnote{Any graph containing a Hamiltonian cycle must satisfy all of the following conditions: each vertex of degree $2$ must be included in the Hamiltonian cycle, once two edges incident to a vertex are included in the Hamiltonian cycle all other edges incident to that vertex must be removed from consideration, and the Hamiltonian cycle must contain no proper subcycles. These criteria are only sufficient to eliminate a graph's candidacy for having a Hamiltonian cycle. In general, the ``Hamiltonian path problem'' of proving that a given graph does admit a Hamiltonian path or cycle is NP-complete.}%

For the $g_{144}$ subgraph, the four copies of $\mathcal{C}$ in each graph do not directly connect into a Hamiltonian cycle simply by connecting their ends.  However, other Hamiltonian paths exist on the 2-qubit $g_{36}$ subgraph, and some of these paths, when their starting states are lifted to four states on $g_{144}$, can be directly connected into one large loop that builds a Hamiltonian cycle.

For the $g_{288}$ subgraph, we have three coverings.  First, as in the previous section, eight copies of $\cal{C}$ (or any other Hamiltonian path from $g_{36}$ ) completely cover the graph.  Next, the Hamiltonian cycle on $g_{144}$ can be mapped to a pair of cycles which each cover half of $g_{288}$, as shown in Figure \ref{HamiltonianCycleMap}.  And last, a new Hamiltonian cycle exists on $g_{288}$ itself.
\begin{figure}[h]
\begin{center}
\includegraphics[scale=.9]{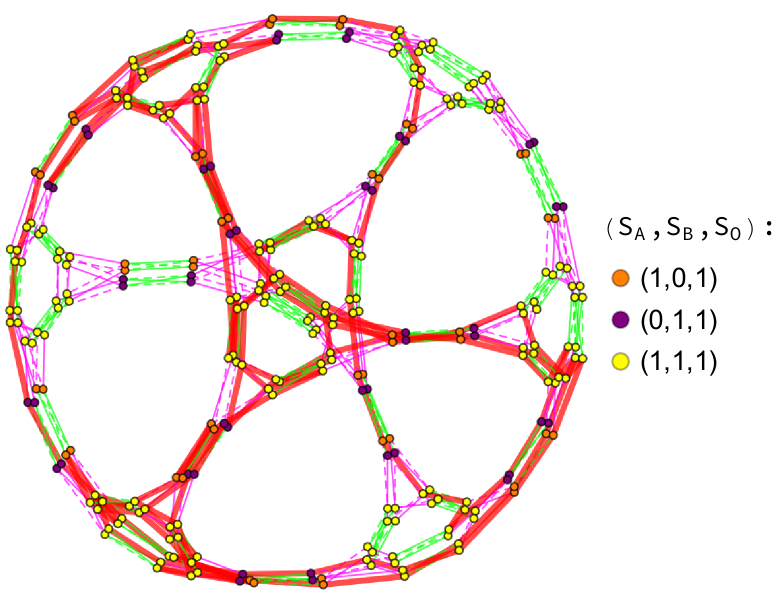}
\caption{The Hamiltonian cycle on $g_{144}$ can be mapped to a cycle on $g_{288}$ through a CNOT operation connecting the two subgraphs. Two copies of $g_{144}$ can be used to cover $g_{288}$ completely.}
\label{HamiltonianCycleMap}
\end{center}
\end{figure}

We also note that Eulerian cycles, which traverse every edge, first exist on $g_{288}$.  The cycle is of length 576, and it must exist because all vertices have an even number of attached edges (namely, four).  Stated another way, all four gates in the restricted set $H_1,\, H_2, \, CNOT_{1,2},\, CNOT_{2,1}$ act nontrivially and differently from each other, on every state in the subgraph. In particular, this means that $g_{288}$ has no trivial loops.

Hamiltonian cycles can be embedded into to higher-qubit subgraphs in almost exactly the same way as Hamiltonian paths; for cycles we have to pick an arbitrary starting state to lift. Each Hamiltonian cycle embeds into every subgraph of greater or equal vertex number. We will also see that embeddings of Hamiltonian paths and cycles will continue to cover subgraphs at higher qubit number.

\section{Towards the $n$-qubit Stabilizer Graph}\label{sec:general}

We will now generalize our discussion to the cases of four and five qubits.  Two particularly important features arise first at four qubits. First, the last new subgraph shape in the $H_1,\, H_2, \, CNOT_{1,2}, \, CNOT_{2,1}$ restricted graph appears.  Second, some entropy vectors of stabilizer states will disobey monogamy of mutual information \eqref{eq:MonogamyofMutualInformation}.  These states thus lie outside the holographic entropy cone; they cannot have a classical gravitational representation.

\subsection{Four Qubits}

In order to explore four qubits, we have explicitly generated all of the 36720 4-qubit stabilizer states. The explicit form of these states is available in the GitHub repository \cite{github} along with associated reachability diagrams, entropy vector data, and a Mathematica package for simulating stabilizer circuits, generating sets of states, and constructing reachability diagrams.

Examining the $H_1,\, H_2, \, CNOT_{1,2}, \, CNOT_{2,1}$ restricted graphs, we find copies of the $g_{24}$ and $g_{36}$ subgraphs we have seen before. We also see again the $g_{144}$ subgraph with 3 different entropy vectors hereafter referred to as $g_{144}(3)$, and the $g_{288}$ with 3 entropy vectors, now termed $g_{288}(3)$.  We specify the number of entropy vectors because we also find $g_{144}(4)$ and $g_{288}(4)$, as in Figures \ref{FourQubitG144} and \ref{FourQubitG288FourEntropies}.

\begin{figure}[h]
\begin{center}
\includegraphics[scale=0.75]{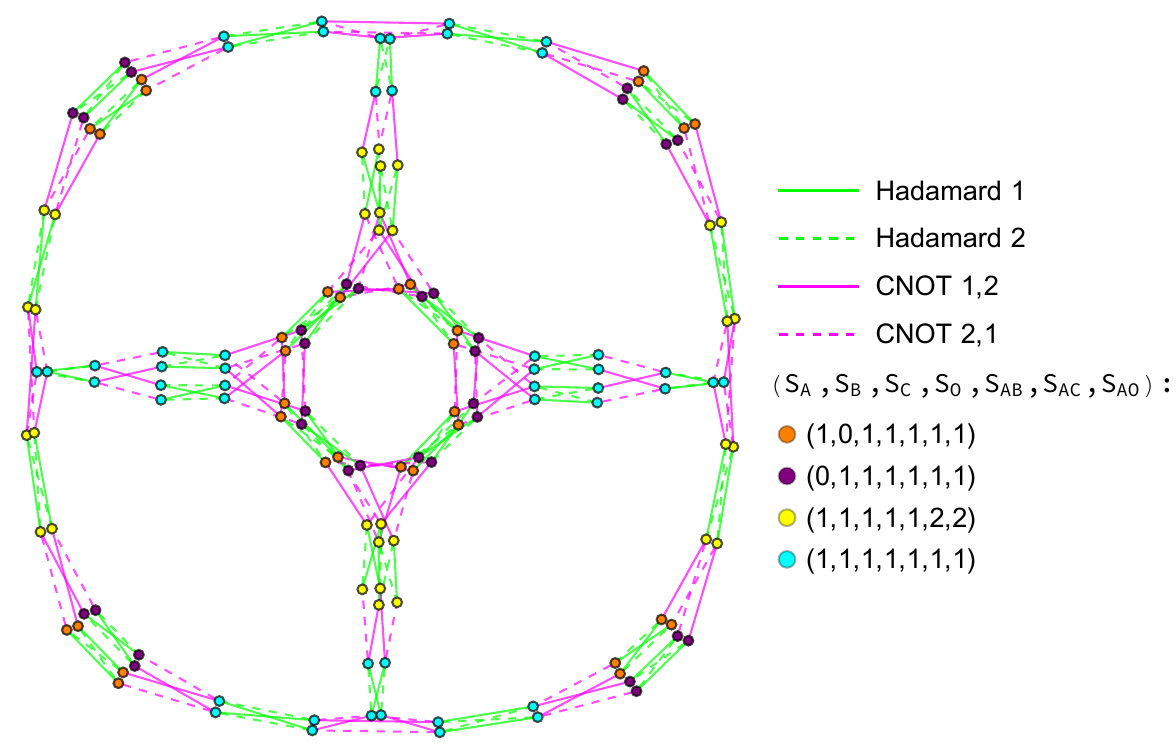}
\caption{At four qubits, the $g_{144}$ subgraph can have $4$ different entropy vectors among the vertices. The cyan vertices denote states with entropy vector $\vec{S} = (1,1,1,1,1,1,1)$, an entropic structure that violates monogamy of mutual information \eqref{eq:MonogamyofMutualInformation}. These 4-qubit stabilizer states, located on 4-qubit subgraphs $g_{144}$ and $g_{288}$, are the first instances of non-holographic states.}
\label{FourQubitG144}
\end{center}
\end{figure}

\begin{figure}[h]
\begin{center}
\includegraphics[scale=.9]{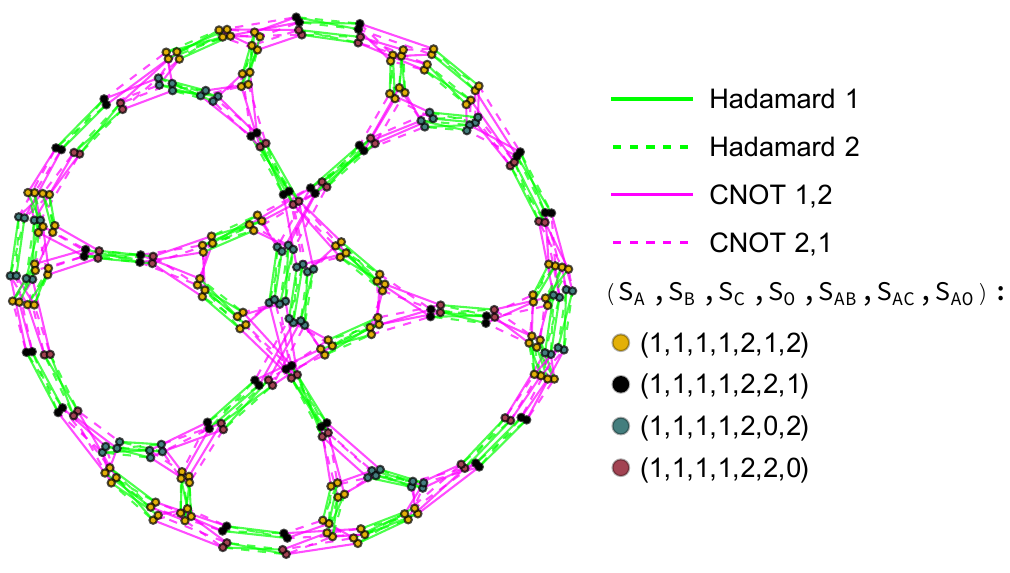}
\caption{At four qubits, we witness a variation to the $g_{288}$ subgraph first seen at three qubits. The structure is isomorphic to its 3-qubit counterpart; however, the number of entropy vectors present has increased to four, similar to what occurred in Figure \ref{FourQubitG144}.}
\label{FourQubitG288FourEntropies}
\end{center}
\end{figure}

We also obtain our last new structure: $g_{1152}$.  At four qubits we find this subgraph with either two or four different entropy vectors, as in Figures \ref{FourQubitG1152TwoEntropies} and \ref{FourQubitG1152FourEntropies}. 

\begin{figure}[h]
\begin{center}
\includegraphics[scale=.87]{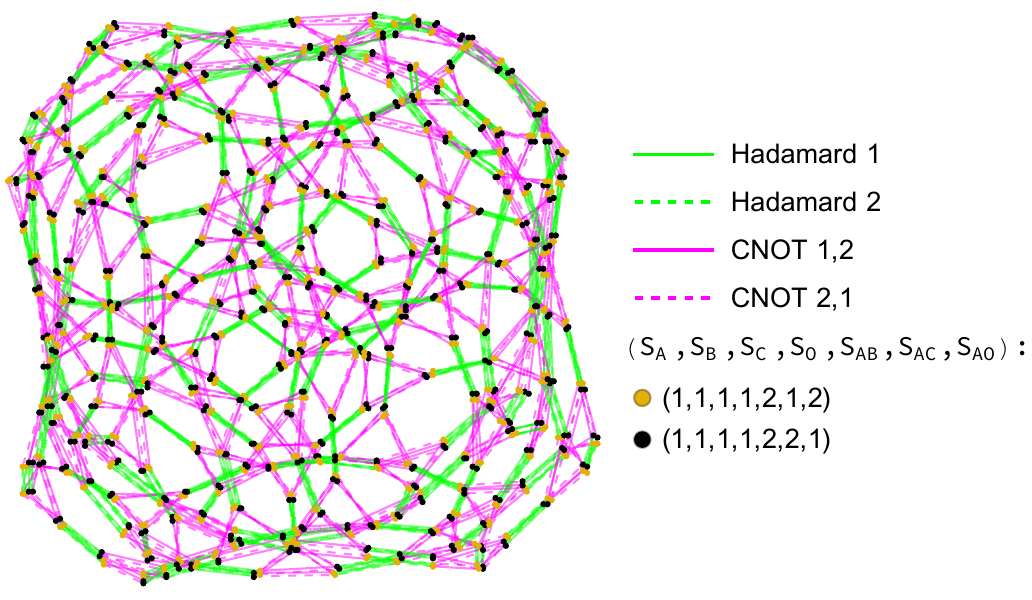}
\caption{One of the two $g_{1152}$ occurring at four qubits. This subgraph is isomorphic to the $g_{1152}$ subgraph in Figure \ref{FourQubitG1152FourEntropies}, but with only $2$ different entropy vectors among all the states in the subgraph.}
\label{FourQubitG1152TwoEntropies}
\end{center}
\end{figure}

\begin{figure}[h]
\begin{center}
\includegraphics[scale=0.9]{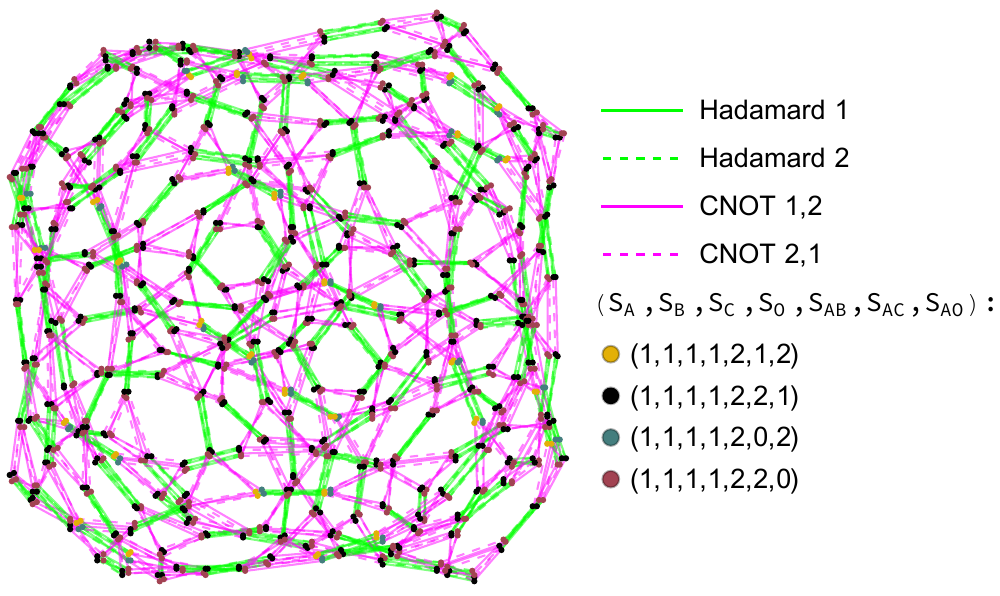}
\caption{Another copy of the $g_{1152}$ subgraph introduced at $4$ qubits. States located on this subgraph exhibit $4$ different entropy vectors, in contrast to the $g_{1152}$ subgraph in Figure \ref{FourQubitG1152TwoEntropies}.}
\label{FourQubitG1152FourEntropies}
\end{center}
\end{figure}

As in the 3-qubit case, lifting the $g_{24}$ and $g_{36}$ subgraphs from two qubits to four qubits proceeds in a simple manner. To lift the 2-qubit $g_{24}$ states to the 4-qubit $g_{24}$ subgraphs, we proceed as in \eqref{g24Lift}.  That is, we start with a state in the 2-qubit $g_{24}$ subgraph, and then tensor on any other 2-qubit state. Thus, both $\ket{0000}$ and $\ket{00i1}$ appear in copies of $g_{24}$ at four qubits. 

Similarly for $g_{36}$, we again tensor on 2-qubit states instead of the 1-qubit state added in \eqref{g36Lift}.  Thus $\ket{i100}$ and $\ket{i1i1}$ are in copies of $g_{24}$ at four qubits.  
Since there are 60 2-qubit stabilizer states, we get 60 copies each of $g_{24}$ and $g_{36}$.  For $g_{36}$ the Hamiltonian paths transfer just as before.

These Hamiltonian paths can be embedded in $g_{144}$ and $g_{288}$ by acting with the CNOT gates that involve qubits 3 and/or 4, or with phase gates when necessary.  The new structure, $g_{1152}$, behaves similarly, except 32 copies of $\mathcal{C}$ are needed for a full covering.

Not only the Hamiltonian paths lift; recall that Hamiltonian cycles exist for $g_{144}$ and $g_{288}$.  These cycles embed as well.  Choosing some starting state at 3 qubits, we tensor on a 1-qubit stabilizer state, act with a CNOT to entangle it and arrive on one of the ten copies of $g_{1152}$, and then enact the cycle in the same gate order as the Hamiltonian cycle at three qubits.  In order to cover one full $g_{1152}$ subgraph, four such cycles are needed.

Lastly, the $g_{1152}$ structure, like the $g_{288}$ subgraph, contains an Eulerian cycle of length 2304.  Just as for $g_{288}$, as discussed at the end of section \ref{sec:3-qubitHamiltonianEulerian}, this cycle exists because all four of the gates $H_1,\, H_2, \, CNOT_{1,2}, \, CNOT_{2,1}$ act nontrivially and differently from each other, on every state in the subgraph; no degeneracies or loops remain.

\subsubsection{Entropy Vector Analysis}

In Table \ref{tab:FourQubitEntropyVectors}, we list the 18 different entropy vectors which stabilizer states exhibit at four qubits.  In the table, the vectors are grouped together if they appear in the same subgraphs in the $H_1,\, H_2, \, CNOT_{1,2}, \, CNOT_{2,1}$ restricted graph.  These groupings arise because $S_C, \, S_O, $ and $S_{AB}$ cannot be altered by the Hadamards and CNOTs on qubits 1 and 2 only.
\begin{table}[h]
    \centering
    \small{
    \begin{tabular}{|c||c|c|c|} 
 \hline
Holographic & $\left(S_A,S_B,S_C,S_O,S_{AB},S_{AC},S_{AO}\right)$ & Number of States & Subgraph\\
\hline
\hline
 Yes & $(0,0,0,0,0,0,0)$ & $1296$ & $g_{24}$, $g_{36}$\\
 \hline
 Yes & $(1,1,0,0,0,1,1)$ & $864$ & $g_{24}$, $g_{36}$\\
 \hline
 \hline
 Yes & $(0,0,1,1,0,1,1)$ & $864$ & $g_{24}$, $g_{36}$\\
 \hline
 Yes & $(1,1,1,1,0,2,2)$ & $576$ & $g_{24}$, $g_{36}$\\
 \hline
 \hline
 Yes & $(0,1,1,0,1,1,0)$ & $864$ & $g_{144}, g_{288}$\\
 \hline
 Yes & $(1,0,1,0,1,0,1)$ & $864$ & $g_{144}, g_{288}$\\
 \hline
 Yes & $(1,1,1,0,1,1,1)$ & $2592$ & $g_{144}, g_{288}$\\
 \hline
 \hline
 Yes & $(0,1,0,1,1,0,1)$ & $864$ & $g_{144}, g_{288}$\\
 \hline
 Yes & $(1,0,0,1,1,1,0)$ & $864$ & $g_{144}, g_{288}$\\
 \hline
  Yes & $(1,1,0,1,1,1,1)$ & $2592$ & $g_{144}, g_{288}$\\
 \hline
 \hline
 Yes & $\textcolor{orange}{\bullet}(1,0,1,1,1,1,1)$ & $2592$ & $g_{144}, g_{288}$\\
 \hline
 Yes & $\textcolor{violet}{\bullet}(0,1,1,1,1,1,1)$ & $2592$ & $g_{144}, g_{288}$\\
 \hline
  Yes & $\textcolor{yellow}{\bullet}(1,1,1,1,1,2,2)$ & $5184$ & $g_{144}, g_{288}$\\
 \hline
 \color{red} No & $\textcolor{cyan}{\bullet}(1,1,1,1,1,1,1)$ & $2592$ & $g_{144}, g_{288}$\\
 \hline
 \hline
 Yes & $\textcolor{lime}{\bullet}(1,1,1,1,2,1,2)$ & $5184$ & $g_{1152}$ (2,4)\\
 \hline
 Yes & $\textcolor{black}{\bullet}(1,1,1,1,2,2,1)$ & $5184$ & $g_{1152}$ (2,4)\\
 \hline
 Yes & $\textcolor{teal}{\bullet}(1,1,1,1,2,0,2)$ & $576$ & $g_{1152}$ (4)\\
 \hline
 Yes & $\textcolor{purple}{\bullet}(1,1,1,1,2,2,0)$ & $576$ & $g_{1152}$ (4)\\
 \hline
\end{tabular}}
    \caption{Entropy vectors for the set of four-qubit stabilizer states. One entropy vector disobeys the holographic inequalities. Graph $g_{1152}$ comes in two varieties: one with only two different entropy vectors, and one with four.  The last two entropy vectors only appear on subgraphs with four different entropy vectors, while the previous two appear on subgraphs with either two or four entropy vectors. }
    \label{tab:FourQubitEntropyVectors}
\end{table}

Importantly, notice that our first non-holographic state, which disobeys the holographic entropy cone relation \eqref{eq:MonogamyofMutualInformation}, arises here.  CNOT gates are the only actions which could move a state from within the entropy cone to outside of it, so by studying the restricted graphs where this non-holographic entropy vector lives, we can define how to reach such states. In particular, notice that in $g_{144}(4)$, shown in Figure \ref{HamiltonianCycleMap}, the non-holographic states divide the subgraph into multiple parts; removing them entirely would separate the inner octagonal structure from the outside, and divide the outer ring into several pieces.

\subsection{Five Qubits and Beyond}

At five qubits, no new structures arise. The subgraph types may begin to have more entropy vectors on a given subgraph, and each subgraph has more copies, but the shapes themselves do not increase in size. Again, we have generated the full set of stabilizer states as well as the complete reachability diagram, and this data is fully accessible at \cite{github}. The full five qubit list of entropy vectors is in Table \ref{tab:5QubitEntropiesPart1} of the appendix.

Again, we note 12 entropy vectors arise which disobey the holographic entropy cone relations; as usual, examining the gate actions which move to and from those entropy vectors will help us to learn how states move in and out of the holographic entropy cone.

For now, we note that the $g_{24}$ and $g_{36}$ subgraphs each have 1080 copies, as expected since they are built by tensoring 2-qubit stabilizer states with 3-qubit stabilizer states; there are 1080 3-qubit stabilizer states.  We also note that the number of $g_{1152}$ copies scales faster than for the smaller index number graphs, as shown in Table \ref{tab:NumberOfSubgraphs}.
\begin{table}[h]
    \centering
    \begin{tabular}{|c||c|c|c|c|c|}
    \hline
    Qubit \# & $g_{24}$ & $g_{36}$ & $g_{144}$ & $g_{288}$ & $g_{1152}$\\
    \hline
    \hline
    Two & $1$ & $1$ & - & - & -\\
    \hline
    Three & $6$ & $6$ & $3$ & $1$ & -\\
    \hline
    Four & $60$ & $60$ & $90$ & $30$ & $10$\\
    \hline
    Five & $1080$ & $1080$ & $3780$ & $1260$ & $1260$\\
    \hline
    \end{tabular}
\caption{The number of occurrences of each subgraph structure at two, three, four, and five qubits.  The number of copies of $g_{1152}$ exhibits the most dramatic growth with qubit number.}
\label{tab:NumberOfSubgraphs}
\end{table}

We believe no further new structures arise at higher qubits beyond the $g_{1152}$ first found at four qubits.%
\footnote{\label{NoHigherStructures} This claim has since been verified explicitly by directly computing the full set of unique Clifford group elements generated by $H_1, H_2, CNOT_{1,2},$ and $CNOT_{2,1}$. A simple justification can be observed by allowing each leg of the $g_{1152}$ subgraph to represent a unique operation generated by $H_1, H_2, CNOT_{1,2}$ and $CNOT_{2,1}$. The proof of this result will be presented in forthcoming work.} %
However, individual subgraphs will start to contain larger numbers of different entropy vectors, just as we found when increasing to four and then five qubits.

\section{Discussion}\label{sec:discussion}

Our goal in the body of the paper has been to understand the $n$-qubit stabilizer states by first constructing their reachability graphs, then passing to the restricted graphs formed using only the Hadamards and CNOTs on two qubits. We were able to understand how the simpler structures at low qubit number ($n=1$ and $n=2$) assemble, via lifts, to build the more complicated structures observed at higher qubit number ($n=3$ and $n=4$). We have argued that the four graph structures shown already in the Introduction are ``all there is:'' passing to higher qubit number merely proliferates larger numbers of the four basic structures, perhaps with more complicated entropic arrangements. We noted the presence of non-holographic states beginning at four qubits, and their confinement to particular groups of subgraphs, as demanded by their having entropy vectors that violate holographic inequalities. In this Discussion, we collect some other ways of thinking about the reachability graphs we have constructed, along with potential generalizations and questions for further research.

We first note that, to our knowledge, already at two qubits this is the first time that the complete, explicit reachability graphs have been presented in the literature. We have constructed the full reachability graphs up through five qubits, and have made them available in our GitHub repository \cite{github}.

Having the explicit form of the graphs allows for direct computation of some interesting quantities. For example, the minimum distance on the graph from one state to another is, explicitly, the gate complexity of the stabilizer circuit which maps the first state to the second. The state with largest minimum distance relative to a reference state thus is the ``most complicated'' state by this measure. We have explicitly computed this minimum distance relative to the all-zero state for $n=1\ldots4$, and find that it is circuit distance $4,7,10,13$, respectively. Since the asymptotic complexity of $n$-qubit stabilizer states is known to go as $n^2/\mathrm{log}(n)$ \cite{aaronson2004improved}, this is perhaps a surprising result---it indicates that at small, finite qubit number the main contribution to the complexity is not the CNOT gate applications which lead to the asymptotic result. Of course, we note that, even with the explicit reachability graph, finding a state with maximal relative distance is a highly nontrivial problem requiring an exhaustive numerical search.

We recall also that the group-theoretic identity in \eqref{eq:orbit_stabilizer} equates the number of elements in a given subgraph to the orbit length of an element in that subgraph under the action of the subgroup of the Clifford group generated by the subset of the Clifford gates we are considering (see Table \ref{tab:OneQubitSubgroupOrbits}). Hence we can interpret $24, 36, 144, 288, 1152$ as the orbit lengths of various stabilizer states under the action of the subgroup generated by $\{H_1,H_2,CNOT_{1,2},CNOT_{2,1}\}$. It would be interesting to find these orbits directly using a group-theoretic approach, and confirm or disprove our conjecture that no additional structures appear beyond $g_{1152}$.

We emphasize that we chose to focus primarily on Clifford gates acting on the first two qubits because the stabilizer states exhibit only bipartite entanglement, constructed explicitly through CNOT gate applications. Hence our restricted graphs are sufficient to analyze how the entropy vector of a state changes as it is acted upon by various stabilizer circuits. Of course, adding additional gates to our restricted gate will result in more complicated connected subgraphs, culminating in the full reachability graph at each qubit number.

In the holographic literature, we typically think of each subsystem as some portion of a spatial boundary, with the exception of an additional ``purifier'' subsystem which does not necessarily have an interpretation as a spatial subsystem. The holographic interpretation of our results will therefore differ depending on whether we take the purifier to be one of the first two qubits, or instead some other qubit $>=$ 3. In the body of the paper we have taken the purifying system $O$ to be the last qubit.

In what ways can our reachability graph be generalized? First consider staying within the setting of $n$-qubit states. The existence of a graph structure is most useful when we have a gate set that can be used to pick out only a finite number of states, rather than a continuous subspace of $\mathbb{C}^{2n}$ or the entire Hilbert space itself. Hence we do not expect choosing a modification of the Clifford gates which yields a universal gate set to provide interesting results. We could instead consider the stabilizers of some other finite group of operators besides the Pauli group. It is an interesting question whether there are any such finite groups with a compelling physical interpretation. We could also consider allowing a small number of discrete applications of non-Clifford gates, i.e. allowing a small amount of ``magic'' \cite{bravyi2005universal,White:2020zoz} as a resource. Or, we could consider restricting ourselves to stabilizer states, but allowing some applications of operators which are in the Clifford group but not Clifford generators themselves, which could be used to ``fast-forward'' a circuit. Finally, it would be very interesting to understand whether there is a set of gates that produce a discrete set of states but allow, for example, for tripartite entanglement.

In more general Hilbert spaces which are not isomorphic to tensor products of qubits, we expect that a similar story should hold with a different group than the Pauli group. There are a number of approaches in the literature to generalizing Pauli matrices to larger discrete systems. One approach, well-suited to qudit systems, is to use Gell-Mann matrices and their generalizations. Another, quite different, approach better suited to approaching the continuum limit of a lattice system is to use the ``clock'' and ``shift'' matrices that generate a ``generalized Clifford algebra'' \cite{Jagannathan:2010sb,Pollack:2018yum}.

Finally, we recall our motivation in the Introduction, where we described both entanglement entropies in a factorized system, and stabilizer states relative to a preferred group of algebra, as two different ways of imposing structure on the set of states in Hilbert space. We can further generalize this picture by noting that the von Neumann entropy of a reduced density matrix of a subsystem can be identified with the \emph{algebraic} entropy associated with the algebra of operators which act as the identity everywhere outside that subsystem. Hence entropies are defined relative to an \emph{algebra} of operators while stabilizer states are defined relative to a \emph{group} of operators. In particular, recall that the algebra generated by the Pauli operators acting on a qubit is in fact the full algebra of linear operators on that qubit. 

When we consider more general von Neumann algebras, instead of the Hilbert space decomposing as a product of tensor factors, we get a more general \emph{Wedderburn decomposition} \cite{wedderburn1934lectures,Harlow:2016vwg,Kabernik:2019jko}, which decomposes the Hilbert space as a \emph{sum} of products of tensor factors, where each term in the sum represents a superselection sector which has its own associated entropy. Thus we can envision a very abstract version of our picture in which we fix a \emph{set} of operators, and then find stabilizer states by using this set to generate a \emph{group} of operators and entropies by using the same set to generate an \emph{algebra} of operators. It would be interesting to see how far this picture can be taken in generic cases. A first step would be to understand how to generate not just a single entropy for a given reduced density matrix on a state, but rather a larger entropy vector, which would pick out some \emph{sequence} of algebras, the equivalent of the operators acting on the first, second, and in general $n^{th}$ qubit.

\paragraph{Data Repository} The full set of states, reachability diagrams, and entropy vectors for stabilizer states at $n \leq 5$ qubits can be accessed via the GitHub repository \cite{github}. The repository additionally includes Mathematica notebooks used to generate the data for this paper, as well as a Stabilizer State package designed to generate reachability graphs and analyze stabilizer state structure.

\textit{The authors thank Scott Aaronson, Ning Bao, Charles Cao, and Sergio Hernandez-Cuenca for helpful discussions. J.P. is supported by the Simons Foundation through \emph{It from Qubit: Simons Collaboration on Quantum Fields, Gravity, and Information}. C.K. and W.M. are supported by the U.S. Department of Energy under grant number DE-SC0019470 and by the Heising-Simons Foundation “Observational Signatures
of Quantum Gravity” collaboration grant 2021-2818.}

\clearpage
\begin{singlespace}
\printbibliography[heading=subbibliography]
\end{singlespace}

\chapter{CLIFFORD ORBITS FROM CAYLEY GRAPH QUOTIENTS}\label{Chapter4}

\textit{The contents of this chapter were originally published in Physical Review A \cite{Keeler:2023xcx}.}

\textit{We describe the structure of the $n$-qubit Clifford group $\mathcal{C}_n$ via Cayley graphs, whose vertices represent group elements and edges represent generators. In order to obtain the action of Clifford gates on a given quantum state, we introduce a quotient procedure. Quotienting the Cayley graph by the stabilizer subgroup of a state gives a reduced graph which depicts the state's Clifford orbit. Using this protocol for $\mathcal{C}_2$, we reproduce and generalize the reachability graphs introduced in \cite{Keeler2022}. Since the procedure is state-independent, we extend our study to non-stabilizer states, including the W and Dicke states. Our new construction provides a more precise understanding of state evolution under Clifford circuit action.}

\section{Introduction}\label{Intro}

How can we track the evolution of information about a quantum state? 
In the general case, where we'd like to obtain the outcome of arbitrary measurements under continuous time evolution, we can do no better than to work with the state itself. 
For an $n$-qubit pure state $\ket{\psi}\in \Hil \cong \mathbb{C}^{2n}$, tracking the state requires knowing all of its overlaps $\braket{a_i|\psi}$ with a given orthonormal basis $\{\ket{a_i}\}$: that is, $4^n-2$ real parameters, accounting for normalization and the unobservability of global phase.
We can do better, however, by restricting which information we want to keep track of, or restricting how the state might evolve, or starting with a special initial state:
\begin{itemize}
    \item We might only care about some particular properties of the state. 
    If we only want to predict the outcome of measurements on $k<n$ of the qubits, we can trace out the remaining $n-k$ qubits and work with the reduced state $\rho_{\{n-k\}}$, which requires only $4^k-1$ real parameters. 
    If we want to understand the entanglement properties of the state, we can collect together the von Neumann entropies of each independent reduced density matrix: $2^{n-1}-1$ real parameters.
    \item We might want to evolve the state through a quantum circuit with a limited set of possible unitaries that can be applied, such as the Clifford gates: Hadamard, phase, and CNOT. 
    It might be that the multiplicative group spanned by the gate set is all unitaries acting on $\mathbb{C}^{2n}$, in which case the gate set is universal. 
    But we might instead find that the group acts on a smaller Hilbert space, for example if every gate in the gate set conserves some charge. 
    Or, as is the case for the Clifford gates \cite{Calderbank:1996hm,Gottesman:1997qd}, the generated group might be \emph{finite}, in which case, for a given initial state, there are only a discrete set of possible states which can be reached.
    \item We might have a special state that admits a reduced description. 
    If we know our state has decohered, we can write it as a superposition of pointer basis states, and classical observables are independent of the relative phase between branches, requiring only $2^n-1$ real parameters. 
    Or, if we know our state is an eigenstate of some specified observable, or a simultaneous eigenstate of a group of observables, we can obtain a compact description.
    For example, the Pauli group on $n$ qubits comprises the $2\cdot4^n$ (signed) Pauli strings. 
    All $n$-qubit states stabilize, i.e.\ are unit eigenvectors of, the identity operator $\mathbb{1}$ (and none stabilize $-\mathbb{1}$). 
    But only a discrete set of states stabilize any additional Pauli strings: a special set of states that can be specified by discrete rather than continuous information.
\end{itemize}

One of the most famous results in the theory of quantum computation concerns a computational setting where we allow ourselves such simplifications. Quantum circuits which take an initial stabilizer state\footnote{This paper, unfortunately, will have to deal with two meanings of the word `stabilizer': the group-theoretic meaning, where a state stabilizes a group element if the group element acts trivially on the state, and the quantum-information-theoretic, where it is standard to refer to those $n$-qubit states which stabilize $2^n$ elements of the $n$-qubit Pauli group simply as ``stabilizer states''. 
We will have cause in this paper to refer to \emph{both} the traditional stabilizer states \emph{and} states which stabilize elements of other groups, most notably the $n$-qubit Clifford group and its subgroups.}, a simultaneous unit eigenvector of $2^n$ Pauli strings, to any other stabilizer state can be represented as ``Clifford circuits'', which contain only Clifford gate applications. 
Unlike circuits made from a universal quantum gate set, Clifford circuits are efficiently classically simulable \cite{Gottesman:1997zz,Gottesman:1998hu,aaronson2004improved}. 

In this paper, we exploit the finiteness of the Clifford group to reinterpret Clifford circuits graphically. 
Our key tool is a group-theoretic notion: the Cayley graph \cite{10.2307/2369306}, which, given a choice of generators, graphically encodes the structure of the group. We're interested in studying what states can be reached if, instead of acting with \emph{arbitrary} Clifford circuits, we restrict to only a subset of the possible Clifford gates. 
In particular, following our previous paper \cite{Keeler2022}, we'd like to understand how the entanglement entropy evolves as we act on a state with a Clifford circuit. 
Because all of the entanglement created in this way is bipartite---the result of a $CNOT$ gate action---it suffices to consider entangling operations acting on only $2$ of the $n$ qubits. 

We are thus led to consider the actions of the $2$-qubit Clifford group on $n\ge 2$-qubit states, which might themselves be stabilizer states, or might be more general.
We previously considered a version of this problem in \cite{Keeler2022}, developing ``reachability graphs'' in which each vertex was a stabilizer state and each edge a Clifford gate, and ``restricted graphs'' in which only certain types of edges were allowed.
We found that the complicated graphs encoding the action of Clifford circuits on stabilizer states decomposed into highly structured subgraphs.

In this paper, using the technology of Cayley graphs, we are able to reproduce and generalize our previous results. 
Rather than working with the action of Clifford gates on states, we are instead led to work more directly with the abstract group elements themselves. We can then recover the action on states via a quotienting procedure. 
By working group-theoretically, we can easily understand the full diversity of subgraph structures that arise, as well as extend our results to the action of Clifford circuits on other states --- for example, states which stabilize a non-maximal number of Pauli group elements --- which allows for new structures to arise. 
Along the way, we will gain a better understanding of the Clifford group itself, deriving a formal presentation for the group, as well as data on its subgroup structure. 
While a presentation was previously found in the literature \cite{Selinger2013}, our reformulation gives insight into the circumstances in which seemingly entangling gate operations fail to ultimately produce entanglement. 

In forthcoming work \cite{Keeler2023b} we will use the relations of our presentation to examine and bound the dynamics of entanglement entropy. Because holographic states, which have a classical geometric description, live inside the stabilizer entropy cone \cite{Bao:2015bfa}, tracking the evolution of entanglement can give us insight into the operations which move quantum states into and out of the holographic cone. 

\subsection{Summary of Results}

We previously introduced, in \cite{Keeler2022}, \emph{reachability graphs}, in which each vertex is an $n$-qubit pure quantum state, typically a stabilizer state, and each (directed) edge is a Clifford gate taking the state at the initial vertex to the state at the final vertex, as well as \emph{restricted graphs}, in which only some subset of the Clifford gates, typically $\{H_1,\, H_2,\, C_{1,2},\, C_{2,1}\}$, is allowed. (Here and throughout we abbreviate $CNOT_{i,j}$ as $C_{i,j}$.)
Reachability graphs graphically encode the result of performing Clifford circuits on a given set of states; restricted graphs give a more refined picture which is often more useful for understanding entropic evolution.  

The main task accomplished in this paper is the reinterpretation of reachability graphs as certain quotients of a group-theoretic object, \emph{the Cayley graph}, a directed graph that encodes the structure of a group by identifying a vertex for every group element and a set of edges for each group generator.%
\footnote{Because the Cayley graph depends on a choice of generators, there are many different Cayley graphs which each correspond to a given group. Each of these graphs has an isomorphic set of vertices, namely one vertex for every group element, but in general inequivalent edges. For example, $\{H_1,\, H_2,\, C_{1,2},\, C_{2,1}\}$ and $\{H_1,\, H_2,\, C_{1,2}\}$ both generate the two-qubit Clifford group; as seen in Table \ref{tab:OrbitLengthCliffordSubgroupNoRelations}, the corresponding Cayley graphs have the same number of vertices, $2304$, but different properties such as graph diameter.} %
Since finite groups have finite Cayley graphs, we can use quotients of the group to construct quotient spaces on its Cayley graph.

The general protocol for constructing the quotient map that yields (graphs isomorphic to) reachability graphs from Cayley graphs is:
\begin{enumerate}
    \item For a group $G$ and chosen state $\ket{\Psi}$, we first identify the stabilizer subgroup of $\ket{\Psi}$ in $G$, denoted $\Stab_G(\ket{\Psi})$.
    \item Since $\Stab_G(\ket{\Psi})$ is a normal subgroup of $G$, all equivalence classes of the quotient group $G/\Stab_G(\ket{\Psi})$ can be generated by taking the left-cosets $h\cdot\Stab_G(\ket{\Psi})$ for all $h \in G$.
    \item Each equivalence class of $G/\Stab_G(\ket{\Psi})$ is assigned a vertex, and each vertex is connected by the generator which maps each element of one equivalence class to exactly one element of the other equivalence class.
\end{enumerate}

This procedure takes as input a choice of group and a choice of state.
To recover the reachability graphs, we do not take $G$ to be the Clifford group itself, because the group contains elements which act as an (unobservable) global phase. 
Instead, we first quotient the Clifford group (or $\HC$, the group generated by Hadamard and CNOT acting on the first two qubits) by such elements, resulting in a smaller group whose elements are isomorphic to equivalence classes of the original group; only then do we specify a state. 
Figure \ref{SummaryImagePaper} illustrates this procedure, starting with the one-qubit Clifford group and the chosen state $\ket{0}$ and producing a quotiented graph isomorphic to the one-qubit stabilizer reachability graph. 
We could have started with any of the six one-qubit stabilizer states and gotten the same result, but the precise mapping of initial group elements to vertices of the final Cayley graph is state-dependent.

%
\begin{figure}[h]
    \centering
    \includegraphics[width=14.5cm]{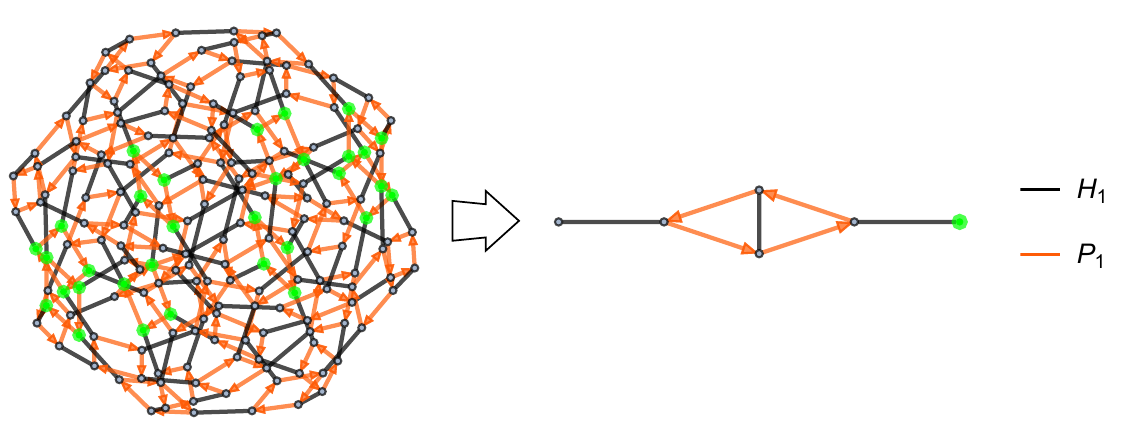}
    \caption{There are $32$ elements of the single-qubit Clifford group $\mathcal{C}_1$ which form the equivalence class of stabilizers for $\ket{0}$, denoted $\Stab_{\mathcal{C}_1}(\ket{0})$. We first build the quotient group $\bar{\mathcal{C}_1} \equiv \mathcal{C}_1/\langle \omega\rangle$ by modding out global phase elements, then quotient $\bar{\mathcal{C}_1}$ by $\Stab_{\bar{\mathcal{C}_1}}(\ket{0})$ to identify equivalent vertices in the $\mathcal{C}_1$ Cayley graph. This process yields a quotient space of the Cayley graph which is isomorphic to the one-qubit stabilizer reachability graph.}
    \label{SummaryImagePaper}
\end{figure}

In the remainder of the paper, we build an understanding of the Clifford group detailed enough to construct its Cayley graph and those of its subgroups. Then, with that task accomplished, we display the diversity of structures which ensue from applying the procedure to various quantum states. 
In Section \ref{sec:reminder}, we recall the Pauli group, Clifford group, stabilizer states, and the reachability graphs defined in our previous paper. 
In Section \ref{OneQubitSection}, we begin with the simple case of the one-qubit Clifford group $\mathcal{C}_1$, writing down its presentation and displaying its Cayley graph. 
We accomplish the same task for the two-qubit Clifford group $\mathcal{C}_2$ in Section \ref{TwoQubitSection}, where both of these steps are considerably more complicated. 
We also discuss those subgroups of $\mathcal{C}_2$ generated by a proper subset of the Clifford gates, presenting in Table \ref{tab:OrbitLengthCliffordSubgroupNoRelations} a comprehensive list of these subgroups and their properties, which might be of independent interest, as well as discussing a number of the groups in more detail. 
With the needed group-theoretic data obtained, we proceed in Section \ref{ReachabilityGraphsFromCayleyGraphs} to defining in detail the quotienting procedure summarized above and applying it to various groups and states of interest. 
We summarize and discuss further directions in Section \ref{sec:discussion}. 
Appendix \ref{ExtendedTableAppendix} provides further details of our derivations. All Mathematica data and packages are publicly available \cite{github}.

\section{Reminder: Clifford Group and Reachability Graphs}\label{sec:reminder}

We begin with a brief review of background material discussed throughout this paper. Much of this review was covered more extensively in Section 2 of \cite{Keeler2022}, and we invite the interested reader to consult it for additional details. Likewise, for a more pedagogical reference on the group-theoretic concepts we recommend e.g.\ \cite{Alperin1995}.

\subsection{The Clifford Group}

The Pauli matrices are a set of unitary and Hermitian matrices with $\pm 1$ eigenvalues, defined
\begin{equation}
    \mathbb{1}\equiv\begin{bmatrix}1&0\\0&1\end{bmatrix}, \,\, \sigma_X\equiv\begin{bmatrix}0&1\\1&0\end{bmatrix}, \,\,
    \sigma_Y\equiv\begin{bmatrix}0&-i\\i&0\end{bmatrix}, \,\,
    \sigma_Z\equiv\begin{bmatrix}1&0\\0&-1\end{bmatrix}.
\end{equation}
These matrices act as operators on a Hilbert space $\mathbb{C}^2$ in the fixed measurement basis $\{\ket{0}, \ket{1}\}$. The Pauli operators $\{\sigma_x, \sigma_y, \sigma_z \}$ generate the algebra of all linear operators on $\mathbb{C}^2$, and define a 16-element multiplicative matrix group, the one-qubit Pauli group:
\begin{equation}
    \Pi_1\equiv\langle\sigma_X,\sigma_Y,\sigma_Z\rangle,
\end{equation} 

The set of unitary matrices that normalize the Pauli group is known as the (one-qubit) Clifford group,
\begin{equation}
    \mathcal{C}_1\equiv\left\{ U\in L(\mathbb{C}^2) \: | \: 
    UgU^\dagger\: \forall g \in \Pi_1\right\}.
\end{equation}
Elements of $\mathcal{C}_1$ act as automorphisms on $\Pi_1$ via conjugation by $U$. The single-qubit Clifford group $\mathcal{C}_1$ is generated by the Hadamard and phase quantum gates, defined in a matrix representation as
\begin{equation}\label{HAndPMatrices}
    H \equiv \frac{1}{\sqrt{2}}\begin{bmatrix} 1 & 1 \\
    1 & -1\\
    \end{bmatrix}, \qquad P \equiv \begin{bmatrix} 1 & 0 \\
    0 & i\\
    \end{bmatrix}.
\end{equation}

We can extend the action of the Pauli group and Clifford groups to multiple qubits by composing strings of operators. These ``Pauli strings'' generalize local Pauli group action to a selected qubit in an $n$-qubit system, e.g. the operator which acts with $\sigma_Z$ on only the $k^{th}$ qubit can be written
\begin{equation}\label{PauliString}
    I^1\otimes\ldots\otimes I^{k-1} \otimes \sigma_Z^k \otimes I^{k+1} \otimes \ldots \otimes I^n.
\end{equation}
The weight of a Pauli string refers to the number of non-identity insertions in its tensor product representation. Eq. \eqref{PauliString} shows a Pauli string of weight one, and the set of all weight-one Pauli strings is sufficient to generate%
 \footnote{Constructing Pauli strings requires both a factorization of the $2^n$ dimensional Hilbert space in some fixed basis, as well as a chosen ordering of these factors $\{1,...,n\}$. We choose the ordering $\ket{a_1\ldots a_n}\equiv \ket{a_1}_1\otimes\ldots\ket{a_n}_n$. Elements of the $n$-qubit Pauli group are independent of any ordering choice, as are elements of the $n$-qubit Clifford group; however, the matrix representation of specific gates will depend on this order. We often consider groups which act on an $\ell$-qubit subsystem of an $n$-qubit state, fixed by a choice of ordered indices.} %
the $n$-qubit Pauli group $\Pi_n$.

The construction of the Clifford group can likewise be extended to $n>1$ qubits by adding CNOT gates to the generating set. The CNOT gate $C_{i,j}$ acts bi-locally on two qubits, performing a NOT operation on the $j^{th}$ qubit depending on the state of the $i^{th}$ qubit. In our matrix representation, we write $C_{i,j}$ as
\begin{equation}
    C_{i,j} = 
    \begin{bmatrix} 
    1 & 0 & 0 & 0 \\
    0 & 1 & 0 & 0 \\
    0 & 0 & 0 & 1 \\
    0 & 0 & 1 & 0 \\
    \end{bmatrix},
\end{equation}
where $i$ denotes the control bit and $j$ the target bit. We emphasize the fact that $C_{i,j} \neq C_{j,i}$. The group $\mathcal{C}_n$ is then
\begin{equation}
    \mathcal{C}_n \equiv \langle H_1,...,\,H_n,\,P_1,...,P_n,\,C_{1,2},\,C_{2,1},...,\,C_{n-1,n},\,C_{n,n-1}\rangle.
\end{equation}
We use a similar scheme when representing local gates, where the index denotes the qubit being acted on, e.g. $H_1$.

In this paper, we construct a presentation for the groups $\mathcal{C}_1$ and  $\mathcal{C}_2$, and analyze subgroups which are built by restricting the generating set. A presentation specifies a group by choosing a set of generators and fixing a set of relations among those generators. Elements of the group are then constructed by composing generators using the group operation, subject to the constraints set by the relations.

Every element of a multiplicative group can be written as a product of generators, known as a word. Words which independently equate to the same group element can be transformed into each other using relations in the presentation. In this way, unique words composed of Clifford group generators correspond to different constructible stabilizer circuits. We will present our set of relations as equalities between words built from Clifford generators.

\subsection{Stabilizer Formalism and Reachability Graphs}

For a group $G\subset L(\Hil)$ acting on a Hilbert space we define the stabilizer subgroup $\Stab_G(\ket{\Psi}) \leq G$, for some $\ket{\Psi} \in \Hil$, as the set of elements that leave $\ket{\Psi}$ unchanged,
\begin{equation}\label{StabilizerGroupDefinition}
    \Stab_G(\ket{\Psi}) \equiv \{g\in G\: | \:g\ket{\Psi}=\ket{\Psi}\}.
\end{equation}
That is, $\Stab_G(\ket{\Psi})$ contains only the elements of $G$ for which $\ket{\Psi}$ is an eigenvector with eigenvalue $+1$.

To make further use of this group-theoretic concept, we invoke two important theorems \cite{Alperin1995}. First, given a finite group $G$ and subgroup $H \leq G$, Lagrange's theorem states that the order of $H$ gives an integer partition of $G$, that is
\begin{equation}
    |G| = [G:H] \cdot |H|, \quad \forall \,H \leq G,
\end{equation}
where $[G:H]$ denotes the index of $H$ in $G$. Subsequently the Orbit-Stabilizer theorem says that, when considering the action of $G$ on a set $X$ and $H=\Stab_G(x)$, the orbit of $x \in X$ under $G$ has size
\begin{equation}\label{OrbitStabilizerTheorem}
    [G\cdot x] = [G:H] = \frac{|G|}{|H|}, \quad \forall \, x \in X.
\end{equation}

Considering the action of $\Pi_n$ on $\Hil$, it is clear that all states are trivially stabilized by $\mathbb{1}$. Certain states, however, are stabilized by additional elements of $\Pi_n$. The $n$-qubit ``stabilizer states'' are those which are stabilized by a subgroup of $\Pi_n$ of size $2^n$, the largest allowed size for an $n$-qubit state \cite{aaronson2004improved,garcia2017geometry}. In general, the set of $n$-qubit stabilizer states contains
\begin{equation}
    |S_n| = 2^n \prod_{k=0}^{n-1}(2^{n-k}+1)
\end{equation}
states \cite{doi:10.1063/1.4818950}. The set $S_n$ can be generated by starting with a state in the measurement basis, typically $\ket{0}^{\otimes n}$, and acting on that state with all elements of $\mathcal{C}_n$. In this way $S_n$ is the orbit $\left[\mathcal{C}_n \cdot \ket{0}^n\right]$. 

This method to generate $S_n$, by acting with $\mathcal{C}_n$ on $\ket{0}^{\otimes n}$, lends itself to a natural graph-theoretic description. By assigning a vertex to each state in the orbit $\left[\mathcal{C}_n \cdot \ket{0}^n\right]$, and an edge to every $\mathcal{C}_n$ generator, the evolution of $\ket{0}^{\otimes n}$ through $\Hil$ generates a discrete and finite graph. This structure, introduced in \cite{Keeler2022}, is known as a reachability graph, as we discussed further in Section \ref{Intro} above. When the action of a proper subgroup of $\mathcal{C}_n$ rather than the group itself is considered, we often use the term ``restricted graph'' to describe the reachability graph.

\section{$\mathcal{C}_1$ Presentation and Cayley Graph}\label{OneQubitSection}

In this section we give a presentation for the one-qubit Clifford group $\mathcal{C}_1$ and construct its Cayley graph. We will use this understanding of $\mathcal{C}_1$ to build a presentation for $\mathcal{C}_2$, as well as its subgroups, in Section \ref{TwoQubitSection}. We demonstrate that restricting the set of generators builds subgraphs of the $\mathcal{C}_1$ Cayley graph. We show that quotienting by a global phase reduces $\mathcal{C}_1$ to the symmetric group $S_4$.

 The one-qubit Clifford group $\mathcal{C}_1$ is generated by $\{H_i,P_i\}$, whose matrix representations are given in Eq. \eqref{HAndPMatrices}. Here $i \in \{1,n\}$ is the qubit being acted on in an $n$-qubit system. Relations \ref{HSquared}, \ref{PFourth}, and \ref{HPComm} give a presentation\footnote{All presentations in this paper were verified using the Magma computer algebra system \cite{MR1484478}. Additional details and code can be found in Appendix \ref{ExtendedTableAppendix}.} for $\mathcal{C}_1$,
\begin{align}
   \quad H_i^2 &= \mathbb{1}, \label{HSquared}\\
   \quad P_i^4 &= \mathbb{1}, \label{PFourth}\\ 
  (H_iP_i)^3 &= (P_iH_i)^3 = \omega \label{HPComm}, 
\end{align}
where $\omega^8 = \mathbb{1}$ acts as a global phase\footnote{There exist additional relations involving Clifford gates and $\omega$. Some notable ones which are used in Section \ref{TwoQubitSection} include $(H_iP_jC_{i,j})^6 = (H_iC_{i,j}P_j)^6 = \omega^6$.} for the group. Eqs. \eqref{HSquared}--\eqref{HPComm} can be directly verified by examining the matrix representations in Eq. \eqref{HAndPMatrices}. All elements of $\mathcal{C}_1$ act locally on qubits, and therefore cannot generate or modify entanglement in a physical system.

A \textit{Cayley graph} \cite{10.2307/2369306}, for a group $G$, is built by assigning a vertex to every element in $G$, and an edge for each generator of $G$. The structure of $\mathcal{C}_1$ can be visualized in the Cayley graph shown in Figure \ref{C1CayleyGraph}. Edges of this $\mathcal{C}_1$ Cayley graph represent $H_i$ and $P_i$, while vertices indicate the $192$ unique group elements. Since $H_i^2 = \mathbb{1}$, we use a single undirected edge to represent $H_i$. Directed edges are used to represent $P_i$, as $P_i^2 \neq \mathbb{1}$. In this Cayley graph representation, sequential products of group elements exist as graph paths. Different paths which start and end on the same pair of vertices represent products whose action on the initial element is identical. Loops in the Cayley graph correspond to a sequence of operations which act as the identity.
    \begin{figure}[h]
        \centering
        \includegraphics[width=8.4cm]{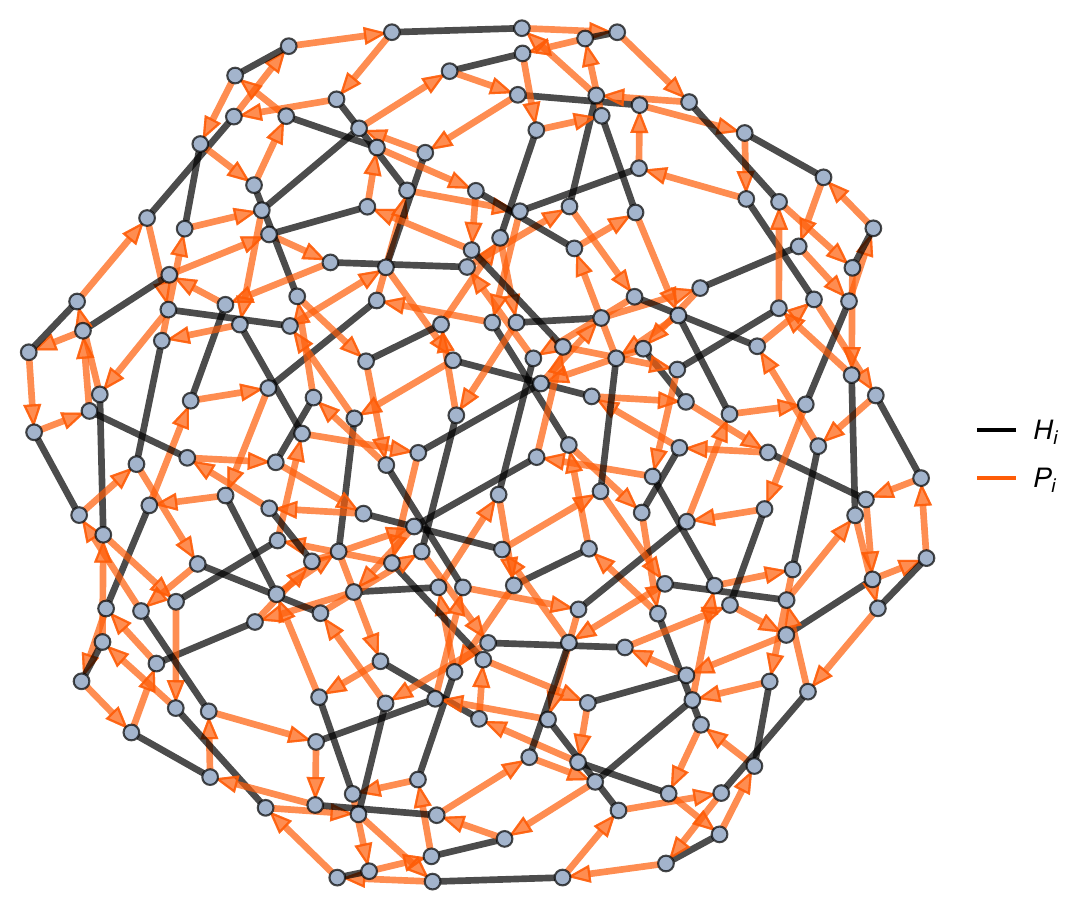}
        \caption{Cayley graph of $\mathcal{C}_1$, with vertices representing group elements and edges representing the generators $H_i$ and $P_i$. We use undirected edges for $H_i$ since $H_i^2 = \mathbb{1}$. The graph has $192$ vertices and $384$ edges, and completely encodes the $\mathcal{C}_1$ group structure.}
    \label{C1CayleyGraph}
    \end{figure}

When we ignore global phase, i.e.\ distinguish group elements only up to the factor $\omega$, $\mathcal{C}_1$ reduces to a quotient group with $24$ elements. This quotient group, is isomorphic to $S_4$, the symmetric group of degree $4$; we give its Cayley graph later in the paper, in Figure \ref{C1Quotient}. $S_4$ describes the rotational symmetries of an octahedron, like the well-known stabilizer octahedron shown in Figure \ref{StabilizerOctahedron}. 
	\begin{figure}[bh]
		\begin{center}
		\begin{overpic}[width=0.35\textwidth]{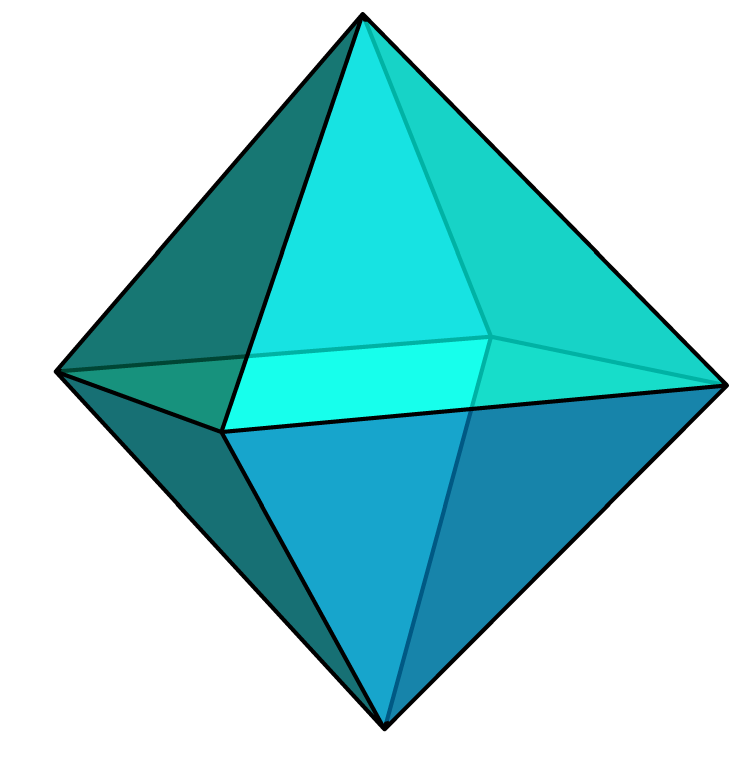}
		\put (44,101) {$\ket{0}$}
		\put (47,-3) {$\ket{1}$}
		\put (-4,50) {$\ket{+}$}
		\put (97,48) {$\ket{-}$}
		\put (23,36.5) {$\ket{i}$}
		\put (66,58) {$\ket{-i}$}
        \end{overpic}
        \caption{The stabilizer octahedron, with $6$ single-qubit stabilizer states at the corners, is often shown embedded in the Bloch sphere. The group $\mathcal{C}_1$, after quotienting by $\omega$, gives the $24$ orientation-preserving maps of this octahedron to itself.}
		\label{StabilizerOctahedron}
	\end{center}
	\end{figure}

In addition to modding by global phase, we can also construct subgroups of $\mathcal{C}_1$ by restricting our set of generators. The single-generator subgroups $\langle H_i \rangle$ and $\langle P_i \rangle$ are completely described by relations \ref{HSquared} and \ref{PFourth} respectively. The Cayley graphs of $\langle H_i \rangle$ and $\langle P_i \rangle$ are shown in Figure \ref{HadamardAndPhaseCayleyGraphs}.
    \begin{figure}[h]
        \centering
        \includegraphics[width=9cm]{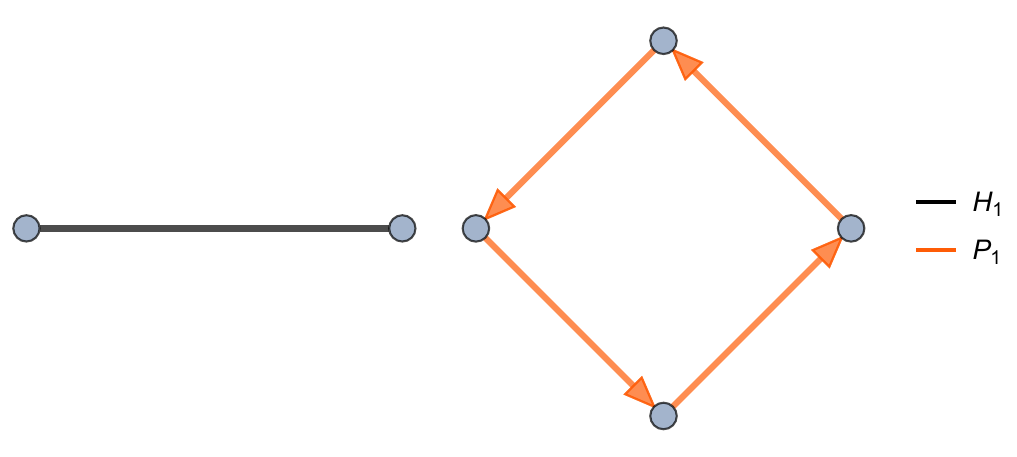}
        \caption{The Cayley graphs for $\mathcal{C}_1$ subgroups $\langle H_i \rangle$ and $\langle P_i \rangle$.}
    \label{HadamardAndPhaseCayleyGraphs}
    \end{figure}

We gave a presentation for $\mathcal{C}_1$ generated by $H$ and $P$, and introduced $\omega \equiv (H_iP_i)^3$ which acts as a global phase on $\mathcal{C}_1$. We introduced the concept of a Cayley graph, and constructed specific Cayley graphs for $\mathcal{C}_1$ and its single-generator subgroups $\langle H_i \rangle$ and $\langle P_i \rangle$. We described a quotient procedure for groups, and use it to quotient $\mathcal{C}_1$ by $\omega$ to recover $S_4$. Later, we will implement this quotient by $\omega$, as well as by stabilizer subgroups of quantum states, to generate reachability graphs from Cayley graphs. 

\section{$\mathcal{C}_2$ Presentation and Cayley Graphs}\label{TwoQubitSection}

In this section we give a presentation for the two-qubit Clifford group generated by $H, \,P,$ and $CNOT$. This presentation includes a set of operator-level relations, which serve as a set of state-independent constraints on Clifford circuits. We use this presentation to construct all subgroups of $\mathcal{C}_2$ which are generated by subsets of $\{H_i,H_j,P_i,P_j,C_{i,j},C_{j,i}\}$. We give the order of each $\mathcal{C}_2$ subgroup and show how each order is reduced after quotienting the group by $\omega$. For several examples we explicitly build up each element of a subgroup to demonstrate how our relations constrain combinations of Clifford operators.

Every group can be represented by a Cayley graph, which we build for all $\mathcal{C}_2$ subgroups. Since Cayley graphs are state-independent structures, we can use them to study Clifford orbits of arbitrary quantum states. We compute the graph diameter for each $\mathcal{C}_2$ subgroup Cayley graph, both before and after quotienting by $\omega$. We display the Cayley graphs for several example subgroups and highlight group relations that can be visualized as graph paths. In later sections, we will use the quotient procedure outlined here to construct reachability graphs as quotient spaces of Cayley graphs.

\subsection{$\mathcal{C}_2$ Presentation}\label{C2Presentation}

The two-qubit Clifford group $\mathcal{C}_2$ is generated by the set $\{H_i,H_j,P_i,P_j,C_{i,j},C_{j,i}\}$ which consists of the local Hadamard and phase gates, as well as the bi-local CNOT gate. Relations \ref{CSquared}-\ref{ChFourth}, in addition to the $\mathcal{C}_1$ relations \ref{HSquared}-\ref{HPComm}, give a presentation for $\mathcal{C}_2$:
\begin{align}
    C_{i,j}^2 &= \mathbb{1}, \label{CSquared}\\ 
    P_i^{-1}P_jP_i &= P_j, \label{PpComm}\\
    H_i^{-1}H_jH_i &= H_j, \label{HhComm}\\
    P_i^{-1}H_jP_i &= H_j, \label{HpComm}\\
    C_{i,j}H_jC_{i,j}P_jC_{i,j}P_j^3H_j &= P_i, \label{FourGenRelation}\\
    H_iH_jC_{j,i}H_iH_j &= C_{i,j}, \label{CNOTTransform}\\
    (C_{i,j}P_j)^4 &= P_i^2, \label{CpFourth}\\
   \quad C_{i,j}^{-1}C_{j,i}C_{i,j} &= C_{j,i}^{-1}C_{i,j}C_{j,i}, \label{CcComm}\\
    \quad  P_i^{3}C_{i,j}P_i &= C_{i,j}, \label{PCComm}\\
    \quad  (C_{i,j}H_j)^4 &= P_i^2.\label{ChFourth}\\ \nonumber
\end{align}
The relations \ref{CcComm}--\ref{ChFourth}, along with $H_i^2 = \mathbb{1}$ and $P_i^4 = \mathbb{1}$, can be removed to furnish a more minimal presentation%
\footnote{We additionally note that $\mathcal{C}_2$ can be minimally generated from the set $\{H_i, \,H_j, \,P_i, \,C_{j,i}\}$, as can be seen from relations \ref{FourGenRelation} and \ref{CNOTTransform}.} %
using only relations \ref{CSquared}--\ref{CpFourth}. We have nevertheless retained a number of non-minimal relations as they provide insight into the structure of $\mathcal{C}_2$ and will be useful for constructing subgroups in the following subsection. 

Every relation in Eqs. \eqref{CSquared}--\eqref{ChFourth} is a cycle in the Cayley graph of $\mathcal{C}_2$. We especially note relation \ref{ChFourth}, $(C_{i,j}H_j)^4 = P_i^2$, which allows us to build a phase operation using only Hadamard and CNOT. Since $P_i^2$ cannot modify entanglement, neither can the sequence $(C_{i,j}H_j)^4$. This relation is critical for demonstrating entropy bounds on reachability graphs in \cite{Keeler2023b}. Relation \eqref{ChFourth} is derived explicitly in \ref{ChFourthEqualsPSquared}. 

While our presentation for $\mathcal{C}_2$ does not depend on the choice of qubits $1$ and $2$, and describes the action of $\mathcal{C}_2$ on an $n$-qubit system, it is not a presentation for $\mathcal{C}_n$ when $n > 2$. A presentation for $\mathcal{C}_n$ requires additional generators for each increase in qubit number. One can, however, generalize our $\mathcal{C}_2$ presentation to a presentation for $\mathcal{C}_3$ by adding only four relations. Each of these four new relations pertain only to Hadamard and CNOT gates, and no new phase gate relations are needed. The additional relations can be found in \cite{Selinger2013}, where alternative presentations%
\footnote{The presentation in \cite{Selinger2013} is given with generators $H_i,\,P_i,$ and $CZ_{i,j}$, and offers a different set of relations. Additionally, $CZ_{i,j} = CZ_{j,i}$ while $C_{i,j} \neq C_{j,i}$.} %
for $\mathcal{C}_1, \,\mathcal{C}_2,$ and $\mathcal{C}_3$ are studied using Clifford circuit normal forms.

\subsection{$\mathcal{C}_2$ Subgroups}

We now give a complete description of all $\mathcal{C}_2$ subgroups built by restricting the generating set. First, we list all such subgroups, as well as their group and Cayley graph properties, in Table \ref{tab:OrbitLengthCliffordSubgroupNoRelations}. We directly construct several subgroups as examples that highlight how our relations constrain strings of Clifford gates at the operator level. We will use these state-independent relations in Section \ref{ReachabilityGraphsFromCayleyGraphs} to build reachability graphs for non-stabilizer quantum states, and further in \cite{Keeler2023b} to bound entanglement entropy. 

We can construct subgroups of $\mathcal{C}_2$ by restricting our set of generators to subsets of $\{H_i,H_j,P_i,P_j,C_{i,j},C_{j,i}\}$. One simple case is the subgroup $\mathcal{C}_1$, generated by only $\{H_i,P_i\}$ and discussed in Section \ref{OneQubitSection}. Table \ref{tab:OrbitLengthCliffordSubgroupNoRelations} gives a list of all subgroups constructed in this way. The Table gives the order of each subgroup, as well as the graph diameter (the maximum over all minimum distances between vertices) of the Cayley graph for each subgroup. Since each subgroup below is isomorphic under qubit exchange, we only list one example for each generating set. Subgroups in bold are explicitly constructed in the following text.
\begin{table}[h]
    \centering
    \begin{tabular}{|c||c|c||c|c|}
    \hline
    Generators & Order & Diam. (w/ phase) & Factor & (no phase)\\
    \hline
    \hline
    $\mathbf{\{H_1\}}$ & $2^\dagger$ & 1 & - & - \\
    \hline
    $\mathbf{\{C_{1,2}\}}$ & $2^\dagger$ & 1 & - & - \\
    \hline
    $\mathbf{\{P_1\}}$ & $4$ & 3 & - & - \\
    \hline
    $\mathbf{\{H_1,\,H_2\}}$ & $4$ & 2 & - & - \\
    \hline
    $\mathbf{\{C_{1,2},\,C_{2,1}\}}$ & $6$ & 3 & - & -\\
    \hline
    \{$H_1,\,P_2\}$ & $8^\dagger$ & 4 & - & - \\
    \hline
    \{$P_1,\,C_{1,2}\}$ & $8^\dagger$ & 4 & - & - \\
    \hline
    \{$P_1,\,P_2$\} & $16$ & 6 & - & - \\
    \hline
    \{$H_1,\,C_{2,1}\}$ & $16^\dagger$ & 8 & - & - \\
    \hline
    \{$H_1,\,C_{1,2}\}$ & $16^\dagger$ & 8 & - & - \\
    \hline
    $\mathbf{\{H_1,\,P_2,\,C_{2,1}\}}$ & $32$ & 6 & - & - \\
    \hline
    $\mathbf{\{P_2,\,C_{1,2}\}}$ & $32$ & 8 & - & - \\
    \hline
    \{$P_1,\,P_2,\,C_{2,1}$\} & $64$ & 7 & - & -\\
    \hline
    \{$P_1,\,C_{2,1},\,C_{1,2}$\} & $192$ & 11 & - & - \\
    \hline
    \{$H_1,\,P_1$\} & $192$ & 16 & 8 & 6 \\
    \hline
    \{$H_1,\,H_2,\,P_1$\} & $384$ & 17 & 8 & 7 \\
    \hline
    \{$P_1,\,P_2,\,H_1$\} & $768$ & 19 & 8 & 9 \\
    \hline
    \{$H_1,\,C_{2,1},\,C_{1,2}\}$ & $2304^*$ & 26 & 2 & 15 \\
    \hline
    \{$H_1,\,H_2,\,C_{1,2}\}$ & $2304^*$ & 27 & 2 & 17 \\
    \hline
    $\mathbf{\{H_1,\,H_2,\,C_{1,2},\,C_{2,1}\}}$ & $2304^*$ & 25 & 2 & 15 \\
    \hline
    \{$H_1,\,P_1,\,C_{2,1}\}$ & $3072^*$ & 19 & 8 & 9 \\
    \hline
    \{$H_1,\,P_1,\,C_{1,2}$\} & $3072$ & 19 & 8 & 11 \\
    \hline
    \{$H_1,\,P_1,\,P_2,\,C_{2,1}\}$ & $3072^*$ & 19 & 8 & 9 \\
    \hline
    \{$H_1,\,H_2,\,P_1,\,P_2$\} & $4608$ & 17 & 8 & 12 \\
    \hline
    $\mathbf{\{H_1,\,P_2,\,C_{1,2}\}}$ & $9216$ & 24 & 8 & 13 \\
    \hline
    \{$H_1,\,H_2,\,P_1,\,C_{2,1}\}$ & $92160^*$ & 21 & 8 & 13 \\
    \hline
    \{$H_1,\,H_2,\,P_1,\,C_{1,2}\}$ & $92160^*$ & 21 & 8 & 16 \\
    \hline
    \{$H_1,\,P_1,\,P_2,\,C_{1,2}\}$ & $92160^*$ & 21 & 8 & 14 \\
    \hline
    \{$H_1,\,H_2,\,P_1,\,P_2,\,C_{1,2},\,C_{2,1}\}$ & $92160^*$ & 19 & 8 & 11\\
    \hline
    \end{tabular}
\caption{Subgroups generated by generator subsets are shown in the leftmost column. We give the order of each subgroup and its Cayley graph diameter, both before and after modding by global phase. The third column gives the factor reduction by removing global phase. An asterisk indicates groups with the same elements, and a dagger indicates groups with isomorphic Cayley graphs. Bolded subgroups are explicitly constructed in the text.}
\label{tab:OrbitLengthCliffordSubgroupNoRelations}
\end{table}

Some subgroups have the same order and are isomorphic, e.g. $\langle H_1 \rangle \cong \langle C_{1,2}\rangle$ and $\langle H_1, \,P_1\rangle \cong\langle P_1, \,C_{1,2}\rangle$. Other subgroups have the same order, but are not isomorphic; for example, subgroups $\langle P_1, \,P_2 \rangle,\, \langle H_1, \,C_{1,2} \rangle,$ and $\langle H_1, \,C_{2,1} \rangle$ all have order $16$, but $\langle P_1, \,P_2 \rangle \ncong \langle H_1, \,C_{1,2} \rangle \cong \langle H_1, \,C_{2,1} \rangle$. Even when generated groups are isomorphic, as is the case for subgroups $\langle H_1, \,H_2, \,C_{1,2} \rangle, \,\langle H_1, \,C_{1,2}, \,C_{2,1} \rangle,$ and $\langle H_1, \,H_2, \,C_{1,2}, \,C_{2,1} \rangle$, we emphasize that they may not have isomorphic Cayley graphs, since the Cayley graph depends on not just the group but a choice of generators: here, none of the three descriptions do.
\newpage
Subgroup $\langle H_1, \,H_2, \,C_{1,2} \rangle$, which contains the same elements as $\langle H_1, \,C_{1,2}, \,C_{2,1}\rangle$ and $\langle H_1, \,H_2, \,C_{1,2}, \,C_{2,1}\rangle$, has the Cayley graph of largest diameter. Adding $C_{2,1}$ to the set $\langle H_1, \,H_2, \,C_{1,2} \rangle$ generates no new group elements, and instead lowers the graph diameter by introducing additional edges between the set of vertices. Adding $P_1$ to the set $\langle H_1, \,H_2, \,C_{1,2} \rangle$ does generate additional elements---in fact $\langle H_1, \,H_2, \,P_1, \,C_{1,2} \rangle$ generates all of $\mathcal{C}_2$---but also lowers the Cayley graph diameter by adding additional edges.

We now discuss in depth how several subgroups are constructed, offering an explanation for order of each group seen in Table \ref{tab:OrbitLengthCliffordSubgroupNoRelations}. An extended version of Table \ref{tab:OrbitLengthCliffordSubgroupNoRelations}, containing the relations needed to present each subgroup, is given in Appendix \ref{ExtendedTableAppendix}. Additional Cayley graph illustrations are given in Appendix \ref{ExtendedTableAppendix}.

\paragraph{Single-Generator Subgroups:} The $\mathcal{C}_2$ subgroups generated by a single Clifford element, i.e. $\langle H_i \rangle, \,\langle P_i \rangle$, and $\langle C_{i,j} \rangle$, are completely described by Eqs. \eqref{HSquared}, \eqref{PFourth}, and \eqref{CSquared} respectively. Groups $\langle H_i \rangle$ and $\langle P_i \rangle$ were discussed in Section \ref{OneQubitSection}, and their Cayley graphs shown in Figure \ref{HadamardAndPhaseCayleyGraphs}. At two qubits we have the possibility of bi-local gates, such as $C_{i,j}$. Since $C_{i,j}^2 = \mathbb{1}$, as shown by Eq. \eqref{CSquared}, the group $\langle C_{i,j} \rangle$ is isomorphic to $\langle H_i \rangle$. 

\paragraph{Subgroups $\langle H_i,H_j \rangle$ and $\langle C_{i,j},C_{j,i} \rangle$:}
The subgroup generated by $\{H_i,H_j\}$ is completely described by Eqs. \eqref{HSquared} and \eqref{HhComm}. Since $H_i$ and $H_j$ commute for $i\neq j$, the group $\langle H_i,H_j \rangle$ has only $4$ elements, and its structure can be easily understood by examining the left image of Figure \ref{HBoxCHexPaper}. Similarly, the subgroup $\langle C_{i,j},C_{j,i} \rangle$ is described by Eqs. \eqref{CSquared} and \eqref{CcComm}. The elements $C_{i,j}$ and $C_{j,i}$ do not commute, but instead form the hexagonal structure to the right of Figure \ref{HBoxCHexPaper}.
\begin{figure}[h]
\begin{center}
\includegraphics[width=10.5cm]{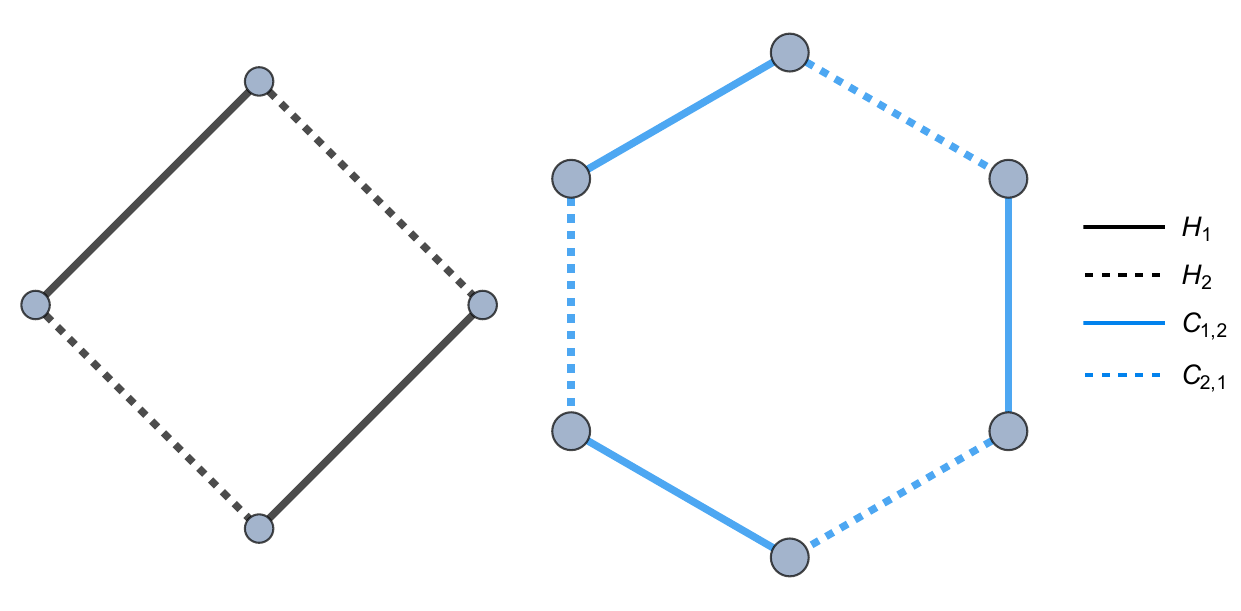}
\caption{Cayley graphs for subgroups $\langle H_1,H_2\rangle$ and $\langle C_{1,2},C_{2,1} \rangle$ arrange into square and hexagonal structures respectively. All edges in these figures are undirected since both $H$ and $C$ are their own inverse.}
\label{HBoxCHexPaper}
\end{center}
\end{figure}

\paragraph{Subgroup $\langle H_1, \,P_2, \,C_{2,1} \rangle$:} The subgroup generated by $\{H_1, \,P_2, \,C_{2,1}\}$ can be presented using Eqs. \eqref{HSquared}, \eqref{PFourth}, \eqref{HPComm}, \eqref{CSquared}, \eqref{HpComm}, \eqref{PCComm}, and \eqref{ChFourth}. Since $P_2$ commutes with both $H_1$ and $C_{2,1}$, all $P_2$ in a word can be pushed completely to one end, such that they occur either before or after all $H_1$ and $C_{2,1}$ operations. In this way, the elements of $\langle H_1, \,P_2, \,C_{2,1} \rangle$ can be constructed as products of some element from $\langle H_1, \,C_{2,1} \rangle$ with an element from the set $\{\mathbb{1},\,P_2,\,P_2^2,\,P_2^3\}$. Initially, this generates $16 \times 4 = 64$ words, however Eq. \eqref{ChFourth} demonstrates how $P_2^2$ can be built from $H_1$ and $C_{2,1}$. Therefore, all words containing $P_2^2$ or $P_2^3$ can be reduced to a shorter sequence, and the order $\langle H_1, \,P_2, \,C_{2,1} \rangle$ becomes $32$. 

\paragraph{Subgroup $\langle P_2,\,C_{1,2}\rangle$:} The subgroup generated by $\{ P_2,\,C_{1,2}\}$ can be built using Eqs. \eqref{PFourth}, \eqref{CSquared}, and \eqref{CpFourth}. Figure \ref{C12P2CayleyGraph} shows the Cayley graph for $\langle P_2,\,C_{1,2}\rangle$. We construct this subgroup by building words of alternating $C_{1,2}$ and $p$, where
\begin{equation}\label{PhaseGroup}
    p \in \{\mathbb{1},\,P_2,\,P_2^2,\,P_2^3\}.    
\end{equation}
 For clarity, we introduce the notation $\overline{p} \in \{P_2,\,P_2^2,\,P_2^3\}$. We generate all words containing up to $2$ CNOT operations, since the relation
 \begin{equation}\label{CpCpRelation}
     C_{i,j}P_jC_{i,j}P_j = P_jC_{i,j}P_jC_{i,j},
 \end{equation}
 derived in Eq. \eqref{CpCpRelationDerivation}, allows words with $3$ or more $C_{1,2}$ operations to be written as duplicates of words containing fewer $C_{1,2}$ operations.
    \begin{enumerate}
        \item For words containing $0$ $C_{1,2}$ operations, we have only the set $p$, containing $4$ unique elements.
        \item Words containing a $1$ $C_{1,2}$ operation have the form $pC_{1,2}p$, with full choice of $p$ on either side of $C_{1,2}$, giving $4 \times 4 = 16$ possible new elements.
        \item Words containing $2$ or more $C_{1,2}$ operations must alternate $C_{1,2}$ and $\overline{p}$ operations, or could otherwise be reduced by $(C_{1,2})^2 = \mathbb{1}$. Thus all $2$ $C_{1,2}$ words have the form $C_{1,2}\overline{p}C_{1,2}p$ (note that we never include $\mathbb{1}$ between $C_{1,2}$ operations as it could be carried through a $C_{1,2}$ to collapse the $C_{1,2}$ pair). We apply Eq. \eqref{CpCpRelation} to any word of the form $pC_{1,2}pC_{1,2}p$ to move all $p$ operations as far to the right as possible. In this way, we generate $3 \times 4 = 12$ new elements.
    \end{enumerate}
The above construction explicitly generates the $4+16+12 = 32$ elements of $\langle C_{1,2},\, P_2 \rangle$, each having one of the forms $\{p,\, pCp,\, C\overline{p}Cp\}$.

\begin{figure}[h]
\begin{center}
\includegraphics[width=9cm]{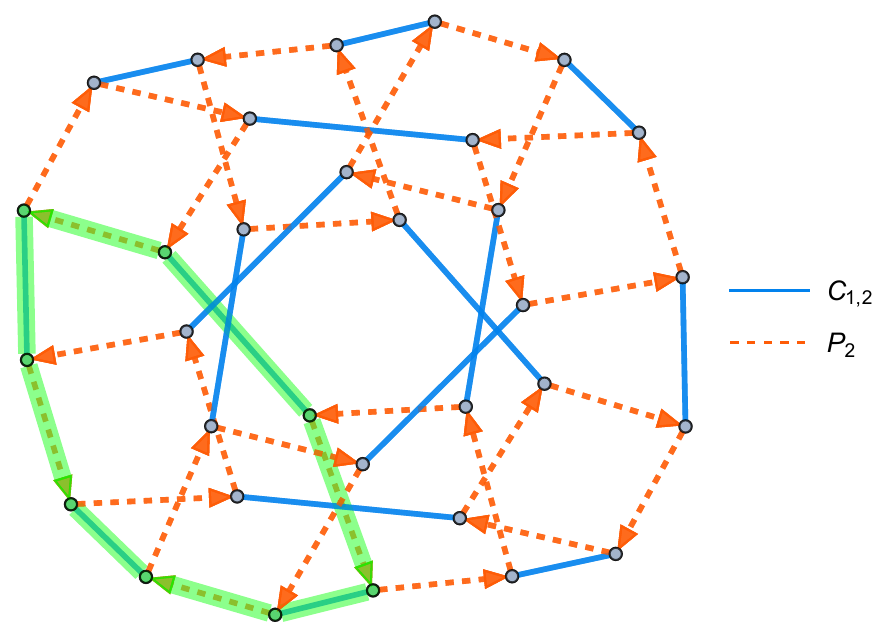}
\caption{Cayley graph of $\langle P_2,\,C_{1,2}\rangle$ subgroup, which is useful for visualizing relations such as $C_{i,j}P_jC_{i,j}P_j = P_jC_{i,j}P_jC_{i,j}$, highlighted in green.}
\label{C12P2CayleyGraph}
\end{center}
\end{figure}

\paragraph{Subgroup $\langle H_1,\, P_2,\, C_{1,2} \rangle$:} The subgroup generated by $\{H_1,\, P_2, \,C_{1,2}\}$ can be built using Eqs. \eqref{HSquared}--\eqref{HPComm}, \eqref{CSquared}, \eqref{PpComm}, \eqref{HpComm}, \eqref{FourGenRelation}, \eqref{CNOTTransform}, \eqref{CcComm}, \eqref{PCComm}, and \eqref{ChFourth}. We will also use
\begin{equation}\label{CHpSquared}
    (C_{1,2}H_1P_2^2)^2 = (P_2^2H_1C_{1,2})^2,
\end{equation}
which can be derived from the relations \eqref{CSquared}--\eqref{ChFourth}. 

As with $\langle P_2, \,C_{1,2} \rangle$, we construct $\langle H_1,\,P_2, \,C_{1,2} \rangle$ by building words of alternating $H_1$ and one element from $\{P_2, \,C_{1,2}\}$, since $(H_1)^2 = \mathbb{1}$. We only need to construct all words containing up to $5$ $H_1$ operations, since words with $6$ $H_1$ operations can be reduced using $(H_1C_{1,2})^8 = \mathbb{1}$ or $(H_1C_{1,2}P_2)^6 = (P_2H_1C_{1,2})^6 = \omega^6$. The full construction of $\langle H_1,\,P_2, \,C_{1,2} \rangle$ is given in Appendix \ref{ExtendedTableAppendix}. 

Appending $H_2$ to the generating set $\{H_1, \,P_2, \,C_{1,2}\}$ results in a factor of $10$ more elements, giving the full group $\mathcal{C}_2$. Adding more generators to $\{H_1, \,H_2, \,P_2, \,C_{1,2}\}$ does not add more group elements, and instead lowers the graph diameter. In fact, the set $\{H_1, \,H_2, \,P_2, \,C_{1,2}\}$ is a minimal generating set%
\footnote{There exist generating sets for $\mathcal{C}_2$ with fewer elements which involve composite Clifford operations e.g. the set $\{H_1P_1, \,H_2P_2, \,C_{1,2}\}$.} %
for $\mathcal{C}_2$ with the generators $\{H_1, \,H_2, \,P_1, \,P_2, \,C_{1,2}, \,C_{2,1}\}$. 

\paragraph{Subgroup $\langle H_1, \,H_2, \,C_{1,2}, \,C_{2,1} \rangle$:} The group generated by $\{H_1, \,H_2, \,C_{1,2},\,C_{2,1} \}$ can be understood from Eqs. \eqref{HSquared}, \eqref{PFourth}, \eqref{HPComm}, \eqref{CSquared}, \eqref{CcComm}, and \eqref{ChFourth}. We additionally make use of the identity,
\begin{equation}\label{HadamardTransform}
    C_{j,i}C_{i,j}C_{j,i}H_iC_{j,i}C_{i,j}C_{j,i} = H_j,
\end{equation}
which transforms a Hadamard using a sequence of CNOT operations. Initially, we might assume this group is the direct product%
\footnote{While $\langle H_1, \,H_2, \,C_{1,2}, \,C_{2,1} \rangle$ is not a direct product, one example which is a direct product is $\langle H_1, \,H_2, \,P_1,\,P_2 \rangle  = \langle H_1, \,P_1 \rangle \times \langle H_2,\,P_2 \rangle $, which has $192^2/8=4608$ elements.} %
of $\langle H_1, \,C_{1,2} \rangle$ and $\langle H_2, \,C_{2,1} \rangle$, thereby having $256$ elements. However, Eq. \eqref{ChFourth} importantly demonstrates how a sequence of $H$ and $C$ operations can create $P_i^2$. This generates a factor of $9$ more elements, for a total of $2304$. Furthermore, since Eq. \eqref{CNOTTransform} offers a way to construct $C_{2,1}$ as the product of $H_1, \,H_2,$ and $C_{1,2}$, the subgroup $\langle H_1, \,H_2, \,C_{1,2}, \,C_{2,1}\rangle$ can be minimally generated from sets $\{H_i,\,H_j,\,C_{i,j}\}$ or $\{H_i,\,C_{i,j},\,C_{j,i}\}$. Figure \ref{HhCcCayleyGraph} shows the Cayley graph for $\langle H_1, \,H_2, \,C_{1,2}, \,C_{2,1} \rangle$.
\begin{figure}[h]
\begin{center}
\includegraphics[width=10cm]{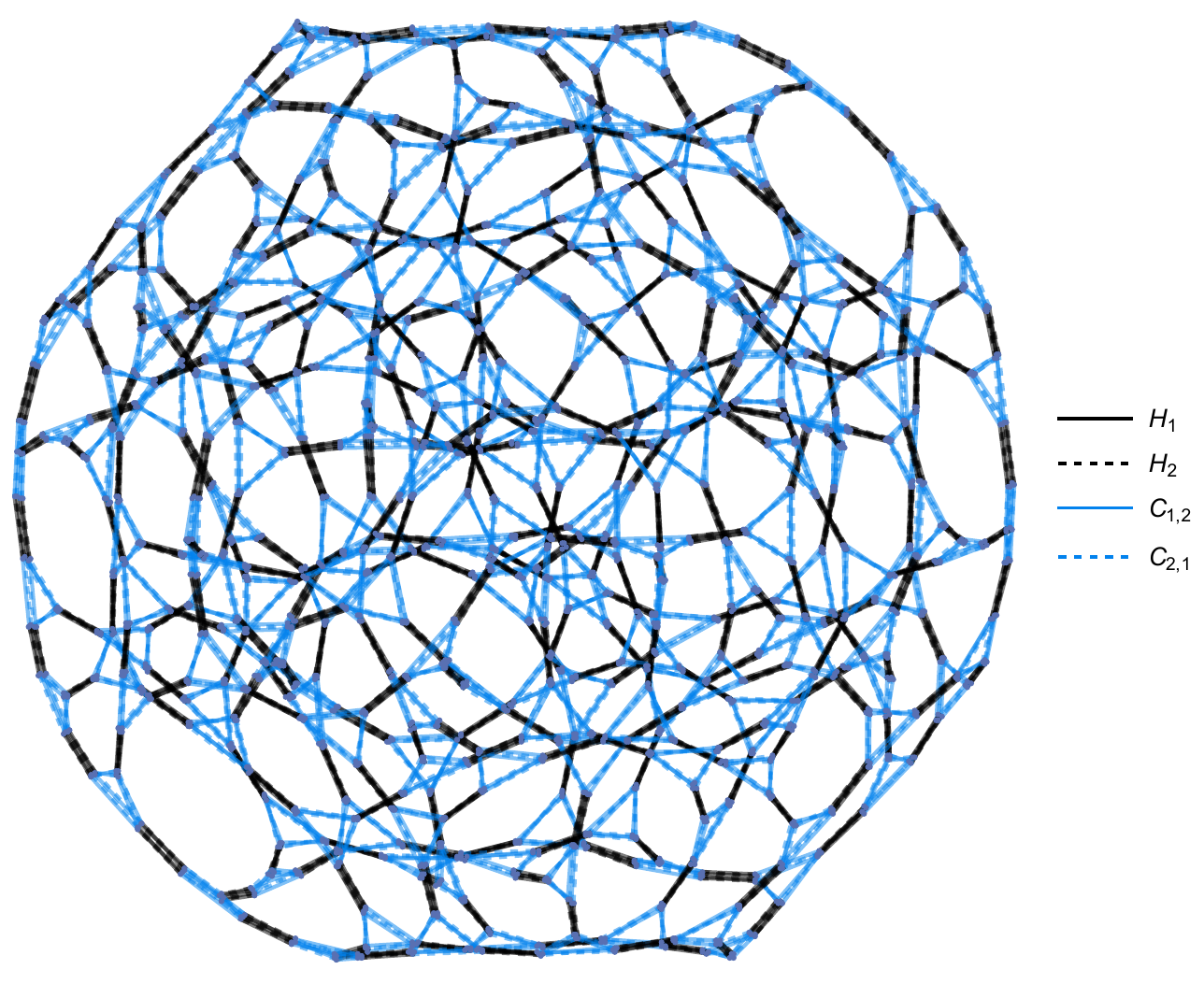}
\caption{The Cayley graph for $\mathcal{C}_2$ subgroup $\langle H_1,\,H_2,\,C_{1,2},\,C_{2,1}\rangle$. This graph has $2304$ vertices, and displays the orbit of an arbitrary quantum state under the action of $\langle H_1,\,H_2,\,C_{1,2},\,C_{2,1}\rangle$.}
\label{HhCcCayleyGraph}
\end{center}
\end{figure}

Through building this presentation and constructing subgroups in detail, we have developed a functional understanding of $\mathcal{C}_2$. We identified a collection of Clifford group relations which are independent of the state set being acted on. This state-independent description will allow us to extend an analysis beyond the set of stabilizer states, and to explore action of the Clifford group on arbitrary quantum states. By systematically constructing all words in a subgroup, we were able to highlight exactly how our relations transform Clifford strings. We found additional relations, such as $C_{i,j}P_jC_{i,j}P_j = P_jC_{i,j}P_jC_{i,j}$ and $C_{j,i}C_{i,j}C_{j,i}H_iC_{j,i}C_{i,j}C_{j,i} = H_j$, which are not included in our presentation, but can be derived from relations \eqref{CSquared}--\eqref{ChFourth}. These auxiliary relations proved useful for understanding why certain sequences of Clifford gates are non-trivially equivalent to others, as well as how entanglement entropy evolves through Clifford circuits.

Constructing the Cayley graph for each subgroup further illustrated the structure of $\mathcal{C}_2$ and its subgroups. These graphs enabled us to visualize the operator relations that were used to build each subgroup. Computing the Cayley graph diameter, and observing its change after adding generators or quotienting by $\omega$, offered additional intuition for $\mathcal{C}_2$ subgroup connectivity. These Cayley graphs will constitute a state-independent starting point for constructing reachability graphs in the following section. By considering quotient spaces of this purely group-theoretic structure, we are able to analyze the orbit of arbitrary quantum states under a selected gate set. Furthermore, by understanding how this quotient protocol modifies a Cayley graph we are able to build alternative graphs that can track and bound the evolution of certain system properties, such as entanglement entropy \cite{Keeler2023b}.

\section{Reachability Graphs as Cayley Graph Quotients}\label{ReachabilityGraphsFromCayleyGraphs}

We now generalize the notion of reachability graphs by constructing them as quotient spaces of Cayley graphs. We define equivalence classes on group elements by their congruent action on a chosen state. We demonstrate how identifying vertices in a Cayley graph collapses its structure to a state reachability graph. Starting with the state-independent Cayley graph, we are able to strictly bound the orbits for different states under select sets of gates.

In each example below, we first quotient the Cayley graph by global phase, then by the stabilizer subgroup for a chosen state. We explicitly compute quotients of $\mathcal{C}_1$ and $\mathcal{C}_2$ which yield familiar reachability graphs for one and two qubit stabilizer states. Restricting to the subgroup $\langle H_1,\,H_2,\,C_{1,2},\,C_{2,1} \rangle \lhd \mathcal{C}_2$, which we denote $\HC$ going forward, we recover the reachability graphs studied in \cite{Keeler2022}.

By adding $P_1$ and $P_2$ to the set $\{ H_1,\,H_2,\,C_{1,2},\,C_{2,1} \}$, we consider the full action of $\mathcal{C}_2$. We observe how the addition of these two phase gates ties disconnected $\HC$ subgraphs together. Finally we apply our generalized understanding of $\mathcal{C}_2$ operators to extend beyond the set of stabilizer states, and generate $\HC$ orbits for non-stabilizer states.

\subsection{Quotient by Global Phase}\label{QuotientByPhase}

In this section, we define a procedure to quotient%
\footnote{Formally we are building the map $\mathcal{Q}: G \rightarrow G/N$, which takes elements of a group $G$ into a set of equivalence classes $G/N$. The set of equivalence classes is fixed by choice of congruence relation, e.g. congruence up to action by $\omega^n$.} %
by elements which act as a phase on the group. When building reachability graphs from Cayley graphs we always quotient first by the group element $\omega = (H_1P_1)^3$, as in Eq. \eqref{HPComm}, since quantum states can only be operationally distinguished up to global phase. Accordingly, all quotient groups $G/H$ we construct going forward are quotients by the product $\langle \omega\rangle \times H$.

We begin by explicitly building the quotient of $\mathcal{C}_1/\langle \omega \rangle$. As discussed in Section \ref{OneQubitSection}, the group $\mathcal{C}_1$ is generated by $\{H_1,\,P_1\}$ and contains $192$ elements. When quotienting by $\omega$, we identify together all elements of $\mathcal{C}_1$ that are equivalent up to powers of $\omega$. For $g_1,\,g_2 \in \mathcal{C}_1$,
\begin{equation}
    g_1 \equiv g_2 \textnormal{ if } g_1 = \omega^{n\Mod{8}}g_2.
\end{equation}
This identification defines the normal subgroup $\langle \omega \rangle \lhd \mathcal{C}_1$, where
\begin{equation}
    \langle \omega \rangle \equiv \{\mathbb{1},\,\omega,\,\omega^2,\,\omega^3,\,\omega^4,\,\omega^5,\,\omega^6,\,\omega^7\},
\end{equation}
and allows us to construct the quotient group $\bar{\mathcal{C}_1} \equiv \mathcal{C}_1/\langle \omega \rangle$. 

The quotient $\bar{\mathcal{C}_1}$ consists of $24$ equivalence classes of $8$ elements each. This is a factor of $8$ reduction in group order, from $192$ to $24$, as shown in the $\{H_1, \,P_1\}$ row of Table \ref{tab:OrbitLengthCliffordSubgroupNoRelations}. All elements of each class are equivalent up to powers of $\omega$. The $24$ equivalence classes can be represented by elements of the form
\begin{equation}\label{NonTrivEqClasses}
    \begin{split}
        \{p,\, pH_1p,\, H_1P_1^2H_1p\},
    \end{split}
\end{equation}
where $p \in \{\mathbb{1},\, P_1, \, P_1^2,\, P_1^3\}$ as defined in Eq. \eqref{PhaseGroup}.

Quotienting $\mathcal{C}_1$ by $\langle \omega \rangle$ likewise modifies the $\mathcal{C}_1$ Cayley graph by gluing together all vertices that represent operators in the same equivalence class. Figure \ref{C1Quotient} shows the Cayley graph of $\mathcal{C}_1$ before and after modding out by $\omega$. Each vertex in the $\bar{\mathcal{C}_1}$ graph represents $8$ elements of $\mathcal{C}_1$, collapsing the $192$ vertices of the $\mathcal{C}_1$ Cayley graph down to $24$. Every $H_1$ edge in the contracted graph represents the $8$ operators $\omega^nH_1$, and similarly for $P_1$.
    \begin{figure}[h]
        \centering        
        \includegraphics[width=13cm]{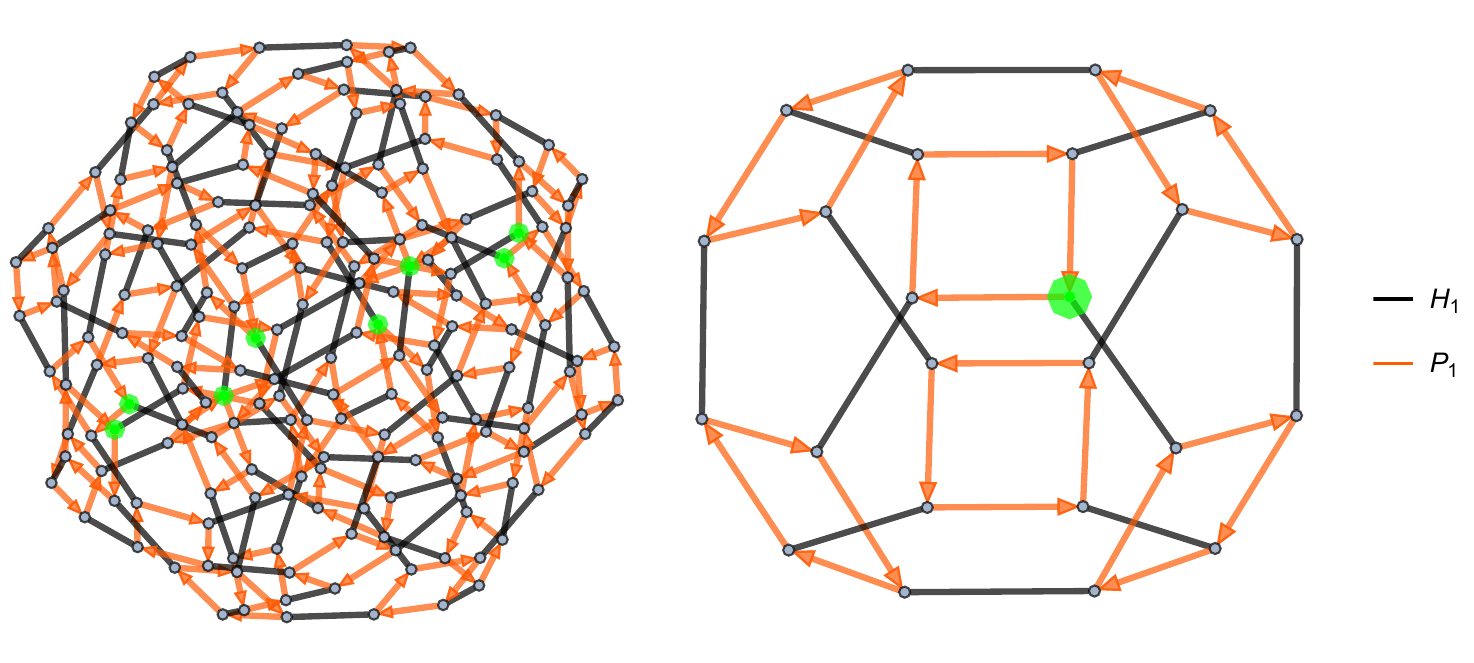}
        \caption{Cayley graph of $\mathcal{C}_1$  before and after quotienting by $\omega$. The $192$ vertices in the $\mathcal{C}_1$ Cayley graph collapse to $24$ vertices in the $\bar{\mathcal{C}_1}$ quotient graph. Every edge in the quotient graph represents the $8$ edges $\omega^nH_1$ and $\omega^nP_1$. One set of $8$ vertices, which are identified to a single vertex under this quotient, is highlighted in green.}
        \label{C1Quotient}
    \end{figure}

We have described a procedure for quotienting by $\omega$, which acts as a global phase. In the following sections, we will first quotient by $\omega$ when constructing reachability graphs. Quotienting a group by $\omega$ contracts the Cayley graph, by identifying vertices which represent elements equivalent up to $\omega$. As we highlight below, a similar graph contraction will yield reachability graphs as Cayley graph quotients, which we will highlight further in the following sections.

\subsection{Quotient by Stabilizer Subgroup}

In this section we show how to quotient a group by the stabilizer subgroup of a chosen state, and how to construct the state's reachability graph as a quotient space of the group Cayley graph. While Cayley graphs offer a state-independent description of a group, the orbit of a particular state under that group action is state-dependent. Our quotient procedure defines the collapse of a group Cayley graph into a subgraph which gives the reachability graph for a chosen state.

A state's reachability graph displays the evolution of that state under some chosen set of quantum gates. Since we are deriving each reachability graph from the Cayley graph of a group, this procedure can be applied to any\footnote{We do require that the state should be thought of as a state on $n$ qubits, i.e. with a fixed factorization in a fixed $2^n$-dimensional Hilbert space.} chosen quantum state on which the group acts.

The general procedure is to identify a group $G$ which acts on a Hilbert space $\Hil$, as well as a choice of generators for $G$. We first quotient $G$ by the global phase%
\footnote{For the general $G$, $\omega$ can be any element that acts as a root of unity times the identity operator.  For the Clifford subgroups we study here, $\omega$ will be an eighth root of unity.}
$\omega$, giving the quotient group $\bar{G} = G/\langle \omega \rangle$. Each element of $\bar{G}$ is isomorphic to the equivalence class $\omega^ng \in G$, for some $g \in G$. We then identify a state $\ket{\psi} \in \Hil$, which selects the stabilizer subgroup $\Stab_{\bar{G}}(\ket{\psi})$ that we will use to quotient $\bar{G}$. Since $\Stab_{\bar{G}}(\ket{\psi})$ is a normal subgroup, we can generate the quotient group $\bar{G}/\Stab_{\bar{G}}(\ket{\psi})$ by computing all cosets
\begin{equation}
    g\cdot\Stab_{\bar{G}}(\ket{\psi}) \quad \forall g \in \bar{G}.
\end{equation}

Constructing the quotient group above again generates a set of equivalence classes on $G$, with elements of each class congruent in their action on $\ket{\psi}$. To map elements between different equivalence classes we define the function $f: \Stab_{\bar{G}}(\ket{\psi}) \rightarrow \Stab_{\bar{G}}(\ket{\phi})$, where
\begin{equation}
    f(g) = hg^{-1}, \quad \forall \,g \in G_{\ket{\psi}},\, h \in G_{\ket{\phi}}.
\end{equation}
For example, to transform $P_1 \in \Stab_{\bar{\mathcal{C}_2}}(\ket{00})$ to $H_1P_1^2H_2P_2^2 \in \Stab_{\bar{\mathcal{C}_2}}(\ket{GHZ}_2)$ we apply the sequence $H_1P_1^2H_2P_2^2P_1^{-1}$. 

As an illustration of this procedure, we construct the quotient of $\mathcal{C}_1$ by $\langle \omega \rangle \times \Stab_{\mathcal{C}_1}(\ket{0})$. We first build the quotient group $\bar{\mathcal{C}}_1 = \mathcal{C}_1/\langle \omega \rangle$ as detailed in Section \ref{QuotientByPhase}. We then identify the stabilizer group $\Stab_{\mathcal{C}_1}(\ket{0}) \lhd \bar{\mathcal{C}}_1$, which comprises the $4$ elements that stabilize $\ket{0}$, i.e.
\begin{equation}
    \Stab_{\mathcal{C}_1}(\ket{0}) = \{\mathbb{1}, \,P_1, \,P_1^2, \,P_1^3\}.
\end{equation}
Quotienting $\bar{\mathcal{C}}_1$ by $\Stab_{\mathcal{C}_1}(\ket{0})$ then gives a set of $6$ equivalence classes, with a representative element from each class being
\begin{equation}\label{RepresentativesC1}
    \{\mathbb{1},\, pH_1,\, H_1P_1^2H_1\},
\end{equation}
with $p \in \{\mathbb{1},\, P_1, \, P_1^2,\, P_1^3\}$ as before. The elements of each equivalence class are identified by multiplying each representative in Eq. \eqref{RepresentativesC1} by the $4$ elements of $\Stab_{\mathcal{C}_1}(\ket{0})$. 

We build the graph corresponding to $\bar{\mathcal{C}}_1/\Stab_{\mathcal{C}_1}(\ket{0})$ by assigning a vertex to each of the $6$ equivalence classes. Figure \ref{VacuumQuotient} shows the $\mathcal{C}_1$ Cayley graph before and after modding by $\langle \omega \rangle \times \Stab_{\mathcal{C}_1}(\ket{0})$. The $192$-vertex $\mathcal{C}_1$ Cayley graph is reduced to a graph with $6$ vertices, which is isomorphic to the complete reachability graph for single-qubit stabilizer states.
\begin{figure}[h]
    \centering
		\begin{overpic}[width=12.5cm]{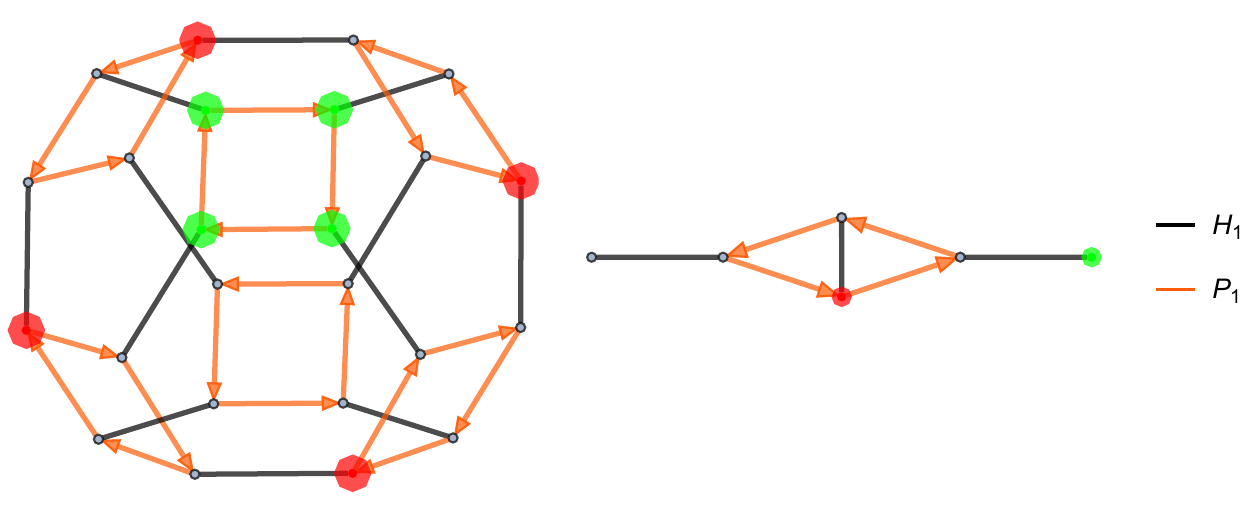}
		\put (79.5,22){\footnotesize{$\ket{0} \searrow$}}
		\put (67.5,14.5) {\footnotesize{$\nwarrow \ket{-i}$}}
        \end{overpic}
        \caption{Quotient of $\mathcal{C}_1$ by $\Stab_{\mathcal{C}_1}(\ket{0})$, the stabilizer subgroup of $\ket{0}$. The $\mathcal{C}_1$ Cayley graph on the left collapses to a $6$-vertex reachability graph on the right. Four green vertices identify to a single vertex representing the equivalence class of $\Stab_{\mathcal{C}_1}(\ket{0})$, while four red vertices likewise identify to a vertex for $\Stab_{\mathcal{C}_1}(\ket{-i})$.}
		\label{VacuumQuotient}
	\end{figure}

We have demonstrated a procedure for quotienting Clifford groups and subgroups by the stabilizer subgroup of a quantum state. We illustrated how Cayley graphs are contracted to state reachability graphs under this quotient. We will now use this protocol to explore subgroups of $\mathcal{C}_2$.

\subsection{Stabilizer Restricted Graphs from $\langle H_i,\,H_j,\,C_{i,j},\,C_{j,i} \rangle$ Quotients}\label{HhCcStabilizerQuotients}

In \cite{Keeler2022} we constructed and analyzed reachability graphs under the action of $\mathcal{C}_2$ subgroups. We termed these restricted graphs, and focused on the subgroup\\ $\HC \equiv \langle H_1,\,H_2,\,C_{1,2},\,C_{2,1}\rangle$. Since entanglement entropy among stabilizer states is modified by, at most, bi-local action, this subgroup gives useful insight into stabilizer entanglement. We now generalize the construction of restricted graphs in \cite{Keeler2022} by constructing the reachability graphs as quotient spaces of Cayley graphs. We specifically reproduce all $\HC$ restricted graphs that arise for stabilizer states, then use our model to explore the orbit of non-stabilizer states as well. 

The quotiented Cayley graphs we construct, in addition to representing a particular quotient group, are isomorphic to state reachability graphs. As defined in \cite{Keeler2022}, the vertices of reachability graphs represent states in a Hilbert space, while edges represent gates acting which transform these states. Vertices in the quotient space of a Cayley graph represent equivalence classes of group elements, defined by their orbit with respect to a chosen subgroup, while edges represent sets of generators. Going forward, we refer to Cayley graph quotients as reachability graphs and note the distinction when necessary.

All stabilizer states can reached by acting on $\ket{0}^{\otimes n}$ with $\mathcal{C}_n$. Acting on $\ket{0}^{\otimes n}$ with the subgroup $\HC$ generates $24$ stabilizer states, including all measurement states of the computational basis. Every state in the orbit of $\ket{0}^{\otimes n}$ is stabilized by $48$ elements of $\HC$.

Figure \ref{g24CayleyQuotient} shows the $\HC$ Cayley graph after quotienting by the stabilizer subgroup for $\ket{0}^{\otimes n}$, specifically for the $2$-qubit example $\ket{00}$. Since the stabilizer subgroup is preserved when tensoring on additional qubits to the system, this $24$-vertex graph likewise displays the orbit for any product of $\ket{00}$ with any $(n-2)$-qubit state. Each vertex in Figure \ref{g24CayleyQuotient} represents the stabilizer subgroup for one state in the orbit of $\ket{00}$ under $\HC$.
    \begin{figure}[h]
    \centering
        \begin{overpic}[width=12.5cm]{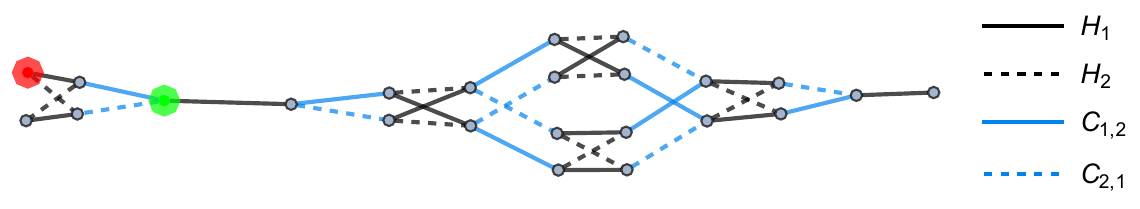}
		  \put (3.25,14) {\footnotesize{$\swarrow \ket{00}$}}
		  \put (15.2,6) {\footnotesize{$\nwarrow \ket{GHZ}_2$}}
        \end{overpic}
    \caption{Quotient space of $\HC$ Cayley graph after modding out by the stabilizer subgroup of $\ket{00}$. The equivalence class of $\Stab_{\HC}(\ket{00})$ is highlighted in red, while the equivalence class of $\Stab_{\HC}(\ket{GHZ}_2)$ is highlighted in green.}
    \label{g24CayleyQuotient}
    \end{figure}

In general any product state, as well as states with no entanglement between the first two qubits and the other $n-2$, will have either the $24$-vertex reachability graph in Figure \ref{g24CayleyQuotient} or the $36$-vertex reachability graph, also defined in \cite{Keeler2022}, that appears below in Figure \ref{HhCcPhaseOverlay2}. 

For additional entangled states which arise at higher qubit number, new reachability graph structures appear when acting with $\HC$. Figure \ref{g144CayleyQuotient} shows a quotient space of the $\HC$ Cayley graph after modding out by the stabilizer subgroup for $\ket{GHZ}_3 \equiv \ket{000} + \ket{111}$. The orbit of $\ket{GHZ}_3$ under $\HC$ reaches $144$ states, each of which is stabilized by $8$ elements of $\HC$.
\begin{figure}[h]
    \centering
		\begin{overpic}[width=11cm]{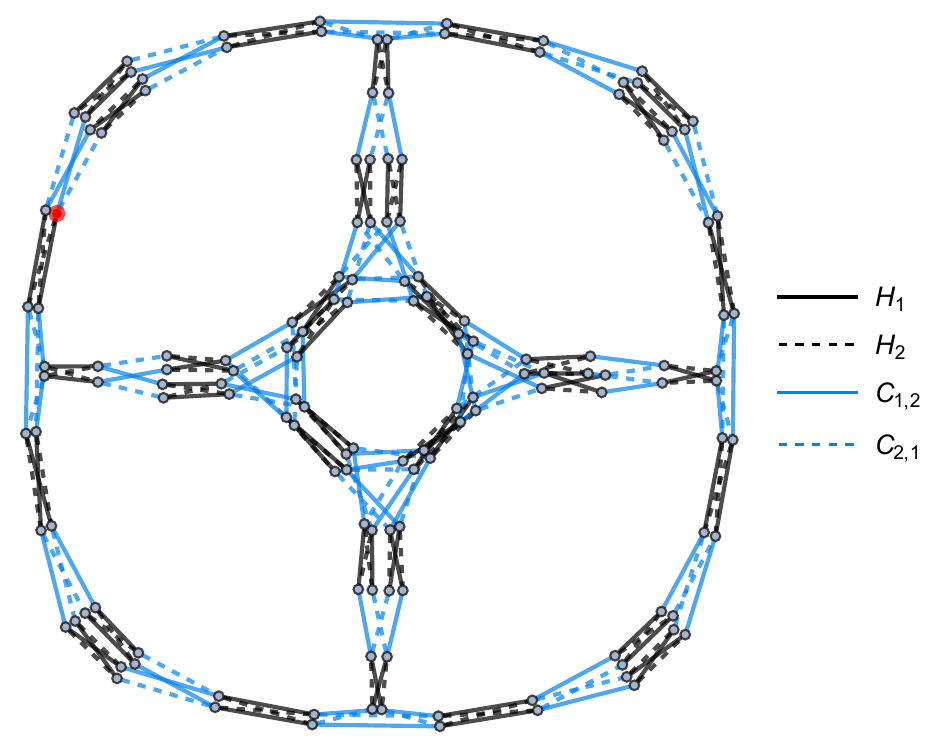}
		\put (7.5,56.6) {\footnotesize{$\leftarrow \ket{GHZ}_3$}}
        \end{overpic}
        \caption{Quotient of $\HC$ Cayley graph after modding by stabilizer subgroup for $\ket{GHZ}_3$. Red vertex gives the equivalence class of $\HC$ elements that stabilize $\ket{GHZ}_3$.}
    \label{g144CayleyQuotient}
	\end{figure}

Also at three qubits, there exist stabilizer states which are stabilized by only $4$ elements of $\HC$. In Section $4$ of \cite{Keeler2022}, we defined a lifting procedure which allows us to find an example state in each stabilizer reachability graph. To identify a state that is stabilized by $4$ elements of $\HC$, we act with $C_{3,2}$ on the product state $\ket{i} \otimes \ket{1} \otimes \ket{+}$. The resultant state $\ket{010} + i\ket{011} + \ket{100} + i\ket{101}$ is stabilized by the elements
\begin{equation}\label{g288StabGroup}
    \{\mathbb{1},\, H_2(C_{1,2}H_1)^4,\, (C_{1,2}H_1)^4H_2,\, \left((C_{1,2}H_1)^3C_{1,2}H_2\right)^2\}.
\end{equation}
Figure \ref{g288CayleyQuotient} shows the quotient space of $\HC$ after modding out by the stabilizer subgroup for $C_{3,2}\ket{i1+} \equiv \ket{010} + i\ket{011} + \ket{100} + i\ket{101}$.
\begin{figure}[h]
    \centering
		\begin{overpic}[width=11cm]{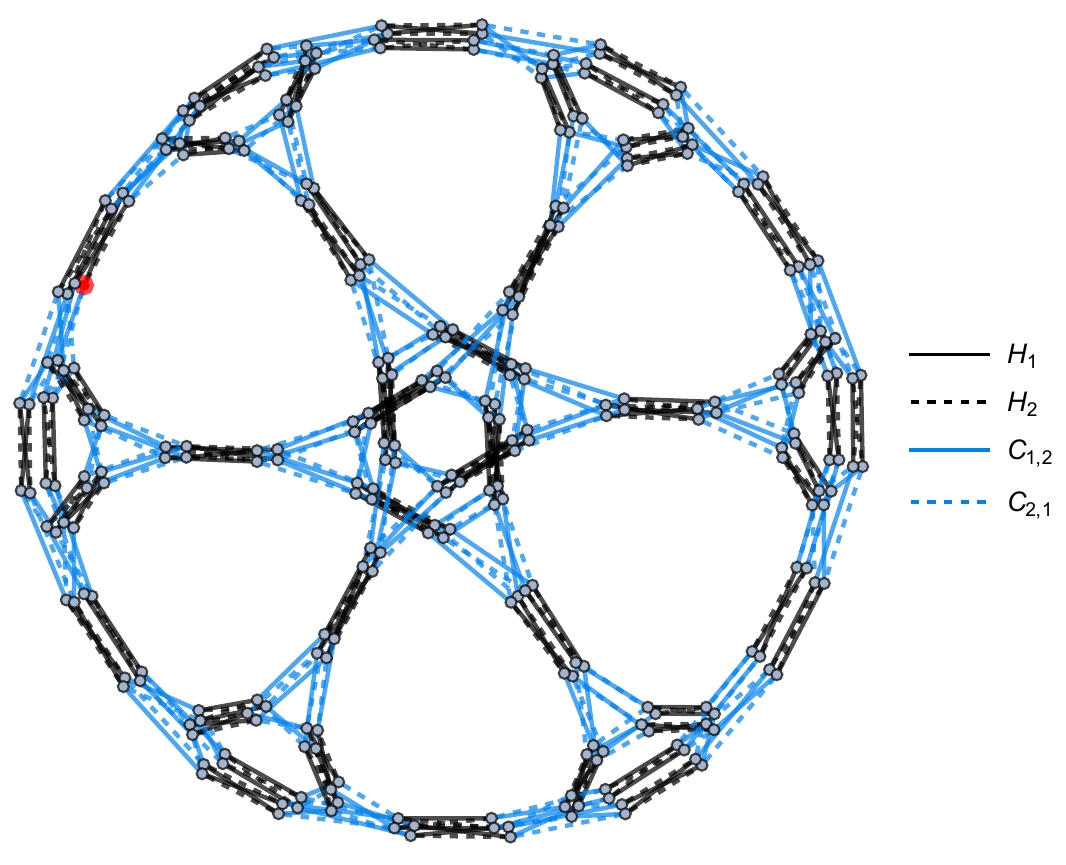}
		\put (9,53.5) {\footnotesize{$\leftarrow C_{3,2}\ket{i1+}$}}
        \end{overpic}
        \caption{Orbit of stabilizer states which are stabilized by $4$ elements of $\HC$. This graph has different topology from the $288$-vertex graph in Figure \ref{D31HhCcGraph}. Elements of $\HC$ that stabilize $C_{3,2}\ket{i1+} \equiv \ket{010} + i\ket{011} + \ket{100} + i\ket{101}$ are represented by the red vertex.}
    \label{g288CayleyQuotient}
	\end{figure}

Finally, there are stabilizer states which are only stabilized by $\mathbb{1}$ in $\HC$. Figure \ref{g1152CayleyQuotient} illustrates the $1152$-vertex reachability graph for such states. This graph represents the largest possible orbit of any quantum state under $\HC$, since all states are trivially stabilized by $\mathbb{1}$. Figure \ref{g1152CayleyQuotient} is first observed at four qubits. 
    \begin{figure}[h]
        \centering
        \includegraphics[width=11cm]{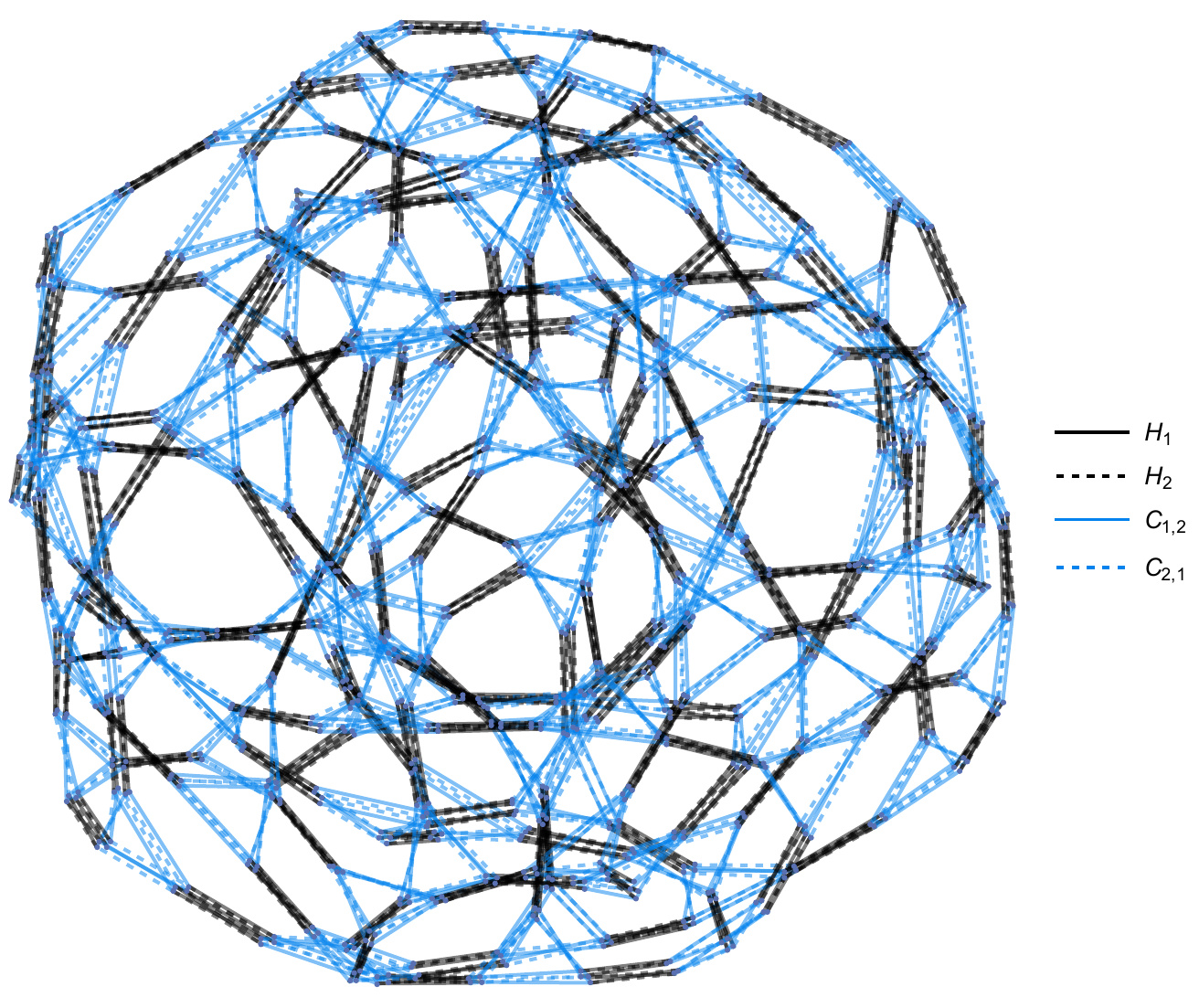}
        \caption{Orbit for states stabilized by only $\mathbb{1}$ in $\HC$. This $1152$-vertex graph gives the orbit of a generic quantum state under action of $\HC$.}
    \label{g1152CayleyQuotient}
    \end{figure}

By taking quotients of the $\HC$ Cayley graph, we have reproduced all stabilizer reachability graphs found in \cite{Keeler2022} under the action of this subgroup. We demonstrated that the largest such subgraph contains $1152$ vertices, by representing the orbit of states which are stabilized by only the identity in $\HC$, in agreement with \cite{Keeler2022}. In the following subsection we add $P_1$ and $P_2$ back into our generating set, and study the action of the full group $\mathcal{C}_2$. We will show how the addition of these two phase gates does not generate any additional graphs, but instead connects the existent structures studied above. We also discover new reachability graphs which arise from quotienting $\HC$ by stabilizer subgroups of non-stabilizer states.

\subsection{Full $\mathcal{C}_2$ Action}

Adding $P_1$ and $P_2$ to the set $\{H_1,\, H_2,\, C_{1,2},\, C_{2,1}\}$ generates the full group $\mathcal{C}_2$, which contains $92160$ elements. The Cayley graph for $\mathcal{C}_2$ after quotienting by $\langle \omega \rangle$ will accordingly have $11520$ vertices, as seen in the last four lines of Table \ref{tab:OrbitLengthCliffordSubgroupNoRelations}. Similarly the reachability graph for any $n$-qubit state stabilized by only $\mathbb{1}$ in $\mathcal{C}_2$ will have $11520$ vertices. By first considering the action of $\langle H_1,\, H_2,\, C_{1,2},\, C_{2,1}\rangle$ on a set of states, followed by the action of $P_1$ and $P_2$, we observe how the reachability graphs from Section \ref{HhCcStabilizerQuotients} are connected. 

To illustrate how phase gates tie $\langle H_1,\, H_2,\, C_{1,2},\, C_{2,1}\rangle$ reachability graphs together, we first consider the orbit of $\ket{0}^{\otimes n}$ shown in Figure \ref{g24CayleyQuotient}. Acting with $P_1$ and $P_2$ on all states in this orbit connects the $24$-vertex reachability graph to a $36$-vertex graph, as in Figure \ref{HhCcPhaseOverlay2}. These two graphs combine to give the orbit of any pure state under the action of $\mathcal{C}_2$, as well as any state%
\footnote{Figure \ref{HhCcPhaseOverlay2} actually shows the orbit of any $n$-qubit state with no entanglement between one pair of qubits and the remaining $n-2$ qubits, since qubits $1$ and $2$ can be exchanged, without loss of generality, with any qubits in both the state and the Clifford subgroup.}
with only entanglement among its first two qubits. In both Figure \ref{HhCcPhaseOverlay2} and Figure \ref{HhCcPhaseOverlay} we have removed all ``trivial loops'', that is all edges which map a vertex back to itself, as these loops represent a stabilizing action on the vertex.
    \begin{figure}[h]
        \centering
        \begin{overpic}[width=11cm]{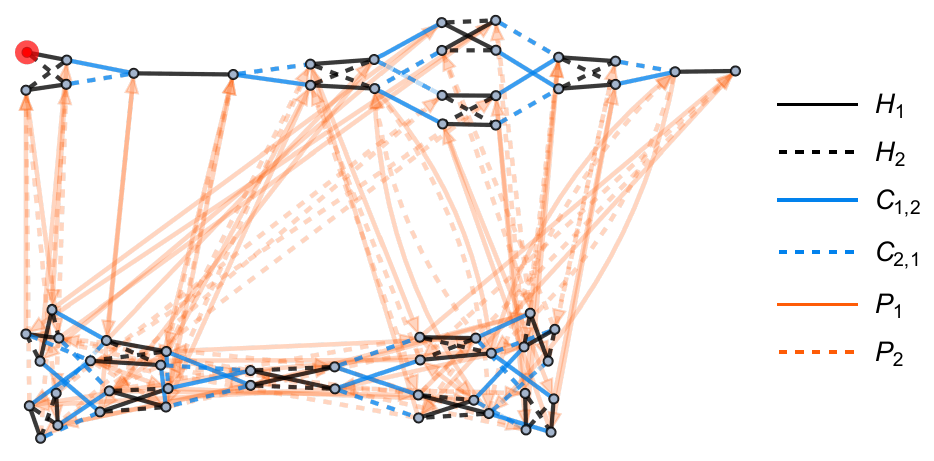}
            \put (4,45.5) {\footnotesize{$\swarrow \ket{00}^{\otimes n}$}}
        \end{overpic}
        \caption{Acting with $P_1$ and $P_2$ on all states in the $\HC$ orbit of $\ket{00}\otimes \ket{\psi}_{n-2}$ connects the $24$-vertex reachability graph from Figure \ref{g24CayleyQuotient} to a graph of $36$ vertices. Together these two graphs show the $\mathcal{C}_2$ orbit of any product state, and all states with no entanglement between the first two qubits and the remaining $n-2$ qubits.}
    \label{HhCcPhaseOverlay2}
    \end{figure}

Similarly at higher qubit number, phase gates on the first two qubits tie together the larger $\langle H_1,\, H_2,\, C_{1,2},\, C_{2,1}\rangle$ reachability graphs. Acting with $P_1$ and $P_2$ on states in the stabilizer $288$-vertex graph, seen in Figure \ref{g288CayleyQuotient}, will sometimes act trivially, sometimes map the state to another in the $288$-vertex graph, and sometimes map it to one of three $144$-vertex graphs. Figure \ref{HhCcPhaseOverlay} depicts how these four graphs are connected via phase operations, where again trivial loops have been removed. 
    \begin{figure}[h]
        \centering
        \includegraphics[width=12.5cm]{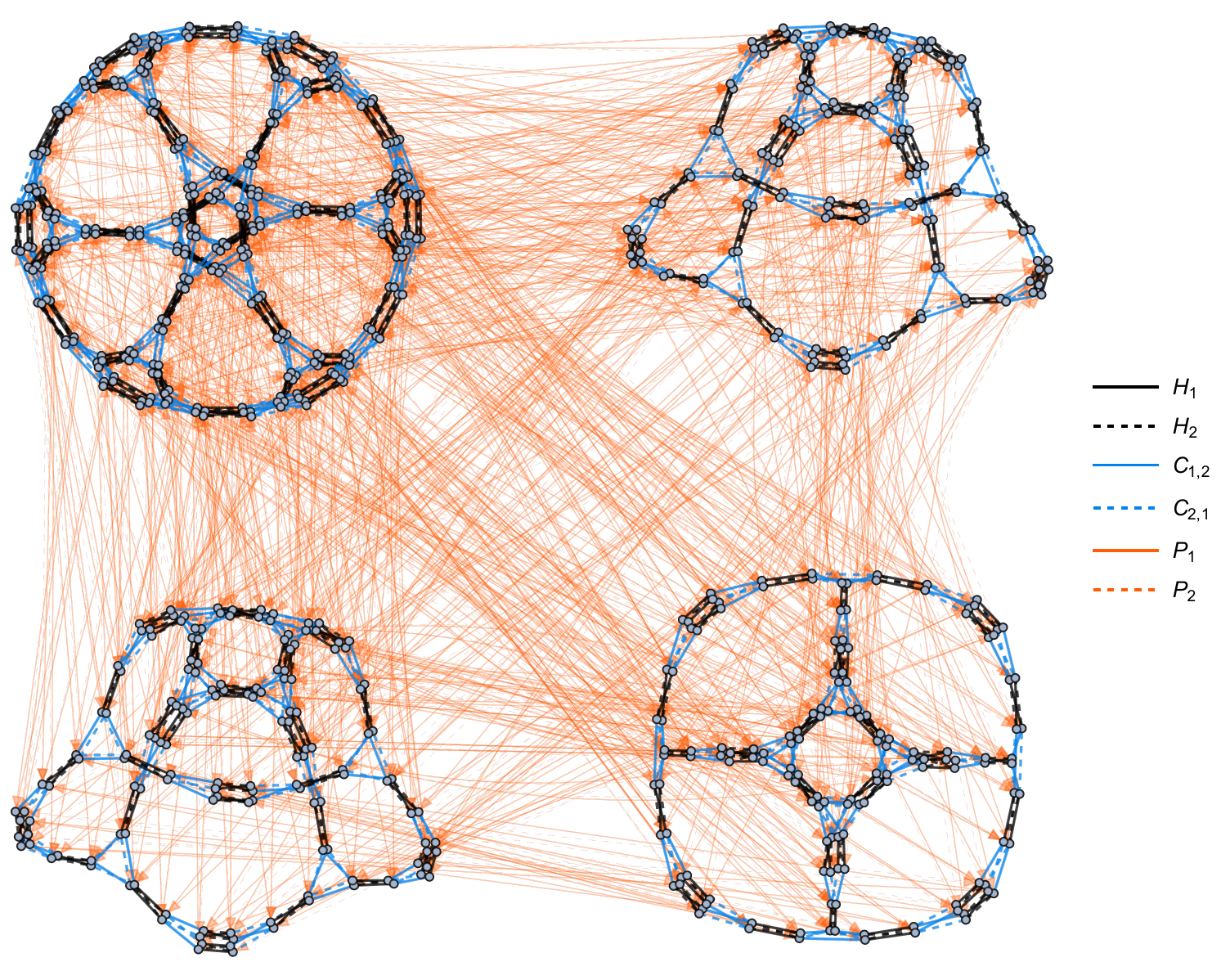}
        \caption{Three copies of the $144$-vertex reachability graph in Figure \ref{g144CayleyQuotient} connect to a single copy of the $288$-vertex graph in Figure \ref{g288CayleyQuotient}, after acting with $P_1$ and $P_2$.}
    \label{HhCcPhaseOverlay}
    \end{figure}

The largest reachability graph under the action of $\HC$ contains $1152$ vertices, and is depicted in Figure \ref{g1152CayleyQuotient}. This reachability graph, which we term $g_{1152}$, gives the orbit of states which are stabilized by only the identity in $\HC$. Acting with $P_1$ and $P_2$ on every state in $g_{1152}$ either connects the $1152$-vertex graph to itself, or maps to one of its $9$ isomorphic copies. Figure \ref{Connected1152} shows how these $10$ copies of $g_{1152}$ are symmetrically attached via phase operations. Upon acting with $P_1$ and $P_2$ the resulting structure forms a completely-connected graph of $10$ vertices, where each vertex actually represents a $g_{1152}$ graph. 
\begin{figure}[h]
    \centering
 \includegraphics[width=8cm]{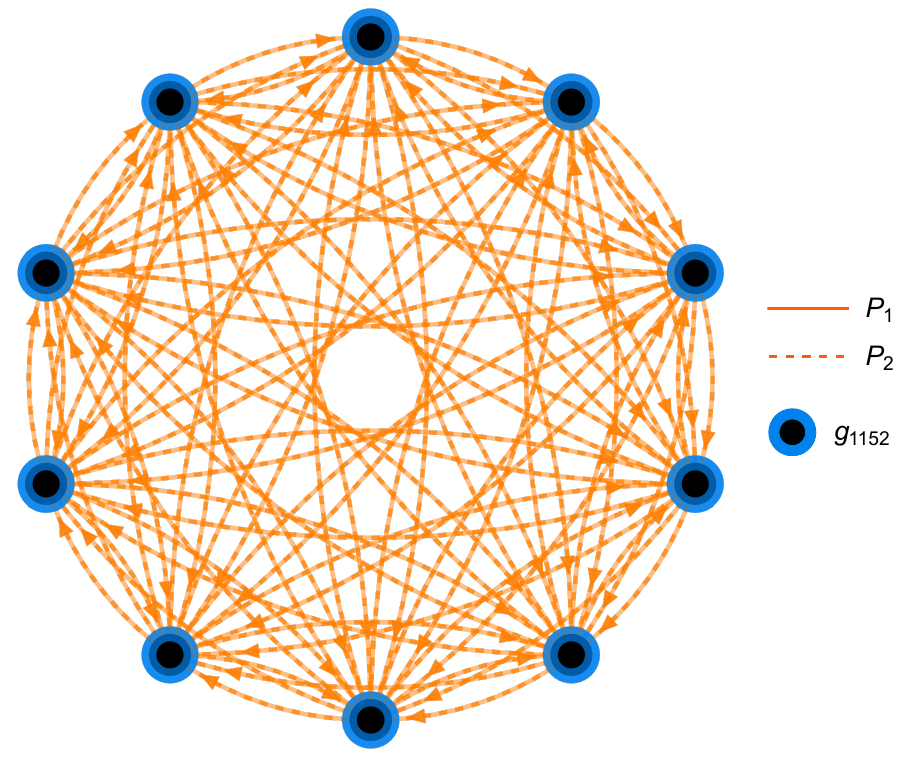}
    \caption{Acting $P_1$ and $P_2$ on states in $g_{1152}$ attaches the graph to $9$ isomorphic copies of itself, forming a completely-connected $10$-vertex graph. Each vertex represents one copy of $g_{1152}$ (Figure \ref{g1152CayleyQuotient}) and every edge is a set of phase gates which connect $g_{1152}$ graphs.}
    \label{Connected1152}
\end{figure}

We have examined the full action of $\mathcal{C}_2$ by acting with $P_1$ and $P_2$ on states in $\HC$ orbits. We demonstrated how the addition of these two phase gates ties together the $\HC$ reachability graphs shown in Section \ref{HhCcStabilizerQuotients}. Specifically we observed how the $24$-vertex reachability graph in Figure \ref{g24CayleyQuotient} connects to another graph of $36$ vertices. Meanwhile, three copies of the $144$-vertex graph in Figure \ref{g144CayleyQuotient} connect to a single copy of the $288$-vertex graph in Figure \ref{g288CayleyQuotient}. Finally, the largest $1152$-vertex reachability graph connects to $9$ isomorphic copies of itself under the action of $P_1$ and $P_2$. In the following section, we move beyond the set of stabilizer states and consider the action of $\langle H_1,\,H_2,\,C_{1,2},\,C_{2,1}\rangle$ on some notable non-stabilizer states.

\subsection{Non-Stabilizer Quotients}

Our state-independent description of the Clifford group allows us to examine the action of $\mathcal{C}_n$ on states which are not stabilizer states. For a quantum information theorist, the term ``stabilizer states'' typically refers to the set of $n$-qubit quantum states which are stabilized by a $2^n$-element subset of the Pauli group. There exist, however, states which are not stabilizer states, but are stabilized by additional Clifford group elements besides $\mathbb{1}$. These states likewise admit reachability graphs through our quotient procedure, and their graph properties reflect their distinction from the set of stabilizer states. Below we give a few examples of reachability graphs for notable non-stabilizer states, under the action of $\langle H_1,\,H_2,\,C_{1,2},\,C_{2,1}\rangle$, and contrast their structure with the stabilizer state graphs.

The $n$-qubit $W$-state holds particular interest as a highly-entangled, non-biseparable quantum state \cite{Dur2000,Schnitzer:2022exe}. Defined as
    \begin{equation}
        \ket{W}_n \equiv \left(\ket{100...00} + \ket{010...00} + ... + \ket{000...01} \right),
    \end{equation}
 $\ket{W}_n$ is famously not a stabilizer state when $n \geq 3$. However, $\ket{W}_n$ is stabilized by more than just $\mathbb{1}$ in $\mathcal{C}_n$. Even considering just the action of $\HC$, $\ket{W}_n$ is stabilized by the four elements
\begin{equation}\label{W3StabGroup}
    \{\mathbb{1},\, H_2C_{1,2}H_2,\, H_1C_{1,2}H_2C_{2,1},\, H_2C_{1,2}C_{2,1}C_{1,2}H_1\}.
\end{equation}

Figure \ref{D31HhCcGraph} shows the orbit of $\ket{W}_n$ under the action of $\HC$. The stabilizer subgroup of $\ket{W}_n$, in Eq. \eqref{W3StabGroup}, is isomorphic to all other stabilizer subgroups in the orbit seen in Figure \ref{D31HhCcGraph}. The stabilizer group of $\ket{W}_n$ is not, however, isomorphic to any subgroup in the orbit of the stabilizer state group in Eq. \eqref{g288StabGroup}. Consequently, while the reachability graph of $\ket{W}_n$ under $\HC$ contains $288$ vertices, its structure is distinctly different from the stabilizer state graph seen in Figure \ref{g288CayleyQuotient}.
    \begin{figure}[h]
        \centering    
        \begin{overpic}[width=11.5cm]{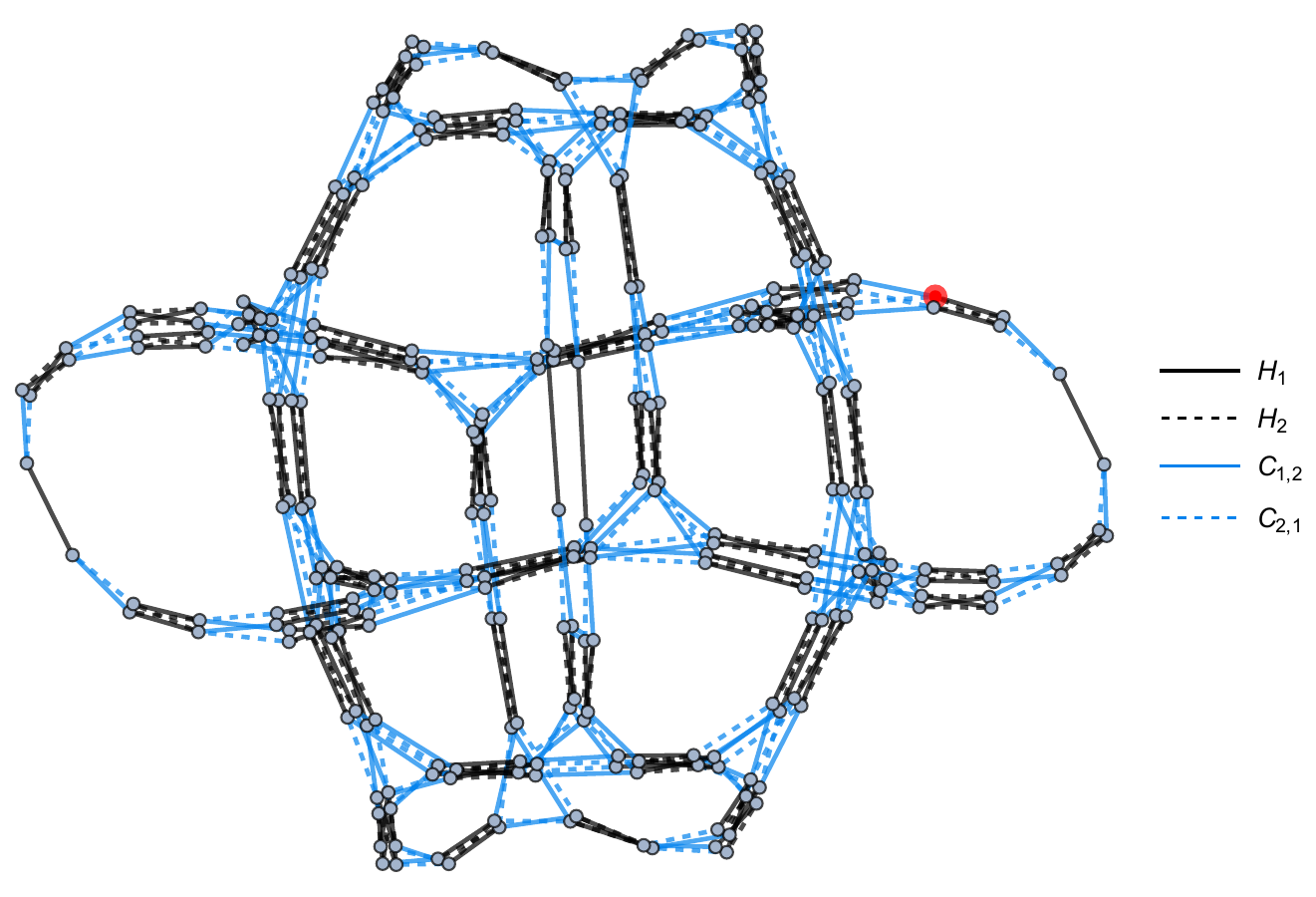}
            \put (72,47.5) {\footnotesize{$\swarrow \ket{W}_3$}}
        \end{overpic}
        \caption{Quotient space of $\HC$ Cayley graph after modding by $\ket{W}_n$ stabilizer subgroup. This reachability graph is not isomorphic to the $288$-vertex subgraph seen for stabilizer states in Figure \ref{g288CayleyQuotient}.}
    \label{D31HhCcGraph}
    \end{figure}

Another notable set of non-stabilizer states are the $n$-qubit Dicke states \cite{Dicke1954,Baertschi2019}. Dicke states are equal superpositions of $n$-qubit basis states with Hamming weight $k$, defined
\begin{equation}
    \ket{D^n_k} \equiv \binom{n}{k}^{-1/2} \sum_{b \in \{0,1\}^n, \hspace{.1cm} h(b) = k} \ket{b},
\end{equation}
where $h(b)$ is the standard Hamming weight for binary stings. 

While $\ket{D^n_k}$ is not a stabilizer state%
\footnote{The state $\ket{D^n_n} = \ket{1}^{\otimes n}$ is a stabilizer state and its reachability graph is given in Figure \ref{g24CayleyQuotient}. Additionally, states $\ket{D^n_1} = \ket{W}_n$ and $\ket{D^n_{n-1}}$ have reachability graphs as shown in Figure \ref{D31HhCcGraph}.}
for all $n \neq k$ and $n >2$, every $\ket{D^n_k}$ is stabilized by more than $\mathbb{1}$ in $\HC$. States $\ket{D^n_k}$ where $1 < k < n-1$ are stabilized by exactly two elements of $\HC$, namely
\begin{equation}\label{D42StabGroup}
    \{\mathbb{1},\, H_2C_{1,2}C_{2,1}C_{1,2}H_1\}.
\end{equation}

Figure \ref{D42HhCcGraph} shows an example reachability graph for the state $\ket{D^4_2}$. Since $\ket{D^4_2}$ is only stabilized by the two elements in Eq. \eqref{D42StabGroup}, its orbit under $\HC$ reaches $576$ states. Graphs with $576$ vertices are never observed among stabilizer states at any qubit number.
    \begin{figure}[h]
        \centering
        \begin{overpic}[width=10cm]{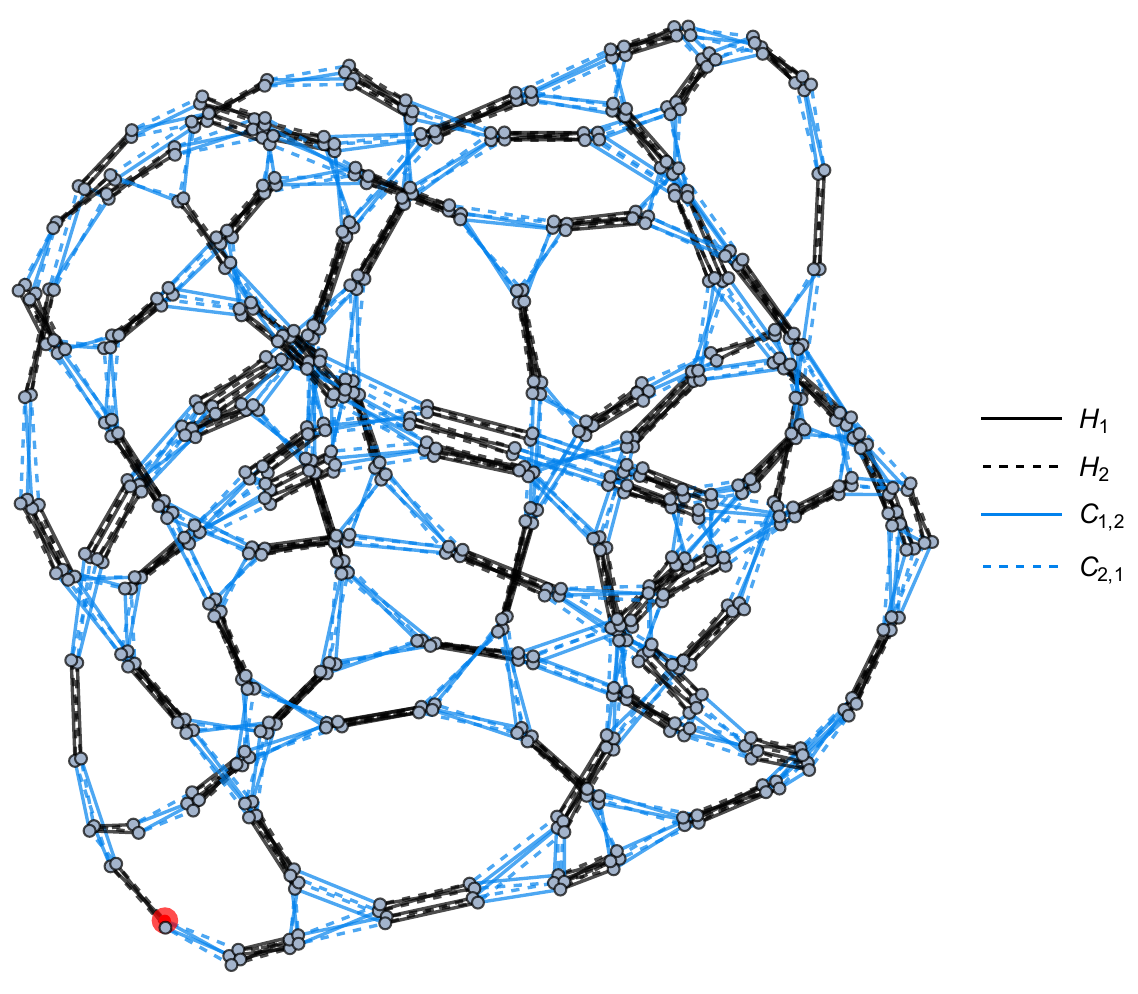}
            \put (1,5.7) {\footnotesize{$\ket{D^4_2}\rightarrow$}}
        \end{overpic}
        \caption{Quotient space of $\HC$ Cayley graph after modding by the stabilizer subgroup of $\ket{D^n_k}$, where $1 < k < n-1$. This reachability graph contains $576$ vertices, an orbit size not observed among the set of stabilizer states.}
    \label{D42HhCcGraph}
    \end{figure}

We have displayed the orbits of non-stabilizer states under the action of $\HC$, specifically for certain states which are stabilized by more than just the identity. We observed reachability graphs with vertex count not seen among the set of stabilizer states. Additionally, we identified graphs with vertex count shared with stabilizer states, but possessing a different topology. The orbits of Dicke states and their entanglement properties are studied in detail in \cite{Munizzi2023}.

\section{Discussion}\label{sec:discussion}

In this work we have presented a generalized construction for reachability graphs, defining them as quotient spaces of Cayley graphs. We began by constructing a presentation for $\mathcal{C}_1$ and $\mathcal{C}_2$, which we used to highlight the non-trivial relations among Clifford group elements. These relations allowed us to understand the structure of $\mathcal{C}_1$ and $\mathcal{C}_2$, and to explicitly construct all subgroups built from a restricted set of generators. Our operator-level, state-independent construction allowed us to obtain constraints on the evolution of any state through Clifford circuits. 

Extending our construction to higher qubit Clifford groups would require the definition of new relations  with each increase in qubit number. Intriguingly, extending our presentation for $\mathcal{C}_2$ to a presentation for $\mathcal{C}_3$ only requires the addition of $4$ relations \cite{Selinger2013}, none of which involve the phase gate.

After building their presentations, we studied the Cayley graphs for $\mathcal{C}_1$ and $\mathcal{C}_2$, as well as for all $\mathcal{C}_2$ subgroups generated by a subset of Hadamard, phase, and CNOT gates. Our protocol contracts the Cayley graph, yielding a quotient graph that is isomorphic to the state's reachability graph. Specifically, we quotient by the stabilizer subgroup for a particular state, ensuring only non-trivial group action remains. Using this procedure, we can analyze the evolution of a state through circuits comprised of the given gate set. Since we begin with the state-independent Cayley graph, this quotient protocol and analysis can be applied to any quantum state.

We emphasize that the techniques put forth in this paper are not limited to Clifford circuits. Any finite gate set can be represented by a discrete Cayley graph. The Cayley graph can then be made finite by imposing a cutoff on graph distance, which constrains the depth of a circuit. Accordingly, the program established in this paper could be used to study even universal gate sets in quantum computation, up to a fixed circuit depth. Our techniques could furthermore be straightforwardly extended to computation with qutrits or qudits.

Access to a graph-theoretic description of evolution through quantum circuits allows for direct calculation of some interesting circuit properties. For example, the gate complexity of a given circuit which transforms one state into another is precisely the minimum graph distance separating the two vertices that represent each state. The Cayley graph diameter hence immediately bounds the maximal change in complexity that can be observed under the constituent gate set. Additionally, given a fixed set of universal generators, one could compute complexity growth for circuits of varying depths. Conversely, one could fix a circuit depth and consider the growth of complexity under alternative sets of universal generators \cite{Munizzi_2022}. It would be interesting to relate the discrete picture of gate complexity obtained here to a more continuous picture, as in \cite{Balasubramanian:2019wgd}.

The graph analysis in this paper might be useful to better understand circuit architecture and reduce resource overhead in a quantum computation framework. The relations in our presentation often describe non-trivial, and sometimes unexpected, equivalences between sequences of quantum gates. In many cases, large-depth circuits containing strings of gates which are difficult to implement can be reduced to sequences of shorter depth, and simpler gate composition. One such example is $C_{i,j}H_jC_{i,j}P_jC_{i,j}P_j^3H_j = P_i$, where a circuit of $9$ gates, including numerous (and resource-expensive) $CNOT$ insertions, can be optimized to a single phase gate. Similarly, in the context of state preparation, an optimal circuit to transform an initial state into some desired final state can be identified by the appropriate extremal graph path. If computational or experimental constraints exist that limit the set of viable gates, corresponding edges in the graph can be modified or removed to accommodate this restriction. This analysis could be particularly interesting in the context of near-term quantum computing, where it is often easier to implement some specified set of gates than arbitrary two-body couplings.

In this work we focused on the group $\HC = \langle H_1,\,H_2,\,C_{1,2},\,C_{2,1} \rangle \subset \mathcal{C}_2$, which offered useful insight into the bipartite entanglement generated by Clifford circuits. Using our quotient procedure for the $\HC$ we were able to recover all stabilizer reachability graphs from our last paper, Figures \ref{g24CayleyQuotient}--\ref{g1152CayleyQuotient}, as well as reachability graphs for some non-stabilizer states. In particular, we showed that the $1152$-vertex graph is the largest reachability graph for any state, stabilizer or otherwise. We believe we have exhibited all reachability graphs involving $\mathcal{C}_2$ for stabilizer states, but proving so would require a deeper understanding of the relation between the Pauli stabilizer groups and Clifford stabilizer groups for a given state.

Instead of $\HC$, we could consider state orbits under alternative $\mathcal{C}_2$ subgroups. Orbits under different $\mathcal{C}_2$ subgroups should also decompose the full $\mathcal{C}_2$ reachability graph into disconnected pieces, similarly to the situation for $\HC$.

Additional stabilizer subgroups exist which quotient $\mathcal{C}_n$ and $\HC$, distinct from the stabilizer subgroups of individual stabilizer states. A $2$-element subgroup of $\HC$ stabilizes all Dicke states with a certain structure, building the reachability graph seen in Figure \ref{D42HhCcGraph}. Furthermore, all states we examined from which magic can be fault-tolerantly distilled \cite{bravyi2005universal} are stabilized by more than just $\mathbb{1}$ in $\mathcal{C}_n$. We conjecture that some measure of ``stabilizerness'', similar to stabilizer rank or mana \cite{White:2020zoz}, can be defined using the order of state's stabilizer subgroup in $\mathcal{C}_n$. 

We initiated this study to explore the evolution of entanglement entropy through Clifford circuits. Since entanglement in Clifford circuits can only be modified through $CNOT$ action, the number of $C_{i,j}$ edges in a reachability graph, which we term the ``CNOT diameter'', weakly bounds the number of times the entropy vector can change. However, our deeper exploration of the Clifford group revealed that not every $C_{i,j}$ gate modifies entropic structure, since relations like $(H_1C_{2,1})^4 = P_1^2$ demonstrate that some circuits with CNOT gates nonetheless never modify entanglement. In forthcoming work \cite{Keeler2023b}, we will build ``contracted graphs'' which exhibit how entropy vectors can change within a given reachability graph.

\textit{The authors thank ChunJun Cao, Temple He, William Kretschmer, Daniel Liang, Julien Paupert, David Polleta, and Claire Zukowski for useful discussions.  CAK and WM are supported by the U.S. Department of Energy under grant number DE-SC0019470 and by the Heising-Simons Foundation ``Observational Signatures of Quantum Gravity'' collaboration grant 2021-2818. JP is supported by the Simons Foundation through the It from Qubit: Simons Collaboration on Quantum Fields, Gravity, and Information.}

\clearpage
\begin{singlespace}
\printbibliography[heading=subbibliography]
\end{singlespace}

\chapter{ENTROPY CONES AND ENTANGLEMENT EVOLUTION FOR DICKE STATES}\label{Chapter5}

\textit{The contents of this chapter were originally published in Physical Review A \cite{Munizzi:2023ihc}.}

\textit{The $N$-qubit Dicke states $\ket{D^N_k}$, of Hamming-weight $k$, are a class of entangled states which play an important role in quantum algorithm optimization. We present a general calculation of entanglement entropy in Dicke states, which we use to describe the $\ket{D^N_k}$ entropy cone. We demonstrate that all $\ket{D^N_k}$ entropy vectors emerge symmetrized, and use this to define a min-cut protocol on star graphs which realizes $\ket{D^N_k}$ entropy vectors. We identify the stabilizer group for all $\ket{D^N_k}$, under the action of the $N$-qubit Pauli group and two-qubit Clifford group, which we use to construct $\ket{D^N_k}$ reachability graphs. We use these reachability graphs to analyze and bound evolution of $\ket{D^N_k}$ entropy vectors in Clifford circuits.}

\section{Introduction}

There are numerous ways to classify sets of quantum states. For states in a factorizable Hilbert space $\Hil = \bigotimes^N_i \Hil_i$, one such classification is given by considering the entropy vector of each state \cite{Bao2015}. The entropy vector of a pure state $\ket{\psi} \in \Hil$ is constructed as the ordered set of all $2^N-1$ von Neumann entropies, computed by tracing out all tensor product states $\ket{\psi_i} \in \Hil_i$. While every $\ket{\psi} \in \Hil$ can be assigned an entropy vector, specifying an entropy vector does not uniquely determine a state. Instead, entropy vectors describe an equivalence relation on states in $\Hil$, assigning each state to a class based on its entanglement structure.

The space of allowed entropy vectors for a specific class of states defines the entropy cone for that class \cite{Schnitzer:2022exe,Linden2013,Bao2020,Bao2020a}. Since a particular entanglement structure often accompanies other interesting state characteristics, identifying specific entropy cones which contain the entropy vectors for different state classifications has received significant research interest. One notable case is that of holographic states, those quantum states with a smooth classical dual geometry via the AdS/CFT correspondence \cite{Maldacena:1997re,Witten:1998qj}, which possess subsystem entanglement that obeys the Ryu-Takayanagi formula \cite{Ryu:2006bv,Hubeny:2007xt}. The entropy vectors of holographic states are likewise confined to a convex polyhedral subspace of the ambient vector space, known as the holographic entropy cone. In general, having an entropy vector that is contained within a particular entropy cone is a necessary, though not sufficient, condition for a state to belong to the class described by that cone.

The $N$-qubit Dicke states are a particular class of entangled quantum state which have received notable recognition for application in quantum algorithm development and precision measurement. Dicke states have found significant use as the initial state for Quantum Approximate Optimization Algorithm (QAOA) implementation, a quantum algorithm designed to approximate combinatorial-optimization solutions \cite{farhi2014quantum}. More recently, techniques have been established to deterministically prepare $N$-qubit Dicke states using circuits of depth $\mathcal{O}(n)$ \cite{Baertschi2019}. Much of the NISQ-era utility of Dicke states for quantum computation is due to their unique entanglement properties. Certain highly-entangled Dicke states may be projected, via measurement, onto distinct lower-qubit states that cannot be locally-transformed into each other.

An alternative, and useful, classification on $\Hil$ considers how state sets transform under the action of a group. Consider a group $G \in L(\Hil)$, which transforms states $\ket{\psi} \in \Hil$. Certain elements of $G$ may act trivially on some $\ket{\psi}$, mapping the state to itself. Such elements define the stabilizer group for $\ket{\psi}$ under the action of $G$, which we denote $\Stab_G(\ket{\psi})$. The most generic quantum states are stabilized by only the identity operator $\mathbb{1}$ in a given $G$. Special sets of states, however, have larger stabilizer groups with respect to certain sets of operators. A well-known example is the set of stabilizer states, a class of classically-simulable quantum states, which are stabilized by a maximal number, specifically $2^N$, of Pauli group elements \cite{aaronson2004improved,Gottesman1997,Gottesman1998,Bravyi2004,knill2004faulttolerant,Keeler2022}.

Acting on $\ket{\psi}$ with every $g \in G$ defines the orbit $G \cdot \ket{\psi}$, which describes the trajectory of $\ket{\psi}$ through $\Hil$. For every finitely-generated $G$ the orbit is discrete, lending itself to a natural graph-theoretic description. We construct the graph which corresponds to $G \cdot \ket{\psi}$, known as a reachability graph, by assigning vertices to states in the orbit, and edges to represent generators of $G$. Viewed through the lens of quantum computation, each path in a reachability graph defines a quantum circuit and the graph itself encodes the state's evolution under all possible circuits composed of the generating gate set. States with isomorphic reachability graphs are congruent as they share an isomorphic stabilizer group under some chosen group action \cite{Keeler:2023xcx}.

We may also wish to track the evolution of specific state properties under the action of a group $G$. Analogous to the subgroups which stabilize a state, there exist elements of $G$ which leave a particular state property invariant \cite{Keeler:2023shl}, e.g. all local gates preserve entanglement structure. If our goal is to understand the dynamics of a chosen state parameter, we can restrict consideration to the subset of $G$ which non-trivially evolves that parameter. In a reachability graph representation, this restriction corresponds to eliminating vertices and edges from the graph, leaving only the graph paths, i.e. the quantum circuits, which modify the parameter under study. This modification on reachability graphs allows us to establish bounds on the dynamics of a chosen state property under circuits composed of a certain gate set.

In this paper we explore entanglement structure in Dicke states and the manner in which that entanglement can evolve through quantum circuits. In Section \ref{EntropyConeSection}, we leverage the symmetric structure of Dicke states to explicitly compute all entanglement entropies which arise in $N$-qubit Dicke systems. We use our calculation of subsystem entanglement to generate all possible Dicke state entropy vectors, which accordingly describe the $N$-qubit Dicke state entropy cone. We demonstrate inclusion, and exclusion, of Dicke state entropy vectors relative to other known entropy cones, and use our calculation to reproduce the entropy vectors for $W$ states found in \cite{Schnitzer:2022exe}. Additionally, we propose a min-cut model on weighted star graphs which realizes the symmetrized entropies of Dicke states.

In Section \ref{StabilizerSection}, we define the stabilizer group for all Dicke states under the action of the Pauli and Clifford groups. We use this set of stabilizers to construct reachability graphs which illustrate the orbit of Dicke states under circuits composed of Pauli and Clifford gates. As we are interested to understand the dynamics of Dicke state entropy vectors in Clifford circuits, we restrict to a subset of Clifford gates consisting of Hadamard and CNOT acting on two qubits. By analyzing reachability graphs built from these gates, we are able to observe bounds on entropy vector evolution directly from the reachability graphs themselves \cite{Keeler:2023shl}.

In forthcoming work, we apply the stabilizing operations identified in Section \ref{StabilizerSection} to construct error-correcting codes for logical Dicke states. One reason for using Dicke states in error-correcting codes is an increased resistance to information loss upon single-qubit thermalization. Since tracing out a single qubit from certain Dicke states recovers the same state, now defined on one lower qubit number, the redundancy of such states may yield comparative advantages in codes. 

We also consider the potential of highly-entangled Dicke states for magic distillation protocols. In this context, one benefit of Dicke states is found in their ease of preparation and consequential preference as initial states for computation. Another promising feature of using Dicke states for magic distillation relies on the significant amount of non-local magic which contained in these states. In both applications, and perhaps others not considered in this paper, we believe this exploration into Dicke state entanglement will prove useful.

\section{Review: Dicke States, Entropy Cones, and the Stabilizer Formalism}

We offer a short review of relevant background material used throughout this work. Comprehensive discussions exist for the different topics covered here, which we invite the curious reader to consult. Many significant papers discuss the structure and properties of Dicke states, as well as their utility for realizable quantum computation, of which we recommend \cite{PhysRev.93.99,Baertschi2019,nepomechie2023qudit,PhysRevA.80.052302,Stockton_2004,PhysRevLett.103.020503,PhysRevA.95.013845}. For additional details on entropy vectors and entropy cone construction, we suggest \cite{Bao2015,Hayden2013,Schnitzer:2022exe,Avis2023,Linden2013,Fadel2021}. Finally, the group-theoretic constructs presented in this section are discussed extensively in \cite{Keeler2022,aaronson2004improved,Gottesman1997,Gottesman1998,Veitch2013},and more formally in the text \cite{Alperin1995}.

\subsection{Dicke States}

The $N$-qubit Dicke states $\ket{D^N_k}$ compose an interesting class of states which can be efficiently prepared using a polynomial number of gates, despite having a larger-than-polynomial number $\binom{N}{k}$ of excitations \cite{PhysRev.93.99,Baertschi2019}. This property affords significant resource conservation compared to arbitrary state preparation, which relies on an application of $\mathcal{O}(2^N)$ gates. For this reason, Dicke states often find preference as initial states for quantum optimization algorithms, and have even been successfully implemented in experiment \cite{PhysRevA.95.013845,PhysRevLett.103.020503,Stockton_2004,PhysRevA.80.052302}. Furthermore, the highly-entangled structure of certain Dicke states can be used to project out non-locally transformable states upon measurement, such as the $GHZ$ and $W$ states, with very little computational overhead.

We construct each $N$-qubit Dicke state $\ket{D^N_k}$ as the equal superposition over all $N$-qubit states $\ket{b}$, where $b$ is a bit-string of fixed Hamming-weight $h(b) = k$. Explicitly,
\begin{equation}\label{DickeStateDefinition}
    \ket{D^N_k} \equiv \binom{N}{k}^{-1/2} \sum_{b \in \{0,1\}^n, \hspace{.1cm} h(b) = k} \ket{b}.
\end{equation}
Specific examples of Dicke states include,
\begin{equation}
    \begin{split}
        \ket{D^3_1} &= \frac{1}{\sqrt{3}} \left(\ket{100} + \ket{010} + \ket{001} \right),\\
        \ket{D^4_2} &= \frac{1}{\sqrt{6}} \left(\ket{1100} + \ket{1010} + \ket{1001} + \ket{0110} + \ket{0101} + \ket{0011} \right).
    \end{split}
\end{equation}

Dicke states of the form $\ket{D^N_1}$, those with Hamming-weight $k=1$, are exactly the $N$-qubit $W$ states $\ket{W_N}$, defined
\begin{equation}
    \ket{W_N} \equiv \frac{1}{\sqrt{N}} \left(\ket{100...00} + \ket{010...00} + ... + \ket{000...01} \right).
\end{equation}
Similarly, Dicke states of Hamming-weight $k=N$ are the $N$-qubit measurement basis state
\begin{equation}
    \ket{D^N_N} \equiv \ket{111...1} = \ket{1}^{\otimes N}.
\end{equation}

\subsection{Entropy Vectors and Entropy Cones}

We compute the entanglement entropy of a state $\rho_{\psi}$ as the von Neumann entropy
\begin{equation}\label{vonNeumannEntropy}
   S_{\psi} \equiv -\Tr \rho_{\psi} \ln \rho_{\psi}.
\end{equation}
When $\rho_{\psi}$ represents a pure state, i.e. when $\rho_{\psi} \equiv \ket{\psi}\bra{\psi}$, the property $\rho_{\psi}^2 = \rho_{\psi}$ yields total entropy $S_{\psi} = 0$. When information is measured in \textit{dits}, as with a state $\ket{\psi} \in \Hil^d$, the entropy in Eq.\ \eqref{vonNeumannEntropy} is computed using $\log_d$.

Even for an overall pure state, non-zero entanglement can exist when considering complementary subsystems of $\ket{\psi}$. For a state $\ket{\psi}$, we can consider an $\ell$-party subsystem which we denote $I$. The entanglement entropy between $I$ and its $(N-\ell)$-party complement system $\bar{I}$ is then computed
\begin{equation}
   S_{I} = -\Tr \rho_{I} \ln \rho_{I}.
\end{equation}
The object $\rho_{I}$ is the reduced density matrix of subsystem $I$, computed by tracing out its compliment $\bar{I}$.

For an $N$-party state $\ket{\psi}$, there are $2^N-1$ different subsystems we can consider. Computing $S_I$ for each subsystem $I$, and ordering the resulting set, defines the entropy vector $\Vec{S}$ for $\ket{\psi}$. For example, the entropy vector for some $3$-party pure state would have the form,
\begin{equation}\label{EntropyVectorExample}
    \Vec{S} = (S_A, S_B, S_O; S_{AB}, S_{AO}, S_{BC}; S_{ABO}).
\end{equation}
where we use a semicolon to distinguish entropies for regions of different sizes $|I|$. The final party is often labeled with $O$, as it acts a purifier for the remainder of the system.

If the overall state $\ket{\psi}$ is pure, we have the additional constraint $S_{I} = S_{\bar{I}}$, which comes from the fact that $S_{\psi} = 0$. This condition allows the entropy vector for $\ket{\psi}$ to be expressed using only $2^{N-1}-1$ entropies. Accordingly, the vector in Eq.\ \eqref{EntropyVectorExample} can be described in the reduced form
\begin{equation}\label{EntropyVectorExample2}
    \Vec{S} = (S_A, S_B; S_{AB}).
\end{equation}
We use this reduced entropy vector presentation throughout Section \ref{StabilizerSection}.

Subsystem entropies $S_I$ for multi-partite quantum states are required to obey certain entropy inequalities \cite{Bao2015,Pippenger,Hayden2013}, which can also be used to classify that state. For example, all quantum states are subadditive, meaning $S_{I} + S_{J} \geq S_{IJ}$ for all disjoint subsystems $I$ and $J$. Other entropy inequalities are more strict, and are not necessarily satisfied by generic quantum states, e.g. the monogamy of mutual information (MMI) \cite{Hayden2013} which states
\begin{equation}\label{MMIInequality}
    S_{IJ} + S_{IK} + S_{JK} \geq S_I + S_J + S_K + S_{IJK}, 
\end{equation}
for disjoint subsystems $I,\,J,$ and $K$. The MMI inequality is satisfied by all holographic states, states with a smooth classical geometric dual through the AdS/CFT correspondence.

A linear entropy inequality, such as that in Eq.\ \eqref{MMIInequality}, defines a hyperplane in some $2^N-1$ dimensional entropy-vector space, bisecting the space and placing entropy vectors which satisfy the inequality on one side, and those which fail the inequality on the other. Entropy vectors which saturate an inequality reside on the hyperplane itself. The set of linear inequalities satisfied by a class of quantum states, bounds a convex polyhedral cone in the entropy vector space, known as an entropy cone \cite{Bao2015}. Entropy vectors which correspond to a certain class of quantum states, e.g. holographic states or stabilizer states, must have an entropy vector which lies in the convex hull%
\footnote{We again highlight that this condition is necessary, but not sufficient, for identifying states corresponding to a particular class.} %
of the corresponding entropy cone. Alternatively, entropy cones can be specified by identifying all extremal rays, the entropy vectors which saturate two inequalities and lie at the intersection of two hyperplanes, or by directly identifying all possible entropy vectors for the class of states, as performed in Section \ref{EntropyConeSection}.

Entropy cones for various classes of states are well-understood for low party number. However as system size increases, so too does the number of necessary inequalities, and complexity of each inequality, needed to characterize each entropy cone. To navigate this complexity increase, much effort has turned towards studying more fundamental properties of entropy cones. The symmetrized entropy cone prescription \cite{Czech2021,Fadel2021} focuses the extremal properties of a cone's structure under a symmetry projection. Symmetrized entropies are defined as in Eq.\ \eqref{EntropyVectorExample}, with the addition of a normalization factor based on the cardinality of the subsystem. For subsystems $I$, we have
\begin{equation}\label{SymmetrizedEntropyVector}
    \Tilde{S}_k \equiv \left[ \binom{N+1}{k} \right]^{-1} \sum_{I \in \{I\}_k} S_I,
\end{equation}
where the sum is computed over all subsystems $I$, of and $N$-party states, with fixed cardinality $k = |I|$. As an example, computing the symmetrized entropy of all single-party subsystems for a $4$-party state, we have
\begin{equation}\label{SymmetrizedEntropyVector}
    \Tilde{S}_1 = \frac{1}{4}\left( S_A + S_B + S_C + S_O\right).
\end{equation}

Mathematical graphs also offer a useful description of entanglement in multi-partite quantum systems, particularly with regards to holographic systems \cite{Bao2015}. The entropy vectors of holographic states can be realized as a min-cut protocol on weighted undirected graphs, where edge cuts in the graph correspond to traversing minimal-length geodesics in the dual geometry. Broader classes of states require more generic graph descriptions, including hypergraphs \cite{Bao2020a,Bao2020} or topological links \cite{Bao2022a}. Symmetrized entropy vectors can likewise be realized using a min-cut prescription on weighted star graphs \cite{Czech2021,Fadel2021}. In Section \ref{StarGraphSection}, we extend this star graph proposal to describe the structure of Dicke state entropy vectors.

\subsection{Stabilizer Formalism and Reachability Graphs}

An essential set of gates in quantum computing is the set of Pauli gates, defined in a unitary matrix representation as
\begin{equation}\label{PauliMatrices}
    \mathbb{1}\equiv \begin{bmatrix}1&0\\0&1\end{bmatrix}, \,\, 
    \sigma_X\equiv \begin{bmatrix}0&1\\1&0\end{bmatrix}, \,\,
    \sigma_Y\equiv \begin{bmatrix}0&-i\\i&0\end{bmatrix}, \,\,
    \sigma_Z\equiv \begin{bmatrix}1&0\\0&-1\end{bmatrix}.
\end{equation}
In a fixed measurement basis $\{\ket{0},\,\ket{1}\}$, the matrices in Eq.\ \eqref{PauliMatrices} act as operators on a Hilbert space $\mathbb{C}^2$. The set $\{\sigma_X,\,\sigma_Y,\,\sigma_Z\}$ generates the $16$-element Pauli group under multiplication, denoted $\Pi_1$.

We can extend the matrix representation of Pauli gates to arbitrary qubit number by composing sets of Pauli strings. Each Pauli string describes a set of local actions performed on specified qubits in an $N$-qubit system. Every $N$-qubit Pauli string can be defined%
\footnote{This Pauli string representation as the $N$-fold tensor product of $2 \times 2$ matrices requires two conditions: first, we assume the Hilbert space factorizes into a product of $N$ qubits ($N$ copies of $\Hil^2$), and secondly, we ascribe an ordering to the set of qubits which will serve as an indexing system.} %
as a tensor product over $2 \times 2$ matrices. For example, the action of $\sigma_X$ on the $k^{th}$ qubit of an $N$-qubit system can be written
\begin{equation}\label{PauliStringExample}
    \sigma^k_X \equiv \mathbb{1}^1\otimes\ldots\otimes \mathbb{1}^{k-1} \otimes \sigma_X \otimes \mathbb{1}^{k+1} \otimes \ldots \otimes \mathbb{1}^N.
\end{equation}
Eq.\ \eqref{PauliStringExample} is an example of a weight-$1$ Pauli string, where the weight of a string denotes the number of non-identity operations in the tensor product. The $N$-qubit Pauli group $\Pi_N$ is generated by the set of all weight-$1$ Pauli strings.

Having constructed $\Pi_N$, we could further consider operations which map $\Pi_N$ to itself. The $N$-qubit Clifford group $\mathcal{C}_N$ is the set of unitaries which normalizes the Pauli group, i.e. $\mathcal{C}_N$ maps elements of $\Pi_N$ to elements of $\Pi_N$ via conjugation. In the single-qubit case, we can define $\mathcal{C}_1$ as the group generated by the Hadamard \cite{sylvester1867lx,hadamard1893resolution} and phase gates, defined as the matrices
\begin{equation}
    H\equiv \frac{1}{\sqrt{2}}\begin{bmatrix}1&1\\1&-1\end{bmatrix}, \quad P\equiv \begin{bmatrix}1&0\\0&i\end{bmatrix}.
\end{equation}

Just as with Pauli gates we can generalize to an $N$-qubit description by composing strings of Clifford operators, where we use a subscript to indicate the qubit being acted on, e.g.
\begin{equation}\label{HadamardString}
    H_k \equiv \mathbb{1}^1\otimes\ldots\otimes \mathbb{1}^{k-1} \otimes H \otimes \mathbb{1}^{k+1} \otimes \ldots \otimes \mathbb{1}^N.
\end{equation}

Unlike the Pauli group, however, the group $\mathcal{C}_N$ is not generated by only weight-$1$ Clifford strings. For $N>1$, constructing $\mathcal{C}_N$ requires the addition of the bi-local $CNOT$ gate, defined
\begin{equation}
    C_{i,j} = \begin{bmatrix}
            1 & 0 & 0 & 0\\
            0 & 1 & 0 & 0\\
	    0 & 0 & 0 & 1\\
	    0 & 0 & 1 & 0
            \end{bmatrix}.
\end{equation}
The $CNOT$ gate acts on two qubits in an $N$-qubit system by first evaluating the state of the $i^{th}$ qubit, the control bit, then performing a $NOT$ operation on the $j^{th}$ qubit, the target bit, if the $i^{th}$ qubit is found in the state $\ket{1}$. It is important to note that $C_{i,j} \neq C_{j,i}$. We may now define the group $\mathcal{C}_N$ as
\begin{equation}
    \mathcal{C}_N \equiv \langle H_1,\,...,\,H_N,\,P_1,\,...,\,P_N,\,C_{1,2},\,C_{2,1},\,...,\,C_{N-1,N},\,C_{N,N-1} \rangle.
\end{equation}

Given a Hilbert space $\Hil$ and group $G \subset L(\Hil)$, an element of $G$ is said to stabilize $\ket{\psi} \in \Hil$ if it acts trivially on $\ket{\psi}$. The set of all $g \in G$ that stabilize $\ket{\psi}$, defined
\begin{equation}
    \Stab_{G}(\ket{\psi}) \equiv \{g\in G\: | \:g\ket{\psi}=\ket{\psi}\},
\end{equation}
makes up the stabilizer subgroup of $\ket{\psi}$ under the action of $G$. Otherwise stated, $\ket{\psi}$ is a $+1$ eigenvector of each $g \in \Stab_{G}(\ket{\psi})$.

For $G$ a finite group, Lagrange's theorem \cite{Alperin1995} ensures a partition of $|G|$ for any subgroup $H \leq G$, explicitly
\begin{equation}\label{LagrangeTheorem}
    |G| = \left[G:H\right] \cdot |H|,
\end{equation}
with $[G:H]$ the index of $H$ in $G$. Furthermore when $G$ acts on a set $X$, the Orbit-Stabilizer theorem \cite{Alperin1995} gives the orbit of $x \in X$ under the action of $G$ as
\begin{equation}\label{OrbitStabilizerTheorem}
    |G \cdot x| = \left[G:\Stab_G(x)\right] = \frac{|G|}{|\Stab_G(x)|}.
\end{equation}
When $G$ acts on a Hilbert space $\Hil$, we can use Eqs. \eqref{LagrangeTheorem} and \eqref{OrbitStabilizerTheorem} to construct orbits of $\ket{\psi} \in \Hil$ under the group action \cite{Keeler:2023xcx}.

When considering the action of $\Pi_N$ on $\Hil$, states which are stabilized by a $2^N$-element subgroup of $\Pi_N$ are known as stabilizer states \cite{aaronson2004improved,Gottesman1997,garcia2017geometry,Keeler2022}. Stabilizer states play a critical role in near-term realization of quantum computing, as they represent the set of quantum systems which can be efficiently classically simulated \cite{Gottesman1998}. One way to construct the set of $N$-qubit stabilizer states is by acting on a state in the measurement basis $\{\ket{0},\ket{1}\}^N$ with the Clifford group $\mathcal{C}_N$. The set of $N$-qubit stabilizer states $\mathcal{S}_N$ is exactly the orbit of each state in $\{\ket{0},\ket{1}\}^N$, under the action of $\mathcal{C}_N$. The number of $N$-qubit stabilizer states generated by the orbit $\mathcal{S}_N$ is derived in \cite{doi:10.1063/1.4818950}, and has order
\begin{equation}\label{StabilizerSetSize}
    \left|\mathcal{S}_N\right| = 2^n \prod_{k=0}^{n-1} (2^{n-k}+1).
\end{equation}

The process of constructing state orbits under group action naturally admits a graph-theoretic description \cite{aaronson2004improved,Keeler:2023xcx}. For some $G$ acting on $\ket{\psi} \in \Hil$, we can assign a vertex to each state in the orbit $[G \cdot \ket{\psi}]$, and an edge to each generator of $G$. This graph is known as the reachability graph for $\ket{\psi}$, and maps the evolution of $\ket{\psi}$ through $\Hil$ under the action of $G$. Figure \ref{FullC1Graph} depicts the reachability graph for $\ket{0}$ under the single-qubit Clifford group $\mathcal{C}_1$, with vertices representing the $6$ single-qubit stabilizer states.
    \begin{figure}[h]
        \centering
        \includegraphics[width=10cm]{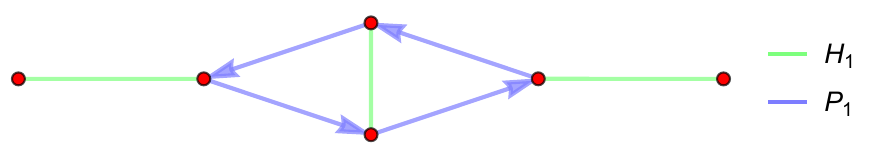}
        \caption{Orbit of $\ket{0}$ under the action of $\mathcal{C}_1$, depicted as a reachability graph. Graph vertices represent the $6$ single-qubit stabilizer states, and edges correspond to $\mathcal{C}_1$ generators. For generators which are self-inverse, e.g. the Hadamard gate, we use undirected edges.}
    \label{FullC1Graph}
    \end{figure}

It is often useful to consider only the action of a subgroup $H \leq G$ on $\Hil$. Focusing on the action of $H$ highlights specific features of a state's orbit, and can better-exhibit the evolution of certain state properties through the orbit. When composing the orbit of a state $\ket{\psi}$ under the action of some $H \leq G$, the term restricted graph is sometimes used to discuss the emergent reachability graph \cite{Keeler2022}.

\section{The Entropy Cone for $\ket{D^N_k}$}\label{EntropyConeSection}

In this section, we describe the entropy cone for $N$-qubit Dicke states by explicitly building all $\ket{D^N_k}$ entropy vectors for qubit number $N$ and Hamming-weight $k$. We highlight the symmetric properties of $\ket{D^N_k}$ entropy vectors, and note the relative containment of the $\ket{D^N_k}$ entropy cone in other known entropy cones. We demonstrate that our construction for Dicke states reproduces the $W$ state entropy cone found in \cite{Schnitzer:2022exe}. Additionally, we give a realization of $\ket{D^N_k}$ entropy vectors as a min-cut prescription on weighted star graphs. In later sections we analyze the evolution of the $\ket{D^N_k}$ entropy vectors defined here, under the action of Clifford circuits.

\subsection{Dicke State Entropy Vectors}

The symmetric structure of Dicke states $\ket{D^N_k}$ enables a direct calculation of subsystem entanglement entropy from the non-zero diagonal elements of the density matrix \cite{Witten:2018zva}. For an $N$-party pure state $\ket{D^N_k}$ of Hamming-weight $k$, the entanglement entropy of an $\ell$-party subsystem is computed
\begin{equation}\label{DickeStateEntropies}
    S_{\ell}\left(\ket{D^N_k} \right) \equiv - \binom{N}{k}^{-1} \sum_{i=0}^{min(\ell,k)} \binom{\ell}{i}\binom{N-\ell}{k-i}\ln\left[\binom{N}{k}^{-1}\binom{\ell}{i}\binom{N-\ell}{k-i}\right].
\end{equation}
We can directly verify that $S_{\ell} = S_{N-\ell}$ and $S_N = 0$, from Eq.\ \eqref{DickeStateEntropies}. Furthermore, we highlight the property that $S_{\ell}\left(\ket{D^N_k}\right)$ depends only on the cardinality of a chosen subsystem.

The calculation of $S_{\ell}$ admits simplifications for specific values of $\ell$ and $k$. For states $\ket{D^N_k}$ with $\ell \geq k$, Eq.\ \eqref{DickeStateEntropies} becomes
\begin{equation}\label{lIndependentExpression}
        S_{\ell} = \ln\left[\binom{N}{k} \right] -\binom{N}{k}^{-1}\sum_{i=0}^{k}\binom{\ell}{i}\binom{N-\ell}{k-i}\ln\left[\binom{\ell}{i}\binom{N-\ell}{k-i}\right],
\end{equation}
where we note the $\ell$-independence of the first term. A derivation of \eqref{lIndependentExpression} is given in Appendix \ref{lIndependentSimplification}. A similar decoupling exists for $\ell < k$, and is shown in Eq.\ \eqref{lLessThankSimplification}.

For $k=1$, states $\ket{D^N_k}$ are the subset of $N$-qubit $W$ states, $\ket{W_N} \equiv \ket{D^N_1}$. For an $\ell$-party subsystem of $\ket{D^N_1}$, the expression for $S_{\ell}$ in Eq.\ \eqref{DickeStateEntropies} gives
\begin{equation}\label{WStateEntropies}
    S_{\ell} \left(\ket{D^N_1} \right) = \frac{\ell}{N} \ln\left[\frac{N}{\ell}\right] + \frac{(N-\ell)}{N} \ln \left[\frac{N}{N-\ell}\right],
\end{equation}
in agreement with the calculations given in \cite{Schnitzer:2022exe}.

The ordered set of all $2^N-1$ subsystem entropies for a state $\ket{D^N_k}$, computed according to Eq.\ \eqref{DickeStateEntropies}, compose the entropy vector $\vec{S}\left(\ket{D^N_k}\right)$. Since each $S_{\ell}$ depends only on $|\ell|$, all Dicke state entropy vectors share the form
\begin{equation}\label{DickeStateEntropyVector}
    \vec{S}\left(\ket{D^N_k}\right) \equiv \Bigl(\underbrace{S_1,\,...,\,S_1}_{\binom{N}{1}};\,\underbrace{S_2,\,...,\,S_2}_{\binom{N}{2}};\,...;\,\underbrace{S_{N-1},\,...,\,S_{N-1}}_{\binom{N}{N-1}};\,0\Bigl).
\end{equation}
The entropy vectors in Eq.\ \eqref{DickeStateEntropyVector} are manifestly symmetrized \cite{Czech2021}, leaving $\vec{S}\left(\ket{D^N_k}\right)$ invariant up to exchange of subsystems of equal size $|\ell|$.

All $N$-qubit Dicke state entropy vectors can be calculated using Eqs. \eqref{DickeStateEntropies} and \eqref{DickeStateEntropyVector}. Collectively, these equations describe the $N$-qubit Dicke state entropy cone, defined for each $N$ as the convex hull of all $\vec{S}\left(\ket{D^N_k}\right)$. Since each $\vec{S}\left(\ket{D^N_k}\right)$ is symmetrized, all Dicke state entropy vectors automatically satisfy the symmetrized quantum entropy cone (SQEC) inequalities \cite{Fadel2021}, 
\begin{equation}
    -S_{\ell-1} + 2S_{\ell} -S_{\ell+1} \geq 0, \qquad \forall \,1 \leq \ell \leq \lceil N/2 \rceil,
\end{equation}
which verify symmetric instances of subadditivity and strong-subadditivity. The $N$-qubit Dicke state entropy cone is therefore contained within the SQEC for all $N$. 

The monogamy of mutual information (MMI) inequality, given in Eq.\ \eqref{MMIInequality}, defines a subset of facets which bound the holographic entropy cone. For $\ket{D^N_k}$, MMI is saturated%
\footnote{MMI is trivially saturated by entropy vectors of the unentangled Dicke states $\ket{D^N_N} = \ket{1}^{\otimes N}$.} %
when $N = 3$, and violated otherwise. Similarly, $\ket{D^N_k}$ entropy vectors violate requisite inequalities for the symmetrized holographic entropy cone (SHEC), when $N > 3$, namely
\begin{equation}
    -\ell(\ell+1)S_{\ell-1}+2(\ell-1)(\ell+1)S_{\ell}-\ell(\ell-1)S_{\ell+1} \geq 0, \qquad \forall \, \ell \in [2,n/2].
\end{equation}
Consequently, portions of the $N$-qubit Dicke state entropy cone lie outside the holographic and symmetrized holographic entropy cones.

The entropy cone of stabilizer states is completely characterized up through $4$ parties ($N=5$ qubits including the purifier). It is also known that states $\ket{D^N_k}$ are not stabilizer states%
\footnote{We again note the exception for $\ket{D^N_N}$ which is trivially a stabilizer state.} %
for all $N \geq 3$. Nevertheless, we observe the following for Dicke state systems of $N \leq 5$ qubits:
\begin{observation}\label{DickeStateVectorContainment}
    The Dicke state entropy cone, for $N \leq 5$, is completely contained within the convex hull of the stabilizer entropy cone.
\end{observation}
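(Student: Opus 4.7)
The plan is to establish Observation 1 by a finite, direct verification: enumerate all Dicke entropy vectors for $N \leq 5$, then check each against the known facet inequalities of the stabilizer entropy cone in the same dimension. First I would list the nontrivial Dicke states. For each $N \in \{2,3,4,5\}$ and each $k \in \{1,\ldots,N-1\}$ (the cases $k=0$ and $k=N$ give the all-zero vector), compute $\vec{S}(\ket{D^N_k})$ using Eq.\ \eqref{DickeStateEntropies}. By the symmetry noted in Eq.\ \eqref{DickeStateEntropyVector}, each such vector is determined by at most $\lfloor N/2\rfloor$ distinct values $S_1,\ldots,S_{\lfloor N/2\rfloor}$, so the computation reduces to a small table of explicit real numbers.

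Next I would invoke the complete characterization of the $N$-party stabilizer entropy cone for $N-1 \leq 4$ parties (equivalently, five qubits including the purifier) obtained by Hernandez-Cuenca \emph{et al.}, which supplies a finite list of facet-defining linear inequalities with rational coefficients. Since the stabilizer cone is a closed convex polyhedral cone, an entropy vector lies in its convex hull if and only if it satisfies every facet inequality. The containment statement thus reduces to finitely many numerical checks: substitute the tabulated Dicke entropies into each stabilizer cone inequality and verify nonnegativity. Carrying out these substitutions case by case will establish the claim for $N \leq 5$.

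The main obstacle is that the Dicke entropies are transcendental, being rational linear combinations of $\ln \binom{N}{k}$ and $\ln\binom{\ell}{i}\binom{N-\ell}{k-i}$, while the stabilizer facet inequalities have rational coefficients. Each verification therefore becomes an assertion that a specific rational combination of logarithms of integers is nonnegative. These can be confirmed to arbitrary precision numerically, and analytically by exploiting the log-concavity of the binomial coefficients appearing in Eq.\ \eqref{DickeStateEntropies} together with standard rearrangement inequalities on the multiplicity-weighted terms in Eq.\ \eqref{lIndependentExpression}. A useful consistency check is provided by the $k=1$ slice: applying the procedure to $\ket{D^N_1} = \ket{W_N}$ should reproduce the $W$ state containment results of \cite{Schnitzer:2022exe}, including the observation that $\ket{W_N}$ lies outside the holographic cone for $N \geq 4$ yet remains inside the stabilizer cone.

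A secondary concern is completeness: one must be careful that every nontrivial Dicke entropy vector for $N \leq 5$ is in fact checked, including degeneracies where $\vec{S}(\ket{D^N_k}) = \vec{S}(\ket{D^N_{N-k}})$ which follows from $S_\ell = S_{N-\ell}$ applied to Eq.\ \eqref{DickeStateEntropies}. This symmetry halves the number of distinct entropy vectors and should be exploited to organize the verification, but it does not alter the finite nature of the argument. Assuming the facet list of the stabilizer cone at four parties is treated as an established input, the entire proof is a bounded computation whose correctness rests on this enumeration and on the log-concavity estimates needed to rigorize the inequality checks.
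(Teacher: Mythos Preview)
Your proposal is correct and matches the paper's approach: the statement is presented there as an \emph{Observation} rather than a theorem, and the paper offers no argument beyond noting that the stabilizer entropy cone is completely characterized up through four parties (five qubits) and that a direct check confirms containment. Your plan to enumerate the finitely many Dicke entropy vectors via Eq.~\eqref{DickeStateEntropies} and test each against the known stabilizer facet inequalities is exactly the implied verification, stated with more methodological care than the paper itself provides.
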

Extending Observation \ref{DickeStateVectorContainment} to a general conjecture for all $N$ would require further knowledge of higher-party stabilizer entropy cones.

Acting with Clifford circuits on states $\ket{D^N_k}$ generates additional entropy vectors beyond those given in Eq.\ \eqref{DickeStateEntropyVector}. These Clifford group orbits of Dicke states are discussed in Section \ref{StabilizerSection}, as are the resulting entropy vectors reached under corresponding Clifford circuits. Here we note the following observation for entropy vectors generated by $2$-qubit Clifford action on $\ket{D^N_k}$:
\begin{observation}\label{CliffordOrbitDickeStateInclusion}
    All entropy vectors generated by $2$-qubit Clifford action on $\ket{D^N_k}$, for $N \leq 5$, are contained in the convex hull of the stabilizer entropy cone.
\end{observation}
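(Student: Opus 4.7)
The plan is to establish Observation \ref{CliffordOrbitDickeStateInclusion} by a direct enumeration of the finite Clifford orbits, followed by verification that each resulting entropy vector satisfies the linear inequalities bounding the stabilizer entropy cone for the corresponding party number. First, I would identify the finitely many nontrivial Dicke states for $N \leq 5$: namely $\ket{D^3_1}$, $\ket{D^3_2}$, $\ket{D^4_1}$, $\ket{D^4_2}$, $\ket{D^4_3}$, $\ket{D^5_1}$, $\ket{D^5_2}$, $\ket{D^5_3}$, and $\ket{D^5_4}$, since the cases $k = N$ are product states and lie trivially in every entropy cone. Up to a qubit relabeling, acting with the two-qubit Clifford group on one distinguished pair of qubits covers all representatives needed, so the problem reduces to a finite enumeration.

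Next, I would use the reachability-graph construction of Section \ref{StabilizerSection}, together with the quotient protocol of \cite{Keeler:2023xcx}, to generate the full orbit of each $\ket{D^N_k}$ under the two-qubit Clifford subgroup $\HC$. Since $\HC$ is finite, and the $\HC$-stabilizer subgroup of each $\ket{D^N_k}$ is determined in Section \ref{StabilizerSection}, the Orbit-Stabilizer theorem \eqref{OrbitStabilizerTheorem} bounds each orbit size by at most $1152$. For every state in each orbit I would then compute the full entropy vector via \eqref{vonNeumannEntropy}. Only the reduced density matrices on subsystems that separate the two acted-upon qubits need to be recomputed at each step, since local single-qubit gates preserve all entanglement entropies and the remaining $N-2$ qubits enter only as spectators of the $CNOT$ action.

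The final step is to test each generated entropy vector against the system of linear inequalities defining the $N$-party stabilizer entropy cone. For $N \leq 5$ these cones are fully characterized in the literature (e.g.\ \cite{Linden2013,HernandezCuenca:2019wgh}), so the check is mechanical: substitute each vector into each facet inequality and confirm the inequalities hold. Since both the orbit enumeration and the inequality verification are finite, their combination proves the observation. The Mathematica package and data hosted at \cite{github} provide the infrastructure to carry out all of these computations uniformly.

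The principal obstacle is computational rather than conceptual: at $N=5$ the stabilizer cone is bounded by a relatively large number of inequalities that must be imported faithfully from the literature, and the largest orbits can contain over a thousand distinct states whose entropy vectors must be individually certified. A secondary subtlety is that some $\HC$-orbit vectors break the manifest symmetry of $\vec{S}(\ket{D^N_k})$, so one cannot simply invoke a symmetrized argument; however, the contracted-graph technology alluded to in Section \ref{StabilizerSection} and developed in the cited forthcoming work lets one identify which subsystem entropies remain invariant throughout an orbit, which drastically prunes the list of distinct entropy vectors that must actually be checked against the stabilizer cone.
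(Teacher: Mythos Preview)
Your plan matches what the paper does: the statement is presented as an empirical observation verified by direct computation with the Mathematica infrastructure and orbit data in the cited repositories, not as a theorem with a structural argument. The paper offers no proof beyond the orbit constructions of Section~\ref{StabilizerSection} and the known characterization of the stabilizer cone through four parties.

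One point to tighten: you enumerate only the $\HC$ orbit of each $\ket{D^N_k}$ and bound its size by $1152$. The observation, however, concerns the full two-qubit Clifford group $\mathcal{C}_2$, whose Dicke-state orbits reach $2880$ or $5760$ states and decompose into several distinct $\HC$ subgraphs connected by phase edges (the $\mathcal{C}_2$ orbit of $\ket{D^3_1}$, for instance, contains two copies of $g_{288^*}$, two of $g_{576}$, and one $g_{1152}$). A state obtained by applying $P_1$ and then an $\HC$ circuit need not lie in the $\HC$ orbit of $\ket{D^N_k}$ itself, so its entropy vector could in principle be new. Your verification therefore needs either to enumerate the full $\mathcal{C}_2$ orbit directly, or to invoke the contracted-graph results you allude to at the end, which show that for these Dicke states passing from $\HC$ to $\mathcal{C}_2$ introduces no additional entropy vectors.
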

While we expect Observation \ref{CliffordOrbitDickeStateInclusion} to hold for all $\ket{D^N_k}$, as well as for arbitrary Clifford circuits, we do not make an attempt towards a conjecture in this work.

We have given an explicit calculation of all $\ell$-party entanglement entropies in Dicke states $\ket{D^N_k}$, for arbitrary system size $N$ and Hamming-weight $k$. We used this result to construct all Dicke state entropy vectors $\vec{S}\left(\ket{D^N_k}\right)$, and showed that our results reproduce previous entropy vector calculations for $W$ states for $k=1$. We present the set of all $\vec{S}\left(\ket{D^N_k}\right)$, for a fixed qubit number $N$, as the $N$-qubit Dicke state entropy cone. We have highlighted that, since $\ket{D^N_k}$ entropy vectors emerge symmetrized, the $\ket{D^N_k}$ entropy cone is contained within the SQEC. At $N \geq 4$ we observed that $\vec{S}\left(\ket{D^N_k}\right)$ violates holographic inequalities, e.g. MMI, and lies outside both the holographic and symmetrized holographic entropy cones. In the next section, we use our $\ket{D^N_k}$ entropy vector construction to define a min-cut protocol which realizes Dicke state entropy vectors using weighted star graphs.

\subsection{A Graph Model for $\ket{D^N_k}$ Entropies}\label{StarGraphSection}

We now outline a protocol which compute Dicke state entropies, as given in Eq.\ \eqref{DickeStateEntropies}, as a sum over minimum-weight edge cuts on star graphs. Initial descriptions using star graphs to represent average entropies were presented in \cite{Czech:2021rxe,Fadel2021}, and later extended to include the possibility of negative edge weights in \cite{Harper:2022sky,Schnitzer:2022exe}. We demonstrate an explicit example of this star graph construction for $\ket{D^N_1}$ entropy vectors, and describe how to recursively generalize the model for $k>1$.

To construct our representation of entanglement entropy, we consider a graph $G = (V,E)$, with vertex set $V$ partitioned into subsets of internal vertices $V_{Int.} \subseteq V$, and external vertices $V_{Ext.}\subseteq V$. For an $N$-party $\ket{\psi}$ with purifier, each disjoint subsystem $\ell$ is assigned a vertex $v_{\ell} \in V_{Ext.}$, where $|V_{Ext.}| = N+1$. The entropy $S_{\ell}$ is then computed as the total weight of a min-cut on $G$ which separates $v_\ell$ from its complement subsystem $v_{\ell^C} \subset V_{Ext.}$.

For Dicke states $\ket{D^N_k}$ we represent each $S_{\ell}$ using a star graph with $N$ edges of unit weight, and one edge of weight $w \leq 0$. One novel feature of these graphs is that $w$ may take on select negative values, subject to the required inequalities%
\footnote{We often consider a tuple of non-negative entropies which lives in a totally non-negative sector of some $2^{N}-1$ vector space. If instead one considers a perfect tensor decomposition, negative entropies are permitted as long as they sum to a positive value in the entropy basis and satisfy the required inequalities of a chosen entropy cone. For further detail we recommend \cite{Harper:2022sky}.} %
of a particular entropy cone, which ultimately sum to non-negative entropies. Since $\ell$-party $\ket{D^N_k}$ entropies depend only on the cardinality $|\ell|$, we compute $S_{\ell}$ as the min-cut
\begin{equation}\label{SergioFadelEq}
    S_{\ell} = \min \{|\ell|,N-1 -|\ell|+w \},
\end{equation}
following \cite{Schnitzer:2022exe,Fadel2021}. Figure \ref{BasicStarGraph} gives an example of a star graph which realizes $S_{\ell}$ for a state $\ket{D^3_k}$.
	\begin{figure}[h]
		\begin{center}
		\begin{overpic}[width=6cm]{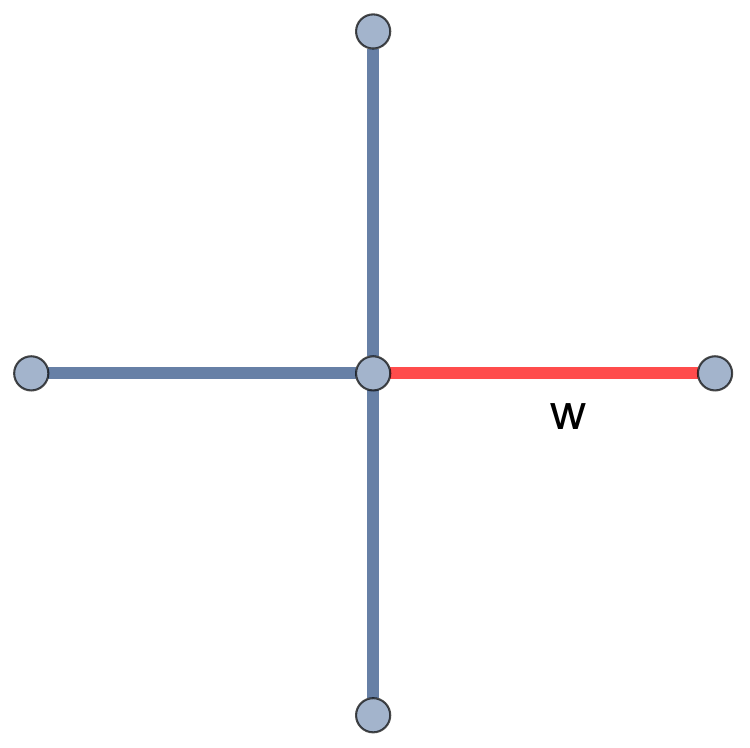}
		\put (54,95){A}
        \put (-6,50){B}
        \put (54,0){C}
        \put (100,50){O}
        \end{overpic}
        \caption{Example of a $4$-legged star graph, with $3$ legs of unit weight and one leg of weight $w \leq 0$, which realizes the entropies of $\ket{D^3_k}$. The weight $w$ is negative in this graph, and defined as a function of $k$.}
		\label{BasicStarGraph}
	\end{center}
	\end{figure}
The value of $w$ is defined in terms of $k$, as shown in Eqs. \eqref{StarGraphMin}--\eqref{W2Bounds}.

Since $\ket{D^N_k}$ entropies in Eq.\ \eqref{DickeStateEntropies} obey the symmetry $S_{\ell} = S_{N-\ell}$, we can define $\Tilde{S}_\ell$ to be the symmetrized variable over all $\ell$-party entanglement entropy
\begin{equation}\label{StarGraphMin}
    \Tilde{S}_\ell = \binom{N}{\ell}^{-1}\left[\binom{N-1}{\ell}S_\ell + \binom{N-1}{N-\ell}S_{N-\ell}  \right].
\end{equation}
As shown in Eq.\ \eqref{lIndependentExpression}, each $S_{\ell}$ is computed as a sum over $\ell+1$ terms when $\ell < k$, or $k+1$ terms when $\ell \geq k$. Accordingly, each $\Tilde{S}_\ell$ in Eq.\ \eqref{StarGraphMin} is realized as a sum over $\ell+1$ (or $k+1$) star graphs, for $1 \leq \ell \leq \lceil N/2 \rceil$. This sum over graphs for each $S_{\ell}$ generalizes the previous constructions in \cite{Schnitzer:2022exe} to all $\ket{D^N_k}$.

To demonstrate this min-cut model, we construct an explicit representation of $S_{\ell}$ for states $\ket{D^N_1}$. Applying Eq.\ \eqref{SergioFadelEq} to Eq.\ \eqref{StarGraphMin} we have
\begin{equation}\label{WStateStarEquation}
    \Tilde{S}_{\ell} = \frac{1}{N} \left[(N-\ell)\min\{\ell,N-1-\ell+w_1\} + \ell \min \{N-\ell,w_2+\ell-1 \} \right].
\end{equation}
Figure \ref{TwoStarGraphs} shows an example pair of graphs, for the state $\ket{D^4_1}$, whose sum over min-cuts realizes Eq.\ \eqref{WStateStarEquation}. In both graphs, the negatively-weighted edge connects to the external vertex for the purifier $O$.
    \begin{figure}[h]
        \centering
        \includegraphics[width=11cm]{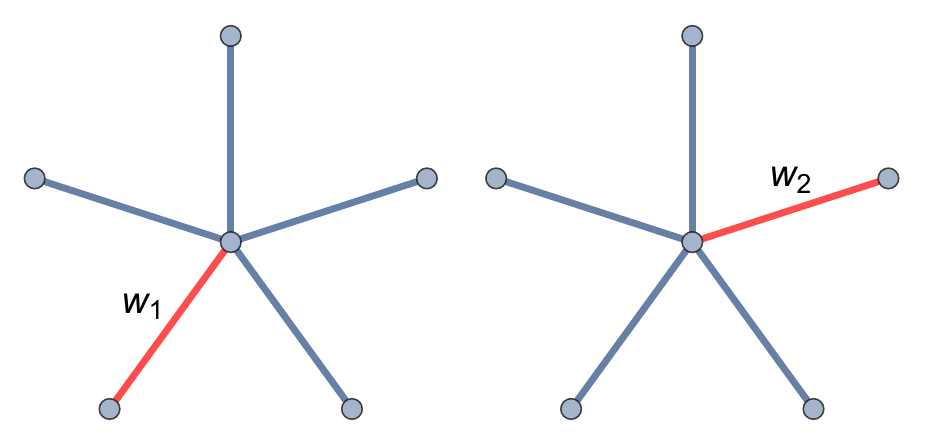}
        \caption{Pair of star graphs whose min-cut sum calculates $\Tilde{S}_{\ell}$, as in Eq.\ \eqref{WStateStarEquation}, for $\ket{D^5_1}$. The values $w_1$, $w_2$ are both negative and set by inserting Eq.\ \eqref{DickeStateEntropies} into Eq.\ \eqref{SergioFadelEq}.}
    \label{TwoStarGraphs}
    \end{figure}

Beginning with the first term in Eq.\ \eqref{WStateStarEquation}, which we denote $(\Tilde{S}_{\ell})_1$, we have
\begin{equation}
    (\Tilde{S}_{\ell})_1 = \frac{1}{N}(N-\ell)\min \{ \ell,N-1-\ell+w_1\}.
\end{equation}
From Eq.\ \eqref{DickeStateEntropies} we require
\begin{equation}
    \min \{ \ell,N-1-\ell+w_1\} = \ln\left[\frac{N}{N-\ell}\right],
\end{equation}
which we solve for $w_1$ to give the bound
\begin{equation}\label{w1Solve}
    w_1 = \ell + \ln\left[\frac{N}{N-\ell}\right] - (N-1).
\end{equation}
The weight $w_1$ in Eq.\ \eqref{w1Solve} takes on negative values for
\begin{equation}\label{W1Bounds}
    \ell < (N-1) - \ln\left[\frac{N}{N-\ell}\right].
\end{equation}

Evaluating the second term $(\Tilde{S}_{\ell})_2$ in Eq.\ \eqref{StarGraphMin}, we have
\begin{equation}\label{STilde2}
    (\Tilde{S}_{\ell})_2 = \frac{\ell}{N}\min\{N-\ell,w_2+\ell-1\}.
\end{equation}
We solve Eq.\ \eqref{STilde2} to find the weight
\begin{equation}
    w_2 = \ln\left[\frac{N}{\ell}\right]-\ell+1,
\end{equation}
which is negative while
\begin{equation}\label{W2Bounds}
    \ell > 1+ \ln\left[\frac{N}{\ell} \right].
\end{equation}

The procedure in Eqs. \ref{WStateStarEquation}--\ref{W2Bounds} can be applied for all $\ket{D^N_1}$, with the resulting symmetrized entropies $\Tilde{S}_{\ell}$ described as a min-cut protocol on a pair of weighted star graphs analogous to those in Figure \ref{TwoStarGraphs}. Each graph possesses a single edge of negative weight, and the values of each weight can be determined as in Eqs. \eqref{W1Bounds} and \eqref{W2Bounds}. We now describe how to generalize this model to arbitrary $k$, by inserting sequences of star graphs to evaluate each $S_{\ell}$.

We can naturally extend the protocol described in Eqs. \ref{WStateStarEquation}--\ref{W2Bounds} to all $\ket{D^N_k}$ with $k\geq1$. For any $k >1$, each of the terms $(\Tilde{S}_\ell)_1$ and $(\Tilde{S}_\ell)_2$ in Eq.\ \eqref{StarGraphMin} are computed as a sum over min-cuts on $\ell +1$ star graphs for $\ell \geq k$, or $k+1$ star graphs for $\ell <k$. For example, consider the symmetrized entropies of $\ket{D^5_2}$,
\begin{equation}\label{5QubitEntropyRecursion}
    \begin{split}
        \Tilde{S}_1 &= (\Tilde{S}_1)_1 + (\Tilde{S}_1)_2 = 2\binom{5}{1}^{-1}\left(\frac{3}{5}\ln\left[\frac{5}{3} \right] + \frac{2}{5}\ln\left[\frac{5}{2} \right]\right) = \Tilde{S}_4,\\
        \Tilde{S}_2 &= (\Tilde{S}_2)_1 + (\Tilde{S}_2)_2 = 2\binom{5}{2}^{-1}\left(\frac{3}{5}\ln\left[\frac{5}{3} \right] + \frac{3}{10}\ln\left[\frac{10}{3} \right] + \frac{1}{10}\ln\left[\frac{10}{1} \right]\right)  = \Tilde{S}_3,\\
        \Tilde{S}_5 &= 0\\
    \end{split}
\end{equation}
The quantity $\Tilde{S}_1$ in Eq.\ \eqref{5QubitEntropyRecursion} admits a star graph representation exactly as described in Eqs. \eqref{StarGraphMin}--\eqref{W2Bounds}. Meanwhile, $\Tilde{S}_2$ is given by a sum over three star graphs, each having a single edge of negative weight.

We have given a min-cut protocol on weighted star graphs which realizes the symmetrized entropies $\Tilde{S}_{\ell}$ of all Dicke states $\ket{D^N_k}$. We gave a direct example showing graph realizations of $S_{\ell}$ for states $\ket{D^N_1}$, in agreement with results demonstrated in \cite{Schnitzer:2022exe}. We generalized this technique to $k>1$ by computing each term in  $\Tilde{S}_{\ell}$ as a sum over $\ell + 1$ star graphs, each having a single edge of negative weight. In the next section we explore group stabilizers for Dicke states under action of the Pauli and Clifford groups. We likewise analyze the orbits of Dicke states under these groups, as well as the dynamics of $\ket{D^N_k}$ entropy vectors under Clifford circuits. We illustrate $\ket{D^N_k}$ orbits as reachability graphs, using the methods given in \cite{Keeler2022,Keeler:2023xcx,Keeler:2023shl}.

\section{Stabilizers and Orbits of $\ket{D^N_k}$}\label{StabilizerSection}

In this section we construct the stabilizer subgroups for all Dicke states $\ket{D^N_k}$ under action of the Pauli and Clifford groups. We use each subgroup to construct the reachability graph for all $\ket{D^N_k}$, under the action of Pauli and Clifford group elements \cite{Keeler:2023xcx}. We highlight differences between the reachability graph structures observed for Dicke states, and those seen among the $N$-qubit stabilizer states. We later remark on the utility of the $\ket{D^N_k}$ stabilizer groups identified for error-correcting codes. We analyze $\ket{D^N_k}$ reachability graphs, with vertices colored to indicate the entropy vector, and determine the evolution of $\ket{D^N_k}$ entropy vectors under a restricted subgroup of Clifford operators. Further, we establish bounds on how much each $\ket{D^N_k}$ entropy vector can change under the select set of gates. The Mathematica data and packages used to generate all graphs is publicly available \cite{githubStab, githubCayley}.

\subsection{Pauli Group Orbits}\label{PauliStabilizers}

We first consider the action of the Pauli group%
\footnote{In the case of $\Pi_N$, as well as with $\mathcal{C}_N$, we first mod out each group by elements which act as a global phase on the group. For $\mathcal{C}_N$ this global phase element is $\omega \equiv (H_iP_i)^3$, which has the property $\omega^8 = \mathbb{1}$. Likewise for $\Pi_N$, this global phase is $\omega^2$. For details, see Section 5.1 of \cite{Keeler:2023xcx}.} %
$\Pi_N$ on the set of $N$-qubit Dicke states. While all quantum states are trivially stabilized by $\mathbb{1} \in \Pi_N$, Dicke states admit larger stabilizer subgroups in $\Pi_N$, which make them useful for stabilizer code construction. Every $\ket{D^N_k}$ is stabilized by, at least, $2$ elements of $\Pi_N$, but some are stabilized by more. 

In addition to $\mathbb{1}$, all Dicke states $\ket{D^{N}_k}$ are stabilized by the $\Pi_N$ element 
\begin{equation}\label{AllDickeStabilizer}
\begin{split}
    &\bigotimes_{i=1}^N \sigma_Z^i, \quad \textnormal{for k even},\\
    -&\bigotimes_{i=1}^N \sigma_Z^i, \quad \textnormal{for k odd}.\\
\end{split}
\end{equation}
The operator in Eq.\ \eqref{AllDickeStabilizer} acts as a $\sigma_Z$ on every qubit of an $N$-qubit system, with an additional $-1$ phase for $k$ odd. For example, Dicke states $\ket{D^{3}_1}$ and $\ket{D^{5}_2}$ have respective stabilizer subgroups given by
\begin{equation}
\begin{split}
    \Stab_{\Pi_3}(\ket{D^{3}_1}) &= \{\mathbb{1}, -\sigma_Z^1\sigma_Z^2\sigma_Z^3\},\\
    \Stab_{\Pi_5}(\ket{D^{5}_2}) &= \{\mathbb{1}, \sigma_Z^1\sigma_Z^2\sigma_Z^3\sigma_Z^4\sigma_Z^5\}.\\
\end{split}
\end{equation}

The stabilizer subgroup containing $\mathbb{1}$ and Eq.\ \eqref{AllDickeStabilizer} quotients $\Pi_N$ into a group of order $2^{2n-1}$. Figure \ref{D31PauliGraph} illustrates the reachability graph for $\ket{D^{3}_1}$ under the action of $\Pi_3$. 
    \begin{figure}[h]
        \centering
        \includegraphics[width=9cm]{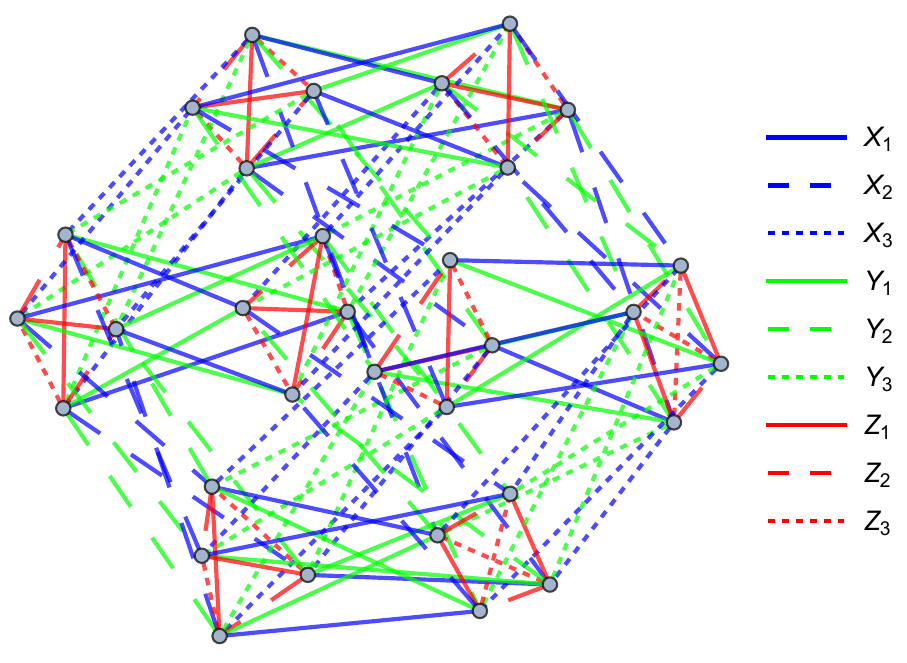}
        \caption{Orbit of $\ket{D^3_1}$ under the $3$-qubit Pauli group $\Pi_3$, which contains $32$ vertices. In general, states stabilized by only $\mathbb{1}$ and Eq.\ \eqref{AllDickeStabilizer} will have a Pauli orbit of length $2^{2n-1}$.}
    \label{D31PauliGraph}
    \end{figure}

The Dicke state $\ket{D^{N}_N}$ is a stabilizer state, specifically $\ket{D^{N}_N} = \ket{1}^{\otimes N}$, as are all $\ket{D^{N}_k}$ for $N \leq 2$. Accordingly, $\ket{D^{N}_N}$ is stabilized by a $2^N$-element subgroup of $\Pi_N$. In addition to $\mathbb{1}$ and the operator in Eq.\ \eqref{AllDickeStabilizer}, $\ket{D^{N}_N}$ is stabilized by the action of $-\sigma_Z$ on any single qubit, as well as $\sigma_Z$ on any qubit pair for $N \geq 2$. Written explicitly, the set
\begin{equation}\label{StabilizerNN}
    \Stab_{\Pi_N}(\ket{D^{N}_N})\supseteq\{-\sigma_Z^i,\, \sigma_Z^i\sigma_Z^j\}, \quad \forall \, i,j \in \{1,N\}.
\end{equation}

As an example, the stabilizer subgroup of $\ket{D^{3}_3}$ consists of the $6$ operations
\begin{equation}
    \Stab_{\Pi_3}(\ket{D^{3}_3}) = \{\mathbb{1},\, -\sigma_Z^1,\,-\sigma_Z^2,\,-\sigma_Z^3, \,\sigma_Z^1\sigma_Z^2, \,\sigma_Z^1\sigma_Z^3,\,\sigma_Z^2\sigma_Z^3,\, -\sigma_Z^1\sigma_Z^2\sigma_Z^3\}.
\end{equation}
The corresponding Pauli orbit for $\ket{D^{3}_3}$ is shown in Figure \ref{D33PauliGraph}, where we note that all edges of the graph simultaneously represent the actions of $\sigma_X^i$ and $\sigma_Y^i$, as they act identically on $\ket{D^{3}_3}$ up to global phase.
    \begin{figure}[h]
        \centering
        \includegraphics[width=9cm]{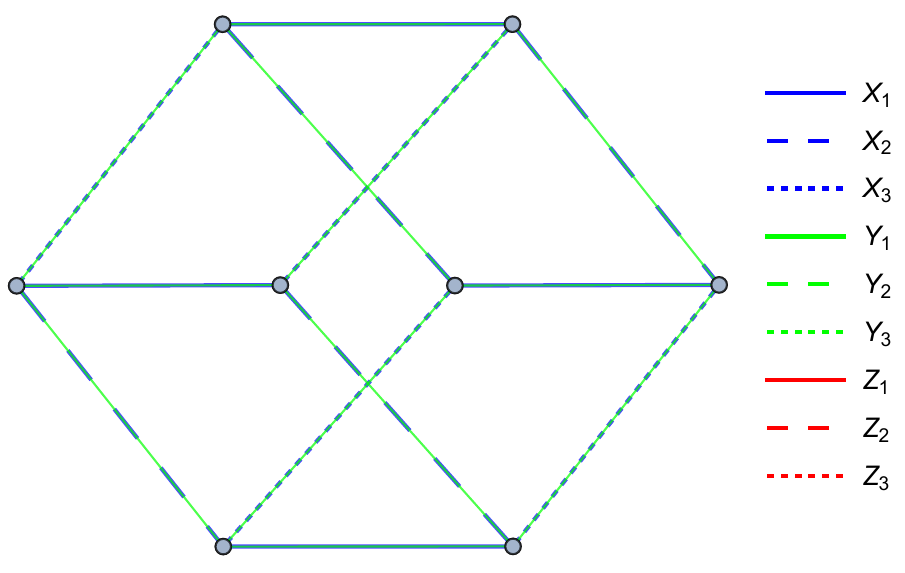}
        \caption{Reachability graph of state $\ket{D^3_3}$ under the action of $\Pi_3$. This graph contains $8$ vertices, with gates $\sigma_X^i$ and $\sigma_Y^i$ acting the same on $\ket{D^3_3}$. States of the form $\ket{D^N_N}$ are stabilizer states and are stabilized by $2^N$ elements of $\Pi_N$.}
    \label{D33PauliGraph}
    \end{figure}

Finally, we consider Dicke states $\ket{D^{N}_k}$ where $N=2k$. States $\ket{D^{2k}_k}$ are stabilized by the simultaneous action of $\sigma_X$ and $\sigma_Y$ on every qubit. For states $\ket{D^{2k}_k}$, we have
\begin{equation}\label{Stabilizer2kk}
\Stab_{\Pi_N}(\ket{D^{2k}_k}) \supset \left\{
    \bigotimes_{i=1}^{2k} \sigma_{X}^i,\, \bigotimes_{i=1}^{2k} \sigma_{Y}^i \right\},
\end{equation}
as well as $\mathbb{1}$ and Eq.\ \eqref{AllDickeStabilizer}. The two additional stabilizers in Eq.\ \eqref{Stabilizer2kk} result in $4$-element stabilizer subgroup for $\ket{D^{2k}_k}$ under the action of $\Pi_N$. 

For the example state $\ket{D^{4}_2}$, its stabilizer subgroup under $\Pi_4$ can be written
\begin{equation}
\Stab_{\Pi_4}(\ket{D^{4}_2}) =\{\mathbb{1},\,\sigma_X^1\sigma_X^2\sigma_X^3\sigma_X^4,\,\sigma_Y^1\sigma_Y^2\sigma_Y^3\sigma_Y^4,\,\sigma_Z^1\sigma_Z^2\sigma_Z^3\sigma_Z^4\}.
\end{equation}
Since $\ket{D^{4}_2}$ is stabilized by a $4$-element subgroup of $\Pi_4$, its orbit under $\Pi_4$, depicted in Figure \ref{D42PauliGraph} of Appendix \ref{lIndependentSimplification}, reaches $64$ states.

We have given the stabilizer subgroup for all $\ket{D^{N}_k}$ under action of the $N$-qubit Pauli group. We used the stabilizer subgroup to generate a reachability graph for $\ket{D^{N}_k}$, which represents each state's orbit under $\Pi_N$. In the following section we extend our analysis to consider action of the $N$-qubit Clifford group $\mathcal{C}_N$, as well as $\mathcal{C}_N$ subgroups. We use the reachability graphs of $\ket{D^{N}_k}$ to analyze entanglement structures observed in Dicke state orbits.

\subsection{Clifford Group Orbits and Entanglement Evolution}

Dicke states $\ket{D^N_k}$ are not stabilizer states for $N \geq 3$ and $N \neq k$. However, interestingly, states $\ket{D^N_k}$ are stabilized by more Clifford group elements than just $\mathbb{1}$. In this section, we extend our study of $\ket{D^N_k}$ orbits by considering the action of the two-qubit Clifford group $\mathcal{C}_2$. Since entanglement modification via Clifford gates occurs through bi-local action, this restriction to $\mathcal{C}_2$ is sufficient for exploring the evolution of Dicke state entropy vectors under Clifford circuits. We construct the stabilizer subgroup for each $\ket{D^N_k}$ under the action of $\mathcal{C}_2$, and compute the size of each orbit.

We also present reachability graphs for $\ket{D^N_k}$ under the action of the $\mathcal{C}_2$ subgroup $(HC)_{1,2} \equiv \langle H_1,\, H_2,\, C_{1,2},\, C_{2,1} \rangle$, with vertices colored by entropy vector as in \cite{Keeler2022,Keeler:2023xcx}. Since the $P_1$ and $P_2$ gates cannot modify a state's entropy vector, the subgroup $(HC)_{1,2}$ contains all non-trivial entropy vector dynamics. Furthermore, graph representations of $(HC)_{1,2}$ orbits are easier to parse than $\mathcal{C}_2$ orbits, as they contain a factor of $10$ less vertices. We use $(HC)_{1,2}$ orbits of Dicke states to give a bound on the number of times $\ket{D^N_k}$ entropy vectors can change under this gate set. A more general bound on entropy vector dynamics using quotient graphs is derived in \cite{Keeler:2023shl}.

All one and two-qubit Dicke states, $\ket{D^1_1}$ and $\ket{D^2_k}$, are also stabilizer states. Likewise, every $\ket{D^N_N}$ is a stabilizer state as well. Accordingly, states $\ket{D^2_k}$ and $\ket{D^N_N}$ are stabilized by a $192$-element subgroup of $\mathcal{C}_2$, and their reachability graph is exactly the two-qubit stabilizer state graph shown in Figure \ref{FullC2Graph} of Appendix \ref{lIndependentSimplification}. 

When we restrict to the action of $(HC)_{1,2} = \langle H_1,\,H_2\,C_{1,2}\, C_{2,1}\rangle$, the orbit of $\ket{D^2_k}$ and $\ket{D^N_N}$ contains $24$ states. Figure \ref{HCGraphD21} illustrates this orbit, showing the reachability graph of $\ket{D^N_N}$ under the action of $(HC)_{1,2}$. While the state $\ket{D^N_N}$ is unentangled, elements of the $(HC)_{1,2}$ subgroup are capable of generating instances of $GHZ$-type entanglement throughout the orbit.
	\begin{figure}[h]
		\begin{center}
		\begin{overpic}[width=13cm]{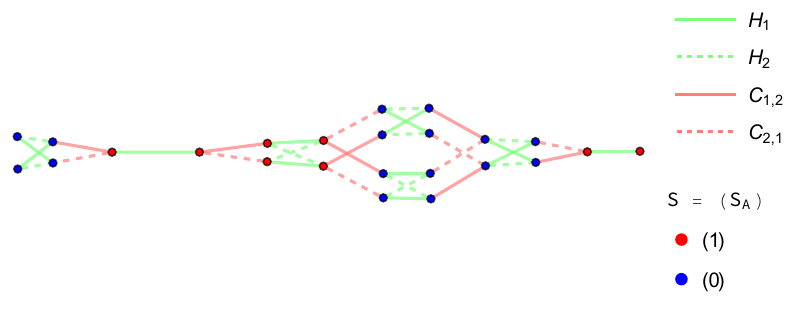}
		\put (61.5,23) {\footnotesize{$\swarrow \ket{D^N_N}$}}
        \put (13.5,16) {\footnotesize{$\uparrow \ket{GHZ}_N$}}
        \end{overpic}
        \caption{Orbit of $\ket{D^N_N}$ under $\langle H_1,\,H_2,\,C_1,\,C_2\rangle$ subgroup. This reachability graph has $24$ vertices and $2$ entanglement possibilities, unentangled and maximally-entangled. Since $\ket{D^N_N} = \ket{1}^{\otimes N}$, this reachability graph is shared by a subset of the $N$-qubit stabilizer states.}
		\label{HCGraphD21}
	\end{center}
	\end{figure}

Dicke states of the form $\ket{D^N_1}$, which define the set of $N$-qubit $W$ states, as well as states $\ket{D^N_{N-1}}$, are stabilized by $4$ elements of $\mathcal{C}_2$. Specifically, the states $\ket{D^N_1}$ and $\ket{D^N_{N-1}}$ have stabilizer subgroup
\begin{equation}\label{DN1StabSubgroup}
\begin{split}
        \Stab_{\mathcal{C}_2}(\ket{D^N_1}) &= \{\mathbb{1},\,H_2C_{1,2}H_2,\,C_{1,2}C_{2,1}C_{1,2},\,H_2C_{1,2}H_2C_{1,2}C_{2,1}C_{1,2}\},\\
        & = \Stab_{\mathcal{C}_2}(\ket{D^N_{N-1}}).
\end{split}
\end{equation}

The stabilizer group in Eq.\ \eqref{DN1StabSubgroup} yields an orbit of $2880$ states for $\ket{D^N_1}$ and $\ket{D^N_{N-1}}$, under the action of $\mathcal{C}_2$. 

Restricting group action to $(HC)_{1,2}$, the orbits of all $\ket{D^N_{1}}$ and $\ket{D^N_{N-1}}$, for $N \geq 3$, consist of $288$ states. Figure \ref{HCGraphD31} depicts the $\ket{D^N_{1}}$ reachability graph under $(HC)_{1,2}$, shown for the example state $\ket{D^3_{1}}$. While this reachability graph has $288$ vertices, it is not isomorphic to the $288$-vertex graph observed for stabilizer states under the action of $(HC)_{1,2}$, presented in \cite{Keeler2022}.
	\begin{figure}[h]
		\begin{center}
		\begin{overpic}[width=13cm]{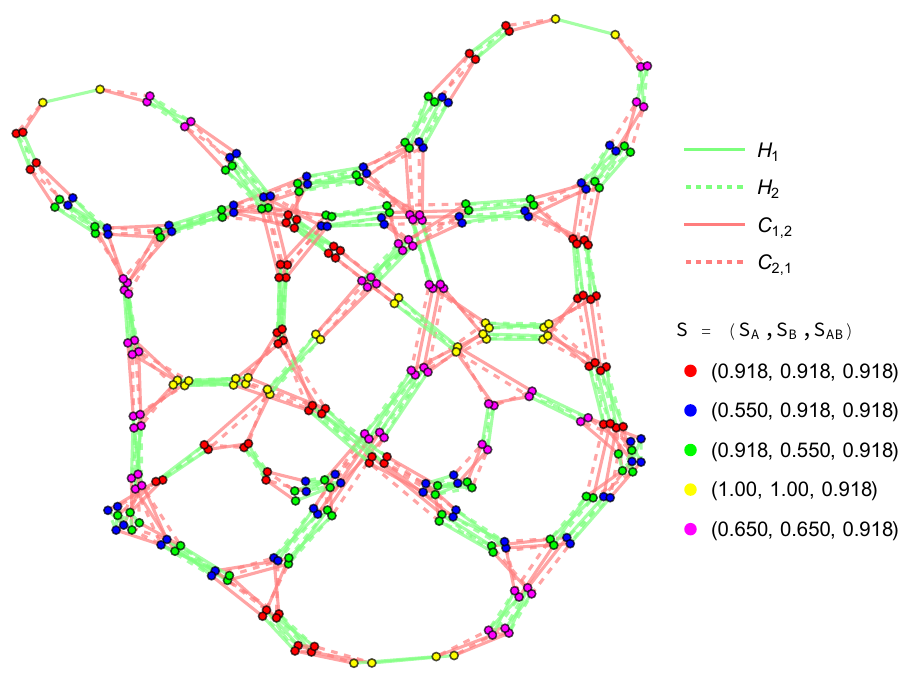}
		\put (53,67.9) {\footnotesize{$\leftarrow \ket{D^3_1}$}}
        \end{overpic}
        \caption{Orbit of $\ket{D^3_1}$ under $\langle H_1,\,H_2,\,C_1,\,C_2\rangle$ action. The graph has $288$ vertices, and contains $5$ different entropy vectors. We especially note the topological distinction of this graph, compared to the $288$-vertex stabilizer state graph. Numerical approximations for entanglement entropies are shown in the figure, with exact values in Table \ref{tab:EntropyVectorTable1}.}
		\label{HCGraphD31}
	\end{center}
	\end{figure}

The orbit of $\ket{D^N_{1}}$ under $(HC)_{1,2}$, for all $N \geq 3$, contains $5$ different entropy vectors. As described in \cite{Keeler:2023shl}, there are maximally $5$ unique entropy vectors that can be generated for states $\ket{D^N_{1}}$, using all $(HC)_{1,2}$ circuits in Figure \ref{HCGraphD31}. While the number of entropy vectors in graphs like Figure \ref{HCGraphD31} cannot increase beyond $5$, the number of different entanglement entropies comprising those entropy vectors, denoted $|s_N|$, continues to grow with increasing qubit number $N$. In Figure \ref{HCGraphD31}, the entropy vectors in the orbit of $\ket{D^3_1}$ are built of $4$ distinct entanglement entropies, shown to the right of the figure. For arbitrary $N$-qubit states $\ket{D^N_1}$, we conjecture the following:
\begin{conjecture}\label{EntanglementCardinality}
For $N \geq 2$, the number of unique entanglement entropies which comprise all entropy vectors in the $(HC)_{1,2}$ orbit of $\ket{D^N_1}$ increases as
\begin{equation}
    |s_N| = \lfloor \frac{5N-7}{2} \rfloor. 
\end{equation}
\end{conjecture}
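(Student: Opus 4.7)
The plan is to reduce the $N$-qubit analysis to a finite problem on qubits $\{1,2\}$ by exploiting the Schmidt decomposition
\begin{equation*}
\ket{D^N_1} = \sqrt{\tfrac{2}{N}}\,\ket{\psi^+}_{12}\,\ket{0^{N-2}}_{3\cdots N} + \sqrt{\tfrac{N-2}{N}}\,\ket{00}_{12}\,\ket{W_{N-2}}_{3\cdots N}
\end{equation*}
across the bipartition $\{1,2\}\mid\{3,\ldots,N\}$. Since $\HC$ acts only on the first tensor factor, both the orthogonality of the second-factor states and the full $S_{N-2}$ permutation symmetry on qubits $3,\ldots,N$ are preserved throughout the orbit, so every entropy $S_I$ depends only on $(|I\cap\{1,2\}|,|I\cap\{3,\ldots,N\}|)$ and, when $|I\cap\{1,2\}|=1$, on which of qubits $1,2$ lies in $I$. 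Accordingly I would sort the entropies into three families: $T_k$ for $I\subseteq\{3,\ldots,N\}$ of size $k$, $U_k$ for $I=\{1,2\}\cup S$ with $|S|=k$, and the two ``active'' families $V_k^{(i)}$ for $I=\{i\}\cup S$, $i\in\{1,2\}$.

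First I would show that $T_k$ and $U_k$ are $\HC$-invariant, because the partial trace defining them removes either neither or both of qubits $1,2$, so the $\HC$ action reduces to a unitary conjugation on the surviving reduced density matrix. A direct Schmidt-form computation then yields $T_k=W^{(N)}_k$ and $U_k=W^{(N)}_{k+2}$ with $W^{(N)}_\ell$ the $W$-state entropy of Eq.\ \eqref{WStateEntropies}; combined with purity $W^{(N)}_\ell=W^{(N)}_{N-\ell}$ these contribute exactly $\lfloor N/2\rfloor$ non-zero entropies common to every vertex of the reachability graph. The heart of the proof is a case analysis of the five entropy-vector classes. For any $U\in\HC$ a direct calculation of $\rho_{\{1\}\cup S}$ yields the block form
\begin{equation*}
\rho_{\{1\}\cup S} = \frac{1}{N}\begin{pmatrix} 2\sigma_{11}+(N-2-k)\sigma_{22} & \sqrt{2k}\,\sigma_{12} \\ \sqrt{2k}\,\sigma_{21} & k\,\sigma_{22} \end{pmatrix}
\end{equation*}
on the four-dimensional subspace spanned by $\ket{a}_1\otimes\ket{\chi}_S$ with $a\in\{0,1\}$ and $\ket{\chi}_S\in\{\ket{0}^{\otimes k},\ket{W_k}\}$, where $\sigma_{ij}=\operatorname{Tr}_2\ket{\phi_i}\!\bra{\phi_j}$ and $(\ket{\phi_1},\ket{\phi_2})=(U\ket{\psi^+},U\ket{00})$ is the transformed Schmidt pair on qubits $1,2$.

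Enumerating the five equivalence classes of such pairs modulo the four-element stabilizer of Eq.\ \eqref{DN1StabSubgroup}, I would diagonalize the block matrix in each class and read off the $V_k^{(i)}$ as explicit functions of $(k,N)$. The trivial class containing $\ket{D^N_1}$ itself produces only $W$-state values already counted in $\lfloor N/2\rfloor$, while the four remaining classes should supply $2N-4$ new entropies when $N$ is even and $2N-3$ when $N$ is odd, together matching $\lfloor(5N-7)/2\rfloor-\lfloor N/2\rfloor$. The main obstacle will be ruling out accidental coincidences: one must show that the characteristic polynomials of the four non-trivial block matrices give pairwise distinct non-zero eigenvalues across classes and across different $k$, and that none of the resulting $V$-entropies equals any $W^{(N)}_\ell$ except through the purity relation $V_k^{(1)}=V_{N-2-k}^{(2)}$ and the midpoint degeneracy at $k=(N-2)/2$ in the even-$N$ case (which is precisely what the floor function encodes). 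I expect this rigidity to follow from the algebraic form of the eigenvalues together with the finiteness of the values taken by $\sigma_{ij}$ as partial traces of two-qubit stabilizer states, but it is the step requiring the most careful bookkeeping.
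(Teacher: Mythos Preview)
The paper does not prove this statement. It is explicitly presented as a \emph{conjecture}, supported only by numerical data: the distinct entropies in the $(HC)_{1,2}$ orbit of $\ket{D^N_1}$ are computed and plotted for $N\le 10$, and the formula $\lfloor(5N-7)/2\rfloor$ is read off from that plot. No analytic argument is offered or attempted.

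Your proposal therefore goes well beyond what the paper does. The Schmidt decomposition across $\{1,2\}\mid\{3,\ldots,N\}$ and the reduction via the preserved $S_{N-2}$ symmetry to a finite case analysis on the two-qubit factor is a sound strategy, and your identification of the invariant families $T_k,U_k$ with the $W$-state entropies $W^{(N)}_\ell$, contributing $\lfloor N/2\rfloor$ values, is correct. However, the step you yourself flag as the main obstacle---showing that the active-family values $V_k^{(i)}$ are pairwise distinct across the four non-trivial classes and distinct from every $W^{(N)}_\ell$ except through the purity relation---is not mere bookkeeping. These are equalities between Shannon-type entropies whose underlying probability weights are algebraic functions of $k/N$ arising from the eigenvalues of your $4\times 4$ block matrices; ruling out coincidences for \emph{all} $N$ generally requires transcendence input (linear independence of logarithms of algebraic numbers) rather than finite linear algebra over the five two-qubit classes. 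Without that step your framework gives at best the upper bound $|s_N|\le\lfloor(5N-7)/2\rfloor$ from the type count, and the matching lower bound---hence the conjecture---remains open, consistent with the paper's choice to state it as a conjecture.
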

The state $\ket{D^1_1}$ is pure and has zero entanglement entropy. The number of unique entanglement entropies encountered in the $(HC)_{1,2}$ orbit of $\ket{D^N_k}$ are depicted in Figure \ref{DickeEntropiesQubitNum}, for $N \leq 10$ qubits.

All remaining Dicke states $\ket{D^{N}_k}$, with $1 < k < N-1$, are stabilized by a $2$-element subgroup of $\mathcal{C}_2$. The stabilizer subgroup for such $\ket{D^{N}_k}$ states is given by
\begin{equation}\label{AllOtherDStabilizer}
\Stab_{\mathcal{C}_2}\left(\ket{D^N_k}\right) = \{\mathbb{1},\, C_{1,2}C_{2,1}C_{1,2},\}, \quad \forall \, 1 < k < N-1,
\end{equation}
Consequently, the $\mathcal{C}_2$ orbit of states stabilized by Eq.\ \eqref{AllOtherDStabilizer} reaches $5760$ states.

The action of $(HC)_{1,2}$ on $\ket{D^{N}_k}$, for $1 < k < N-1$, generates an orbit of $576$ states. This $576$-element orbit under $(HC)_{1,2}$ is particularly interesting as it differs in size from any stabilizer state orbit under $(HC)_{1,2}$ action \cite{Keeler2022,Keeler:2023xcx}. Stated alternatively, the stabilizer subgroup in Eq.\ \eqref{AllOtherDStabilizer} is not shared by any stabilizer state at any qubit number. As a result, reachability graphs with $576$ vertices, like that in Figure \ref{HCGraphD42}, are never witnessed for stabilizer states.
	\begin{figure}[h]
		\begin{center}
		\begin{overpic}[width=15cm]{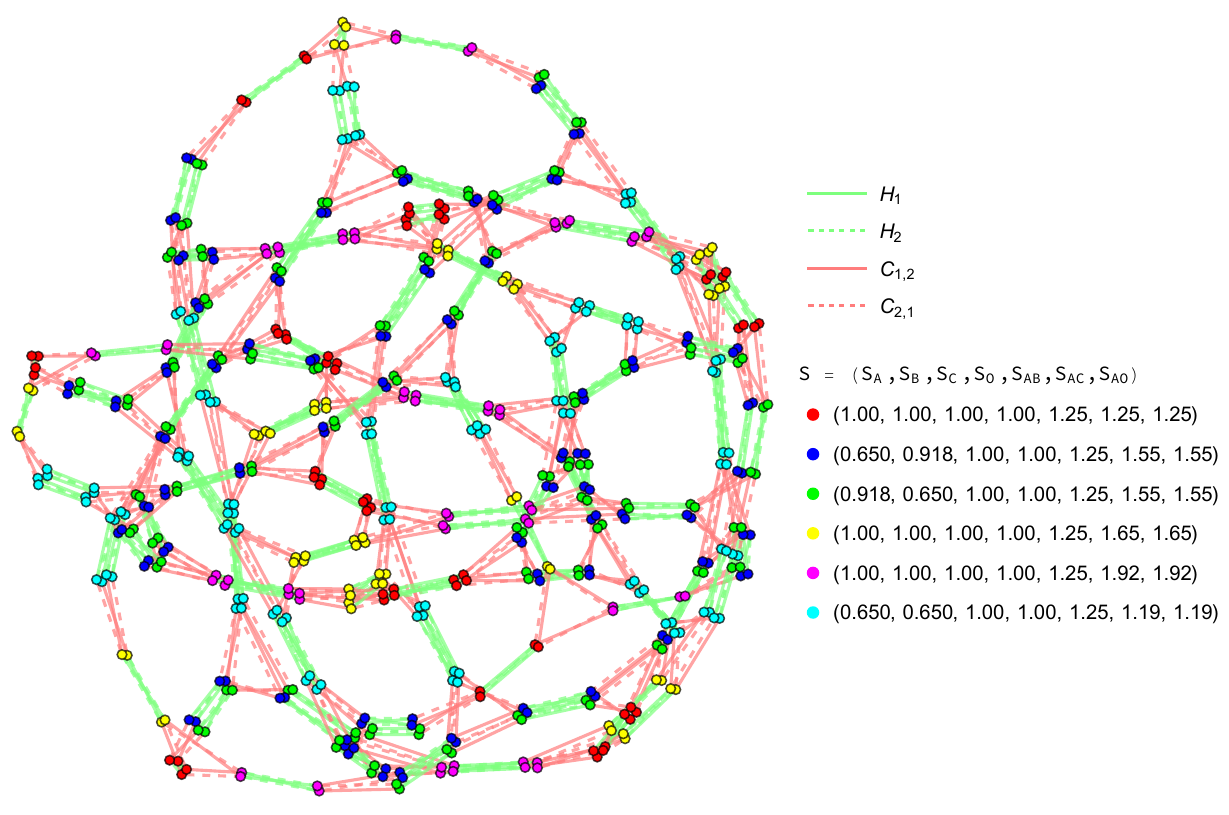}
		\put (19.6,55.5) {\footnotesize{$\uparrow \ket{D^4_2}$}}
        \end{overpic}
        \caption{Reachability graph showing the orbit of $\ket{D^4_2}$ under $(HC)_{1,2}$. This reachability graph has $576$ vertices, a vertex count never observed among stabilizer states, and contains $6$ different entropy vector possibilities. We provide numerical approximations for the entropy vector component in the figure, with exact values given in Table \ref{tab:EntropyVectorTable2}.}
		\label{HCGraphD42}
	\end{center}
	\end{figure}

For $k > 1$, the number of unique entanglement entropies that make up entropy vectors in the $(HC)_{1,2}$ orbit of $\ket{D^N_k}$ increases with system size. Figure \ref{DickeEntropiesQubitNum} illustrates the relationship between cardinality $|s_N|$ and qubit number $N$, for $\ket{D^N_k}$ up to $N=10$ qubits.
    \begin{figure}[h]
        \centering
        \includegraphics[width=14cm]{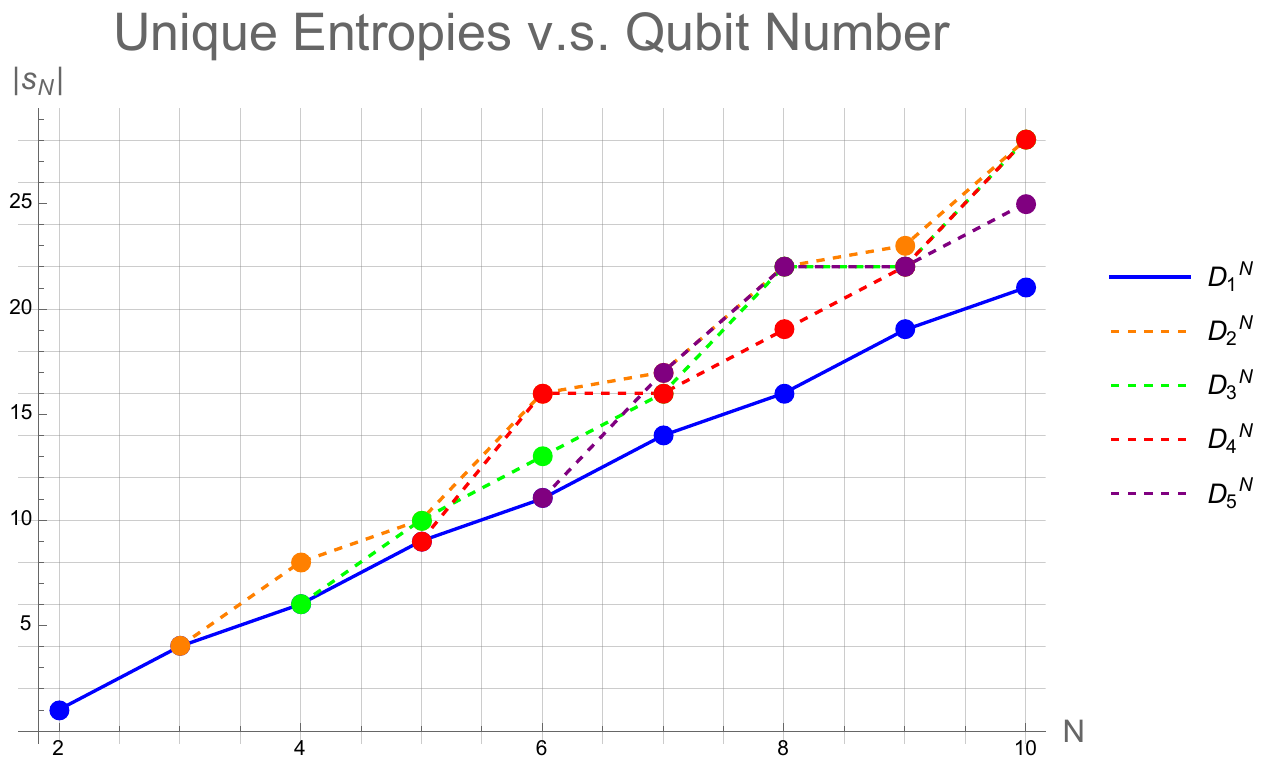}
        \caption{The number of unique entanglement entropies $|s_N|$ comprising all entropy vectors in the $(HC)_{1,2}$ orbit of $\ket{D^N_k}$. We plot this entanglement entropy cardinality against increasing qubit number $N$, for $N \leq 10$ and $1 \leq k \leq 5$. The solid blue line depicts the special case of $\ket{D^N_1}$ described in Conjecture \ref{EntanglementCardinality}.}
    \label{DickeEntropiesQubitNum}
    \end{figure}

Reachability graphs like that in Figure \ref{HCGraphD42} admit $6$ unique entropy vectors throughout the orbit. The bounds proposed in \cite{Keeler:2023shl} limit graphs isomorphic to Figure \ref{HCGraphD42} to having, at most, $9$ different entropy vectors. However, since entanglement dynamics additionally depends on the state being evolved through the quantum circuit, the symmetries of Dicke states constrain the number of entropy vectors in these graphs to $6$. As with the orbits of $\ket{D^N_1}$, while the overall number of entropy vectors in the reachability graph is fixed, for all $N$ and $k$, the number of distinct entanglement entropies which make up those vectors continues to increase for larger and larger $N$, and varies for different values of $k$.

We have identified the stabilizer subgroups for all Dicke states $\ket{D^N_k}$, under the action of the $N$-qubit Pauli group $\Pi_N$, as well as the two-qubit Clifford group $\mathcal{C}_2$. We demonstrated that there exist three distinct stabilizer subgroups for states $\ket{D^N_k}$, depending on the values of $N$ and $k$. States $\ket{D^N_N} = \ket{1}^{\otimes N}$ belong to the set of $N$-qubit stabilizer states, and share the corresponding stabilizer groups \cite{Keeler2022,Keeler:2023xcx}. States $\ket{D^N_1}$ and $\ket{D^N_{N-1}}$ share a stabilizer subgroup in $\Pi_N$ and $\mathcal{C}_2$, as do all $\ket{D^N_k}$ with $1 < k < N-1$. We illustrated the orbit of $\ket{D^N_k}$, under the action of $\Pi_N$ using reachability graphs.

In order to understand the evolution of Dicke state entropy vectors, we likewise constructed reachability graphs for all $\ket{D^N_k}$ under the action of the $\mathcal{C}_2$ subgroup $(HC)_{1,2} = \langle H_1,\,H_2,\,C_1,\,C_2\rangle$. Since entanglement modification in Clifford circuits occurs through bi-local action, restricting to this subgroup enabled us to place constraints on the dynamics of $\ket{D^N_k}$ entropy vectors under Clifford gates. We found the number of entropy vectors in each $(HC)_{1,2}$ orbit to be constant, with $5$ entropy vectors possible on graphs of $288$ vertices, and $6$ entropy vectors on graphs with $576$ vertices. While the number of entropy vectors on these graphs is fixed, the number of distinct entropies continued to increase.  

\section{Discussion}

In this work we constructed the entropy cone for $N$-qubit Dicke states $\ket{D^N_k}$ by calculating all entropy vectors for arbitrary values of $N$ and $k$. We first defined a function to compute the entanglement entropy $S_{\ell}$ of any $\ell$-party subsystem of $\ket{D^N_k}$. We demonstrated that $\ket{D^N_k}$ entropy vectors are manifestly symmetric, with $S_{\ell}$ only dependent on the size of subsystem $\ell$, and therefore lie within the convex hull of the SQEC. We likewise find that $\ket{D^N_k}$ entropy vectors are contained within the Stabilizer entropy cone as far as it is characterized, up to $N=5$. Dicke state entropy vectors do not, however, satisfy the necessary HEC or SHEC conditions for all $N \geq 3$, where $k \neq N$. We verified that our calculation accurately reproduces all vectors of the $N$-qubit $W$ state entropy cone \cite{Schnitzer:2022exe}, since $\ket{W_N} = \ket{D^N_1}$. 

We additionally define a prescription which realizes average entropies $\Tilde{S}_{\ell}$, for $\ket{D^N_k}$, as a min-cut protocol on weighted star graphs. Entropies $S_{\ell}$, as in Eq.\ \ref{DickeStateEntropies}, are computed as the minimum-weight edge cut on $\ell+1$ star graphs of $N+1$ legs each. Every star graph has a single edge of weight $w \leq 0$, with the precise value of $w$ constrained by the values of $N$ and $\ell$. The sum of two set of star graphs, one representing $S_{\ell}$ and one $S_{N-\ell}$, defines $\Tilde{S}_{\ell}$ for all $\ket{D^N_k}$. This graph representation of $\ket{D^N_k}$ entropies builds upon other min-cut protocols for symmetrized entropies \cite{Czech:2021rxe,Fadel2021,Schnitzer:2022exe}, and is an interesting direction for future study.

We studied the orbits of $\ket{D^N_k}$ under action of $N$-qubit Pauli group $\Pi_N$ and the $2$-qubit Clifford group $\mathcal{C}_2$. Interestingly, Dicke states form a set of non-stabilizer states which are stabilized by more Clifford elements than just $\mathbb{1}$. We identified the stabilizer subgroup for every $\ket{D^N_k}$, under the action of both groups, which we used to generate $\ket{D^N_k}$ reachability graphs \cite{Keeler:2023xcx}. Each $\ket{D^N_k}$ reachability graph depicts the state's orbit under the action of a chosen group, 
and generates all states which can be reached through circuits built of the generating gates. Since $\ket{D^N_k}$ is often initialized as the starting state for many quantum algorithms, 
the reachability graphs in Section \ref{StabilizerSection} provide a map through the Hilbert space for algorithms that begin with $\ket{D^N_k}$. Furthermore, since construction of reachability graphs is not limited to the Clifford group \cite{Munizzi_2022}, it would be interesting to explore Dicke state orbits under a universal set of gates. 

Reachability graphs can also be used to bound entanglement evolution under a chosen set of gates, by examining how many times the entropy can be changed by circuits in the graph \cite{Keeler:2023shl}. Motivated to explore the dynamics of $\ket{D^N_k}$ entropy vectors under $\mathcal{C}_2$, we focused on the subgroup $(HC)_{1,2} = \langle H_1,\,H_2,\,C_{1,2},\,C_{2,1} \rangle$ since entanglement in Clifford circuits occurs, at most, through the bi-local CNOT gate. We found that the number of entropy vectors on each $\ket{D^N_k}$ reachability graph is constant, with $5$ entropy vectors on the $288$-vertex graphs, like that in Figure \ref{HCGraphD31}, and $6$ entropy vectors on the $576$-vertex graphs, like Figure \ref{HCGraphD42}. While the number of entropy vectors is fixed, the number of distinct entanglement structures which compose each vector continues to increase for larger and larger systems.

We expect our analysis of the $N$-qubit Dicke state entropy cone and $\ket{D^N_k}$ orbits to generalize for qudit Dicke states. Circuits for deterministically preparing arbitrary qudit Dicke states are known, and many recursive generalizations Dicke state properties have also been demonstrated. We expect the  $\ket{D^N_k}$ entropy vectors presented in this paper to generalize similarly, with preliminary efforts towards $W$ state entropy cone generalization given in \cite{Schnitzer:2022exe}. The Pauli and Clifford groups can likewise be extended to arbitrary Hilbert space dimension \cite{Jagannathan:2010sb,Hostens_2005}, which would allow us to extend our orbit model and consider the evolution of entanglement evolution for higher-dimensional Dicke systems. 

The Dicke state stabilizers presented in this work find immediate application in stabilizer code construction. Given a scheme for encoding logical Dicke states and a suitable choice of measurement, we can construct an error-correcting channel directly using the stabilizers in Section \ref{StabilizerSection}. For specific Dicke states, such as $\ket{D^N_k}$, the entanglement structure renders the state robust to single-qubit loss, particularly at large $N$. We expect this characteristic to offer significant error-correction advantages when using Dicke state encoding for noisy processing. In future work, we explore this proposal and construct a class of Dicke stabilizer codes, evaluating their performance when compared to existing schemes.

Finally, the entanglement structure of certain Dicke states makes $\ket{D^N_k}$ an interesting candidate for magic distillation protocols \cite{Bravyi2004}. Coupled with the ease of preparing $\ket{D^N_k}$, Dicke states can enable improved protocols for distilling magic with minimal overhead. The states $\ket{D^5_1}$, $\ket{D^5_2}$, and $\ket{D^5_4}$ specifically possess a significant amount non-local magic \cite{Bao2022a}, though ultimately experience error rates slightly above the fault-tolerant Bravyi-Kitaev threshold. These error rates can be improved however, by passing $\ket{D^5_k}$ through a short sequence of gates. This further motivates an understanding $\ket{D^N_k}$ orbits under universal gate sets, which can provide circuits to improve the utility of Dicke states in distillation schemes.

\textit{HJS wishes to thank Matt Headrick for alerting him to the possible interest overlaps of the Arizona State University group with his. WM wishes to thank Adam Burchardt, ChunJun Cao, Jonathan Harper, Cynthia Keeler, and Jason Pollack for helpful discussions. WM is supported by the U.S. Department of Energy under grant number DE-SC0019470 and by the Heising-Simons Foundation ``Observational Signatures of Quantum Gravity'' collaboration grant 2021-2818. Dicke states were brought to our attention by Rafael Nepomechie in \cite{nepomechie2023qudit}, and in private communication to HJS.}

\clearpage
\begin{singlespace}
\printbibliography[heading=subbibliography]
\end{singlespace}

\chapter{BOUNDING ENTANGLEMENT ENTROPY WITH CONTRACTED GRAPHS}\label{Chapter6}

\textit{Following on our previous work \cite{Keeler2022,Keeler:2023xcx} studying the orbits of quantum states under Clifford circuits via `reachability graphs', we introduce `contracted graphs' whose vertices represent classes of quantum states with the same entropy vector.  These contracted graphs represent the double cosets of the Clifford group, where the left cosets are built from the stabilizer subgroup of the starting state and the right cosets are built from the entropy-preserving operators. We study contracted graphs for stabilizer states, as well as W states and Dicke states, discussing how the diameter of a state's contracted graph constrains the `entropic diversity' of its $2$-qubit Clifford orbit. We derive an upper bound on the number of entropy vectors that can be generated using any $n$-qubit Clifford circuit, for any quantum state. We speculate on the holographic implications for the relative proximity of gravitational duals of states within the same Clifford orbit. Although we concentrate on how entropy evolves under the Clifford group, our double-coset formalism, and thus the contracted graph picture, is extendable to generic gate sets and generic state properties.}

\section{Introduction}

One primary goal of quantum computation is to outperform classical computers: that is, for certain tasks, to take a classical input and compute a classical output more rapidly, or efficiently, than any known classical algorithm. (In recent years, this goal has been achieved or brought within reach for certain sets of problems \cite{arute2019quantum,zhong2020quantum}.) Intuitively, quantum computers can only do better on these tasks because they're doing something intrinsically \emph{quantum}: if they weren't, they couldn't outperform the classical method. Formalizing this intuitive result is an object of ongoing research: precisely what feature of a particular quantum algorithm allows it to gain an advantage?
\begin{figure}
  \begin{minipage}{.4\textwidth}
    \centering
    \includegraphics[width=6cm,height=6cm]{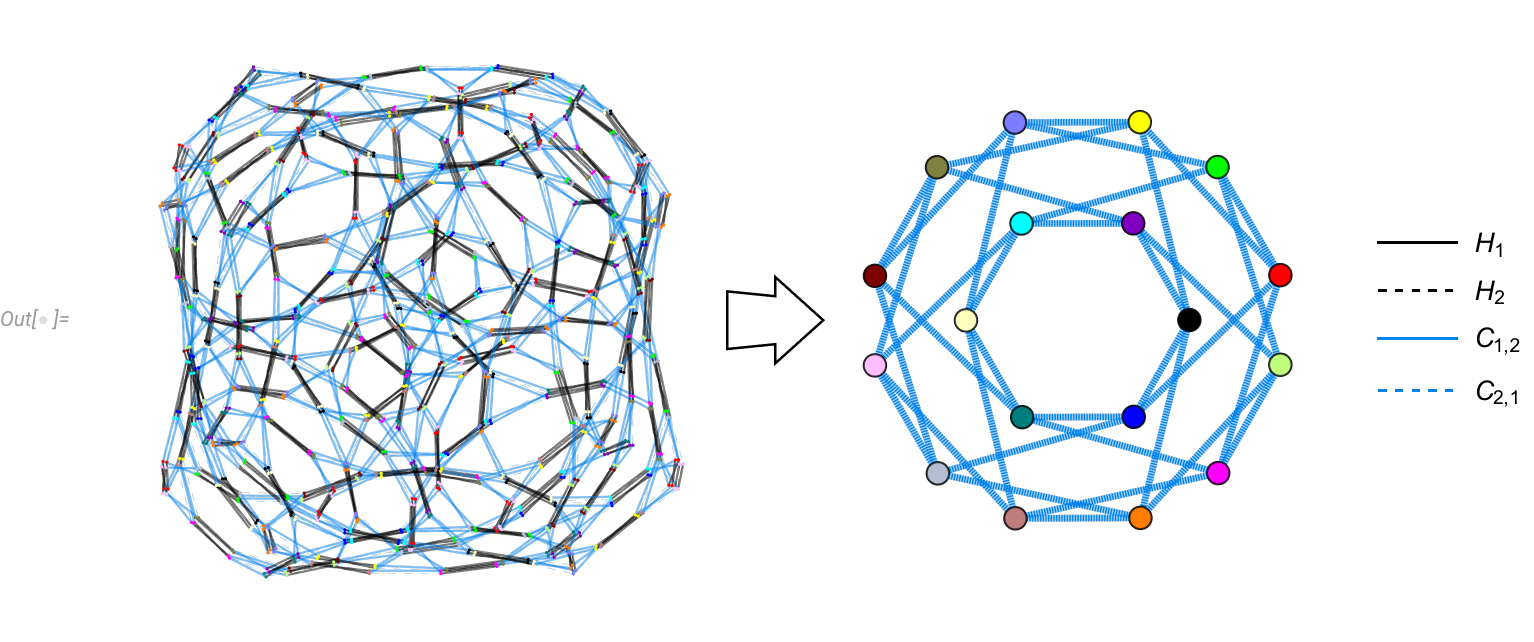}
  \end{minipage}
  \begin{minipage}{.15\textwidth}
    \huge{\begin{myequation}
     {\rightarrow}
    \end{myequation}}
  \end{minipage}
    \begin{minipage}{.4\textwidth}
    \centering
    \includegraphics[width=6cm,height=6cm]{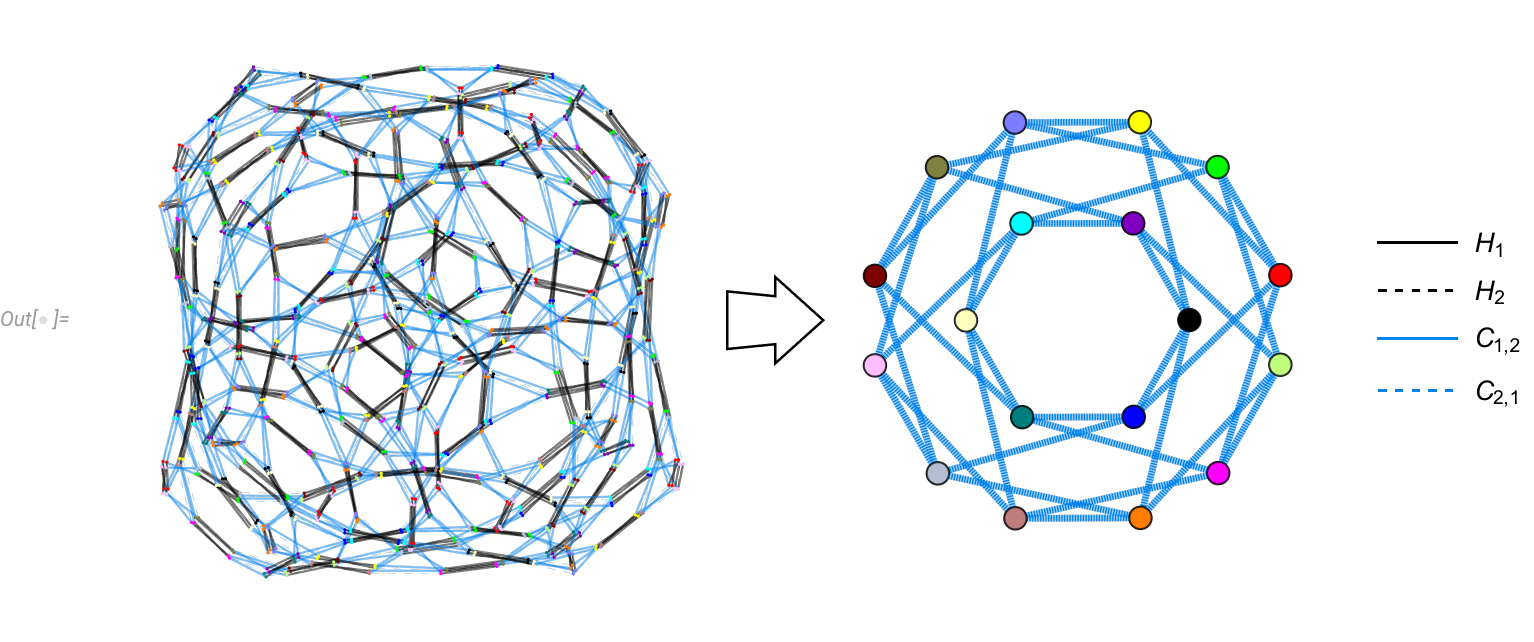}
  \end{minipage}
  \begin{minipage}{.4\textwidth}
  \vspace{.4cm}
    \begin{myequation}
     G/Stab_{G}(\ket{\psi})
    \end{myequation}
  \end{minipage}
  \begin{minipage}{.2\textwidth}
    \begin{myequation}
     \quad
    \end{myequation}
  \end{minipage}
    \begin{minipage}{.4\textwidth}
    \begin{myequation}
     H \backslash G/Stab_{G}(\ket{\psi})
    \end{myequation}
  \end{minipage}
\caption{A reachability graph and its reduction to a contracted graph. In this example, discussed in more detail in Figure \ref{G1152WithContractedGraph}, $G$ is the subgroup of the two-qubit Clifford group generated by Hadamard and $CNOT$ gates and $H$ is the set of operations which leave entropy vectors unchanged.}
\label{TitleFigContractedGraph}
\end{figure}

Setting aside not-even-wrong explanations like ``quantum computers act on each term in a superposition simultaneously," the folk wisdom is that the source of quantum advantage has something to do with interference, superposition, and entanglement. This appealing picture is challenged by the famous result that Clifford circuits, which are generated by the one-qubit Hadamard and phase gates and the two-qubit $CNOT$ gate, can be efficiently classically simulated \cite{Gottesman1998,aaronson2004improved}. That is, even though Clifford circuits can, via $CNOT$ gate applications, produce entanglement, they \emph{can't} give quantum speedups. Evidently, if some kind of entanglement is the key to quantum advantage, the type produced by Clifford gates doesn't suffice.

In order to understand the evolution of entanglement as a state is evolved through a quantum circuit, it's useful to track the \emph{entropy vector}, which characterizes the entanglement entropy of every subsystem of the state. In a recent series of papers, we have investigated how the entropy vector changes under the restricted action of Clifford gates acting on the first two qubits of a state. We first obtained \cite{Keeler2022} the \emph{reachability graphs}, colored by entropy vector, which show how stabilizer states evolve under the action of the two-qubit Clifford group $\mathcal{C}_2$ and its subgroups. In our second paper \cite{Keeler:2023xcx}, having better understood the underlying group-theoretic structures from which the reachability graphs are attained, we were able to find a representation of $\mathcal{C}_2$ as generated by the Clifford gates, as well as explore the reachability graphs produced from initial non-stabilizer states.

Although reachability graphs are useful for directly showing the action of explicit circuits and explicit states, they fail to fully illuminate the paths by which the entropy vector can change. The problem, in short, is that some circuits, even when they contain $CNOT$ gates, fail to change the entropy. For example, one defining relation of $\mathcal{C}_2$ is \cite{Keeler:2023xcx}
\begin{equation}
    \left(CNOT_{1,2}P_2 \right)^4 = P_1^2.
\end{equation}
Hence the structure of reachability graphs by themselves can only loosely bound how the entropy vector might change.

In this paper, we accordingly pass to a more concise graphical representation, the \emph{contracted graphs}, whose vertices represent not single states but classes of states with the same entropy vector. We show how to construct these graphs from the \emph{double cosets} of the Clifford group $\mathcal{C}_2$ and its cosets. An example of this procedure is shown in Figure \ref{TitleFigContractedGraph}. Our protocol for constructing contracted graphs is easily generalized to groups beyond the Clifford group and state properties beyond the entropy vector, and might be of use for other applications.

The remainder of this paper is organized as follows. In Section \ref{sec:review}, we review the Clifford group and stabilizer formalism, as well as the group-theoretic concepts of cosets and double cosets. We also recall the objects used in our previous papers: Cayley graphs, reachability graphs, and entropy vectors. In Section \ref{sec:building}, we give a general procedure for constructing the contracted graphs which retain information about entropy-changing operations in a group. In Section \ref{ContractedGraphsSection}, we apply this procedure to $\mathcal{C}_2$ and its subgroup $\HC$. For each of the reachability graphs in our previous papers, we obtain the resulting contracted graph, and show how these combine together under the action of the full Clifford group. In Section \ref{sec:diversity} we consider the diameter and entropic diversity of the reachability graphs, and discuss implications for the available transformations on a dual geometry via holography. In Section \ref{sec:discussion} we conclude and discuss future work. Appendix \ref{EntropyVectorTables} collects additional details of our computations.

\section{Review}\label{sec:review}

\subsection{Clifford Group and Stabilizer Formalism}

The Pauli matrices are a set of unitary and Hermitian operators, defined in the computational basis $\{\ket{0},\ket{1}\}$ as
\begin{equation}\label{PauliMatrices}
    \mathbb{1}\equiv\begin{bmatrix}1&0\\0&1\end{bmatrix}, \,\, \sigma_X\equiv\begin{bmatrix}0&1\\1&0\end{bmatrix}, \,\,
    \sigma_Y\equiv\begin{bmatrix}0&-i\\i&0\end{bmatrix}, \,\,
    \sigma_Z\equiv\begin{bmatrix}1&0\\0&-1\end{bmatrix}.
\end{equation}
The multiplicative matrix group generated by $\sigma_X,\,\sigma_Y,$ and $\sigma_Z$ is known as the single-qubit Pauli group $\Pi_1$, which we write
\begin{equation}
    \Pi_1 \equiv \langle \sigma_X,\,\sigma_Y,\,\sigma_Z \rangle.
\end{equation}
When $\Pi_1$ acts on a Hilbert space $\Hil \equiv \mathbb{C}^2$, in the fixed basis spanned by $\{\ket{0},\,\ket{1}\}$, it generates the algebra of all linear operations on $\Hil$.

The Clifford group is likewise a multiplicative matrix group, generated by the Hadamard, phase, and CNOT operations:
\begin{equation}
    H\equiv \frac{1}{\sqrt{2}}\begin{bmatrix}1&1\\1&-1\end{bmatrix}, \quad P\equiv \begin{bmatrix}1&0\\0&i\end{bmatrix}, \quad     C_{i,j} \equiv \begin{bmatrix}
            1 & 0 & 0 & 0\\
            0 & 1 & 0 & 0\\
	    0 & 0 & 0 & 1\\
	    0 & 0 & 1 & 0
            \end{bmatrix}.
\end{equation}
The CNOT gate is a bi-local operation which, depending on the state of one qubit, the control bit, may act with a $\sigma_X$ operation on a second qubit, the target bit. For the gate $C_{i,j}$, the first subscript index denotes the control bit and the second subscript the target bit. We define the single qubit Clifford group $\mathcal{C}_1$ as the group $\langle H,\,P \rangle$. Elements of $\mathcal{C}_1$ act as automorphisms on $\Pi_1$ under conjugation; hence $\mathcal{C}_1$ is the normalizer of $\Pi_1$ in $L(\Hil)$.

When considering the action of the Pauli and Clifford groups on multi-qubit systems, we compose strings of operators which act collectively on an $n$-qubit state. For an element of $\Pi_1$ which acts locally on the $k^{th}$ qubit in an $n$-qubit system, for example, we write
\begin{equation}\label{PauliString}
    I^1\otimes\ldots\otimes I^{k-1} \otimes \sigma_X^k \otimes I^{k+1} \otimes \ldots \otimes I^n.
\end{equation}
Eq. \eqref{PauliString} is referred to as a Pauli string, where the weight of each string counts the number non-identity insertions. The multiplicative group generated by all Pauli strings of weight $1$ is the $n$-qubit Pauli group $\Pi_n$.

We similarly can extend the action of $\mathcal{C}_1$ to multiple qubits, now incorporating $C_{i,j}$ into the generating set. Composing Clifford strings analogously to Eq. \eqref{PauliString}, we define the $n$-qubit Clifford group $\mathcal{C}_n$ as\footnote{This is not the minimal generating set for the Clifford group, since some Clifford gates can be written in terms of the others, c.f.\ Eqs.\ (4.5, 4.6) of \cite{Keeler:2023xcx}. A more minimal definition of the Clifford group is $C_n\equiv\langle\{H_i,P_1,C_{i,j}\}\rangle$ where $i\in\{1\ldots n\}$, $j>i$.}
\begin{equation}
    \mathcal{C}_n \equiv \langle H_1,...,\,H_n,\,P_1,...,P_n,\,C_{1,2},\,C_{2,1},...,\,C_{n-1,n},\,C_{n,n-1}\rangle.
\end{equation}
When indicating the action of some local gate, Hadamard or phase, the gate subscript denotes which qubit the gate acts on, e.g.\ $H_1$ for the action of Hadamard on the first qubit of an $n$-qubit system.

Beginning with any $n$-qubit computational basis state, e.g.\ $\ket{0}^{\otimes n}$, the group $\mathcal{C}_n$ is sufficient to generate the full set of $n$-qubit stabilizer states. As we noted in the introduction, stabilizer states are notable in quantum computing as a set of quantum systems which can be efficiently simulated with classical computing \cite{aaronson2004improved,Gottesman1997}. Additionally, stabilizer states comprise the elements of $\Hil$ which are left invariant under a $2^n$ element subgroup of $\Pi_n$. Since the group $\mathcal{C}_n$ is finite, the set of $n$-qubit stabilizer states $S_n$ is also finite \cite{doi:10.1063/1.4818950} and has order given by
\begin{equation}\label{StabilizerSetSize}
    \left|S_n\right| = 2^n \prod_{k=0}^{n-1} (2^{n-k}+1).
\end{equation}

\subsection{Cosets and Double Cosets}

Throughout this paper we support our graph models with parallel group-theoretic arguments. Many of our explanations make substantial use of coset and double coset constructions, which we review here. We also take this opportunity to set notation and establish language that will be used throughout the remainder of the paper.

Let $G$ be a group and $K \leq G$ an arbitrary subgroup. The set of all left cosets of $K$ in $G$ are constructed as
\begin{equation}\label{LeftCosetDefintion}
 g\cdot K, \quad  \forall g \in G.
\end{equation}
Each left coset built in Eq.\ \eqref{LeftCosetDefintion} is an equivalence set of elements $[g_i]$, which are equivalent under $K$ group action on the right:
\begin{equation}\label{LeftCosetEquivalenceRelation}
    g_i \sim g_j \Longleftrightarrow \exists \,k \in K : g_i = g_j k.
\end{equation}
Any two cosets $[g_i]$ in $g\cdot K$ must be either equal or disjoint, and every $g \in G$ must be found in, at most, one equivalence class. As a result, the set of all $[g_i]$ gives a complete decomposition of $G$.

Eqs. \eqref{LeftCosetDefintion} and \eqref{LeftCosetEquivalenceRelation}, as well as the accompanying explanations, apply analogously when generating all right cosets $H \cdot g$, for arbitrary $H \leq G$. We build all right cosets by computing $H \cdot g$, for every $g \in G$, where each equivalence class $[g_i]$ is now determined by left subgroup action
\begin{equation}
    g_i \sim g_j \Longleftrightarrow \exists \,h \in H : g_i = hg_j.
\end{equation}
When $H \leq G$ is normal in $G$, the left and right cosets are equal, and both $H \cdot g$ and $g \cdot H$ form a group under the same binary operation which defines $G$.

Two subgroups $H,K \leq G$ can be used to construct double cosets of $G$. We build each $(H,K)$ double coset by acting on $g \in G$ on the right by subgroup $K$, and on the left by $H$, explicitly
\begin{equation}\label{DoubleCosetDefinition}
 H\cdot g \cdot K, \quad  \forall g \in G,\, h\in H,\, k\in K.
\end{equation}
The double coset space built using Eq.\ \eqref{DoubleCosetDefinition} is denoted $H \backslash G / K$, and is defined by the equivalence relation
\begin{equation}
    g_i \sim g_j \Longleftrightarrow \exists \,h \in H,\, k \in K : g_i = hg_jk.
\end{equation}

In order to utilize the above coset constructions in this paper, we invoke several foundational group theory concepts (see e.g.\ \cite{Alperin1995}). First, for a finite group $G$, the order of any subgroup $K \leq G$ partitions the order of $G$ by Lagrange's theorem
\begin{equation}\label{LagrangeTheorem}
    \frac{|G|}{|K|} = [G:K], \quad \forall K\leq G,
\end{equation}
where $[G:K] \in \mathbb{N}$ is the number of left (or right) cosets of $K$ in $G$. When acting with $G$ on a set $X$, the orbit-stabilizer theorem fixes the size of each orbit $[G \cdot x]$ to be
\begin{equation}\label{OrbitStabilizerTheorem}
    [G \cdot x] = [G:K] = \frac{|G|}{|K|}, \quad \forall x\leq X,
\end{equation}
where $K \leq G$ is the set of elements which map an $x \in X$ to itself.

We can likewise use Eq.\ \eqref{LagrangeTheorem} with Eq.\ \eqref{OrbitStabilizerTheorem} to compute the order of a double coset space, i.e.\ the orbit of all left (or right) cosets under left (or right) subgroup action. For finite $G$ and subgroups $H,K \leq G$, the order%
\footnote{Note that a direct application of Lagrange's theorem to the order of a double coset space is false, i.e.\ the order of a double coset space of $G$ does not necessarily divide $|G|$.} %
of $H \backslash G / K$ is computed as
\begin{equation}\label{DoubleCosetOrder}
     |H \backslash G / K| = \frac{1}{|H||K|} \sum_{(h,k) \in H \times K} |G^{(h,k)}|, 
\end{equation}
where $G^{(h,k)}$ is the set of equivalence classes $[g_i]$ under Eq.\ \eqref{DoubleCosetDefinition}. The sum in Eq.\ \eqref{DoubleCosetOrder} is taken over all ordered pairs $(h,k)$ of $h \in H$ and $k \in K$.

\subsection{Cayley Graphs and Reachability Graphs}\label{CayleyGraphReview}

A \textit{Cayley graph} encodes in graphical form the structure of a group. For a group $G$ and a chosen set of generators, we construct the Cayley graph of $G$ by assigning a vertex for every $g \in G$, and an edge%
\footnote{Formally, each edge in a Cayley graph is directed. However, for improved legibility, we will often represent group generators which are their own inverse using undirected edges.} %
for every generator of $G$. When $G$ corresponds to a set of quantum operators acting on a Hilbert space, paths in the Cayley graph represent quantum circuits that can be composed using the generating gate set. Different paths which start and end on the same pair of vertices indicate sequences of operators whose action on any quantum state is identical. Loops in a Cayley graph represent operations equivalent to the identity.

For a group $G \subset L(\Hil)$, we define the stabilizer subgroup $\Stab_G(\ket{\psi})$ of some $\ket{\psi} \in \Hil$ as the subset of elements $g \in G$ which leave $\ket{\psi}$ unchanged,
\begin{equation}
    \Stab_G(\ket{\psi}) \equiv \{g \in G \,|\, g\ket{\psi} = \psi \}.
\end{equation}
In other words, the subgroup $\Stab_G(\ket{\psi})$ consists of all $g \in G$ for which $\ket{\psi}$ is a $+1$ eigenvector.

Reachability graphs can be obtained more generally as quotients of Cayley graphs \cite{Keeler:2023xcx,githubStab,githubCayley}. To perform this procedure, we first identify a group $G \in L(\Hil)$ to act on a Hilbert space $\Hil$, and a generating set for $G$. We then first quotient $G$ by any subgroup of elements which act as an overall phase on the group. For $\mathcal{C}_n$, this is the subgroup $\langle \omega \rangle$, where
\begin{equation}\label{OmegaQuotient}
    \omega \equiv \left(H_iP_i\right)^3 = e^{i\pi/4}\mathbb{1}.
\end{equation}
Once we have removed overall phase and constructed the quotient group%
\footnote{Since $\langle \omega \rangle < \mathcal{C}_n$ is normal, modding by $\langle \omega \rangle$ builds a proper quotient. It therefore does not matter whether we apply $\langle \omega \rangle$ on the left or right of $G$ when building cosets, nor does it affect any subsequent double coset construction. Accordingly, when it will not cause confusion we continue to use $G$ to refer to the group modded by global phase, rather than using an alternative notation.} %
$\bar{G} = G/\langle \omega \rangle$, we identify a state $\ket{\psi} \in \Hil$. Selecting  $\ket{\psi}$ immediately defines the stabilizer subgroup $\Stab_{\bar{G}}(\ket{\psi})$. We then construct the left coset space $\bar{G}/\Stab_{\bar{G}}(\ket{\psi})$ whose elements are
\begin{equation}\label{StabGroupQuotient}
    g\cdot\Stab_{\bar{G}}(\ket{\psi}) \quad \forall g \in \bar{G}.
\end{equation}

To graphically represent this procedure, we begin with a graph $\Gamma \equiv (V,E)$, which we quotient by first defining a partition on the vertices $V$. This partition induces the equivalence relation $u \sim v$ iff $u$ and $v$ lie in the same subset of the partition, defined for any $u,v \in V$. In this way, each vertex in the quotient graph represents one subset of the partition, and two vertices in the quotient graph are considered adjacent if any two elements of their respective subsets are adjacent in $\Gamma$.

While the graphs in this paper often represent groups, constructing a graph quotient is not equivalent to quotienting a group. Building a group quotient requires modding by a normal subgroup, which ensures that the left and right coset spaces of the chosen subgroup are equal, preserving the original group action in quotient group. We do not impose such a requirement when building graph quotients in this paper, even when our graphs illustrate the relation between groups of operators. We distinguish graph quotients from group quotients wherever potential confusion could occur.

\subsection{Entropy Vectors and Entropy Cones}

For a state $\ket{\psi} \in \Hil$, and some specified factorization for $\Hil$, we can compute the von Neumann entropy of the associated density matrix:
\begin{equation}\label{vonNeumannEntropy}
   S_{\psi} \equiv -\Tr \rho_{\psi} \ln \rho_{\psi},
\end{equation}
where $\rho_{\psi} \equiv \ket{\psi}\bra{\psi}$. For $\ket{\psi}$ a pure state, the property $\rho_{\psi}^2 = \rho_{\psi}$ implies $S_{\psi} = 0$. Throughout this paper, we measure information in \textit{bits}, and entropies in Eq.\ \eqref{vonNeumannEntropy} are computed with $\log_2$.

For a multi-partite pure state $\ket{\psi}$, we can still observe non-zero entanglement entropy among complementary subsystems of $\ket{\psi}$. Let $\ket{\psi}$ be some $n$-party pure state, and let $I$ denote an $\ell$-party subsystem of $\ket{\psi}$. We can compute the entanglement entropy between $I$ and its $(n-\ell)$-party complement, $\bar{I}$, using
\begin{equation}\label{SubsystemEntropyDefinition}
   S_{I} = -\Tr \rho_{I} \ln \rho_{I}.
\end{equation}
The object $\rho_{I}$ in Eq.\ \eqref{SubsystemEntropyDefinition} indicates the reduced density matrix of subsystem $I$, which is computed by tracing out the complement subsystem $\bar{I}$.

In general, there are $2^n-1$ possible subsystem entropies we can compute for any $n$-qubit pure state $\ket{\psi}$. Computing each $S_{I}$, using Eq.\ \eqref{SubsystemEntropyDefinition}, and arranging all entropies into an ordered tuple defines the entropy vector $\Vec{S}\left(\ket{\psi} \right)$. As an example, consider the $4$-qubit pure state $\ket{\psi}$, where $\Vec{S}\left(\ket{\psi} \right)$ is defined
\small{
\begin{equation}\label{EntropyVectorExample}
    \Vec{S} = (S_A,S_B,S_C,S_O; S_{AB},S_{AC},S_{AO},S_{BC},S_{BO},S_{CO};S_{ABC},S_{ABO},S_{ACO},S_{BCO}; S_{ABCO}).
\end{equation}}\normalsize
In Eq.\ \eqref{EntropyVectorExample} we use a semicolon to separate entropy components for subregions of distinct cardinality $|I|$. Additionally, for an $n$-qubit state it is customary to denote the $n^{th}$ subsystem using $O$, as this region acts as a purifier for the other $n-1$ parties. 

For an $n$-party system, each entropy vector contains $2^n-1$ components, with the first $n$ components representing single-qubit subsystems. We list entropy vector components in lexicographic order: with the first region denoted $A$, the second region denoted $B$, and so forth. Unlike what is sometimes found in the literature, we use $O$ to represent a smaller bipartition, instead of the one which does not contain the purifier. For example, in Eq.\ \eqref{EntropyVectorExample} we declare $O$ a single-party subsystem which purifies $ABC$, and write $S_O$ in place of $S_{ABC}$ among the single-party entries of the entropy vector.

When $\ket{\psi}$ is a pure state, the condition $S_{\psi} = 0$ implies an additional equivalence between entropies of complement subsystems
\begin{equation}\label{SubregionComplement}
    S_{I} = S_{\bar{I}}.
\end{equation}
Using Eq.\ \eqref{SubregionComplement} we can write $\Vec{S}\left(\ket{\psi} \right)$, for a pure state $\ket{\psi}$, using only $2^{N-1}-1$ entropies. For example, the entropy vector in Eq.\ \eqref{EntropyVectorExample} simplifies to the form
\begin{equation}\label{ReducedEntropyVectorExample}
    \Vec{S} = (S_A,S_B,S_C,S_O; S_{AB},S_{AC},S_{AO}).
\end{equation}
Since we are always considering pure states in this paper, all entropy vectors are written using the reduced notation in Eq.\ \eqref{ReducedEntropyVectorExample}.

\section{Building Contracted Graphs}\label{sec:building}

We now define a procedure to quotient reachability graphs by operations which preserve some specified property of a quantum system. In this paper we focus on the evolution of entanglement entropy under the action of the Clifford group; however, this prescription is sufficiently general to study any state property%
\footnote{In this work, the term \textit{state property} refers to anything computable from knowledge of the state, along with some additional information such as a specified factorization of the Hilbert space. We do not restrict analysis to properties which are observables; in fact, the main property discussed in this paper, the entropy vector, is not itself an observable.\label{fn:state_property}} %
under the action of any finitely-generated group.

We build a \textit{contracted graph} by identifying vertices in a reachability graph which are connected by entropy-preserving circuits. In this way, a contracted graph details the evolution of a state's entropy vector under the chosen gate set. The number of vertices in a contracted graph gives a strict upper bound on the number of different entanglement vector values reachable via circuits constructed using the chosen gate set. We will later use contracted graphs to derive an upper bound on entropy vector variation in Clifford circuits.

We now give an algorithm for generating contracted graphs.

\begin{enumerate}
    \item We first select a group $G$, and a generating set for $G$, as well as a property of our quantum system we wish to study under the action of $G$.

    \item  We next build the Cayley graph for $G$ by assigning a vertex for every $g \in G$, and a directed edge for each generator action on an element $g \in G$. We quotient $G$, and its Cayley graph, by any subgroup which acts as a global phase on the group, such as in Eq.\ \eqref{OmegaQuotient}.
    
    \item Next, we construct the reachability graph for some $\ket{\psi}$ under the action of $G$, as detailed in Subsection \ref{CayleyGraphReview}, which we denote%
    \footnote{A more precise notation for such reachability graphs would be $\mathcal{R}\left(\textnormal{Stab}_G\left(\ket{\psi}\right)\right)$, however we choose $\mathcal{R}_G\left(\ket{\psi}\right)$ instead for brevity.} %
    $\mathcal{R}_G\left(\ket{\psi}\right)$.
    We determine the stabilizer subgroup $\Stab_G\left(\ket{\psi}\right)$ for $\ket{\psi}$, and generate the left coset space $G/ \Stab_{G}\left( \ket{\psi} \right)$ using the equivalence relation
    \begin{equation}\label{StabilizerEquivalence}
        g_i \sim g_j \Longleftrightarrow \exists \,s \in \Stab_G\left(\ket{\psi}\right) : g_i = g_js.
    \end{equation}

    We glue together vertices in the Cayley graph of $G$ that correspond to elements which share an equivalence class $[g_i]$ in $G/ \Stab_{G}\left( \ket{\psi} \right)$. This graph quotient yields $\mathcal{R}_G\left(\ket{\psi}\right)$.
    
    \item We now identify the subgroup $H \leq G$ of elements that leave the entropy vector of any state invariant. The subgroup $H$ defines the equivalence relation
    \begin{equation}\label{EntropyEquivalence}
        g_i \sim g_j \Longleftrightarrow \exists \,h \in H : g_i = hg_j.
    \end{equation}
    For any $G$, the group $H$ will at least contain all $g \in G$ which act as local gates on a single qubit, since local action cannot modify entanglement. However, $H$ may also contain additional circuits which do not change the entropy vector.
    
    \item Finally, we build all double cosets $H \backslash G / \Stab_{G}\left( \ket{\psi} \right)$. We identify all vertices in $\mathcal{R}_G\left(\ket{\psi}\right)$ which share an equivalence class in $H \backslash G / \Stab_{G}\left( \ket{\psi} \right)$, and subsequently quotient $\mathcal{R}_G\left(\ket{\psi}\right)$ to give the final contracted graph.
\end{enumerate}

We generate reachability graphs by building left cosets $G/ \Stab_{G}\left( \ket{\psi} \right)$, defined by an equivalence up to right subgroup action by $\Stab_{G}\left( \ket{\psi} \right)$ as in Eq.\ \eqref{StabilizerEquivalence}. Since $\Stab_{G}\left( \ket{\psi} \right)$ acts trivially on $\ket{\psi}$, appending any $s \in \Stab_{G}\left( \ket{\psi} \right)$ to the right of any  $g \in G$ does not change how $g$ transforms the state $\ket{\psi}$. Conversely, we build a contracted graph by generating right cosets $G \backslash H$, with equivalence defined up to left subgroup action as shown in Eq.\ \eqref{EntropyEquivalence}. Every element of $H$ preserves a state's entropy vector, therefore acting on the left of $g\ket{\psi}$ by any $h \in H$ does not change the measurement of the full state entropy vector, for every $g \in G$.

Recall that there are two interpretations of a reachability graph. By identifying a state $\ket{\psi}$ and group $G$ of operators acting on that state, $\mathcal{R}_G\left(\ket{\psi}\right)$ represents the orbit of $\ket{\psi}$ under the action of $G$. In this state-orbit interpretation, vertices of $\mathcal{R}_G\left(\ket{\psi}\right)$ represent states reached in the orbit of $\ket{\psi}$. For simplicity, we choose this state-orbit interpretation in this explanatory section. A more general interpretation of reachability graphs exists which defines $\mathcal{R}_G\left(\ket{\psi}\right)$ as a quotient space of the Cayley graph of the abstract group $G$. In this interpretation, vertices represent equivalence classes of $g \in G$ defined by the left coset $g \cdot \Stab_G\left(\ket{\psi}\right)$.

\paragraph{Example:}
For clarity, we now work through an explicit example. Consider the subgroup of the two-qubit Clifford group%
\footnote{For additional detail on this Clifford subgroup see Section 4.2 of \cite{Keeler:2023xcx}, where all group elements are derived using two-qubit Clifford group relations.} %
generated by the $P_2$ and $CNOT_{1,2}$ gates,
\begin{equation}
    G \equiv \langle P_2,\, CNOT_{1,2}\rangle.
\end{equation}
The group $\langle P_2,\, CNOT_{1,2}\rangle$ consists of $32$ elements, specifically
\begin{equation}
   \langle P_2,\, CNOT_{1,2}\rangle = \{p,\,pCp,\,C\bar{p}Cp\},
\end{equation}
where we introduce the notations
\begin{equation}
    p \in \{\mathbb{1},\,P_2,\,P_2^2,\,P_2^3\}, \qquad \bar{p} \in \{\,P_2,\,P_2^2,\,P_2^3\}.
\end{equation}

We select the state $\ket{\psi} = \left(\ket{00} + 2\ket{01} + 4\ket{10}+ 3\ket{11}\right)/\sqrt{30}$, which we choose for its particular entropic properties that we will discuss at the end of the section. We construct the reachability graph $\mathcal{R}_G\left(\ket{\psi}\right)$ for $\ket{\psi}$, shown in the left panel of Figure \ref{P2C12CayleyWithContractedGraph}. The only element of $G$ which leaves $\ket{\psi}$ invariant is $\mathbb{1}$ in $G$, therefore
\begin{equation}\label{IdentityStabGroup}
    \Stab_G(\ket{\psi}) = \{ \mathbb{1}\}. 
\end{equation}
Since the stabilizer group in Eq.\ \eqref{IdentityStabGroup} consists of just the identity, and is therefore a normal subgroup, the group $\Stab_G(\ket{\psi})$ quotients $G$ and the reachability graph $\mathcal{R}_G\left(\ket{\psi}\right)$ is exactly the $32$-vertex Cayley graph. In the more general case, $\mathcal{R}_G\left(\ket{\psi}\right)$ would not necessarily represent a group quotient, but would represent a left coset space.
    \begin{figure}[h]
        \centering
        \includegraphics[width=13cm]{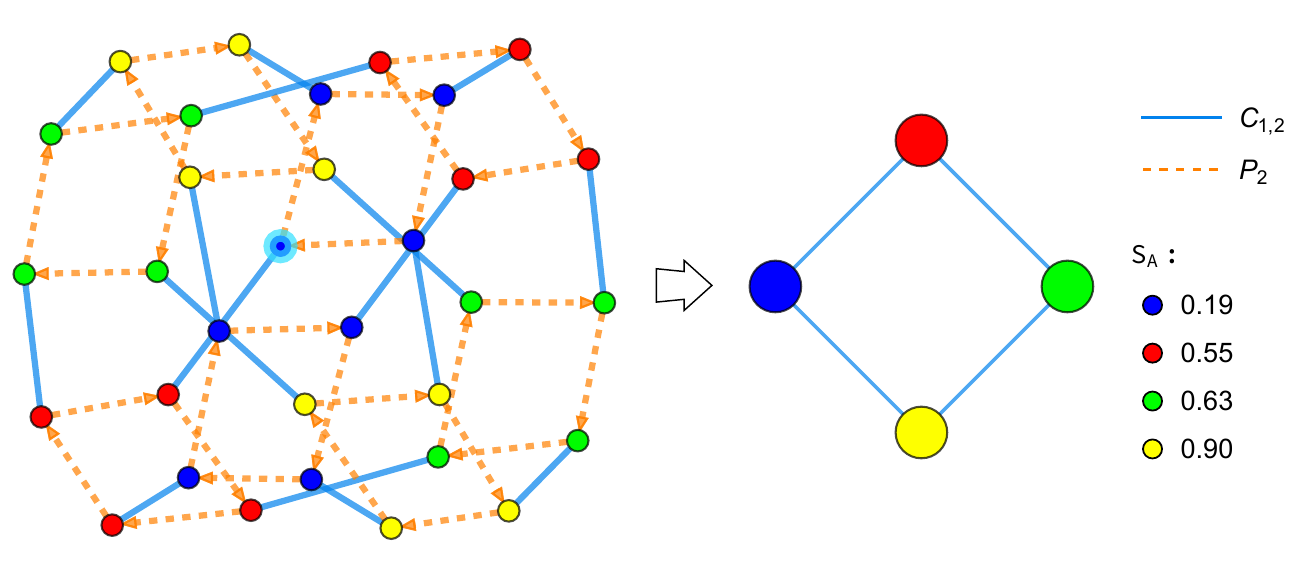}
        \caption{Reachability graph (left) of $\ket{\psi} = \left(\ket{00} + 2\ket{01} + 4\ket{10}+ 3\ket{11}\right)/\sqrt{30}$, highlighted in cyan, under action of $\langle P_2,\ CNOT_{1,2}\rangle$, and its associated contracted graph (right). The contracted graph has $4$ vertices and $4$ edges connecting any two vertices, indicating the entropy vector can maximally change $4$ times under any circuit built of $P_2$ and $CNOT_{1,2}$. The $4$ entropy vector possibilities, defined by Eq.\ \eqref{EntropyVectorExample}, are given in the legend.}
        \label{P2C12CayleyWithContractedGraph}
    \end{figure}

We construct the contracted graph of $\mathcal{R}_G\left(\ket{\psi}\right)$ by identifying the elements of $G$ which cannot modify the entropy vector of $\ket{\psi}$. Since the gate $P_2$ acts locally on a single qubit, it can never modify entanglement. Accordingly, we initially contract $\mathcal{R}_G\left(\ket{\psi}\right)$ by gluing together all vertices connected by a $P_2$ edge, represented by the orange dashed lines. Additionally, as we recognized in \cite{Keeler:2023xcx},
\begin{equation}
    \left(C_{1,2}P_2 \right)^4 = P_1^2.
\end{equation}
 Hence all vertices connected by the circuit $\left(C_{1,2}P_2 \right)^4$ must be identified together as well, since $P_1$ likewise does not change a state's entropy vector. The right panel of Figure \ref{P2C12CayleyWithContractedGraph} shows the final contracted graph of $\mathcal{R}_G\left(\ket{\psi}\right)$, which contains $4$ vertices. In this particular example, the contracted graph represents the right coset space of the quotient group $G/\Stab_G(\ket{\psi})$. In general, however, the contracted graph will represent the double coset space $H \backslash G/\Stab_G(\ket{\psi})$, where $G/\Stab_G(\ket{\psi})$ need not be a quotient group.

It is important to note that edges in a contracted graph do not represent any one particular $C_{i,j}$ operation. Instead, every edge bearing a CNOT coloration represents sequences of operations which, at least, include a $C_{i,j}$ gate and are capable of modifying the entropy vector of a sufficiently-general state. In this way, the edges of a contracted graph bound the number of times the entropy vector of a system can change. Since the process of building a contracted graph removes all group elements which leave entanglement entropy unchanged, we are left with a graph structure that represents the orbit of an entropy vector under the group action.

The number of vertices in a contracted graph give an upper bound on the number of distinct entropy vectors which can be generated in particular reachability graph. For example, the contracted graph in Figure \ref{P2C12CayleyWithContractedGraph} contains $4$ vertices, indicating the maximum number of entropy vectors that can be achieved by acting on $\ket{\psi}$ with $\langle P_2,\, CNOT_{1,2}\rangle$.  The number of vertices in a contracted graph is fixed by the overall group structure of $G$, as well as the group structure of $\Stab_G$; however, the different ways in which those vertices can be colored according to entanglement structure is set by the choice of state. The fact that this contracted graph contains a unique entropy vector for each of its $4$ vertices, i.e.\ the contracted graph is maximally-colored, is the reason we chose $\ket{\psi}$ as we did. While the number of vertices in a contracted graph gives an upper bound on entropic diversity in reachability graphs, there can be multiple entropic colorings of the same graph, depending on factors such as qubit number or the specific state. 

We have defined a procedure for building contracted graphs from the reachability graph of arbitrary state $\ket{\psi}$. When considering a group $G$ which acts on a Hilbert space, we build the reachability graph of $\ket{\psi}$ by decomposing $G$ into left cosets $G/\Stab_G(\ket{\psi})$, with elements equivalent up to action by $\Stab_G(\ket{\psi})$. We build the contracted graph of the $\ket{\psi}$ reachability graph by building the double coset space $H \backslash G /\Stab_G(\ket{\psi})$, for a subgroup $H \leq G$ of elements which preserve a state's entropy vector. 

We have demonstrated how contracted graphs illustrate the evolution of entanglement entropy under the action of some quantum gate set. The number of vertices in a contracted graph gives an upper bound on the maximal number of times an entropy vector can change under the chosen set of gates. We have chosen in this paper to construct contracted graphs from reachability graphs in order to analyze the evolution of state entropy vectors; however, the contraction procedure can be applied directly to Cayley graphs as well. 

In the next section we use the techniques defined above to build contracted graphs for all stabilizer state reachability graphs studied in \cite{Keeler2022,Keeler:2023xcx}, establishing upper bounds on the variation of entanglement entropy in stabilizer state systems. We also extend our analysis beyond stabilizer states, deriving upper bounds on the evolution of entanglement entropy for any quantum state under the action of the Clifford group.

\section{Contracted Clifford Reachability Graphs}\label{ContractedGraphsSection}

In this section, we build contracted graphs to illustrate entropy vector evolution in stabilizer and non-stabilizer state reachability graphs. We begin by first considering stabilizer state reachability graphs under the action of the $\mathcal{C}_2$ subgroup $\HC \equiv \langle H_1,\,H_2,\,C_{1,2},\,C_{2,1} \rangle$, as studied in \cite{Keeler2022,Keeler:2023xcx}. We demonstrate how the contracted version of each $\HC$ reachability graph explains the bounds on entanglement variation observed in our earlier work \cite{Keeler2022}. We then extend our analysis to consider the full action of $\mathcal{C}_2$ on stabilizer states, showing how $\mathcal{C}_2$ contracted graphs constrain the evolution of entanglement entropy in stabilizer systems under any $2$-qubit Clifford circuit. 

We extend our study beyond the stabilizer states to the set of $n$-qubit Dicke states, a class of non-stabilizer quantum states possessing non-trivial stabilizer group under Clifford action \cite{Munizzi:2023ihc}. We construct $\HC$ and $\mathcal{C}_2$ reachability and contracted graphs for all Dicke states, establishing constraints on entropy vector evolution for such states. Finally we move toward complete generality, deriving an upper bound for the number of entropy vectors that can be realized by any $n$-qubit Clifford circuit, acting on an arbitrary quantum state.

\subsection{Contracted Graphs of $g_{24}$ and $g_{36}$}

The complete set of $n$-qubit stabilizer states can be generated by acting with $\mathcal{C}_n$ on the state $\ket{0}^{\otimes n}$. However, since we are motivated to better understand the evolution of entropy vectors in stabilizer systems, we restrict analysis to $\mathcal{C}_2$ and its subgroups, since all entanglement modification in Clifford circuits occurs through bi-local operations. Acting with $\mathcal{C}_2$ on $\ket{0}^{\otimes n}$, for $n >1$, generates an orbit of $60$ states.

First, we consider the class of states with stabilizer subgroup%
\footnote{A comprehensive derivation of all stabilizer subgroups, for stabilizer states under the action of $\HC$, is given in Section 5.3 of \cite{Keeler:2023xcx}.} %
isomorphic to $\mathcal{S}_{HC}(\ket{0}^{\otimes n})\equiv\Stab_{\HC}(\ket{0}^{\otimes n})$, under the action of $\HC$. The state $\ket{0}^{\otimes n}$, and any other state with stabilizer group isomorphic to $\mathcal{S}_{HC}(\ket{0}^{\otimes n})$, has an orbit of $24$ states under $\HC$.

\subsubsection{$\left(HC\right)_{1,2}$ Contracted Graphs of $g_{24}$ and $g_{36}$}

The stabilizer subgroup $\mathcal{S}_{HC}(\ket{0}^{\otimes n})$ contains $48$ elements. As a result, generating all left cosets of the $1152$-element group $\HC$ by $\mathcal{S}_{HC}(\ket{0}^{\otimes n})$ builds a coset space of $1152/48 = 24$ equivalence classes. The corresponding reachability graph of $\ket{0}^{\otimes n}$ under $\HC$ contains $24$ vertices, which we appropriately term $g_{24}$. The left panel of Figure \ref{G24WithContractedGraph} shows the graph $g_{24}$, which is shared by all states with stabilizer group isomorphic to $\mathcal{S}_{HC}(\ket{0}^{\otimes n})$.  

To build the associated contracted graph we quotient $g_{24}$ by all elements of $\HC$ which do not modify the entropy vector. One immediate $\HC$ subgroup which cannot modify entanglement entropy is $\langle H_1,\,H_2 \rangle$, which describes all circuits composed of Hadamard gates acting on two qubits. Additionally, as we recognized in \cite{Keeler:2023xcx}, the relation
\begin{equation}\label{HCLocalAction}
    \left(C_{i,j}H_j \right)^4 = P_i^2,
\end{equation}
demonstrates that certain sequences of Hadamard and CNOT gates are actually equivalent to phase operations. We therefore need to also identify all vertices connected by the circuits in Eq.\ \eqref{HCLocalAction}, since phase operations cannot change entanglement. After identifying all vertices connected by entropy-preserving edges, the reachability graph $g_{24}$ contracts to a graph with $2$ vertices, shown on the right of Figure \ref{G24WithContractedGraph}.
    \begin{figure}[h]
        \centering
        \includegraphics[width=15cm]{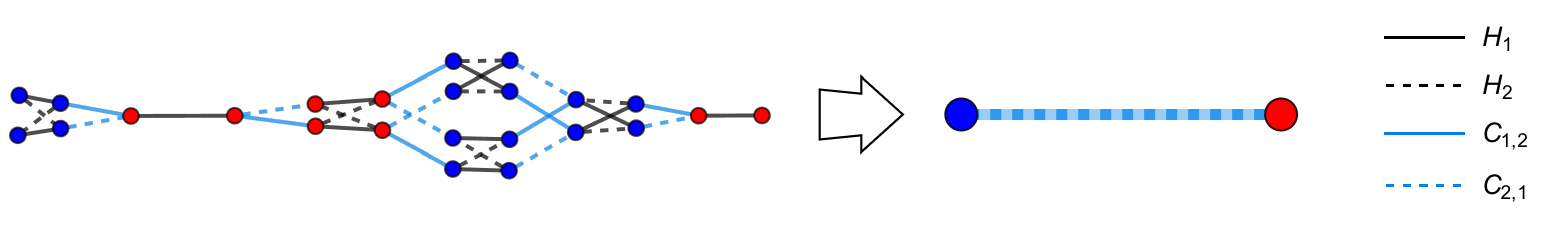}
        \caption{Reachability graph $g_{24}$ (left) and its contracted graph (right). Any state with stabilizer group isomorphic to $\mathcal{S}_{HC}(\ket{0}^{\otimes n})$ will have reachability graph $g_{24}$ under $\HC$. The $g_{24}$ contracted graph has $2$ vertices, indicating the maximum number of unique entropy vectors that can exist in any $g_{24}$ graph. Each edge in the contracted graph represents a set of entanglement-modifying circuits, each containing at least one CNOT gate.}
        \label{G24WithContractedGraph}
    \end{figure}

The contracted graph of $g_{24}$ contains $2$ vertices, and is shown in the right panel of Figure \ref{G24WithContractedGraph}. These $2$ vertices represent the $2$ possible entropy vectors that can be reached by all circuits in any $g_{24}$ graph, regardless of qubit number. All states represented by blue vertices in $g_{24}$ are connected by some circuit composed of $H_1,\,H_2,\,P_1^2,$ and $P_2^2$, and are therefore identified to a single blue vertex in the contracted graph. Likewise, all red vertices in $g_{24}$ are identified to a single red vertex in the contracted graph. For the specific case of $\ket{0}^{\otimes n}$, the two entropy vectors in $g_{24}$ correspond to completely unentangled states, or states which share an EPR pair among two qubits. 

As a group-theoretic object, the vertices of a contracted graph represent the equivalence classes of a double coset space, as defined in Eq.\ \eqref{DoubleCosetDefinition}. For the group $\HC$ acting on $\Hil$, the subgroup
\begin{equation}
    (HP^2)_{1,2} \equiv \langle H_1,\,H_2,\,P_1^2,\,P_2^2 \rangle
\end{equation}
can never modify the entropy vector of any state. Accordingly, the $2$ vertices of the contracted graph in Figure \ref{G24WithContractedGraph} indicate the $2$ distinct equivalence classes in the double coset space $(HP^2)_{1,2} \backslash \HC / \mathcal{S}_{HC}(\ket{0}^{\otimes n})$.

Acting with the gates $H_1$ followed by $P_1$ on the state $\ket{0}^{\otimes n}$, that is
\begin{equation}
    \ket{\phi} = P_1H_1\ket{0}^{\otimes n},
\end{equation}
yields a state $\ket{\phi}$ with stabilizer group $\mathcal{S}_{HC}(\ket{\phi})$, consisting of $32$ elements, which is not isomorphic to $\mathcal{S}_{HC}(\ket{0}^{\otimes n})$. Consequently the state $\ket{\phi}$, as well as any other state with stabilizer group isomorphic to $\mathcal{S}_{HC}(\ket{\phi})$, is not found on any $g_{24}$ graph. Instead, each state stabilized by $\mathcal{S}_{HC}(\ket{\phi})$ resides on a reachability graph of $36$ vertices, which we term $g_{36}$, shown on the left of Figure \ref{G36WithContractedGraph}. In general, any state which is the product of a $2$-qubit stabilizer state and a generic $(n-2)$-qubit state will either have reachability graph $g_{24}$ or $g_{36}$.
    \begin{figure}[h]
        \centering
        \includegraphics[width=15cm]{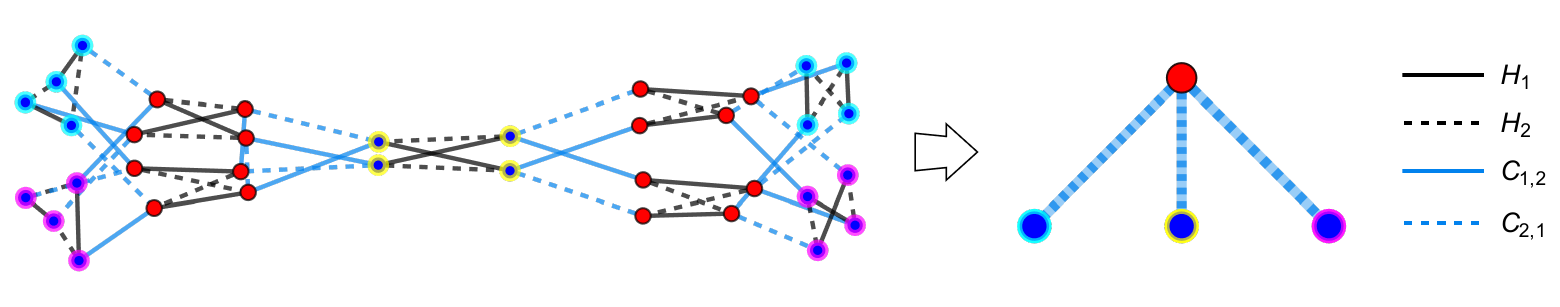}
        \caption{Reachability graph $g_{36}$ (left) and its contracted graph (right). The $g_{36}$ contracted graph contains $4$ vertices, but only ever realizes $2$ entropy vectors among those vertices. Different sets of blue vertices, highlighted in cyan, yellow, and magenta, identify respectively to the three blue vertices in the contracted graph. All red vertices in $g_{36}$ identify to a single red vertex in the contracted graph. Non-trivial entropy-preserving circuits, e.g.\ $(C_{i,j}H_j)^4$ from Eq.\ \eqref{HCLocalAction}, map vertices on opposite sides of $g_{36}$ to each other.}
        \label{G36WithContractedGraph}
    \end{figure}

The contracted graph of $g_{36}$, shown in the right panel of Figure \ref{G36WithContractedGraph}, contains $4$ vertices. All red vertices in $g_{36}$ identify to the same red vertex in the contracted graph. There are three distinct sets of blue vertices in $g_{36}$, highlighted with colors cyan, yellow, and magenta in Figure \ref{G36WithContractedGraph}, which identify to the three blue vertices in the contracted graph. All vertices highlighted by the same color in $g_{36}$ are connected by circuits which preserve the entropy vector.

The vertices of the $g_{36}$ contracted graph in Figure \ref{G36WithContractedGraph} represent the $4$ unique equivalence classes of the double coset space $(HP^2)_{1,2} \backslash \HC / \mathcal{S}_{HC}(\ket{\phi})$. Examining the vertex identifications in Figure  \ref{G36WithContractedGraph}, we again observe that the contraction map is not a quotient map on the original group. Vertex sets of different cardinalities in $g_{36}$ are identified together under this graph contraction, which cannot occur in a formal group quotient.

While the $g_{36}$ contracted graph contains four vertices, these vertices only ever realize two different entropy vector possibilities. Specifically, the two entropy vectors found on any $g_{36}$ graph are exactly the same as those found on the $g_{24}$ graph in Figure \ref{G24WithContractedGraph}. As we will show below, graph $g_{24}$ attaches to $g_{36}$ when we add phase gates back to our generating set. This connection of the $g_{24}$ and $g_{36}$ reachability graphs by local operations constrains the number of distinct entropy vectors that can be found on either graph.

\subsubsection{$\mathcal{C}_2$ Contracted Graphs of $g_{24}$ and $g_{36}$}

We now analyze the full action of $\mathcal{C}_2$ on states in a $g_{24}$ or $g_{36}$ reachability graph under $\HC$. Acting with $\mathcal{C}_2$ on any such state generates a reachability graph of $60$ vertices, which can be seen in Figure \ref{PhaseConnectedG24:G36}. This $60$-vertex reachability graph consists of a single copy of $g_{24}$ (top), attached to a single copy of $g_{36}$ (bottom) by sets of $P_1$ and $P_2$ edges.
    \begin{figure}[h]
        \centering
        \includegraphics[width=13cm]{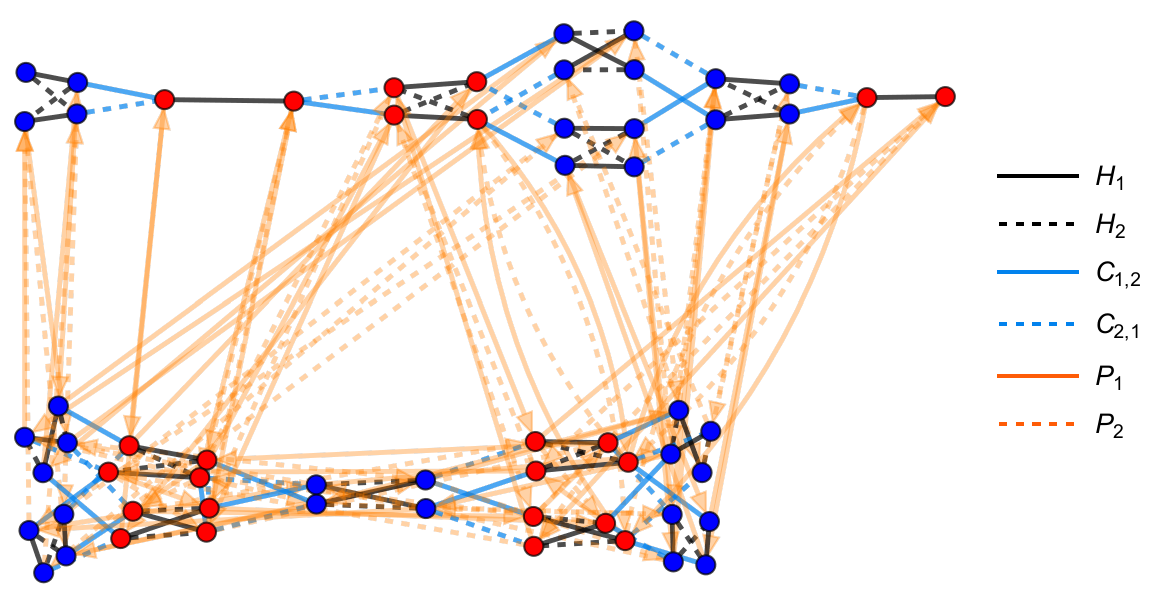}
        \caption{Reachability graph for all states with $\mathcal{S}_{HC}(\ket{0}^{\otimes n})$ under the action of $\mathcal{C}_2$. This $60$-vertex reachability graph is the attachment of $g_{24}$ (Figure \ref{G24WithContractedGraph}) to $g_{36}$ (Figure \ref{G36WithContractedGraph}) by $P_1$ and $P_2$ gates. This reachability graph is likewise shared by all stabilizer product states.}
        \label{PhaseConnectedG24:G36}
    \end{figure}

Following the $P_1$ and $P_2$ edges in Figure \ref{PhaseConnectedG24:G36}, we can observe how vertices of a certain color connect to other vertices of the same color. Blue vertices in $g_{24}$ always connect to blue vertices in $g_{36}$, as is true for red vertices. Red vertices in $g_{36}$ may connect to other red vertices in $g_{36}$, or to red vertices in $g_{24}$. The three distinct batches of blue vertices in $g_{36}$, highlighted in Figure \ref{G36WithContractedGraph}, connect to each other via sequences of $H_1,\,H_2,\,P_1,$ and $P_2$, all of which leave the entropy vector unchanged. We can also directly observe circuits such as $\left(C_{1,2}H_2 \right)^4$, as in Eq.\ \eqref{HCLocalAction}, and verify that this sequence is indeed equivalent to the entropy-preserving $P_1^2$ operation.

As before, we contract the $\mathcal{C}_2$ reachability graph in Figure \ref{PhaseConnectedG24:G36} by identifying vertices connected by entropy-preserving circuits. When performing this contraction on the full $\mathcal{C}_2$ graph we do not rely on any special operator relations, e.g.\ Eq.\ \eqref{HCLocalAction}, since we are identifying vertices connected by all $2$-qubit local operations, i.e.\ all operations built of $H_1,\,H_2,\,P_1,$ and $P_2$. The contracted graph of the $\mathcal{C}_2$ reachability graph in Figure \ref{PhaseConnectedG24:G36} is shown in the right panel of Figure \ref{PhaseConnectedG24:g36ContractedGraph}. The $2$ vertices in this contracted graph represent the $2$ equivalence classes in $(HP^2)_{1,2} \backslash \mathcal{C}_2 / \mathcal{S}_{\mathcal{C}_2}(\ket{0}^{\otimes n})$.
    \begin{figure}[h]
        \centering
        \includegraphics[width=13cm]{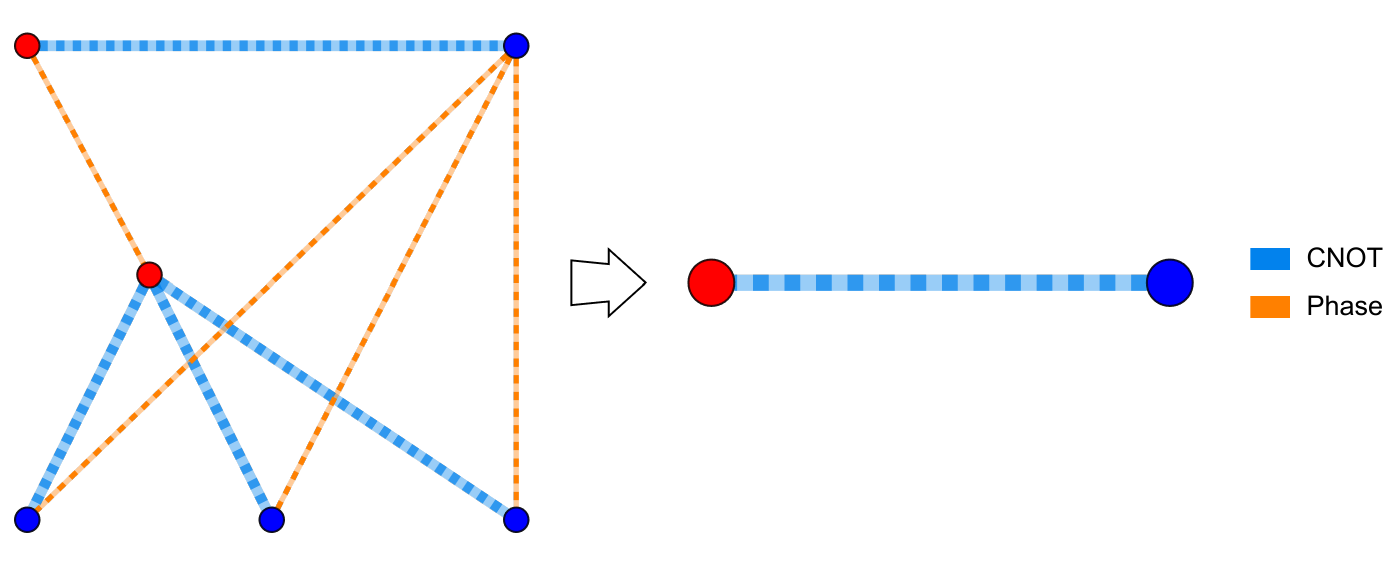}
        \caption{Contracted graph (right) of $\mathcal{C}_2$ reachability graph in Figure \ref{PhaseConnectedG24:G36}. The left panel shows the contracted graphs of $g_{24}$ (top) and $g_{36}$ (bottom), connected by $P_1$ and $P_2$ circuits. Identifying vertices connected by phase edges quotients the left graph to the $2$-vertex contracted graph on the right. The $2$ vertices of this contracted graph represent the $2$ unique entropy vectors that can be found in the reachability graph in Figure \ref{PhaseConnectedG24:G36}.}
        \label{PhaseConnectedG24:g36ContractedGraph}
    \end{figure}

Figure \ref{PhaseConnectedG24:G36} depicts how sets of phase gates connect reachability graphs $g_{24}$ and $g_{36}$. Similarly, the left panel of Figure \ref{PhaseConnectedG24:g36ContractedGraph} shows how the respective contracted graphs of $g_{24}$ and $g_{36}$ are connected by sets of phase edges. The right panel of Figure \ref{PhaseConnectedG24:g36ContractedGraph} gives the final contracted graph after quotienting the $\mathcal{C}_2$ reachability graph in Figure \ref{PhaseConnectedG24:G36} by all entropy-preserving edges. The contracted graph has $2$ vertices, corresponding to the $2$ possible entropy vectors that can be found on any $\mathcal{C}_2$ reachability graph of the form shown in Figure \ref{PhaseConnectedG24:G36}. Furthermore, the $2$ vertices in the contracted explain why both graphs $g_{24}$ and $g_{36}$ individually only ever realize $2$ entropy vector colors among their vertices.

We examined the action of $\HC$ and $\mathcal{C}_2$ on $n$-qubit states with stabilizer group isomorphic to $\mathcal{S}_{HC}(\ket{0}^{\otimes n})$ and $\mathcal{S}_{HC}(\ket{0}^{P_1H_1\otimes n})$. We generated the reachability graphs for all states with both stabilizer groups, and quotiented each reachability graph by entropy-preserving operations to build the associated contracted graphs. The number of vertices in each contracted graph gave an upper bound on the number of different entropy vectors found in each reachability graph. Similarly, the edges in each contracted graph indicated the ways an entropy vector can change under all circuits comprising the reachability graph. We will now consider the reachability graphs of $n > 2$ qubit stabilizer states, where more-complicated entanglement structures can arise.

\subsection{Contracted Graphs of $g_{144}$ and $g_{288}$}

When we consider the action of $\HC$ and $\mathcal{C}_2$ on systems of $n > 2$ qubits, new reachability graph structures appear \cite{Keeler2022}. Additionally at $n > 2$ qubits, we observe new entanglement possibilities as well as new entropy vector colorings for reachability graphs. In this subsection, we define two new sets of stabilizer states which arise at $n=3$ qubits, defined by their stabilizer subgroup under $\HC$ action. We build all reachability graphs and contracted graphs for these two families of states, and determine the bounds on entropy vector evolution in their respective reachability graphs. We then consider the full action of $\mathcal{C}_2$ on these classes of states, and again build all reachability and contracted graphs.

At three qubits, acting with $\HC$ on certain stabilizer states produces an additional two reachability graphs beyond $g_{24}$ and $g_{36}$ discussed in the previous subsection. One new graph which arises at three qubits contains $144$ vertices, shown on the left of Figure \ref{G144WithContractedGraph}, and corresponds to states which are stabilized by $8$ elements in $\HC$. One example of a state with $g_{144}$ reachability graph is the $3$-qubit GHZ state $\ket{GHZ}_3 \equiv \ket{000} + \ket{111}$. The graph $g_{144}$ is shared by all states with a stabilizer subgroup isomorphic to $\mathcal{S}_{HC}(\ket{GHZ}_3)$. For reasons we will explain in a moment, Figure \ref{G144WithContractedGraph} depicts the specific reachability graph for the $6$-qubit state defined in Eq.\ \eqref{SixQubit144State}.
    \begin{figure}[h]
        \centering
        \includegraphics[width=14.5cm]{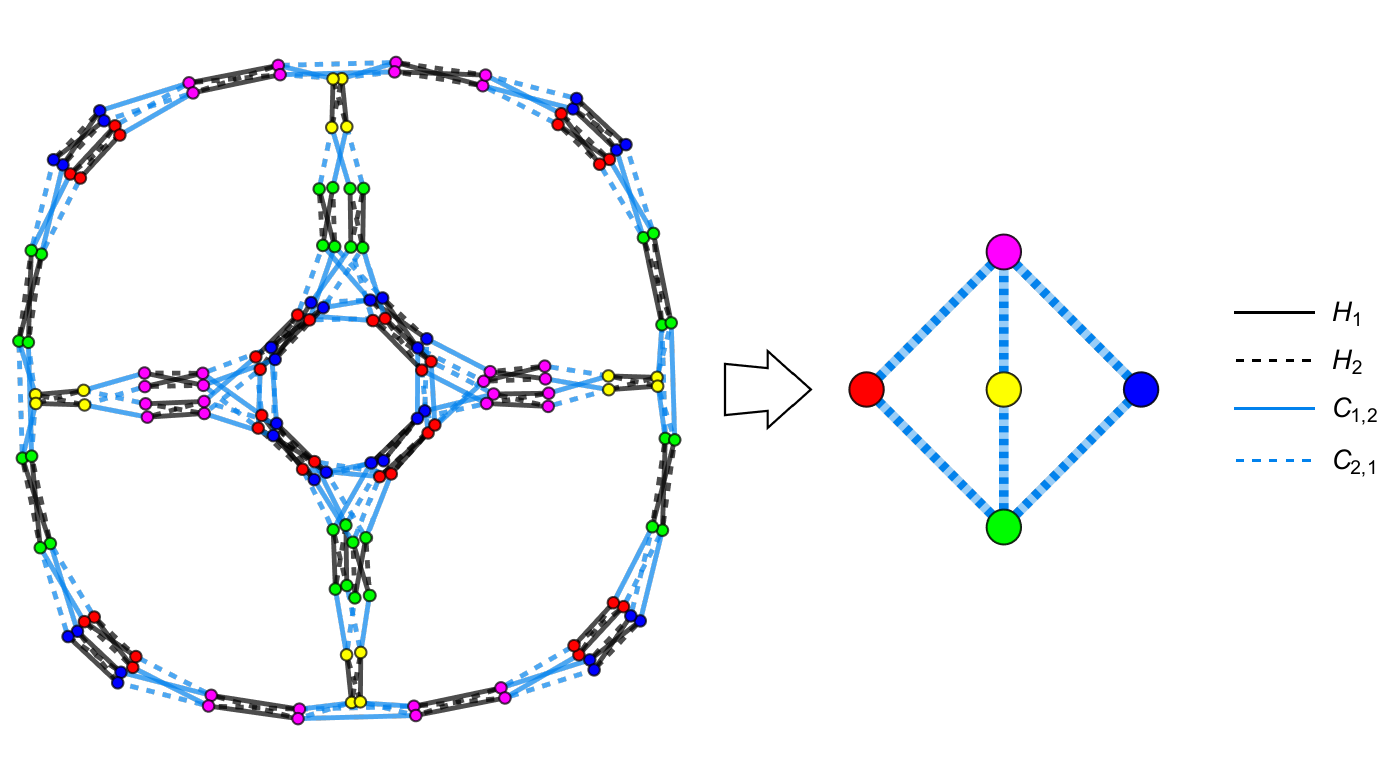}
        \caption{Reachability graph $g_{144}$ (left), and its associated contracted graph (right). The contracted graph contains $5$ vertices, corresponding to the $5$ unique entropy vectors that can be found on $g_{144}$. We depict a $g_{144}$ graph for the $6$-qubit state defined in Eq.\ \eqref{SixQubit144State}, as it contains the maximal number of $5$ entropy vectors among its vertices. Again we observe that certain circuits, e.g.\ Eq.\ \eqref{HCLocalAction}, do not modify entanglement and map vertices of the same color together. The specific entropy vectors shown are given in Table \ref{tab:g144g288EntropyVectorTable}.}
        \label{G144WithContractedGraph}
    \end{figure}

The contracted graph of $g_{144}$, shown on the right of Figure \ref{G144WithContractedGraph}, contains $5$ vertices. These $5$ vertices represent the $5$ unique entropy vectors that can be found on any $g_{144}$ reachability graph. While the graph $g_{144}$ is first observed among $3$-qubit systems, we do not find a maximal coloring of $g_{144}$, i.e.\ a copy of $g_{144}$ with $5$ different entropy vectors, until $6$ qubits. The specific graph shown in Figure \ref{G144WithContractedGraph} corresponds to the orbit of the $6$-qubit state defined in Eq.\ \eqref{SixQubit144State}, which we choose precisely because its $g_{144}$ graph displays the maximum allowable entropic diversity. The specific entropy vectors corresponding to the colors seen in Figure \ref{G144WithContractedGraph} can be found in Table \ref{tab:g144g288EntropyVectorTable} of Appendix \ref{EntropyVectorTables}.

Also beginning at three qubits, we witness a stabilizer state reachability graph with $288$ vertices, which we denote $g_{288}$. States with reachability graph $g_{288}$ are stabilized by $4$ elements of $\HC$, specifically by a subgroup isomorphic to
\begin{equation}\label{g288StabGroup}
    \{\mathbb{1},\, H_2(C_{1,2}H_1)^4,\, (C_{1,2}H_1)^4H_2,\, \left((C_{1,2}H_1)^3C_{1,2}H_2\right)^2\}.
\end{equation}
The left panel of Figure \ref{G288WithContractedGraph} depicts a $g_{288}$ reachability graph, specifically for a $6$-qubit state stabilized by the group in Eq.\ \eqref{g288StabGroup}.
    \begin{figure}[h]
        \centering
        \includegraphics[width=15.1cm]{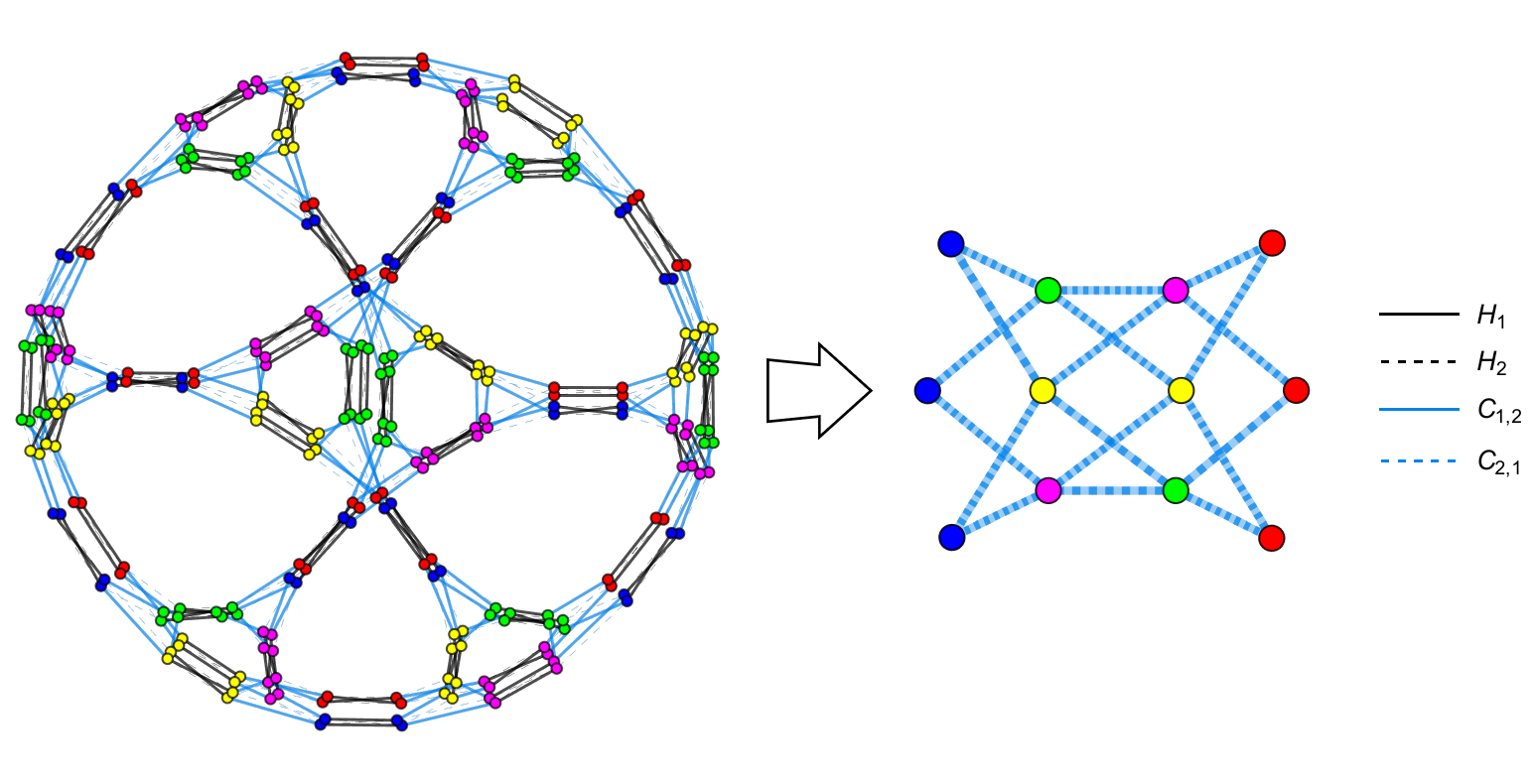}
        \caption{Reachability graph $g_{288}$, and its contracted graph, for $6$-qubit state stabilized by Eq.\ \eqref{g288StabGroup}. While the $g_{288}$ contracted graph has $12$ vertices, we only ever witness $5$ entropy vectors among those vertices. The specific entropy vectors depicted are the same as those in Figure \ref{G144WithContractedGraph}, and can be found in Table \ref{tab:g144g288EntropyVectorTable}.}
        \label{G288WithContractedGraph}
    \end{figure}

The $g_{288}$ contracted graph shown in the right panel of Figure \ref{G288WithContractedGraph} contains $12$ vertices, which provides a weak upper bound on the number of entropy vectors that can be found on any $g_{288}$ graph. However, for reasons we will soon explain, the $12$ vertices of this contracted graph are only ever colored by $5$ different entropy vectors. The specific $5$ entropy vectors shown in Figure \ref{G288WithContractedGraph} are exactly those seen in Figure \ref{G144WithContractedGraph}, and are defined in Table \ref{tab:g144g288EntropyVectorTable}. Similar to the case of $g_{144}$ in Figure \ref{G144WithContractedGraph}, the graph $g_{288}$ is first observed among $3$-qubit systems, but only witnesses a maximal coloring beginning at $n \geq 6$ qubits. 

We now consider the full action of $\mathcal{C}_2$ on states with a $g_{144}$ or $g_{288}$ reachability graph, returning $P_1$ and $P_2$ to our generating set. Every state in a $g_{144}$ and $g_{288}$ reachability graph under $\HC$ is stabilized by $15$ elements of the full group $\mathcal{C}_2$. The orbit of all such states under $\mathcal{C}_2$ therefore contains $768$ states, and the associated $768$-vertex reachability graph is shown in Figure \ref{HhCcPhaseOverlay}. The orange edges in the reachability graph, which correspond to $P_1$ and $P_2$ gates, illustrate specifically how three different copies of $g_{144}$ attach to a single copy of $g_{288}$ under phase operations.
    \begin{figure}[h]
        \centering
        \includegraphics[width=14cm]{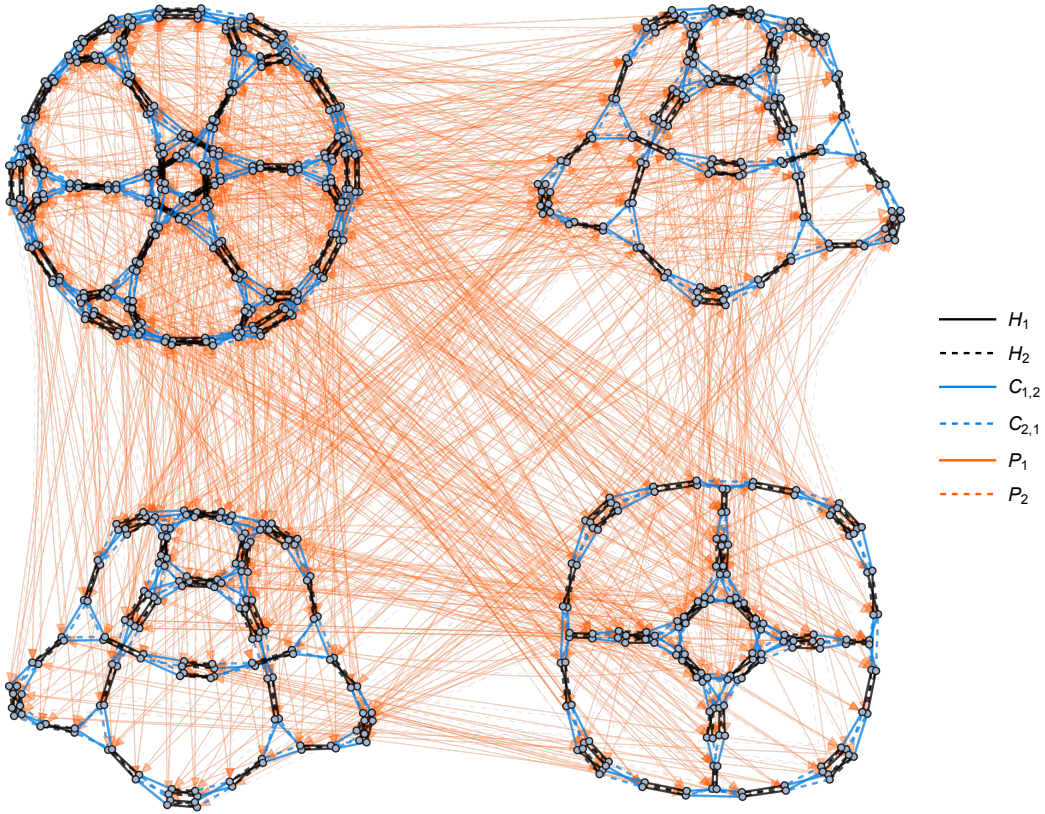}
        \caption{Reachability graph for states in $g_{144}$ and $g_{288}$ graphs, under the full action of $\mathcal{C}_2$. This $768$-vertex graph is composed of $3$ copies of $g_{144}$ and a single $g_{288}$. The graph connectivity constrains the diversity of entropy vectors which can be found on any single $g_{144}$ and $g_{288}$ graph. For clarity we choose not to color vertices by their entropy vector here.}
        \label{HhCcPhaseOverlay}
    \end{figure}

The contracted graphs for each $g_{144}$ and $g_{288}$ in Figure \ref{HhCcPhaseOverlay} are compiled in the left panel of Figure \ref{PhaseConnectedg144_288ContractedGraph}. Each of the three copies of $g_{144}$ contracts to a $5$-vertex graph that is isomorphic to Figure \ref{G144WithContractedGraph}, while the single copy of $g_{288}$ contracts to the $12$-vertex graph seen in Figure \ref{G288WithContractedGraph}. These four contracted graphs attach to each other under phase operations, adding connections which do not change a state's entropy vector. The final contracted graph of Figure \ref{HhCcPhaseOverlay} is shown on the right of Figure \ref{PhaseConnectedg144_288ContractedGraph}, and only has $5$ vertices.

The full $\mathcal{C}_2$ contracted graph in Figure \ref{PhaseConnectedg144_288ContractedGraph} is almost identical to the $g_{144}$ contracted graph in Figure \ref{G144WithContractedGraph}, but with an additional edge connecting two of the vertices. Since every $g_{288}$ attaches to $3$ copies of $g_{144}$ by phase gates, which do not modify entanglement, the maximum number of entropy vectors on any $g_{288}$ is bounded by the entropic coloring of each $g_{144}$ it connects to. This connectivity explains why we only observe at most $5$ entropy vectors on any $g_{288}$ graph, as can be seen in Figure \ref{G288WithContractedGraph}.
    \begin{figure}[h]
        \centering
        \includegraphics[width=14cm]{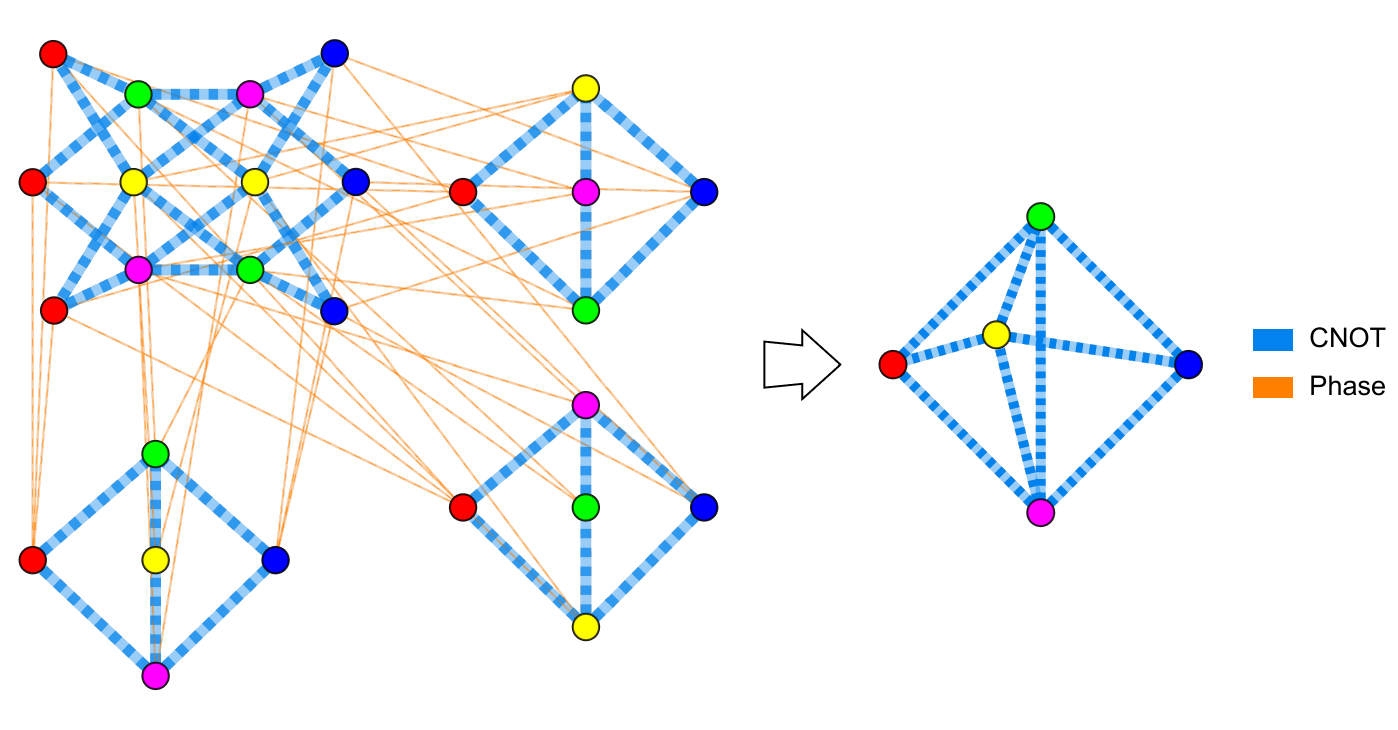}
        \caption{Contracted graph of $\mathcal{C}_2$ reachability graph from Figure \ref{HhCcPhaseOverlay}. The left panel depicts the individual contracted graphs of the $3$ $g_{144}$ graphs attached to a single $g_{288}$ graph. The right panel shows the final contracted graph, with $5$ vertices, and explains why we only ever find $g_{288}$ and $g_{144}$ graphs with $5$ different entropy vectors (given in Table \ref{tab:g144g288EntropyVectorTable}).}
        \label{PhaseConnectedg144_288ContractedGraph}
    \end{figure}

Figure \ref{PhaseConnectedg144_288ContractedGraph} depicts a symmetry between red and blue vertices which corresponds to an equivalence of these two entropy vectors under an exchange of the first two qubits. We likewise observe a symmetry between green, yellow, and magenta vertices, reflecting the three ways to divide the $4$-qubit subsystem $CDEO$ into two groups of two qubits each. For each $g_{144}$ contracted graph in Figure \ref{PhaseConnectedg144_288ContractedGraph}, the middle vertex corresponds to the entropy vector that occurs the fewest number of times, specifically $16$ times, in each respective $g_{144}$ reachability graph. We again observe that the contraction procedure generates a double coset space, rather than a group quotient, since the resulting equivalence classes have different cardinalities.

In this subsection we built contracted graphs for the stabilizer state reachability graphs $g_{144}$ and $g_{288}$, corresponding to states which are stabilized by $4$ and $8$ elements of $\HC$ respectively. We showed how the contracted graph for $g_{144}$, with $5$ vertices, and the contracted graph for $g_{288}$, with $12$ vertices, both witness a maximum of $5$ different entropy vectors. This constraint on the number of different entropy vectors, perhaps surprising in the case of $g_{288}$, can be understood by considering the full action of $\mathcal{C}_2$, which attaches three copies of $g_{144}$ to $g_{288}$ by phase operations. The number of entropy vectors found on any $g_{288}$ reachability graph is bounded by the number of entropy vectors found on each of the $g_{144}$ graphs to which it attaches, since $P_1$ and $P_2$ cannot modify entanglement. In the next subsection we consider the action of $\HC$ and $\mathcal{C}_2$ on generic quantum states, which allows us to extend our analysis beyond the stabilizer states. 

\subsection{Contracted Graphs of $g_{1152}$ and Full $\mathcal{C}_2$}

We now study the generic $\HC$ reachability graph for any quantum state stabilized by only the identity in $\HC$. For stabilizer state systems, this final $\HC$ reachability graph structure arises at $n \geq 4$ qubits. The reachability graph, which we term $g_{1152}$, contains $1152$ vertices and is shown on the left of Figure \ref{G1152WithContractedGraph}.
    \begin{figure}[h]
        \centering
        \includegraphics[width=15cm]{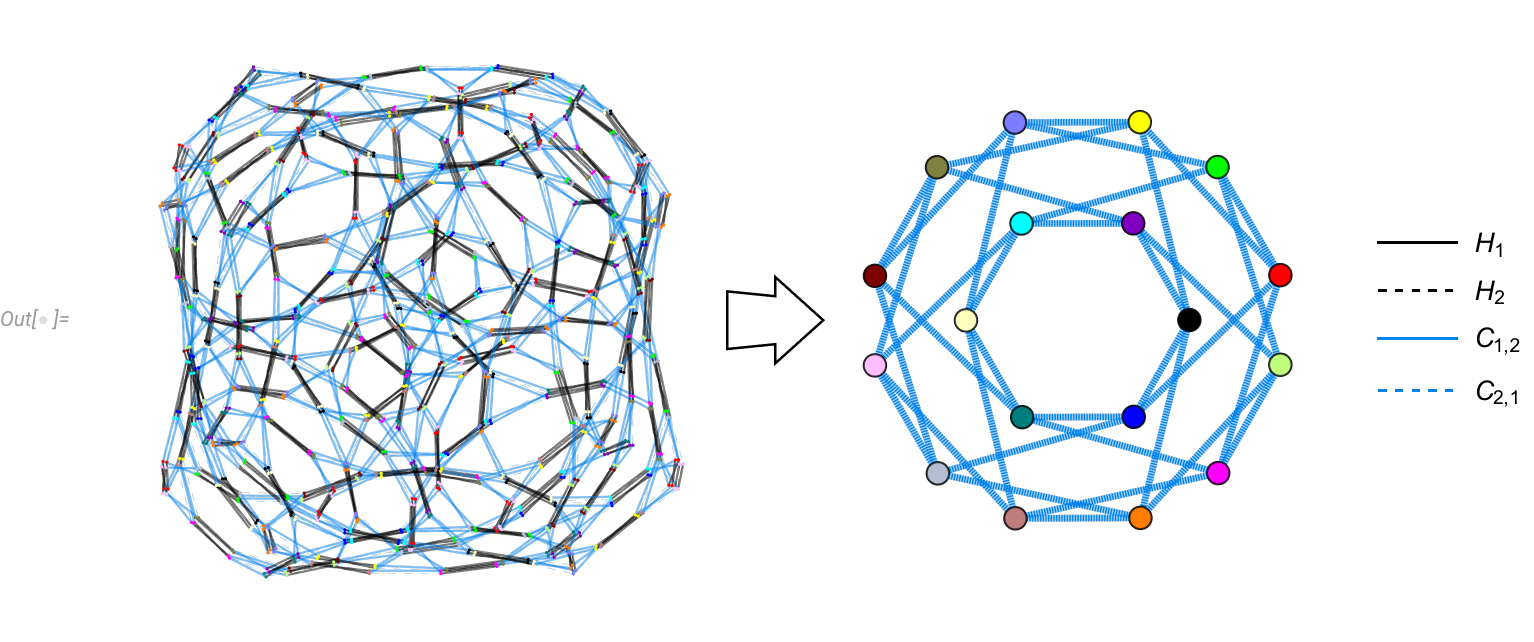}
        \caption{Reachability graph $g_{1152}$ (left) and its contracted graph (right). The graph $g_{1152}$ is shared by all stabilizer states stabilized by only $\mathbb{1} \in \HC$, as well as generic quantum states. In this Figure, we illustrate an example $g_{1152}$ for the $8$-qubit state in Eq. \eqref{EightQubitState}, where the contracted graph achieves a maximal coloring of $18$ different entropy vectors (given in Figure \ref{EightQubitEntropyVectors}).}
        \label{G1152WithContractedGraph}
    \end{figure}

The contracted graph of $g_{1152}$, shown in the right panel of Figure \ref{G1152WithContractedGraph}, contains $18$ vertices. These $18$ vertices indicate the maximum number of unique entropy vectors that can be generated for any quantum state using only operations in $\HC$. The $g_{1152}$ contracted graph is symmetric, and achieves a maximal coloring at $8$ qubits. The specific instance of $g_{1152}$ in Figure \ref{G1152WithContractedGraph} corresponds to the $8$-qubit state given in Eq. \eqref{EightQubitState}, for which the entropy vectors are given in Table \ref{EightQubitEntropyVectors}.

The full $2$-qubit Clifford group $\mathcal{C}_2$ is composed of $11520$ elements. A generic quantum state will only be stabilized by $\mathbb{1} \in \mathcal{C}_2$, and therefore has an orbit of $11520$ states under $\mathcal{C}_2$ action. Every state in an $11520$-vertex reachability graph under $\mathcal{C}_2$ will trivially lie in a $g_{1152}$ graph under $\HC$, however, the converse%
\footnote{Since any state in an $11520$-vertex graph under $\mathcal{C}_2$ is stabilized by only the identity, each state will likewise be stabilized by only the identity in $\HC$. There exist states, however, which are stabilized by only the identity in $\HC$, but can be transformed under phase gates into states stabilized by more than one element of $\HC$. Subsection \ref{DickeStateSubsection} discusses two classes of Dicke states which demonstrate this counterexample. \label{C2Footnote}} %
is not always true. We display the full $\mathcal{C}_2$ reachability graph, in a compressed format, to the left of Figure \ref{FullC2WithContractedGraph}. Each vertex in the left panel of Figure \ref{FullC2WithContractedGraph} represents a distinct copy of $g_{1152}$ from Figure \ref{G1152WithContractedGraph}. Each of the $10$ copies of $g_{1152}$ attaches to every other $g_{1152}$ via $P_1$ and $P_2$ gates.
    \begin{figure}[h]
        \centering
        \includegraphics[width=15cm]{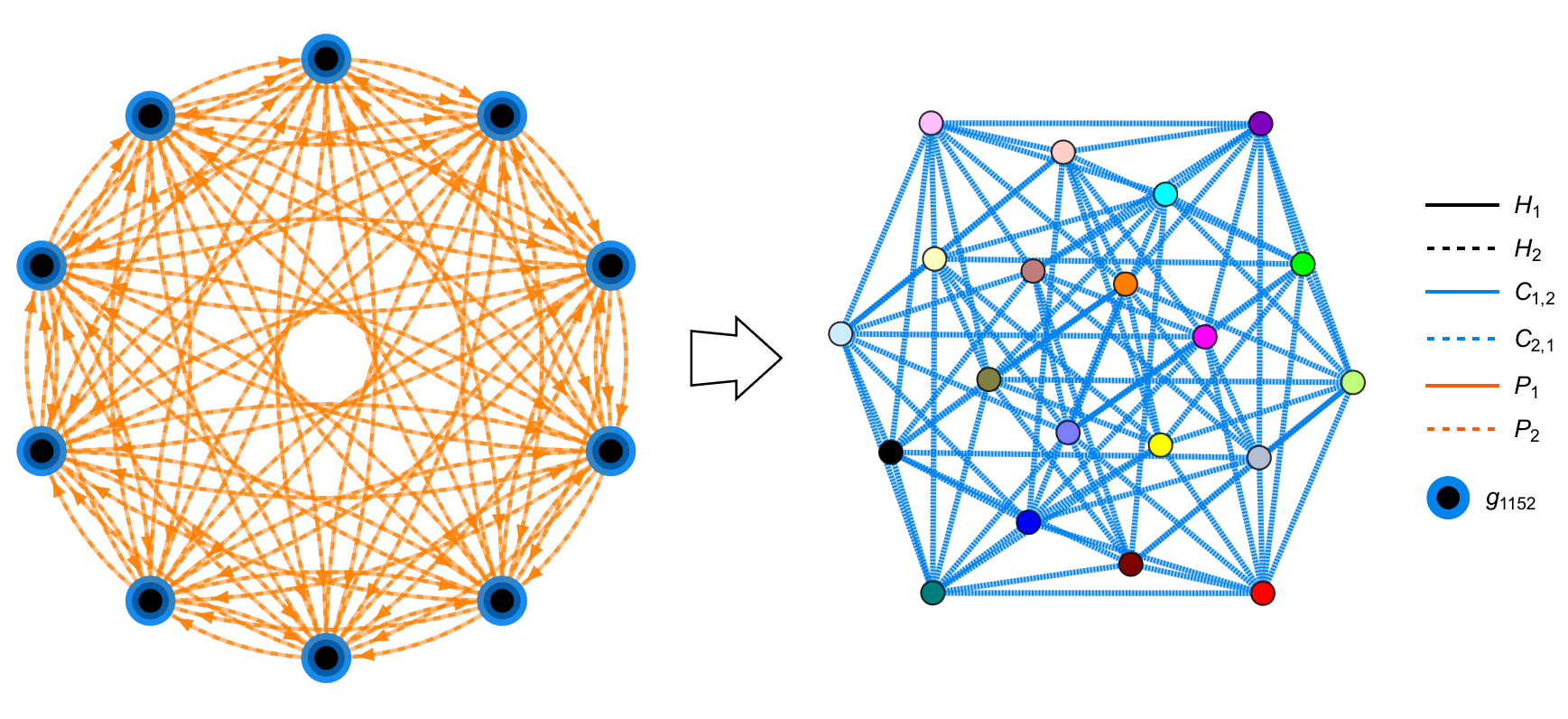}
        \caption{The full $\mathcal{C}_2$ reachability graph (left) with $11520$ vertices. We present this reachability graph as a collection of attached $g_{1152}$ graphs, illustrating how $\HC$ reachability graphs connect via $P_1$ and $P_2$ gates. We also remove all loops in the $\mathcal{C}_2$ reachability graph, i.e.\ all phase edges which map a copy of $g_{1152}$ to itself. The contracted graph of the $\mathcal{C}_2$ reachability graph is given to the right, and has $20$ vertices. These $20$ vertices give an upper bound on the number of distinct entropy vectors that can be reached by applying any sequence of $2$-qubit operations on any quantum state.}
        \label{FullC2WithContractedGraph}
    \end{figure}

The contracted graph%
\footnote{Since all states in the $11520$-vertex reachability graph are stabilized by only $\mathbb{1} \in \mathcal{C}_2$, and since $\langle \mathbb{1} \rangle$ is normal in $\mathcal{C}_2$, the object $\mathcal{C}_2/\langle \mathbb{1} \rangle$ defines a formal group quotient on $\mathcal{C}_2$. Consequently, the contracted graph to the right of Figure \ref{FullC2WithContractedGraph} actually represents the right coset space $\mathcal{C}_2 \backslash \langle H_1,\,H_2,\,P_1,\,P_2 \rangle$, as opposed to a double coset space.} %
of the $11520$-vertex $\mathcal{C}_2$ reachability graph contains $20$ vertices, and is shown on the right of Figure \ref{FullC2WithContractedGraph}. This contracted graph is complete and symmetric, and the $20$ entropy vectors shown in Figure \ref{FullC2WithContractedGraph} are given in Table \ref{EightQubitEntropyVectors}. Since we are considering the full action of $\mathcal{C}_2$, the $20$ vertices in this contracted graph constrain the number of entropy vectors that can be generated by any $2$-qubit Clifford circuit. Otherwise stated, given a generic quantum state with arbitrary entanglement structure, any unitary composed of $2$-qubit Clifford gates can maximally achieve $20$ distinct entropy vectors.

In the remainder of the section we extend our discussion beyond stabilizer states, examining contracted graphs for non-stabilizer Dicke states under $\HC$ and $\mathcal{C}_2$ action. We also derive a general upper bound for the number of entropy vectors that can be achieved under any $n$-qubit Clifford circuit, for arbitrary $n$.

\subsection{Non-Stabilizer State Contracted Graphs}\label{DickeStateSubsection}

\cite{Keeler:2023xcx,Munizzi:2023ihc} showed that certain non-stabilizer states can have non-trivial stabilizer subgroups, i.e.\ they are stabilized by more than just the identity, under the action of $\mathcal{C}_n$. One class of states in particular, the set of $n$-qubit Dicke states \cite{nepomechie2023qudit}, always admits a non-trivial $\mathcal{C}_n$ stabilizer group. In this subsection, we discuss all $\HC$ and $\mathcal{C}_2$ reachability graphs for Dicke states and construct their associated contracted graphs. We use the contracted graphs to bound the number of possible entropy vectors that can be generated in Dicke state systems under Clifford group action \cite{Schnitzer:2022exe,Munizzi:2023ihc}.

Each $n$-qubit Dicke state $\ket{D^n_k}$ is defined as an equal-weight superposition over all $n$-qubit states of a fixed Hamming weight. Using the $n$-qubit states $\{\ket{b}\}$, where $b$ denotes some binary string of length $2^n$, we construct $\ket{D^n_k}$ as the state
\begin{equation}
    \ket{D^n_k} \equiv \binom{n}{k}^{-1/2} \sum_{b \in \{0,1\}^n,\,h(b)=k} \ket{b},
\end{equation}
where $h(b) = k$ denotes the fixed Hamming weight of $b$. Some examples of Dicke states include
\begin{equation}
\begin{split}
        \ket{D^2_1} &= \frac{1}{\sqrt{2}}\left(\ket{01} + \ket{10}\right),\\
        \ket{D^4_2} &= \frac{1}{\sqrt{6}}\left(\ket{1100} + \ket{1010} + \ket{1001} + \ket{0110} + \ket{0101} + \ket{0011}\right).
\end{split}
\end{equation}
Dicke states of the form $\ket{D^n_1}$ are exactly the non-biseparable $n$-qubit $W$-states, while $\ket{D^n_n}$ are the computational basis states $\ket{1}^{\otimes n}$.

For $n \geq 3$ qubits, the state $\ket{D^n_1}$ is not a stabilizer state. Regardless, each $\ket{D^n_k}$ is stabilized by a subset of $\mathcal{C}_n$ that contains more than just the identity. When considering the action of $\mathcal{C}_2$ on $\ket{D^n_k}$, states of the form $\ket{D^n_1}$ and $\ket{D^n_{n-1}}$ share one particular set of stabilizers, while those of the form $\ket{D^n_k}$ with $1 < k < n-1$ share another. We discuss both cases below. 

Dicke states of the form $\ket{D^n_1}$ and $\ket{D^n_{n-1}}$ are not stabilizer states for all $n \geq 3$. However, both $\ket{D^n_1}$ and $\ket{D^n_{n-1}}$ are stabilized by a $4$-element subgroup%
\footnote{There is a more compact representation of this stabilizer group using $CZ$ gates (see also \cite{Latour:2022gsf}), which can be written $\mathcal{S}_{HC}(\ket{D^n_1}) = \{\mathbb{1},\,CZ_{1,2},\,C_{1,2}C_{2,1}C_{1,2},\,CZ_{1,2}C_{1,2}C_{2,1}C_{1,2}\}$.} %
of $\mathcal{C}_2$, specifically
\begin{equation}\label{WStateStabGroup}
\begin{split}
    \mathcal{S}_{HC}(\ket{D^n_1}) &= \{\mathbb{1},\,H_2C_{1,2}H_2,\,C_{1,2}C_{2,1}C_{1,2},\,H_2C_{1,2}H_2C_{1,2}C_{2,1}C_{1,2}\},\\
    &= \mathcal{S}_{HC}(\ket{D^n_{n-1}}).
\end{split}
\end{equation}
Furthermore, we note that the subgroup in Eq.\ \eqref{WStateStabGroup} is contained in $\HC$. Therefore the left coset space $\HC / \mathcal{S}_{HC}(\ket{D^n_1})$ contains $288$ elements.

The reachability graph for all $\ket{D^n_1}$ and $\ket{D^n_{n-1}}$, which we denote $g_{288^*}$, has $288$ vertices, as dictated by the order of $\mathcal{S}_{HC}(\ket{D^n_1})$ in Eq.\ \eqref{WStateStabGroup}. While the graph $g_{288^*}$ has the same number of vertices as the $g_{288}$ graph for stabilizer states, shown in Figure \ref{G288WithContractedGraph}, its topology is distinct from $g_{288}$ and the two graphs are not isomorphic. Graphs with the topology of $g_{288^*}$ are never observed among stabilizer states, and provide an example of non-stabilizer states that are stabilized by more than just the identity in $\mathcal{C}_2$. The left panel of Figure \ref{WStateG288WithContractedGraph} depicts an example of $g_{288^*}$, specifically for the state $\ket{D^3_1}$.
    \begin{figure}[h]
        \centering
        \includegraphics[width=15cm]{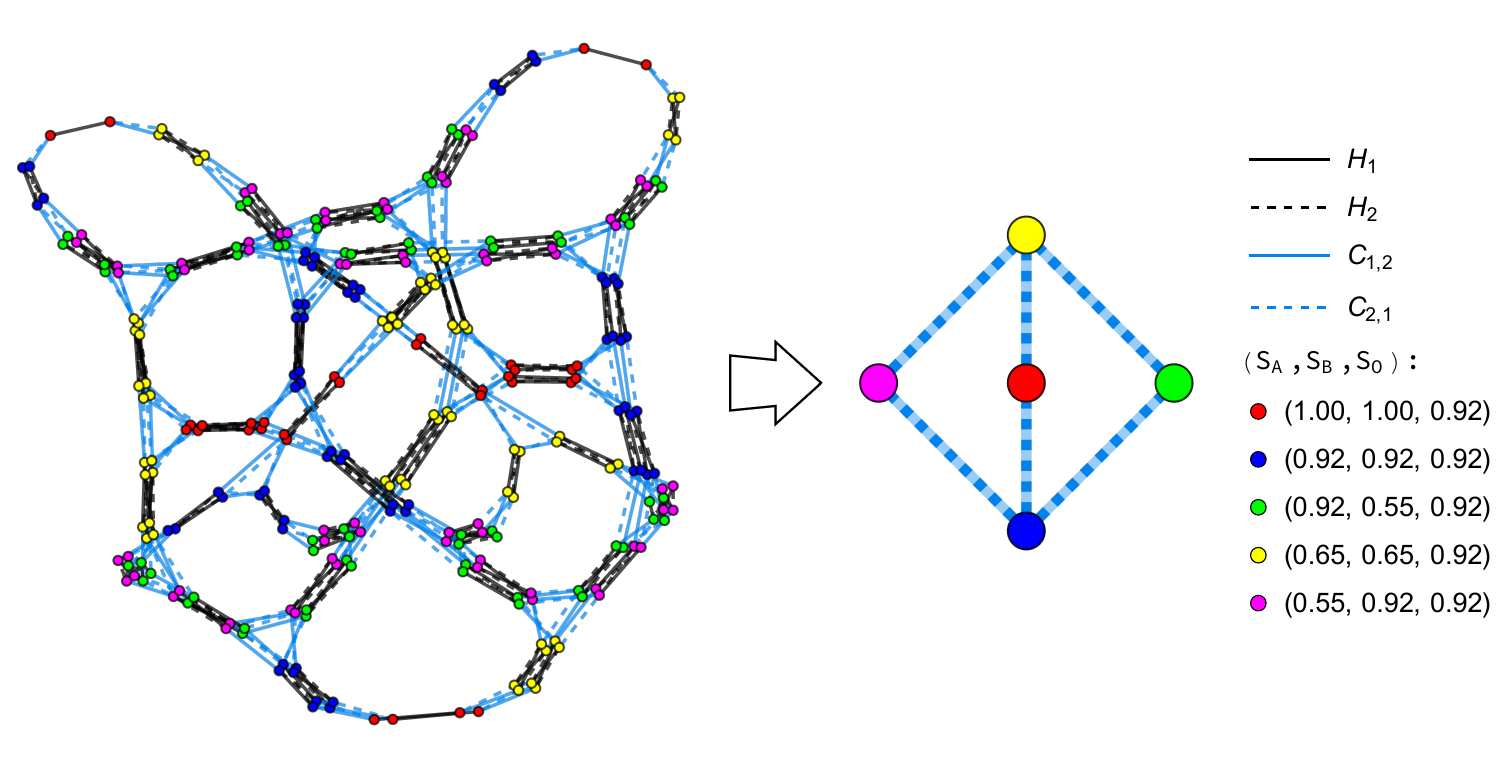}
        \caption{Reachability graph $g_{288^*}$ (left) for $\ket{D^3_1}$ under the action of $\HC$. The graph $g_{288^*}$ has different topology than the $g_{288}$ graph for stabilizer states. The $g_{288^*}$ contracted graph (right) has $5$ vertices, and is isomorphic to the stabilizer state contracted graph of $g_{144}$ from Figure \ref{G144WithContractedGraph}. The exact, rather than numerical, values of the $5$ entropy vectors given in the legend are shown in Table \ref{tab:WStateEntropyVectorTable}.}
        \label{WStateG288WithContractedGraph}
    \end{figure}

The contracted graph of $g_{288^*}$ has $5$ vertices, and is shown on the right of Figure \ref{WStateG288WithContractedGraph}. While the reachability graph $g_{288}$ for stabilizer states has a contracted graph of $12$ vertices, the distinct connectivity of $g_{288^*}$ yields a smaller contracted graph. Interestingly, the $g_{288^*}$ contracted graph is isomorphic to the $g_{144}$ contracted graph seen in Figure \ref{G144WithContractedGraph}. There are $5$ possible entropy vectors found on any $g_{288^*}$, and the graph achieves a maximal coloring beginning at $3$ qubits.

The orbit of $\ket{D^n_1}$ and $\ket{D^n_{n-1}}$ under the full group $\mathcal{C}_2$ reaches $2880$ states, generating a reachability graph of $2880$ vertices. The left panel of Figure \ref{PhaseConnectedWG288ContractedGraph} illustrates this $2880$-vertex reachability graph for the state $\ket{D^3_1}$, which is comprised of several attached copies of $\HC$ reachability graphs. For clarity, we allow each vertex of the $2880$-vertex reachability graph to represent graphs $g_{288^*},\,g_{576}$ (introduced later in Figure \ref{G576WithContractedGraph}), and $g_{1152}$, focusing on the connectivity between different $\HC$ orbits under $P_1$ and $P_2$ operations.
    \begin{figure}[h]
        \centering
        \includegraphics[width=15cm]{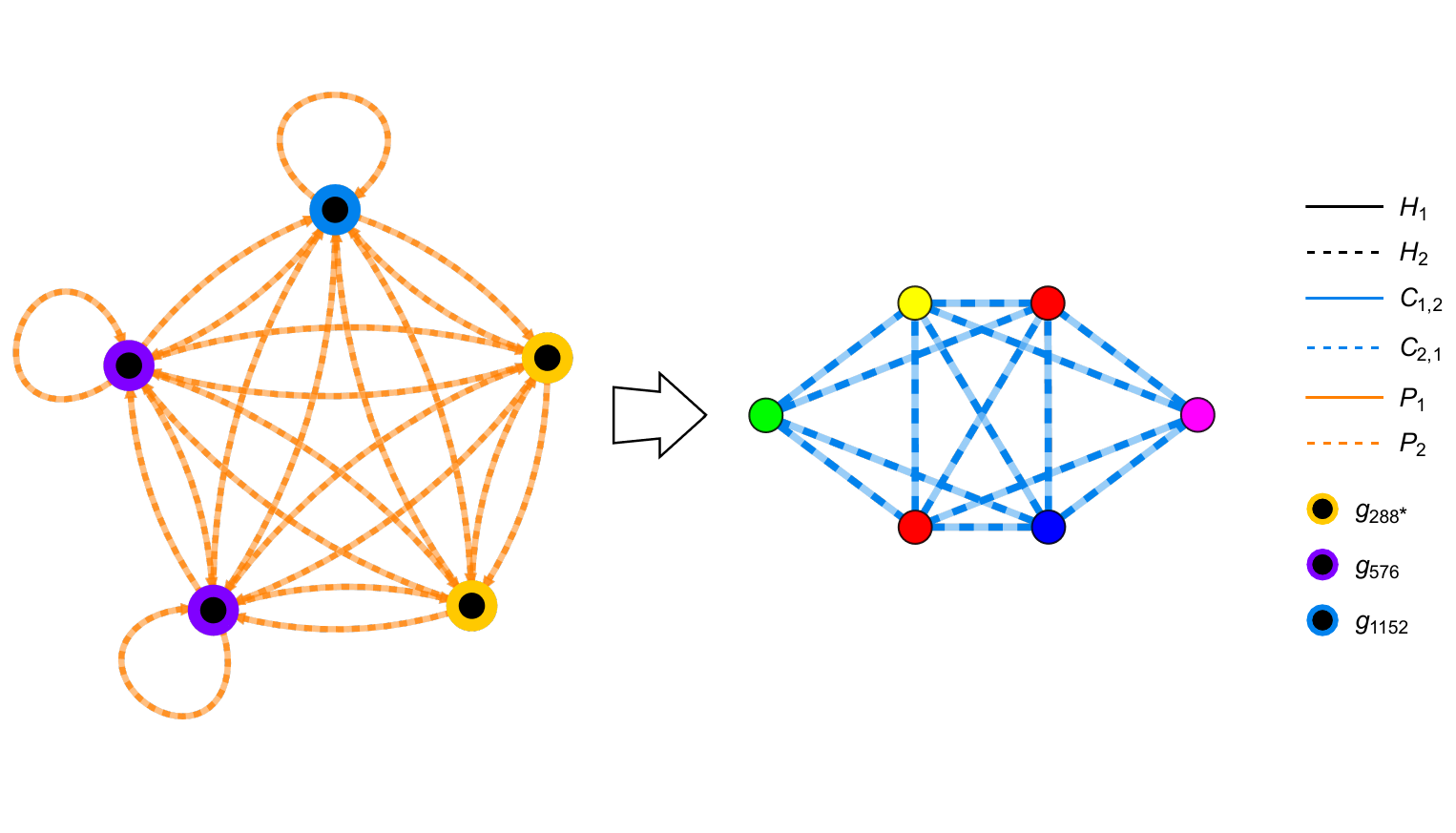}
        \caption{Reachability graph (left) of $\ket{D^3_1}$ under the full action of $\mathcal{C}_2$, containing $2880$ vertices. We illustrate this reachability graph with vertices representing graphs $g_{288^*}, g_{576},$ and $g_{1152}$ to illustrate the connectivity of certain $\HC$ reachability graphs under phase gates. The right panel of the Figure depicts the associated contracted for the $\mathcal{C}_2$ reachability graph, which contains $6$ vertices.}
        \label{PhaseConnectedWG288ContractedGraph}
    \end{figure}

The $\mathcal{C}_2$ reachability graph in Figure \ref{PhaseConnectedWG288ContractedGraph} is built of $2$ attached copies of $g_{288^*}$, $2$ copies of $g_{576}$, and a single $g_{1152}$. Every state in this $2880$-vertex reachability graph is stabilized by $4$ elements of $\mathcal{C}_2$. Certain states, such as $\ket{D^n_1}$ and $\ket{D^n_{n-1}}$, are stabilized by a $4$-element subgroup of $\mathcal{C}_2$ which is also completely contained within $\HC$, as shown in Eq.\ \eqref{WStateStabGroup}. However, other states are stabilized by $4$ elements of $\mathcal{C}_2$, but by only $2$ elements in $\HC$ (see Footnote \ref{C2Footnote}). Accordingly, such states are found in one of the $g_{576}$ graphs in Figure \ref{PhaseConnectedWG288ContractedGraph}. Still other states are stabilized by $4$ elements of $\mathcal{C}_2$, but only by the identity in $\HC$, and reside in the single copy of $g_{1152}$ in Figure \ref{PhaseConnectedWG288ContractedGraph}.

The $\mathcal{C}_2$ reachability graph of $\ket{D^3_1}$ contracts to a $6$-vertex graph, seen to the right of Figure \ref{PhaseConnectedWG288ContractedGraph}, after identifying vertices connected by entropy-preserving circuits. While the contracted graph in Figure \ref{PhaseConnectedWG288ContractedGraph} has $6$ vertices, we only ever observe $5$ different entropy vectors among those vertices. We address this point further in the discussion. The $5$ entropy vectors of the $\ket{D^3_1}$ contracted graph are listed in Table \ref{tab:WStateEntropyVectorTable}.

All remaining Dicke states, those of the form $\ket{D^{n}_k}$ with $1 < k < n-1$, are stabilized by only $2$ elements in $\mathcal{C}_2$. For any $\ket{D^{n}_k}$ of this form, its stabilizer subgroup under $\mathcal{C}_2$ action is given by
\begin{equation}\label{AllOtherDStabilizer}
\mathcal{S}_{\mathcal{C}_2}\left(\ket{D^n_k}\right) = \{\mathbb{1},\, C_{1,2}C_{2,1}C_{1,2}\}, \quad \forall \, 1 < k < n-1.
\end{equation}
We again note that the stabilizer group in Eq.\ \eqref{AllOtherDStabilizer} is also contained completely within $\HC$, and therefore the left coset space $\HC / \mathcal{S}_{\mathcal{C}_2}\left(\ket{D^n_k}\right)$ consists of $576$ elements. 

The reachability graph for $\ket{D^{n}_k}$ under $\HC$, which we denote $g_{576}$, has $576$ vertices. The left panel of Figure \ref{G576WithContractedGraph} depicts $g_{576}$, specifically for the state $\ket{D^4_2}$. Reachability graphs with $576$ vertices, under $\HC$ action, are never observed for stabilizer states. Again, as with $g_{288^*}$, the graph $g_{576}$ corresponds to non-stabilizer states which are non-trivially stabilized by $\mathcal{C}_n$.
    \begin{figure}[h]
        \centering
        \includegraphics[width=15cm]{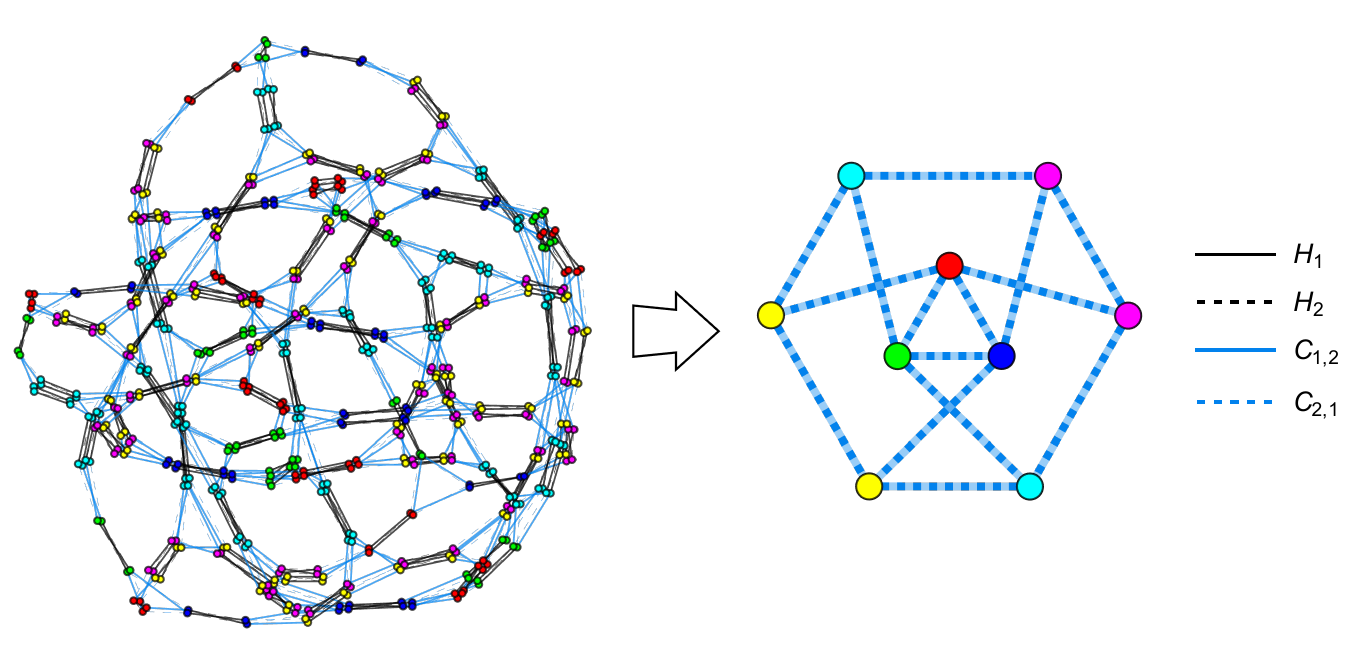}
        \caption{The $g_{576}$ reachability graph (left) for $\ket{D^4_2}$ under $\HC$ action. Graphs of $576$ vertices are never observed among stabilizer states under $\HC$ action. The graph $g_{576}$ contracts to a graph of $9$ vertices under entropy-preserving operations, with $6$ different entropy vectors among those vertices. The $6$ entropy vectors found in this contracted graph are given in Table \ref{tab:D42EntropyVectors}.}
        \label{G576WithContractedGraph}
    \end{figure}

After identifying vertices in $g_{576}$ connected by entropy-preserving operations, we are left with a contracted graph of $9$ vertices shown on the right of Figure \ref{G576WithContractedGraph}. These $9$ vertices are colored by $6$ different entropy vectors, with maximal coloring beginning at $4$ qubits. Among the $6$ entropy vectors in this contracted graph, there are symmetries shared among cyan, magenta, and yellow vectors, and separately among red, blue, and green vectors. The specific $6$ entropy vectors for the $\ket{D^4_2}$ contracted graph are given in Table \ref{tab:D42EntropyVectors}.

Acting with the full group $\mathcal{C}_2$ on $\ket{D^{n}_k}$, for $1 < k < n-1$, generates an orbit of $5760$ states. The 
$\mathcal{C}_2$ reachability graph of $\ket{D^{n}_k}$ therefore has $5760$ vertices, and is depicted in the left panel of Figure \ref{PhaseConnectedG576ContractedGraph} for the case of $\ket{D^4_2}$. As before, we depict the full $5760$-vertex reachability graph as $7$ attached copies of different $\HC$ reachability graphs $g_{576}$ and $g_{1152}$.
    \begin{figure}[h]
        \centering
        \includegraphics[width=15cm]{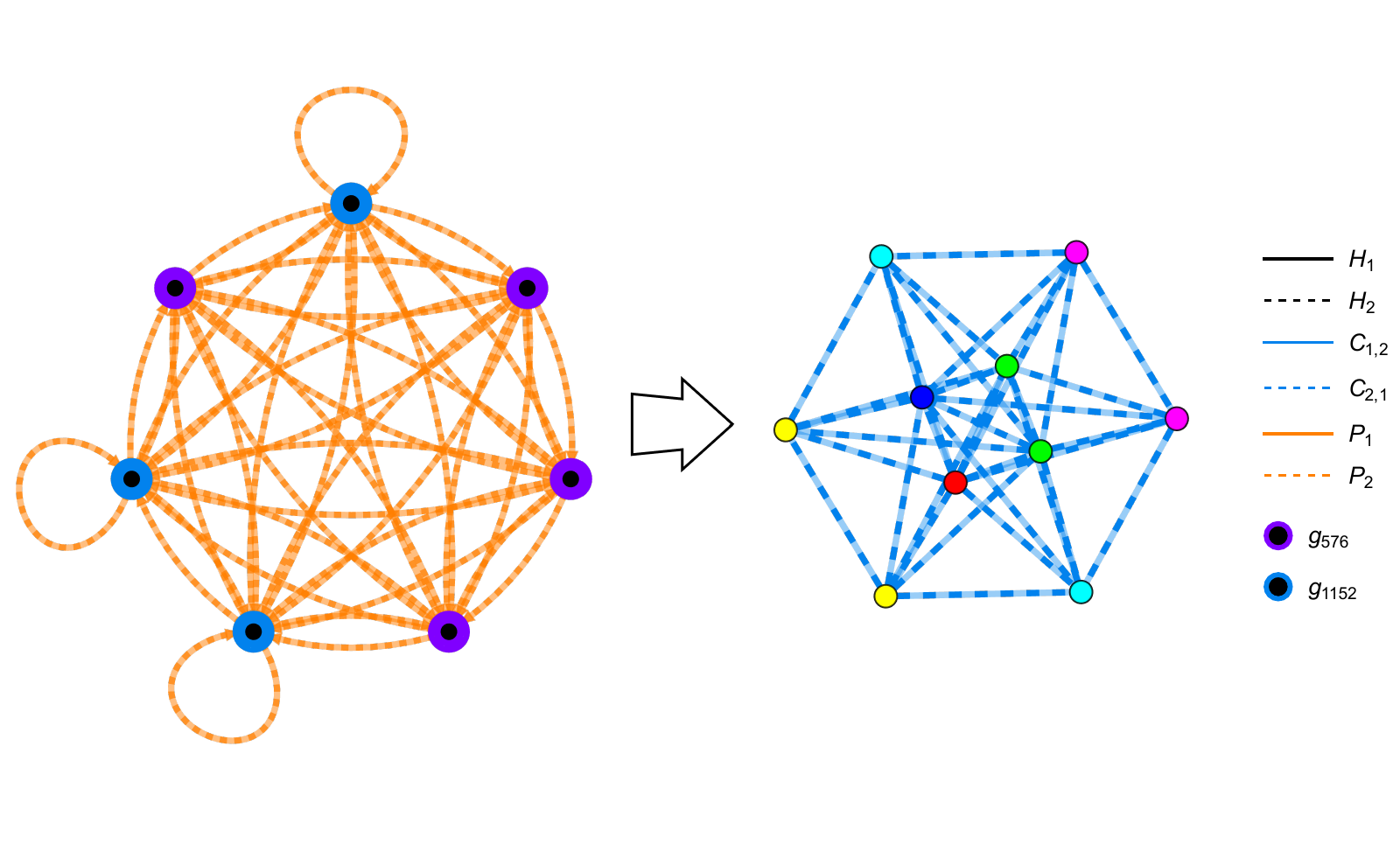}
        \caption{Reachability graph of $\ket{D^4_2}$ under $\mathcal{C}_2$ (left), and its associated contracted graph (right). We display the $5760$-vertex reachability graph as a network of $\HC$ graphs $g_{576}$ and $g_{1152}$, connected by $P_1$ and $P_2$ gates. The contracted graph contains $10$ vertices, but we only ever observe $6$ entropy vectors due to how the $g_{576}$ and $g_{1152}$ copies connect under phase action. The $6$ different entropy vectors shown are given in Table \ref{tab:D42EntropyVectors}.}
        \label{PhaseConnectedG576ContractedGraph}
    \end{figure}

The $5760$-vertex reachability graph in Figure \ref{PhaseConnectedG576ContractedGraph} consists of $4$ copies of $g_{576}$ and $3$ copies of $g_{1152}$, all connected via $P_1$ and $P_2$ operations. While every state in the full $5760$-vertex reachability graph is stabilized by $2$ elements of $\mathcal{C}_2$, some states have a stabilizer group completely contained within $\HC$. States stabilized by $2$ elements of $\HC$ are found in one of the $4$ copies of $g_{576}$ in Figure \ref{PhaseConnectedG576ContractedGraph}. Alternatively, states which are stabilized by $2$ elements of $\mathcal{C}_2$, but only the identity in $\HC$, are found in one of the $3$ copies of $g_{1152}$.

If we identify vertices connected by entropy-preserving operations in the $\mathcal{C}_2$ reachability graph of $\ket{D^4_2}$, we are left with a contracted graph containing $10$ vertices shown to the right of Figure \ref{PhaseConnectedG576ContractedGraph}. While this contracted graph has $10$ vertices, we only ever observe $6$ different entropy vectors among those $10$ vertices. We again return to this point in the discussion. The contracted graph in Figure \ref{PhaseConnectedG576ContractedGraph} also reflects the symmetry among magenta, cyan, and yellow vertices observed in Figure \ref{G576WithContractedGraph}. These $6$ entropy vectors which can be generated from $\ket{D^4_2}$ under $\mathcal{C}_2$ are given in Table \ref{tab:D42EntropyVectors}.

In this subsection we extended our analysis beyond the stabilizer states, building contracted graphs for non-stabilizer Dicke states under the action of $\HC$ and $\mathcal{C}_2$. States $\ket{D^n_k}$, for $k \neq n$, are particularly interesting at $n \geq 3$ qubits as they comprise a class of non-stabilizer states that are non-trivially stabilized by elements of $\mathcal{C}_n$. We constructed the two possible reachability graphs for $\ket{D^n_k}$, one for states $\ket{D^n_1}$ and $\ket{D^n_{n-1}}$, and the other for all $\ket{D^n_k}$ with $1 < k < n-1$. We described how each Dicke state reachability graph under $\mathcal{C}_2$ corresponds to a connection of $\HC$ reachability graphs $g_{288^*},\, g_{576},$ and $g_{1152}$ under $P_1$ and $P_2$ operations.

We built the contracted graphs for each $\ket{D^n_k}$ reachability graph, both under $\HC$ and $\mathcal{C}_2$ action. We illustrated that states $\ket{D^n_1}$ and $\ket{D^n_{n-1}}$ can realize $5$ different entropy vectors under $\mathcal{C}_2$. Alternatively, states of the form $\ket{D^n_k}$ with $1 < k < n-1$ can achieve $6$ different entropy vectors under $\mathcal{C}_2$. In the next subsection we completely generalize to an argument for $\mathcal{C}_n$ action on arbitrary quantum states. We use our construction up to this point to bound the entropy vector possibilities that can be achieved for any state under $n$-qubit Clifford action.

\subsection{Entanglement in $n$-Qubit Clifford Circuits}

We now use our results to present an upper bound on entropy vector evolution in Clifford circuits, for arbitrary qubit number. We begin by determining the subset of $\mathcal{C}_n$ operations which cannot modify the entanglement entropy of any state. We then build a contracted graph by identifying the vertices in the $\mathcal{C}_n$ Cayley graph that are connected by entropy-preserving circuits.

Local actions, i.e.\ all operations which act only on a single qubit in some $n$-qubit system, will always preserve a state's entropy vector. When considering action by the Clifford group $\mathcal{C}_n$, the subgroup of all local actions is exactly the group generated by $n$-qubit Hadamard and phase gates, which we denote $(HP)_n$. We build $(HP)_n$ as the direct product \cite{Keeler:2023xcx}
\begin{equation}
    (HP)_n \equiv \prod_{i=1}^n \langle H_i,\,P_i \rangle.
\end{equation}
Since $(HP)_n$ is a direct product, and $|\langle H_i,\,P_i \rangle| = 24$, the order of $|(HP)_n|$ is just $24^n$. The order of the $n$-qubit Clifford group is likewise known \cite{Walter2016}. We can compute $|\mathcal{C}_n|$ as
\begin{equation}
    |\mathcal{C}_n| = 2^{n^2+2n}\prod_{j=1}^n(4^j-1).
\end{equation}

Generating the right coset space $\mathcal{C}_n \backslash (HP)_n$ identifies all elements in $\mathcal{C}_n$ equivalent up to local gate operations. Invoking Lagrange's theorem (Eq.\ \eqref{LagrangeTheorem}) allows us to compute the size of $\mathcal{C}_n \backslash (HP)_n$ as
\begin{equation}\label{NonLocalGroupOrder}
    \frac{|\mathcal{C}_n|}{|(HP)_n|} = \frac{2^{n^2-n}}{3^n}\prod_{j=1}^n(4^j-1).
\end{equation}

It is important to note that $(HP)_n$ is not a normal subgroup of $\mathcal{C}_n$, which we can immediately verify by considering any Hadamard operation $H_j \in (HP)_n$. The element
\begin{equation}
    C_{i,j}H_jC_{i,j}^{-1} \notin \langle H_i,\,P_i,\,H_j,\,P_j \rangle,
\end{equation}
which violates the necessity that any normal subgroup be invariant under group conjugation. Accordingly, $(HP)_n$ does not generate a quotient of $\mathcal{C}_n$.

The coset space $\mathcal{C}_n \backslash (HP)_n$ partitions $\mathcal{C}_n$ into sets of Clifford circuits which are equivalent up to local action. Consequently, Eq.\ \eqref{NonLocalGroupOrder} provides an upper bound on the number of entropy vectors that can possibly be generated under any $n$-qubit Clifford circuit, for any arbitrary quantum state. This upper bound is equivalently captured by directly building a contracted graph from the $\mathcal{C}_n$ Cayley graph, and counting the number of vertices. The right panel of Figure \ref{FullC2WithContractedGraph} illustrates the $20$-vertex contracted graph of the $\mathcal{C}_2$ Cayley%
\footnote{Formally, the left panel of Figure \ref{FullC2WithContractedGraph} depicts the reachability graph for some set of states, rather than the Cayley graph of $\mathcal{C}_2$. However, since the particular class of states is stabilized by only the identity in $\mathcal{C}_2$, the reachability graph in the left panel of Figure \ref{FullC2WithContractedGraph} is exactly the phase-modded $\mathcal{C}_2$ Cayley graph.} %
graph. Table \ref{tab:EntropicDiversity} gives the explicit number of entropy vectors that can be achieved using $n \leq 5$ qubit Clifford circuits.
\begin{table}[h]
    \centering
    \begin{tabular}{|c||c|}
    \hline
    $n$ & $|\mathcal{C}_n|/|(HP)_n|$\\
    \hline
    \hline
    $1$ & $1$ \\
    \hline
    $2$ & $20$ \\
    \hline
    $3$ & $6720$ \\
    \hline
    $4$ & $36556800$ \\
    \hline
    $5$ & $3191262412800$ \\
    \hline
    \end{tabular}
\caption{Maximum number of entropy vectors that can be generated using elements of the $n$-qubit Clifford group, for $n \leq 5$.}
\label{tab:EntropicDiversity}
\end{table}

In Eq.\ \eqref{NonLocalGroupOrder} we count the right cosets of $\mathcal{C}_n$ by the subgroup of entropy-preserving operations. This upper bound equivalently constrains the number of entropy vectors which can be realized by a generic quantum state, stabilized by only $\mathbb{1} \in \mathcal{C}_n$, under any Clifford circuit. However, we can tighten this bound for states which are non-trivially stabilized by some subset of $\mathcal{C}_n$. For a state $\ket{\psi}$ with stabilizer group $\mathcal{S}_{\mathcal{C}_n}(\ket{\psi})$, the number of achievable entropy vectors is bounded by the size of the double coset space $(HP)_n \backslash \mathcal{C}_n/\mathcal{S}_{\mathcal{C}_n}(\ket{\psi})$. As by Eq.\ \eqref{DoubleCosetOrder}, the size of $(HP)_n \backslash \mathcal{C}_n/\mathcal{S}_{\mathcal{C}_n}(\ket{\psi})$ is
\begin{equation}\label{GeneralContractedGraphOrder}
    |(HP)_n \backslash \mathcal{C}_n/\mathcal{S}_{\mathcal{C}_n}(\ket{\psi})| = \frac{1}{|(HP)_n||\mathcal{S}_{\mathcal{C}_n}(\ket{\psi})|} \sum_{(h,s) \in (HP)_n \times \mathcal{S}_{\mathcal{C}_n}(\ket{\psi})} |\mathcal{C}_n^{(h,s)}|, 
\end{equation}
where $\mathcal{C}_n^{(h,s)}$ is defined by Eq.\ \eqref{DoubleCosetDefinition}.

Applying Eq.\ \eqref{GeneralContractedGraphOrder} when $\ket{\psi}$ is a stabilizer state dramatically reduces the number of possible entropy vectors that can be reached under $\mathcal{C}_n$. Specifically, when restricting to group action by $\HC$, Eq.\ \eqref{GeneralContractedGraphOrder} computes the vertex count for each of the five contracted graphs shown in Figures \ref{G24WithContractedGraph} -- \ref{FullC2WithContractedGraph}.

In this subsection we provided an upper bound on the number of entropy vectors that can be generated by any Clifford circuit, at arbitrary qubit number. For a generic quantum state, we showed that the number of possible entropy vectors is bounded by the size of the right coset space $\mathcal{C}_n \backslash (HP)_n$. Alternatively, for states stabilized by additional elements in $\mathcal{C}_n$, the number of possible entropy vectors is bounded by the size of the double coset space $(HP)_n \backslash \mathcal{C}_n/\mathcal{S}_{\mathcal{C}_n}(\ket{\psi})$.

\section{From Entropic Diversity to Holographic Interpretation}\label{sec:diversity}

The contracted graphs in Section \ref{ContractedGraphsSection} illustrate the diversity of entropy vectors on $\HC$ and $\mathcal{C}_2$ reachability graphs. We now analyze this entropic diversity as we move towards a holographic interpretation of our results. We begin by considering the maximum number of different entropy vectors that can be found on each of the $\HC$ and $\mathcal{C}_2$ graphs studied in the above section, as well as the minimum number of qubits needed to realize that maximal diversity. We explore the implications of entropic diversity and graph diameter as constraining the transformations of a geometric gravitational dual in holography. We then present the number of $\HC$ subgraphs, including isomorphic subgraphs with different entropic diversities, as we increase qubit number. We remark how our contracted graphs encode information about entropy vector evolution through entropy space.

\subsection{Clifford Gates in Holography}

The AdS/CFT conjecture \cite{Maldacena:1997re} is a bulk/boundary duality which relates gravitational objects in an asymptotically hyperbolic spacetime, evaluated at some fixed timeslice $\Sigma$, with computable properties of a quantum-mechanical system on the boundary of that spacetime $\partial \Sigma$. For a special class of quantum states known as holographic states, the Ryu-Takayanagi formula relates all components of the state's entropy vector to areas of extremal surfaces in the dual gravity theory \cite{Ryu:2006bv,Faulkner:2013ana}. In this way, a description of the spacetime geometry in $\Sigma$ is inherited from knowledge of the entanglement structure on $\partial \Sigma$. For this relation to hold, holographic states are required to have an entropy vector structure which satisfies a set of holographic entropy inequalities \cite{Bao2015,HernandezCuenca2019}. One holographic inequality, the monogamy of mutual information (MMI) \cite{Hayden2013}, reads
\begin{equation}\label{MMI}
    S_{AB} + S_{AC} + S_{BC} \geq S_{A} + S_{B} + S_{C} + S_{ABC},
\end{equation}
and must be satisfied for all%
\footnote{It is important to note that each $A,B,C \subseteq \partial \Sigma$ may separately correspond to the disjoint union of multiple qubits in the $n$-party boundary theory. Accordingly, the MMI inequality in Eq.\ \eqref{MMI} must hold for disjoint triples $\{A,B,C\}$, as well as those of the form $\{AB,C,DE\}$ or $\{ABC,DE,F\}$, and so on. Furthermore, holographic states must saturate or satisfy MMI for all permutations among any chosen $A,B,C \subseteq \partial \Sigma$.} %
$A,B,C \subseteq \partial \Sigma$. While MMI constitutes only one of many holographic entropy inequalities, it arises at four qubits, while all other holographic inequalities require more parties.

Understanding the entropy-vector dynamics of a state in $\partial \Sigma$ gives insight into bulk geometric transformations in $\Sigma$. When a local operator acts on $\ket{\psi}$ and modifies its entropy vector to another vector within the holographic entropy cone, geodesics in the dual spacetime geometry are likewise modified in accordance with the RT formula. Consequently, analyzing how a group of operators transforms the entropy vector of a state can reveal how gate action on $\partial \Sigma$ alters geometries in $\Sigma$. When a sequence of Clifford gates causes the state to violate holographic inequalities, the geometry may be only a semi-classical approximation.

The distance between vertices on reachability graphs encodes a natural notion of circuit complexity. Entropy vectors which populate the same reachability graph, e.g.\ under $\HC$ or $\mathcal{C}_2$, may be considered close in the sense that a limited number of gate applications is required to transform a state with one entropy vector into some state with another. The gravitational dual geometries of states with ``nearby'' entropy vectors may be considered close in a similar sense, since a small number of manipulations are needed to transform one dual geometry into each other. 

Some $n$-qubit stabilizer states have entropy vectors which violate the holographic entropy inequalities, beginning at $n=4$. Since stabilizer entanglement is generated by bi-local gates, $2$-qubit Clifford operations are sufficient to generate all stabilizer entropy vectors in an $n$-qubit system. We can therefore explore the transition from holographic entropy vectors to non-holographic stabilizer entropy vectors by observing entropy vector evolution under $\mathcal{C}_2$. In the following subsections we discuss how entropic diversity on $\HC$ and $\mathcal{C}_2$ reachability graphs can inform us about states which are geometrically close, and not so close, in the dual gravitational theory.

\subsection{Maximal Entropic Diversity for Stabilizer States}

Each $\HC$ and $\mathcal{C}_2$ reachability graph describes the full orbit of some state $\ket{\psi} \in \Hil$ under the action of $\HC$ or $\mathcal{C}_2$ respectively. While we can construct reachability graphs for an arbitrary $n$-qubit quantum state, including states with arbitrary entanglement structure, the set of possible entropy vectors that can be reached under $\HC$ and $\mathcal{C}_2$ remains bounded at the operator level. For a given reachability graph, we refer to the maximum number of possible entropy vectors that can be generated in that graph as the maximal entropic diversity of the graph.

Table \ref{tab:MaximalHCColoringTable} gives each stabilizer state $\HC$ reachability graph, and the maximal entropic diversity determined by its contracted graph. For certain subgraphs, such as $g_{144},\,g_{288},$ and $g_{1152}$, the number of qubits needed to realize the maximal entropic diversity is higher than the number of qubits at which each graph first appears.
\begin{table}[h]
    \centering
    \begin{tabular}{|c||c|c|c|}
    \hline
    $\HC$ Graph & \multicolumn{1}{|p{2.6cm}|}{\centering Max Entropic \\ Diversity} & \multicolumn{1}{|p{2.7cm}|}{\centering Stab. Qubit\\ Num. Appears} & \multicolumn{1}{|p{3.5cm}|}{\centering Stab. Qubit Num.\\ Max Diversity}\\
    \hline
    \hline
    $g_{24}$ & $2$  & $2$ & $2$ \\
    \hline
    $g_{36}$ & $2$  & $2$ & $2$ \\
    \hline
    $g_{144}$ & $5$  & $3$ & $6$ \\
    \hline
    $g_{288}$ & $5$  & $3$ & $6$ \\
    \hline
    $g_{1152}$ & $18$  & $4$ & $7$ or $8$ \\
    \hline
    \end{tabular}
\caption{Stabilizer state $\HC$ graphs listed alongside their maximal entropic diversities, set by contracted graphs. We give the qubit number when each graph is first observed for stabilizer states, and the minimum qubit number needed to realize the maximal entropic diversity for stabilizer states. We have found $g_{1152}$ graphs with maximal diversity for $8$-qubit stabilizer states, but have not completely ruled out a maximally diverse $g_{1152}$ graph at $7$ qubits since an exhaustive search is computationally difficult.}
\label{tab:MaximalHCColoringTable}
\end{table}

The entropy vectors on $g_{24}$ and $g_{36}$ correspond to maximal and minimal $2$-qubit entanglement, and can therefore be achieved by entangling only $2$ qubits in an $n$-party system. These two entropy vectors are close in the sense that they are connected by a single $C_{1,2}$ action.  Since this single gate acts on only $2$ out of the $n$ qubits, we expect states with these entropy vectors to admit close dual (possibly semi-classical) geometries. Analogously, altering only small segments of the boundary of a holographic state will affect its geometry only inside the entanglement wedge of the union of these segments. 

For larger reachability graphs, the graph diameter upper bounds the $\HC$ gate distance, and thus the geometric closeness, of the included entropy vectors. In particular, $g_{1152}$ is the $\HC$ reachability graph for generic quantum states, and its maximal entropic diversity gives an upper bound on the number of distinct entropy vectors, and thus the number of distinct semi-classical geometries, reachable under $\HC$ action. 

We additionally compile the entropic diversity data for all stabilizer state $\mathcal{C}_2$ reachability graphs. As shown throughout Section \ref{ContractedGraphsSection}, every $\mathcal{C}_2$ graph is a complex of $\HC$ subgraphs attached by $P_1$ and $P_2$ edges. Table \ref{tab:MaximalC2ColoringTable} lists the different $\mathcal{C}_2$ complexes, and the maximal entropic diversity of each.
\begin{table}[h]
    \centering
    \begin{tabular}{|c||c|c|c|}
    \hline
    $\mathcal{C}_2$ Graph & \multicolumn{1}{|p{2.6cm}|}{\centering Max Entropic \\ Diversity} & \multicolumn{1}{|p{2.7cm}|}{\centering Stab. Qubit\\ Num. Appears} & \multicolumn{1}{|p{3.5cm}|}{\centering Stab. Qubit Num.\\ Max Diversity}\\
    \hline
    \hline
    $g_{24} + g_{36}$ & $2$  & $2$ & $2$\\
    \hline
    $3 \cdot g_{144}+g_{288}$ & $5$  & $3$ & $6$\\
    \hline
    $10 \cdot g_{1152}$ & $20$ & $4$ & $7$ or $8$\\
    \hline
    \end{tabular}
\caption{Each stabilizer state $\mathcal{C}_2$ graph, built of attached $\HC$ subgraphs. Each graph is listed alongside its maximal entropic diversity, set by its contracted graph. We give the first time each graph appears as a stabilizer state orbit, and the first time each graph achieves maximal entropic diversity for stabilizer states.}
\label{tab:MaximalC2ColoringTable}
\end{table}

The addition of $P_1$ and $P_2$ enables two more entropy vectors to be reached by states in a $g_{1152}$ subgraph. Although this section has so far concentrated on the stabilizer states, the $10 \cdot g_{1152}$ $\mathcal{C}_2$ complex is actually the generic reachability graph for arbitrary quantum states, which are not stabilized by any non-identity element of a given two-qubit Clifford group. Accordingly, the $20$ entropy vectors in this complex constrain the possible unique entropy vectors that can be generated by starting with a generic quantum state and acting with $2$-qubit Clifford operations.

In this subsection we provided Tables \ref{tab:MaximalHCColoringTable}--\ref{tab:MaximalC2ColoringTable} which detailed the maximal entropic diversity of each stabilizer state $\HC$ and $\mathcal{C}_2$ reachability graph. Additionally, we provided the minimal system size needed to realize each maximal entropic diversity in a stabilizer state orbit. Note that for other quantum states with the same reachability graphs, maximal entropic diversity could be achieved at lower qubit numbers. We speculated that the maximal entropic diversity of reachability graphs constrains the available transformations, and that the graph diameter constrains the dissimilarity, of the dual geometries that can be generated from $\HC$, or $\mathcal{C}_2$, action on the boundary state. In the next subsection we analyze the number, and diversity, of stabilizer state reachability graphs as the number of qubits in the system increases.

\subsection{$\mathcal{C}_2$ Subgraph Count by Qubit Number}

The number of times each stabilizer state $\mathcal{C}_2$ reachability graph in Section \ref{ContractedGraphsSection} occurs in the set of $n$-qubit stabilizer states increases with every qubit added to the system. Furthermore, as we increase qubit number we observe different entropic diversities which are possible on $\mathcal{C}_2$ reachability graphs. Table \ref{tab:SubgraphCountTable} gives a count for each variety of stabilizer state $\mathcal{C}_2$ graph, with increasing qubit number, for $n \leq 5$ qubits.
\begin{table}[h]
    \centering
    \begin{tabular}{|c||c|c|c|}
     \hline
     & \multicolumn{3}{|c|}{$\mathcal{C}_2$ Graph} \\
    \hline
    Qubit \# & $g_{24}/g_{36}$ & $g_{144}/g_{288}$ & $g_{1152}$\\
    \hline
    \hline
    2 & 1 (2) & 0  & 0\\
    \hline
    3 & 6 (2) & 1 (3) & 0\\
    \hline
    4 & 60 (2) & 12 (3), 18 (4) & 1 (2), 9 (4)\\
    \hline
    5 & 1080 (2) & 180 (3), 1080 (4) & 18 (2), 216 (4), 486 (6), 540 (7)\\
    \hline
    \end{tabular}
\caption{Distribution of stabilizer state $\mathcal{C}_2$ reachability graphs, and their different entropic diversities, for $n \leq 5$ qubits. The first number in each cell gives the number of occurrences for each $\mathcal{C}_2$ subgraph, while the number in parentheses gives the entropic diversity of each subgraph variation.}
\label{tab:SubgraphCountTable}
\end{table}

The overall count of each $\mathcal{C}_2$ subgraph increases as the size of the system grows. Graph $g_{1152}$ however, shown in the final column of Table \ref{tab:SubgraphCountTable}, has an occurrence count which increases the fastest with qubit number. As expected, when the system size grows large the percentage of states stabilized by any non-identity $2$-qubit Clifford subgroup decreases. 

Subgraphs $g_{144}/g_{288}$ can have an entropic diversity of $3,\,4,$ or $5$, while states in a $g_{1152}$ $\mathcal{C}_2$ complex can reach up to $20$ different entropy vectors. As qubit number increases the number of entanglement possibilities grows, yielding more complex entropy vectors. Entropy vectors with sufficient complexity will change the maximal number of allowed times under $\mathcal{C}_2$ action. We therefore expect the number of $g_{144}/g_{288}$ graphs with $5$ entropy vectors, and $g_{1152}$ $\mathcal{C}_2$ graphs with $20$ entropy vectors, to dominate the subgraph occurrence count in the large system limit. For larger subgraphs, e.g.\ those composed of $g_{1152}$ subgraphs, understanding the precise distribution of entropic diversity for arbitrary qubit number presents a challenging problem, which we leave for future work. We now conclude this section with a discussion of Dicke state entropic diversity in $\HC$ and $\mathcal{C}_2$ reachability graphs.

\subsection{Maximum Entropic Diversity for Dicke States}

We now analyze the entropic diversity of the Dicke state $\ket{D^n_k}$ reachability graphs in Section \ref{DickeStateSubsection}. Subgraphs $g_{288^*}$ and $g_{576}$ correspond to the two possible $\ket{D^n_k}$ orbits under $\HC$ action, shown in Figures \ref{WStateG288WithContractedGraph}--\ref{G576WithContractedGraph}. Under the full action of $\mathcal{C}_2$, $P_1$ and $P_2$ edges attach copies of $g_{288^*}, g_{576},$ and $g_{1152}$ together, creating the graph complexes seen in Figures \ref{PhaseConnectedWG288ContractedGraph}--\ref{PhaseConnectedG576ContractedGraph}. In Table \ref{tab:MaximalDiversityTableDickeStates} we present the maximal entropic diversity of each $\ket{D^n_k}$ $\HC$ and $\mathcal{C}_2$ reachability graph, as determined by their contracted graphs.
\begin{table}[h]
    \centering
    \begin{tabular}{|c||c|c|c|}
    \hline
    Graph & \multicolumn{1}{|p{3cm}|}{\centering Max Entropic \\ Diversity} & \multicolumn{1}{|p{3cm}|}{\centering First Appears \\ for $\ket{D^n_k}$} & \multicolumn{1}{|p{3cm}|}{\centering Max Diversity\\ for $\ket{D^n_k}$}\\
    \hline
    \hline
    $g_{288^*}$ & $5$  & $3$ & $5$ \\
    \hline
    $2 \cdot g_{288^*} + 2 \cdot g_{576} + g_{1152}$ & $6$  & $3$ & $5$\\
    \hline
    $g_{576}$ & $9$ & $4$ & $6$ \\
    \hline
    $4 \cdot g_{576} + 3 \cdot g_{1152}$ & $10$ & $4$ & $6$\\
    \hline
    \end{tabular}
\caption{All $\HC$ reachability graphs (rows $1$ and $3$) and $\mathcal{C}_2$ reachability graphs (rows $2$ and $4$) for Dicke states. We give the maximal entropic diversity of each graph, as set by the contracted graph, as well as the first time the graph appears for Dicke states and the largest entropic diversity achieved among $\ket{D^n_k}$ states. For $\mathcal{C}_2$ graphs in particular, we never observe a $\ket{D^n_k}$ orbit that achieves the maximum number of allowed entropy vectors.}
\label{tab:MaximalDiversityTableDickeStates}
\end{table}

Both $\mathcal{C}_2$ reachability graphs in Table \ref{tab:MaximalDiversityTableDickeStates} do not achieve their maximal entropic diversities as orbits of Dicke states. We expect that a state with sufficiently general entanglement structure, which also shares one of these reachability graphs%
\footnote{Recall that the reachability graphs in Table \ref{tab:MaximalDiversityTableDickeStates} are shared by all states with stabilizer group given by Eqs. \eqref{WStateStabGroup} or \eqref{AllOtherDStabilizer}, and are not restricted to $\ket{D^n_k}$ orbits. Since the entropy vector is a state property, the state structure determines entropy vector complexity and therefore how much an entropy vector can change under some group action.}%
, would realize the maximum allowed number of distinct entropy vectors, though we have not shown this explicitly. In Section \ref{sec:discussion} we speculate on the highly symmetric structure of $\ket{D^n_k}$ entropy vectors as a potential cause for the maximal diversity not being achieved in such graphs.

In this section we analyzed the entropic diversity of reachability graphs studied throughout Section \ref{ContractedGraphsSection}. We detailed each reachability graph achieves its maximal entropic diversity, and speculate implications for the geometric interpretations of state entropy vectors in a dual gravity theory. We demonstrated how certain $\HC$ and $\mathcal{C}_2$ subgraphs appear more frequently with increasing qubit number, as well as how different entropic variations of each subgraph are distributed when the system size grows large. We addressed the notable case of Dicke state reachability graphs, which do not achieve their maximal entropic diversity as orbits of $\ket{D^n_k}$. We will now conclude this work with an overview of our results and some ideas for future research.

\section{Discussion and Future Work}\label{sec:discussion}

In this work we presented a procedure for quotienting a reachability graph to a contracted graph, which allowed us to analyze and bound entropy vector evolution under group action on a Hilbert space. We first constructed a reachability graph, built as a quotient of the group Cayley graph \cite{Keeler:2023xcx}, for a family of states defined by their stabilizer subgroup under the chosen group action. As a group-theoretic object, the vertex set of a reachability graph is the left coset space generated by the stabilizer subgroup for the family of states. We then further quotiented this reachability graph by identifying all vertices connected by edges that preserve the entropy vector of a state. This second graph quotient corresponds to the right coset space generated by the subgroup of elements which leave an entropy vector invariant. The resultant object, after both graph quotients, is a contracted graph. This contracted graph represents the double coset space built of group elements which simultaneously stabilize a family of states, and do not modify an entropy vector.

A contracted graph encodes the evolution of a state entropy vector under group action. Specifically, the number of vertices in a contracted graph strictly bounds the maximal number of distinct entropy vectors that can be found on a reachability graph. The edges of a contracted graph detail the possible changes an entropy vector can undergo through circuits composed of the group generating set. We built contracted graphs for all stabilizer states under the action of $\HC$ and $\mathcal{C}_2$, and demonstrated how the vertex count of each explains the reachability graph entropy distributions observed in our previous work \cite{Keeler2022,Keeler:2023xcx}.

Although we did derive a general upper bound on the number of different entropy vectors that can be reached using any $n$-qubit Clifford circuit starting from an arbitrary quantum state, much of our work focused on $\mathcal{C}_2$ contracted graphs.  However, we could use the same techniques to extend our analysis to $\mathcal{C}_n$, for $n \geq 3$, increasing our generating gate set for additional qubits. In fact, a presentation for $\mathcal{C}_n$ is proposed in \cite{Selinger2013}, using Clifford relations up through $3$ qubits. Understanding precisely how contracted graphs scale with qubit number might offer tighter constraints on achievable entropy vectors in $\mathcal{C}_n$ circuits, and enable us to study more general entropy vector transformations. In AdS/CFT, we only expect systems with arbitrarily large numbers of qubits to be dual to smooth classical qubits.  Consequently, an improved understanding of large-qubit-number contracted graph behavior would strengthen the connection to previous holographic entropy cone work, and could even yield insights for spacetime reconstruction efforts.

While our work in this paper has focused on Clifford circuits, the contracted graph protocol can be applied equally to circuits composed of alternative gate sets (for example, generators of crystal-like subgroups of $SU(N)$ such as $\mathbb{BT}$ \cite{Gustafson:2022xdt}). When the chosen gate set generates a finite group of operators, the associated Cayley graph will be finite, as will any graph quotients. For all such cases, a contracted graph analysis follows exactly as in Section \ref{ContractedGraphsSection}, and can be used accordingly to bound entropy vector evolution in different circuit architectures. By exploring different circuit constructions, we can precisely tune our analysis to focus on operations which may be preferred for specific experiments, e.g. arbitrary rotation gate circuits, constructions which replace multiple CNOT gates with Toffoli gates, and architectures that deliberately avoid gates which are noisy to implement.

Alternatively, if the chosen gate set is finite, but generates an infinite set of operators, we can impose a cutoff at some arbitrary fixed circuit depth. This cutoff truncates the associated Cayley graph, and enables an extension of our methods toward the study of universal quantum circuits up to finite circuit depth. Even without an imposed cutoff, we could use our graph analysis to establish bounds on the rate of entanglement entropy per gate application. This description is reminiscent of the notion of entanglement ``velocity'' in universal quantum circuits \cite{Couch:2019zni,Munizzi_2022}.

Although we were originally interested in entropy vector evolution under some chosen gate set, our techniques are sufficiently general to study the evolution of any state property (see footnote \ref{fn:state_property}). Of immediate interest, for example, is the amount of distillable quantum magic present in a state \cite{Bravyi2004,Bao:2022mkc}, and how this particular measure of non-stabilizerness changes throughout a quantum circuit. Since magic is preserved up to Clifford group action, one subgroup which leaves the amount of magic in a state invariant is exactly the set $\mathcal{C}_n$. 

In Section \ref{sec:diversity}, we analyzed the maximal entropic diversity of reachability graphs. A reachability graph has maximal entropic diversity when it realizes the maximum number of possible entropy vectors permitted by its contracted graph. We analyzed at which qubit number each $\HC$ and $\mathcal{C}_2$ reachability graph achieves maximal entropic diversity for stabilizer states, and remarked on the growth of entropic diversity with increasing qubit number. 

Since contracted graphs are defined at the operator level, we are also able to extend our analysis to non-stabilizer states.  In this paper, we generated all contracted graphs under $\HC$ and $\mathcal{C}_2$ for $n$-qubit Dicke states, a class of non-stabilizer states heavily utilized in optimization algorithms \cite{Cerezo:2020jpv,Niroula:2022wvn}. For these states, we derived an upper bound on the number of different entropy vectors that can exist in Dicke state $\HC$ and $\mathcal{C}_2$ reachability graphs. Interestingly, we have not observed $\mathcal{C}_2$ graphs achieve a maximal entropic diversity for Dicke states (see Figures \ref{PhaseConnectedWG288ContractedGraph}--\ref{PhaseConnectedG576ContractedGraph}). The contracted graphs of $g_{288^*}$ and $g_{576}$ permit $6$ and $10$ unique entropy vectors respectively, but we have only ever witnessed $5$ and $9$ entropy vectors for Dicke states with these graphs. We suspect the reason no Dicke state orbit attains its permitted maximal entropy diversity is due to additional $\mathcal{C}_2$ elements which stabilize specifically the highly symmetric entropy vectors of Dicke states \cite{Schnitzer:2022exe,Munizzi:2023ihc}.

In the body of this work, we connected our analysis of entropic diversity to the holographic framework, where entropy vectors admit a description as geometric objects in a dual gravity theory. We used our entropic diversity results to speculate about constraints on geometric transformations in the dual gravity theory, for states which are holographic or near-holographic. We interpret a contracted graph as a coarse-grained map of an entropy vector's trajectory, through entropy space, under a set of quantum gates. Thus, contracted graphs provide information about moving in entropy space, and thereby moving between different entropy cones. 

In future work, we plan to study precisely which Clifford operations move a holographic entropy vector out of, and back into, the holographic entropy cone. Furthermore, we will explore Clifford circuits that transition a stabilizer entropy vector from satisfying, to saturating, to failing holographic entropy conditions, particularly including the monogamy of mutual information (MMI).  We plan to concentrate on MMI since every explicit stabilizer state we have checked either satisfies all holographic inequalities, or violates at least one MMI condition. While \emph{a priori} we have no reason to expect that all stabilizer states which are not holographic necessarily violate MMI in particular, in practice we observe this to be the case empirically for $n \leq 6$ qubits.

\textit{The authors thank ChunJun Cao, Zohreh Davoudi, Temple He, Sergio Hernandez-Cuenca, Bharath Sambasivam, Howard Schnitzer, Aaron Szasz, and Claire Zukowski for helpful discussions.  CAK and WM are supported by the U.S. Department of Energy under grant number DE-SC0019470 and by the Heising-Simons Foundation ``Observational Signatures of Quantum Gravity'' collaboration grant 2021-2818. JP is supported by the Simons Foundation through the It from Qubit: Simons Collaboration on Quantum Fields, Gravity, and Information.}

\clearpage
\begin{singlespace}
\printbibliography[heading=subbibliography]
\end{singlespace}

\chapter{GRAVITATIONAL BACK-REACTION IS THE HOLOGRAPHIC DUAL OF MAGIC}\label{Chapter7}

\textit{We study interplay between magic and entanglement in quantum many-body systems. We show that non-local magic which is supported by the quantum correlations is lower bounded by the flatness of entanglement spectrum and upper bounded by the amount of entanglement in the system. We then argue that a smoothed version of non-local magic bounds the hardness of classical simulations for incompressible states. In conformal field theories, we conjecture that the non-local magic should scale linearly with entanglement entropy but sublinearly when an approximation of the state is allowed. We support the conjectures using both analytical arguments based on unitary distillation and numerical data from an Ising CFT. If the CFT has a holographic dual, then we prove that the non-local magic vanishes if and only if there is no gravitational back-reaction. Furthermore, we show that non-local magic approximately equals the rate of change of minimal surface area in response to the change of the tension of cosmic branes in the bulk.}

\section{Introduction}
\label{sec:intro}
Entanglement is an important quantum resource and an integral part of our understanding of quantum many-body physics and quantum gravity, such as topological order \cite{kitaev_topological_2006,levin_detecting_2006,hamma_bipartite_2005}, non-equilibrium dynamics \cite{hosur_chaos_2016,vonkeyserlingk_operator_2018,nahum_operator_2018,skinner_measurementinduced_2019}
, spacetime \cite{VanRaamsdonk:2010pw}, and black holes \cite{Almheiri_2013,Maldacena_2013}.  In the Anti-de~Sitter/Conformal Field Theory (AdS/CFT) correspondence \cite{Maldacena_1999,Witten:1998qj}, entanglement in the CFT is important for emerging spacetime geometry \cite{Czech_Lampros,Czech_2014,Czech_2017,Radon,Bao:2019bib} in the dual gravity theory, e.g. via the Ryu-Takayanagi formula\cite{Ryu_2006,Lewkowycz_2013,Faulkner_2013,Hubeny_2007}. Surprisingly, this connection between geometry and entanglement holds not only for holographic CFTs, but also for more general quantum many-body systems like tensor network toy models, which have been enormously successful in reproducing an analogous Ryu-Takayanagi formula\cite{HarlowRT}, the emergent bulk geometry, and subregion operator reconstruction through quantum error correction\cite{Pastawski_2015,Hayden_2016,Yang_2016,Harris_2018,ABSC,Steinberg_2023}. This is a profound development as it suggests the lessons from holography may also apply beyond the confines of AdS\cite{Jacobson_2016,Cao_2017,Cao_2018}.

However, the entanglement patterns in the tensor network models alone do not capture the full quantum landscape spanned by holography. Despite many recent advances\cite{dong2023holographic,akers2024background,cheng2022random,ABSC,HMERA,Bao:2019}, it is still unclear how gravity can emerge in such models. In particular, neither the holographic stabilizer codes\cite{Pastawski_2015} nor the random tensor networks\cite{Hayden_2016} can fully capture the CFT entanglement spectrum and gravitational back-reaction. Stabilizer tensor networks also fail to capture power-law correlations, robust multi-partite entanglement, and non-trivial area operators\cite{Akers:2019gcv,Hayden:2021gno,nogo}. From a resource-theoretic perspective, what are these tensor network models missing compared to the low energy states in holographic theories? We show in this work that the answer is magic\cite{bravyi_universal_2005,veitch_resource_2014,stabrenyi,bu_stabilizer_2023}, or more precisely, \emph{non-local magic}.

Quantumness comes in two layers: entanglement gives the power of building correlations stronger than classical and violates Bell's inequalities while quantum advantage characterizes the hardness of simulating quantum systems on a classical computer. The latter is distinct from entanglement  --- a task involving a highly entangled system is not always hard to simulate classically as it can be achieved purely using Clifford operations that are classically simulable. 
This notion of classical hardness that constitutes the second layer of quantumness is intimately connected to the amount of non-stabilizerness, also known as magic, in the system. Although magic alone cannot generate the intricate patterns of complexity that are crucial for the  complex behavior in a quantum wave-function, when used in conjunction with Clifford operations, non-stabilizerness ~\cite{bravyi_universal_2005} is both necessary and sufficient in realizing (fault-tolerant) universal quantum computation. Therefore, it is the remaining piece needed for quantum advantage and for simulating holographic conformal field theories.

In addition to being an important resource for fault-tolerant quantum computation\cite{bravyi_universal_2005,veitch_resource_2014} and quantum simulation, pioneering work has established magic as an important ingredient for characterizing quantum many-body systems\cite{white_conformal_2021,sarkar_characterization_2020,liu_manybody_2022,tarabunga_manybody_2023}, such as dynamics \cite{stabrenyi,chaosbymagic,sewell_mana_2022,rattacaso_stabilizer_2023}, quantum phases\cite{leone2023phase,Niroula:2023meg}, quantum circuits~\cite{leone_quantum_2021,oliviero_transitions_2021,bejan2023dynamical},and randomness\cite{vairogs2024extracting}. In the context of holography, \cite{white_conformal_2021,magicising,tarabunga_manybody_2023} showed that magic is abundant in CFTs and is therefore expected to play an important role for reproducing the correct CFT entanglement spectrum, for generating power-law correlations, for building non-trivial area operators in holographic codes, and for reproducing the correct multipartite entanglement in holographic geometries\cite{Hayden:2021gno}\footnote{Although this is not noted by the authors explicitly, it is clear that holographic states require $O(1/{G}_N)$ tripartite entanglement but cannot be predominantly GHZ-type\cite{Nezami:2016zni}. }.  

There are also many questions surrounding the role played by magic. Empirically, the amount of non-stabilizerness or non-Gaussianity\cite{veitch_resource_2014,campbell_catalysis_2011,leone_nonstabilizerness_2023,Hebenstreit_2019,saxena_quantifying_2022,bu_stabilizer_2023,bu2023discrete,weedbrook_gaussian_2012} present in a quantum process appears to correlate with the hardness of classical simulations\cite{zhang2024unconditional}, e.g. in stabilizer and matchgate simulations\cite{aaronson_improved_2004,gottesman1997stabilizer,Jozsa_2008,hebenstreit_computational_2020,bravyi_improved_2016,bravyi_trading_2016,bravyi_simulation_2019} as well as in Monte Carlo sampling\cite{magicMC}.
However, its precise connection with complexity is yet unclear. While it is proposed \cite{white_conformal_2021,nogo} that the replication of the CFT entanglement spectrum and emergent gravity in AdS/CFT requires magic, the specific mechanism through which magic accomplishes this also remains uncertain.
Furthermore, although the amount of magic present in a system can be illuminating all by itself, it is becoming clear the distribution of magic is equally, if not more, important for understanding  non-equilibrium dynamics and entanglement spectrum\cite{flatness}. For example, the amount of magic is generally expected to scale volumetrically with the number of qubits in quantum many-body systems. The tensor product of nonstabilizer states, CFT ground states, and Haar random states all have a high magic density and volume law magic scaling, and yet their physical properties and their usefulness for quantum computation are completely different. Therefore, a more profound understanding of the interplay between  entanglement and magic will shed new light on the structure of quantum matter, quantum information, and gravity.

In this work, we report multiple advances in respond to the above queries. We define non-local magic and offer compelling evidence for how it is connected to the hardness in classically simulating incompressible states. We provide rigorous bounds as well as computable estimates for non-local magic in any quantum system and show that it is lower bounded by the anti-flatness of the entanglement spectrum and upper bounded by various functions of the R\'enyi entropies. When applied to CFTs, we propose a straightforward relationship between magic, entropy, and anti-flatness. For theories with holographic dual, we show that the non-local magic controls the amount of gravitational back-reaction in response to stress energy, and thus critical for the emergence of gravity.

\section{Main results}
In this section, we explain the main results of this paper and lay down informally the setup and strategy of this work. Then, in the following sections, we derive them rigorously. The main goal of this paper is to show that  the non-local magic is responsible for the non-flat entanglement spectrum in a CFT and for the back-reaction in AdS through the AdS-CFT dictionary. 

Since the seminal work of Ryu and Takayanagi~\cite{Ryu_2006}, a number of entries have been added to the AdS-CFT dictionary where one can connect quantum information theoretic quantities on the boundary to geometric quantities in the bulk. Notably, the correspondence can be used to find the holographic dual to functions of the spectrum of a reduced density operator $\psi_A$ in the conformal field theory\cite{Dong:2018lsk}, where $A$ is a subsystem of the CFT. 
The strategy of this work is to find a holographic dual of magic in a state $\psi$  by connecting it to the spectrum of its reduced density operator $\psi_A$.

At the first sight, this may seem like an impossible task. There are two reasons: first of all, the way magic relates to  spectral properties is complicated. The reason is that, as long as  magic {\em only} quantifies the distillability of non-Clifford resources,  convex combinations through a probability distribution $p_i$ cannot create resources. 
However, we may think of magic in a way such that these probabilities are resourceful. In this case, there is a different resource theory of magic that we call $\stab_0$. This is the resource theory of magic established by the null set of stabilizer entropy\cite{stab0-inprep}. We will first therefore first develop this theory by employing as monotones both the trace distance $M_{dist}$ and relative entropy of resource $M_R$. They will both be useful later to establish our results.

The second reason why mapping magic in a spectral quantity is problematic is due to the fact that 
 magic is generally a property of the full state $\psi$, which has trivial spectrum if pure. What has a non-trivial spectrum is the reduced state  $\psi_A$ to a subregion $A$, because of entanglement. Then the question becomes: how can the spectrum of $\psi_A$ give us information on the magic of the full parent state $\psi$? 
 
The answer comes from the remarkable fact that the magic of a state $\psi$ is related to the average deviation from the flat spectrum of the spectrum of the reduced density operator $\psi_A$ through the Clifford orbit \cite{flatness,Keeler:2022ajf}. The Clifford orbit preserves the magic, but entangles the system \cite{Keeler:2023xcx,Munizzi:2023ihc,Keeler:2023shl}, therefore populating the spectrum of the reduced density operator. In fact, there is no need to take this average, as long as the spectrum of the subsystem density operator possesses an entropy obeying volume law. In this case, its flatness (or lack thereof) is enough to probe the magic of the full state. 

Unfortunately, this is not good enough to explore CFT as these states are not hosting volume law for entanglement. In order to exploit the flatness-magic correspondence for a theory that generally has an area-law scaling of entanglement, we must focus on the boundary $\partial A$ between the subregions $A$ and its complement $B$ where most of the entanglement is being mediated. On the Hilbert subspace supported on $\partial A$, the density operator $\psi_{\partial A}$ is well populated and as a consequence, we can compute its magic through the spectrum. This gives rise to the notion of non-local magic. 

The main results of this work are grouped in two parts: (i) Quantum information-theoretic results that rigorously define non-local magic 
 for both the magic measures defined above, namely the trace distance of non-local magic $M_{dist}^{(NL)}$ and the relative entropy of non-local magic $M_{R}^{(NL)}$ 
 and relate them to spectral quantities. In particular, it will play an important role in the notion of {\em anti-flatness} $\mathcal F$\cite{flatness}, that is, a measure of how much the spectrum of a density operator is far from a flat distribution; and (ii) the application of these tools to AdS/CFT by first making precise the relation between entanglement, non-local magic and spectral flatness in a CFT. Then for holographic CFTs, we show that the holographic dual of gravitational back-reaction is indeed non-local magic. 
 
\subsection{Quantum information-theoretic results}

The first result is that, given the bipartition $AB$, for a subsystem $A$ of a quantum state $\psi_{AB}$,  $M_{dist}^{(NL)}$ is lower bounded by the anti-flatness $\mathcal F(\psi_A)$ and upper bounded by the entanglement:
\begin{equation}
		\mathcal{F}(\psi_{ A})/8\le M_{dist}^{(NL)}(\psi_{AB})\le  \sqrt{1-e^{-S_{max}(A)}+e^{S_\infty(A)}\left(1-\frac{e^{\log d \lfloor S_{max}(A)/\log d \rfloor}}{e^{S_{max}(A)}}\right)}
	\end{equation}
where $\lfloor\cdot\rfloor$ is the floor function, $S_{max}(A):=\log \rank{\psi_A}$, $S_{\infty}(A)=\lambda_{\text{max}}(\psi_A)$. For this, we assume the total Hilbert space is a tensor product of qubits (or qudits) with uniform local dimension $d$. 

Second, by using the non-local magic measured by relative entropy $M^{(NL)}_{RS}$, one can find another relationship between magic in a  quantum state $\psi$ and its entanglement: 
\begin{equation}\label{eq:relstab0}
        S_{max}(A)-S(A) \leq  M^{(NL)}_{RS}(\psi_{AB})\leq \log d \lceil   S_{max}(A)/\log d  \rceil,
    \end{equation}
where $\lceil\cdot\rceil$ is the ceiling function. The lower and upper bounds in the above equation are essentially tight for weakly-entangled states. 
\cref{eq:relstab0} has also the advantage of allowing one to find good estimates for $M^{(NL)}_{RS}(\psi_{AB})$ in terms of the Schmidt coefficients of $\psi_{AB}$ (see \cref{estimate_prop}). This is important because non-local magic is otherwise very difficult to calculate.  Moreover, the relative entropy of magic allows us to define the smoothed (non-local) magic as 
\begin{equation}\label{smoothednlmagic}
\sma(\psi_{A\bar{A}}) := \min_{\Vert\chi-\psi_{A\bar{A}\Vert}<\epsilon} M_{RS}^{(NL)} (\chi).
\end{equation}
For a pure state therefore, we obtain the bounds 
\begin{equation}
     S_{max}^{\epsilon}(A)-(1-\epsilon)^{-1}S(A)\leq \sma{\psi_{AB}}\leq \log d \lceil   S^{\epsilon}_{max(A)}/\log d  \rceil,
     \label{eqn:smoothinequality0}
\end{equation}
with the smoothed maximal entropy is defined as $S_{max}^{\epsilon}(A):=\min_{\Vert\chi-\psi_A\Vert<\epsilon} \ln(\rank(\chi))$. Since the lower bound quantifies the compressibility of a state, we show that incompressible states with low entanglement, but high non-local magic, can still be difficult to classically simulate.

Finally for a system of qubits, we use the spectral information $\{\lambda_i\}$ of state $\psi$, along with the magic measure known as stabilizer 2-R\'enyi entropy $\mathcal{M}_2$, to estimate the non-local magic $\mathcal{M}_2(\{\lambda_i\})$. This calculation yields a tighter upper bound on non-local magic than \cref{eq:relstab0}, stating
\begin{equation}
\mathcal{M}_2^{NL}(\psi_{AB})\leq \min\{2S_2(A),4(S_{max}(A)-S_{1/2}(A))\},
\end{equation}
where $S_{n}(A)$ are the R\'enyi-$n$ entropies of $\psi_A$ and $S_{max}$ is the logarithm of Schmidt rank, which is taken to be an integer power of two. 

For each of the above measures, we show that non-local magic vanishes if and only if the entanglement spectrum is flat, see Lemma \ref{lemmaNL}.

\subsection{AdS/CFT results}

We now state the main holographic result of this work. One can use non-local magic to derive an RT-like formula for gravitational back-reaction, defined as the susceptibility of a backreacted surface area $\mathcal{A}$ with respect to the insertion of a cosmic brane with tension $\mathcal{T}$. The first step is connecting back-reaction to spectral quantities. We obtain, 
%
\ba\label{eq:nFoverpurity}
\left.\frac{\partial \mathcal{A}}{\partial \mathcal{T}}\right\vert_{\mathcal{T}=0} \approx -\Big(\frac{4G}{\mathrm{Pur}(\psi_A)}\Big)^2\mathcal{F}(\psi_A),
\ea
where the approximation holds when $S_2(A)-S_3(\psi_A)<1/2$, i.e., for the small subregion $A$ or in the near-flat limit. 
Together with the relation between anti-flatness and non local magic, Theorem \ref{th:magicdist},
we find  
\ba M_{dist}^{(NL)}(\psi_{AB})\ge \frac{1}{8}\Big(\frac{\mathrm{Pur}(\psi_A)}{4G}\Big)^2\left\vert\frac{\partial \mathcal{A}}{\partial \mathcal{T}}\right\vert_{\mathcal{T}=0} \ge \frac{1}{8}\Big(\frac{e^{-\mathcal A/4G}}{4G}\Big)^2\left\vert\frac{\partial \mathcal{A}}{\partial \mathcal{T}}\right\vert_{\mathcal{T}=0} \propto \frac{1}{8}\left|\frac{\partial e^{-2\mathcal A/4G}}{\partial \mathcal{T}} \right|_{\mathcal{T}=0} \ea
the left-hand side is the magic in the CFT side, the right end side of the above equation is a measure of the back-reaction in AdS. As we prove in  \cref{section:brane}, the above equation also implies that back-reaction is non-zero only if non-local magic is non vanishing.  

Further exploiting the structure of entanglement in CFT, (see \cref{eqn:NLMflatness},) we can also obtain a simpler relation that holds more generally without the $S_2(A)-S_3(\psi_A)<1/2$ constraint.
%
\ba\label{branemagicNL0}
\left\vert\frac{\partial \mathcal{A}}{\partial \mathcal{T}}\right\vert_{\mathcal{T}=0} 
\approx \frac{(4G)^2}{\kappa} \mathcal{M}_2^{NL}(\psi_{AB})
\ea
which shows a more direct relation between gravitational back-reaction and non-local magic based on the stabilizer 2-R\'enyi entropy for some constant $\kappa$.

Now for more general CFTs that need not have holographic duals, the above relations continue to hold with suitable substitutions of $\mathcal{T}\rightarrow (n-1)/4Gn$ and $\mathcal{A}/4G\rightarrow \tilde{S}_n$ where $\tilde{S}_n$ is a function of R\'enyi entropy defined by \cite{Dong:2018lsk}. We present compelling evidence that an additive anti-flatness measure is proportional to the amount of non-local magic in the system. We show with analytical arguments and numerical results that the exact non-local magic in the CFT scales as $S(A)$ whereas the smoothed non-local magic scales as $\sqrt{S(A)}$. 
We then conjecture that such relations hold for general CFTs and apply this conjecture to evaluate magic for selected examples in holographic CFT using \cref{eqn:smoothinequality0}. Specifically, we do so for the static thermofield double state, and for non-equilibrium dynamics after local and global quantum quenches. We also examine the magic evolution in a time-evolved wormhole geometry described by a thermal field double state.
\section{Non-local Magic}

\subsection{Magic measures}\label{sec:magicmeasure}
In this section, we introduce several measures of magic that will be central to supporting the claims in this manuscript. In order to properly establish a magic state resource theory, it is essential that we define an initial null set for such a resource theory. To achieve this purpose, we introduce three null sets, which we label as $\pstab$,  $\stab_0$, and $\stab$. Then we derive the free operations on such sets. 

Additionally, we must introduce several useful concepts: the Pauli group, the Clifford group, and the set of stabilizer quantum states. Consider the Hilbert space of single qudit $\mathcal{H} = \mathbb{C}^d $, on which we define the following Pauli operators
\begin{equation}
	X\ket{i}=\ket{i+1} \quad Z\ket{j}=\omega^j\ket{j}, 
	\label{eq:XZ}
\end{equation}
where $\omega\equiv \exp(2 i \pi/d)$. The selection of operators in \eqref{eq:XZ} likewise defines the qudit computational basis $\{\ket{i}\}_i^d$.

The Pauli group $\tilde{\mathcal{P}}$ is defined as follows
\begin{equation}
	\tilde{\mathcal{P}}\equiv \langle \tilde{\omega}\bbbone, X,Z \rangle 
	\label{eq:Pauligroup}
\end{equation}
where $\langle \cdot \rangle$ labels the set generated by $\{\tilde{\omega}\bbbone, X,Z\}$, and $\tilde{\omega}=\omega$ for $d$ odd, and $\tilde{\omega}=\exp[i\pi /d] $ for $d$ even. When the number of qudits is $n$, the Pauli group $\tilde{\mathcal{P}}_n$ is defined as the $n$-fold tensor product of the single qudit Pauli group $\tilde{\mathcal{P}}$. 

The Clifford group $\mathcal{C}(d^n)$ is defined as the normalizer of the Pauli group, meaning that for any $U\in\mathcal{C}(d^n)$ we have $U^{\dagger}\tilde{\mathcal{P}}_n U\equiv\tilde{\mathcal{P}}_n$. The group $\mathcal{C}(d^n)$ is a multiplicative matrix group. For qubits, $d=2$ it  can be generated by the Hadamard, phase, and Controlled-Z quantum gates
\begin{equation}\label{CliffordGates}
   \operatorname{H}\equiv \frac{1}{\sqrt{2}}\begin{bmatrix}1&1\\1&-1\end{bmatrix}, \quad \operatorname{P}\equiv \begin{bmatrix}1&0\\0&i\end{bmatrix}, \quad     \operatorname{CZ} \equiv \begin{bmatrix}
            1 & 0 & 0 & 0\\
            0 & 1 & 0 & 0\\
	    0 & 0 & 1 & 0\\
	    0 & 0 & 0 & -1
            \end{bmatrix}.
\end{equation}
For general $d$ the generators are~\cite{jafarzadeh_randomized_2020} the controlled-$Z$ $\operatorname{CZ}$, the quantum Fourier Transform $\operatorname{F}$ and the phase gate $\operatorname{P}$, whose action of the $d$-computational basis is
\begin{equation}\label{cliffordqutitgates}
    \operatorname{CZ}\ket{ii^{\prime}}:=\omega^{ii^{\prime}}\ket{ii^{\prime}} \quad \operatorname{F}\ket{i}:=\frac{1}{\sqrt{d}}\sum_{i\in\mathbb{Z}_d}\omega^{ii^{\prime}}\ket{i^{\prime}} \quad \operatorname{P}\ket{i}:=\omega^{s(s+\phi_d)/2}\ket{s}
\end{equation}
where $\phi_d=1$ if $d$ is odd, $0$ otherwise. 
Notably, circuits composed of the Clifford gates in \eqref{CliffordGates} can be efficiently simulated on a classical computer \cite{gottesman_heisenberg_1998,aaronson_improved_2004}.

At this point, one can define the notion of stabilizer states for pure states. We first say that a pure state $\ket{\phi}$ is stabilized by $P\in \tilde{\mathcal{P}}_n$ if
$P\ket{\phi}=\ket{\phi}$. Then we define the pure stabilizer states as the set
\ba
\pstab^{(n)}:=\{ \ket{\phi}\bra{\phi}=\frac{1}{|G|}\sum_{P\in G} P | G\subset \tilde{\mathcal{P}}_n, \; G \,\mbox{abelian} \}
\ea
with the cardinality of $G$ is $|G|=d^n$ and $G$ is a group of commuting Pauli operators. Notice that $\pstab^{(n)}$ is the orbit through the Clifford group of any computational basis state for $n$ qudits, i.e., $\pstab^{(n)}= \{ C\ket{i_1\ldots i_n}| C\in\mathcal{C}(d^n)\}$. The notion of pure stabilizer states conveys the fact of a set of resources that is closed under Clifford operations.

For mixed states, the most primitive notion of stabilizer states is that of~\cite{stabrenyi} $\stab_0$, defined as the set of states $\sigma=\frac{1}{d^n}\sum{P\in G}$, where $G$ is a group of commuting Pauli operators (see~\cite{nielsen_quantum_2000}). In \cite{stabrenyi}, $\stab_0^{(n)}$ is introduced as the set of states for which the stabilizer entropy (SE) is zero and  SE is a good monotone for  $\pstab^{(n)}$.
From a more foundational perspective, $\stab_0^{(n)}$ is the set of states that can be purified in $\pstab^{(n)}$ and they can only yield trivial probability distributions, see\cite{stab0-inprep}

When one allows for general probabilities distributions we obtain the convex hull of $\pstab^{(n)}$, namely
 $\stab^{(n)}:=\{\sigma|\sigma=\sum_i p_i \ket{\phi_i}\bra{\phi_i},~ |\phi_i\rangle\in \mathrm{PSTAB}^{(n)}\}$.  Note that $\stab_0^{(n)}\subset\stab^{(n)}$. 
 
The next step in the definition of our measures of magic is to define the free operations of $\stab^{(n)}$ and $\stab_0^{(n)}$. For $\stab^{(n)}$ the free operations are given in~\cite{veitch_resource_2014}, and we list them here for the sake of completeness: 
\begin{enumerate}
	\item Clifford unitaries. $\rho\rightarrow U\rho U^{\dagger}$ with $U\in \mathcal{C}(d^n)$.
	\item Composition with stabilizer states, $\rho \rightarrow \rho \otimes \sigma $ with $\sigma $ a stabilizer state.
	\item Computational basis measurement on the first qudit, $\rho \rightarrow (\st{i}\otimes \bbbone_{n-1}) \rho (\st{i}\otimes \bbbone_{n-1} )/\Tr(\rho \st{i}\otimes \bbbone_{n-1})$ with probability $\Tr(\rho \st{i}\otimes \bbbone_{n-1})$
	\item Partial trace of the first qudit, $\rho \rightarrow \Tr_{1}(\rho)$
	\item The above operations conditioned on the outcomes of measurements or classical randomness. 
\end{enumerate}
It is straightforward to show that operations $1.-4.$ also apply to $\stab_0^{(n)}$ (see~\cref{app:stab0invariance}). However, it's important to note that stabilizer operations conditioned on measurements or classical randomness do not belong to the set of free operations for $\stab_0^{(n)}$. This is an important feature of the $\stab_0^{(n)}$ resource theory as it counts non-flat probabilities as resources. It is the key element to use deviation from flatness as the resource that connects magic in CFT to geometry in AdS.

Given the notion of null sets and free operations, one can then proceed to introduce suitable measures of magic. Let us start by defining the trace distance of magic:
\begin{definition}[Trace distance of magic$_0$]\label{def:mod_tr_dist_mag}
	The trace distance of magic$_0$ of a state $\psi$ is given by:
	\begin{equation}
	M_{\text{dist}}(\psi):=\min_{\sigma\in\stab_0^{(n)}}\frac{1}{2}\left\|\psi-\sigma  \right\|_{1}
  \end{equation} 
\end{definition}
\begin{proposition}
	The trace distance of magic satisfies the following properties: 
 \begin{enumerate}
\item Faithfulness: $M_{\text dist}(\rho)=0$ if and only if $\rho$ is a stabilizer state. 
\item Monotonicity: for all completely positive trace-preserving channels $\xi$ preserving $\stab_0^{(n)}$,  $M_{\text dist}(\xi(\rho))\le M_{\text dist}(\rho)$
\item Subadditivity: $M_{\text dist}(\rho_1\otimes \rho_2)\le M_{\text dist}(\rho_1)+ M_{\text dist}(\rho_2)$
\end{enumerate}
\end{proposition}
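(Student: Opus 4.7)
The plan is to handle each of the three properties in turn, as they follow from standard arguments in the trace-distance-to-a-set framework, with the main effort being to verify that $\stab_0^{(n)}$ has the right structural properties to make these arguments go through.

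For faithfulness, I would first note that if $\rho \in \stab_0^{(n)}$ then choosing $\sigma = \rho$ in the minimization gives $M_{\text{dist}}(\rho) = 0$. For the converse, I need $\stab_0^{(n)}$ to be a closed set so that the infimum is attained by some $\sigma^\ast$; since $\stab_0^{(n)}$ is defined as the set of $\frac{1}{d^n}\sum_{P\in G} P$ for $G$ ranging over abelian subgroups of $\tilde{\mathcal{P}}_n$ of fixed cardinality, and $\tilde{\mathcal{P}}_n$ is finite, the set $\stab_0^{(n)}$ itself is finite, so closedness is automatic. Then $M_{\text{dist}}(\rho) = 0$ forces $\|\rho - \sigma^\ast\|_1 = 0$, hence $\rho = \sigma^\ast \in \stab_0^{(n)}$.

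For monotonicity, the plan is to combine the data-processing inequality for the trace norm, $\|\xi(A)\|_1 \le \|A\|_1$ for any CPTP map $\xi$, with the assumed invariance of $\stab_0^{(n)}$ under $\xi$. Explicitly, if $\sigma^\ast \in \stab_0^{(n)}$ achieves the minimum for $\rho$, then $\xi(\sigma^\ast) \in \stab_0^{(n)}$ by hypothesis, so
\begin{equation*}
M_{\text{dist}}(\xi(\rho)) \;\le\; \tfrac{1}{2}\|\xi(\rho) - \xi(\sigma^\ast)\|_1 \;\le\; \tfrac{1}{2}\|\rho - \sigma^\ast\|_1 \;=\; M_{\text{dist}}(\rho).
\end{equation*}
The only nontrivial ingredient here is the hypothesis that $\xi$ preserves $\stab_0^{(n)}$, which is precisely the content of the listed free operations; this was established in the passage cited as Appendix \ref{app:stab0invariance}, so I would simply invoke it.

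For subadditivity, the key lemma I will need is that $\stab_0^{(n_1)} \otimes \stab_0^{(n_2)} \subseteq \stab_0^{(n_1 + n_2)}$, i.e.\ $\stab_0$ is closed under tensor products. This follows because if $\sigma_i = \frac{1}{d^{n_i}}\sum_{P\in G_i} P$ with $G_i$ abelian subgroups of commuting Paulis, then $\sigma_1 \otimes \sigma_2 = \frac{1}{d^{n_1+n_2}}\sum_{P \in G_1 \otimes G_2} P$, where $G_1 \otimes G_2 = \{P_1 \otimes P_2 : P_1 \in G_1, P_2 \in G_2\}$ is again an abelian subgroup of $\tilde{\mathcal{P}}_{n_1+n_2}$ of cardinality $d^{n_1+n_2}$. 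Choosing $\sigma_i^\ast$ optimal for $\rho_i$ and inserting $\rho_1 \otimes \sigma_2^\ast$ as an intermediary, the triangle inequality combined with $\|A \otimes B\|_1 = \|A\|_1 \|B\|_1$ and the unit trace of states yields
\begin{equation*}
\|\rho_1 \otimes \rho_2 - \sigma_1^\ast \otimes \sigma_2^\ast\|_1 \;\le\; \|\rho_1 - \sigma_1^\ast\|_1 + \|\rho_2 - \sigma_2^\ast\|_1,
\end{equation*}
so that $M_{\text{dist}}(\rho_1 \otimes \rho_2) \le M_{\text{dist}}(\rho_1) + M_{\text{dist}}(\rho_2)$.

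The main obstacle I anticipate is not in any one of the three arguments individually --- each reduces to a one-line application of triangle inequality, DPI, or direct verification --- but rather in confirming the structural properties of $\stab_0^{(n)}$ that these arguments implicitly require: closedness (for faithfulness), invariance under the claimed free operations (for monotonicity), and closure under tensor products (for subadditivity). Since $\stab_0^{(n)}$ is a genuinely new object compared to $\stab^{(n)}$, and the passage stressed that its free operations differ from those of $\stab^{(n)}$, I would be careful to cite or reprove precisely the invariance statement used in monotonicity.
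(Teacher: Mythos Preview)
Your proposal is correct and follows essentially the same approach as the paper: faithfulness via the (finite, hence closed) null set, monotonicity via the data-processing inequality for trace distance combined with invariance of $\stab_0^{(n)}$ under $\xi$, and subadditivity via the triangle inequality with a tensor-product intermediary together with closure of $\stab_0$ under tensor products. Your write-up is in fact slightly more careful than the paper's, which takes finiteness of $\stab_0^{(n)}$ and its tensor-product closure for granted; your explicit verification of these structural facts is exactly the right instinct.
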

\begin{proof} 
	\begin{enumerate}
		\item By definition $M_{dist}(\psi)=0$ if and only $\psi\in\stab_0^{(n)}$, and so $\psi$ is a stabilizer state.
  \item The monotonicity descends from the monotonicity of the trace distance under trace-preserving CP maps. Because given a map $\xi:\stab_0^{(n)}\mapsto\stab_0^{(n^\prime)}$ we have
  \begin{align}
    M_{\text dist}(\xi(\rho))&=\min_{\sigma\in\stab_0^{(n^\prime)}}\frac{1}{2}\norm{\xi(\rho)-\sigma}_1\\&=\min_{\sigma\in\stab_0^{(n^\prime)}}\frac{1}{2}\norm{\xi(\rho-\sigma)}_1\\
    &\le\min_{\sigma\in\xi(\stab_0^{(n)})}\frac{1}{2}\norm{\xi(\rho-\sigma)}_1\\
    &\le\min_{\sigma\in\stab_0^{(n)}}\frac{1}{2}\norm{\rho-\sigma}_1=M_{\text dist}(\rho)
  \end{align}
where we used that $\stab_0^{(n^\prime)}\subseteq \xi(\stab_0^{(n)})$, the proof of the last statement is straightforward. One must observe that since $\xi$ is expressed in terms of stabilizer operations, the only operations that reduce the dimension are partial traces. Therefore, it is evident that since states in $\stab_0^{(n)}$ are mapped to stabilizer states in $\stab_0^{(n^\prime)}$ after a partial trace, the statement must hold true because there are more states whose partial trace returns the same state. 
\item Subadditivity:
\begin{align}
    M_{\text{dist}}(\rho_L)= M_{\text dist}(\rho_1\otimes \rho_2)&=\frac{1}{2}\min_{\sigma\in\stab_0^{(n)}}\norm{\rho_1\otimes\rho_2-\sigma}_1
\\&=\frac{1}{2}\min_{\sigma\in\stab_0^{(n)}}\|\rho_1\otimes\rho_2-\sigma_1\otimes\sigma_2+\sigma_1\otimes\sigma_2-\sigma\|_1
\\&\leq\frac{1}{2}\|\rho_1\otimes\rho_2-\sigma_1\otimes\sigma_2\|_1+\frac{1}{2}\min_{\sigma\in\stab_0^{(n)}}\|\sigma_1\otimes \sigma_2-\sigma\|_1
\\&\leq\frac{1}{2}\|\rho_1\otimes\rho_2+\rho_1\otimes\sigma_2 -\rho_1\otimes\sigma_2 -\sigma_1\otimes\sigma_2\|_1
\\&\leq\frac{1}{2}\|\rho_1\|_1\|\rho_2-\sigma_2\|_1+\frac{1}{2}\|\sigma_2\|_1\|\rho_1-\sigma_1\|_1
\\&\leq\frac{1}{2}\|\rho_2-\sigma_2\|_1+\frac{1}{2}\|\rho_1-\sigma_1\|_1
\end{align}
where we used that $\sigma_1,\sigma_2$ are two stabilizer states, then  $\min_{\sigma\in\stab_0^{(n)}}\norm{\sigma_1\otimes\sigma_2-\sigma}=0$, and the tightest bound is obtained by minimizing over $\sigma_1$ and $\sigma_2$ proving the statement.
	\end{enumerate}
\end{proof}
 One can also define an entropic quantity the  Relative stabilizer entropy of magic:
\begin{definition}[Relative Stabilizer Entropy of Magic]\label{def:rel_stab_magic}
The relative stabilizer entropy of magic of $\rho$ is given by
\begin{align}
    M_{RS}(\rho)= \min_{\sigma \in \stab_0^{(n)}}S(\rho||\sigma)  
 \label{eqn:rel_stab_magic}
\end{align}
\end{definition}
\begin{proposition}
    The relative stabilizer entropy is a magic monotone, i.e., 1. it is zero iff $\rho\in \stab_0^{(n)}$, 2. is invariant under Clifford conjugation, 3. is non-increasing on average under stabilizer measurement, 4. is non-increasing under partial trace and 5. is invariant under stabilizer composition.
\end{proposition}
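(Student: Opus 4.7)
The plan is to derive each of the five properties as a direct consequence of the data-processing inequality for the quantum relative entropy, together with the closure properties of $\stab_0^{(n)}$ under the free operations listed earlier in the text. Properties~1 and 2 are essentially immediate: faithfulness follows from Klein's inequality, which gives $S(\rho\|\sigma)\ge 0$ with equality iff $\rho=\sigma$, so $M_{RS}(\rho)=0$ forces the optimal $\sigma$ to coincide with $\rho$, placing $\rho\in\stab_0^{(n)}$; Clifford invariance follows from the unitary invariance $S(U\rho U^\dagger\|U\sigma U^\dagger)=S(\rho\|\sigma)$ combined with the fact that $\mathcal{C}(d^n)$ permutes $\stab_0^{(n)}$, leaving the minimization set intact.

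For the average monotonicity under a computational-basis measurement on the first qudit (property~3), I would embed the selective measurement into a CPTP channel $\tilde{\mathcal{M}}$ that writes the outcome onto an auxiliary classical register $X$,
\begin{equation*}
\tilde{\mathcal{M}}(\rho)=\sum_i p_i^\rho\,\ket{i}\!\bra{i}^X\otimes \rho_i,
\end{equation*}
and analogously for any $\sigma\in\stab_0^{(n)}$. Applying the data-processing inequality to $\tilde{\mathcal{M}}$ with the $\sigma^{\star}$ that achieves the minimum in $M_{RS}(\rho)$, the output relative entropy splits on the classical-quantum block-diagonal form as a Kullback--Leibler term $D(p^{\rho}\|p^{\sigma^{\star}})\ge 0$ plus the convex combination $\sum_i p_i^{\rho}\, S(\rho_i\|\sigma_i^{\star})$. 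Discarding the non-negative classical term and using that each $\sigma_i^{\star}$ is a valid competitor in the definition of $M_{RS}$ on the post-measurement system yields $M_{RS}(\rho)\ge \sum_i p_i^{\rho}\, M_{RS}(\rho_i)$.

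Properties~4 and 5 reduce to two clean applications of the same principle. For partial trace, the contraction $S(\Tr_1\rho\|\Tr_1\sigma)\le S(\rho\|\sigma)$ combined with $\Tr_1(\stab_0^{(n)})\subseteq \stab_0^{(n-1)}$ yields $M_{RS}(\Tr_1\rho)\le M_{RS}(\rho)$. For invariance under stabilizer composition, whenever $\tau\in\stab_0$ one has $\sigma^{\star}\otimes\tau\in\stab_0$, so $M_{RS}(\rho\otimes\tau)\le S(\rho\otimes\tau\|\sigma^{\star}\otimes\tau)=S(\rho\|\sigma^{\star})=M_{RS}(\rho)$; the reverse inequality is exactly the partial-trace monotonicity applied over the $\tau$ factor.

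The main conceptual obstacle, and the reason the argument is not a verbatim copy of the proof for the standard relative entropy of magic over $\stab^{(n)}$, is that $\stab_0^{(n)}$ is strictly smaller than its convex hull and is \emph{not} closed under classical mixing. What must therefore be checked carefully is that the output of each free operation still belongs to $\stab_0^{(n')}$ on the corresponding output system: each individual post-measurement outcome $\sigma_i^{\star}$ must lie in $\stab_0$, $\Tr_1\sigma^{\star}$ must lie in $\stab_0^{(n-1)}$, and $\sigma^{\star}\otimes\tau$ must lie in $\stab_0^{(n+m)}$. Each of these closure statements is precisely the invariance established in \cref{app:stab0invariance}; without them, the $\sigma$ produced at the output would only be a competitor inside the larger set $\stab^{(n')}$, and the chain of inequalities would collapse from a bound on $M_{RS}$ to the weaker one on $M_R$.
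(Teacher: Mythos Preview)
Your proposal is correct and follows essentially the same route as the paper: Klein's inequality for faithfulness, unitary invariance plus Clifford closure of $\stab_0$ for property~2, Lieb--Ruskai monotonicity for partial trace, and tensor-additivity of relative entropy for composition. For property~3 the paper cites the inequality $\sum_i p_i\, S(\rho_i/p_i\,\|\,\sigma_i/q_i)\le S(\rho\|\sigma)$ from Veitch et al., whereas you derive it directly via the quantum-to-classical channel and the block-diagonal decomposition of the output relative entropy; these are the same argument at different levels of self-containment. Your explicit emphasis that the non-convexity of $\stab_0^{(n)}$ forces one to check closure of $\stab_0$ under each free operation (rather than mere closure of its convex hull) is exactly the point the paper addresses by invoking \cref{app:stab0invariance}, and your handling of property~5 is in fact slightly more complete, since you note that the reverse inequality $M_{RS}(\rho)\le M_{RS}(\rho\otimes\tau)$ requires the partial-trace monotonicity already established.
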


\begin{proof}
The proof is similar to~\cite[Appendix A]{veitch_resource_2014}, where the only difference is the definition of $\stab^{(n)}$. Here we recount for completeness.
\begin{enumerate}
    \item Note that $S(\rho||\sigma)\geq0$ where equality is attained iff $\rho=\sigma$. Hence it only vanishes when $\rho\in \stab_0^{(n)}$, which by our definition is a stabilizer state. 
	
	\item Recall that $\rm STAB_0$ is invariant under Cliffords, therefore for $U\in\mathcal{C}(d^n)$ $$M_{RS}(U\rho U^{\dagger}) = \min_{\sigma\in \stab_0^{(n)}} S(U\rho U^{\dagger}||\sigma)=\min_{\sigma\in \stab_0^{(n)}} S(\rho ||U^{\dagger}\sigma U)=\min_{\sigma\in \stab_0^{(n)}}S(\rho||\sigma).$$ 
	
	\item The action of partial stabilizer measurements of the form $V_i=I\otimes |i\rangle\langle i|$ for some Pauli basis state $|i\rangle$ on $\stab_0$ returns a stabilizer state up to normalization. Using that $p_i=\Tr[\rho V_i], q_i=\Tr[\sigma V_i]$ and $\rho_i=V_i\rho V_i^{\dagger}, \sigma_i=V_i\sigma V_i^{\dagger}$, we can reuse the proof from \cite{veitch_resource_2014} and note that $$\sum_i p_i S\left(\left.\frac{\rho_i}{p_i}\right\Vert\frac{\sigma_i}{q_i}\right)\leq S(\rho||\sigma).$$ The rest follows because $\sigma_i/q_i$ is again a stabilizer state.
    \item By Lieb and Ruskai \cite{lieb_ruskai}, it is shown that quantum relative entropy is non-increasing under partial trace, i.e., $S(\Tr_B(\rho_{AB})||\Tr_B(\sigma_{AB}))\leq S(\rho||\sigma)$. 
    \item It is known that for any state $\tau$, $S(\rho\otimes \tau||\sigma\otimes \tau)=S(\rho||\sigma)$, hence the desired result follows when we take $\tau\in \stab_0^{(n)}$.
\end{enumerate}
\end{proof}
\subsection{(Anti-)Flatness}\label{sec:antiflatness}
Flatness is the property of a quantum state that describes how close its spectrum is to a flat spectrum. From the operational point of view, the flatness of a state describes how flat is the classical probability distribution over a basis of pure states in which we can decompose it. Of course, this does not imply that this state will return a flat probability distribution for the measurements in any other basis. As an example of flat states, both the completely mixed state and pure states possess flat spectrum. Another notable example \cite{flammia_topological_2009} are the ground states of string-net Hamiltonians, e.g. the toric code and its generalizations. 

Flat states are the free states for the resource theory of flatness. We thus define the null set as
\be
\operatorname{FLAT}^{(n)}:=\left\{\sigma\in\mathcal{H}\,|\, \sigma^2= \frac{\sigma}{\rank{\sigma}}\right\}
\ee
Let us now define the following measure of anti-flatness, that is, how far is a spectrum from the flat one. Of course, this quantity must measure the resource defined by $\operatorname{FLAT}^{(n)}$.
\begin{definition}\label{flatnessdist}
 We define the anti-flatness of $\psi_A$ as \cite{flatness}
\begin{equation}\label{flatnessdef}
  \mathcal{F}(\psi_A)=\Tr(\psi_A^3)-\Tr^2(\psi_A^2)
\end{equation}
This quantity is very natural as it can be defined classically as the variance of a probability distribution $p(x)$ according to the probability distribution itself. More concretely, if one defines $\langle x\rangle_p := \sum_x xp(x)$, and one defines $\Delta p^2 := \langle (p-\langle p\rangle_p)^2\rangle_p$, then one has 
\ba
\mathcal{F}(\psi_A)= \Delta \lambda^2
\ea 
with $\{\lambda\}\equiv\mbox{spec} [\psi_A]$.
Of course, this quantity is zero on the flat states, that is,
 $\mathcal{F}(\sigma)=0$ for $\sigma\in\operatorname{FLAT}^{(n)}$ as it is immediate to verify. 
\end{definition}

There is a profound connection between anti-flatness and magic. It connects magic, which is a property of the full state, to bipartite entanglement, and thus to the spectrum of a reduced density operator. In particular, it has been shown that\cite{flatness}, given a pure state $\psi_{AB}$ in a bipartite Hilbert space $\mathcal H = \mathcal H_A\otimes\mathcal H_B$, its linearized stabilizer entropy $M_{lin}$ is the average anti-flatness of $\psi_A$ on the Clifford orbit,  that is,
\begin{equation}\label{flatnessdth1}
\langle\mathcal F(\psi_A^C)\rangle_C =f(d_A,d_B) M_{lin} (\psi_{AB})
\end{equation}
where $\psi^C_A = \Tr_B \psi_{AB}^C \equiv \Tr_B (C\psi_{AB} C^{\dag})$. It is also true that anti-flatness shows typicality. Later, we will use this property to connect magic to spectral properties. The main message of \cref{flatnessdth1} is that, as long as the state $\psi$ is very entangled, and therefore $\psi_A$ is full rank, one can use the spectral quantity $\mathcal F(\psi_A)$ to probe magic. Note that - by definition - every density matrix is full rank on its support. 
This will come in handy in the next section.

 It is possible to define another monotone for the resource theory of flatness through the quantum relative entropy,
\be
\mathcal{F}_R(\rho)=\min_{\sigma \in \mathrm{FLAT}^{(n)}}S(\rho\Vert\sigma).
\ee
One can prove the following proposition
\begin{proposition}\label{prop:qrf}
Given a state $\rho\in\mathcal H$, it holds that
\begin{equation}
    \mathcal F_R(\rho)= S_{max}(\rho)-S(\rho)
\end{equation}  
\end{proposition}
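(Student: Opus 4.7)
The plan is to directly evaluate the minimization defining $\mathcal{F}_R(\rho)$ by exploiting the very restrictive form of flat states. First I would parametrize $\mathrm{FLAT}^{(n)}$ explicitly: any $\sigma$ satisfying $\sigma^2 = \sigma/\rank(\sigma)$ has all nonzero eigenvalues equal, so it can be written as $\sigma = \frac{1}{r}\Pi_\sigma$ where $\Pi_\sigma$ is a rank-$r$ orthogonal projector and $r = \rank(\sigma)$. This reduces the optimization to a choice of subspace (the support of $\sigma$) together with its dimension.

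Next I would plug this parametrization into the relative entropy. Using the standard decomposition
\begin{equation}
S(\rho\Vert\sigma) = -S(\rho) - \Tr(\rho \log \sigma),
\end{equation}
and noting that $\log\sigma = -\log(r)\,\Pi_\sigma$ on $\supp(\sigma)$, one gets $\Tr(\rho \log\sigma) = -\log(r)\,\Tr(\rho \Pi_\sigma)$ whenever $\supp(\rho)\subseteq\supp(\sigma)$, and $S(\rho\Vert\sigma) = +\infty$ otherwise. When the support condition holds, $\Tr(\rho\Pi_\sigma)=1$, so
\begin{equation}
S(\rho\Vert\sigma) = \log r - S(\rho).
\end{equation}

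The third step is the minimization itself. Since the expression depends on $\sigma$ only through $r = \rank(\sigma)$, and since finiteness forces $r \geq \rank(\rho)$, the infimum is attained by choosing $r = \rank(\rho)$ and $\Pi_\sigma = \Pi_\rho$ (the projector onto $\supp(\rho)$). This gives the optimizer $\sigma^\star = \Pi_\rho/\rank(\rho) \in \mathrm{FLAT}^{(n)}$, yielding
\begin{equation}
\mathcal{F}_R(\rho) = \log\rank(\rho) - S(\rho) = S_{max}(\rho) - S(\rho),
\end{equation}
as claimed. I would also briefly verify that $\sigma^\star$ indeed lies in $\mathrm{FLAT}^{(n)}$, which is immediate from $(\sigma^\star)^2 = \Pi_\rho/\rank(\rho)^2 = \sigma^\star/\rank(\sigma^\star)$.

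There is no substantive obstacle here; the result is essentially a direct computation once the support-containment condition is identified. The only subtlety worth flagging in the write-up is the handling of the infinite value of $S(\rho\Vert\sigma)$ when $\supp(\rho)\not\subseteq\supp(\sigma)$, which forces the minimizer to have support exactly equal to $\supp(\rho)$ and thereby pins down both the rank and the projector simultaneously.
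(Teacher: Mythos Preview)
Your proof is correct and follows essentially the same approach as the paper: parametrize $\mathrm{FLAT}^{(n)}$ as normalized projectors, reduce $S(\rho\Vert\sigma)$ to $\log r - S(\rho)$ under the support-containment condition, and minimize over $r$. Your treatment of the support constraint is in fact slightly cleaner than the paper's, which establishes the same point via a contradiction argument on the rank of $\rho$.
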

See \cref{qrf} for a proof. 
Note that $\mathrm{FLAT}^{(n)} \supset \stab_0^{(n)}$ where $\stab_0^{(n)}$ is the set of states with zero stabilizer R\'enyi entropy, hence $\min_{\sigma\in \stab_0^{(n)}}S(\rho||\sigma)\geq F_R(\rho)$, therefore the flatness lower bounds the total subregion magic for any state. 
The same would not be true if $\stab^{(n)}$ is the usual stabilizer polytope, because it overlaps with $\mathrm{FLAT}^{(n)}$ but is not a subset as one can take a classical mixture of it such that the eigenvalues of $\rho$ are not equal (or zero).

Finally, let us define yet another flatness that will be natural for holography. Recall from \cite{Dong1} that a variant of the R\'enyi entropy is given by,

\begin{equation}
\tilde{S}_n(\rho)=n^2\partial_n\left(\frac{n-1}{n}S_n(\rho)\right)
    =-n^2\partial_n(\frac{\log\Tr(\rho^n)}{n}).
\end{equation}

If we rewrite $\Tr(\rho^n)$ in terms of the spectrum $\{\lambda_k\}$ of $\rho$, it becomes

\begin{equation}
    \tilde{S}_n(\rho)=-n^2\partial_n(\frac{\log(\sum_{k}\lambda_k^n)}{n})=\log(\sum_{k}\lambda_k^n)-n\frac{\sum_k\lambda_k^n\log\lambda_k}{\sum_k \lambda_k^n}.
\end{equation}

Now we take the derivative of this expression and obtain another definition of anti-flatness. In fact, this quantity is known as the \textit{Capacity of Entanglement}, which has been explored in the context of condensed matter system \cite{PhysRevLett.105.080501,PhysRevB.83.115322} and in quantum gravity \cite{PhysRevD.99.066012,Nakaguchi:2016zqi,Bueno:2022jbl,Zurek:2022xzl}.  

\begin{proposition}\label{def:braneflatness}
    $\partial_n\tilde{S}_n (\rho)$ is a measure of anti-flatness in that $\partial_n\tilde{S}_n (\rho)=0$ if and only if $\rho$ has a flat spectrum.
\end{proposition}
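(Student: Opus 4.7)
The plan is to compute $\partial_n \tilde{S}_n(\rho)$ directly in terms of the spectrum $\{\lambda_k\}$ of $\rho$, recognize it as (minus $n$ times) the variance of $\log \lambda$ under a natural probability distribution, and then observe that a variance vanishes exactly on the flat spectra.

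First, I would introduce the ``partition function'' $Z(n) := \sum_k \lambda_k^n$ and the induced probability distribution $p_k(n) := \lambda_k^n / Z(n)$ on the support of $\rho$. With this notation, the expression already given in the excerpt,
\begin{equation*}
\tilde{S}_n(\rho) = \log Z(n) - n\,\langle \log \lambda \rangle_n,
\end{equation*}
where $\langle f(\lambda) \rangle_n := \sum_k p_k(n) f(\lambda_k)$. The standard identities $\partial_n \log Z(n) = \langle \log \lambda \rangle_n$ and $\partial_n \langle \log \lambda \rangle_n = \langle (\log \lambda)^2 \rangle_n - \langle \log \lambda \rangle_n^2 = \mathrm{Var}_n(\log \lambda)$ follow by direct differentiation (the first is just $\partial_n Z/Z$; the second is the familiar ``specific heat'' identity).

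Combining these, the cross terms cancel and I obtain the clean expression
\begin{equation*}
\partial_n \tilde{S}_n(\rho) = -\,n\,\mathrm{Var}_n(\log \lambda).
\end{equation*}
Since $n \geq 1$ and the variance is nonnegative, $\partial_n \tilde{S}_n(\rho)=0$ if and only if $\mathrm{Var}_n(\log \lambda) = 0$. Because $p_k(n) > 0$ on the support of $\rho$, the variance vanishes if and only if $\log \lambda_k$ takes a single value on that support, i.e.\ all nonzero eigenvalues of $\rho$ are equal. This is precisely the flatness condition $\rho^2 = \rho / \mathrm{rank}(\rho)$, i.e.\ $\rho \in \mathrm{FLAT}^{(n)}$, establishing both directions of the ``if and only if''.

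No step here seems especially delicate; the main thing to be careful about is the treatment of zero eigenvalues (handled by restricting the sum to the support, where $p_k(n)$ is strictly positive) and the convention that ``flat'' means ``flat on the support'' rather than ``equal to the maximally mixed state on the full Hilbert space,'' which matches Definition of $\mathrm{FLAT}^{(n)}$ in the excerpt. The identity $\partial_n \tilde{S}_n = -n \,\mathrm{Var}_n(\log \lambda)$ is itself the conceptual takeaway and makes the equivalence manifest, tying ``anti-flatness'' as measured by $\partial_n \tilde{S}_n$ to a genuine variance of the modular Hamiltonian spectrum, which is the content needed later when this quantity plays the role of capacity of entanglement in the holographic discussion.
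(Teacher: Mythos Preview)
Your proof is correct and essentially matches the paper's approach: both compute $\partial_n\tilde{S}_n$ explicitly and show it is $-n$ times a manifestly nonnegative quantity that vanishes exactly on flat spectra. The only cosmetic difference is that the paper first rewrites the expression as the pairwise sum $-n\sum_{(kl)}\lambda_k^n\lambda_l^n\log^2(\lambda_k/\lambda_l)/(\sum_k\lambda_k^n)^2$ and argues term-by-term, whereas you go straight to the variance identity $\partial_n\tilde{S}_n=-n\,\mathrm{Var}_n(\log\lambda)$; the paper in fact derives this same variance form in the discussion immediately following its proof.
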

\begin{proof}
Expanding the definition using the set of eigenvalues of $\rho$.
    \begin{align}\label{eqn:holononflat}
    \partial_n\tilde{S}_n(\rho)=&-n\frac{(\sum_k\lambda_k^n\log^2\lambda_k)(\sum_l\lambda_l^n)-(\sum_k\lambda_k^n\log\lambda_k)^2}{(\sum_k\lambda_k^n)^2}\\
    =&-n\frac{(\sum_{kl}\lambda_k^n\lambda_l^n\log^2\lambda_k)-(\sum_{kl}\lambda_k^n\lambda_l^n\log\lambda_k\log\lambda_l)}{(\sum_k\lambda_k^n)^2}\\
    =&-n\frac{\sum_{(kl)}\lambda_k^n\lambda_l^n(\log^2\lambda_k+\log^2\lambda_l-2\log\lambda_k\log\lambda_l)}{(\sum_k\lambda_k^n)^2}\\
    =&-n\frac{\sum_{(kl)}\lambda_k^n\lambda_l^n\log^2\frac{\lambda_k}{\lambda_l}}{(\sum_k\lambda_k^n)^2},
\end{align}
where $\sum_{(kl)}$ denotes sum over each pair of distinct indices $k\neq l$. Note that each term in the numerator is non-negative. Therefore $\partial_n\tilde{S}_n=0$ if and only if $\log{\frac{\lambda_k}{\lambda_l}}=0$, which is equivalent to $\lambda_k=\lambda_i$ for all $k,l$. 
\end{proof}

This anti-flatness (\cref{eqn:holononflat}) can be connected to (\cref{flatnessdef})
by first noticing that the anti-flatness $\mathcal{F}(\rho)$ corresponds to the variance of $\rho$. The proof is straightforward
\ba
\mathcal{F}(\rho)&=\tr(\rho^3)-\tr^2(\rho^2) &=\tr(\rho \, \rho^2) -\tr^2(\rho \, \rho)
=\langle\rho^2\rangle_\rho - \langle \rho \rangle_\rho^2 = \operatorname{Var}_\rho(\rho)
\ea
Let us connect this definition with the derivative at $n=1$. 
Let $\rho\equiv\sum_k\lambda_k \ket{\lambda_k}\bra{\lambda_k}$.
Note that the following relation can also be written as a variance, by defining $p_k=\frac{\lambda_k^n}{\sum_k\lambda_k^n}$, it is easy to observe that $\sum p_k=1$ and we can define the state 
\ba
\Xi:=\sum_k p_k \ket{\lambda_k}\bra{\lambda_k}
\ea
 and so 
\ba
\partial_n\tilde{S}_n(\rho)&=&-n \sum_{kl}p_k p_l (\log^2 \lambda_k-\log\lambda_k\log\lambda_l) \\
&=&-n\langle\log^2\rho\rangle_{\Xi}+n\langle\log\rho\rangle_{\Xi}^2=-n\operatorname{Var}_\Xi(\log\rho)
\ea
Let us compute it for $n=1$
\ba \label{sigmalogrho}
\left.{\partial_n \tilde{S}_n}(\rho)\right\vert_{n=1}&=& -\sum_{kl}\lambda_k\lambda_l \log\lambda_k(\log\frac{\lambda_k}{\lambda_l})\\ 
&=&-\sum_k \lambda_k \log^2 \lambda_k + \sum_{kl} \lambda_k\lambda_l\log\lambda_k\log\lambda_l \\
&=&-\tr(\rho\log^2\rho)+\tr^2(\rho\log\rho)\\
&=&-\langle\log^2\rho\rangle_\rho+\langle\log\rho\rangle_\rho^2=-\operatorname{Var}_{\rho}(\log\rho)
\ea
Interestingly, when $n=1$, $\Xi$ coincides with $\rho$. Seeing $\log\rho$ as a function of $\rho$, the variances between the two quantities are connected. We make use of standard techniques of error propagation to get the relationship between $\operatorname{Var}_\rho(\rho)$ and $\operatorname{Var}_\rho(\log(\rho))$. 
\be \label{eqn:approxrenyid}
\operatorname{Var}_\rho(\log(\rho))\approx\frac{\operatorname{Var}_\rho(\rho)}{\langle \rho \rangle_\rho^2}=\frac{\operatorname{Var}_\rho(\rho)}{\operatorname{Pur}(\rho)^2}=\frac{\mathcal{F}(\rho)}{\operatorname{Pur}^2(\rho)},
\ee 
The approximation is valid when $S_0(\rho)-S_2(\rho)<\log 2$. Therefore, the two measures coincide in the near-flat or weak entanglement regime. 

In fact,~\cref{eqn:holononflat} has a convenient rewriting as the variance of the modular Hamiltonian. Given a state $\rho\equiv \sum_k \lambda_k\ket{\lambda_k}\bra{\lambda_k}$, its eigenvalues can be written as $\lambda_k:=\exp(-\beta E_k)$ where $\beta$ is an effective temperature. Note that since $\sum_k \lambda_k=1$ then $Z[\beta]=1$. This also defines the (entanglement) Hamiltonian
\ba
H=\sum_k E_k\ketbra{\lambda_k}{\lambda_k}
\ea
From~(\cref{sigmalogrho}), we get 
\ba
\left.{\partial_n \tilde{S}_n}(\rho)\right\vert_{n=1}&=&-\sum_{k}\exp(-\beta E_k) (-\beta E_k)^2 + \left(\sum_k \exp(-\beta E_k)(-\beta E_k)\right)^2\\
 &=&-\beta^2\left[\langle H^2 \rangle_\beta -\langle H\rangle_\beta^2 \right]
\ea
also, with simple algebra, one obtains that 
\ba
\left.{\partial_n \tilde{S}_n}(\rho)\right\vert_{n=1}&=&
-\beta^2\left[\langle H^2 \rangle_\beta -\langle H\rangle_\beta^2 \right]= -\beta^2\langle (E_k-E_l)^2\rangle_{kl} 
\ea
In other words, the modified R\'enyi entropy is proportional - by inverse temperature - to the fluctuations of the Hamiltonian $H$ which in turn is the average gap squared in the energies $E_k$. This also shows why the derivative is connected to the anti-flatness of the state. 

This result can be extended to any $n$, to do this, note first that $\Xi=\exp(-n\beta H)Z^{-1}[n\beta]$. Then with some algebra, we obtain
\ba
{\partial_n \tilde{S}_n}(\rho)=-n^3\beta^2 (\langle H^2 \rangle_{n\beta}-\langle H\rangle_{n \beta }^2).
\ea

\subsection{Non-local Magic, Entropy, and anti-Flatness}\label{sec:magicbounds}
In this section, we are going to introduce the concept of \emph{non-local magic}, and how it relates to both entanglement and anti-flatness. 

\begin{definition}[Multi-partite non-local magic]
Given $M$ a measure of magic and $\psi_{A_1\dots A_n}\equiv\st{\psi _{A_1\dots A_n}}$ a pure state, we define as $n$-partite non-local magic
\begin{equation}
		M^{(n-\rm NL)} (\psi_{A_1\dots A_n}) := \min_{U=\otimes_{i=1}^n U_{A_i}} M(U \psi_{A_1\dots A_n} U^{\dagger}).
  \label{eq:nlmagic}
\end{equation}
\end{definition}
As we exclusively discuss the case of bipartite non-local magic when $n=2$ for the rest of this work, we set $A=A_1, B=A_2$ and simply refer to $M^{(NL)}=M^{(2-NL)}$ as non-local magic for convenience. 

Intuitively, non-local magic is the non-stabilizerness that lives in the correlation between $A$ and $B$ because $U_A\otimes U_B$ removes all ``local'' magic in $A$ or $B$ separately. This is distinct from other notions of long-range magic \cite{white_conformal_2021,Bao_2022,tarabunga2023critical}.  Note that $A,B$ themselves can be multi-qubit systems, so $U_A,U_B$ need not be single qubit unitaries.

In this work, we will use as measures of magic $M_{\text{dist}}$ and the two relative entropies of magic $M_{R},\ M_{SR}$. 

\subsubsection{Non-local  magic and flatness}
Let us start with a general relation valid for \textit{any} measure of anti-flatness and \textit{any} measure of non-local magic. 

\begin{lemma}\label{lemmaNL}
A pure quantum state $\ket{\psi}$ possesses no non-local magic, that is,  $M^{NL}(\ket{\psi})=0$, iff $\ket{\psi}$ is unitarily locally equivalent to a state $\ket{\psi'} = U_A\otimes U_B\ket{\psi}$ with flat reduced density matrix $\psi_A'\equiv\tr_B\st{\psi'}$ with integer R\'enyi entropies\footnote{In this work, information is measured using bits. Accordingly, entropies are computed using $\log_d$.}. In formulae,
\be
M_{NL}(\ket{\psi})=0\iff 
F(\psi_A)=0 \wedge \rank(\psi_A)=d^{r_A},\,\,r_A\in\mathbb{N} 
\ee
\begin{proof}
     Let us start from the left-to-right implication. We employ the fact that any faithful measure of magic $M(\ket{\psi})$ vanishes on the free states.
For any such measure, its non-local counterpart with respect to the bipartition $A|B$ is $M^{NL}(\ket{\psi}):= \min_{U_A\otimes U_B}M(U_A\otimes U_B\ket{\psi})$. Given $M^{(NL)}(\ket{\psi})=0$, then we know that there exist a bi-local unitary $U_A\otimes U_B$ such that $\ket{\psi^{\prime}}\equiv U_A\otimes U_B\ket{\psi}\in \stab_0^{(n)}$. Since $\stab_0^{(n)}$ is closed under partial trace, see \cref{sec:magicmeasure}, then $\psi_A'\in\stab_0^{(n)}$. We know that $\psi_A'\in\mathrm{FLAT}^{(n)}$. Moreover, being $\psi_A'\in\stab_0^{(n)}$ we know that $\rank(\psi_A)=d^{r_A}$ with $r_A\in\mathbb{N}$. Let us now show that also the converse is true.  Consider a flat state $\ket{\psi}$, that is, a state such that its reduced density matrix $\psi_A\equiv\tr_B\ketbra{\psi}{\psi}=\frac{1}{d^{r_A}}\sum_{i}\ketbra{\phi_i}{\phi_i}_A$ where the sum run on $d^{r_A}$ many rank-one projectors $\ketbra{\phi_i}{\phi_i}_A$. Note that we exploited the fact that $S_{\alpha}(A)=r_A\in\mathbb{N}$ for every $\alpha\in[0,\infty)$. Via the Schmidt decomposition, we can write the state as $\ket{\psi}=\sum_{i}\frac{1}{\sqrt{d^{r_A}}}\ket{\phi_i}_A\otimes \ket{\psi_i}_B$. Without loss of generality, we choose now $|A|<|B|$.  We further know that $\langle \phi_i|\phi_j\rangle=\langle \psi_i|\psi_j\rangle=\delta_{ij}$. Now  choose $U_A$ (resp. $U_B$) such that $U_{A}\ket{\phi_i}_A=\ket{i}_A$ (resp. $U_{B}\ket{\psi_i}_B=\ket{i}_B$) for $\ket{i}_A$ (resp. $\ket{i}_B$) being the computational basis on $A$ (resp. $B$). We obtain 
\ba\label{magicflatness}
U_A\otimes U_B\ket{\psi}=\sum_{i}\frac{1}{\sqrt{d^{r_A}}}\ket{i}_A\otimes \ket{i}_B\equiv \ket{EPR}_{A\bar{A}}\otimes \ket{j}_{B\setminus \bar{A}}
\ea
where $\ket{EPR}_{A\bar{A}}$ is a EPR pair between the full $A$ and \textit{any} subsystem $\bar{A}\subset B$ such that $|A|=|\bar{A}|$, while $\ket{j}$ is a computational basis state on $B\setminus \bar{A}$.    Since  $\ket{EPR}_{A\bar{A}}\otimes \ket{j}_{B\setminus \bar{A}}$ is a stabilizer state, we obtain
\ba
0=M(U_A\otimes U_B\ket{\psi})\ge \min_{U_A\otimes U_B}M(U_A\otimes U_B\ket{\psi})=M^{NL}(\ket{\psi})\ge 0
\ea
  \end{proof}
\end{lemma}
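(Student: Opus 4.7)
The plan is to prove both directions by leveraging three ingredients already established in the excerpt: faithfulness of the magic measure (so $M(\sigma)=0 \iff \sigma\in\stab_0^{(n)}$), closure of $\stab_0^{(n)}$ under partial trace, and the explicit form $\sigma = \frac{1}{|G|}\sum_{P\in G}P$ for $\sigma\in\stab_0^{(n)}$ with $G$ an abelian group of commuting Paulis. From this form, any $\sigma\in\stab_0^{(n)}$ has a flat spectrum with $\rank(\sigma)=d^n/|G|$, an integer power of $d$ --- this is the spectral signature we need to match.

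For the forward direction ($\Rightarrow$), I will assume $M^{NL}(\ket{\psi})=0$. Since the bi-local unitary group is compact and $M$ is continuous, the infimum in the definition of $M^{NL}$ is attained by some $U_A\otimes U_B$, giving $\ket{\psi'}:=U_A\otimes U_B\ket{\psi}$ with $M(\ket{\psi'})=0$. Faithfulness puts $\ket{\psi'}\in\stab_0^{(n)}$; closure under partial trace then gives $\psi'_A\in\stab_0^{(n)}$, so $\psi'_A$ has flat spectrum and rank $d^{r_A}$. Because $\psi_A=U_A^\dagger\psi'_A U_A$ is unitarily equivalent to $\psi'_A$, the spectrum and rank are preserved, and we conclude $\mathcal{F}(\psi_A)=0$ and $\rank(\psi_A)=d^{r_A}$.

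For the backward direction ($\Leftarrow$), I will assume $\psi_A$ has flat spectrum with $\rank(\psi_A)=d^{r_A}$; note that necessarily $r_A\le\min(|A|,|B|)$. The non-zero Schmidt coefficients are all $1/\sqrt{d^{r_A}}$, so $\ket{\psi}=\sum_{i=1}^{d^{r_A}}\frac{1}{\sqrt{d^{r_A}}}\ket{\phi_i}_A\otimes\ket{\chi_i}_B$ for orthonormal families $\{\ket{\phi_i}_A\},\{\ket{\chi_i}_B\}$. Pick an $r_A$-qudit subsystem $A'\subseteq A$ and an $r_A$-qudit subsystem $\bar A\subseteq B$, and choose local unitaries so that $U_A\ket{\phi_i}_A=\ket{i}_{A'}\otimes\ket{0}_{A\setminus A'}$ and $U_B\ket{\chi_i}_B=\ket{i}_{\bar A}\otimes\ket{j}_{B\setminus\bar A}$ for fixed computational basis states $\ket{0},\ket{j}$. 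Then $U_A\otimes U_B\ket{\psi}=\ket{0}_{A\setminus A'}\otimes\ket{\text{EPR}}_{A'\bar A}\otimes\ket{j}_{B\setminus\bar A}$. This tensor product of an EPR state (which is pure-stabilizer, stabilized by $X_i\otimes X_i^{-1}$ and $Z_i\otimes Z_i^{-1}$ for each qudit pair $i$) with computational-basis factors lies in $\pstab^{(n)}\subset\stab_0^{(n)}$, so $M=0$ on it, forcing $M^{NL}(\ket{\psi})=0$ by non-negativity.

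The main obstacle is purely bookkeeping: verifying the dimension-counting $r_A\le\min(|A|,|B|)$ so the EPR embedding actually fits (which follows automatically from $\rank(\psi_A)\le d^{\min(|A|,|B|)}$), and confirming the chosen product of EPR and computational states genuinely sits in $\stab_0^{(n)}$ rather than only in the larger convex hull $\stab^{(n)}$ --- but this is immediate from the explicit stabilizer group just written down. The rest is spectral bookkeeping under the Schmidt decomposition and invariance of the spectrum under local unitaries.
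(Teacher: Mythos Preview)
Your proof is correct and follows essentially the same approach as the paper's: both directions use faithfulness to land in $\stab_0^{(n)}$, closure under partial trace for the forward direction, and an explicit Schmidt-decomposition-to-EPR construction for the backward direction. Your version is slightly more careful about some details the paper leaves implicit (compactness/continuity for attaining the minimum, the dimension count $r_A\le\min(|A|,|B|)$, and the explicit stabilizer generators for the EPR factor), but the structure is identical.
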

Notice that a vanishing non-local magic is a sufficient  condition for anti-flatness to be zero. However, there are possibly states with non-integer R\'enyi entropies that can possess some non-local magic without being guaranteed that anti-flatness is non-vanishing. With the additional condition of integer R\'enyi entropy, also the other implication holds, that is, a flat state implies vanishing non local magic for any sensible measure of non local magic.

We now show lower and upper bounds to the non-local magic based on the trace distance of $M_{dist}$ defined in~\cref{def:mod_tr_dist_mag}. As we saw previously, anti-flatness connects magic and entanglement. 
We have:
\begin{theorem}\label{th:magicdist} Let $\psi_{AB}$ be a pure state  in a bipartite Hilbert space $\mathcal H = \mathcal H_A\otimes\mathcal H_B$,
then
	\begin{equation}
		\mathcal{F}(\psi_{ A})/8\le M_{dist}^{(NL)}(\psi_{AB})\le  \sqrt{1-e^{S_{max}(A)}+e^{S_{\infty}(A)}\left(1-\frac{e^{\log (d) \lfloor S_{max}(A)/\log d \rfloor}}{e^{S_{max}(A)}}\right)}
	\end{equation}
\end{theorem}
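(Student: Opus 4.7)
The plan is to establish the two inequalities separately, leveraging the following three facts from earlier in the paper: (i) $\stab_0$ is closed under partial trace, so every $\sigma\in\stab_0$ has reduced density matrices with flat spectra; (ii) the trace norm is contractive under completely positive trace-preserving maps, in particular under $\Tr_B$; and (iii) the anti-flatness $\mathcal{F}$ and the non-local magic $M_{dist}^{(NL)}$ are both invariant under local unitaries $U_A\otimes U_B$.

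\emph{Lower bound.} First I would fix a minimiser $(U_A\otimes U_B,\sigma)$ of $M_{dist}^{(NL)}(\psi_{AB})$, so that $\varepsilon:=M_{dist}^{(NL)}(\psi_{AB})=\tfrac12\|(U_A\otimes U_B)\psi_{AB}(U_A\otimes U_B)^\dagger-\sigma\|_1$. Taking the partial trace over $B$ contracts the trace distance, yielding $\tfrac12\|U_A\psi_A U_A^\dagger-\sigma_A\|_1\le\varepsilon$ with $\sigma_A\in\stab_0$, hence $\mathcal{F}(\sigma_A)=0$. Because $\mathcal{F}$ is unitary-invariant one has $\mathcal{F}(U_A\psi_A U_A^\dagger)=\mathcal{F}(\psi_A)$, and it only remains to bound $\mathcal{F}(\psi_A)=\bigl(\Tr\psi_A^3-\Tr\sigma_A^3\bigr)-\bigl((\Tr\psi_A^2)^2-(\Tr\sigma_A^2)^2\bigr)$ in terms of $\|\psi_A-\sigma_A\|_1$. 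Using the operator Lipschitz estimates
\begin{equation*}
|\Tr\rho^n-\Tr\tau^n|\le n\,\|\rho-\tau\|_1,
\end{equation*}
together with the factorisation $(\Tr\rho^2)^2-(\Tr\tau^2)^2=(\Tr\rho^2+\Tr\tau^2)(\Tr\rho^2-\Tr\tau^2)$, produces a bound of the form $\mathcal{F}(\psi_A)\le c\,\|\psi_A-\sigma_A\|_1\le 2c\,M_{dist}^{(NL)}(\psi_{AB})$. Obtaining the announced constant $1/8$ requires the sharpened form of the Lipschitz inequalities that uses $\|\psi_A\|_\infty\le 1$ and $\|\sigma_A\|_\infty=1/\rank(\sigma_A)\le 1$ to control the $\Tr(\cdot)^2$ and $\Tr(\cdot)^3$ increments more tightly than the naive estimate.

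\emph{Upper bound.} For the upper bound I would exhibit an explicit $\sigma\in\stab_0$ and local unitaries achieving the claimed bound. The natural choice is to bring $\psi_{AB}$ to Schmidt form $\ket{\psi}=\sum_{i=1}^r\sqrt{\lambda_i}\ket{i}_A\ket{i}_B$, with $r=e^{S_{max}(A)}$, and approximate it by the pure stabilizer state $\ket{\phi}=d^{-k/2}\sum_{i=1}^{d^k}\ket{i}_A\ket{i}_B$, a tensor product of $k=\lfloor S_{max}(A)/\log d\rfloor$ EPR pairs. Since for pure states $\tfrac12\||\psi\rangle\langle\psi|-|\phi\rangle\langle\phi|\|_1=\sqrt{1-|\langle\psi|\phi\rangle|^2}$, the problem reduces to lower bounding
\begin{equation*}
|\langle\psi|\phi\rangle|^2=\frac{1}{d^k}\Bigl(\sum_{i=1}^{d^k}\sqrt{\lambda_i}\Bigr)^2.
\end{equation*}
Applying $\sqrt{\lambda_i}\ge\lambda_i/\sqrt{\lambda_{max}}$ and using that the top $d^k$ Schmidt weights satisfy $\sum_{i=1}^{d^k}\lambda_i\ge 1-(r-d^k)\lambda_{max}$ (because the remaining $r-d^k$ eigenvalues are each bounded by $\lambda_{max}$) rewrites the complement $1-|\langle\psi|\phi\rangle|^2$ in terms of $r$ and $\lambda_{max}$, which via the identifications $e^{S_{max}(A)}=r$ and $e^{S_\infty(A)}=1/\lambda_{max}$ gives the claimed right-hand side after rearrangement.

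\emph{Main obstacle.} The harder direction is the upper bound. The trial pure EPR state $|\phi\rangle$ is exactly optimal only when $r$ is itself a power of $d$; when $r>d^k$ the ``remainder'' eigenvalues must be charged against $\lambda_{max}$ to arrive at the exact form with the term $e^{S_\infty(A)}(1-d^k/r)$. Because $\stab_0$ is wider than the pure stabilizer states---it also contains uniform mixtures $\frac{1}{|G|}\sum_{P\in G}P$ over cosets of abelian Pauli subgroups---one must check that no such mixed stabilizer candidate beats the truncated EPR ansatz, which is essentially a matching argument between the flat plateau of $\sigma$ and the top part of the Schmidt spectrum of $\psi_A$. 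On the lower-bound side the analogous subtlety is that the minimising $\sigma_A$ has operator norm $1/d^{k'}$ with $k'$ an integer, and this extra structure (rather than the trivial $\|\sigma_A\|_\infty\le 1$) is what is needed to tighten the Lipschitz sum from the naive constant to $1/8$.
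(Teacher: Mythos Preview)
Your overall strategy matches the paper's on both sides: for the upper bound, bring $\psi_{AB}$ to Schmidt form by local unitaries and compare with a truncated EPR state; for the lower bound, partial-trace to reduce to $\psi_A$ versus some $\sigma_A\in\stab_0$, use $\mathcal{F}(\sigma_A)=0$, and control $\mathcal{F}(\psi_A)$ by the trace distance. The execution differs in two concrete places.

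\textbf{Lower bound.} Your naive Lipschitz estimate $|\Tr\rho^n-\Tr\tau^n|\le n\|\rho-\tau\|_1$ yields $\mathcal{F}(\psi_A)\le 7\|\psi_A-\sigma_A\|_1=14T$, hence only a constant $1/14$. The sharpening you propose---exploiting $\|\sigma_A\|_\infty=1/\rank(\sigma_A)$---is not what the paper does, and would not obviously close the gap since the $\psi_A$ contributions are still only controlled by $\|\psi_A\|_\infty\le 1$. The paper instead invokes the inequality $|\Tr\rho^n-\Tr\sigma^n|\le 1-(1-T)^n$ for density matrices with $T=\tfrac12\|\rho-\sigma\|_1$ (a lemma of Chen), giving
\[
\mathcal{F}(\psi_A)\le \bigl(1-(1-T)^3\bigr)+2\bigl(1-(1-T)^2\bigr)\le T^3+7T\le 8T.
\]
No special structure of $\sigma_A$ is used.

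\textbf{Upper bound.} Your chain $\sqrt{\lambda_i}\ge\lambda_i/\sqrt{\lambda_{max}}$ followed by $\sum_{i\le d^k}\lambda_i\ge 1-(r-d^k)\lambda_{max}$ produces $|\langle\psi|\phi\rangle|^2\ge (d^k\lambda_{max})^{-1}\bigl(1-(r-d^k)\lambda_{max}\bigr)^2$, which does \emph{not} rearrange to the stated right-hand side. The paper's manipulation (in its convention $\sum\lambda_i^2=1$) is more elementary: starting from $|\langle\psi|\phi\rangle|^2=d^{-k}\sum_{i,j\le d^k}\lambda_i\lambda_j$, first weaken $d^{-k}\ge D^{-1}$, then drop the positive off-diagonal terms to get $\ge D^{-1}\sum_{i\le d^k}\lambda_i^2 = D^{-1}\bigl(1-\sum_{i>d^k}\lambda_i^2\bigr)$, and finally bound each tail term by $\lambda_{max}^2$. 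This lands directly on $D^{-1}-\lambda_{max}^2(1-d^k/D)$.

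Your ``main obstacle'' is a non-issue: since the upper bound is on a \emph{minimum} over $\sigma\in\stab_0$, any particular choice---the pure EPR state---gives a valid bound; there is no need to rule out mixed competitors.
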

where $\lfloor \cdot \rfloor$, is the floor function. The proof can be found in the~\cref{app:proofthmd}.

 As we shall see in \cref{section:brane}, the \cref{lemmaNL} and \cref{th:magicdist} will have important consequences for the relationship between the non-local magic in the CFT side and gravity in AdS.

\subsubsection{Non-local stabilizer relative entropies of magic}

In this section, 
we show that similar to the trace distance of magic, the relative stabilizer entropy also has a tight connection with flatness and entanglement.

\begin{theorem}\label{th:relstab}
Let $\psi_{AB}$ be a pure state, then 
    \begin{equation}\label{eq:relstab}
        S_{max}(A)-S(A)=\mathcal F_R(\psi_A) \leq \min_{U_A}M_{RS}(U_A\psi_A U_A^{\dagger})\leq M^{(NL)}_{RS}(\psi_{AB})\leq\log d \lceil   S_{max}(A)/\log d  \rceil.
    \end{equation}
\end{theorem}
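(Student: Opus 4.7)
The inequality chain splits into four pieces: the leftmost equality $S_{max}(A)-S(A)=\mathcal F_R(\psi_A)$ is already established by Proposition \ref{prop:qrf}, leaving three non-trivial inequalities to address. My plan is to treat each separately, invoking in order (i) an inclusion of resource-free sets combined with unitary invariance, (ii) monotonicity of $M_{RS}$ under partial trace, and (iii) an explicit Schmidt-decomposition-based witness for the final upper bound.

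For the first inequality $\mathcal F_R(\psi_A) \leq \min_{U_A}M_{RS}(U_A\psi_A U_A^{\dagger})$, the key observation is that $\stab_0^{(n)} \subseteq \mathrm{FLAT}^{(n)}$: every state $d^{-n}\sum_{P\in G}P$ is proportional to the projector onto the common $+1$ eigenspace of an abelian Pauli group $G$ and therefore has a flat spectrum on its support. Consequently, minimizing the relative entropy over the larger set $\mathrm{FLAT}^{(n)}$ can only decrease the result, giving $\mathcal F_R(\rho) \leq M_{RS}(\rho)$ for any state $\rho$. Since $\mathrm{FLAT}^{(n)}$ is closed under unitary conjugation, $\mathcal F_R$ is unitarily invariant, and I would conclude by writing $\mathcal F_R(\psi_A) = \mathcal F_R(U_A\psi_A U_A^\dagger) \leq M_{RS}(U_A\psi_A U_A^\dagger)$ for every $U_A$ and taking the minimum. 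For the middle inequality, I would apply the monotonicity of $M_{RS}$ under partial trace established in the previous subsection: given any bi-local unitary $U_A \otimes U_B$, tracing out $B$ from the transformed state returns $U_A\psi_A U_A^\dagger$, so $M_{RS}(U_A\psi_A U_A^\dagger) \leq M_{RS}((U_A \otimes U_B)\psi_{AB}(U_A \otimes U_B)^\dagger)$. Minimizing the right-hand side over $U_A \otimes U_B$ produces $M^{(NL)}_{RS}(\psi_{AB})$ while the left is free to be minimized over $U_A$ alone, delivering the desired inequality.

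The upper bound $M^{(NL)}_{RS}(\psi_{AB}) \leq \log d \, \lceil S_{max}(A)/\log d \rceil$ is the main substantive step, and I expect the delicate part will be verifying the stabilizer structure of the trial state. The approach is constructive: set $r = \rank(\psi_A)$ and $n_r = \lceil \log_d r\rceil$. Starting from the Schmidt decomposition $\psi_{AB} = \sum_{i=1}^r \sqrt{\lambda_i}|\phi_i\rangle_A|\psi_i\rangle_B$, I would pick local unitaries $U_A, U_B$ that rotate $\ket{\phi_i}$ and $\ket{\psi_i}$ into computational basis states, embedded so that the first $n_r$ qudits on each side carry the index $i$ and the remaining qudits are set to $\ket{0}$. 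The trial state would then be
\begin{equation}
\tau = \Big(d^{-n_r}\sum_{i=0}^{d^{n_r}-1}|i\rangle\langle i|^{\otimes 2}\Big)_{A_1..A_{n_r}B_1..B_{n_r}} \otimes |0\rangle\langle 0|^{\otimes(\cdot)}_A \otimes |0\rangle\langle 0|^{\otimes(\cdot)}_B ,
\end{equation}
which can be shown to lie in $\stab_0$: a discrete-Fourier rewriting recasts the classically correlated piece as $d^{-2n_r}\sum_{\vec s}Z^{\vec s}_A (Z^{\dagger})^{\vec s}_B$, manifestly of the form $d^{-n}\sum_{P\in G} P$ for an abelian group $G$, while the ancilla $|0\rangle\langle 0|$ factors contribute the standard single-site $Z$-stabilizers. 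Since $r \leq d^{n_r}$, the transformed pure state is supported within the support of $\tau$, on which $\log \tau$ reduces to the constant $-n_r\log d$; direct computation then yields $S(\rho\|\tau) = n_r\log d = \log d \,\lceil S_{max}(A)/\log d \rceil$, which bounds the non-local relative stabilizer entropy from above. The only genuinely subtle point is confirming both the $\stab_0$ membership of $\tau$ and the support containment after the local rotation, but both reduce to routine Pauli-frame arguments.
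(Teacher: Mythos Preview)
Your proposal is correct and follows essentially the same route as the paper: the inclusion $\stab_0\subset\mathrm{FLAT}$ plus unitary invariance gives the first inequality, monotonicity under partial trace gives the middle one (the paper phrases this as a contradiction argument, but it is the same content), and the upper bound uses the same classically correlated trial state in the rotated Schmidt basis. If anything you are more explicit than the paper, which simply asserts $\sigma_{AB}=\mathbb{1}_{d^{\lceil\log_d D\rceil}}/d^{\lceil\log_d D\rceil}\oplus 0$ without elaborating on its $\stab_0$ membership, whereas your Fourier rewriting spells this out.
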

Here $S(A)=S(\rho_A)$ and $S_{max}(A)=S_{max}(\rho_A)$. The proof can be found in~\cref{proofth2}.
Let us briefly comment on the tightness of the bound. It is clear that when $|\psi\rangle_{AB}$ has a dominant Schmidt coefficient and many small trailing singular values, then the bound is essentially tight. A case in point is $\sqrt{1-\epsilon}|00\rangle+\sqrt{\epsilon}|11\rangle$. However, the upper bound is quite loose for states with near-flat spectrum, e.g. $\epsilon=1/2$. This is an artifact of choosing the maximally mixed state as a reference even though other stabilizer states clearly yield a lower distance.

A similar upper bound can be obtained with the usual relative entropy measure of magic. 
\begin{proposition}[Entanglement upper bounds NL magic]\label{prop:RelativeEnt}
    Suppose $\rho_{AB}$ is pure, and \begin{align}
        M^{(NL)}_R(\rho_{AB})=\min_{U_A\otimes U_B} M_R((U_A\otimes U_B)\rho_{AB}(U_A\otimes U_B)^{\dagger}),
    \end{align} then $M_R^{(NL)}(\rho_{AB})\leq S(A)=S(B)$, where $S(A)$ is the von Neumann entropy of subsystem $A$.
\end{proposition}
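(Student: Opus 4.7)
The plan is to exhibit an explicit bi-local unitary $U_A \otimes U_B$ and an explicit trial state $\sigma \in \stab$ such that $S\bigl((U_A \otimes U_B)\rho_{AB}(U_A \otimes U_B)^{\dagger} \,\|\, \sigma\bigr) = S(A)$. Since $M_R^{(NL)}$ is defined as the infimum over bi-local unitaries of $M_R$, and $M_R$ itself is the infimum over $\sigma \in \stab$, producing one such pair immediately yields the bound $M_R^{(NL)}(\rho_{AB}) \leq S(A)$.

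First I would apply the Schmidt decomposition to the pure state, writing $\ket{\psi}_{AB} = \sum_i \sqrt{\lambda_i}\, \ket{a_i}_A \ket{b_i}_B$ where the Schmidt coefficients $\lambda_i$ are the eigenvalues of $\rho_A = \Tr_B \rho_{AB}$. Choose local unitaries $U_A, U_B$ that rotate the Schmidt bases into the computational basis, i.e.\ $U_A \ket{a_i}_A = \ket{i}_A$ and $U_B \ket{b_i}_B = \ket{i}_B$, so that $\ket{\psi'} \equiv (U_A \otimes U_B)\ket{\psi} = \sum_i \sqrt{\lambda_i}\, \ket{ii}_{AB}$. This step is always possible because the Schmidt rank of any pure state is bounded above by $\min(\dim \mathcal{H}_A, \dim \mathcal{H}_B)$, so enough orthogonal computational basis vectors exist on each side to serve as targets.

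Next I would take as the trial state the classical mixture
\begin{equation}
\sigma = \sum_i \lambda_i\, \ketbra{ii}{ii}_{AB}.
\end{equation}
Each projector $\ketbra{ii}{ii}$ is a product of computational basis states and therefore a pure stabilizer state, so $\sigma$ is a convex combination of elements of $\pstab^{(n)}$ and hence lies in $\stab$. A direct calculation then gives
\begin{equation}
S(\psi' \,\|\, \sigma) = -S(\psi') - \Tr(\psi' \log \sigma) = 0 - \sum_i \lambda_i \log \lambda_i = S(A),
\end{equation}
using the purity of $\ket{\psi'}$ and the fact that $\psi'$ has support in the eigenbasis of $\sigma$ with diagonal weights $\lambda_i$. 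Chaining the inequalities yields $M_R^{(NL)}(\rho_{AB}) \leq M_R(\psi') \leq S(\psi' \,\|\, \sigma) = S(A)$, and $S(A) = S(B)$ follows from the purity of $\rho_{AB}$.

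The argument is largely routine; the only point requiring care is verifying that the diagonal trial state truly lies in $\stab$ rather than merely in $\stab_0^{(n)}$, which is needed because the statement uses the relative entropy of magic $M_R$ against the full stabilizer polytope. This is immediate here because $\sigma$ is an honest convex combination of pure computational basis states. The main conceptual takeaway is that after aligning the Schmidt bases with the computational basis, all of the non-stabilizerness left in the state is encoded in the \emph{coherences} between the diagonal terms $\ketbra{ii}{jj}$, and the diagonal stabilizer mixture with the correct weights pays exactly an entropic price of $S(A)$ to cover them.
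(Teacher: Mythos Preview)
Your proof is correct and follows essentially the same approach as the paper: rotate the Schmidt bases into the computational basis via $U_A\otimes U_B$, then use the diagonal stabilizer mixture $\sigma=\sum_i\lambda_i\ketbra{ii}{ii}$ as the trial state in $\stab$ to obtain $S(\psi'\|\sigma)=S(A)$. The paper's exposition differs only in notation (writing Schmidt coefficients as $\lambda_i$ rather than $\sqrt{\lambda_i}$) and in phrasing the intermediate step, but the choice of unitary, trial state, and computation are identical.
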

The proof is given in~\cref{proofprop4}. This upper bound suffers from the same drawbacks as (\cref{eq:relstab}) for states that are maximally entangled.

\subsubsection{Magic estimates}\label{section:estimate}
As minimization can be difficult for the relative entropy measure, let's also derive a tighter upper bound based on a computable measure of magic, that is, the stabilizer R\'enyi entropy. 
To do so, we can pick a good estimate that is reasonably close to the minimum. Suppose the entanglement spectrum of the state under the same bipartition $AB$ is $\{\lambda_i\}$, construct a state 
\begin{equation}
    |\psi'\rangle_{AB}=\sum_{i=0}^{2^n-1}\sqrt{\lambda_i} |s_i\rangle|s_i\rangle,
\end{equation}
where $\{|s_i\rangle\}$ are eigenstates of a stabilizer group $\mathcal{S}=\{S_1,S_2, \cdots, S_n\}$ such that for any $S_k$ in $\mathcal{S}$, $S_k\ket{s_i}=\pm\ket{s_i}$. Because the entanglement spectrum is invariant under local unitary $U_A\otimes U_B$, $|\psi'\rangle$ is a reasonable construction such that the reduced density matrix on both $A$ and $B$ are within the \textit{stabilizer polytope}, and hence have vanishing local magic by the relative entropy measure $M_R$.    Note that other choices of the Schmidt basis may yield lower overall magic on $AB$, therefore $M(|\psi'\rangle)$ provides an upper bound of non-local magic.

We now present an estimate of $M(|\psi'\rangle)$ using the Stabilizer R\'enyi Entropy measure.



\begin{proposition}
The non-local stabilizer R\'enyi entropy estimate for a state with entanglement spectrum $\{\lambda_i\}$ is 
    \begin{equation}\label{estimate_prop}
        \mathcal{M}_2 (\{\lambda_i\})=\mathcal{M}_2(\sum_{i=0}^{2^n-1}\sqrt{\lambda_i}|s_i\rangle|s_i\rangle), \qquad \lambda_i\geq\lambda_j,\ \ \text{for} \ i<j.
    \end{equation}
\end{proposition}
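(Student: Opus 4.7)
The plan is to establish that $\mathcal{M}_2(|\psi'\rangle)$ is a functional of the entanglement spectrum $\{\lambda_i\}$ alone, justifying the notation $\mathcal{M}_2(\{\lambda_i\})$ in \cref{estimate_prop}, and to exhibit this functional as a closed-form expression. Together with the observation that $|\psi'\rangle$ is related to $\psi_{AB}$ by a bi-local unitary (since the two states share the same entanglement spectrum), this will simultaneously deliver the upper-bound interpretation as an estimate of $\mathcal{M}_2^{(\mathrm{NL})}(\psi_{AB})$.

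First I would invoke Clifford invariance of the stabilizer 2-R\'enyi entropy. Because the basis $\{|s_i\rangle\}$ consists of joint eigenstates of a stabilizer group, there is a Clifford $C$ mapping each $|s_i\rangle$ to the computational basis state $|i\rangle$; applying $C\otimes C$ on $AB$ reduces $|\psi'\rangle$ to the canonical maximally-correlated form $\sum_i\sqrt{\lambda_i}|i\rangle_A|i\rangle_B$ without changing $\mathcal{M}_2$. The same argument, using permutation Cliffords, shows that any relabeling of the $\lambda_i$ leaves $\mathcal{M}_2(|\psi'\rangle)$ invariant, so the ordering convention $\lambda_i\ge\lambda_j$ for $i<j$ is merely a canonical labeling.

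Next I would compute the Pauli expectation values in the canonical frame. Parameterizing Pauli strings as $P=i^{a\cdot b}X^{a}Z^{b}$ for $a,b\in\{0,1\}^n$, a direct calculation yields
\[
\langle\psi'|P_A\otimes P_B|\psi'\rangle \;=\; i^{a\cdot(b_A+b_B)}\,\delta_{a_A,a_B}\sum_j\sqrt{\lambda_j\,\lambda_{j\oplus a}}\,(-1)^{(b_A\oplus b_B)\cdot j},
\]
which is real (as any Pauli expectation must be) and vanishes unless $a_A=a_B=:a$ and $a\cdot(b_A+b_B)$ is even. Raising to the fourth power eliminates the residual sign, and summing over all Pauli pairs through the definition $\mathcal{M}_2(|\psi'\rangle)=-\log_2\!\bigl(2^n\cdot 2^{-4n}\sum_P\langle\psi'|P|\psi'\rangle^4\bigr)$ produces an expression whose every ingredient depends only on the entanglement spectrum.

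The main obstacle I anticipate is the combinatorial bookkeeping of the Pauli sum, in particular the restriction to pairs $(b_A,b_B)$ with $a\cdot(b_A+b_B)$ even and the folding of that double sum onto a character-type sum over $c=b_A\oplus b_B$. Tracking factors of two and handling the $a=0$ sector (purely diagonal Paulis, contributing $\sum_j\lambda_j(-1)^{c\cdot j}$) separately from the $a\neq 0$ sectors will be delicate but routine. The upper-bound interpretation is then immediate: since $|\psi'\rangle=(U_A\otimes U_B)\psi_{AB}(U_A\otimes U_B)^\dagger$ for some bi-local $U$ that brings the Schmidt decomposition of $\psi_{AB}$ into the stabilizer basis, we have $\mathcal{M}_2(|\psi'\rangle)\ge\min_{U_A\otimes U_B}\mathcal{M}_2\!\bigl((U_A\otimes U_B)\psi_{AB}(U_A\otimes U_B)^\dagger\bigr)=\mathcal{M}_2^{(\mathrm{NL})}(\psi_{AB})$, closing the argument.
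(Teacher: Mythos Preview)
Your Clifford-invariance argument for independence from the choice of stabilizer basis $\{|s_i\rangle\}$ is correct and matches the paper's remark that the estimate ``doesn't depend on the choice of stabilizer group $\mathcal{S}$.'' The upper-bound argument via bi-local unitaries is also fine.

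However, your claim that ``any relabeling of the $\lambda_i$ leaves $\mathcal{M}_2(|\psi'\rangle)$ invariant'' is wrong, and the paper says the opposite explicitly: ``the ordering of eigenvalues does affect its magnitude.'' The error is that not every permutation of computational basis states is a Clifford operation. Cliffords realize only the \emph{affine} permutations over $\mathbb{F}_2^n$ (those of the form $i\mapsto Ai\oplus b$ with $A\in GL(n,\mathbb{F}_2)$); for $n\ge 3$ this is a strict subgroup of $S_{2^n}$. An arbitrary relabeling $\lambda_i\mapsto\lambda_{\pi(i)}$ therefore cannot in general be undone by $C\otimes C$ with $C$ Clifford, and $\mathcal{M}_2(|\psi'\rangle)$ genuinely changes. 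You can see this directly in the closed form the paper derives: the expression involves $\lambda_{i_1\wedge i_2\wedge i_3}$ with bitwise XOR, a structure preserved only by affine permutations.

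Consequently, the descending-order convention in the Proposition is not a mere canonical labeling. It is a substantive choice: the paper states (and verifies) that descending order \emph{minimizes} $\mathcal{M}_2(|\psi'\rangle)$ over orderings, yielding the tightest upper bound on $\mathcal{M}_2^{(\mathrm{NL})}$ within this ansatz. Your proposal should drop the permutation-invariance claim and instead either take the ordering as part of the definition or argue separately why descending order is optimal.
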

Note that this non-local magic estimate doesn't depend on the choice of stabilizer group $\mathcal{S}$. However, the ordering of eigenvalues does affect its magnitude. Remarkably, one can obtain an exact expression for $\mathcal{M}_2(\{\lambda_i\})$. {A similar expression has also been obtained by \cite{RK_wavefcn} but in a different context. With additional ancillae, it is identical to the one below after applying a global Clifford unitary. }

 \begin{theorem}\label{thm:nlSRE}
     The non-local stabilizer Rényi entropy estimate is 
     \begin{equation}\label{eq:analyticalM}
    \mathcal{M}_2(\{\lambda_i\})=-\log\left(\sum_{i_1,i_2,i_3,i_4=0}^{2^n-1}\sqrt{\lambda_{i_1}\lambda_{i_2}\lambda_{i_3}\lambda_{i_4}\lambda_{i_3\wedge i_2\wedge i_1}\lambda_{i_4\wedge i_2\wedge i_1}\lambda_{i_1\wedge i_3\wedge i_4}\lambda_{i_2\wedge i_3\wedge i_4}}\right),
     \end{equation}
where $\wedge$ denotes the bitwise XOR operation. This expression depends on the ordering of the eigenvalues and reaches its minimum when the eigenvalues are in the descending order, that is, 
$\lambda_i\geq\lambda_j$ for  $i<j$.
 \end{theorem}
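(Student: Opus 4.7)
The plan is to prove the theorem in two stages: first, derive the closed-form expression \cref{eq:analyticalM} via a direct computation of $\mathcal{M}_2(\ket{\psi'})$, exploiting the stabilizer eigenbasis structure; second, establish that descending order of eigenvalues maximizes the sum inside the logarithm (equivalently, minimizes $\mathcal{M}_2$).

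For the derivation, I begin from the convention $\mathcal{M}_2(\ket{\psi}) = -\log\bigl(2^{-N}\sum_P |\langle\psi|P|\psi\rangle|^{4}\bigr)$ with $N = 2n$ qubits. Any $n$-qubit Pauli can be written (up to an overall phase, which drops out in the modulus) as $X_a Z_b$ with $a, b \in \mathbb{F}_2^n$, acting on the stabilizer basis as $X_aZ_b\ket{s_j} = (-1)^{j\cdot b}\ket{s_{j\oplus a}}$. Substituting into $\langle\psi'|P_A\otimes P_B|\psi'\rangle$, the two Kronecker deltas from the $X$-parts force $a_A = a_B \equiv a$, yielding $\sum_j \sqrt{\lambda_j \lambda_{j\oplus a}}(-1)^{j \cdot (b_A\oplus b_B)}$. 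Raising this to the fourth power produces a quadruple sum over $(j_1,j_2,j_3,j_4)$, and then summing over $b_A, b_B$ collapses via Fourier orthogonality $\sum_{b} (-1)^{v\cdot b} = 2^n \delta_{v,0}$ to the constraint $j_1 \oplus j_2 \oplus j_3 \oplus j_4 = 0$. Finally, the invertible reparameterization $(j_1, j_2, j_3, a) \mapsto (i_1, i_2, i_3, i_4)$ with $i_k = j_k$ for $k=1,2,3$ and $i_4 = j_3 \oplus a$ removes the linear constraint and maps the eight surviving $\lambda$-indices to exactly the XOR combinations in the theorem statement. The accumulated prefactor $2^{2n}$ from the Fourier and $(b_A,b_B)$ sums cancels $2^{-N} = 2^{-2n}$ in the definition of $\mathcal{M}_2$, producing the stated formula. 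I will verify the result on two sanity checks: a maximally entangled stabilizer input with flat spectrum $\lambda_i = 2^{-n}$ and a pure product state $\lambda_0 = 1$, both of which should return $\mathcal{M}_2 = 0$.

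For the ordering claim, I first observe two symmetries of the summand $K(\{\lambda_i\}) = \sum_{i_1,\ldots,i_4} \sqrt{\lambda_{i_1}\cdots\lambda_{i_2\oplus i_3\oplus i_4}}$ that reduce the number of inequivalent orderings: it is invariant under simultaneous shifts $i_k \mapsto i_k \oplus v$ for any $v \in \mathbb{F}_2^n$, and under any linear automorphism $\phi \in GL(n,\mathbb{F}_2)$, so $K$ depends only on the $AGL(n,\mathbb{F}_2)$-equivalence class of the labeling. In particular, for $n \le 2$ every permutation of $\{\lambda_i\}$ is affinely equivalent to the descending one and the ordering claim is automatic; the first nontrivial case is $n = 3$, where $8!/|AGL(3,2)| = 30$ inequivalent classes remain. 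The plan is then a transposition/rearrangement argument: for any labeling violating descending order, swap two out-of-order eigenvalues $\lambda_a < \lambda_b$ with $a < b$, partition the terms of $K$ by how many of the eight XOR positions equal $a$ or $b$, and show via an AM-GM-type inequality that the net change is non-negative. The main obstacle is the combinatorial bookkeeping: because of affine invariance no single index enjoys a privileged ``popularity'', so the favorable rearrangement must be identified by pairing terms across XOR positions in a way that is not evident from a naive count. I would first validate the argument numerically for $n = 3$ by enumerating the 30 classes and use the resulting structure to guide an inductive or symmetry-based proof for general $n$.
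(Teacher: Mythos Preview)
Your derivation of the closed-form expression follows the same route as the paper: direct computation of $\mathcal{M}_2(\ket{\psi'})$ using the Pauli matrix elements in a stabilizer eigenbasis. In fact your version is considerably more explicit than the paper's, which simply states ``Substituting the matrix representation of Pauli operators, we find that\ldots'' and jumps to the result. Your Fourier-orthogonality step enforcing $j_1\oplus j_2\oplus j_3\oplus j_4=0$ followed by the invertible reparameterization to $(i_1,\ldots,i_4)$ is exactly the mechanism behind the paper's one-line computation, and your prefactor tracking is correct (the paper's intermediate display actually drops the $1/r^2$ normalization, though the final formula is right). The sanity checks are fine.

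On the ordering claim: the paper does not prove it. The statement that descending order minimizes $\mathcal{M}_2$ is simply asserted in the appendix, and the permutation average that follows is used only to derive the upper bound in the subsequent corollary, not to establish optimality of any particular ordering. So your incomplete rearrangement argument is already more than the paper offers. Your observation that $K$ is invariant under the affine group $AGL(n,\mathbb{F}_2)$ is a genuine structural input the paper does not mention, and it correctly trivializes the cases $n\le 2$. For $n\ge 3$ you are right that the transposition argument is delicate: because of the affine symmetry, swapping $\lambda_a\leftrightarrow\lambda_b$ is equivalent to a relabeling unless the swap breaks the affine orbit of the current assignment, so the ``net change is non-negative'' comparison must be made between inequivalent affine classes rather than between adjacent transpositions. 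Your plan to enumerate the 30 classes at $n=3$ is a reasonable way to identify the right invariant, but be aware that you are attempting to prove something the source leaves open.
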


In \cref{sec:numerics} we present numerical results of $\mathcal{M}_2(\{\lambda_i\})$ for finite-sized physical system. It is helpful to see that the estimate constitutes a non-local magic upper bound. 

\begin{corollary}\label{th:srebound}
Let $\{\lambda_i\}$ be the Schmidt values for $|\psi\rangle_{AB}$ when bipartitioning the system into $A$ and $B$. The non-local stabilizer Rényi entropy  is upper bounded by 
\begin{equation}
\mathcal{M}_2^{NL}(|\psi\rangle_{AB})\leq\mathcal {M}_{2}(\{\lambda_i\})\leq \min\{2S_2(A),4(S_{\text{max}}(A)-S_{1/2}(A))\}
\label{eqn:avgnlsre}
\end{equation}
where $S_{max}(A)=n\log 2$, $S_{\alpha}(A) = S_{\alpha}(\rho_A)$ with $\rho_A=\Tr_{B}[|\psi\rangle\langle\psi|]$. 
\end{corollary}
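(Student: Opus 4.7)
The proof decomposes into two inequalities that I would establish separately.

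The first inequality $\mathcal{M}_2^{NL}(|\psi\rangle_{AB}) \leq \mathcal{M}_2(\{\lambda_i\})$ follows immediately from the definition of non-local magic together with \cref{thm:nlSRE}. Writing the Schmidt decomposition $|\psi\rangle_{AB} = \sum_i\sqrt{\lambda_i}|e_i\rangle_A|f_i\rangle_B$, I would choose local unitaries $V_A, V_B$ sending the Schmidt bases onto a stabilizer product basis $\{|s_i\rangle\}$; then $(V_A\otimes V_B)|\psi\rangle$ is precisely the state $|\psi'\rangle = \sum_i\sqrt{\lambda_i}|s_i\rangle|s_i\rangle$ of \cref{thm:nlSRE}, whose SRE equals $\mathcal{M}_2(\{\lambda_i\})$. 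Since $\mathcal{M}_2^{NL}$ is the minimum of $\mathcal{M}_2$ over bi-local unitaries, it is bounded above by the SRE of this particular representative.

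For the second chain of inequalities, I would work from the Pauli-sum expression $\mathcal{M}_2(|\psi'\rangle) = -\log(d^{-1}\sum_P \langle\psi'|P|\psi'\rangle^4)$ with $d = 2^{2n}$ and Paulis parameterized as $P = X^{a_1}Z^{b_1}\otimes X^{a_2}Z^{b_2}$. A direct computation gives $\langle\psi'|P|\psi'\rangle = \delta_{a_1,a_2}\sum_j\sqrt{\lambda_j\lambda_{j\oplus a_1}}(-1)^{j\cdot(b_1\oplus b_2)}$. Since every summand in $\sum_P\langle\cdot\rangle^4$ is non-negative, restricting the Pauli sum to a well-chosen subfamily produces a lower bound on the stabilizer purity $\Xi_2 := d^{-1}\sum_P\langle\cdot\rangle^4$ and hence an upper bound on $\mathcal{M}_2$.

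The $2S_2(A)$ bound comes from keeping only $Z$-type Paulis $(a_1=a_2=0)$: Parseval on the Hadamard transform over $b = b_1\oplus b_2$, paired with a single Cauchy--Schwarz step, yields $\Xi_2 \geq e^{-2S_2(A)}$. Equivalently, restricting the analytic formula \cref{eq:analyticalM} to $i_3=i_1$ and $i_4=i_2$ collapses the XOR'd indices and the subsum equals $(\sum_i\lambda_i^2)^2 = e^{-2S_2(A)}$ by inspection. The $4(S_{max}(A)-S_{1/2}(A))$ bound instead requires the $X$-type Paulis $(b_1=b_2=0)$: setting $z_a := \sum_j\sqrt{\lambda_{j\oplus a}\lambda_j} \geq 0$, a trivial relabeling gives the identity $\sum_a z_a = (\sum_i\sqrt{\lambda_i})^2$, and the power-mean inequality $\sum_a z_a^4 \geq (\sum_a z_a)^4/r^3$ with $r = 2^n$, together with the multiplicity $r$ of pairs $(b_1, b_2)$ giving $b=0$, yields $\Xi_2 \geq (\sum_i\sqrt{\lambda_i})^8/r^4$. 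This is equivalent to $\mathcal{M}_2 \leq 4\log r - 8\log\sum_i\sqrt{\lambda_i} = 4(S_{max}(A)-S_{1/2}(A))$.

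The main obstacle is locating the right Pauli subfamily for each bound: the $2S_2$ bound is already visible from the closed-form analytic expression of \cref{thm:nlSRE}, but the $S_{1/2}$ bound appears to genuinely require the Pauli-sum viewpoint, so that the observables $X^a\otimes X^a$ with expectation $z_a$ --- whose $a$-sum naturally reproduces $(\sum_i\sqrt{\lambda_i})^2$ --- become accessible. Once the subfamilies are chosen, the remainder of each bound is a two-line application of elementary mean inequalities, and taking the $\min$ is immediate.
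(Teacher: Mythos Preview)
Your argument is correct, and it differs substantively from the paper's own proof. The paper introduces an intermediate permutation-averaged quantity $\overline{\mathcal{M}}_2$ (the average of the analytic expression in \cref{eq:analyticalM} over all reorderings of the eigenvalues), establishes $\mathcal{M}_2(\{\lambda_i\})\leq\overline{\mathcal{M}}_2$, and then bounds $\overline{\mathcal{M}}_2$: for the $2S_2$ bound it drops all but the $\lambda_i^4$ and $\lambda_i^2\lambda_j^2$ terms of the averaged sum; for the $4(S_{\max}-S_{1/2})$ bound it expands $\overline{\mathcal{M}}_2$ in powers of $e^{-S_0}$ and argues that each coefficient is non-negative. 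Your approach instead bounds $\mathcal{M}_2(\{\lambda_i\})$ directly by restricting the Pauli sum $\Xi_2=d^{-1}\sum_P\langle\psi'|P|\psi'\rangle^4$ to a well-chosen subfamily and applying a single mean inequality. The $Z$-type restriction plus Parseval and Cauchy--Schwarz recovers $2S_2$ (equivalently, your $i_3=i_1,\,i_4=i_2$ restriction of \cref{eq:analyticalM} does this in one line), while the $b_1=b_2$ restriction plus power-mean yields $4(S_{\max}-S_{1/2})$. Your route is shorter, bypasses the averaging step entirely, and in particular makes the $S_{1/2}$ bound fully rigorous without the order-by-order coefficient check the paper invokes. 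The trade-off is that the paper's permutation average $\overline{\mathcal{M}}_2$ is itself of some independent interest (it is expressed explicitly in terms of $S_{1/2},S_2,S_3,S_4$), whereas your argument extracts only the two specific bounds needed for the corollary.
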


See~\cref{app:estimate} for the proof. Based on this result, we discuss two regimes. One is when the spectrum is almost flat. In this regime, the bipartite non-local magic is upper bounded by,
\begin{equation}
    \mathcal{M}_2(\{\lambda_i\})\leq 4(S_{\text{max}}(A)-S_{1/2}(A)).
\end{equation}
This has the interpretation as anti-flatness. 
Although the measure of magic is different, we see that this gives a much tighter bound compared to (\cref{th:relstab}) in the near-flat regime.



\begin{remark}\label{rmk:1}
    Haar random states have small bipartite non-local magic. 
\end{remark}
We see that $\mathcal{M}_2\sim S_0-S_{1/2}$ whereas the lower bound from relative stabilizer entropy measure in (\cref{th:relstab}) is $S_0-S_1$, both are bounded by a constant for Haar random states\cite{avgrenyi} --- for small $\alpha$, $\dim A\ll \dim B = m$, $S_0-S_{\alpha}\leq \frac{\alpha}{2} +O(1/m^2)$.
This is somewhat surprising because Haar random states are magic rich and have non-trivial total magic~\cite{white2020mana,Liu_2022}. However, the magic sustained by their bipartite entanglement is small even though local magic in any subregion $A$ with $|A|\gg |B|$ can be large.  

Another limit is when $S_0(A)\gg S_{1/2}(A)$, which applies for quantum field theory. In this regime the magic is approximated by the second R\'enyi entropy, 
\begin{equation}
\mathcal{M}_2(\{\lambda_i\})\leq 2S_2(A). 
\end{equation}
As we shall see in~\cref{sec:CFT}, this is consistent with our MERA intuition for conformal field theories.


\subsubsection{Smoothed magic}
The concept of magic and its bound, as discussed earlier, are applicable to systems with finite dimensions. However, in quantum field theory, the Hilbert space has an infinite dimension. In this case, the bounds given by max entropy in~\cref{th:relstab} can easily be divergent. To produce a non-trivial bound, it is imperative to introduce the `smoothed magic', defined as
\begin{equation}
    M_{RS}^{\epsilon}(\rho):= \min_{\Vert\chi-\rho\Vert<\epsilon}M_{RS}(\chi),
\end{equation}
as well as the `smoothed non-local magic', defined as
\begin{equation}
\sma{\rho_{AB}}:= \min_{\Vert\chi_{AB}-\rho_{AB}\Vert<\epsilon}M_{RS}^{(NL)}(\chi_{AB}).
\end{equation}

For this, a smoothed version of~\cref{th:relstab} holds.

\begin{theorem}\label{th:smoothed}
Let $\rho_{AB}$ be a pure state, then 
\begin{equation}
     S_{max}^{\epsilon}(\rho_A)-(1-\epsilon)^{-1}S(\rho_A)\leq \sma{\rho_{AB}}\leq \log d \lceil   S_{max}(A)^{\epsilon}/\log d  \rceil.
     \label{eqn:smoothinequality}
\end{equation}
\end{theorem}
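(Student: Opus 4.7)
The proof will lift the unsmoothed inequalities of Theorem \ref{th:relstab} to the smoothed setting by combining the monotonicity of trace distance under partial trace with a purification argument on one side, and a continuity bound for the von Neumann entropy on the other.

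For the upper bound, the plan is to start from a state $\chi_A$ that achieves the minimum in the definition of $S_{max}^{\epsilon}(\rho_A)$, so $\|\chi_A - \rho_A\|_1 < \epsilon$ and $\log \rank(\chi_A) = S_{max}^{\epsilon}(\rho_A)$. Invoking Uhlmann's theorem (together with the Fuchs--van de Graaf inequalities relating trace distance and fidelity), one can construct a purification $\chi_{AB}$ of $\chi_A$ that is within $\epsilon$ of $\rho_{AB}$ in trace distance, so that $\chi_{AB}$ is a feasible candidate in the $\sma{\rho_{AB}}$ minimization. Applying the unsmoothed upper bound from Theorem \ref{th:relstab} to this purification gives
\begin{equation*}
\sma{\rho_{AB}} \leq M_{RS}^{(NL)}(\chi_{AB}) \leq \log d \lceil S_{max}(\chi_A)/\log d\rceil = \log d \lceil S_{max}^{\epsilon}(\rho_A)/\log d\rceil,
\end{equation*}
which is the claimed upper bound.

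For the lower bound, let $\chi_{AB}$ be a (near-)optimal state in the definition of $\sma{\rho_{AB}}$, so $\|\chi_{AB}-\rho_{AB}\|_1 < \epsilon$ and $M_{RS}^{(NL)}(\chi_{AB}) = \sma{\rho_{AB}}$. Monotonicity of the trace norm under the partial trace gives $\|\chi_A - \rho_A\|_1 < \epsilon$, and feeding $\chi_{AB}$ into the unsmoothed lower bound of Theorem \ref{th:relstab} yields
\begin{equation*}
\sma{\rho_{AB}} = M_{RS}^{(NL)}(\chi_{AB}) \geq S_{max}(\chi_A) - S(\chi_A).
\end{equation*}
Since $\chi_A$ is $\epsilon$-close to $\rho_A$, its rank satisfies $\log \rank(\chi_A) \geq S_{max}^{\epsilon}(\rho_A)$ by the definition of the smoothed max-entropy. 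What remains is the entropy comparison $S(\chi_A) \leq (1-\epsilon)^{-1} S(\rho_A)$, which combined with the previous two displays gives the stated lower bound.

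The main technical obstacle is the last step. A naive Fannes--Audenaert estimate is only additive, $|S(\chi_A) - S(\rho_A)| \leq \epsilon \log d_A + h(\epsilon)$, and would not produce the multiplicative factor $(1-\epsilon)^{-1}$ that appears in the theorem. The natural way to recover this factor is to adopt the standard smoothed-entropy convention in which the minimization runs over sub-normalized states $\chi$ with $\tr(\chi)\geq 1-\epsilon$: renormalizing $\chi_A/\tr(\chi_A)$ then produces precisely the $(1-\epsilon)^{-1}$ rescaling of entropy, via the identity $S(\chi_A) = \tr(\chi_A)\, S(\chi_A/\tr(\chi_A)) - \tr(\chi_A)\log \tr(\chi_A)$ together with joint concavity of entropy. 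Making this step rigorous---and checking that the purification argument in the upper bound is compatible with the same smoothing convention---is where most of the real work lies; the rest of the argument is then a direct reduction to Theorem \ref{th:relstab}.
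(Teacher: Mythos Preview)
Your upper-bound argument is essentially the paper's: pick a marginal $\chi_A$ achieving $S_{max}^{\epsilon}(\rho_A)$, purify it close to $\rho_{AB}$ via Uhlmann, and apply Theorem~\ref{th:relstab}.

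The lower bound, however, has a genuine gap where you flag it. The inequality $S(\chi_A)\le(1-\epsilon)^{-1}S(\rho_A)$ is simply false for an \emph{arbitrary} state $\chi_A$ that is $\epsilon$-close to $\rho_A$: nothing prevents $\chi_A$ from having entropy $S(\rho_A)+O(\epsilon\log d_A)$ (Fannes), which can vastly exceed $(1-\epsilon)^{-1}S(\rho_A)$ when $S(\rho_A)\ll\log d_A$. Your proposed fix via sub-normalized smoothing does not rescue this, because the minimization defining $\sma{\rho_{AB}}$ is over \emph{normalized} states $\chi_{AB}$; you do not get to choose the marginal $\chi_A$ to be sub-normalized.

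The paper avoids this by inserting one more step you do not take. After reaching $\sma{\rho_{AB}}\ge S_{max}(\chi_A)-S(\chi_A)$ for the optimal $\chi_A$, it further lower-bounds by the infimum of $S_{max}(\chi)-S(\chi)$ over the \emph{entire} $\epsilon$-ball around $\rho_A$. It then argues that this infimum is attained at the state of minimal rank---the truncated-and-renormalized state $\chi_A^{\epsilon}$ obtained by discarding the tail eigenvalues of $\rho_A$ with total weight at most $\epsilon$ and rescaling. For \emph{that particular} state the $(1-\epsilon)^{-1}$ factor is exact: dropping a subset of eigenvalues with weight $\epsilon$ and entropy contribution $S_\epsilon\ge 0$ and renormalizing gives $S(\chi_A^{\epsilon})=(S(\rho_A)-S_\epsilon)/(1-\epsilon)\le(1-\epsilon)^{-1}S(\rho_A)$. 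So the multiplicative factor comes from renormalization of a \emph{specific} truncated state, not from a continuity bound on a generic nearby state. Your sub-normalized intuition is pointing at the right mechanism, but it needs to be deployed at the level of this second minimization over $\chi$ near $\rho_A$, not by altering the smoothing convention for $\sma{\cdot}$.
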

where the smoothed maximal entropy is defined as
\begin{equation}
    S_{max}^{\epsilon}(\rho)=\min_{\Vert\chi-\rho\Vert<\epsilon} \ln(\rank(\chi)). 
\end{equation}

The proof can be found in~\cref{proofth3}. As stated by~\cref{th:smoothed}, the magic is bounded from below by the difference between the smoothed maximal entropy and the entanglement entropy which is finite for conformal field theories.


Before we discuss CFTs, let's examine the physical meaning of the lower bound, which is the difference between smoothed max entropy and the von Neumann entropy. In addition to the anti-flatness of the entanglement spectrum, this quantifies the \textit{compressibility} of a state~\cite{Akers_2021}. Consider a bipartition of the state followed by a Schmidt decomposition. It is compressible if we can still well approximate it after truncating the less significant singular values, as one is wont to do in DMRG. Here we can show that this compressibility gap which lower bounds smoothed non-local magic also quantifies the classical hardness in simulations. 

Let us build up the following argument by recalling that there are states such as random stabilizer states that have high entanglement but are classically easy to simulate. Since magic and entanglement capture two orthogonal perspectives of quantumness, are there quantum states with low entanglement but high magic that are classically hard to simulate? Na\"ively, a state with high magic will have high stabilizer rank, which is hard in the stabilizer simulation. On the other hand, the system will be classically hard using the tensor network method if it has high bond dimensions. However, a folk theorem in tensor network suggests that the small entanglement would permit one to capture the state with a tensor network whose bond dimension only needs scale as $O(e^S)$ where $S$ is the von Neumann entropy of each subsystem. Therefore, it seems that as long as the entanglement is small, there should be a classically easy description. However, one needs to be careful in applying this lore as it is known that there exist states with low entanglement but classically complex\cite{ge_area_2016}. 

More precisely, consider an exact MPS description of a state with low entanglement such that for any subsystem $A$, $S(A)\ll \log \rank(\rho_A)$ where we have taken the bond dimension $\chi$ to be sufficiently large to reproduce the state exactly. One would be tempted to truncate the singular values and only keep $O(e^S)$ as suggested by the folk theorem. However, we note that this truncation is only justified if there exists $\sigma_A$ with  $||\sigma_A-\rho_A||<\epsilon$ such that  $$\Delta S_{\epsilon}(A)=S_{\rm max}^{\epsilon}(A)-S(A)=\log \rank(\sigma_A)-S(A)$$ is small compared to $S(A)$. In other words, the state is (perfectly) compressible. Such is indeed true for conformal field theory ground states, where $\Delta S_{\epsilon}\sim \sqrt{S\log(1/\epsilon)}$. However, this is not true in general. For example, consider a state $\ket{\psi}=\frac{1}{\mathcal{N}}\sum_{i=1}^r\frac{1}{\sqrt{i}}\ket{i}_{A}\ket{i}_{B}$. The smoothed max entropy $S_{max}^{\epsilon}=\log{r}-\epsilon$, while entanglement entropy is nearly half of it, $S\approx \frac{1}{2}\log{r}$. In holography, \cite{Akers_2021} argued that certain state mixtures, such as that of a thermal and pure state, can lead to an arbitrarily large $\Delta S_{\epsilon}(A)$. 

Therefore, high incompressibility on the one hand forces high tensor network bond dimension, and on the other necessitates high non-local magic from \cref{th:smoothed}. This implies that such states will be classically hard to simulate and sharpens a general empirical observation that relates magic to classical complexity. Furthermore, if $S\ll\Delta S_{\epsilon}\approx S_{\rm max}^{\epsilon}$, then both the lower and upper bounds are approximately saturated. In this case, the smoothed non-local magic provides a quantitative measure for the classical hardness of simulating such states. Treating magic as roughly as the log of stabilizer rank and bond dimension, one would expect that classical resource of order $O(\exp(M_{RS}^{(NL,\epsilon)}))$ will be needed. It then follows that \textit{simulating such incompressible states is classical hard} using not only the tensor network method but also the stabilizer and the Monte Carlo method\cite{magicMC} by having large magic\footnote{A careful treatment of this problem should include other formulations of non-magical processes like Gaussian states, matchgates with \cite{bu_stabilizer_2023}.}.

\section{Magic in conformal field theories}\label{sec:CFT}
Having seen a quantitative connection between anti-flatness in entanglement spectrum and non-local magic, we examine these relations in the context of CFTs.

\subsection{Geometric Interpretation through tensor networks}\label{section:MERA}

To figure out (1) how much non-local magic there is in a CFT and (2) how such magic connected with the anti-flatness of the entanglement spectrum, it is instructive to first look at an intuitive picture from tensor networks.
For CFTs with small central charges, MERAs have been shown to be good approximations of CFT ground states $|\psi\rangle_{AB}$. By extension, it also holds for products of CFTs with small central charge. Let us assume that the tensor network structure remains valid for arbitrary degree of accuracy, perhaps at the cost of increasing the bond dimension, which is supported by empirical observations. Using this as a heuristic, we deduce that local unitary deformations $U_A\otimes U_B$ unitarily ``distills'' an entangled state\footnote{For simplicity, we will refer to such a process as distillation from now on. However, one should note that it is distinct from the usual entanglement distillation of perfect Bell pairs unless otherwise specified.} between $A$ and $B$ with log Schmidt rank that is upper bounded by the number of edge cuts (green triangle~\cref{fig:mera}). As such cuts scale linearly with the size of the RT surface, i.e. the boundary of the triangle in the bulk, the log of Schmidt rank must be bounded by the number of edge cuts which scale the same way as entanglement entropy in this case. This implies that the non-local Magic in CFTs should scale linearly with the area of the Ryu-Takayanagi surface.

In fact, we can almost identify the optimal distilled state that has the same Schmidt rank but removes the unnecessary zero eigenvalues by just acting mostly unitaries and disentanglers. Let the blue rectangles at the bottom layer be the CFT ground state but at a more coarse-grained scale. As the ground state is an IR fixed point, we can simply use it as an input in the MERA to generate the more fine-grained state on the top layer. We can decompose the IR state by Schmidt decomposition, and the Schmidt rank is upper bounded by the bond dimension (here the bond is represented as 3 edges on each side of the blue rectangle on the bottom assuming the worst case volume law upper bound in the central region). By acting disentanglers and isometries in B followed by global unitaries on the subsystems represented by the blue rectangles on two sides of the bottom layer in the the IR ground state, we ``pushed'' the subregion $B$ on the top layer to the red boundary by acting $U_B$, which now lives on $\partial B$. Similarly, acting $U_A$ by running unitaries and isometries in $A$, we remove the bulk dof and push $A$ to $\partial A$, marked by the orange lines. The qubits on $\partial A$ and $\partial B$ are entangled and their entanglement spectrum is unchanged since we only applied unitaries $U_A\otimes U_B$. 

Let $|\partial A|,|\partial B|$ be the number of edges in $\partial A,\partial B$. The distilled state $|\chi\rangle_{AB}$ is not optimal as $ \log rank(\rho_A)\leq |\partial A|<|\partial B|$, where we would have hoped that $|\partial A|=|\partial B|=\log rank(\rho_A)$, but this is close enough as $|\partial A|$ and $|\partial B|$ both scale as $\sim \log |A|$ as the blue region that contributed to suboptimality in the edge cuts is only constant (AdS) radius away from the true minimal surface. The number of edge cuts on the bottom layer is always bounded as the width of the MERA past causal cone is bounded. This means that $|\partial A|+const = |\partial B|$ where the constant depends on the network discretization. For binary MERA it stabilizes at 4 to 6 sites. 

As a consequence, after the removal of local magic in each wedge, the remaining magic is tied up into the interface between $A$ and $B$ marked by the region shaded in blue. Since the amount of magic generically scales linearly with the number of tensors, for a contiguous subregion $A$, the size of the interface region scales as $\log |A|$, which is proportional to the size of the RT surface up to subleading corrections. Note that while it may be possible to lower the size of this interface region further by local unitary transformations, the number of sites it involves must be lower bounded by the minimum number of edges connecting $A$ and $B$, which is given by $|\partial A|$. Heuristically, consider a case where all the bipartite entanglement between $A$ and $B$ have been ``distilled'' into imperfect Bell pairs connecting the two complementary regions. Then for any additive measure of magic, the non-local magic should scale linearly with the number of such imperfect Bell states, which is again proportional to the length of the minimal surface.

\begin{figure}
    \centering
    \includegraphics[width=\linewidth]{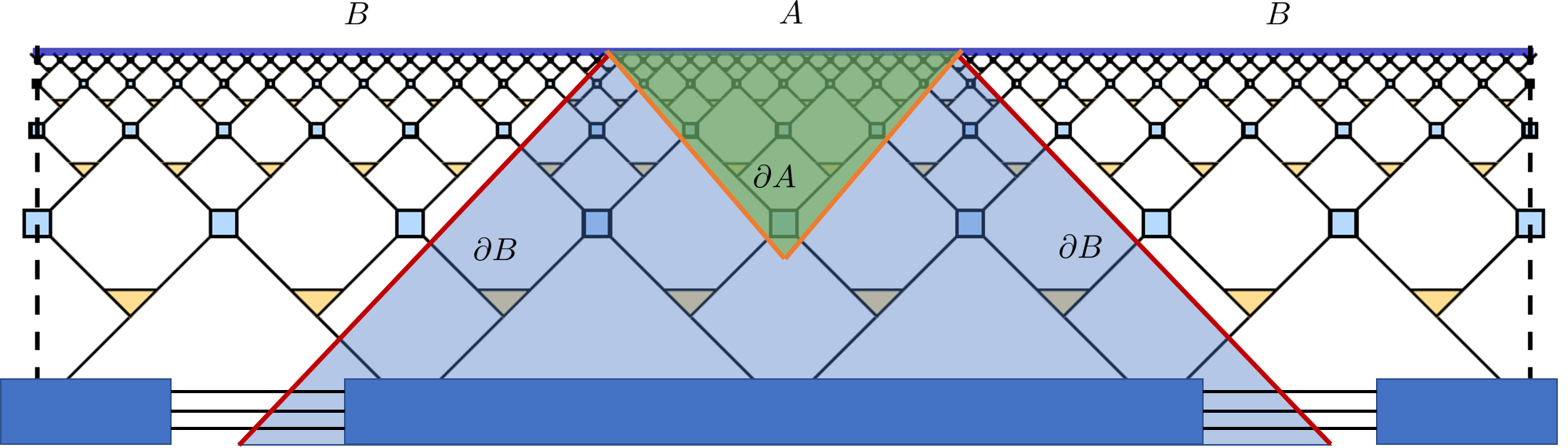}
    \caption{Green: past causal domain of dependence of $A$, Union of blue and green: past causal cone of $A$. Time runs upwards.}
    \label{fig:mera}
\end{figure}

More precisely, we observe that the tensor network of the interface region is a matrix product state (MPS) (\cref{figD1b}) by removing the local unitaries. The remaining structure contributes to the non-local magic is shown in \cref{figD1a}. Each matrix in the chain consists of two isometries and one disentangler. 

\begin{equation}\label{eq:MPS}
\ket{\chi}_{AB}=M_1^{(s_1r_1)}M_2^{(s_2r_2)}\cdots M_n^{(s_nr_n)}\ket{s_1s_2\cdots s_n}_{\partial A}\ket{r_1r_2\cdots r_n}_{\partial B}.
\end{equation}

We expect the magic of this state to scale linearly with the number of matrices, namely the size of the light-cone, $\min\{|\partial A|, |\partial B|\}$. Indeed, we verify that magic scales as volume of the MPS, which is $\sim \log|A|$.

\tikzset{pics/matrix/.style args={#1#2#3}{
    code={
       \def\s{0.4}
    \draw[fill=orange] (-#1,0)--(-#1/2,-\s)--(-#1*1.5,-\s)--cycle;
   \draw[fill=orange] (#1,0)--(#1/2,-\s)--(#1*1.5,-\s)--cycle;
  \draw[fill=green] (-#1/2,-2*\s)--(#1/2,-2*\s)--(#1/2,-3*\s)--(-#1/2,-3*\s)--cycle;
  \draw[thick] (-#1/2,-\s)--(-#1/2,-2*\s);
  \draw[thick] (#1/2,-\s)--(#1/2,-2*\s);
  \draw[thick] (-#1*1.5,-\s)--(-#1*1.5,-2*\s);
  \draw[thick] (#1*1.5,-\s)--(#1*1.5,-2*\s);
  \draw[thick] (#1,0)--(#1,\s);
    \draw[thick] (-#1,0)--(-#1,\s);
    \draw[thick] (#1/2,-3*\s)--(#1/2,-4*\s);
    \draw[thick] (-#1/2,-3*\s)--(-#1/2,-4*\s);
    \draw (-#1*1.5,-2*\s) node[below]{#2};
    \draw (#1*1.5,-2*\s) node[below]{#3};
    }
},
    pics/square/.style args={#1#2}{
        code={
            \draw[fill=yellow] (-0.3,0) rectangle (0.3,-0.6);
            \draw[thick] (0,0)--(0,0.5);
            \draw[thick] (0,-0.6)--(0,-1.1);
            \draw[thick] (-0.3,-0.3)--(-1.1,-0.3);
            \draw[thick] (0.3,-0.3)--(1.1,-0.3);
            \draw (-1.1,-0.3) node[left] {#1};
            \draw (1.1,-0.3) node[right] {#2};
        }
    }
}

\begin{figure}[ht]
    \centering
    \begin{subfigure}[b]{0.4\textwidth}
        \begin{tikzpicture}
        \def\s{0.4}
        \pic at (0,0) {matrix={2}{$i_4$}{$j_4$}};
        \pic at (0,-4*\s) {matrix={1}{$i_3$}{$j_3$}};
        \pic at (0,-8*\s) {matrix={0.5}{$i_2$}{$j_2$}};
        \draw (-0.5/2,-12*\s) node[below]{$i_1$};
        \draw (0.5/2,-12*\s) node[below]{$j_1$};
    \end{tikzpicture}
    \caption{}
    \label{figD1a}
    \end{subfigure}
    \hfill
    \begin{subfigure}[b]{0.4\textwidth}
        \centering 
        \begin{tikzpicture}
          \pic at (0,0) {square={$i_1$}{$j_1$}};
          \pic at (0,1.1) {square={$i_2$}{$j_2$}};
          \pic at (0,2.2) {square={$i_3$}{$j_3$}};
          \pic at (0,3.3) {square={$i_4$}{$j_4$}};
        \end{tikzpicture}
        \caption{}
        \label{figD1b}
    \end{subfigure}
    \caption{The MERA tensor network with local unitaries removed produces a tensor network (a) that contributes to non-local magic. It can be written as an MPS (b) for which its stabilizer R\'enyi entropy can be computed numerically.}
    \label{fig:MPS}
\end{figure}
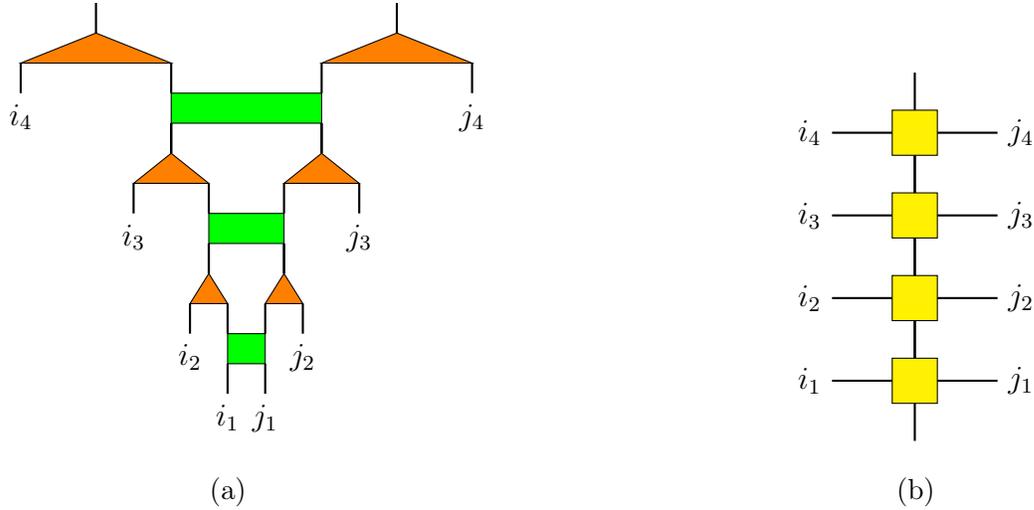

For the numerics, we pick a random realization of the disentangler and isometry and use them for each each layer, in accordance of the scaling invariance. Then we present two estimations of the non-local magic of this MPS state. The first estimation we calculate the lower bound of the stabilizer relative entropy, given in \eqref{eq:relstab}. We present the result in~\cref{figD2a}. Both max entropy and the von Neumann entropy scale linearly with the number of matrices, and thus linearly with respect to the RT surface area and the entanglement entropy $S(A)$ of the boundary theory of subregion $A$. 
In the second estimation we calculate the entanglement spectrum of this state, denoting the set of eigenvalues as $\{\lambda_i\}$. Then we construct a state with the same entanglement spectrum and compute its non-local magic estimate using~\ref{estimate_prop}. 

\begin{figure}[H]
    \centering
    \begin{subfigure}[b]{0.4\textwidth}
    \scalebox{0.75}{
    \begin{tikzpicture}
        \begin{axis}[
            xlabel={$|\partial A|$}, 
            ylabel={$S$}, 
            tick align=inside, 
            legend style={at={(0.2,0.9)},anchor=north}, 
            grid={both}
        ]
        \addplot[
            color=orange,
            mark=*, 
            smooth 
        ] coordinates {
            (3,2.079)
            (4,2.773)
            (5,3.466)
            (6,4.159)
            (7,4.852)
            (8,5.545)
            (9,6.238)
            (10,6.931)
            (11,7.624)
            (12,8.317)
        };
        \addlegendentry{\(S_{\text{max}}\)}
        \addplot[
            color=blue,
            mark=*, 
            smooth 
        ] coordinates {
            (3, 0.67)
             (4, 0.767)
             (5, 0.883)
             (6, 0.965)
             (7, 1.057)
             (8, 1.152)
             (9, 1.243)
             (10, 1.339)
             (11, 1.434)
             (12, 1.528)
        };
        \addlegendentry{\(S_{max}-S\)}
        \end{axis}
    \end{tikzpicture}
    }
    \caption{}
    \label{figD2a}
\end{subfigure}
\hfill
\begin{subfigure}[b]{0.4\textwidth}
    \centering
    \scalebox{0.75}{
    \begin{tikzpicture}
        \begin{axis}[
            xlabel={$|\partial A|$}, 
            ylabel={$\mathcal{M}_2$}, 
            tick align=inside, 
            legend style={at={(0.2,0.9)},anchor=north}, 
            grid={both}
        ]
        \addplot[
            color=blue,
            mark=*, 
            smooth 
        ] coordinates {
            (3,0.752299)
            (4,1.04593)
            (5,1.50105)
            (6,2.13376)
            (7,2.603420)
            (8,3.086418)
            (9,3.481793)
        };
        \addlegendentry{\(\mathcal{M}_{2}\)}
        \end{axis}
    \end{tikzpicture}
    }
    \caption{}
    \label{figD2b}
\end{subfigure}
\caption{(a) Maximal entropy and von Neumann entropy of the MPS as a function of the number of sites in the state. (b) Stabilizer R\'enyi entropy of the state with small local magic.}
\end{figure}
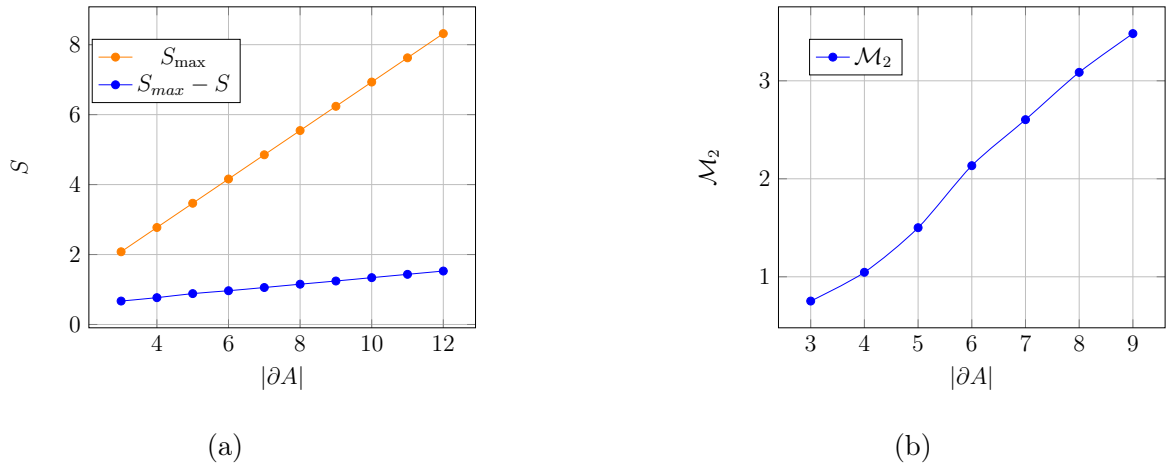


The above intuition is also apparent when we think of the holographic QECC perspective of AdS/CFT where it is given by a code that corrects erasures approximately. In this case, complementary \cite{HarlowRT} approximate erasure correction promises the existence of recover unitaries supported on each subregion, such that

\begin{equation}
    U_A U_{A^c}|\tilde{\psi}\rangle_{AA^c} U_A^{\dagger} U_{A^c}^{\dagger} \approx |\psi\rangle |\chi\rangle,
\end{equation}
where $|\psi\rangle$ captures the bulk encoded information while $|\chi\rangle$ is the entanglement mediating erasure correction\footnote{We note that this heuristic argument is only expected to hold approximately in the leading order $N$ for holographic CFTs as that is when they function as approximate erasure correction codes with recovery errors suppressed by $1/N$. }. This is the case for certain of holographic QECC toy models, such as instances of approximate holographic Bacon-Shor codes\cite{ABSC}, (see e.g. Fig. 41 or generally when the skewing is small,) and \cite{Hayden_2016} when imperfectly entangled pairs are used in place of maximally entangled states when building the tensor network. The latter is known to be able to produce the correct single-interval CFT entanglement entropy but fails at the multi-interval level.

The states $|\chi\rangle$ 
now play the role of the interface tensor in MERA. To leading order, the R\'enyi entropies associated with $|\chi\rangle$ again scale as the area of the extremal surface where it is explicitly given by the number of entangled states across the bulk cut. Therefore, by (\cref{eqn:avgnlsre}), its magic as measured by the stabilizer R\'enyi entropy should scale as the area of the RT surface for any additive magic measure as one simply has to count the number of such approximate Bell pairs.
Strictly speaking, this again yields an upper bound as we do not optimize over all basis choices.

We also expect this linear dependence between non-local magic and entanglement to extend to non-critical states with translational invariance. For example, it is well-known that the truncated MERA (or more simply an MPS) can describe ground state of gapped phases where entanglement can increase slightly as the system grows, but plateaus at sufficiently large $|A|$\footnote{See \cite{Molina-Vilaplana:2012rmg} for example.}.  The non-local magic again resides on the edges connecting $A$ and $B$. From the bond counting argument, we again arrives at $\mathcal{M}^{NL}\sim S(A)$ provided $\mathcal{M}$ is an additive measure of magic.

Having established that the non-local magic should scale as the entropy, let's now examine how it should be connected with the anti-flatness of the entanglement spectrum. Consider again the distilled state $|\chi\rangle$ which we represent as an MPS shared between $A$ and $B$ with local magic removed. Recall that since $\chi_A=\tr_B[|\chi\rangle\langle\chi|] = U_A\rho_A U_A^{\dagger}$, their entanglement spectrum and anti-flatness are identical, i.e., $\mathcal{F}(\chi_A)=\mathcal{F}(\rho_A)$. 

For simplicity, let's approximate the MPS as entangled states $|\phi\rangle^{\otimes n}$ where each state $|\phi\rangle$ can be thought of as imperfect entangled pairs\footnote{For concreteness, one can think of them as imperfect Bell pairs. More generally, they do not have to be qubits, but a pair of qudits that are not maximally entangled. }. These states have volume law entanglement across $A$ and $B$.  We expect this to be a reasonable approximation because MPS with constant bond dimension limits the amount of correlation to be short-ranged, making them close to the tensor products which one can think of as a mean field approximation. We further support this claim with numerical evidence in Appendix~\ref{app:MPS}. 

It is known from \cite{flatness} that for a typical state $|\phi\rangle_{ab}$ with stabilizer linear entropy $M_{\rm lin}(|\phi\rangle)$ chosen from its Clifford orbit $\{\Gamma_{ab}|\phi\rangle, \forall \Gamma\in \mathcal{C}_2\}$, the anti-flatness of the entanglement spectrum of $\phi=\Tr_{b}[|\phi\rangle\langle\phi|]$ when cutting the state in half is given by
\begin{equation}
    \mathcal{F}(\phi) = c(d,d_a) M_{\rm lin}(\phi),
    \label{eqn:Cliffordtypical}
\end{equation}
where $c(d_a,d)=\frac{(d^2-d_a^2)(d_a^2-1)}{(d^2-1)(d+2)}$. Note that the second stabilizer R\'enyi entropy is related to the stabilizer linear entropy $\mathcal{M}_2=-\log (1-M_{\rm lin})$. Here $M_{\rm lin}\leq 1-2(d+1)^{-1}$ with $d$ being the dimension of the Hilbert space of $|\phi\rangle$ and $d_a=\sqrt{d}$ the Hilbert space dimension of subsystem $a$. Applying \eqref{eqn:Cliffordtypical} to each pair, we would have 
\begin{equation}
\begin{split}
    \mathcal{M}_2(|\phi\rangle) \approx &\frac{\mathcal{F}(\phi_a)}{c(d,d_a)}\\
    \approx &- \frac{\pur(\phi_a)^2}{c(d,d_a)}\left.\frac{\partial\tilde{S}_m(\phi_a)}{\partial m}\right\vert_{m=1},
\end{split}
\end{equation}
where we have applied the approximation \eqref{eqn:approxrenyid} to rewrite the R.H.S. in terms of additive anti-flatness measure. 
Based on the assumption of distillation $|\chi\rangle\approx |\phi\rangle^{\otimes n}$ (See~\cref{app:MPS} for discussion) and additivity of $\mathcal{M}_2$ we conclude that
\begin{equation}\label{eqn:minsurfNLM}
\begin{split}
 \mathcal{M}_2(\ket{\chi})&\approx n\frac{\pur(\phi_a)^2}{c(d,d_a)}|\partial_m\tilde{S}_m(\phi_a)||_{m=1}\\
 &=\kappa |\partial_m\tilde{S}_m(\chi_A)||_{m=1}.
\end{split}
\end{equation}
where $\kappa = \pur(\phi_a)^2/c(d,d_a)$  is some coefficient that depends on the details of $|\phi\rangle$. Note that $\partial_m\tilde{S}_m$ is negative in our convention.

Since we argued that $\mathcal{M}_2(|\chi\rangle) \approx \mathcal{M}_2^{NL}(|\psi\rangle_{AB})$, 
\begin{equation}
    \mathcal{M}_2^{NL}(|\psi\rangle_{AB}) \approx\kappa |\partial_m\tilde{S}_m(\chi_A)||_{m=1}.
    \label{eqn:NLMflatness}
\end{equation}
for a CFT ground state. Therefore, if one uses the computable stabilizer R\'enyi entropy $\mathcal{M}_2$, we predict that the non-local magic scales linearly with both entanglement entropy and the additive anti-flatness $\partial_m\tilde{S}_m(\chi_A)|_{m=1}$ of the entanglement spectrum across $A$ and $B$. 
In \cref{sec:numerics}, we numerically verify that this is indeed the case for an Ising CFT.

\begin{remark}
    Notably, our reasoning for the \emph{area-law} scaling of exact magic, i.e. $\mathcal{M}^{(NL)}\sim S(A)$, and the non-flatness relation (\ref{eqn:NLMflatness}) does not rely on any particular properties of the CFT. Indeed, as long as one can concentrate the magic from $A$ to $\partial A$ using the kind of unitary distillation procedure, this area law would also hold for gapped system with entanglement area law. The anti-flatness relation is also similar to the area law scaling of entanglement spread\cite{Anshu_2022}.
\end{remark}

\subsection{Non-local magic in Ising model}\label{sec:numerics}
 
In this section we provide numerical computations to support our prior conjectures. We begin with the $1 + 1D$ transverse field Ising model, with Hamiltonian given by
\ba\label{IsingHamiltonian}
H_{\rm Ising}=-\cos(\theta)\sum_i Z_i Z_{i+1}-\sin(\theta)\sum_i X_i.
\ea
We particularly consider \eqref{IsingHamiltonian} near its critical point, when $\theta = \pi/4$.

This model is described by an Ising CFT in the thermodynamic limit at criticality, that is when $\theta=\frac{\pi}{4}$. For our analysis, we perform exact diagonalization to determine the ground state of a 26-site spin chain with periodic boundary condition. Subsequently, the state is partitioned into two contiguous segments: $A$ and $\bar{A}$. To numerically estimate the non-local magic related to this bipartition, we use the Stabilizer Rényi Entropy measure $\mathcal{M}_2(\{\lambda_i\})$, as defined in~\cref{section:estimate}. Importantly, this measure relies solely on the entanglement spectrum, which we obtain through Singular Value Decomposition (SVD).

At the critical point, we compute the  $\mathcal{M}_2(\{\lambda_i\})$ measure while progressively increasing the size of the subsystem $|A|$. The plot of $\mathcal{M}_2$ is present in \cref{fig:magic_CFTa}. 

\begin{figure}
    \centering
    \begin{subfigure}[b]{0.4\textwidth}
    \scalebox{0.75}{
    \begin{tikzpicture}
        \begin{axis}[
            xlabel={$|A|$}, 
            ylabel={$\mathcal{M}_2$}, 
            tick align=inside, 
            title={$\mathcal{M}_2$ v.s. $|A|$},
            legend style={at={(0.2,0.9)},anchor=north}, 
            grid={both}
        ]
        \addplot[
            color=blue,
            mark=*, 
            smooth 
        ] coordinates {
            (1, 0.2759145827825882)
             (2, 0.28228738186896263)
             (3, 0.29336691658661357)
             (4, 0.30462444579473225)
             (5, 0.3146928627439401)
             (6, 0.32331311085491654)
             (7, 0.3305184956722919)
             (8, 0.3364093242384839)
             (9, 0.34108791009708955)
             (10, 0.3446403128445661)
             (11, 0.3471323166256153)
             (12, 0.34860965777546515)
             (13, 0.3490991613456804)
        };
        \end{axis}
    \end{tikzpicture}
    }
    \caption{}
   \label{fig:magic_CFTa}
\end{subfigure}
\hfill
\begin{subfigure}[b]{0.4\textwidth}
    \centering
    \scalebox{0.75}{
    \begin{tikzpicture}
        \begin{axis}[
            xlabel={$S$}, 
            ylabel={$\mathcal{M}_2$}, 
            tick align=inside, 
            title={$\mathcal{M}_2$ v.s. $S$ },
            legend style={at={(0.2,0.9)},anchor=north}, 
            grid={both}
        ]
        \addplot[
            color=orange,
            mark=*, 
            smooth 
        ] coordinates {
            (0.47365496896115133, 0.2759145827825882)
             (0.5918813439084066, 0.28228738186896263)
             (0.6578352071417765, 0.29336691658661357)
             (0.7030228571069249, 0.30462444579473225)
             (0.7365380200480853, 0.3146928627439401)
             (0.7623514890669422, 0.32331311085491654)
             (0.782554333578475, 0.3305184956722919)
             (0.7983723766168666, 0.3364093242384839)
             (0.8105729284602428, 0.34108791009708955)
             (0.8196538174273882, 0.3446403128445661)
             (0.8259401351911264, 0.3471323166256153)
             (0.8296366915149734, 0.34860965777546515)
             (0.8308567814379734, 0.3490991613456804)
        };
        \end{axis}
    \end{tikzpicture}
    }
    \caption{}
    \label{fig:magic_CFTb}
\end{subfigure}
\hfill
\begin{subfigure}[b]{0.4\textwidth}
    \centering
    \scalebox{0.75}{
    \begin{tikzpicture}
        \begin{axis}[
            xlabel={$\partial_n\tilde{S}_n|_{n=1}$}, 
            ylabel={$\mathcal{M}_2$}, 
            tick align=inside, 
            title={$\mathcal{M}_2$ v.s. $\partial_n\tilde{S}_n|_{n=1}$ },
            legend style={at={(0.2,0.9)},anchor=north}, 
            grid={both}
        ]
        \addplot[
            color=black,
            mark=*, 
            smooth 
        ] coordinates {
            (0.38844044873524697, 0.2759145827825882)
             (0.4080570661929262, 0.28228738186896263)
             (0.4258068254534285, 0.29336691658661357)
             (0.44083017326285073, 0.30462444579473225)
             (0.4533650556446711, 0.3146928627439401)
             (0.46375639975735045, 0.32331311085491654)
             (0.4722992639562782, 0.3305184956722919)
             (0.47922104982083263, 0.3364093242384839)
             (0.4846911425107271, 0.34108791009708955)
             (0.4888330874178818, 0.3446403128445661)
             (0.4917343044615911, 0.3471323166256153)
             (0.493452921810052, 0.34860965777546515)
             (0.49402218734097664, 0.3490991613456804)
        };
        \end{axis}
    \end{tikzpicture}
    }
    \caption{}
    \label{fig:magic_CFTc}
\end{subfigure}
\label{fig:magic_CFT}
\caption{(a) Plot of non-local Stabilizer R\'enyi Entropy $\mathcal{M}_2$ v.s. subsystem size |A|;  (b) Plot of  $\mathcal{M}_2$ v.s. Entropy $S$. (c) Plot of  $\mathcal{M}_2$ v.s. the anti-flatness based on entanglement capacity. Model is at critical point, with 26 lattice sites.}
\end{figure}
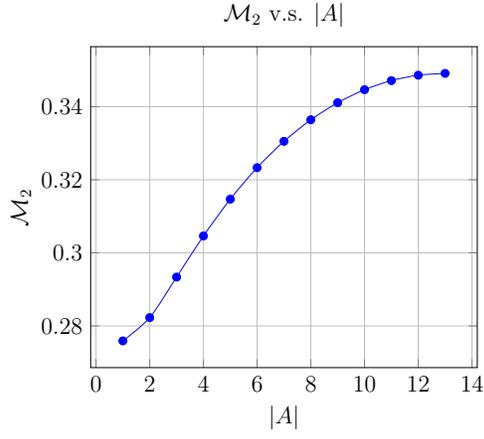
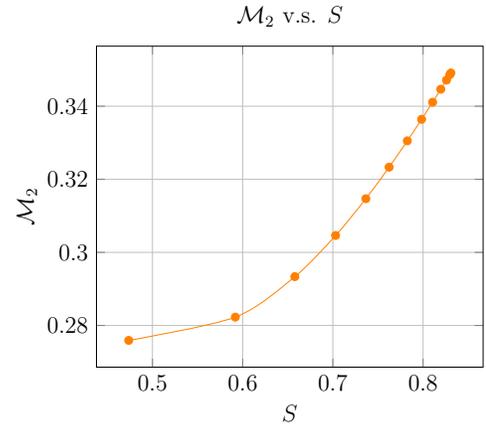
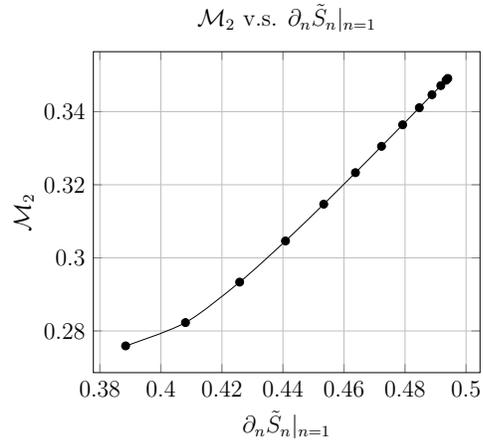

In \cref{fig:magic_CFTb}, we observe that the non-local magic scales similarly to entropy when we increase the size of the subregion $|A|$, particularly beyond 3 qubits. This indicates that the non-local magic in the CFT scales logarithmically with $|A|$, in agreement with our analysis presented in the MERA framework in~\cref{section:MERA}. Additionally, \cref{fig:magic_CFTc} demonstrates the proportional relationship between non-local magic and anti-flatness, supporting the estimation in   \eqref{eqn:NLMflatness}. 

A similar analysis is applied to study the model away from the critical point, as illustrated in~\cref{fig:offcritical}.  We define the parameter $g=\theta-\frac{\pi}{4}$, where $g$ quantifies the deviation from criticality. In this regime, we observe that the non-local magic reaches a plateau at a certain point, mirroring the behavior observed in entropy. 

\begin{figure}
    \centering
    \begin{tikzpicture}
        \begin{axis}[
            xlabel={|A|}, 
            ylabel={$\mathcal{M}_2$}, 
            tick align=inside, 
            title={$\mathcal{M}_2$ v.s. $|A|$},
            legend style={at={(0.2,1.2)},anchor=north}, 
            grid={both}
        ]
        \addplot[
            color=blue,
            mark=*, 
            smooth 
        ] coordinates {
            (1, 0.28407617773515886)
              (2, 0.34397774537887427)
              (3, 0.38751332633761937)
              (4, 0.4190970491310785)
              (5, 0.4421015965056568)
              (6, 0.45896702985427884)
              (7, 0.4713795896484691)
              (8, 0.4804991524617711)
              (9, 0.48712558430690295)
              (10, 0.4918089362542348)
              (11, 0.4949206945520738)
              (12, 0.49669922942297173)
              (13, 0.49727781169858104)
        };
        \addlegendentry{\(g=0.05\)}
        \addplot[
            color=orange,
            mark=*, 
            smooth 
        ] coordinates {
            (1, 0.2622746594000573)
              (2, 0.32291548379551155)
              (3, 0.35835308837480184)
              (4, 0.37928016995541014)
              (5, 0.3917771326269049)
              (6, 0.3993340830642571)
              (7, 0.4039561283143258)
              (8, 0.4068079872130534)
              (9, 0.4085740871637637)
              (10, 0.40965959874648833)
              (11, 0.41030367773714116)
              (12, 0.41064345911210953)
              (13, 0.41074948178076964)
        };
        \addlegendentry{\(g=0.1\)}
        \addplot[
            color=green,
            mark=*, 
            smooth 
        ] coordinates {
            (1, 0.21230120003858938)
              (2, 0.2542448918104177)
              (3, 0.27134898072257646)
              (4, 0.2784874066332622)
              (5, 0.28152527488291956)
              (6, 0.2828413759502512)
              (7, 0.28342048270374093)
              (8, 0.28367879657447115)
              (9, 0.2837954188472098)
              (10, 0.2838485595687859)
              (11, 0.28387271597683295)
              (12, 0.2838830717667294)
              (13, 0.28388594897480607)
        };
        \addlegendentry{\(g=0.15\)}
        \addplot[
            color=red,
            mark=*, 
            smooth 
        ] coordinates {
            (1, 0.18148373123245282)
              (2, 0.21215783087646567)
              (3, 0.22223636904171473)
              (4, 0.22565129476412513)
              (5, 0.2268364651797759)
              (6, 0.22725618915490875)
              (7, 0.22740733176914482)
              (8, 0.22746252974583842)
              (9, 0.22748294302369282)
              (10, 0.22749058018764026)
              (11, 0.22749345708531876)
              (12, 0.2274945078106672)
              (13, 0.2274947725650576)
        };
        \addlegendentry{\(g=0.2\)}
        \end{axis}
    \end{tikzpicture}
    \caption{ Plot of non-local Stabilizer R\'enyi Entropy $\mathcal{M}_2$ v.s. subsystem size |A|. The model parameter $g=\theta-\frac{\pi}{4}$ is adjusted to position the model away from its critical point.}
    \label{fig:offcritical}
\end{figure}
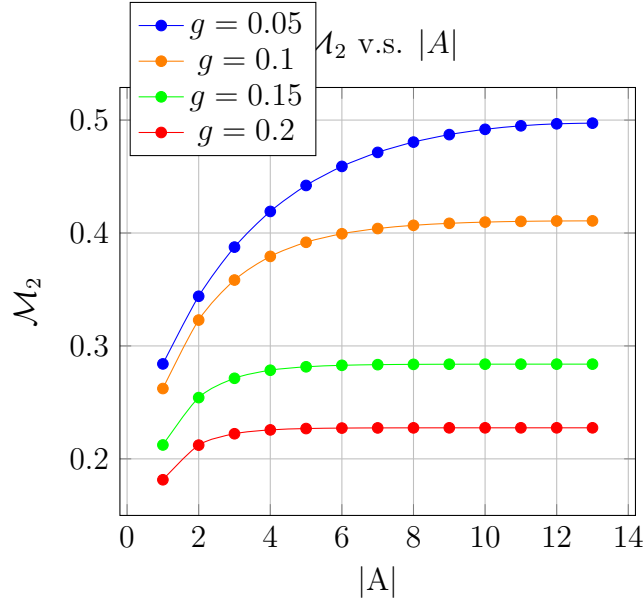

In our final analysis, we keep the size of the subregion $|A|$ constant and track the changes in non-local magic as the model approaches and passes through the critical point. As depicted in \cref{fig:transition}, a distinct peak in non-local magic is observed. Notably, this peak shifts closer to the critical point ($g=0$) and becomes increasingly sharp as the total system size ($n$) is enlarged. These observations suggest the potential presence of a phase transition in the non-local magic measure. \cref{fig:comparison} presents a comparison of non-local magic and anti-flatness against the model parameter $g$, revealing a consistent trend as $g$ changes. 

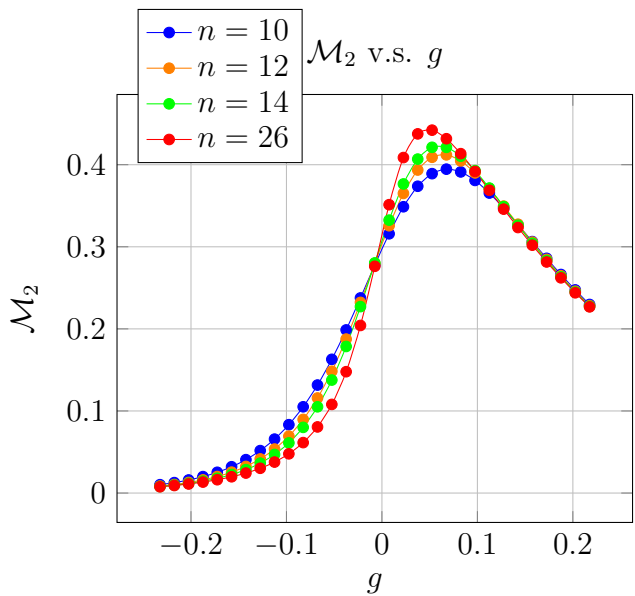
\begin{figure}
    \centering
    \begin{tikzpicture}
        \begin{axis}[
            xlabel={$g$}, 
            ylabel={$\mathcal{M}_2$}, 
            title={$\mathcal{M}_2$ v.s. $g$ },
            tick align=inside, 
            legend style={at={(0.2,1.2)},anchor=north}, 
            grid={both}
        ]
        \addplot[
            color=blue,
            mark=*, 
            smooth 
        ] coordinates {
            (-0.2324768366025517, 0.010179027166607601)
             (-0.2174768366025518, 0.012700677518416869)
             (-0.20247683660255178, 0.015914929866794172)
             (-0.18747683660255177, 0.020024212927263287)
             (-0.17247683660255175, 0.02528791586737733)
             (-0.15747683660255174, 0.032034550069390655)
             (-0.14247683660255173, 0.04067294569286466)
             (-0.1274768366025517, 0.05169871325524065)
             (-0.11247683660255181, 0.06568911429733332)
             (-0.09747683660255169, 0.08327519508063366)
             (-0.08247683660255178, 0.10507537532282413)
             (-0.06747683660255177, 0.13157250926463754)
             (-0.05247683660255176, 0.16292287590693638)
             (-0.037476836602551744, 0.19870953869174304)
             (-0.02247683660255173, 0.2376998361713844)
             (-0.0074768366025518285, 0.27772582130124907)
             (0.007523163397448296, 0.31583345964325205)
             (0.022523163397448198, 0.348777714878471)
             (0.03752316339744832, 0.37375895572844275)
             (0.052523163397448225, 0.3891051689553718)
             (0.06752316339744824, 0.3945830900350399)
             (0.08252316339744825, 0.39122799220737287)
             (0.09752316339744815, 0.38084991622703784)
             (0.11252316339744828, 0.36548868544182306)
             (0.12752316339744818, 0.3470148330281814)
             (0.1425231633974483, 0.32693104318822336)
             (0.1575231633974482, 0.3063316700703259)
             (0.17252316339744822, 0.2859505784494)
             (0.18752316339744823, 0.26624121060662875)
             (0.20252316339744825, 0.247456486000428)
             (0.21752316339744826, 0.22971483114393784)
        };
        \addlegendentry{\(n=10\)}
        \addplot[
            color=orange,
            mark=*, 
            smooth 
        ] coordinates {
            (-0.2324768366025517, 0.008465978246463156)
             (-0.2174768366025518, 0.010407193006500729)
             (-0.20247683660255178, 0.012875327118831839)
             (-0.18747683660255177, 0.016039921641527576)
             (-0.17247683660255175, 0.0201294554803197)
             (-0.15747683660255174, 0.025450769720905188)
             (-0.14247683660255173, 0.03241275236797553)
             (-0.1274768366025517, 0.04155263056466259)
             (-0.11247683660255181, 0.053559580056458625)
             (-0.09747683660255169, 0.06928326208862791)
             (-0.08247683660255178, 0.08970284247116354)
             (-0.06747683660255177, 0.11581559301769835)
             (-0.05247683660255176, 0.14839133812423738)
             (-0.037476836602551744, 0.1875538019511966)
             (-0.02247683660255173, 0.23223365361545356)
             (-0.0074768366025518285, 0.27971406882100586)
             (0.007523163397448296, 0.3256706714548758)
             (0.022523163397448198, 0.3650407584940913)
             (0.03752316339744832, 0.39354680139542364)
             (0.052523163397448225, 0.4090658305660494)
             (0.06752316339744824, 0.41201429206413853)
             (0.08252316339744825, 0.40466397043774915)
             (0.09752316339744815, 0.39001853899237493)
             (0.11252316339744828, 0.37090936249006595)
             (0.12752316339744818, 0.3495543491942982)
             (0.1425231633974483, 0.32748593479147553)
             (0.1575231633974482, 0.30566298236122613)
             (0.17252316339744822, 0.2846304790946891)
             (0.18752316339744823, 0.2646616263994785)
             (0.20252316339744825, 0.24586326932071415)
             (0.21752316339744826, 0.22824671738457047)
        };
        \addlegendentry{\(n=12\)}
        \addplot[
            color=green,
            mark=*, 
            smooth 
        ] coordinates {
            (-0.2324768366025517, 0.007933413987301907)
             (-0.2174768366025518, 0.009643632352382655)
             (-0.20247683660255178, 0.011791929928751497)
             (-0.18747683660255177, 0.014519609408522049)
             (-0.17247683660255175, 0.018021681370604066)
             (-0.15747683660255174, 0.02256839312394047)
             (-0.14247683660255173, 0.028534509077290136)
             (-0.1274768366025517, 0.03643744218320802)
             (-0.11247683660255181, 0.04698286638780856)
             (-0.09747683660255169, 0.06110965787415222)
             (-0.08247683660255178, 0.08001058854894082)
             (-0.06747683660255177, 0.10507569246146076)
             (-0.05247683660255176, 0.13766195830438846)
             (-0.037476836602551744, 0.17856395871997527)
             (-0.02247683660255173, 0.22713655856876372)
             (-0.0074768366025518285, 0.28033989294322587)
             (0.007523163397448296, 0.33248637767314865)
             (0.022523163397448198, 0.37655598026422266)
             (0.03752316339744832, 0.40684951518439183)
             (0.052523163397448225, 0.4212065461968845)
             (0.06752316339744824, 0.4211640128304521)
             (0.08252316339744825, 0.41036777854896495)
             (0.09752316339744815, 0.39276294067321815)
             (0.11252316339744828, 0.37155015442930295)
             (0.12752316339744818, 0.34892560405388956)
             (0.1425231633974483, 0.3262306498393502)
             (0.1575231633974482, 0.3042075712158339)
             (0.17252316339744822, 0.283222806200963)
             (0.18752316339744823, 0.2634239992990375)
             (0.20252316339744825, 0.24483906960710441)
             (0.21752316339744826, 0.22743469159561436)
        };
        \addlegendentry{\(n=14\)}
        \addplot[
            color=red,
            mark=*, 
            smooth 
        ] coordinates {
            (-0.2324768366025517, 0.007637717074741366)
             (-0.2174768366025518, 0.009174338805064993)
             (-0.20247683660255178, 0.011052959394144994)
             (-0.18747683660255177, 0.013365130617364369)
             (-0.17247683660255175, 0.01623280968600742)
             (-0.15747683660255174, 0.01982129786435423)
             (-0.14247683660255173, 0.024359537181350557)
             (-0.1274768366025517, 0.03017316361630484)
             (-0.11247683660255181, 0.037740180797763534)
             (-0.09747683660255169, 0.047786889149565595)
             (-0.08247683660255178, 0.061453076176694446)
             (-0.06747683660255177, 0.08056229243796553)
             (-0.05247683660255176, 0.1079877179114993)
             (-0.037476836602551744, 0.14782590237321785)
             (-0.02247683660255173, 0.20418188817273544)
             (-0.0074768366025518285, 0.2762491075509873)
             (0.007523163397448296, 0.351294063986569)
             (0.022523163397448198, 0.4087064418576)
             (0.03752316339744832, 0.437567104391568)
             (0.052523163397448225, 0.4421015965056568)
             (0.06752316339744824, 0.4316509592969976)
             (0.08252316339744825, 0.41346368643960796)
             (0.09752316339744815, 0.3917771326269049)
             (0.11252316339744828, 0.36885259515848734)
             (0.12752316339744818, 0.3458776892433747)
             (0.1425231633974483, 0.32347725558175994)
             (0.1575231633974482, 0.3019745954831864)
             (0.17252316339744822, 0.28152527488291956)
             (0.18752316339744823, 0.262188711010384)
             (0.20252316339744825, 0.24396854935148515)
             (0.21752316339744826, 0.2268364651797759)
        };
        \addlegendentry{\(n=26\)}
        \end{axis}
    \end{tikzpicture}
    \caption{ Plot of non-local Stabilizer R\'enyi Entropy $\mathcal{M}_2$ v.s. parameter $g=\theta-\frac{\pi}{4}$, at $|A|=5$ with increasing total spins $n$. }
    \label{fig:transition}
\end{figure}

\begin{figure}[ht]
    \centering
    \begin{subfigure}[b]{0.4\textwidth}
        \scalebox{0.8}{
        \begin{tikzpicture}
            \begin{axis}[
            xlabel={$g$}, 
            ylabel={$\mathcal{M}_2$}, 
            tick align=inside, 
            legend style={at={(0.2,1.2)},anchor=north}, 
            grid={both}
        ]
        \addplot[
            color=blue,
            mark=*, 
            smooth 
        ] coordinates {
            (-0.2324768366025517, 0.007145304321954218)
             (-0.2174768366025518, 0.008487960605390173)
             (-0.20247683660255178, 0.010101485878992943)
             (-0.18747683660255177, 0.012053517023108064)
             (-0.17247683660255175, 0.014435402927989302)
             (-0.15747683660255174, 0.017374563446235013)
             (-0.14247683660255173, 0.02105582840800511)
             (-0.1274768366025517, 0.02575961508929479)
             (-0.11247683660255181, 0.03193201207252851)
             (-0.09747683660255169, 0.04031495948275232)
             (-0.08247683660255178, 0.05218588101359276)
             (-0.06747683660255177, 0.06977904131169534)
             (-0.05247683660255176, 0.09693602789547702)
             (-0.037476836602551744, 0.13974677692867024)
             (-0.02247683660255173, 0.20574693233055968)
             (-0.0074768366025518285, 0.29755202131258474)
             (0.007523163397448296, 0.3990203245055871)
             (0.022523163397448198, 0.47501942013468546)
             (0.03752316339744832, 0.5045213886126098)
             (0.052523163397448225, 0.49727781169858104)
             (0.06752316339744824, 0.4722337165709027)
             (0.08252316339744825, 0.44161526289047964)
             (0.09752316339744815, 0.41074948178076964)
             (0.11252316339744828, 0.3814622520861778)
             (0.12752316339744818, 0.354203227070393)
             (0.1425231633974483, 0.3289563794953849)
             (0.1575231633974482, 0.30557374117264996)
             (0.17252316339744822, 0.28388594897480607)
             (0.18752316339744823, 0.2637345106763872)
             (0.20252316339744825, 0.24497865343083947)
             (0.21752316339744826, 0.2274947725650576)
        };
        \end{axis}
        \end{tikzpicture}
        }
        \caption{}
    \end{subfigure}
    \hfill
    \begin{subfigure}[b]{0.4\textwidth}
       \scalebox{0.8}{
        \begin{tikzpicture}
            \begin{axis}[
            xlabel={$g$}, 
            ylabel={$\frac{\mathcal{F}(\rho)}{\mathrm{Pur}(\rho)^2}$}, 
            tick align=inside, 
            legend style={at={(0.2,1.2)},anchor=north}, 
            grid={both}
        ]
        \addplot[
            color=green,
            mark=*, 
            smooth 
        ] coordinates {
            (-0.2324768366025517, 0.0017906550155215046)
             (-0.2174768366025518, 0.002128430358248331)
             (-0.20247683660255178, 0.002535271954815998)
             (-0.18747683660255177, 0.0030292921303933847)
             (-0.17247683660255175, 0.003635877087823563)
             (-0.15747683660255174, 0.004392447420408408)
             (-0.14247683660255173, 0.005357636636801908)
             (-0.1274768366025517, 0.006629761987891159)
             (-0.11247683660255181, 0.008385239763439722)
             (-0.09747683660255169, 0.010960203971579534)
             (-0.08247683660255178, 0.015024513163008176)
             (-0.06747683660255177, 0.021939845122460465)
             (-0.05247683660255176, 0.0343972961772237)
             (-0.037476836602551744, 0.05700884023110594)
             (-0.02247683660255173, 0.09447793318209577)
             (-0.0074768366025518285, 0.1426692425225171)
             (0.007523163397448296, 0.18137932190436926)
             (0.022523163397448198, 0.1939601185649182)
             (0.03752316339744832, 0.18574544217475258)
             (0.052523163397448225, 0.16953391029563253)
             (0.06752316339744824, 0.15267413044338285)
             (0.08252316339744825, 0.1375274912018791)
             (0.09752316339744815, 0.12439700438451141)
             (0.11252316339744828, 0.11302272776847992)
             (0.12752316339744818, 0.10307817688079396)
             (0.1425231633974483, 0.09429054631087523)
             (0.1575231633974482, 0.08645135784048381)
             (0.17252316339744822, 0.0794032334518217)
             (0.18752316339744823, 0.07302573383562753)
             (0.20252316339744825, 0.06722470772402801)
             (0.21752316339744826, 0.061924963221174356)
        };
        \end{axis}
        \end{tikzpicture}
        }
        \caption{}
    \end{subfigure}
     \vskip\baselineskip
     \begin{subfigure}[b]{0.4\textwidth}
        \scalebox{0.8}{
         \begin{tikzpicture}
            \begin{axis}[
            xlabel={$g$}, 
            ylabel={$|\partial_n \tilde{S}_n||_{n=1}$}, 
            tick align=inside, 
            legend style={at={(0.2,1.2)},anchor=north}, 
            grid={both}
        ]
        \addplot[
            color=black,
            mark=*, 
            smooth 
        ] coordinates {
            (-0.2324768366025517, 0.005280910267068737)
             (-0.2174768366025518, 0.006510627625492099)
             (-0.20247683660255178, 0.00803596046963862)
             (-0.18747683660255177, 0.009939502333841627)
             (-0.17247683660255175, 0.012334930862737499)
             (-0.15747683660255174, 0.015384434278946342)
             (-0.14247683660255173, 0.019329900398019965)
             (-0.1274768366025517, 0.02455095011158324)
             (-0.11247683660255181, 0.031675969945498386)
             (-0.09747683660255169, 0.04179745283457983)
             (-0.08247683660255178, 0.056886945731048945)
             (-0.06747683660255177, 0.08056106196909345)
             (-0.05247683660255176, 0.11931511666420035)
             (-0.037476836602551744, 0.18365927393584316)
             (-0.02247683660255173, 0.2854831656147101)
             (-0.0074768366025518285, 0.4227007543631155)
             (0.007523163397448296, 0.5574088985771578)
             (0.022523163397448198, 0.6380384975874119)
             (0.03752316339744832, 0.6538151915333436)
             (0.052523163397448225, 0.629612387418702)
             (0.06752316339744824, 0.5898810310801145)
             (0.08252316339744825, 0.5471976045328252)
             (0.09752316339744815, 0.5062319411694376)
             (0.11252316339744828, 0.4682415484170781)
             (0.12752316339744818, 0.43330194413223755)
             (0.1425231633974483, 0.4011593535907676)
             (0.1575231633974482, 0.37151086480571227)
             (0.17252316339744822, 0.3440817036848223)
             (0.18752316339744823, 0.3186383619697557)
             (0.20252316339744825, 0.29498445375401916)
             (0.21752316339744826, 0.27295368302457707)
        };
        \end{axis}
         \end{tikzpicture}
    }
         \caption{}
     \end{subfigure}
     \hfill
     \begin{subfigure}[b]{0.4\textwidth}
        \scalebox{0.8}{
         \begin{tikzpicture}
         \begin{axis}[
            xlabel={$g$}, 
            ylabel={$S$}, 
            tick align=inside, 
            legend style={at={(0.2,1.2)},anchor=north}, 
            grid={both}
        ]
        \addplot[
            color=red,
            mark=*, 
            smooth 
        ] coordinates {
            (-0.2324768366025517, 0.7075195993413678)
             (-0.2174768366025518, 0.7098622966766157)
             (-0.20247683660255178, 0.7126093020806886)
             (-0.18747683660255177, 0.7158476059617306)
             (-0.17247683660255175, 0.7196900786617996)
             (-0.15747683660255174, 0.7242859044010616)
             (-0.14247683660255173, 0.7298358887171571)
             (-0.1274768366025517, 0.7366142225423622)
             (-0.11247683660255181, 0.7449965966544939)
             (-0.09747683660255169, 0.7554868402188267)
             (-0.08247683660255178, 0.7687086181565803)
             (-0.06747683660255177, 0.7852568413515487)
             (-0.05247683660255176, 0.8051400532946233)
             (-0.037476836602551744, 0.8263312358056276)
             (-0.02247683660255173, 0.8423208036318784)
             (-0.0074768366025518285, 0.8413255742297105)
             (0.007523163397448296, 0.8131684153975448)
             (0.022523163397448198, 0.7604474831216399)
             (0.03752316339744832, 0.6968134434919527)
             (0.052523163397448225, 0.634554879455417)
             (0.06752316339744824, 0.5791007799488848)
             (0.08252316339744825, 0.5311801664511724)
             (0.09752316339744815, 0.48976840976322106)
             (0.11252316339744828, 0.45356850203610183)
             (0.12752316339744818, 0.4214828918603129)
             (0.1425231633974483, 0.39268085746758397)
             (0.1575231633974482, 0.366551789226895)
             (0.17252316339744822, 0.34264488394343834)
             (0.18752316339744823, 0.32062101358587114)
             (0.20252316339744825, 0.30021898829499505)
             (0.21752316339744826, 0.2812328944584596)
        };
        \end{axis}
         \end{tikzpicture}
         }
         \caption{}
     \end{subfigure}
    \caption{(a) Plot of non-local magic $\mathcal{M}_2$ v.s. $g$. (b) Plot of anti-flatness v.s. $g$. (c) Plot of anti-flatness $|{\partial_n\tilde{S}_n}|$ v.s. $g$. (d) Plot of entropy $S$ v.s. $g$. All of these plots are based on data for fixed subregion size $|A|=13$. } 
    \label{fig:comparison}
\end{figure}
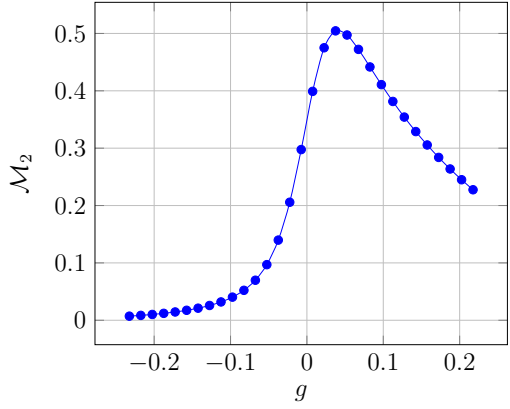
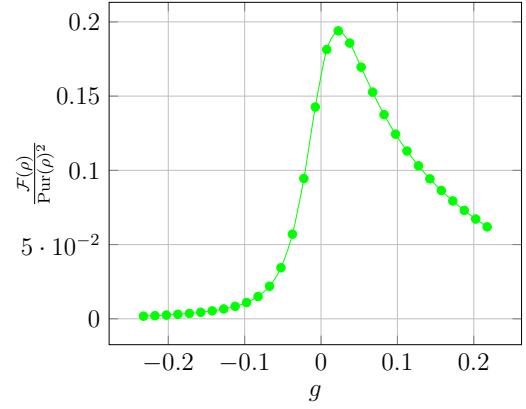
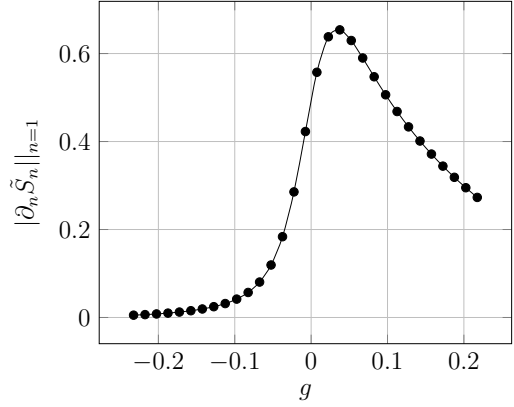
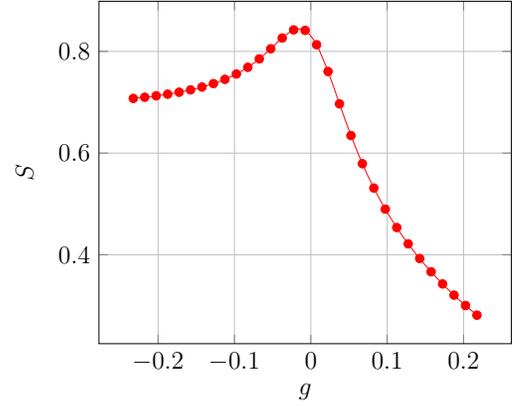

However, it is important to point out that non-local magic is not simply the entanglement entropy despite their similarity in this example. For instance, the ratio between non-local magic and entanglement depends on $g$. \cref{fig:NonLocalMagicSurface} gives a complete picture of $\mathcal{M}_2/S$ for a $14$-qubit Ising chain, as we vary both the parameter $g$ and the subsystem cardinality $|A|$. We observe that $\mathcal{M}_2/S$ maximizes for angles slightly above the critical point ($g = 0$) due to finite size effect, in agreement with \cref{fig:transition,fig:comparison}. 
\begin{figure}
    
\begin{center}
		\begin{overpic}[width=10cm]{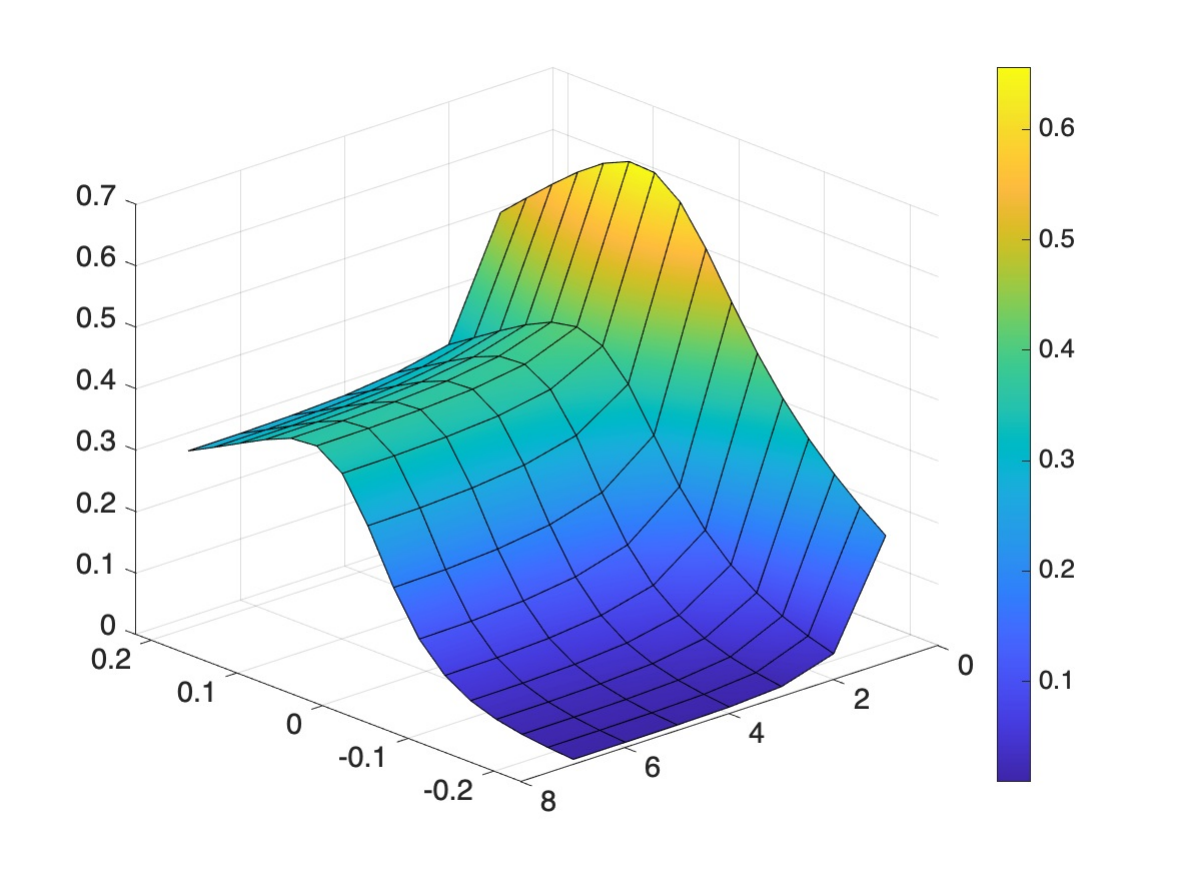}
		\put (-1,40) {$\frac{\mathcal{M}_2}{S}$}
		\put (20,10) {$g$}
		\put (66,7) {$|A|$}
        \end{overpic}
        \caption{Surface illustrating the ratio of non-local magic to entanglement entropy in $n=14$ Ising CFT. We plot $\mathcal{M}_2/S$ as a function of parameter $g = \theta - \pi/4$ and subsystem size $|A|$. The value $\mathcal{M}_2/S$ reaches a maximum just above criticality ($g = 0$), before decreasing and ultimately plateauing.}
        \label{fig:NonLocalMagicSurface}
\end{center}
\end{figure}

The plateau in  \cref{fig:NonLocalMagicSurface} suggests a linear scaling between $M_2$ and $S$, as subsystem $|A|$ grows large. As we see that the linear behavior is already apparent at $n=14$. Recall from the tensor network picture, the linear scaling between non-local magic and entanglement entropy is expected, however, the density of non-local magic can vary depending on the shape of the spectrum. This is reflected in the figure as the asymptotic proportionality constant between $M_2$ and $S$ depends on $\theta$. 

Another instance where non-local magic distinguishes itself from entanglement can be found in the context of symmetry breaking. For $g<0$, the Ising model enters the symmetry-breaking phase in the thermodynamic limit where the non-local magic further displays a transition. We refer interested readers to~\cref{app:symmbreak} for details of this discussion.

\subsection{Smoothed Magic from Entropic Bounds}
However, beyond tensor network and finite-size numerics, we recognize that many of the entropic quantities we have examined so far are generally infinite in conformal field theories and need regularization. It makes more sense to look at smoothed magic, which can be bounded by smoothed max-entropies. On the one hand, it generally leads to finite quantities. On the other hand, for any reasonable simulation of a CFT, it is far more relevant to produce approximations of a target state up to a small precision parameter $\epsilon$ instead of the exact state defined by the theory.

In \cite{Bao:2019}, it was shown that under the assumption that the R\'enyi entropies satisfy $S_{n} = \frac{s_n}{G_N}$, the smoothed maximal entropy is directly proportional to the following expression:

\begin{equation}
    S_{max}^{\epsilon}=S+\sqrt{\log{\frac{1}{\epsilon}}S}+O(c^0), 
\end{equation}
where $S$ denotes the von Neumann entropy of the state. A similar expression is obtained by \cite{maxminentropy} using the explicit spectrum for a 1+1D CFT by Calabrese and Lefevre\cite{CFTspec}. This entropy is proportional to the central charge $c$ of holographic CFT, which is assumed to be  large. The leading-order correction to this expression is at $O(1)$, making it negligible relative to the primary term. 

With this in mind, we can estimate the lower bound for magic as follows:
\begin{equation}
    M^{(NL,\epsilon)}_{RS}(\rho_{AA^c})\geq S_{max}^{\epsilon}(A)-S(A)=\sqrt{S(A)\log\frac{1}{\epsilon}}+O(\epsilon c). 
\end{equation}

We assume the parameter $\epsilon$ to fall within the range $e^{-c}\ll\epsilon\ll c^{-1}$. 

Recall that for a given bipartition $A$ and $A^c$ in a holographic CFT, the von Neumann entropy of subregion $A$ to leading order is equal to the area $\mathcal{A}$ of the extremal surface anchored to the entangling boundary $\partial A$ divided by $4G_N$ according to the Ryu-Takayanagi formula\cite{Ryu_2006}. Thus, we can formally represent the lower bound of non-local magic as:
\begin{equation}
    M^{(NL,\epsilon)}_{RS}(\rho_{AA^c})\geq \sqrt{\log{\frac{1}{\epsilon}}}\sqrt{\frac{\mathcal{A}}{4G_N}},
\end{equation}
where $G_N$ denotes the bulk gravitational constant, which is related to the central charge of the CFT through the equation $c=\frac{3R}{2G_N}$ for 1+1 d CFT, and $c\sim \frac{R^{d-1}}{G_N}$ for general dimensions. $R$ is the AdS radius.

\subsubsection*{Exact and Smoothed Magic in CFTs}
Having obtained a lower bound, we now examine the smoothed magic upper bound. Let's pause for a moment and make an interesting observation about exact versus smoothed magic. Consider $n$ copies of $|\psi\rangle=a|00\rangle+b|11\rangle$ which is not maximally entangled. For any additive magic measure, the total magic $M\sim n$. The same can be deduced from the entropy bounds as both the lower and upper bounds pick up a constant multiple of $n$ compared to that of a single copy. 

However, smoothed entropies are not additive. If we allow for approximations, then it is known that \cite{Hayden_2003} for any state $|\psi\rangle$ there exist local unitaries $U_A\otimes U_B$ such that 
\begin{equation}
    F(U_A\otimes U_B |\psi\rangle^{\otimes n} , |\Phi^+\rangle^{S_n-O(\sqrt{n})} \otimes |\chi\rangle)\geq 1-\epsilon
    \label{eqn:distill}
\end{equation}
for some $\epsilon$, where $F(\sigma,\rho)= (\Tr[\sqrt{\sigma^{1/2}\rho\sigma^{1/2}}])^2$ is the Uhlmann fidelity and $|\chi\rangle$ is a state that's entangling $O(\sqrt{n})$ qubits. Because the perfect Bell pairs $|\Phi^+\rangle$ contain zero magic, the smoothed non-local magic of such a system must be upper bounded by $O(\sqrt{n})$ with implicit $\epsilon$ dependence. From this, we can derive a tighter upper bound of $O(\sqrt{n})\sim O(\sqrt{S})$. This agrees with the lower bound up to constant factors. Hence assuming the distillation argument, the smoothed non-local magic $M_{RS}^{(NL,\epsilon)}(\rho_{AA^c})\sim O(\sqrt{S(A)})$. This is contrasted with magic scaling without smoothing, which has shown to scale linearly with $S(A)$ in tensor networks and small size numerics without smoothing.



A similar argument can be applied to CFTs by taking an $n$-fold tensor product. Let $|\psi\rangle_{AB}$ now be a CFT ground state with some fixed bipartition. Under such an n-fold tensor product, $c\rightarrow nc$ and the magic lower bound scales as $O(\sqrt{c})\rightarrow O(\sqrt{cn})$ where we take identical bipartitions $A,B$ for all copies of the CFT. Although the magic scaling is $O(n)$ according to the smoothed max entropy upper bound, by \cref{eqn:distill}, a tighter bound from Bell pair counting yield $O(\sqrt{n})$ scaling again. On the surface, an $n-$fold copy of CFTs should have n fold increase of the non-local magic if the measure is additive, however, we see that smoothing in fact always brings about a quadratic reduction in the amount of required magic in producing an approximation of the target state. 

It is natural to ask whether the square root scaling of smoothed magic persists for $SU(N)$ gauge theories like holographic CFTs in the large $N$ limit and when the upper/lower bounds in~\cref{th:smoothed} are tight. Here we conjecture that the lower bound (\cref{eqn:smoothinequality}) is essentially saturated by smoothed magic whereas the non-smoothed magic can scale linearly with the R\'enyi entropies $S_n(A)$. In other words, the upper bounds (\cref{prop:RelativeEnt}) and (\cref{eqn:smoothinequality}) are approximately saturated up to constant multiplicative factors.






\begin{conjecture}
\label{conj:cft}
Let $|\psi\rangle_{AB}$ be a low energy state of any conformal field theory. Assuming a UV cut off to render entropies finite, let $S(A)$ be the von Neumann entropy of the state on a contiguous subregion $A$. For any additive measure of magic,

\begin{enumerate}[label=(\alph*)]
    \item the smoothed non-local magic evaluated at any fixed precision $\epsilon$ is of $O(\sqrt{S(A)})$.
\item If the exact non-local magic is well-defined, then it scales as $O(S(A))$. \label{conjb}
\end{enumerate}

\end{conjecture}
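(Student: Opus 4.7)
The plan is to bound the smoothed and exact non-local magic from above and below using two complementary tools that were already introduced in the body: the one-sided entropic inequalities of Theorem \ref{th:smoothed} and Proposition \ref{prop:RelativeEnt}, and a unitary ``distillation'' procedure on the CFT state. For part (a), I would first apply Theorem \ref{th:smoothed} to get
\begin{equation*}
S_{max}^{\epsilon}(A)-(1-\epsilon)^{-1}S(A)\;\leq\;\sma{\psi_{AB}}\;\leq\;\log d\,\lceil S_{max}^{\epsilon}(A)/\log d\rceil,
\end{equation*}
and then substitute the CFT estimate $S_{max}^{\epsilon}(A)=S(A)+\sqrt{\log(1/\epsilon)\,S(A)}+O(1)$ from the Calabrese--Lefevre spectrum (and the analysis of \cite{Bao:2019}). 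This instantly yields the lower bound $\sma\gtrsim\sqrt{\log(1/\epsilon)\,S(A)}$. The upper bound as stated in the theorem is not sharp enough (it scales like $S(A)$, not $\sqrt{S(A)}$), so the nontrivial step is to improve it. Here I would invoke the Hayden--Horodecki--Winter unitary distillation result \cite{Hayden_2003} adapted to an ``$n$-fold replica'' of the bipartite CFT state: for $n$ copies of any pure bipartite state with entropy $S$, there exist local unitaries $U_A\otimes U_B$ mapping $|\psi\rangle^{\otimes n}$ to within fidelity $1-\epsilon$ of $|\Phi^+\rangle^{\otimes(nS-O(\sqrt n))}\otimes|\chi\rangle$, where $|\chi\rangle$ is entangled on $O(\sqrt n)$ qubits. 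Since perfect Bell pairs have zero magic and any additive magic measure on $|\chi\rangle$ is bounded by its support, subadditivity gives smoothed magic of $O(\sqrt n)$ per replica-normalized unit, which after rescaling back to a single copy with effective ``replica dimension'' $S(A)$ yields the claimed $O(\sqrt{S(A)})$ scaling.

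For part (b), the lower bound should follow from two independent routes that both give a linear-in-$S(A)$ result. The first is via anti-flatness: Theorem \ref{th:magicdist} gives $M_{dist}^{(NL)}(\psi_{AB})\geq\mathcal{F}(\psi_A)/8$, and for CFTs one can use the Calabrese--Lefevre spectrum to show $\mathcal{F}(\psi_A)$, or equivalently $\partial_n\tilde{S}_n|_{n=1}$ (the entanglement capacity), scales as $S(A)$ (well known from \cite{PhysRevD.99.066012}). The second is via a direct counting in the MERA picture of Section \ref{section:MERA}: unitary concentration of magic onto the $\partial A|\partial B$ interface leaves an MPS of length $O(\log|A|)\sim O(S(A))$ whose per-site non-local magic is bounded below by a CFT-dependent constant, giving a linear lower bound by additivity. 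The upper bound in (b) follows from Proposition \ref{prop:RelativeEnt}, which gives $M_R^{(NL)}(\psi_{AB})\leq S(A)$ directly (and from the trivial upper bound in Theorem \ref{th:relstab}, or from $\mathcal{M}_2^{NL}\leq 2S_2(A)$ in Corollary \ref{th:srebound}), so the ratio $M^{(NL)}(\psi_{AB})/S(A)$ is pinched between two positive constants.

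The main obstacle, and the reason this is formulated as a conjecture rather than a theorem, is making the upper bound in part (a) rigorous for a \emph{single} CFT ground state rather than for an $n$-fold replica. The Hayden--Horodecki--Winter distillation applies cleanly to tensor powers and delivers the $\sqrt{n}$ improvement through a concentration-of-measure / typical subspace argument; translating this directly to one copy of a CFT requires identifying an intrinsic ``replica parameter'' inside the CFT (heuristically, $n\leftrightarrow S(A)$, justified by the quasi-particle or MERA structure and by the approximate flatness of the rescaled entanglement spectrum) and showing that the residual piece $|\chi\rangle$ is supported on $O(\sqrt{S(A)})$ effective qubits. A secondary obstacle is that the linear-in-$S(A)$ lower bound in (b) requires ruling out cancellations in the MPS magic once local unitaries are applied; here the anti-flatness route via Theorem \ref{th:magicdist} is cleaner and should be the load-bearing argument, while the MERA construction serves as the intuitive picture and as numerical support (Section \ref{sec:numerics}). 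Finally, one has to verify that the arguments are genuinely measure-independent: the only place the additivity hypothesis is used is in splitting the magic of $|\Phi^+\rangle^{\otimes k}\otimes|\chi\rangle$ into $0+M(|\chi\rangle)$, which holds for any faithful additive monotone and hence the conjecture is stated at that level of generality.
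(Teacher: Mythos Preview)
Your overall architecture matches the paper's heuristic justification: for (a), the lower bound from Theorem~\ref{th:smoothed} plus the CFT estimate $S_{max}^{\epsilon}=S+\sqrt{S\log(1/\epsilon)}$, and the upper bound from Hayden--Horodecki--Winter distillation applied after modeling the bipartite CFT entanglement as $O(S(A))$ near-identical Bell-like pairs; for (b), additivity over those $O(S(A))$ pairs. You also correctly isolate the reason this remains a conjecture: the single-copy distillation step. This is essentially the paper's reasoning.

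There is, however, a genuine gap in your part (b) lower bound. You treat $\mathcal{F}(\psi_A)$ and $\partial_n\tilde{S}_n|_{n=1}$ as ``equivalent'' and assert both scale as $S(A)$, then call the anti-flatness route via Theorem~\ref{th:magicdist} the ``load-bearing argument.'' But these two anti-flatness measures are \emph{not} interchangeable in the CFT regime: the paper's own relation (\ref{eqn:approxrenyid}) gives $\partial_n\tilde{S}_n|_{n=1}\approx \mathcal{F}(\psi_A)/\mathrm{Pur}^2(\psi_A)$, valid only when $S_2-S_3\lesssim 1/2$. The capacity $\partial_n\tilde{S}_n|_{n=1}$ does scale as $S(A)$ for a CFT, but $\mathcal{F}(\psi_A)=\Tr\psi_A^3-\Tr^2\psi_A^2$ is suppressed by $\mathrm{Pur}^2(\psi_A)\sim e^{-2S_2(A)}$ and is exponentially small, not linear in $S(A)$. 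So the bound $M_{dist}^{(NL)}\geq\mathcal{F}(\psi_A)/8$ from Theorem~\ref{th:magicdist} gives you essentially nothing in this regime, and cannot be the load-bearing piece. The actual linear lower bound must come from the MERA/pair-counting picture (your second route, and the one the paper leans on), or from the relation (\ref{eqn:NLMflatness}) which uses the \emph{additive} anti-flatness $\partial_n\tilde{S}_n$ together with the distillation assumption---not from Theorem~\ref{th:magicdist} directly.
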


A simple reasoning is as follows. Suppose the bipartite entanglement across $AB$ are distillable such that for each Planck area of the RT surface, we can obtain a Bell-like state $|\chi\rangle_{AB}$ which need not be maximally entangled; suppose these states are near identical by the conformal symmetries of the CFT ground state, then we must have $O(S(A))$ copies of such states. Following a distillation like~\cref{eqn:distill}, we obtain at most $O(\sqrt{S})$ states that are imperfectly entangled, in which non-local magic can reside. Note that if no smoothing is allowed, and the magic measure is additive, then the $O(S)$ number of entangled pairs simply contain $O(S)$ amount of magic, consistent with our MERA intuitions and CFT numerics. 

This conjecture, if true, has a wider implication for quantum simulations of conformal field theories. Although our na\"ive expectation is that the non-local magic should increase as the volume of the minimal surface, as indicated by holographic tensor networks, the magic needed to produce a good approximation allows a quadratic reduction. In terms of non-Clifford resources, it implies that an practical preparation of a CFT ground state may permit a quadratic reduction of $T$ gates compared to na\"ive expectations with moderate scaling with increasing precision $\epsilon$. 
However, the actual state preparation has to take into account local magic, which is volume law, and multipartite non-local magic, which is not covered by our bipartite analysis. Therefore, although a state isospectral to $\rho_A$ may consume less non-Clifford resource, we make no claim as to how it alters the total resource scaling for the preparation of $\rho_A$.


\subsubsection*{Anti-flatness and smoothed magic}
Now we comment on a key relation between smoothed magic and entanglement in the CFT. It was suggested in \cite{white_conformal_2021} that magic non-locally distributed would be needed to reproduce the anti-flatness of the CFT entanglement spectrum. We have seen a version of it for exact magic in \cref{sec:CFT}. We can also verify this relation precisely for smoothed magic --- the spectral anti-flatness $\mathcal{F}_R(\rho_A)$ is proportional to the amount of smoothed non-local magic $M_{RS}^{NL}(\rho_{AB})$ to leading order. However, the scaling with entropy is different.
 
\begin{proposition}\label{prop:flatsmoothNLM}
    For any bipartition $A$ and $A^c$ of the CFT ground state, the anti-flatness of the CFT entanglement spectrum necessitates the existence of smoothed non-local magic of at least $$O \left(\sqrt{S(A) \log (1/ \epsilon)} \right)$$. If the distillation argument holds, then 
    \begin{equation}
        \mathcal{F}_R(\rho_A) \sim M_{RS}^{(NL,\epsilon)}(\rho_{AA^c})= O(\sqrt{S(A)}).
        \label{eqn:smoothentBound}
    \end{equation}
\end{proposition}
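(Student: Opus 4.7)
The plan is to obtain the lower bound directly from the smoothed relative stabilizer entropy of magic bound (Theorem \ref{th:smoothed}) combined with the holographic/CFT asymptotics for the smoothed max-entropy quoted above, and then to obtain the matching upper bound by invoking the unitary distillation statement (\ref{eqn:distill}). The identification $\mathcal{F}_R(\rho_A)=S_{max}(\rho_A)-S(\rho_A)$ from Proposition \ref{prop:qrf}, promoted to its smoothed analogue $\mathcal{F}_R^\epsilon(\rho_A) := S_{max}^\epsilon(\rho_A)-(1-\epsilon)^{-1}S(\rho_A)$, is what makes the anti-flatness explicit on both sides of the eventual chain of inequalities.

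First I would apply Theorem \ref{th:smoothed} to $\rho_{AA^c}$ to get
\[
M_{RS}^{(NL,\epsilon)}(\rho_{AA^c}) \;\ge\; S_{max}^{\epsilon}(A) - (1-\epsilon)^{-1}S(A).
\]
Using the CFT estimate $S_{max}^{\epsilon}(A) = S(A) + \sqrt{S(A)\log(1/\epsilon)} + O(c^0)$ valid in the window $e^{-c}\ll\epsilon\ll c^{-1}$, and expanding $(1-\epsilon)^{-1}S(A) = S(A)+\epsilon S(A)+O(\epsilon^2 S(A))$, the $\epsilon S(A)$ term is subleading to the $\sqrt{S(A)\log(1/\epsilon)}$ term exactly in that $\epsilon$-window. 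This yields
\[
M_{RS}^{(NL,\epsilon)}(\rho_{AA^c}) \;\gtrsim\; \sqrt{S(A)\log(1/\epsilon)},
\]
establishing the first claim. Since the right-hand side equals the smoothed anti-flatness $\mathcal{F}_R^\epsilon(\rho_A)$ to leading order, this also shows that spectral anti-flatness is what forces the non-local magic to be non-trivial.

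For the matching upper bound, I would import the distillation statement (\ref{eqn:distill}): there exist local unitaries $U_A\otimes U_{A^c}$ such that, on an $n$-fold tensor replica with $n\sim S(A)$ Bell-like pairs supplied by the area of the RT surface, the transformed state is $\epsilon$-close in fidelity to $|\Phi^+\rangle^{\otimes (n-O(\sqrt{n}))}\otimes|\chi\rangle$, where $|\chi\rangle$ is supported on only $O(\sqrt{n})$ sites. Because perfect Bell pairs are stabilizer states and contribute zero magic, and because $M_{RS}^{(NL)}$ is invariant under $U_A\otimes U_{A^c}$, the smoothed non-local magic of $\rho_{AA^c}$ is bounded by the magic of $|\chi\rangle$. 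Applying the upper half of Theorem \ref{th:smoothed} to $|\chi\rangle$ with its max-entropy of order $O(\sqrt{n}) = O(\sqrt{S(A)})$ gives $M_{RS}^{(NL,\epsilon)}(\rho_{AA^c}) = O(\sqrt{S(A)})$, matching the lower bound up to the explicit $\log(1/\epsilon)$ factor at fixed $\epsilon$. Combining the two bounds yields $\mathcal{F}_R(\rho_A)\sim M_{RS}^{(NL,\epsilon)}(\rho_{AA^c}) = O(\sqrt{S(A)})$.

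The main obstacle I anticipate is the rigorous use of the distillation statement in the CFT setting: (\ref{eqn:distill}) is formulated for i.i.d.\ copies of a finite-dimensional state, whereas here we are distilling within a single CFT whose approximate Bell-like pairs across $\partial A$ are correlated and only approximately identical under conformal symmetry. Justifying that one can treat the $O(S(A))$ near-Bell pairs distilled along $\partial A$ as effectively i.i.d.\ (or invoking an asymptotic equipartition style replacement) is the delicate step, and is precisely where Conjecture \ref{conj:cft}(a) is being assumed. A secondary technical issue is tracking the trade-off between $\epsilon$ and $c$: the lower bound requires $e^{-c}\ll\epsilon\ll c^{-1}$ for the $(1-\epsilon)^{-1}S(A)$ correction to remain subleading to $\sqrt{S(A)\log(1/\epsilon)}$, and one must ensure the $\epsilon$ used in the distillation-based upper bound is compatible with this window so that the two bounds refer to the same smoothed quantity.
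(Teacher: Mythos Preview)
Your proposal is correct and follows essentially the same route as the paper: the lower bound comes from Theorem~\ref{th:smoothed} combined with the CFT asymptotic $S_{max}^{\epsilon}=S+\sqrt{S\log(1/\epsilon)}+O(c^0)$ in the window $e^{-c}\ll\epsilon\ll c^{-1}$, the upper bound from the distillation statement (\ref{eqn:distill}), and the anti-flatness identification from Proposition~\ref{prop:qrf}. The caveats you flag---that the i.i.d.\ distillation hypothesis is only heuristically justified for a single CFT copy and is precisely where Conjecture~\ref{conj:cft}(a) enters, and that the $\epsilon$-windows for the two bounds must be made compatible---are exactly the points the paper leaves as assumptions rather than proves.
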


\section{Holographic Magic and Gravity} \label{sec:gravity}

Heuristically, anti-flatness of the entanglement spectrum is critical in emerging gravity. Various approaches for (entanglement) entropic derivations of the Einstein's equations make use of entanglement first law in both AdS/CFT, e.g.\cite{Blanco_2013,Faulkner_2014,swingle2014universality,Czech_2017}, and beyond~\cite{Jacobson_2016,Cao_2017,Cao_2018}. 
This simple relation connects the stress energy by way of modular Hamiltonian $H_A=-\log \rho_A$. Under a perturbation $\rho_A\rightarrow \rho_A+\delta \rho$ such that $\delta S \equiv S(\rho_A+\delta \rho) - S(\rho_A)$ and $\delta \langle H_A\rangle \equiv \Tr[H_A \delta{\rho}]$, then to linear order  $\delta S  = \delta \langle H_A\rangle$. As entropy is linked to the area of an extremal surface and $H_A$ can be linked to functions of the stress energy tensor in quantum field theories, $\delta\langle H_A\rangle$ is connected to perturbation in stress energy caused by the perturbation $\delta\rho$ while $\delta S$ can be linked to the area and hence metric perturbation. The combination of these relations produce the Hamiltonian constraint, where a covariantized version leads to the (linearized) Einstein's equations. It is clear that if the spectrum was flat, i.e. the system has zero non-local magic and the modular Hamiltonian is proportional to the identity, then no state perturbation can ever incur entropy and therefore metric perturbations, let alone Einstein gravity. Therefore, it is natural to link non-local magic to the emergence of gravity by way of entanglement spectrum.

In this section, we examine non-local magic in CFTs with dual gravity theories. Although it is speculated that non-local magic should play an important role in the dual theory \cite{white_conformal_2021,nogo}, the precise relation has not been made clear. We now provide a holographic dual of non-local magic: non-local magic in the CFT is backreaction in the bulk.

\subsection{Brane tension and magic}\label{section:brane}
Let's make a more precise statement from the point of view of R\'enyi entropies. Recall that  the R\'enyi entropies  in holographic CFTs are computed by the replica geometries which insert a conical singularity that correspond to cosmic branes at various tensions \cite{Dong:2018lsk,Dong1}. Therefore, anti-flatness in the entanglement spectrum can be naturally interpreted as the difference between minimal surfaces areas in different backreacted geometries caused by the addition of some stress energy in the form of a cosmic brane with tension $\mathcal{T}$. 

More precisely, the derivative of brane area is related to anti-flatness (\cref{def:braneflatness}),
\ba
\frac{\partial_n A_n}{4G}=\partial_n\tilde{S}_n.
\ea

The brane tension $\mathcal{T}$ is related to $n$ by 
\ba
\mathcal{T}_n=\frac{n-1}{4n G}
\ea

Hence for $n=1$, or tension $\mathcal{T}=0$, we have that 
$4G\partial_n A_n|_{n=1} = \partial \mathcal{A}/ \partial\mathcal{T}|_{\mathcal{T}=0}$.
Applying (\cref{eqn:NLMflatness}) we arrive at a linear relation between $\partial\mathcal{A}/\partial\mathcal{T}\sim \mathcal{M}_2(|\phi\rangle)$, specifically
\ba\label{branemagicNL}
\left\vert\frac{\partial \mathcal{A}}{\partial \mathcal{T}}\right\vert_{\mathcal{T}=0} = (4G)^2|\partial_n\tilde{S}_n||_{n=1}\approx \frac{(4G)^2}{\kappa} \mathcal{M}_2^{NL}(|\psi\rangle_{AB})
\ea
which then provides an estimate for the non-local magic $M_{\rm dist}^{(NL)}$ across the bipartition from~\cref{th:magicdist}. Note that the bipartition is arbitrary and each subregion $A$ need not be connected.


That is, non-local magic controls the level of geometric change in response to adding mass energy in the bulk, where the zero magic limit indeed recovers the trivial response function in stabilizer holographic tensor networks. As we showed earlier in \cref{lemmaNL}, anti-flatness is  zero if and only if the non-local magic vanishes. Then, through \cref{branemagicNL},  there is no back-reaction in the zero magic limit. This is consistent with results from \cite{nogo}.


\begin{remark}
    Recall the flatness problem of the entanglement spectrum is also present in random tensor networks even though they are not stabilizer codes. This is because non-local magic is also low for Haar random states (\cref{rmk:1}), even though they are not stabilizer codes. Therefore the same type of gravitational backreaction is also ``turned off'' in \cite{Hayden_2016}.
\end{remark}

A more rigorous bound relating non-local magic and the R\'enyi entropy derivatives $\partial_n \mathcal{A}$  can also be proven. 

\begin{proposition}\label{pp:branebound}
Assuming the distillation argument where $U_A\otimes U_B\ket{\psi}_{AB} \approx \otimes_i\ket{\phi_i}_{a_ib_i}$ for the state with local magic removed, then the non-local stabilizer R\'enyi entropy for a CFT under bipartition $AB$ is bounded by
    \begin{equation}
    \frac{1}{2}\left\vert\frac{\partial_n\mathcal{A}_n|_{n=2}}{4G}(\ket{\psi}_{AB})\right\vert\leq\mathcal{M}_2(\ket{\psi}_{AB})\leq\left\vert\frac{\partial_n\mathcal{A}_n|_{n=1}}{4G}(\ket{\psi}_{AB})\right\vert
\end{equation}
\end{proposition}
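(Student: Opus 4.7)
The plan is to exploit the distillation hypothesis to reduce the proposition to a statement about a single entangled pair, then use the additivity of both $\mathcal{M}_2$ and the cosmic brane area derivative to establish the bounds state-by-state. Since $U_A\otimes U_B$ leaves both $\mathcal{M}_2^{NL}$ and the entanglement spectrum of $\rho_A$ invariant, we have $\mathcal{M}_2(\ket{\psi}_{AB})\equiv \mathcal{M}_2^{NL}(\ket{\psi}_{AB})\approx \sum_i\mathcal{M}_2(\ket{\phi_i}_{a_ib_i})$ by additivity of the stabilizer $2$-R\'enyi entropy on tensor products. Similarly $\tilde{S}_n$ and hence $\partial_n\tilde{S}_n$ is additive, so $\partial_n\mathcal{A}_n/(4G)=\partial_n\tilde{S}_n$ factorizes across the distilled pairs. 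Thus it suffices to prove, for a single $\ket{\phi}_{ab}$ in the Clifford orbit,
$$\tfrac{1}{2}|\partial_n\tilde{S}_n(\phi_a)|_{n=2}\,\le\,\mathcal{M}_2(\ket{\phi})\,\le\,|\partial_n\tilde{S}_n(\phi_a)|_{n=1}.$$

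The upper bound is the cleaner of the two. I would start from the identity $\mathcal{M}_2(\ket{\phi})=-\log(1-M_{\mathrm{lin}}(\ket{\phi}))$ and invoke the Clifford-typicality relation \eqref{eqn:Cliffordtypical}, $\mathcal{F}(\phi_a)=c(d,d_a)M_{\mathrm{lin}}(\ket{\phi})$. Combining these with the elementary inequality $-\log(1-x)\le x/(1-x)$ and the explicit approximation \eqref{eqn:approxrenyid}, namely $\mathrm{Var}_{\rho}(\log\rho)\approx\mathcal{F}(\rho)/\mathrm{Pur}(\rho)^2$, rewrites the right-hand side as $|\partial_n\tilde{S}_n(\phi_a)|_{n=1}=\mathrm{Var}_{\phi_a}(\log\phi_a)$. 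The bound then follows once one observes that $c(d,d_a)/\mathrm{Pur}(\phi_a)^2\ge 1-M_{\mathrm{lin}}$ in the near-flat regime that the distillation assumption implicitly enforces (each $\ket{\phi_i}$ has near-maximal entanglement, so $d_a^2\sim d$ and $c(d,d_a)\to 1$).

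For the lower bound I would express $|\partial_n\tilde{S}_n|_{n=2}=2\,\mathrm{Var}_{\phi_a^{2}/\mathrm{Pur}}(\log\phi_a)$ using the variance formula derived in Section~\ref{sec:antiflatness}. Because the tilted distribution $\phi_a^{2}/\mathrm{Pur}(\phi_a)$ reweights the spectrum toward larger eigenvalues, one expects this variance to be at most $\mathrm{Var}_{\phi_a}(\log\phi_a)$ up to a factor controlled by $\mathrm{Pur}$. More precisely, I would bound $\mathrm{Var}_{\phi_a^2/\mathrm{Pur}}(\log\phi_a)\le 2\,\mathcal{F}(\phi_a)/\mathrm{Pur}(\phi_a)^2$ by a direct pairwise comparison of the eigenvalue sums, then invoke \eqref{eqn:Cliffordtypical} once more together with $-\log(1-x)\ge x$ to conclude $\mathcal{M}_2(\ket{\phi})\ge M_{\mathrm{lin}}\ge \mathrm{Var}_{\phi_a^2/\mathrm{Pur}}(\log\phi_a)$, which is exactly $\tfrac{1}{2}|\partial_n\tilde{S}_n|_{n=2}$.

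The main obstacle will be the lower bound: controlling the weighted variance $\mathrm{Var}_{\phi_a^{2}/\mathrm{Pur}}(\log\phi_a)$ tightly enough to absorb the $c(d,d_a)/\mathrm{Pur}^2$ prefactor and recover the clean factor of $1/2$ advertised in the statement. This step is where the distillation approximation must do real work, since only under the assumption that each distilled pair has near-flat spectrum does the Clifford-typicality formula close the gap between $\mathcal{M}_2$ and the two variance-like quantities without picking up uncontrolled constants. If the pairs were allowed to be arbitrarily asymmetric, the proportionality $\mathcal{F}\propto M_{\mathrm{lin}}$ would acquire state-dependent corrections and the bound would weaken. I would therefore pay careful attention to which regime of $\mathrm{Pur}(\phi_a)$ and $d_a/d$ the CFT distillation actually produces, and confirm that the Clifford-orbit average in \eqref{eqn:Cliffordtypical} can be upgraded to a pointwise statement for the typical states arising from the MERA-like distillation in Section~\ref{section:MERA}.
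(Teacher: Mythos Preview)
Your reduction to a single entangled pair via additivity of $\mathcal{M}_2$ and of $\partial_n\tilde{S}_n$ is exactly right and matches the paper. The gap is in how you handle the single-pair inequality.

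The paper does not go through Clifford typicality or the near-flat approximation at all. Instead it takes a single \emph{qubit} pair with spectrum $\{\lambda,1-\lambda\}$, writes down both sides explicitly as elementary functions of $\lambda$:
\[
\mathcal{M}_2(\lambda)=-\log(1-4\lambda+20\lambda^2-32\lambda^3+16\lambda^4),\qquad
-\partial_n\tilde{S}_n=\frac{n\,\lambda^n(1-\lambda)^n\bigl(\log\tfrac{\lambda}{1-\lambda}\bigr)^2}{(\lambda^n+(1-\lambda)^n)^2},
\]
and then verifies the two inequalities for all $\lambda\in[0,1]$ by Taylor-expanding around the zeros $\lambda=0,\tfrac12,1$ and appealing to a plot for the intermediate range. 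No assumption on the spectrum of the distilled pairs is needed beyond their being two-dimensional.

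Your route via \eqref{eqn:Cliffordtypical} and \eqref{eqn:approxrenyid} cannot close without an extra hypothesis you do not have. The relation $\mathcal{F}(\phi_a)=c(d,d_a)M_{\mathrm{lin}}(\ket{\phi})$ is a statement about a \emph{typical} state in the Clifford orbit, not a pointwise identity; likewise \eqref{eqn:approxrenyid} is only valid when $S_2-S_3$ is small. You recognize this at the end and hope the distillation forces near-flat pairs, but the proposition makes no such restriction, and indeed the paper's single-pair argument covers the full range $\lambda\in[0,1]$, including very asymmetric pairs. So the ``real work'' you anticipate for the lower bound (absorbing $c(d,d_a)/\mathrm{Pur}^2$ into a clean factor of $1/2$) is not actually achievable along your line; the clean constants come from the direct one-variable comparison, not from a general variance inequality.
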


See proof in~\cref{app:branebound} and justification of the distillation assumption for CFT in~\cref{app:MPS}. 
We elaborate the regime of validity for various magic bounds and anti-flatness relations in \cref{app:bound}.

\subsection{Magic in Holographic CFT}
Note that magic in quantum many-body systems is generally difficult to compute as the cost can grow exponentially with the system size\cite{white_conformal_2021,PhysRevA.106.042426}. This scaling is much improved for measures like stabilizer R\'enyi entropy where the non-linear function of the state can be computed using MPS\cite{haug_efficient_2023,tarabunga_manybody_2023,tarabunga2024nonstabilizerness} or enumerator-based tensor networks \cite{cao2023quantum}. However, the computation remains costly at high bond dimensions and for other measures.  
On the other hand, the bounds of magic from~\cref{sec:magicbounds} offer an entropic perspective into this otherwise hard-to-compute quantity by leveraging existing results. 

We now study non-local magic in CFTs in light of the general relations derived in~\cref{sec:magicbounds}.
Using the holographic dictionary and applying \cref{conj:cft}, we can predict the behaviour of non-local magic in CFTs that are otherwise difficult to compute. Although the following examples essentially amounts to putting square roots on known holographic entanglement entropies, it is instructive to review their behaviours and analyze their implications for magic and, by extension, classical complexity and quantum resource needed for state preparation. At the same time, holographic calculations enable us to study magic dynamics under quantum quenches, for which existing results have been sparse and size-limited \cite{Sewell} due to prohibitive computational costs.

\subsubsection*{Static Configurations}
We now apply (\ref{eqn:smoothentBound}) to estimate the smoothed non-local magic in the CFT state. To illustrate, consider the thermal state $\rho_{AA^c}$ of a (1+1)d CFT which is purified by $B$, e.g. in a thermal field double state.
\begin{equation}
    |TFD\rangle \propto \sum_n \exp(-\beta E_n/2) |E_n\rangle_{AA^c} |E_n\rangle_{B}
\end{equation}

Bipartitioning the system into $A$ and $A^c\cup B$, the behavior of the non-local magic is given by:
\begin{equation}
   \sma{|TFD\rangle_{A{A}^cB}}\sim \sqrt{\frac{c}{3}\log\left(\frac{\beta}{\pi\delta_{UV}}\sinh\left(\frac{\pi l}{\beta}\right)\right)},
\end{equation}
where $l$ is size of subregion $A$. The magic increases logarithmically with the subregion size 
$l$ for $l\ll \beta$. However, when the size surpasses the thermal correlation length, represented by $\beta=\frac{1}{T}$, it becomes proportional to $\sqrt{l}$. A similar result holds for a small subsystem $A$ of a pure state $|\psi\rangle_{AA^c}$ with that thermalizes under ETH such that $A$ has fixed temperature $T=1/\beta$.

Now instead consider the bipartition of the system in to $AA^c$ and $B$. It is known that for holographic CFTs, the system undergoes a confinement-deconfinement phase transition which corresponds to the Hawking-Page transition in the bulk at a critical temperature $T_c$\footnote{This is a simplified account of the transition, which for different theories there can be different phases as one dial up the temperature\cite{Aharony_2004}.}. 

It is known that 
\begin{equation}
    S(B)=S(AA^c) \sim \begin{cases}
        O(N^0)\quad T<T_c\\
        O(N^2)\quad T>T_c
    \end{cases}
\end{equation}
In the same way, we predict a magic phase transition where $M_{RS}^{(NL,\epsilon)}/N$ is discontinuous across $T_c$ in the $N\rightarrow \infty$ limit.

\subsubsection*{Local quench}
In the following sections, we consider several time-dependent scenarios and analyze their implications on the system dynamics.

For our first scenario, let's examine a CFT ground state that's been perturbed by a smeared local operator $O_{\alpha}(x,0)$ at $t=0$.  This is then subjected to time evolution governed by the CFT Hamiltonian. We can express the state as:
\begin{equation}
    |\psi(t)\rangle=\mathcal{N}e^{-iHt}e^{-\delta H}O_{\alpha}(x,t)|\Omega\rangle. 
\end{equation}

In the corresponding bulk dual, this equates to introducing an in-falling particle with mass $m$ into the initially vacuum AdS spacetime. The energy-momentum tensor for this scenario can be characterized as:  
\begin{equation}
    T_{uu}=\frac{mR\alpha^2}{8\pi (u^2+\alpha^2)^2}. 
\end{equation}

Here, $\alpha$ denotes the size of the smeared operator. As 
$\alpha$ approaches 0, this converges to a delta function in $u$. The subsequent effect on the bulk spacetime is encapsulated by a shock-wave geometry, as illustrated below in \figref{figholo1}.

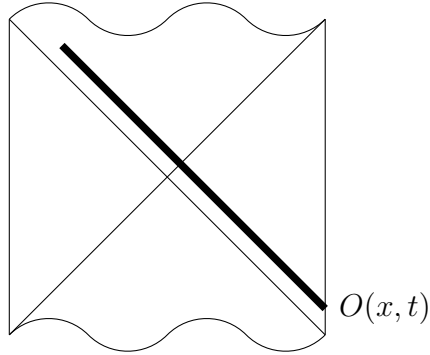
\begin{figure}[H]
\center
\begin{tikzpicture}[scale=0.7]
\draw (-3,-3)--(-3,3);
\draw (3,-3)--(3,3);
 \draw (-3,3) to[out=45,in=135] (-1.5,3) to[out=-45,in=-135] (0,3) to[out=45,in=135] (1.5,3) to[out=-45,in=-135] (3,3);
\draw (-3,-3) to[out=45,in=135] (-1.5,-3) to[out=-45,in=-135] (0,-3) to[out=45,in=135] (1.5,-3) to[out=-45,in=-135] (3,-3);
\draw (-3,3)--(3,-3);
\draw (-3,-3)--(3,3);
\draw[line width=1mm] (3,-2.5)node[right]{$O(x,t)$} --(-2,2.5);
\end{tikzpicture}
\caption{Penrose diagram depicting a shock wave in global coordinates. }
\label{figholo1}
\end{figure}

We aim to investigate the non-local magic of subsystem 
$A$ in relation to $A^c$. These subsystems are separated by the boundary $\partial A=\partial A^c$, a 
$d-2$ sphere of radius $l$. By solving the Einstein equation, \cite{Nozaki:2013wia} derived the leading-order change in entanglement entropy due to the injected energy. Specifically, for a (1+1)d holographic CFT, this change is expressed as:
\begin{equation}
    \Delta S(t)=\frac{2mRl\alpha+mR(l^2-\alpha^2-t^2)\arctan(\frac{2\alpha l}{t^2+\alpha^2-l^2})}{8l\alpha}+O\left((mR)^2\right).
\end{equation}

We can then employ the lower bound to estimate the growth of the non-local magic as follows:
\begin{equation}
\begin{split}
    M^{(NL)}_{RS}(|\psi(t)\rangle)\sim& \sqrt{S(0)+\Delta S(t)}\\
    \approx& \sqrt{S(0)}+\frac{1}{2}\frac{\Delta S(t)}{\sqrt{S(0)}}. 
\end{split}
\end{equation}

In the early-time regime,  $t\ll \sqrt{l^2-\alpha^2}$, the magic exhibits quadratic growth with time, independent of the spacetime dimension. This can be expressed as: 
\begin{equation}
    \Delta M^{(NL)}_{RS}(t)\sim \kappa_d\frac{mR}{\sqrt{S_0}}(\frac{\alpha l}{l^2-\alpha^2})^2\frac{t^2}{l^2-\alpha^2}+O(\frac{t^4}{(l^2-\alpha^2)^2}).
\end{equation}

At $t=\sqrt{l^2-\alpha^2}$, the magic reaches its peak value of $\Delta M^{(NL)}_{RS}=\kappa_d\frac{mR}{\sqrt{S_0}}$, after which it declines to zero. In the long-term regime, it decays following a power-law pattern: 
\begin{equation}
    \Delta M^{(NL)}_{RS}(t)\sim \frac{mR}{\sqrt{S_0}}\left(\frac{\alpha l}{t^2}\right)^d\left(1+O(\frac{l^2-\alpha^2}{t^2})\right).
\end{equation}

For the (1+1)d CFT, another intriguing scenario arises when subsystem $A$ encompasses half of the space, signifying $l\rightarrow \infty$. In this context, there exists a range in which the magic grows logarithmically with $t$ \cite{Caputa:2014vaa}, specifically when $l\ll t\ll D^{1/mR}\alpha$,
\begin{equation}
    \Delta M^{(NL)}_{RS}(t)\sim \frac{mR}{\sqrt{S_0}}\log\frac{t}{\alpha},
\end{equation}
where $D$ is quantum dimension of the quench operator $O$. The value reaches a constant late-time limit of 
$\Delta M^{(NL)}_{RS}=\frac{\log D}{\sqrt{S_0}}$. This logarithmic growth can only be observed in system with large central charge due to the otherwise small value of $D$.
Note that holographic methods are at a distinct advantage here because magic dynamics for large systems over long periods of time is numerically intractable using existing methods.

\subsubsection*{Global Quench}
We also explore the global quench scenario wherein the perturbation isn't confined to a localized region but influences the entire CFT state. Within the bulk dual, this corresponds to a spherically symmetric in-falling mass shell.

\begin{figure}[H]
\center
\begin{tikzpicture}
\draw (1,-5)--(1,1);
\draw (-1,1) to[out=45,in=135] (-0.5,1) to[out=-45,in=-135] (0,1) to[out=45,in=135] (0.5,1) to[out=-45,in=-135] (1,1);
\draw[line width=1mm] (1,-1)node[right]{$\int dx O(x,t)$} --(-1,1);
\draw (-1,1)--(-3,-1)--(1,-5);
\end{tikzpicture}
\caption{Vaidya geometry. Right side boundary denotes the asymptotic boundary of the AdS-Vaidya spacetime. Outside the mass shell is the black hole geometry. Inside the mass shell is  Vacuum AdS in Poincaré patch.}
\label{figholo2}
\end{figure}
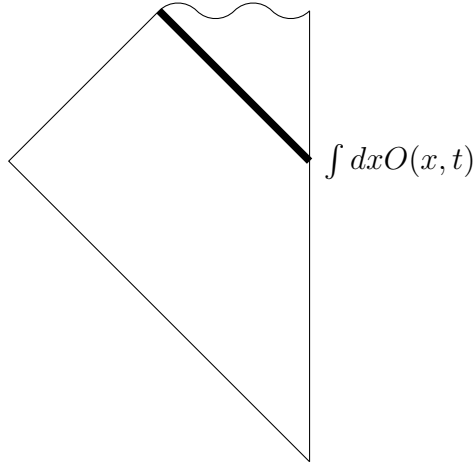

The geometry impacted by the mass shell is characterized by the Vaidya metric. This is essentially the integration of pure AdS with an AdS-Schwarzschild black hole, aligned along the mass shell, as illustrated in~\cref{figholo2}.

The shell's descent into the bulk parallels the boundary CFT's thermalization following the global perturbation. The state transitions from the ground state and progressively thermalizes to a certain finite temperature. The entanglement entropy of subregion $A$ serves as a quantitative measure, increasing during this process. Correspondingly, in the bulk perspective, this entropy surge is represented by the expanding area of the minimal surface anchored to the boundary of $A$. 

In a (1+1)-dimensional CFT, it's feasible to precisely solve for the minimal surface \cite{Balasubramanian:2011ur}. The entropy at 
$t=0$ is equivalent to the CFT ground state entropy, given by $S(0)=\frac{c}{3}\log{\frac{l}{\delta_{UV}}}$. This aligns with the length of the geodesic fully contained within the pure AdS. Following the onset of the quench, the geodesic begins to intersect with the in-falling mass shell, causing its length to increase over time. Initially, this growth is quadratic with respect to $t$,
\begin{equation}
    \mathcal{L}(t)=2\log{\frac{l}{\delta_{UV}}}+2\frac{\pi^2 t^2}{\beta^2}+O(t^3). 
\end{equation}

As thermalization progresses, the geodesic's intersection with the mass shell delves deeper into the bulk. Once the subregion completes its thermalization at time $t=\frac{l}{2}$, the geodesic no longer intersects the in-falling shell, stabilizing its length to an equilibrium value,
\begin{equation}
    \mathcal{L}(t>\frac{l}{2})=2\log{\frac{ \beta}{\pi\delta_{UV}}\sinh{\frac{\pi l}{\beta}}}. 
\end{equation}

We also detail the behavior of the geodesic length in the late stages, prior to reaching full thermalization, as outlined below:
\begin{equation}
    \mathcal{L}(t\lesssim\frac{l}{2})=2\log{\left(\frac{ \beta}{\pi\delta_{UV}}\sinh{\frac{\pi l}{\beta}}\right)}-\frac{2}{3}\sqrt{2\tanh{\frac{\pi l}{\beta}}}\left(\frac{l}{2}-t\right)^{\frac{3}{2}}+O\left(\left(\frac{l}{2}-t\right)^{2}\right).
\end{equation}

Based on the aforementioned results, the evolution of the smoothed non-local magic for a subregion in a 2d CFT can be characterized as follows: it increases according to, 
\begin{equation}
    \sma t \sim\sqrt{c\abs{\log\epsilon}}\left(\sqrt{S_0/c}+\frac{\pi^2 t^2}{6\sqrt{S_0/c}}+O(t^3)\right),
\end{equation}
during the initial stages, and as,
\begin{equation}
    \sma{t}\sim\sqrt{c\abs{\log\epsilon}}\left(\sqrt{S_T/c}-\frac{1}{18}\frac{\sqrt{2\tanh{\frac{\pi l}{\beta}}}}{\sqrt{S_T/c}}\left(\frac{l}{2}-t\right)^{\frac{3}{2}}+O\left(\left(\frac{l}{2}-t\right)^{2}\right)\right),
\end{equation}
during the latter phases when the subregion is nearing full thermalization. This can be contrasted with the dynamics of total subsystem magic under thermalization\cite{Sewell} which decays after a quick initial rise.

\subsubsection*{Wormhole}
Lastly, we examine a thermalization process involving two copies of CFT states. This dynamic process corresponds to the evolution of an expanding wormhole in the bulk dual. 

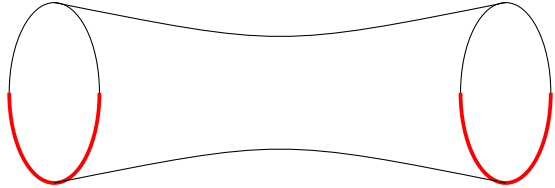
\begin{figure}[H]
    \centering
    \begin{tikzpicture}[scale=0.6]
        \def\s{10}
        \draw (0,0) ellipse (1cm and 2cm);
        \draw (\s,0) ellipse (1cm and 2cm);
        \draw[red,line width=0.5mm] (-1,0) arc(180:360:1cm and 2cm) ;
        \draw[red,line width=0.5mm] (\s-1,0) arc(180:360:1cm and 2cm);
        \draw (0,2) ..controls (\s/2,1).. (\s,2);
        \draw (0,-2).. controls (\s/2,-1).. (\s,-2);
    \end{tikzpicture}
    \caption{Wormhole geometry}
    \label{figholo3}
\end{figure}

Let us revisit the thermal-field-double (TFD), 
\begin{equation}
    \ket{\text{TFD}}=\frac{1}{\sqrt{Z(\beta)}}\sum_n e^{-(\frac{\beta}{2}+2it) E_n}\ket{E_n}_L\ket{E_n}_R
\end{equation}

We designate our region of interest to encompass a section from both the left and right CFT states (illustrated in~\cref{figholo3}). The entanglement entropy of this composite region is probed by the extremal surface spanning the wormhole, connecting the left and right segments.

In this setup, we assume symmetry when exchanging the two CFT sides. Specifically, we mandate that the subregion $A$ on one side mirrors its counterpart on the other side. See red region in~\cref{figholo3}. Given this symmetry, the extremal surface occupies a plane defined by constant transverse spatial coordinates and is characterized solely by the relationship between time and the radial direction.

At the boundary time $t=0$, the area of extremal surface is given by
\begin{equation}
    \mathcal{A}(0)=\frac{\beta r_{\infty}}{\pi}V_{d-2}.
\end{equation}
where $r_{\infty}$ is the UV cutoff of radial coordinates. This extremal area is proportional to the volume of subregion boundary $\partial A$, reminiscent of the area law entanglement observed in gapped systems.  As time progresses, the extremal surface accrues additional contributions from regions beyond the horizon. As highlighted in \cite{Hartman:2013qma}, this contribution exhibits a straightforward linear relationship with the boundary time, as illustrated below:
\begin{equation}
    \mathcal{A}(t)= \frac{4\pi t}{\beta}\alpha_d V_{d-2},   \qquad \text{for $t\gg \beta$}.
\end{equation}

The linear growth eventually ceases when the extremal surface traversing the wormhole is surpassed by another, more minimal configuration. A different set of competing extremal surfaces, anchored to the same entangling boundary but bypassing the wormhole, emerges. These surfaces are essentially combinations of the extremal surfaces corresponding to subregions within each individual thermal CFT. Their area is given by
\begin{equation}
    \mathcal{A}(\infty)-\mathcal{A}(0)=\frac{2\pi}{\beta}V_{d-1}.
\end{equation}

The transition of dominant extremal surface occurs around $t\sim R$, which corresponds to the size of the subregion under consideration. Consequently, we anticipate the non-local magic in this TFD state to scale as follows:
\begin{equation}
\begin{split}
\sma{t}&\sim\sqrt{\abs{\log{\epsilon}}}\sqrt{S_0+\frac{4\pi t}{\beta}\alpha_dV_{d-2}}, \qquad \text{for $\beta\ll t<R$}\\
&\sim \sqrt{\abs{\log{\epsilon}}S_T},  \qquad \text{for $t\geq R$}.
\end{split}
\end{equation}

\section{Discussion}
In this work, we explored the question: what dual boundary quantity enables gravitational back-reaction in the bulk? The celebrated formula of Ryu and Takayanagi provides a fundamental observation of the AdS-CFT conjecture by showing that areas in AdS correspond to entanglement entropies in the CFT. 
In the greater context of spacetime and gravity emerging from quantum information, we ask: If entanglement builds geometry, then what builds gravity?
In this work we show that the strength of gravitational back-reaction is connected to (non-local) magic in CFT. In other words, gravity is magical! Accordingly, both defining properties of quantumness admit holographic counterparts in AdS. 

To obtain this result, we 
studied the interplay between non-local magic and entanglement. We show that for any quantum state in a finite dimensional Hilbert space, this form of non-stabilizerness that can only live in the bipartite correlations is lower bounded by the anti-flatness of the entanglement spectrum and upper bounded by the amount of entanglement in the system as defined by R\'enyi entropies. We then apply these results to CFTs and conclude that both the exact and smoothed non-local magic is proportional to various notions of anti-flatness. However, they scale differently with entropy --- the exact non-local magic scales linearly with the von Neumann entropy of a CFT subregion while the smoothed magic only scales as the square root. Numerically we verify that non-local magic is sensitive to quantum phase transition in a way that is different from entanglement. We also examined its behaviour under symmetry breaking. 

Finally, in the context of holographic CFTs, we derive a quantitative relation between non-local magic and the level of gravitational back-reaction. Using the bulk gravity theory, smoothed non-local magic in the CFT can also be estimated holographically. As non-stabilizerness in quantum systems are generically hard to compute, our work also provides an important estimate on the practical level and constrain magic distributions using existing data and well-founded methods like tensor networks and DMRG.

There are several directions that are of interest for future work. The key constraints for non-local magic here are given in terms of inequalities. Part of the reason for bounds instead of a precise equality is that non-local magic requires extremization while the computation of magic itself is already non-trivial. However, given the universal behaviour of non-local magic across multiple distinct measures of anti-flatness, there is reason to believe that a unifying statement or even a precise equality exists between entanglement spectral properties and magic. In the particular case of quantum field theory, it is also crucial to generalize our observations to definitions of magic that is native to the infinite dimensional system, e.g. non-Gaussianity, as well as other different measures of spectral anti-flatness. 

Approaching non-local magic from a different perspective, we can start with state $\rho_A$ from the usual stabilizer polytope and construct a purified state $\psi_{AB}$. One can also define a non-local magic as the minimal magic among all possible purifications.
In the same vein of connecting magic with entanglement, we ask whether it is possible to define instead magical entanglement, i.e., the entanglement that cannot be removed by any Clifford operation\footnote{We thank Kaifeng Bu for this suggestion.}. In this case, one can easily show from our entropy bounds that magical entanglement is an upper bound of non-local magic. However, it is yet unknown whether the two definitions are equivalent. Finally, recall that non-local magic can be generalized to systems with multi-partite entanglement. This will be crucial in understanding the behaviour of e.g. Haar random states, random tensor networks, and holographic states. As the type of multi-partite entanglement is quite constrained for stabilizer states, non-local magic may be crucial in the classification of multi-partite entanglement.

For CFTs specifically, several of our results rely on the assumption that the bipartite entanglement across a subregion $A$ and its complement $B$ in a pure state can be approximately converted into a tensor product of entangled pairs through unitaries that only act on the respective subregions. Although this assumption is well-supported by numerics and well-motivated by holographic tensor networks models, it is unclear the extent to which this holds for a single copy of (holographic) CFT in general. This assumption may also admit further modification in the case where $A$ consists of multiple disjoint regions. It is important that we understand the regime of validity for such assumptions and pave the way for proving \cref{conj:cft} and extending the generality of \cref{prop:flatsmoothNLM}. 

Just like various types of entanglement can admit different holographic interpretations, a similar situation may hold for magic. Although we take a first step towards addressing the open question of what is the holographic dual of magic, much remains unknown. For instance, the connection we identify with anti-flatness signals a link between non-local magic and gravitational back-reaction. However, because we lack a systematic understanding of how the bulk duals should deform under a sequence of boundary theories that have increasing flat spectrum, the physical meaning of how the removal of magic turns off backreaction is unclear. Additionally, it is possible that other gravitational phenomena generating backreaction have a different magical origin on the boundary. If that is the case, we eventually wish to distinguish them from the consequence of bipartite non-local magic in the boundary theory.

Although it has been suggested that boundary states with flat entanglement spectrum are dual to peculiar bulk states of fixed areas\cite{Akers:2018fow,Dong_2019}, exactly how these bulk states should be interpreted holographically remains to be understood. To this end, a more precise relation between magic and emergent gravity \cite{Faulkner_2013} in the bulk, one which does not rely on the distillation assumptions used in this work, is highly desirable. Furthermore, a connection between magic and a local function of curvature generated by more physical forms of stress energy instead of an extended conical singularity such as a cosmic brane may provide a more natural link with the Einstein's equations or the Hamiltonian constraint. A more comprehensive understanding of holographic magic through the lens of dynamics such as quantum chaos\cite{chaosbymagic} and (classical) complexity can also provide another unique perspective that is not captured by our current work.

More broadly, this work calls for several important lines of investigation as we move towards establishing non-local magic as a key metric for characterizing quantum many-body systems. For instance,  the tensor product of random single-qubit states, the ground states of physical quantum many-body systems, and the Haar random states all have volume law magic scaling. Purely from the point of view of entanglement entropy, they can also be mimicked by stabilizer states. However, their non-local magic behaves very differently. Thus it provides a distinct indicator for the properties of the underlying quantum systems that are invisible to entanglement entropy or total non-stabilizerness alone. It would also be intriguing to study the role of non-local magic in quantum phase transition, in symmetry breaking, and in non-equilibrium systems.





\textit{We would like to thank Chris Akers, Vijay Balasubramanian, Ning Bao, Ed Barnes, Kaifeng Bu, Xi Dong, Sophia Economou, Monica Kang, Cynthia Keeler, Nick Mayhall, Jason Pollack, Howard Schnitzer, Brian Swingle, Christopher White, Tianci Zhou for helpful comments, resource, and discussions. We are especially grateful to Christopher White and Daniele Iannotti for identifying the mistakes in the earlier version of this manuscript. C.C. and A.H. would like to thank the organizers of the Quantum Information and Quantum Matter Conference at NYU Abu Dhabi during which this work was first conceived.
 C.C.\ acknowledges the support by the Air Force Office of Scientific Research (FA9550-19-1-0360), the National Science Foundation (PHY-1733907), and the Commonwealth Cyber Initiative. The Institute for Quantum Information and Matter is an NSF Physics Frontiers Center.
AH acknowledges support from PNRR MUR project PE0000023-NQSTI and
PNRR MUR project CN $00000013$ -ICSC. W.M. is supported by the U.S. Department of Energy under grant number DE-SC0019470, and by the Heising-Simons Foundation ``Observational Signatures of Quantum Gravity'' collaboration grant 2021-2818. S.F.E.O. acknowledges support from PNRR MUR project PE0000023-NQSTI. L.L. is funded through the Munich Quantum Valley project (MQV-K8) by Bayerisches Staatsministerium für Wissenschaft und Kunst and DFG (CRC 183).}

\clearpage
\begin{singlespace}
\printbibliography[heading=subbibliography]
\end{singlespace}

\chapter{CONCLUSION}\label{Chapter8}

We began by discussing OPE spectral densities of conformal field theories in the large exchange dimension limit. Utilizing the modular bootstrap technique, we determine Virasoro vacuum contributions to the OPE spectrum in the lightcone limit. We then extend beyond this limit, towards $h/\mathcal{C} \sim 1$, by proposing an ansatz solution to the crossing equations in the lightcone limit. Despite lacking a closed form for the Virasoro blocks in the large spin limit, our kernel ansatz method is sufficient to numerically explore the density of OPE coefficients to leading-order.

Although a gravitational interpretation of the lightcone modular bootstrap was not directly explored in this work, we anticipate that our results recover semi-classical gravity in $AdS_3$, even for finite values of central charge. The four-point functions we analyzed may be equivalently represented by bulk Witten diagrams, enabling a gravitational parallel for the calculations herein. In this vein, our results provide a useful model uncovering gravitational descriptions for coarse-grained CFTs, or sets thereof. For non-minimal CFTs in particular, it would be interesting to explore constraints on such theories which retain a gravitational dual throughout the coarse-graining process.

We next transition towards an exploration of stabilizer states, and ultimately an investigation of quantum group action on state entropy vectors. We explicitly generate sets of stabilizer states and compute their associated entropy vectors, demonstrating the connectivity of these states using reachability graphs. For fixed qubit number, our reachability graphs offer a complete description of entanglement entropy distribution and state proximity under the Clifford group in the Hilbert space. We further introduce a procedure for building restricted graphs, where only group action under certain Clifford subgroups is considered, demonstrating how higher-qubit graphs are composed of attached lower-qubit subgraphs. This restricted graph analysis significantly extends the tractability of direct analysis to higher qubit number, and even provides a general understanding of certain reachability graphs at arbitrary qubit number.

Our graph protocol offers a novel way of studying state parameter evolution under group action on the Hilbert space. In this work we focus on the evolution of entanglement entropy, observing the possible entropy vector transformations under Clifford group action. Beginning at $4$ qubits we observe stabilizer states with entropy vectors that violate known holographic inequalities, specifically the monogamy of mutual information. Accordingly, the $n \geq 4$ stabilizer state restricted graphs enable us to directly identify quantum circuits which evolve a state with a holographic entropy vector to one which violates MMI. Furthermore, since we restrict to stabilizer states and Clifford circuits, this construction easy to reproduce using near-term quantum computers.

Extending beyond the scope of stabilizer states we derive and analyze the entanglement structure of $n$-qubit Dicke states, constructing the Dicke state entropy cone. We further define a min-cut protocol on weight directed graphs which reproduces Dicke state entropy vectors, and verify that such vectors violate both the holographic and symmetrized holographic entropy conditions. We explore the evolution of Dicke state entropy vectors under Clifford group action, identifying the stabilizing operations for all Dicke states and building their associated reachability graphs. Finally, we use our graph model to demonstrate strict bounds on Dicke state entropy vector evolution under Clifford group action.

The symmetric entanglement structure of Dicke states affords a particular robustness to single-qubit loss. This characteristic renders large-qubit Dicke states an excellent option for logical encoding, as well as a preferred initial state for many quantum optimization algorithms. Further studying into the Dicke state entropy cone, and more importantly the evolution of Dicke state entropy vectors under larger sets of quantum operations, may reveal additional utility for near-term quantum algorithms. In addition, select Dicke states possess a significant amount of distillable magic. The ease of preparation and magic content of Dicke states suggests such states as a natural candidate for preliminary experimental magic distillation schemes.  

Moving towards a more general description of Clifford group action on arbitrary quantum states, we introduce a protocol to construct reachability graphs and restricted graphs as Cayley graph quotients. In this abstraction, identifiable relations among different quantum circuits are equally valid when applied to arbitrary quantum states. Furthermore generalized properties of parameter evolution, e.g. classes of circuits which must preserve the entropy vector of any state, can likewise be studied. This quotient protocol is not limited to the Clifford group, but can be identically applied to any discrete and finite set of operators on a Hilbert space. Upon relaxing the conditions of discreteness and finiteness, a corollary analysis for infinite or continuous gate sets may be performed to establish maximal bounds or rates of parameter evolution under quantum operators.

One natural extension of this work would consider Cayley graph quotients for systems of qudits, or perhaps an analogous treatment for generalized Clifford algebras. Alternatively we could consider intermediate magic state injection to otherwise Clifford circuits, thereby enabling controlled non-Clifford behavior in the circuit. Such an analysis would particularly benefit near-term quantum computing architecture, as non-Clifford evolution is expected to be severely limited. Finally we could consider alternative coupling maps which render certain gate actions easier (or harder) to perform than others. In a graph description, this could correspond to promoting our Cayley graph quotients to weighted graphs, where the difficulty of implementing a certain gate fixes the corresponding edge weight.

Applying our quotient graph protocol to study parameter evolution for arbitrary quantum states, we introduce contracted graphs. Defined as a graph quotient on reachability graphs, under operations which leave a certain parameter invariant, these contracted graphs enable us to derive explicit bounds on parameter evolution under a chosen group. More rigorously defined, upon choosing a group of operators to act on a Hilbert space and some computable state parameter to study, a contracted graph represents the coset space of the group with equivalence classes that leave the chosen property invariant. Furthermore by identifying the reachability graph for a family of states as well as some chosen state property, the contracted graph indicates the double coset space of group elements which fix both the family of states and the property of interest.

In this work we specifically focus on the evolution of entropy vectors, and derive a strict upper bound on the number of possible entanglement entropy configurations which can be achieved under the action of a chosen group of operators. This maximal number of entropy configurations is set by the vertex count of the associated contracted graph, and applies generally to any initial state acted on by the group. If the group is finite, a maximal number of achievable entanglement configurations can be directly identified. If however the group is infinite, a fixed number of configurations can be derived up to arbitrary circuit depth, or a rate of entanglement evolution can be established as the evolution becomes continuous.

The contracted graph techniques presented are sufficiently general to study the evolution of any computable state parameter under any chosen gate set. Natural extensions of this work include exploring the dynamics of additional state properties, such as different magic measures or stabilizer rank. Furthermore an in-depth investigation into holography-violating circuits could yield new insight into entropy vector transformation under various sets of unitaries. Primarily we wish to focus on Clifford circuits which evolve a holographic entropy vector into one that violates MMI, particularly since, thus far, all observed non-holographic stabilizer states violate (at least) MMI.

Finally we extend a more general discussion away from stabilizer states and entanglement entropy towards the holographic implications of quantum magic. We demonstrate that the presence of non-local magic in a boundary CFT yields gravitational back-reaction in the bulk. We show that non-local magic is proportional to different measures of anti-flatness, and how this anti-flatness of the state results in back-reaction on the cosmic branes that represent Renyi entropies of boundary subregions. Additionally, we numerically verify our conjectures and discuss the sensitivity of non-local magic to quantum phase transitions in the boundary CFT.

We could consider extrapolating our results to explore the entanglement in magic states which cannot be removed by Clifford operations. Such a entanglement structure contributes to the presence of non-local magic, and may admit some novel holographic interpretation. While we show that the existence of non-local magic in a boundary CFT generates bulk back-reaction, it remains unknown precisely how much of that gravitational back-reaction can be attributed to this boundary origin. Perhaps most-desired is a refined connection between boundary magic and bulk local curvature by more tangible sources. Relating canonical stress-energy spacetime deformations in AdS to magic content in the boundary CFT would provoke a holographic connection between magic and Einstein's equations.  

Throughout this work we traverse both sides of the AdS/CFT correspondence. We discuss different prospects for exploring gravitational phenomena using properties of quantum systems and implications derived from quantum information theory. We highlight the fascinating overlap of entanglement and magic with geometry and gravitational fluctuations, made possible through the mathematical underpinnings of holography. In anticipation of realizable quantum computing, we remark on the practical utility of our results for improving quantum algorithms and a general understand of quantum computing architecture. While much work has been done towards each of these efforts, much more remains to be understood. This dissertation constitutes only a small collection of efforts to probe the mysteries of quantum gravity.

\clearpage

\begin{singlespace}
\addcontentsline{toc}{part}{REFERENCES}
\newrefsection
\saverefsection
\newrefcontext[sorting=nty]
\nocite{\ReferencesList}
\centering{Bibliography}
\printbibliography[heading=none]
\end{singlespace}

\addcontentsline{toc}{part}{APPENDIX}
\renewcommand{\cftlabel}{APPENDIX}
\renewcommand{\chaptername}{APPENDIX}
\appendix
\chapter{ADDITIONAL GRAPHS}\label{Two-Qubit Graphs}
\newpage
\noindent

Contained below are some graphs not featured in the body of this paper. Futher graphs are available in the repository \cite{github}.

 The 2-qubit complete reachability graph and some 2-qubit restricted graphs were discussed in section \ref{sec:two}. Here we compile additional restricted graphs to further illustrate the relation between states under action $C_2$ subgroup action. Figure \ref{TwoQubitP1P2} restricts to only phase operations, while Figure \ref{TwoQubitCNOT12CNOT21} shows the restricted graph for only CNOT operation.

Figure \ref{TwoQubitP1P2} shows the 2-qubit graph restricted to only phase operations. The graph consists of $5$ distinct and disconnected substructures.

\begin{figure}[h]
\begin{center}
\includegraphics[scale=0.82]{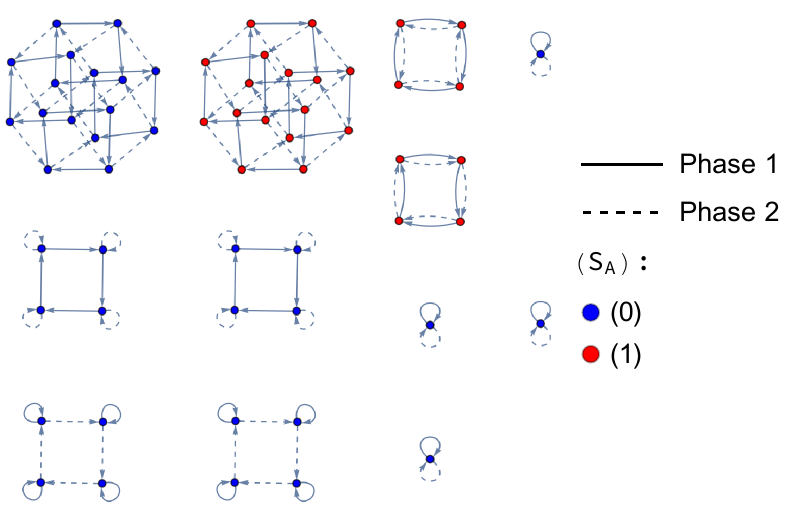}
\caption{The 2-qubit $P_1,P_2$ restricted graph contains $5$ unique substructures. The $4$ isolated points are states on which both $P_1$ and $P_2$ act trivially. The box-like structures come in two varieties. The box of unentangled states has a trivial loop at each corner, while the boxes of entangled states witness degenerate action instead. There exist two largest structures of states (top-left) on which both phase gates act non-trivially and non-degenerately.}
\label{TwoQubitP1P2}
\end{center}
\end{figure}

Figure \ref{TwoQubitCNOT12CNOT21} shows all interactions between 2-qubit states under only CNOT operations. The CNOT gate can modify entropy structure, therefore we witness the first occurrences of states with different entropy vectors lying in the same substructures. Alternating action of $CNOT_{1,2}$ and $CNOT_{2,1}$ has a maximum cycle of $6$, seen in the hexagonal structures top-left.

\begin{figure}[h]
\begin{center}
\includegraphics[scale=0.5]{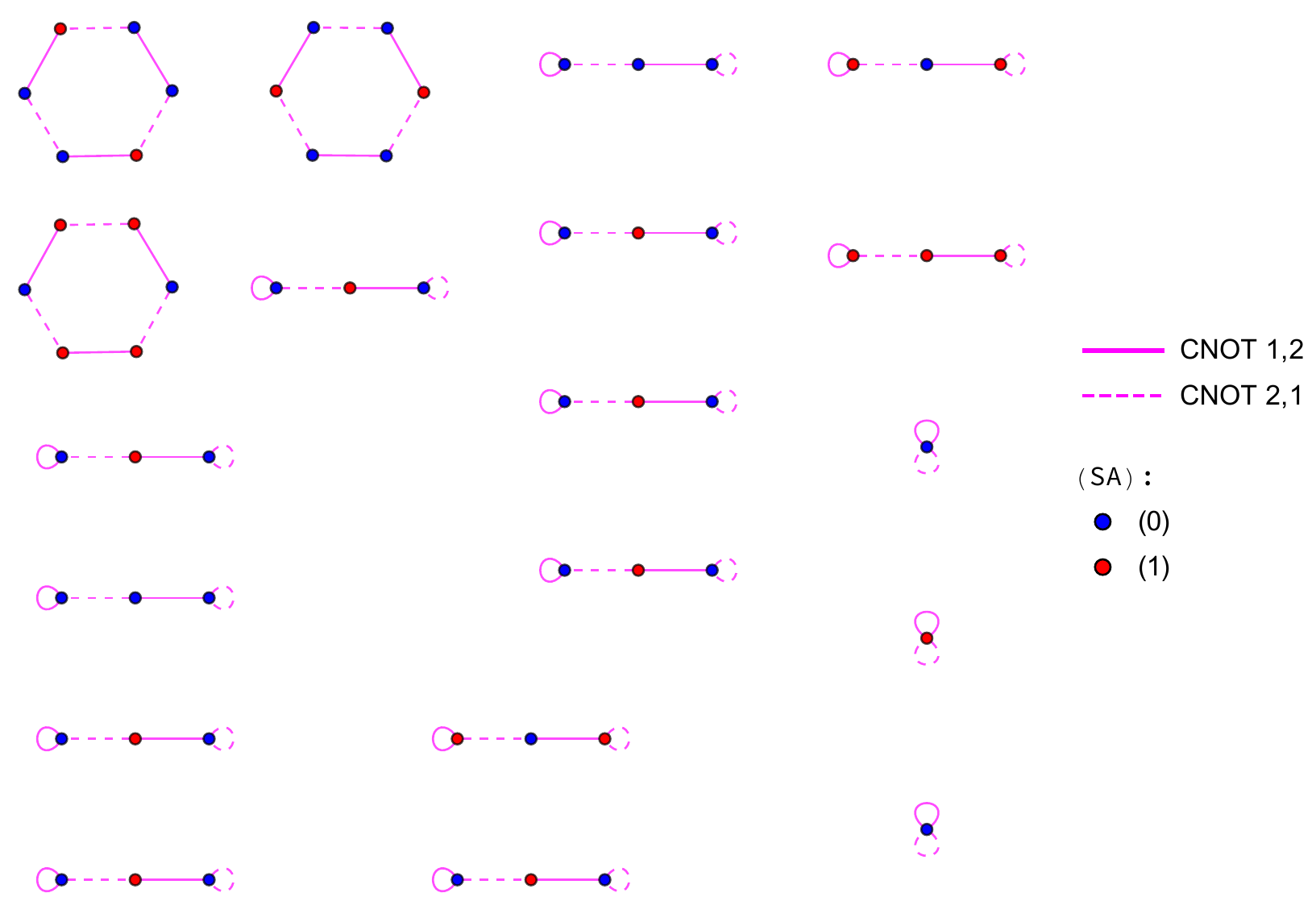}
\caption{The subgroup generated by $CNOT_{1,2}$ and $CNOT_{2,1}$ partitions the set of 2-qubit states into $3$ graph substructures. The isolated points are states on which both CNOT gates act trivially. The linear triplets are built of one state, on which both CNOT gates act non-trivially, connected to two states on which opposing one CNOT gate acts trivially. These triplets can occur with states of similar or different entropic structure. The largest structure is a hexagon which illustrates the maximum cycle of the subgroup generated by $CNOT_{1,2}$ and $CNOT_{2,1}$.}
\label{TwoQubitCNOT12CNOT21}
\end{center}
\end{figure}

\newpage

 The 3-qubit reachability diagrams were discussed in section \ref{sec:three} with a focus on the $H_1,H_2,CNOT_{1,2},CNOT_{2,1}$ restricted graph (Figure \ref{ThreeQubitH1H2CNOT12CNOT21}). Here we provide additional graphs of potential interest, including the complete reachability graph on three qubits (Figure \ref{ThreeQubitCompleteGraph}). Figure \ref{ThreeQubitP1P2P3Subgraphs} displays the action of all phase gates on three qubits, generalizing the cycles and structures seen at two qubits (Figure \ref{TwoQubitP1P2}) to three qubits.

Figure \ref{ThreeQubitCompleteGraph} displays the full reachability graph for three qubits (trivial loops removed). This graph contains all non-trivial information about 3-qubit interaction under operations of the Clifford group ($C_3$).

\begin{figure}[h]
\begin{center}
\includegraphics[scale=0.6]{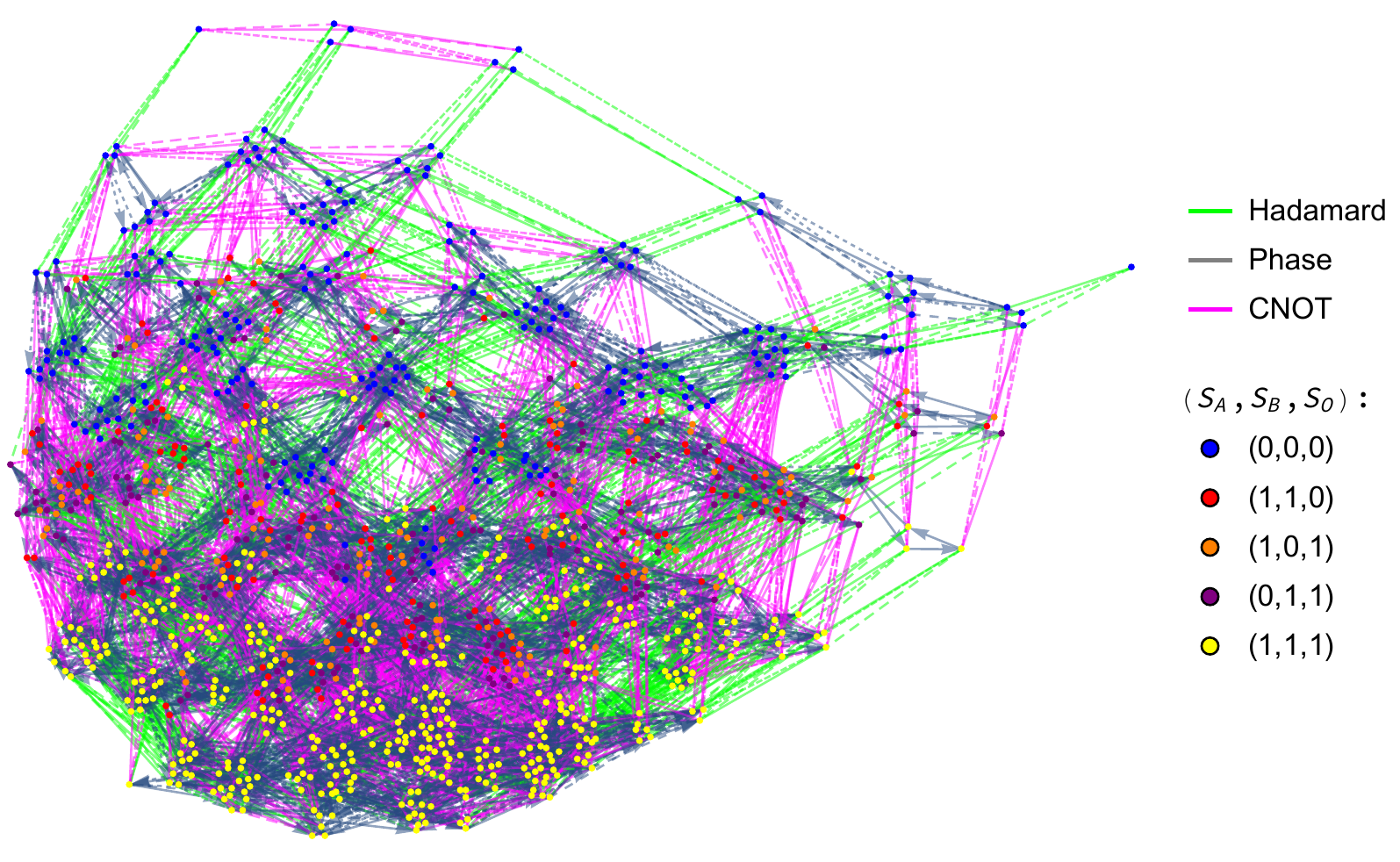}
\caption{Complete reachability diagram on three qubits with trivial loops removed. The Hadamard and phase gates act individually on all three qubits, while the CNOT gate acts on any pair of qubits. Line texture indicates the particular action, e.g. a solid line for $H_1$ and medium dashed line for $H_2$.}
\label{ThreeQubitCompleteGraph}
\end{center}
\end{figure}

\newpage

In Figure \ref{ThreeQubitP1P2P3Subgraphs} is the 3-qubit graph restricted to only phase gates. The addition of $P_3$ to the generating set allows for longer cycles, resulting in more complex structures than were witness at lower qubit number.

\begin{figure}[h]
\begin{center}
\includegraphics[scale=0.58]{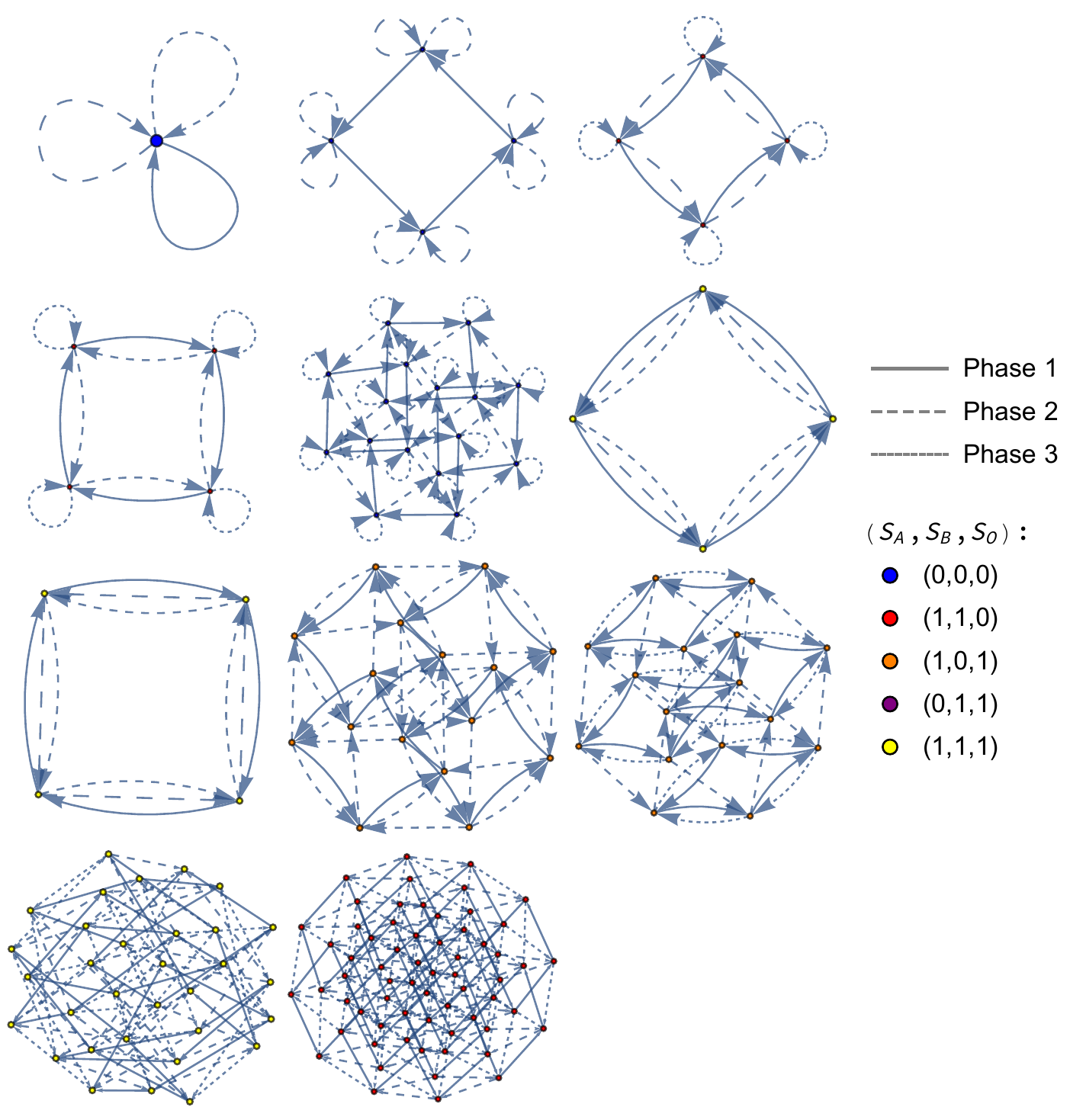}
\caption{There are $11$ unique subgraphs whose copies build the 3-qubit $P_1,P_2,P_3$ restricted graph. These subgraphs segregate according to which sequences of $P_1,P_2,$ and $P_3$ are equivalent.}
\label{ThreeQubitP1P2P3Subgraphs}
\end{center}
\end{figure}

\newpage

Figure \ref{ThreeQubitH1H2CNOT12CNOT21} shows the 3-qubit Three-qubit restricted graph displaying only Hadamard and CNOT operations on the first two qubits of in the system. The graph has repeated copies of structures found at two qubits, as well as the addition of two new subgraphs $g_{144}$ and $g_{288}$.

\begin{figure}[h]
\begin{center}
\includegraphics[scale=0.54]{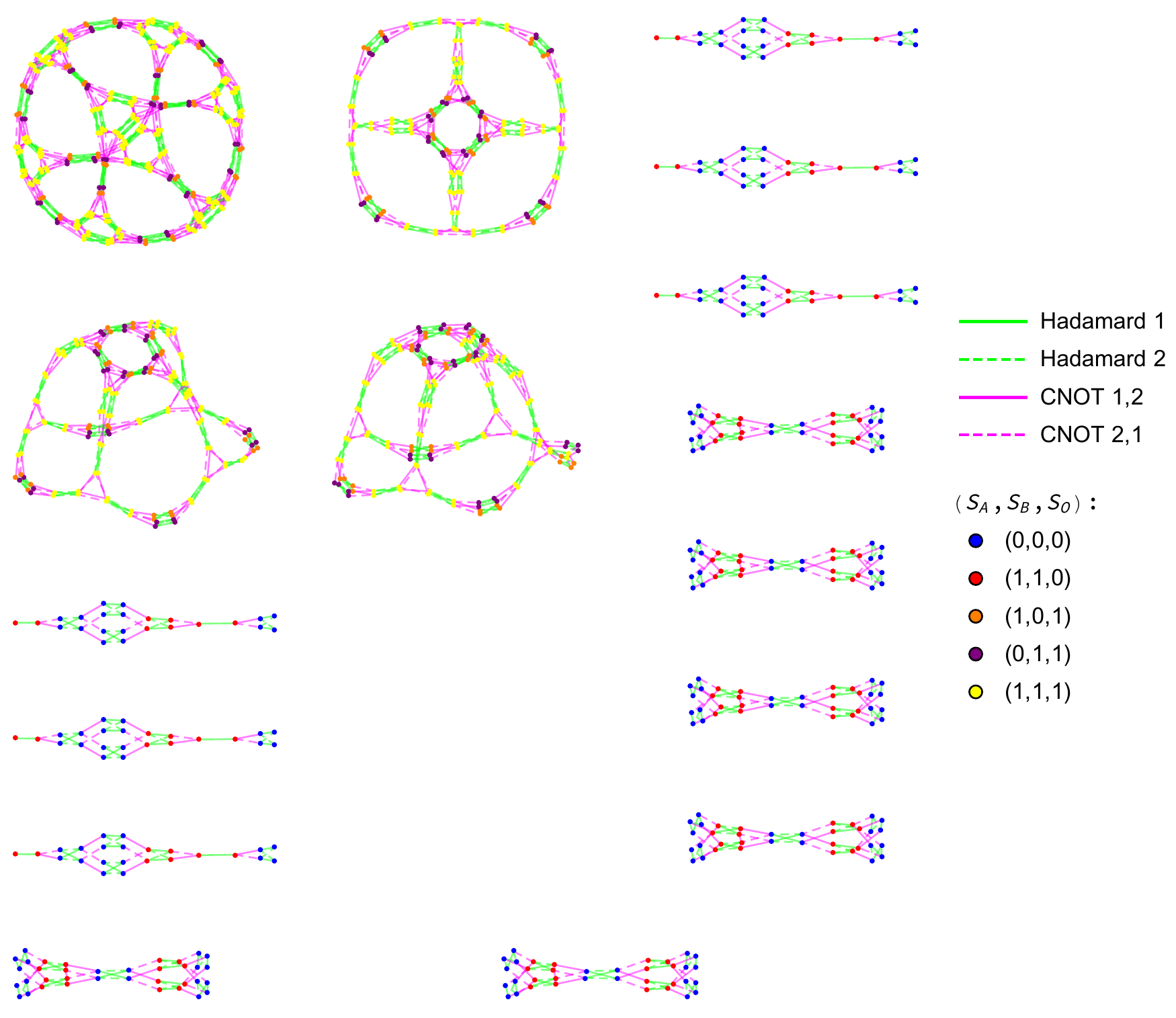}
\caption{The graph contains $6$ copies of $g_{24}$ and $g_{36}$, $3$ copies of $g_{144}$, and a single copy of $g_{288}$. The $3$ $g_{144}$ subgraphs are isomorphic, but were generated with slightly different layouts by the software used to build this graph.}
\label{ThreeQubitH1H2CNOT12CNOT21}
\end{center}
\end{figure}

Below we present the complete set of entropy vectors for the 5-qubit stabilizer set. There are $2423520$ stabilizer states at five qubits, with $93$ different entropic arrangements. There are $16$ of these $93$ entropy vectors which violate the monogamy of mutual information (Equation \eqref{eq:MonogamyofMutualInformation}), and therefore correspond to non-holographic states. 
\begin{table}[h]
    \centering
    \small{
    \begin{tabular}{|c||c|c|} 
 \hline
Holographic & Entropy Vector & Subgraph\\
\hline
\hline 
Yes & $(0,0,1,0,1,0,1,0,1,1,0,1,1,0,1)$ & $g_{24}$, $g_{36}$\\
 \hline
 Yes & $(1, 1, 1, 0, 1, 0, 2, 1, 2, 2, 1, 2, 1, 0, 1)$ & $g_{24}$, $g_{36}$\\
\hline
 \hline
 Yes & $(0, 0, 1, 1, 0, 0, 1, 1, 0, 1, 1, 0, 0, 1, 1)$ & $g_{24}$, $g_{36}$\\
 \hline
 Yes & $(1, 1, 1, 1, 0, 0, 2, 2, 1, 2, 2, 1, 0, 1, 1)$ & $g_{24}$, $g_{36}$\\
\hline
 \hline
 Yes & $(0, 0, 0, 1, 1, 0, 0, 1, 1, 0, 1, 1, 1, 1, 0)$ & $g_{24}$, $g_{36}$\\
 \hline
 Yes & $(1, 1, 0, 1, 1, 0, 1, 2, 2, 1, 2, 2, 1, 1, 0)$ & $g_{24}$, $g_{36}$\\
\hline
 \hline
 Yes & $(0, 0, 0, 0, 0, 0, 0, 0, 0, 0, 0, 0, 0,0, 0)$ & $g_{24}$, $g_{36}$\\
 \hline
 Yes & $(1, 1, 0, 0, 0, 0, 1, 1, 1, 1, 1, 1, 0,0, 0)$ & $g_{24}$, $g_{36}$\\
\hline
 \hline
 Yes & $(0, 0, 1, 1, 1, 0, 1, 1, 1, 1, 1, 1, 1,1, 1)$ & $g_{24}$, $g_{36}$\\
 \hline
 Yes & $(1, 1, 1, 1, 1, 0, 2, 2, 2, 2, 2, 2, 1,1, 1)$ & $g_{24}$, $g_{36}$\\
 \hline
 \hline
 Yes & $(0, 1, 1, 1, 1, 1, 1, 1, 1, 0, 2, 2, 2, 2, 0)$ & $g_{144}, g_{288}$\\
 \hline
 Yes & $ (1, 0, 1, 1, 1, 1, 0, 2, 2, 1, 1, 1, 2, 2, 0),$ & $g_{144}, g_{288}$\\
  \hline
 Yes & $(1, 1, 1, 1, 1, 1, 1, 2, 2, 1, 2, 2, 2, 2, 0)$ & $g_{144}, g_{288}$\\
  \hline
  \hline
 Yes & $(0, 1, 1, 1, 1, 1, 1, 1, 1, 2, 2, 0, 0, 2, 2)$ & $g_{144}, g_{288}$\\
 \hline
 Yes & $(1, 0, 1, 1, 1, 1, 2, 2, 0, 1, 1, 1, 0, 2, 2)$ & $g_{144}, g_{288}$\\
  \hline
 Yes & $(1, 1, 1, 1, 1, 1, 2, 2, 1, 2, 2, 1, 0, 2, 2)$ & $g_{144}, g_{288}$\\
\hline
\hline
 Yes & $(0, 1, 0, 1, 0, 1, 0, 1, 0, 1, 0, 1, 1, 0, 1)$ & $g_{144}, g_{288}$\\
 \hline
 Yes & $(1, 0, 0, 1, 0, 1, 1, 0, 1, 0, 1, 0, 1, 0, 1)$ & $g_{144}, g_{288}$\\
  \hline
 Yes & $(1, 1, 0, 1, 0, 1, 1, 1, 1, 1, 1, 1, 1, 0, 1)$ & $g_{144}, g_{288}$\\
 \hline
\hline
 Yes & $(0, 1, 0, 0, 1, 1, 0, 0, 1, 1, 1, 0, 0, 1, 1)$ & $g_{144}, g_{288}$\\
 \hline
 Yes & $(1, 0, 0, 0, 1, 1, 1, 1, 0, 0, 0, 1, 0, 1, 1)$ & $g_{144}, g_{288}$\\
  \hline
 Yes & $(1, 1, 0, 0, 1, 1, 1, 1, 1, 1, 1, 1, 0, 1, 1)$ & $g_{144}, g_{288}$\\
\hline
\hline
 Yes & $(0, 1, 1, 0, 0, 1, 1, 0, 0, 0, 1, 1, 1, 1, 0)$ & $g_{144}, g_{288}$\\
 \hline
 Yes & $(1, 0, 1, 0, 0, 1, 0, 1, 1, 1, 0, 0, 1, 1, 0)$ & $g_{144}, g_{288}$\\
  \hline
 Yes & $ (1, 1, 1, 0, 0, 1, 1, 1, 1, 1, 1, 1, 1, 1, 0)$ & $g_{144}, g_{288}$\\
\hline
\end{tabular}}
    \caption{At five qubits, there are $93$ stabilizer state entropy vectors (listed here and on the next two pages). Of these, $16$ correspond to non-holographic states. All non-holographic states are located on subgraphs with $4$ different entropy vectors.}
    \label{tab:5QubitEntropiesPart1}
\end{table}

\newpage

\begin{table}[h]
    \centering
    \small{
    \begin{tabular}{|c||c|c|c|} 
 \hline
Holographic & Entropy Vector & Subgraph\\
\hline
\hline
 Yes & $(0, 1, 1, 1, 1, 1, 1, 1, 1, 2, 0, 2, 2, 0, 2)$ & $g_{144}, g_{288}$\\
 \hline
 Yes & $(1, 0, 1, 1, 1, 1, 2, 0, 2, 1, 1, 1, 2, 0, 2)$ & $g_{144}, g_{288}$\\
  \hline
 Yes & $(1, 1, 1, 1, 1, 1, 2, 1, 2, 2, 1, 2, 2, 0, 2)$ & $g_{144}, g_{288}$\\
\hline
\hline
  \color{red} No & $(0, 1, 1, 1, 1, 1, 1, 1, 1, 1, 1, 1, 1, 1, 1)$ & $g_{144}, g_{288}$\\
 \hline
  \color{red} No & $ (1, 0, 1, 1, 1, 1, 1, 1, 1, 1, 1, 1, 1, 1, 1)$ & $g_{144}, g_{288}$\\
  \hline
  \color{red} No & $(1, 1, 1, 1, 1, 1, 1, 1, 1, 1, 1, 1, 1, 1, 1)$ & $g_{144}, g_{288}$\\
   \hline
  \color{red} No & $(1, 1, 1, 1, 1, 1, 2, 2, 2, 2, 2, 2, 1, 1, 1)$ & $g_{144}, g_{288}$\\
  \hline
\hline
  Yes & $(0, 1, 1, 1, 1, 1, 1, 1, 1, 2, 2, 1, 1, 2, 2)$ & $g_{144}, g_{288}$\\
 \hline
 Yes & $(1, 0, 1, 1, 1, 1, 2, 2, 1, 1, 1, 1, 1, 2, 2)$ & $g_{144}, g_{288}$\\
  \hline
  \color{red} No & $(1, 1, 1, 1, 1, 1, 2, 2, 1, 2, 2, 1, 1, 2, 2)$ & $g_{144}, g_{288}$\\
   \hline
 Yes & $(1, 1, 1, 1, 1, 1, 2, 2, 2, 2, 2, 2, 1, 2, 2)$ & $g_{144}, g_{288}$\\
\hline
\hline
Yes & $(0, 1, 1, 1, 0, 1, 1, 1, 0, 1, 1, 1, 1, 1, 1)$ & $g_{144}, g_{288}$\\
 \hline
 Yes & $(1, 0, 1, 1, 0, 1, 1, 1, 1, 1, 1, 0, 1, 1, 1)$ & $g_{144}, g_{288}$\\
  \hline
  \color{red} No & $(1, 1, 1, 1, 0, 1, 1, 1, 1, 1, 1, 1, 1, 1, 1)$ & $g_{144}, g_{288}$\\
   \hline
 Yes & $ (1, 1, 1, 1, 0, 1, 2, 2, 1, 2, 2, 1, 1, 1, 1)$ & $g_{144}, g_{288}$\\
\hline
\hline
  Yes & $(0, 1, 0, 1, 1, 1, 0, 1, 1, 1, 1, 1, 1, 1, 1)$ & $g_{144}, g_{288}$\\
 \hline
 Yes & $(1, 0, 0, 1, 1, 1, 1, 1, 1, 0, 1, 1, 1, 1, 1)$ & $g_{144}, g_{288}$\\
  \hline
  \color{red} No & $ (1, 1, 0, 1, 1, 1, 1, 1, 1, 1, 1, 1, 1, 1, 1)$ & $g_{144}, g_{288}$\\
   \hline
 Yes & $(1, 1, 0, 1, 1, 1, 1, 2, 2, 1, 2, 2, 1, 1, 1)$ & $g_{144}, g_{288}$\\
\hline
\hline
  Yes & $(0, 1, 1, 1, 1, 1, 1, 1, 1, 1, 2, 2, 2, 2, 1)$ & $g_{144}, g_{288}$\\
 \hline
 Yes & $ (1, 0, 1, 1, 1, 1, 1, 2, 2, 1, 1, 1, 2, 2, 1)$ & $g_{144}, g_{288}$\\
  \hline
  \color{red} No & $(1, 1, 1, 1, 1, 1, 1, 2, 2, 1, 2, 2, 2, 2, 1)$ & $g_{144}, g_{288}$\\
   \hline
 Yes & $(1, 1, 1, 1, 1, 1, 2, 2, 2, 2, 2, 2, 2, 2, 1)$ & $g_{144}, g_{288}$\\
\hline
\hline
  Yes & $(0, 1, 1, 0, 1, 1, 1, 0, 1, 1, 1, 1, 1, 1, 1)$ & $g_{144}, g_{288}$\\
 \hline
 Yes & $(1, 0, 1, 0, 1, 1, 1, 1, 1, 1, 0, 1, 1, 1, 1)$ & $g_{144}, g_{288}$\\
  \hline
  \color{red} No & $(1, 1, 1, 0, 1, 1, 1, 1, 1, 1, 1, 1, 1, 1, 1)$ & $g_{144}, g_{288}$\\
   \hline
 Yes & $(1, 1, 1, 0, 1, 1, 2, 1, 2, 2, 1, 2, 1, 1, 1)$ & $g_{144}, g_{288}$\\
\hline
\hline
  Yes & $(0, 1, 1, 1, 1, 1, 1, 1, 1, 2, 1, 2, 2, 1, 2)$ & $g_{144}, g_{288}$\\
 \hline
 Yes & $(1, 0, 1, 1, 1, 1, 2, 1, 2, 1, 1, 1, 2, 1, 2)$ & $g_{144}, g_{288}$\\
  \hline
  \color{red} No & $(1, 1, 1, 1, 1, 1, 2, 1, 2, 2, 1, 2, 2, 1, 2)$ & $g_{144}, g_{288}$\\
   \hline
 Yes & $(1, 1, 1, 1, 1, 1, 2, 2, 2, 2, 2, 2, 2, 1, 2)$ & $g_{144}, g_{288}$\\
\hline
\hline
   Yes & $(1, 1, 1, 0, 1, 2, 2, 1, 1, 1, 1, 2, 1, 2, 1)$ & $g_{1152}$ (2,4)\\
  \hline
   Yes & $(1, 1, 1, 0, 1, 2, 1, 1, 2, 2, 1, 1, 1, 2, 1)$ & $g_{1152}$ (2,4)\\
  \hline
   Yes & $(1, 1, 1, 0, 1, 2, 0, 1, 2, 2, 1, 0, 1, 2, 1)$ & $g_{1152}$ (4)\\
 \hline
   Yes & $ (1, 1, 1, 0, 1, 2, 2, 1, 0, 0, 1, 2, 1, 2, 1)$ & $g_{1152}$ (4)\\
 \hline
\end{tabular}}
    \label{tab:5QubitEntropiesPart2}
\end{table}

\newpage

\begin{table}[h]
    \centering
    \small{
    \begin{tabular}{|c||c|c|c|} 
 \hline
Holographic & Entropy Vector & Subgraph\\
\hline
  \hline
   Yes & $(1, 1, 0, 1, 1, 2, 1, 2, 1, 1, 1, 2, 1, 1, 2)$ & $g_{1152}$ (2,4)\\
  \hline
   Yes & $(1, 1, 0, 1, 1, 2, 1, 1, 2, 1, 2, 1, 1, 1, 2)$ & $g_{1152}$ (2,4)\\
  \hline
   Yes & $(1, 1, 0, 1, 1, 2, 1, 0, 2, 1, 2, 0, 1, 1, 2)$ & $g_{1152}$ (4)\\
 \hline
   Yes & $ (1, 1, 0, 1, 1, 2, 1, 2, 0, 1, 0, 2, 1, 1, 2)$ & $g_{1152}$ (4)\\
 \hline
  \hline
   Yes & $(1, 1, 1, 1, 0, 2, 2, 1, 1, 1, 2, 1, 2, 1, 1)$ & $g_{1152}$ (2,4)\\
  \hline
   Yes & $(1, 1, 1, 1, 0, 2, 1, 2, 1, 2, 1, 1, 2, 1, 1)$ & $g_{1152}$ (2,4)\\
  \hline
   Yes & $(1, 1, 1, 1, 0, 2, 0, 2, 1, 2, 0, 1, 2, 1, 1)$ & $g_{1152}$ (4)\\
 \hline
   Yes & $(1, 1, 1, 1, 0, 2, 2, 0, 1, 0, 2, 1, 2, 1, 1)$ & $g_{1152}$ (4)\\
 \hline
\hline
 \color{red} No & $(1, 1, 1, 1, 1, 2, 1, 2, 2, 2, 1, 1, 2, 2, 1)$ & $g_{1152}$ (4,6)\\
 \hline
  \color{red} No & $ (1, 1, 1, 1, 1, 2, 2, 1, 1, 1, 2, 2, 2, 2, 1)$ & $g_{1152}$ (4,6)\\
 \hline
 Yes & $(1, 1, 1, 1, 1, 2, 1, 2, 2, 2, 2, 2, 2, 2, 1)$ & $g_{1152}$ (4,6)\\
  \hline
 Yes & $(1, 1, 1, 1, 1, 2, 2, 2, 2, 1, 2, 2, 2, 2, 1)$ & $g_{1152}$ (4,6)\\
  \hline
   Yes & $(1, 1, 1, 1, 1, 2, 0, 2, 2, 2, 1, 1, 2, 2, 1)$ & $g_{1152}$ (6)\\
  \hline
     Yes & $(1, 1, 1, 1, 1, 2, 2, 1, 1, 0, 2, 2, 2, 2, 1)$ & $g_{1152}$ (6)\\
\hline
\hline
      \color{red} No & $(1, 1, 1, 1, 1, 2, 1, 1, 2, 2, 2, 1, 1, 2, 2)$ & $g_{1152}$ (4,6)\\
\hline
  \color{red} No & $(1, 1, 1, 1, 1, 2, 2, 2, 1, 1, 1, 2, 1, 2, 2)$ & $g_{1152}$ (4,6)\\
\hline
   Yes & $(1, 1, 1, 1, 1, 2, 2, 2, 1, 2, 2, 2, 1, 2, 2)$ & $g_{1152}$ (4,6)\\
   \hline
 Yes & $(1, 1, 1, 1, 1, 2, 2, 2, 2, 2, 2, 1, 1, 2, 2)$ & $g_{1152}$ (4,6)\\
  \hline
     Yes & $(1, 1, 1, 1, 1, 2, 2, 2, 0, 1, 1, 2, 1, 2, 2)$ & $g_{1152}$ (6)\\
  \hline
     Yes & $(1, 1, 1, 1, 1, 2, 1, 1, 2, 2, 2, 0, 1, 2, 2)$ & $g_{1152}$ (6)\\
\hline
\hline
  \color{red} No & $(1, 1, 1, 1, 1, 2, 1, 2, 1, 2, 1, 2, 2, 1, 2)$ & $g_{1152}$ (4,6)\\
 \hline
  \color{red} No & $(1, 1, 1, 1, 1, 2, 2, 1, 2, 1, 2, 1, 2, 1, 2)$ & $g_{1152}$ (4,6)\\
  \hline
   Yes & $(1, 1, 1, 1, 1, 2, 2, 1, 2, 2, 2, 2, 2, 1, 2)$ & $g_{1152}$ (4,6)\\
 \hline
 Yes & $ (1, 1, 1, 1, 1, 2, 2, 2, 2, 2, 1, 2, 2, 1, 2)$ & $g_{1152}$ (4,6)\\
  \hline
     Yes & $(1, 1, 1, 1, 1, 2, 1, 2, 1, 2, 0, 2, 2, 1, 2)$ & $g_{1152}$ (6)\\
  \hline
  Yes & $(1, 1, 1, 1, 1, 2, 2, 0, 2, 1, 2, 1, 2, 1, 2)$ & $g_{1152}$ (6)\\
  \hline
  \hline
 Yes & $(1, 1, 1, 1, 1, 2, 1, 2, 2, 2, 1, 2, 2, 2, 2)$ & $g_{1152}$ (7)\\
\hline
Yes & $(1, 1, 1, 1, 1, 2, 1, 2, 2, 2, 2, 1, 2, 2, 2)$ & $g_{1152}$ (7)\\
\hline
Yes & $(1, 1, 1, 1, 1, 2, 2, 1, 2, 1, 2, 2, 2, 2, 2)$ & $g_{1152}$ (7)\\
\hline
Yes & $ (1, 1, 1, 1, 1, 2, 2, 1, 2, 2, 2, 1, 2, 2, 2)$ & $g_{1152}$ (7)\\
\hline
Yes & $(1, 1, 1, 1, 1, 2, 2, 2, 1, 1, 2, 2, 2, 2, 2)$ & $g_{1152}$ (7)\\
\hline
Yes & $(1, 1, 1, 1, 1, 2, 2, 2, 1, 2, 1, 2, 2, 2, 2)$ & $g_{1152}$ (7)\\
\hline
Yes & $(1, 1, 1, 1, 1, 2, 2, 2, 2, 2, 2, 2, 2, 2, 2)$ & $g_{1152}$ (7)\\
\hline
\end{tabular}}
    \label{tab:5QubitEntropiesPart3}
\end{table}
\chapter{TABLES AND RELATIONS}\label{ExtendedTableAppendix}
\newpage
\noindent

Table \ref{tab:OrbitLengthCliffordSubgroupNoRelations} with relations to generate each subgroup.

\begin{table}[h]
    \centering
    \scriptsize
    \begin{tabular}{|c|c|c|c|c|p{0.2\textwidth}|}
    \hline
    Generators & Order & Diam. & Fact. & Diam.* & Relation\\
    \hline
    \hline
    \{$H_1$\} & $2^\dagger$ & 1 & - & - & \ref{HSquared}\\
    \hline
    \{$C_{1,2}$\} & $2^\dagger$ & 1 & - & - & \ref{CSquared}\\
    \hline
    \{$P_1$\} & $4$ & 3 & - & - & \ref{PFourth}\\
    \hline
    \{$H_1,H_2$\} & $4$ & 2 & - & - & \ref{HSquared}, \ref{HhComm}\\
    \hline
    \{$C_{1,2},C_{2,1}$\} & $6$ & 3 & - & - & \ref{CSquared}, \ref{CcComm}\\
    \hline
    \{$H_1,P_2\}$ & $8^\dagger$ & 4 & - & - & \ref{HSquared}, \ref{PFourth}, \ref{HpComm}\\
    \hline
    \{$P_1,C_{1,2}\}$ & $8^\dagger$ & 4 & - & - & \ref{PFourth}, \ref{CSquared}, \ref{PCComm}\\
    \hline
    \{$P_1,P_2$\} & $16$ & 6 & - & - & \ref{PFourth}, \ref{PpComm}\\
    \hline
    \{$H_1,C_{2,1}\}$ & $16^\dagger$ & 8 & - & - & \ref{HSquared}, \ref{CSquared}, \ref{ChFourth}\\
    \hline
    \{$H_1,C_{1,2}\}$ & $16^\dagger$ & 8 & - & - & \ref{HSquared}, \ref{CSquared},\ref{CNOTTransform},\newline \ref{ChFourth}\\
    \hline
    \{$H_1,P_2,C_{2,1}$\} & $32$ & 6 & - & - & \ref{HSquared}, \ref{PFourth}, \ref{CSquared},\newline \ref{HpComm}, \ref{PCComm}, \ref{ChFourth}\\
    \hline
    \{$P_1,C_{2,1}$\} & $32$ & 8 & - & - &  \ref{PFourth}, \ref{CSquared}, \ref{CpFourth}\\
    \hline
    \{$P_1,P_2,C_{2,1}$\} & $64$ & 7 & - & - &  \ref{PFourth}, \ref{CSquared}, \ref{PpComm},\newline \ref{CpFourth}, \ref{PCComm}\\
    \hline
    \{$P_1,C_{2,1},C_{1,2}$\} & $192$ & 11 & - & - & \ref{PFourth}, \ref{CSquared}, \ref{PpComm},\newline\ref{HpComm}--\ref{CNOTTransform},\newline  \ref{CcComm}--\ref{ChFourth} \\
    \hline
    \{$H_1,P_1$\} & $192$ & 16 & 8 & 6 & \ref{HSquared}--\ref{HPComm}\\
    \hline
    \{$H_1,H_2,P_1$\} & $384$ & 17 & 8 & 7 & \ref{HSquared}--\ref{HPComm}, \ref{HhComm},\newline \ref{HpComm}\\
    \hline
    \{$P_1,P_2,H_1$\} & $768$ & 19 & 8 & 9 & \ref{HSquared}--\ref{HPComm}, \ref{HhComm},\newline\ref{HpComm}\\
    \hline
    \{$H_1,C_{2,1},C_{1,2}\}$ & $2304^*$ & 26 & 2 & 15 & \ref{HSquared}--\ref{HPComm}, \ref{CSquared},\newline \ref{CcComm}, \ref{ChFourth} \\
    \hline
    \{$H_1,H_2,C_{1,2}\}$ & $2304^*$ & 27 & 2 & 17 & \ref{HSquared}--\ref{HPComm}, \ref{CSquared},\newline \ref{CcComm}, \ref{ChFourth}\\
    \hline
    \{$H_1,H_2,C_{1,2},C_{2,1}\}$ & $2304^*$ & 25 & 2 & 15 & \ref{HSquared}--\ref{HPComm}, \ref{CSquared},\newline \ref{CcComm}, \ref{ChFourth}\\
    \hline
    \{$H_1,P_1,C_{2,1}\}$ & $3072^*$ & 19 & 8 & 9 & \ref{HSquared}--\ref{HPComm}, \ref{CSquared},\newline \ref{PpComm}, \ref{HpComm}, \ref{FourGenRelation},\newline \ref{CpFourth}, \ref{PCComm} \\
    \hline
    \{$H_1,P_1,C_{1,2}$\} & $3072$ & 19 & 8 & 11 & \ref{HSquared}--\ref{HPComm}, \ref{PpComm},\newline \ref{CpCpRelation}, \ref{CHpSquared}\\
    \hline
    \{$H_1,P_1,P_2,C_{2,1}\}$ & $3072^*$ & 19 & 8 & 9 & \ref{PFourth}, \ref{HPComm}, \ref{CSquared},\newline \ref{PpComm},\ref{HpComm}--\ref{CNOTTransform},\newline \ref{CcComm}--\ref{ChFourth}\\
    \hline
    \{$H_1,H_2,P_1,P_2$\} & $4608$ & 17 & 8 & 12 & \ref{HSquared}--\ref{HPComm}, \ref{PpComm},\newline\ref{HpComm}\\
    \hline
    \{$H_1,P_2,C_{1,2}$\} & $9216$ & 24 & 8 & 13 & \ref{HSquared}, \ref{PFourth}, \ref{CSquared},\newline \ref{PpComm}, \ref{HpComm}--\ref{CNOTTransform},\newline \ref{PCComm}, \ref{ChFourth}\\
    \hline
    \{$H_1,H_2,P_1,C_{2,1}\}$ & $92160^*$ & 21 & 8 & 13 & \ref{HPComm}, \ref{PpComm}--\ref{CcComm}\\
    \hline
    \{$H_1,H_2,P_1,C_{1,2}\}$ & $92160^*$ & 21 & 8 & 16 & \ref{HPComm}, \ref{PpComm}--\ref{CcComm}\\
    \hline
    \{$H_1,P_1,P_2,C_{1,2}\}$ & $92160^*$ & 21 & 8 & 14 & \ref{HPComm}, \ref{PpComm}--\ref{CcComm}\\
    \hline
    All & $92160^*$ & 19 & 8 & 11 & \ref{HPComm}, \ref{PpComm}--\ref{CcComm}\\
    \hline
    \end{tabular}
\caption{$\mathcal{C}^2$ subgroups built by restricting generating set. Asterisk indicates subgroups with same elements, dagger indicates subgroups with isomorphic Cayley graphs. Relations to present each subgroup given in rightmost column.}
\label{tab:OrbitLengthCliffordSubgroup}
\end{table}

\newpage

In this Appendix, we provide a derivation of additional relations useful for subgroup construction, but not explicitly included in our presentation. Each of the relations below can be derived from Eqs. \eqref{HSquared}--\eqref{ChFourth}.

Using relations \ref{PFourth}, \ref{PpComm}, \ref{HpComm}, \ref{PCComm}, \ref{ChFourth}, \ref{FourGenRelation}, and \ref{CNOTTransform}, we can construct the useful identity,
\begin{equation}\label{CpCpRelationDerivation}
    C_{i,j}P_jC_{i,j}P_j = P_jC_{i,j}P_jC_{i,j}.
\end{equation}
A derivation of this identity is provided below for $C_{2,1}$ and $P_1$. For simplicity, let $H = H_1, h = H_2, C = C_{1,2},c = C_{2,1}, P=P_1,$ and $p=P_2$. We then derive,
    \begin{align*}
      HcH &= HcH, \nonumber \\
      HcHP &= HcHP, \nonumber \\
      hChP &= HcHP, \nonumber \\
      PhCh &= HcHP, \nonumber \\
      PHcH &= HcHP, \nonumber \\
      cPHcH &= cHcHP, \nonumber \\
      cPHcH &= p^2HcHcP, \nonumber \\
      p^2cPHcH &= HcHcP, \nonumber \\
      pcPHcpH &= HcHcP, \nonumber \\
      pcPccHcpH &= HcHcP, \nonumber \\  
      pcPcPcP^3 &= HcHcP, \nonumber \\  
      pcPcPcP^2 &= HcHc, \nonumber \\  
      cPcPcP^2p &= HcHc, \nonumber \\ 
      cPcPcP^2cHcpH &= \mathbb{1}, \nonumber \\
      ^* \qquad cPcPcP^3cP^3 &= \mathbb{1}, \nonumber \\
      cPcP &= PcPc, \nonumber
    \end{align*}
    where \ref{FourGenRelation} was used to show $cP^3cP^3 = cP^2cHcpH$, and acquire the starred line from the one above it.

A useful identity, derivable from relations \ref{PFourth}, \ref{CNOTTransform} and, \ref{ChFourth}, is the following,
\begin{equation}\label{CHPSquaredIdentity}
    (C_{i,j}H_iP_j^2)^2 = (P_j^2H_iC_{i,j})^2.
\end{equation}
A derivation of \eqref{CHPSquaredIdentity} is given below using $H_1, C_{1,2}$ and $P_2$. For simplicity, we again let $H = H_1, h = H_2, C = C_{1,2},c = C_{2,1},$ and $p=P_2$. We have,
    \begin{align*}
      p^4 = (p^2)^2 = (Ch)^8 &= \mathbb{1}, \nonumber \\
      ChChChChChChChCh &= \mathbb{1}, \nonumber \\
      CHcHCHcHCHcHCHcH &= \mathbb{1}, \nonumber \\
      (CH(cH)^4)^4 &= \mathbb{1}, \nonumber \\
      (CHp^2)^4 &= \mathbb{1}, \nonumber \\
      (C_{1,2}H_1P_2^2)^2 &= (P_2^2H_1C_{1,2})^2, \nonumber
    \end{align*}

A useful identity for constructing $\langle H_i, H_j, C_{i,j}\rangle$ is the following,
\begin{equation}\label{HTransformIdentity}
    C_{j,i}C_{i,j}C_{j,i}H_iC_{j,i}C_{i,j}C_{j,i} = H_j.
\end{equation}

A useful identity for constraining entropy in $\langle H_i, H_j, C_{i,j}\rangle$ reachability graphs is $(C_{i,j}H_j)^4 = P_i^2$. This relation can be proved using other relations in our presentation, beginning with the fact that $C_{i,j}$ and $P_i$ commute,
\begin{equation}\label{ChFourthEqualsPSquared}
\begin{split}
    C_{i,j}P_i&=P_iC_{i,j}, \\
    C_{i,j}P_iH_j&=P_iC_{i,j}H_j, \\
    C_{i,j}H_jP_i&=P_iC_{i,j}H_j, \\
    C_{i,j}H_jC_{i,j}P_iC_{i,j}&=C_{i,j}P_iH_j, \\
    C_{i,j}H_jC_{i,j}P_i&=C_{i,j}P_iH_jC_{i,j}, \\
    P_jC_{i,j}P_j^3H_j&=H_jP_j^3C_{i,j}P_j, \\
    C_{i,j}P_j^3H_jP_j^3&=P_j^3H_jP_j^3C_{i,j}, \\
    H_j&=P_jC_{i,j}P_j^3H_jP_j^3C_{i,j}P_j, \\
    H_j&=P_jC_{i,j}P_j^3C_{i,j}P_iH_jC_{i,j}, \\
C_{i,j}H_jC_{i,j}H_j&=C_{i,j}H_jC_{i,j}P_jC_{i,j}P_j^3H_jH_jC_{i,j}P_iH_jC_{i,j}, \\
C_{i,j}H_jC_{i,j}H_j&=P_iH_jC_{i,j}P_iH_jC_{i,j}, \\
(C_{i,j}H_j)^4 &= P_i^2.\\
\end{split}
\end{equation}

The details of building $\langle H_1,\,P_2,\,C_{1,2} \rangle$ are given below.
        \begin{enumerate}
        \item For words containing $0$ $H_1$ operations, we have the $32$ elements $b \in \langle C_{1,2},\,P_2 \rangle = \{p,\, pC_{1,2}p,\, C_{1,2}\overline{p}C_{1,2}p\}$, with $p$ defined as in Eq. \eqref{PhaseGroup}.
        \item Words containing $1$ $H_1$ have the form $bH_1b$. Since $H_1$ and $P_2$ commute, we can push all $P_2$ operations to the right until they reach a $C_{1,2}$. This action corresponds to multiplying all $0$ $H_1$ words on the left by $H_1,\, pC_{1,2}H_1,$ and $C_{1,2}\overline{p}C_{1,2}H_1$, giving the set $\{H_1b,\, pC_{1,2}H_1b,\, C_{1,2}\overline{p}C_{1,2}H_1b\}$, which has $32 + (4 \times 32) + (3 \times 32) = 256$ elements. 
         \item To generate words with $2$ $H_1$ operations, we left-multiply all $1$ $H_1$ words that do not begin with $H_1$, by the set $\{H_1, pC_{1,2}H_1,C_{1,2}\overline{p}C_{1,2}H_1\}$ (since otherwise we would collapse resulting $H_1$ pair). This procedure generates $1792$ elements, with $512$ duplicates such as
         \begin{equation}
                H_1C_{1,2}H_1 = P_2^2C_{1,2}H_1P_2^2C_{1,2}H_1P_2^2C_{1,2}P_2^2,
         \end{equation}
         as well as, 
         \begin{equation}
                C_{1,2}H_1C_{1,2}H_1 = C_{1,2}P_2^2C_{1,2}H_1P_2^2C_{1,2}H_1P_2^2C_{1,2}P_2^2,
         \end{equation}
         both reducible by Eq. \eqref{CHpSquared}. After removing duplicates, there are $1280$ unique $2$ $H_1$ words added to our set.

         \item For words containing $3$ $H_1$ operations, we again left-multiply all $2$ $H_1$ words which \underline{do not} begin with $H_1$, by $\{H_1, pC_{1,2}H_1,$ and $C_{1,2}\overline{p}C_{1,2}H_1\}$, generating $12544$ elements. 

         Of these $12544$ elements, $9152$ are duplicates of other $3$ $H_1$ words, e.g.
         \begin{equation}
                    H_1C_{1,2}H_1C_{1,2}H_1 = H_1C_{1,2}P_2^2C_{1,2}H_1P_2^2C_{1,2}H_1P_2^2C_{1,2}P_2^2,
         \end{equation}
         described by Eq. \eqref{CHpSquared}. An additional $256$ are duplicates of $1$ $H_1$ words, e.g.
         \begin{equation}
                H_1C_{1,2}H_1P_2^2C_{1,2}H_1 = P_2^2C_{1,2}H_1P_2^2C_{1,2}P_2^2,
         \end{equation}
         also described by Eq. \eqref{CHpSquared}, leaving only $3136$ new contributions to the subgroup.
        
         \item Words with $4$ $H_1$ operations are likewise built by left-multiplying all $3$ $H_1$ words that \underline{do not} begin with $H_1$, by $\{H_1, pC_{1,2}H_1,$ and $C_{1,2}\overline{p}C_{1,2}H_1\}$. This process generates $87808$ elements, $83200$ of which are duplicates of other $4$ $H_1$ words, e.g.
                \begin{equation}
                    H_1C_{1,2}H_1C_{1,2}H_1C_{1,2}H_1 = C_{1,2}H_1C_{1,2}H_1C_{1,2}H_1C_{1,2}H_1C_{1,2},
                \end{equation}
         which can be reduced by $(H_1C_{1,2})^8 = \mathbb{1}$. Another $1280$ are duplicates of $2$ $H_1$ words, such as
         \begin{equation}
             H_1C_{1,2}H_1C_{1,2}H_1P_2^2C_{1,2}H_1 = H_1C_{1,2}P_2^2C_{1,2}H_1P_2^2C_{1,2}P_2^2,
         \end{equation}
         described by Eqs. \eqref{CHpSquared} and \eqref{CpCpRelation}. A final $32$ elements are duplicates of $0$ $H_1$ words, e.g.
         \begin{equation}
             H_1C_{1,2}H_1P_2^2C_{1,2}H_1P_2^2C_{1,2}H_1 = P_2^2C_{1,2}P_2^2,
         \end{equation}
         explained using Eq. \eqref{CHpSquared}. Removing duplicates adds $3296$ new $4$ $H_1$ words to our subgroup.
        
         \item Finally we construct words with $5$ $H_1$ operations, multiplying all $4$ $H_1$ words that \underline{do not} begin with $H_1$ by $H_1, pC_{1,2}H_1,$ and $C_{1,2}\overline{p}C_{1,2}H_1$, and generating $614656$ words. Of these $614656$ words, $609792$ are duplicates of other $5$ $H_1$ words, e.g.
         \begin{equation}
         \begin{split}
         H_1P_2C_{1,2}&H_1P_2^3C_{1,2}H_1P_2^3C_{1,2}H_1P_2C_{1,2}H_1P_2C_{1,2}P_2^3\\
        &= C_{1,2}H_1C_{1,2}H_1C_{1,2}H_1C_{1,2}H_1C_{1,2}H_1,
         \end{split}
         \end{equation}
         described by Eq. \eqref{CHpSquared} using relations \ref{PFourth}, \ref{CSquared}, and \ref{CpFourth}. Another $3392$ are duplicates of $3$ $H_1$ words, e.g.
            \begin{equation}
                    C_{1,2}H_1C_{1,2}H_1C_{1,2}H_1C_{1,2} = H_1C_{1,2}H_1C_{1,2}H_1C_{1,2}H_1C_{1,2}H_1,
            \end{equation}
         described by $(H_1C_{1,2})^8 = \mathbb{1}$, and a final $256$ are duplicates of $1$ $H_1$ words, e.g.
                \begin{equation}
                    H_1 = H_1C_{1,2}H_1C_{1,2}H_1P_2^2C_{1,2}H_1P_2^2C_{1,2}H_1C_{1,2}P_2^2C_{1,2}P_2^2,
                \end{equation}
         by Eqs. \eqref{CHpSquared} and \eqref{CpCpRelation}. Upon removing duplicates, there are $1216$ unique $5$ $H_1$ words added to our subgroup, giving an order of $9216$.
        \end{enumerate}

The Magma Computer Algebra system \cite{MR1484478} was used to validate many of the subgroup constructions in this paper. This program, which can be downloaded to a local device or accessed via a browser at \href{http://magma.maths.usyd.edu.au/magma/}{http://magma.maths.usyd.edu.au/magma/}, offers a powerful suite of group-theoretic calculators. We have included below a few examples of code input and output that demonstrate some functionality, and provide the interested reader a coarse template for using the software.

We formally construct subgroups of $\mathcal{C}_1$ and $\mathcal{C}_2$ by taking quotient groups of the free group generated by a select set of operations. For example, the group $\langle H_i,\,P_i \rangle$ can be built as
\inBox{$F\langle H,P \rangle := \textnormal{FreeGroup}(2);$\\
$G\langle x, y\rangle, \textnormal{phi} := \textnormal{quo}\langle F|H^2 = P^4 = 1, (H*P)^3 = (P*H)^3\rangle;$\\
$G$}
\outBox{Finitely presented group $G$ on $2$ generators\\
Relations\\
\begin{equation*}
    \begin{split}
    x^2 &= \textnormal{Id}(G)\\
    y^4& = \textnormal{Id}(G)\\
    (x * y)^3 &= (y * x)^3\\
    \end{split}
\end{equation*}}

Further calculations can be used to yield desired group properties such as order, number of defining generators, checks for abelianess, cyclicity, and many more. Some simple examples are included below.
\inBox{$F\langle P,C \rangle := \textnormal{FreeGroup}(2);$\\
$G\langle x, y\rangle, \textnormal{phi} := \textnormal{quo}\langle F|C^2 = P^4 = 1, C*P*C=P\rangle;$\\
Order(G)}
\outBox{$8$}
\inBox{$F\langle H,c \rangle := \textnormal{FreeGroup}(2);$\\
$G\langle x, y\rangle, \textnormal{phi} := \textnormal{quo}\langle F|c^2 = H^2 = (H*c)^8 = 1;$\\
Parent(G)}
\outBox{Power Structure of GrpFP}

To generate the complete two-qubit Clifford group $\mathcal{C}_2$, we include all generators and the relations of our presentation. We can subsequently simplify the presentation with the following code.
\inBox{$F<H,\,h,\,C,\,c,\,P,\,p> := \textnormal{FreeGroup}(6);$\\
$G<u,\,v,\,w,\,x,\,y,\,z>, \textnormal{phi} := \textnormal{quo}\langle F|  P^4 = p^4 = H^2 = h^2 = C^2 = c^2 = (H*C*p^2)^4 = (c*C)^3  =(c*H)^8 = 1,\,
H*h*H = h,\, H*p*H = p,\, h*P*h = P,\, P^3*p*P = p,\, C*P*C = P,\, c*p*c = p,\, (H*P)^3 = (P*H)^3,\, (p*h)^3 = (h*p)^3,\,  P*H*P = (c*p*h)^3,\, C*h*C*p*C*(p^3)*h = P,\, c*C*c*H*c*C*c = h,\, H*h*c*H*h = C,\, c*C*p^2*C*c = P^2,\,  p*C*p*C = C*p*C*p,\, P^2 = (C*h)^4,\, P^2 = (C*p)^4>;$\\
Order(G);\\
Simplify(G)}
\outBox{$92160$\\
Finitely presented group on 4 generators\\
Generators as words in group G\\
\begin{equation*}
    \begin{split}
    \$.1 &= u\\
    \$.2 &= v\\
    \$.3 &= w\\
    \$.4 &= z\\
    \end{split}
\end{equation*}
Relations\\
  \$.3$\wedge$2 = Id(\$)\\
  \$.1$\wedge$2 = Id(\$)\\
  \$.2$\wedge$2 = Id(\$)\\
  (\$.1 * \$.2)$\wedge$2 = Id(\$)\\
  \$.4$\wedge$4 = Id(\$)\\
  \$.1 * \$.4 * \$.1 * \$.4$\wedge$-1 = Id(\$)\\
   \$.2 * \$.4$\wedge$-1 * \$.3 * \$.4 * \$.2 * \$.4 * \$.3 * \$.4$\wedge$-1 = Id(\$)\\
   \$.2 * \$.3 * \$.2 * \$.4$\wedge$-1 * \$.2 * \$.3 * \$.2 * \$.4 = Id(\$)\\
   \$.4 * \$.3 * \$.4 * \$.3 * \$.4$\wedge$-1 * \$.3 * \$.4$\wedge$-1 * \$.3 = Id(\$)\\
   (\$.4$\wedge$-1 * \$.3 * \$.4$\wedge$2 * \$.3 * \$.4$\wedge$-1)$\wedge$2 = Id(\$)\\
   \$.2 * \$.4 * \$.3 * \$.4$\wedge$-1 * \$.2 * \$.3 * \$.2 * \$.4$\wedge$-1 * \$.3 * \$.4 * \$.2* \$.3 = Id(\$)\\
 \$.3 * \$.4$\wedge$-1 * \$.3 * \$.2 * \$.3 * \$.2 * \$.3 * \$.4 * \$.3 * \$.2 * \$.3 * \$.2 = Id(\$)\\
 \$.4 * \$.2 * \$.4 * \$.2 * \$.4 * \$.2 * \$.4$\wedge$-1 * \$.2 * \$.4$\wedge$-1 * \$.2 * \$.4$\wedge$-1 * \$.2 = Id(\$)\\
  \$.2 * \$.1 * \$.3 * \$.2 * \$.1 * \$.4 * \$.2 * \$.1 * \$.3 * \$.2 * \$.1 * \$.4$\wedge$-1 =Id(\$)\\
  \$.1 * \$.3 * \$.4$\wedge$2 * \$.1 * \$.3 * \$.4$\wedge$-2 * \$.1 * \$.3 * \$.4$\wedge$-2 * \$.1 * \$.3 *
    \$.4$\wedge$-2 = Id(\$)\\
    \$.3 * \$.2 * \$.1 * \$.3 * \$.2 * \$.1 * \$.3 * \$.1 * \$.3 * \$.2 * \$.1 * \$.3 * \$.2
    * \$.1 * \$.3 * \$.2 = Id(\$)\\
    (\$.3 * \$.1)$\wedge$8 = Id(\$)\\
    \$.4$\wedge$-1 * \$.1 * \$.3 * \$.2 * \$.1 * \$.3 * \$.4$\wedge$-2 * \$.3 * \$.4 * \$.1 * \$.2 * \$.3*\\ \$.1 * \$.2 * \$.3 * \$.4$\wedge$-1 * \$.2 * \$.4 * \$.3 = Id(\$)\\
    \$.2 * \$.3 * \$.2 * \$.4$\wedge$-1 * \$.3 * \$.1 * \$.3 * \$.4 * \$.3 * \$.2 * \$.3 * \$.2*\\
    \$.4$\wedge$-1 * \$.1 * \$.2 * \$.3 * \$.4$\wedge$-1 * \$.2\\ * \$.3 * \$.4$\wedge$-1 * \$.2 * \$.3 * \$.1 *\$.3 = Id(\$)\\
    \$.3 * \$.4 * \$.3 * \$.2 * \$.3 * \$.1 * \$.4 * \$.3 * \$.4$\wedge$-1 * \$.2 * \$.3 * \$.2\\
    *\$.3 * \$.1 * \$.4 * \$.3 * \$.4$\wedge$-1 * \$.2 * \$.3 * \$.2 * \$.3 * \$.1 * \$.3 * \$.2
    *\$.3 * \$.4$\wedge$-1 * \$.3 * \$.4 * \$.1 * \$.2 * \$.3 * \$.2 * \$.3 * \$.4$\wedge$-1 * \$.3 * \$.4* \$.1 * \$.2 * \$.3 * \$.2 * \$.3 * \$.4$\wedge$-1 * \$.3 * \$.1 = Id(\$)}

In this section we include several additional graphs, not included in the main text. Each figure below offers further visualization for relations \eqref{HSquared}--\eqref{ChFourth}.

Figure \ref{H1P2CayleyGraph} shows the Cayley graph for $\langle H_1,\,P_2 \rangle$, containing $8$ vertices. Since $H_1$ and $P_2$ commute, the group $\langle H_1,\,P_2 \rangle$ is the direct product of $\langle H_1 \rangle \times \langle P_2 \rangle$, which is manifest in the Cayley graph structure.
    \begin{figure}[h]
        \centering        
        \includegraphics[width=8.5cm]{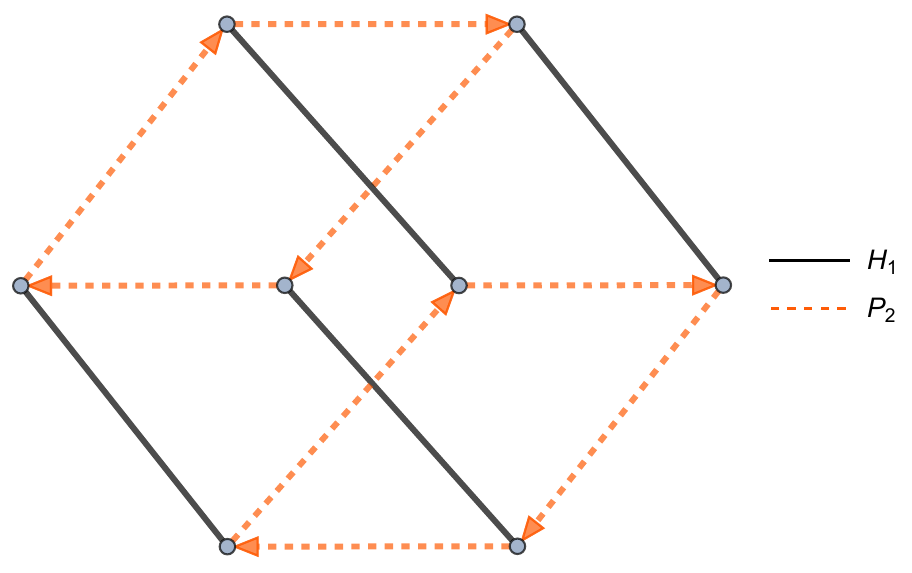}
        \caption{Cayley graph for $\langle H_1,\,P_2 \rangle$, the direct product $\langle H_1 \rangle \times \langle P_2 \rangle$, with $8$ vertices. Individual group structures for $\langle H_1 \rangle$ and $\langle P_2 \rangle$ are easily verified.}
        \label{H1P2CayleyGraph}
    \end{figure}

Figure \ref{P1P2CayleyGraph} shows the Cayley graph of $\langle P_1,\,P_2 \rangle$. The graph has $16$ vertices, and corresponds to the direct product $\langle P_1 \rangle \times \langle P_2 \rangle$.
    \begin{figure}[h]
        \centering        
        \includegraphics[width=8.8cm]{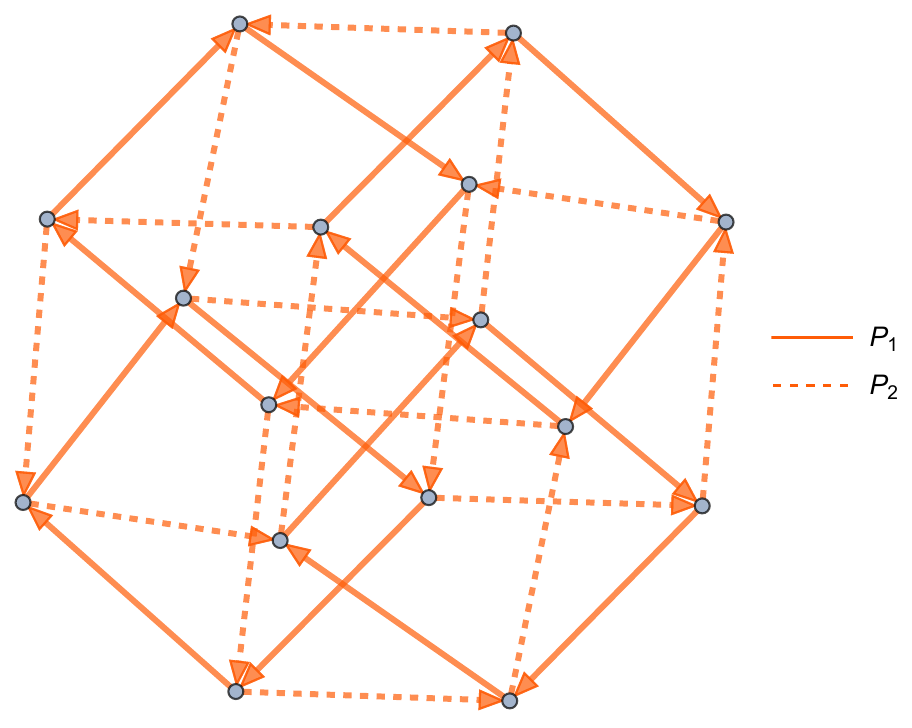}
        \caption{Cayley graph of $\langle P_1,\,P_2 \rangle$, the direct product $\langle P_1 \rangle \times \langle P_2 \rangle$, containing $16$ vertices.}
        \label{P1P2CayleyGraph}
    \end{figure}
\newpage
The group $\langle H_1,\,P_2,\,C_{1,2} \rangle$ is represented by the Cayley graph in Figure \ref{H1P2C12CayleyGraph}. The graph contains $32$ vertices, and the relation $(H_1C_{1,2})^4 = P_2^2$ can be directly visualized.
    \begin{figure}[h]
        \centering        
        \includegraphics[width=9cm]{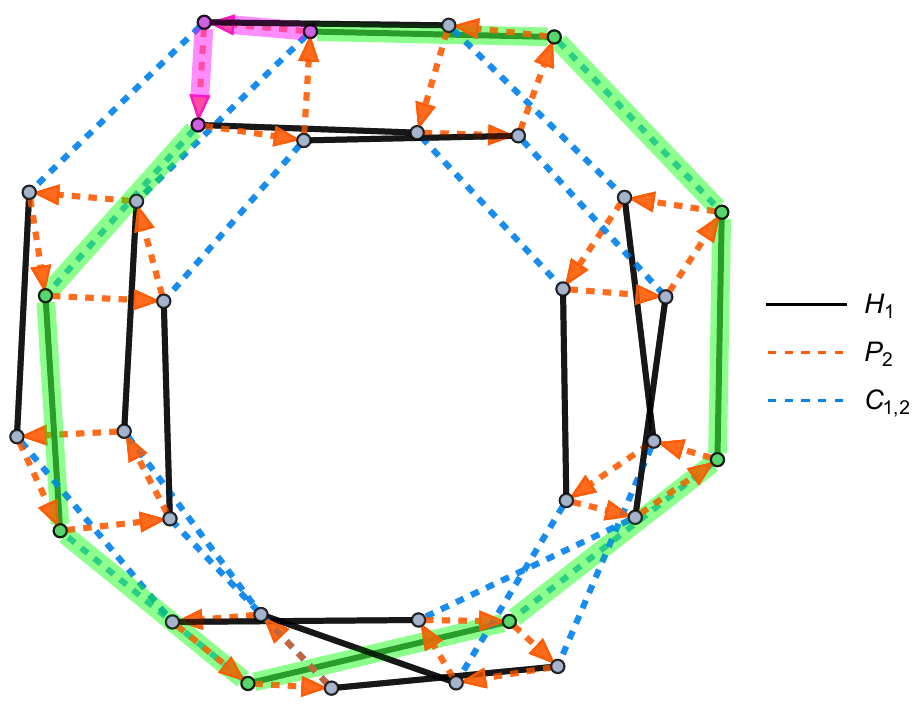}
        \caption{Cayley graph of $\langle H_1,\,P_2,\,C_{1,2} \rangle$, with $32$ vertices, where we note the non-trivial relation $(H_1C_{1,2})^4 = P_2^2$. Sequence $(H_1C_{1,2})^4$ is highlighted in green, and $P_2^2$ in magenta.}
        \label{H1P2C12CayleyGraph}
    \end{figure}

Figure \ref{P1P2H1CayleyGraph} gives the Cayley graph for $\langle P_1,\,P_2,\,H_1 \rangle$. While the graph is large, the symmetric structure of stacked $P_1$ and $P_2$ boxes connected by $H_1$ can be observed.
    \begin{figure}[h]
        \centering        
        \includegraphics[width=9.5cm]{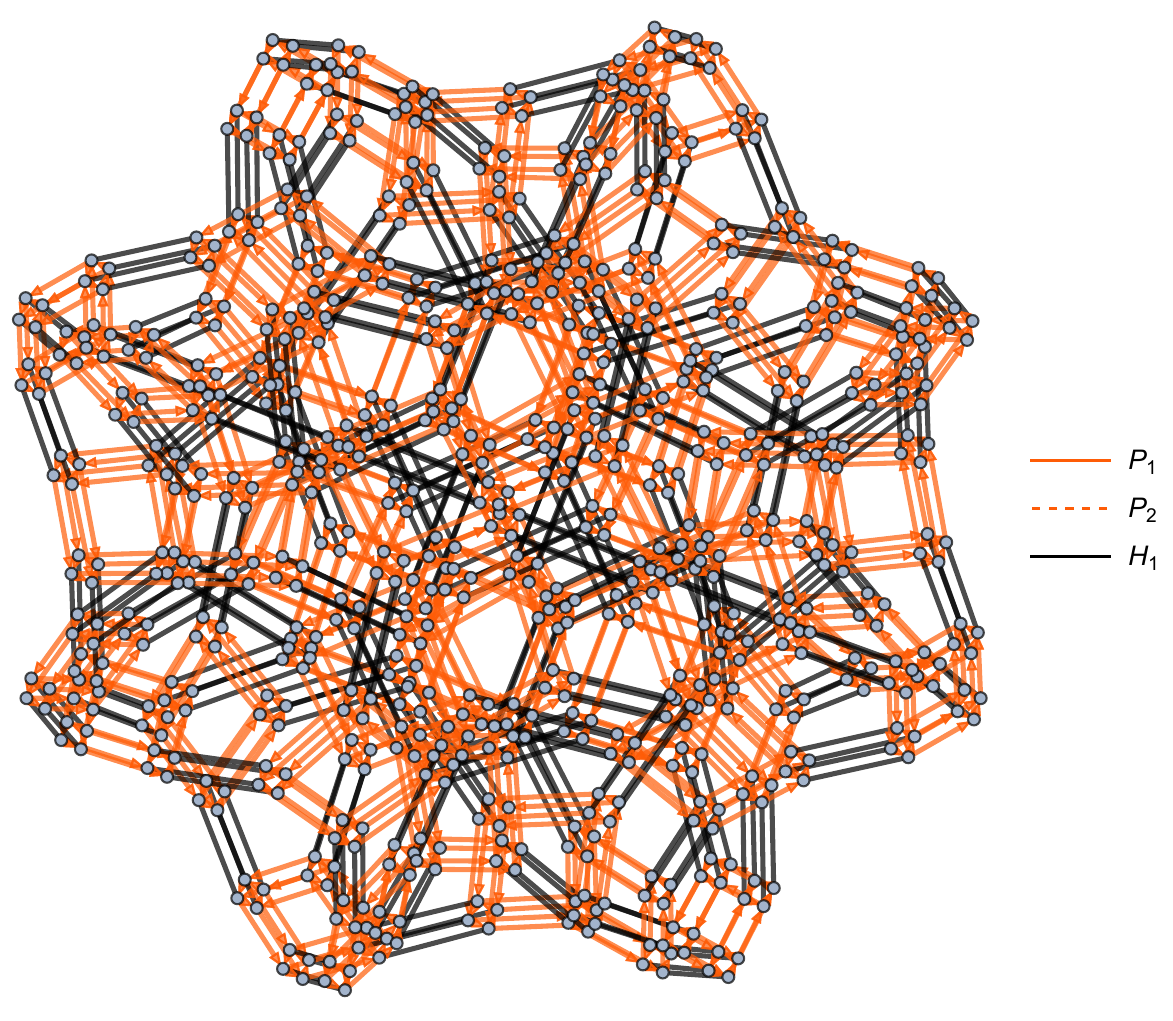}
        \caption{Cayley graph of $\langle P_1,\,P_2,\,H_1 \rangle$ with $768$ vertices. The graph contains directed $P_1$ and $P_2$ boxes connected by $H_1$ edges.}
        \label{P1P2H1CayleyGraph}
    \end{figure}
\chapter{SIMPLIFICATION FOR FIXED PARAMETERS}\label{lIndependentSimplification}
\newpage
\noindent

In this Appendix we give a few reductions for the $\ket{D^N_k}$ entropies in Eq.\ \eqref{DickeStateEntropies}. Expanding $\ln$ in Eq.\ \eqref{DickeStateEntropies} gives,
\begin{equation}
\begin{split}
    S_{\ell} = - \binom{N}{k}^{-1}\sum_{i=0}^{min(\ell,k)}& \binom{\ell}{i}\binom{N-\ell}{k-i}\ln\left[\binom{N}{k}^{-1}\right]\\
    &-  \binom{N}{k}^{-1}\sum_{i=0}^{min(\ell,k)}\binom{\ell}{i}\binom{N-\ell}{k-i}\ln\left[\binom{\ell}{i}\binom{N-\ell}{k-i}\right].\\
\end{split}
\end{equation}
We invoke the Chu-Vandermonde identity, which reads
\begin{equation}
    \sum_{i=0}^n \binom{r}{i}\binom{s}{n-i} = \binom{r+s}{n},
\end{equation}
to simplify the first sum as
\begin{equation}\label{DecoupledDickeEntropy}
\begin{split}
    S_{\ell} = \binom{N}{k}^{-1}& \binom{N}{\min(\ell,k)}\ln\left[\binom{N}{k} \right] \\
    &  -\binom{N}{k}^{-1}\sum_{i=0}^{min(\ell,k)}\binom{\ell}{i}\binom{N-\ell}{k-i}\ln\left[\binom{\ell}{i}\binom{N-\ell}{k-i}\right].\\
\end{split}
\end{equation}
For the case when $\ell \geq k$, Eq.\ \eqref{DecoupledDickeEntropy} further simplifies to,
\begin{equation}
        S_{\ell} = \ln\left[\binom{N}{k} \right] -\binom{N}{k}^{-1}\sum_{i=0}^{k}\binom{\ell}{i}\binom{N-\ell}{k-i}\ln\left[\binom{\ell}{i}\binom{N-\ell}{k-i}\right],
\end{equation}
where we note the $\ell$-independence of the first term. 

Likewise for $\ell < k$ we have the reduction
\begin{equation}\label{lLessThankSimplification}
        S_{\ell} = \frac{k!(N-k)!}{\ell!(N-\ell)!}\ln\left[\binom{N}{k} \right] -\binom{N}{k}^{-1}\sum_{i=0}^{\ell}\binom{\ell}{i}\binom{N-\ell}{k-i}\ln\left[\binom{\ell}{i}\binom{N-\ell}{k-i}\right].
\end{equation}

\newpage

The reachability graph for $\ket{D^N_N}$, and accordingly for all stabilizer states, under the action of $\mathcal{C}_2$ is shown in Figure \ref{FullC2Graph}.
    \begin{figure}[h]
        \centering
        \includegraphics[width=11cm]{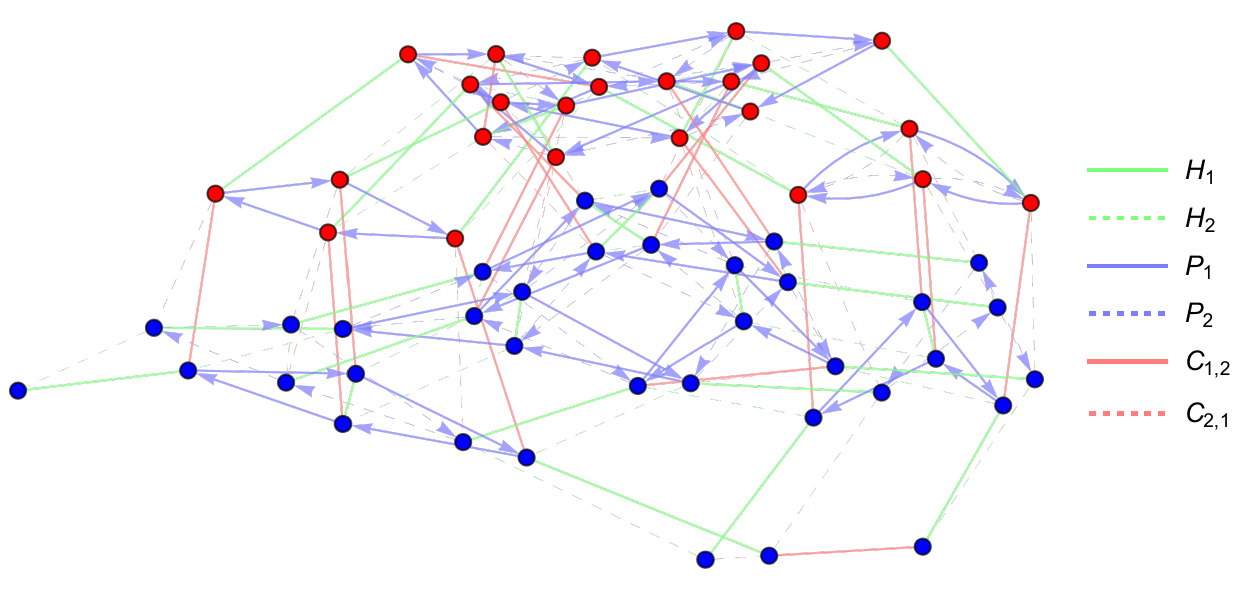}
        \caption{Orbit of all $\ket{D^N_N}$ under the action of the $2$-qubit Clifford group $\mathcal{C}_2$. The state $\ket{D^N_N}$ is a stabilizer state, and therefore its reachability graph is isomorphic to that of all $2$-qubit stabilizer states.}
    \label{FullC2Graph}
    \end{figure}

Figure \ref{FullC3Graph} gives the reachability graph for $\ket{D^N_N}$ under the action of $\mathcal{C}_3$.
    \begin{figure}[h]
        \centering
        \includegraphics[width=15cm]{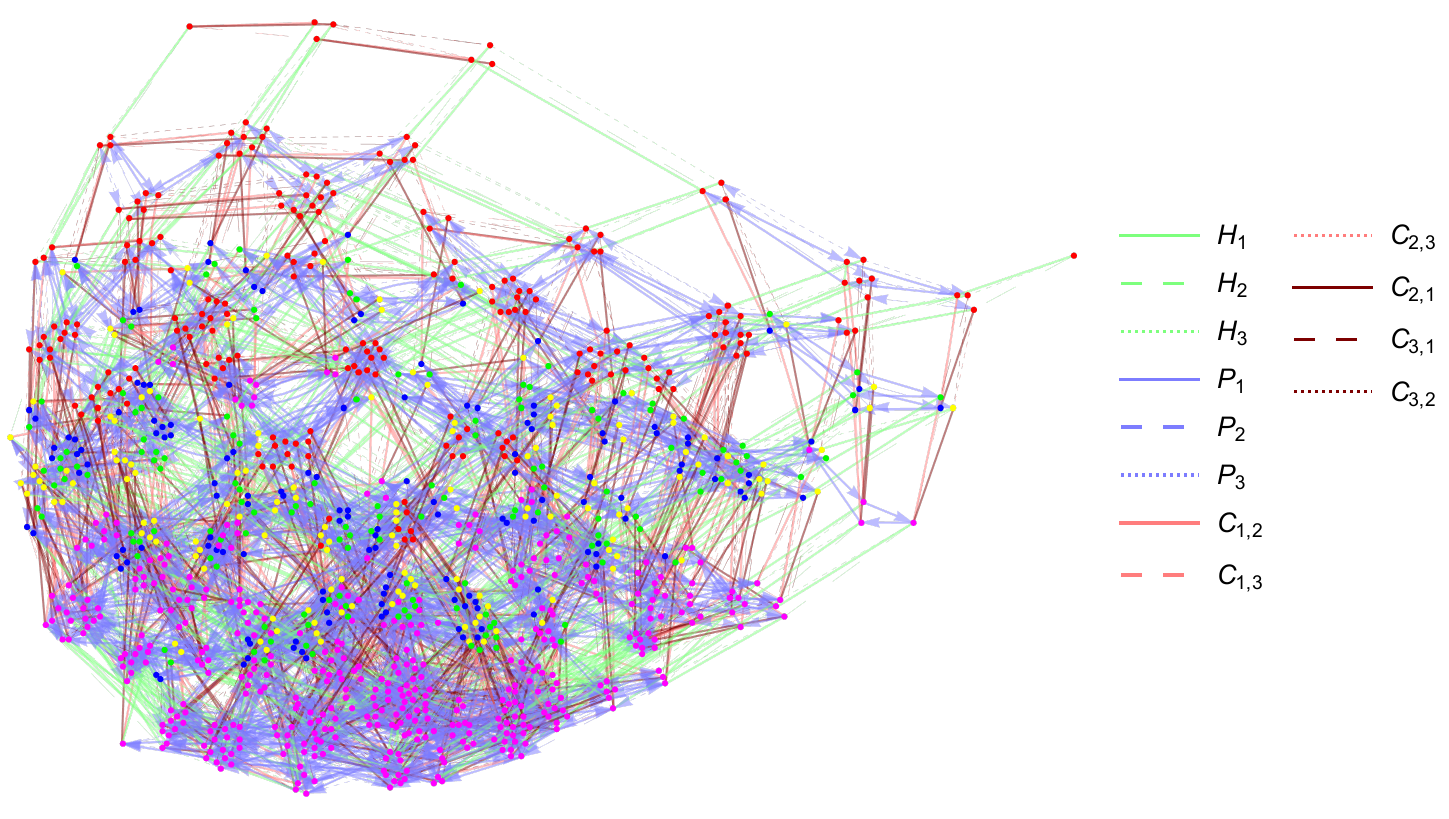}
        \caption{Orbit of $\ket{D^N_N}$ under the action of the $3$-qubit Clifford group $\mathcal{C}_3$. Since $\ket{D^N_N}$ is a stabilizer state, this reachability graph is isomorphic to the orbit shared by all $3$-qubit stabilizer states.}
    \label{FullC3Graph}
    \end{figure}

Figure \ref{D42PauliGraph} depicts the orbit of state $\ket{D^{4}_2}$ under action of the $4$-qubit Pauli group $\Pi_4$. Since $\ket{D^{4}_2}$ is stabilized by a $4$-element subgroup of $\Pi_4$, its reachability graph contains $64$ vertices.
    \begin{figure}[h]
        \centering
        \includegraphics[width=12cm]{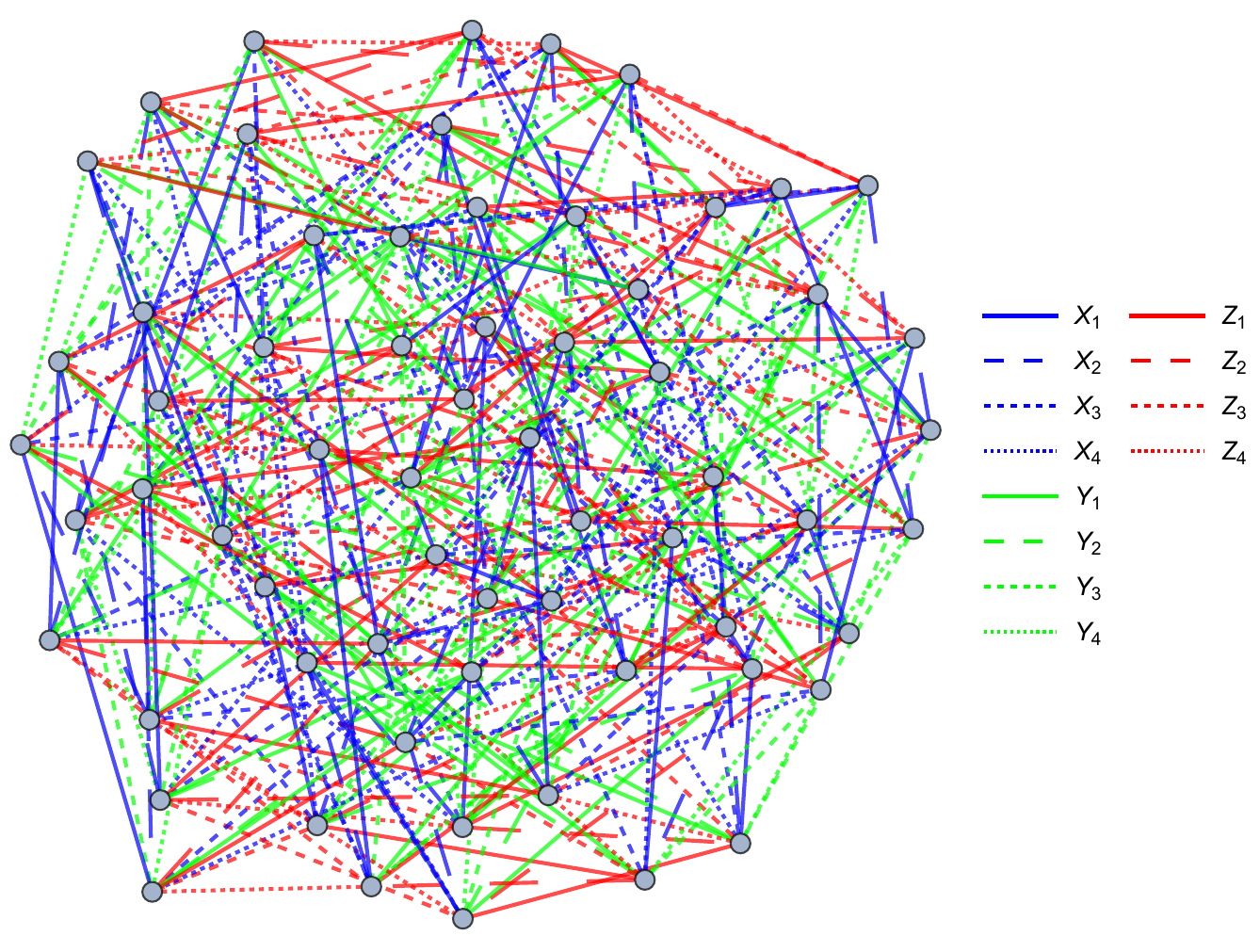}
        \caption{Orbit of $\ket{D^4_2}$ under the action of $\Pi_4$. This reachability graph contains $64$ vertices as $\ket{D^4_2}$ is only stabilized by $4$ elements of $\Pi_4$.}
    \label{D42PauliGraph}
    \end{figure}

Below we include additional examples of $\ket{D^N_1}$ orbits under the action of $(HC)_{1,2} \equiv \langle H_1,\,H_2\,C_{1,2}\, C_{2,1}\rangle$. Figure \ref{HCGraphD41} shows the orbit for $\ket{D^4_1}$ under $(HC)_{1,2}$, which contains $288$ states and $4$ different entropy vectors. This set of $5$ entropy vectors is built of $6$ different entanglement entropies.
    \begin{figure}[h]
        \centering
        \includegraphics[width=14cm]{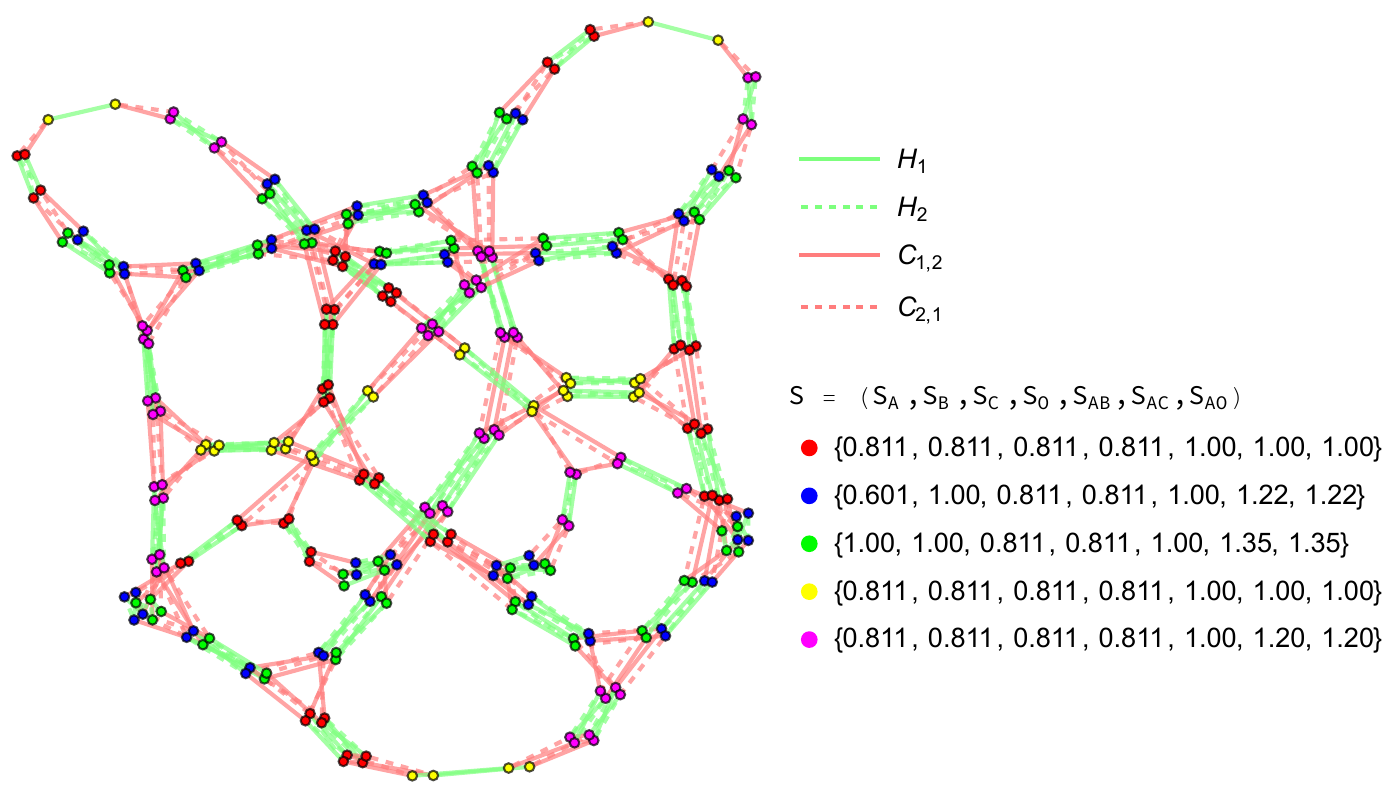}
        \caption{Orbit of $\ket{D^4_1}$, and stabilizer state, under the action of $(HC)_{1,2}$.}
    \label{HCGraphD41}
    \end{figure}

Figure \ref{HCGraphD51} gives the orbit of $\ket{D^5_1}$ under $(HC)_{1,2}$. This orbit likewise has $5$ different entropy vectors, which are composed of $9$ different entanglement entropies.
    \begin{figure}[h]
        \centering
        \includegraphics[width=15cm]{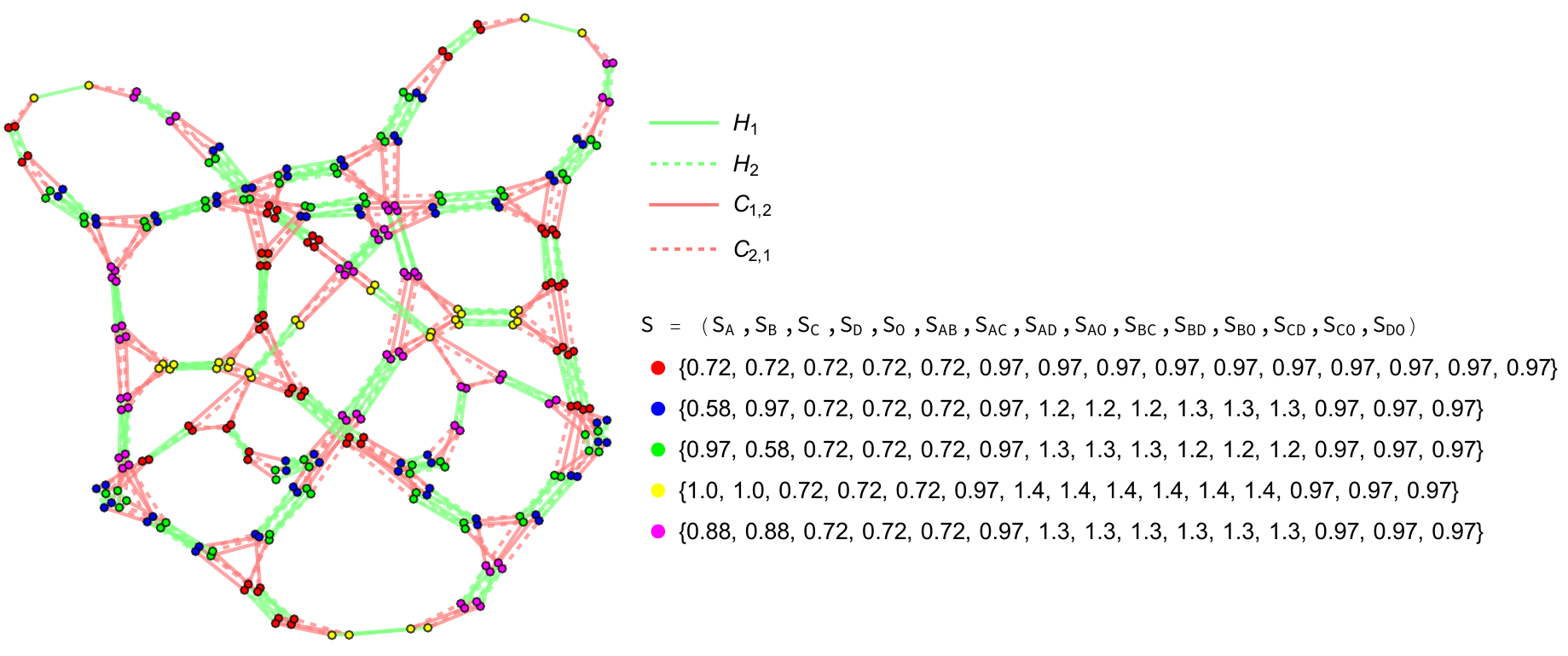}
        \caption{Orbit of $\ket{D^5_1}$, and stabilizer state, under the action of $(HC)_{1,2}$.}
    \label{HCGraphD51}
    \end{figure}

In this Appendix we give exact entropy vectors seen in Figures \ref{HCGraphD31} and \ref{HCGraphD42}. Table \ref{tab:EntropyVectorTable1} gives each entropy vector from Figure \ref{HCGraphD31}. There are $4$ entanglement entropies observed in the orbit of $\ket{D^3_1}$ under the action of $\langle H_1,\,H_2\,C_{1,2}\, C_{2,1}\rangle$, which we define as variables in Eq.\ \eqref{Entropies31} for presentation clarity.
\begin{equation}\label{Entropies31}
    \begin{split}
        s_0&\equiv 1,\\
        s_1&\equiv \frac{2}{3}\log_2\left[\frac{3}{2}\right] +\frac{1}{3}\log_2\left[3\right],\\
        s_2&\equiv \frac{5}{6}\log_2\left[\frac{6}{5}\right] +\frac{1}{6}\log_2\left[6\right],\\
        s_3&\equiv \frac{3-\sqrt{5}}{6}\log_2\left[\frac{6}{3-\sqrt{5}}\right] +\frac{3+\sqrt{5}}{6}\log_2\left[\frac{6}{3+\sqrt{5}}\right],
    \end{split}
\end{equation}

\newpage

The four entropies in Eq.\ \eqref{Entropies31} build the entropy vectors in Table \ref{tab:EntropyVectorTable1}.
\begin{table}[h]
    \centering
    \begin{tabular}{|c||c|}
    \hline
    Label & Entropy Vector\\
    \hline
    \hline
    \fcolorbox{black}{red}{\rule{0pt}{6pt}\rule{6pt}{0pt}} & $(s_1,\,s_1,\,s_1)$\\
    \hline
    \fcolorbox{black}{blue}{\rule{0pt}{6pt}\rule{6pt}{0pt}} & $(s_3,\,s_1,\,s_1)$\\
    \hline
    \fcolorbox{black}{green}{\rule{0pt}{6pt}\rule{6pt}{0pt}} & $(s_1,\,s_3,\,s_1)$\\
    \hline
    \fcolorbox{black}{yellow}{\rule{0pt}{6pt}\rule{6pt}{0pt}} & $(s_0,\,s_0,\,s_1)$\\
    \hline
    \fcolorbox{black}{magenta}{\rule{0pt}{6pt}\rule{6pt}{0pt}} & $(s_2,\,s_2,\,s_1)$\\
    \hline
    \end{tabular}
\caption{The $5$ entropy vectors found in the $(HC)_{1,2}$ orbit of $\ket{D^3_1}$, shown in Figure \ref{HCGraphD31}. For brevity, we introduce the variables in Eq.\ \eqref{Entropies31} to present these entropy vectors.}
\label{tab:EntropyVectorTable1}
\end{table}

Similarly for the orbit of $\ket{D^4_2}$ under $\langle H_1,\,H_2\,C_{1,2}\, C_{2,1}\rangle$ action, there are $5$ entanglement entropies observe. We likewise define the variables,
\begin{equation}\label{Entropies42}
    \begin{split}
        s_0&\equiv \frac{5}{6}\log_2\left[\frac{12}{5}\right] +\frac{1}{6}\log_2\left[12\right],\\
        s_1&\equiv \frac{3-\sqrt{5}}{6}\log_2\left[\frac{12}{3-\sqrt{5}}\right] +\frac{3+\sqrt{5}}{6}\log_2\left[\frac{12}{3+\sqrt{5}}\right],\\
        s_2&\equiv \frac{2}{3}\log_2\left[\frac{3}{2}\right] +\frac{1}{3}\log_2\left[6\right],\\
        s_3&\equiv \frac{3-2\sqrt{2}}{6}\log_2\left[\frac{12}{3-2\sqrt{2}}\right] +\frac{3+2\sqrt{2}}{6}\log_2\left[\frac{12}{3+2\sqrt{2}}\right],\\
        s_4&\equiv 1,\\
        s_5&\equiv \frac{2}{3}\log_2\left[\frac{3}{2}\right] +\frac{1}{3}\log_2\left[3\right],\\
        s_6&\equiv \frac{5}{6}\log_2\left[\frac{6}{5}\right] +\frac{1}{6}\log_2\left[6\right].
    \end{split}
\end{equation}

\begin{table}[h]
    \centering
    \begin{tabular}{|c||c|}
    \hline
    Label & Entropy Vector\\
    \hline
    \hline
    \fcolorbox{black}{red}{\rule{0pt}{6pt}\rule{6pt}{0pt}} & $\left(s_4,\,s_4,\,s_4,\,s_4,\,s_2,\,s_2,\,s_2 \right)$\\
    \hline
    \fcolorbox{black}{blue}{\rule{0pt}{6pt}\rule{6pt}{0pt}} & $\left(s_6,\,s_5,\,s_4,\,s_4,\,s_2,\,s_1,\,s_1 \right)$\\
    \hline
    \fcolorbox{black}{green}{\rule{0pt}{6pt}\rule{6pt}{0pt}} & $\left(s_5,\,s_6,\,s_4,\,s_4,\,s_2,\,s_1,\,s_1 \right)$\\
    \hline
    \fcolorbox{black}{yellow}{\rule{0pt}{6pt}\rule{6pt}{0pt}} & $\left(s_4,\,s_4,\,s_4,\,s_4,\,s_2,\,s_0,\,s_0 \right)$\\
    \hline
    \fcolorbox{black}{magenta}{\rule{0pt}{6pt}\rule{6pt}{0pt}} & $\left(s_4,\,s_4,\,s_4,\,s_4,\,s_2,\,s_2,\,s_2 \right)$\\
    \hline
    \fcolorbox{black}{cyan}{\rule{0pt}{6pt}\rule{6pt}{0pt}} & $\left(s_6,\,s_6,\,s_4,\,s_4,\,s_2,\,s_3,\,s_3 \right)$\\
    \hline
    \end{tabular}
\caption{The $6$ entropy vectors contained in the orbit of $\ket{D^4_2}$ under the action of $(HC)_{1,2}$, illustrated in Figure \ref{HCGraphD42}. We introduce variables in Eq.\ \eqref{Entropies42} to display the entropy vectors in the table.}
\label{tab:EntropyVectorTable2}
\end{table}
\chapter{TABLES OF ENTROPY VECTORS}\label{EntropyVectorTables}
\newpage
\noindent

Below we include sets of entropy vectors referenced throughout the paper. The states used to generate each entropy vector set are likewise given in bit-address notation. A bit-address is the ordered set of coefficients multiplying each basis ket of an n-qubit system, e.g. the bit-address $(1,0,0,1,0,0,i,i)$ indicates the state $\ket{000}+\ket{011}+i\ket{110}+i\ket{111}$. We order index qubits within each ket from right to left, i.e. the rightmost digit corresponds to the first qubit of the system, while the leftmost digit represents the $n^{\text{th}}$ qubit of an $n$-qubit system.

Reachability graphs $g_{144}$ and $g_{288}$, shown in Figures \ref{G144WithContractedGraph}--\ref{PhaseConnectedg144_288ContractedGraph}, can be generated by the action of $\HC$ or $\mathcal{C}_2$ on the $6$-qubit state in Eq. \eqref{SixQubit144State}.
\begin{equation}\label{SixQubit144State}
\begin{split}
\frac{1}{8}(1, &-1, 1, 1, -1, 1, 1, 1, 1, -1, 1, 1, 1, -1, -1, -1, 1, -1, -1, -1, -1, 1, -1, -1, -1, 1, 1, \\
& 1, -1, 1, -1, -1, -1, 1, 1, 1, 1, -1, 1, 1, -1, 1, 1, 1, -1, 1, 1, -1, -1, -1, 1,-1, 1, 1, 1,\\
& \qquad  1, -1, 1, -1, -1, -1, 1, 1, 1)\\
\end{split}
\end{equation}

There are $5$ distinct entropy vectors that can be reached in the orbit of Eq. \eqref{SixQubit144State} under $\HC$ and $\mathcal{C}_2$, given in Table \ref{tab:g144g288EntropyVectorTable}. The colors in the table correspond to the vertex colors in Figures \ref{G144WithContractedGraph}--\ref{PhaseConnectedg144_288ContractedGraph}.
\begin{table}[h]
    \centering
    \begin{tabular}{|c||c|}
    \hline
    Label & Entropy Vector\\
    \hline
    \hline
    \fcolorbox{black}{red}{\rule{0pt}{6pt}\rule{6pt}{0pt}} & $\left(1,0,1,1,1,1,1,2,2,2,2,1,1,1,1,2,2,2,2,2,2,2,2,2,2,2,2,2,2,2,2 \right)$\\
    \hline
    \fcolorbox{black}{blue}{\rule{0pt}{6pt}\rule{6pt}{0pt}} & $\left(0,1,1,1,1,1,1,1,1,1,1,2,2,2,2,2,2,2,2,2,2,2,2,2,2,2,2,2,2,2,2 \right)$\\
    \hline
    \fcolorbox{black}{green}{\rule{0pt}{6pt}\rule{6pt}{0pt}} & $\left(1,1,1,1,1,1,1,2,2,2,2,2,2,2,2,2,2,2,2,2,2,2,2,2,2,2,3,3,3,3,2\right)$\\
    \hline
    \fcolorbox{black}{yellow}{\rule{0pt}{6pt}\rule{6pt}{0pt}} & $\left(1,1,1,1,1,1,1,2,2,2,2,2,2,2,2,2,2,2,2,2,2,2,2,2,2,3,3,2,2,3,3 \right)$\\
    \hline
    \fcolorbox{black}{magenta}{\rule{0pt}{6pt}\rule{6pt}{0pt}} & $\left(1,1,1,1,1,1,1,2,2,2,2,2,2,2,2,2,2,2,2,2,2,2,2,2,2,3,2,3,3,2,3\right)$\\
    \hline
    \end{tabular}
\caption{Table of the $5$ entropy vectors found on $g_{144}$ and $g_{288}$ reachability graphs in Figures \ref{G144WithContractedGraph}--\ref{PhaseConnectedg144_288ContractedGraph}. Colors in the leftmost column correspond to the vertex colors of these figures.}
\label{tab:g144g288EntropyVectorTable}
\end{table}

To construct the reachability graphs shown in Figure \ref{G1152WithContractedGraph}--\ref{FullC2WithContractedGraph}, we consider the orbit of the $8$-qubit state in Eq. \eqref{EightQubitState} under the action of $\HC$ and $\mathcal{C}_2$.
\begin{equation}\label{EightQubitState}
\begin{split}
\frac{1}{\sqrt{32}}(0, &0, 0, 0, 0, 0, 0, 0, 0, 0, 0, 0, 0, 0, 0, 0, 0, 0, 0, 0, 0, 0, 0, 0, 0, 0, 0,0, 0, 0, 0, 0,0, 0, 0, 0, \\
& \quad 0, 1, 0, -i, 0, -1, 0, -i, 0, 0, 0, 0, 0, 0, 0, 0, i, 0, -1, 0, -i, 0,-1, 0, 0, 0, 0,0, 0, 0,\\
& \quad \quad   0, 0,  0, 0, 0, 0, 0, 0, 0, 0, 0, 0, 0, 0,0, 0, 0, 0, 0,0, 0, 0, 0, 0, 0, 0, 0, 0, 0, 0, 1, 0, -i, 0,  \\
& \quad \quad  0, 0, 0, 0, 0, 0, 0,0, -1, 0, -i, 0,0, i, 0, -1, 0,0, 0, 0, 0, 0,0, 0, 0, -i, 0,-1, 0, 0, 0,   \\
& \quad \quad \quad  0, 0,   0, 0, 0, 0, 0,0, 0, 0, 0, 0,0, 0, 0, 0, 0, 0, 0, 0, 0, 0,0, 0, 0, 0, 0, 0, 0, 0, -i, 0,\\
& \quad \quad \quad \quad  -1, 0, 0,0, 0, 0, 0, 0,0, 0, i, 0, -1,-1, 0, -i, 0, 0, 0, 0, 0, 0, 0, 0, 0, 1, 0, -i, 0,  \\
& \quad \quad \quad \quad 0, 0, 0, 0, 0,0, 0, 0, 0, 0, 0, 0, 0, 0, 0,0, 0, 0, 0, 0, 0, 0, 0, 0, 0, 0, 0,0, 0, 0, 0, 0, \\
& \quad \quad \quad \quad 0, 0, 0, 0, i,0,  1, 0, -i, 0,1, 0, 0, 0, 0, 0, 0,0, 0, 0, 0, 1,0, i, 0, -1, 0,i, 0, 0, 0, 0) \\
\end{split}
\end{equation}

The entropy vectors generated along the $\HC$ and $\mathcal{C}_2$ orbits of Eq. \eqref{EightQubitState} are given in Figure \ref{EightQubitEntropyVectors}. The color preceding each entropy vector corresponds to the vertex coloring in Figures \ref{G1152WithContractedGraph}--\ref{FullC2WithContractedGraph}.
    \begin{figure}[h]
        \centering
        \includegraphics[width=\textwidth]{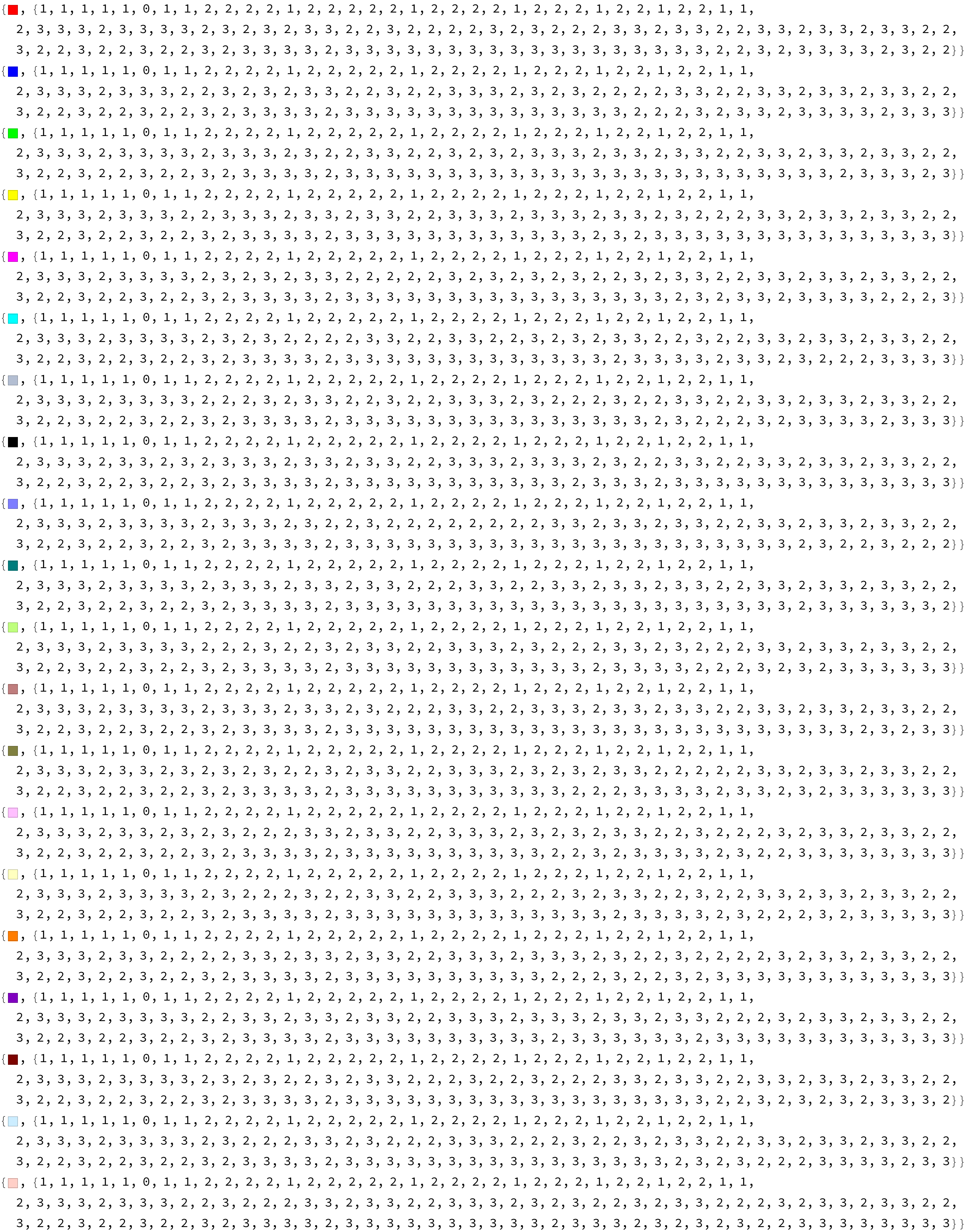}
        \caption{All $8$-qubit entropy vectors reached in the orbit of Eq. \ref{EightQubitState} under the action of $\mathcal{C}_2$. Of these $20$ entropy vectors, $18$ can be generated with $\HC$ alone.}
        \label{EightQubitEntropyVectors}
    \end{figure}

The orbit of $\ket{D^3_1}$ under $\HC$ and $\mathcal{C}_2$ reaches $5$ entropy vectors, built of $4$ different entangle entropy values. We define these $4$ unique entropy values in Eq.\ \eqref{Entropies31}.
\begin{equation}\label{Entropies31}
    \begin{split}
        s_0&\equiv 1,\\
        s_1&\equiv \frac{2}{3}\log_2\left[\frac{3}{2}\right] +\frac{1}{3}\log_2\left[3\right],\\
        s_2&\equiv \frac{5}{6}\log_2\left[\frac{6}{5}\right] +\frac{1}{6}\log_2\left[6\right],\\
        s_3&\equiv \frac{3-\sqrt{5}}{6}\log_2\left[\frac{6}{3-\sqrt{5}}\right] +\frac{3+\sqrt{5}}{6}\log_2\left[\frac{6}{3+\sqrt{5}}\right],
    \end{split}
\end{equation}

The specific entropy vectors encountered in the $\HC$ and $\mathcal{C}_2$ orbit of $\ket{D^3_1}$ are given in Table \ref{tab:WStateEntropyVectorTable}. Each entropy vector is built from the entanglement entropies given in Eq.\ \ref{Entropies31}. Numerical approximations for each entropy vector were provided in Figure \ref{WStateG288WithContractedGraph} when each first appeared.

\newpage

\begin{table}[h]
    \centering
    \begin{tabular}{|c||c|}
    \hline
    Label & Entropy Vector\\
    \hline
    \hline
    \fcolorbox{black}{red}{\rule{0pt}{6pt}\rule{6pt}{0pt}} & $(s_1,\,s_1,\,s_1)$\\
    \hline
    \fcolorbox{black}{blue}{\rule{0pt}{6pt}\rule{6pt}{0pt}} & $(s_3,\,s_1,\,s_1)$\\
    \hline
    \fcolorbox{black}{green}{\rule{0pt}{6pt}\rule{6pt}{0pt}} & $(s_1,\,s_3,\,s_1)$\\
    \hline
    \fcolorbox{black}{yellow}{\rule{0pt}{6pt}\rule{6pt}{0pt}} & $(s_0,\,s_0,\,s_1)$\\
    \hline
    \fcolorbox{black}{magenta}{\rule{0pt}{6pt}\rule{6pt}{0pt}} & $(s_2,\,s_2,\,s_1)$\\
    \hline
    \end{tabular}
\caption{Table showing the $5$ entropy vectors seen in Figures \ref{WStateG288WithContractedGraph} and \ref{PhaseConnectedWG288ContractedGraph}, reached in the orbit of $\ket{D^3_1}$ under $\HC$ and $\mathcal{C}_2$. For clarity, we introduce variables in Eq.\ \eqref{Entropies31} to succintly present each entropy vector.}
\label{tab:WStateEntropyVectorTable}
\end{table}

\newpage

Similarly for the orbit of $\ket{D^4_2}$ under $\HC$ and $\mathcal{C}_2$, we observe $6$ different entropy vectors. Following the notation of \cite{Munizzi:2023ihc}, we give these $6$ entropy vectors in terms of their $5$ distinct entanglement entropy components, which we list in Eq. \eqref{Entropies42}.
\begin{equation}\label{Entropies42}
    \begin{split}
        s_0&\equiv \frac{5}{6}\log_2\left[\frac{12}{5}\right] +\frac{1}{6}\log_2\left[12\right],\\
        s_1&\equiv \frac{3-\sqrt{5}}{6}\log_2\left[\frac{12}{3-\sqrt{5}}\right] +\frac{3+\sqrt{5}}{6}\log_2\left[\frac{12}{3+\sqrt{5}}\right],\\
        s_2&\equiv \frac{2}{3}\log_2\left[\frac{3}{2}\right] +\frac{1}{3}\log_2\left[6\right],\\
        s_3&\equiv \frac{3-2\sqrt{2}}{6}\log_2\left[\frac{12}{3-2\sqrt{2}}\right] +\frac{3+2\sqrt{2}}{6}\log_2\left[\frac{12}{3+2\sqrt{2}}\right],\\
        s_4&\equiv 1,\\
        s_5&\equiv \frac{2}{3}\log_2\left[\frac{3}{2}\right] +\frac{1}{3}\log_2\left[3\right],\\
        s_6&\equiv \frac{5}{6}\log_2\left[\frac{6}{5}\right] +\frac{1}{6}\log_2\left[6\right].
    \end{split}
\end{equation}

The $5$ entropies in Eq. \eqref{Entropies42} build the $6$ entropy vectors in Table \ref{tab:D42EntropyVectors}.
\begin{table}[h]
    \centering
    \begin{tabular}{|c||c|}
    \hline
    Label & Entropy Vector\\
    \hline
    \hline
    \fcolorbox{black}{red}{\rule{0pt}{6pt}\rule{6pt}{0pt}} & $\left(s_4,\,s_4,\,s_4,\,s_4,\,s_2,\,s_2,\,s_2 \right)$\\
    \hline
    \fcolorbox{black}{blue}{\rule{0pt}{6pt}\rule{6pt}{0pt}} & $\left(s_6,\,s_5,\,s_4,\,s_4,\,s_2,\,s_1,\,s_1 \right)$\\
    \hline
    \fcolorbox{black}{green}{\rule{0pt}{6pt}\rule{6pt}{0pt}} & $\left(s_5,\,s_6,\,s_4,\,s_4,\,s_2,\,s_1,\,s_1 \right)$\\
    \hline
    \fcolorbox{black}{yellow}{\rule{0pt}{6pt}\rule{6pt}{0pt}} & $\left(s_4,\,s_4,\,s_4,\,s_4,\,s_2,\,s_0,\,s_0 \right)$\\
    \hline
    \fcolorbox{black}{magenta}{\rule{0pt}{6pt}\rule{6pt}{0pt}} & $\left(s_4,\,s_4,\,s_4,\,s_4,\,s_2,\,s_2,\,s_2 \right)$\\
    \hline
    \fcolorbox{black}{cyan}{\rule{0pt}{6pt}\rule{6pt}{0pt}} & $\left(s_6,\,s_6,\,s_4,\,s_4,\,s_2,\,s_3,\,s_3 \right)$\\
    \hline
    \end{tabular}
\caption{The $6$ entropy vectors in the orbit of $\ket{D^4_2}$ under $\HC$ and $\mathcal{C}_2$. The vectors appears in Figures \ref{G576WithContractedGraph} and \ref{PhaseConnectedG576ContractedGraph}, and are built using the variables in Eq.\ \eqref{Entropies42}.}
\label{tab:D42EntropyVectors}
\end{table}
\chapter{PROOF OF CLIFFORD INVARIANCE}\label{app:stab0invariance}
\newpage
\noindent

\section{Invariance of $\stab_0$}\label{app:stab0invariance}
In this section, we prove that $\stab_0$ is invariant under the following operations
\begin{enumerate}
	\item Clifford unitaries. $\rho\rightarrow U\rho U^{\dagger}$ with $U\in \mathcal{C}(d^n)$.
	\item Composition with stabilizer states, $\rho \rightarrow \rho \otimes \sigma $ with $\sigma $ a stabilizer state.
 	\item Partial trace of the first qudit, $\rho \rightarrow \Tr_{1}(\rho)$
	\item Computational basis measurement on the first qudit, $\rho \rightarrow (\st{i}\otimes \bbbone_{n-1}) \rho (\st{i}\otimes \bbbone_{n-1} )/\Tr(\rho \st{i}\otimes \bbbone_{n-1})$ with probability $\Tr(\rho \st{i}\otimes \bbbone_{n-1})$

\end{enumerate}
\begin{proposition}{Clifford Invariance.}
Given $\sigma\in\stab_0$ and $C\in\mathcal{C}(d^n)$, then $C\sigma C^{\dagger}\in\stab_0$
\begin{proof}
    \begin{equation}
    C\sigma C^\dagger = \frac{1}{d^n}\sum_{P\in G} C PC^\dagger = \frac{1}{d}\sum_{\tilde{P}\in\tilde{G}} \tilde{P}
\end{equation}
the latter is an element of $\stab_0$  since it is the equal-weighted sum of Pauli operators of a commuting set. This is since $C:P\mapsto \tilde{P}\in\tilde{\mathcal{P}}$ and the action of a unitary on a subgroup $G$ does not modify the commutation relations.
\end{proof}
\end{proposition}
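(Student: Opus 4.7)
The plan is to use the canonical decomposition of any $\sigma \in \stab_0^{(n)}$ and show that conjugation by a Clifford unitary returns another state in the same canonical form. Specifically, I would start by writing $\sigma = \frac{1}{d^n}\sum_{P\in G} P$ for some abelian subgroup $G \subset \tilde{\mathcal{P}}_n$, as guaranteed by the definition of $\stab_0^{(n)}$. Conjugation is linear, so $C\sigma C^\dagger = \frac{1}{d^n}\sum_{P\in G} CPC^\dagger$; the task reduces to understanding the set $\tilde{G} := \{CPC^\dagger : P\in G\}$.

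Next I would invoke the defining property of the Clifford group, namely that $\mathcal{C}(d^n)$ is the normalizer of $\tilde{\mathcal{P}}_n$: each $CPC^\dagger$ lies again in $\tilde{\mathcal{P}}_n$. The key structural step is then to verify that $\tilde G$ inherits the abelian subgroup structure from $G$. This amounts to two easy checks: closure, $(CP_1C^\dagger)(CP_2C^\dagger) = C(P_1P_2)C^\dagger \in \tilde G$ whenever $P_1P_2 \in G$; and commutativity, which is preserved because conjugation is an automorphism of $\tilde{\mathcal{P}}_n$ and therefore preserves all commutators. Identity and inverses carry over by the same token, so $\tilde G$ is an abelian subgroup of $\tilde{\mathcal{P}}_n$ of the same cardinality as $G$.

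Putting these pieces together yields $C\sigma C^\dagger = \frac{1}{d^n}\sum_{\tilde P\in\tilde G}\tilde P$, which is exactly the canonical form required by the definition of $\stab_0^{(n)}$, so $C\sigma C^\dagger \in \stab_0^{(n)}$ and we are done. There is no serious obstacle here — the proof is essentially bookkeeping around the normalizer property. The only subtle point worth being explicit about is that $\tilde G$ must not acquire a non-trivial multiple of the identity (which would make $\sigma$ fail to be a valid state); this is automatic because conjugation is a bijection and $G$ itself does not contain such an element by assumption, so phases on the group elements are preserved and the support of the corresponding projector keeps the same dimension $d^n/|G|$.
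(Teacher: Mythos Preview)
Your proposal is correct and follows essentially the same approach as the paper: write $\sigma$ as an equal-weighted sum over an abelian Pauli subgroup $G$, use the normalizer property of the Clifford group to see that $\tilde G = CGC^\dagger$ consists of Pauli operators, and observe that conjugation preserves the commutation relations so $\tilde G$ is again a commuting subgroup. The paper's version is more terse and omits the explicit checks of closure and the remark about non-trivial phases that you include, but the logical skeleton is identical.
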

\begin{proposition}
    Given $\rho\in\stab_0$ and $\tau\in\stab_0$ then $\rho\otimes\tau\in\stab_0$
\end{proposition}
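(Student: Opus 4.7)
The plan is to unpack the definition of $\stab_0$, write both $\rho$ and $\tau$ explicitly as equal-weight sums over abelian subgroups of their respective Pauli groups, take the tensor product, and then exhibit the resulting state as an equal-weight sum over an abelian subgroup of the larger Pauli group. Concretely, write $\rho = \frac{1}{d^{n_1}}\sum_{P \in G_1} P$ and $\tau = \frac{1}{d^{n_2}}\sum_{Q \in G_2} Q$ with $G_1 \subset \tilde{\mathcal{P}}_{n_1}$ and $G_2 \subset \tilde{\mathcal{P}}_{n_2}$ commuting subgroups of appropriate order. Then
\begin{equation}
\rho \otimes \tau = \frac{1}{d^{n_1+n_2}} \sum_{P \in G_1} \sum_{Q \in G_2} P \otimes Q.
\end{equation}

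Next I would define $G := \{P \otimes Q : P \in G_1,\ Q \in G_2\}$ and verify three things: (i) $G \subseteq \tilde{\mathcal{P}}_{n_1+n_2}$, which follows because the tensor product of Pauli strings is again a Pauli string on the combined register; (ii) $G$ is a subgroup, using bilinearity of the tensor product, $(P_1 \otimes Q_1)(P_2 \otimes Q_2) = (P_1 P_2) \otimes (Q_1 Q_2)$, together with closure of $G_1$ and $G_2$ under multiplication and inversion; (iii) $G$ is abelian, since for any two generators $(P_1 \otimes Q_1)(P_2 \otimes Q_2) = P_1 P_2 \otimes Q_1 Q_2 = P_2 P_1 \otimes Q_2 Q_1 = (P_2 \otimes Q_2)(P_1 \otimes Q_1)$ using commutativity within each factor. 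Finally, $|G| = |G_1|\cdot|G_2| = d^{n_1} \cdot d^{n_2} = d^{n_1+n_2}$, giving the correct normalization, so
\begin{equation}
\rho \otimes \tau = \frac{1}{d^{n_1+n_2}} \sum_{R \in G} R \in \stab_0^{(n_1+n_2)}.
\end{equation}

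I do not anticipate a significant obstacle here; the argument is essentially bookkeeping about how the tensor product interacts with the group structure of the Pauli group. The only subtle point is making sure the cardinality $|G| = d^{n_1+n_2}$ is correct (rather than some larger overcounting), which holds because $P_1 \otimes Q_1 = P_2 \otimes Q_2$ forces $P_1 = P_2$ and $Q_1 = Q_2$ up to global phase, so distinct pairs $(P,Q)$ yield distinct elements of $\tilde{\mathcal{P}}_{n_1+n_2}$ provided $G_1$ and $G_2$ are themselves free of redundant phase identifications, as built into the definition of $\stab_0$.
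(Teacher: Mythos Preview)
Your proposal is correct and follows essentially the same approach as the paper: write each state as an equal-weight Pauli sum over its abelian stabilizer group, tensor them, and identify the result as an equal-weight sum over $G_1 \times G_2$, which is again an abelian Pauli subgroup. One minor point: your claim $|G_1| = d^{n_1}$ holds only for \emph{pure} stabilizer states; for general $\stab_0$ elements the stabilizer group can be smaller, but this does not affect your argument since the normalization factor $1/d^{n_1+n_2}$ comes directly from the definition of $\stab_0$ rather than from the group cardinality.
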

\begin{proof}
    \begin{equation}
        \rho\otimes\sigma = \frac{1}{d^{2n}}\sum_{P\in G_1, Q\in G_2}P\otimes Q=\frac{1}{d^2}\sum_{P\otimes Q\in G_1\times G_2} P \otimes Q. 
    \end{equation}
where the latter is an element of $\stab_0$ since the tensor product of Pauli operators is still a Pauli operator and the Cartesian product of a group is still a group, and since the tensor product does not affect the commutation relations of the $G_1$ or $G_2$, then $G_1\times G_2$ is a commuting group and so $\rho\otimes\sigma\in\stab_0$.
\end{proof}
\begin{proposition}
    Given a state $\rho\in\stab_0$ then $\Tr_1{\rho}\in\stab_0$
\end{proposition}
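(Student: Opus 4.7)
The plan is to prove this by a direct computation using the structure of Pauli strings under the partial trace. The key observation is that single-qudit Pauli operators are traceless unless they are proportional to the identity, which means only a specific subgroup of $G$ contributes after tracing out the first qudit.

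First, I would write $\rho = \frac{1}{d^n}\sum_{P\in G} P$ where $G\subset\tilde{\mathcal{P}}_n$ is an abelian subgroup. Each $P\in G$ factorises as $P = P^{(1)}\otimes P^{(2:n)}$ where $P^{(1)}\in\tilde{\mathcal{P}}$ acts on the first qudit and $P^{(2:n)}\in\tilde{\mathcal{P}}_{n-1}$ acts on the remaining qudits (absorbing any overall phase into the first factor). Since $\mathrm{Tr}(P^{(1)}) = d$ when $P^{(1)}\propto\mathbb{1}$ and $0$ otherwise, the partial trace collapses the sum to
\begin{equation}
\Tr_1(\rho) = \frac{1}{d^n}\sum_{P\in G_1} d\cdot \eta_P P^{(2:n)} = \frac{1}{d^{n-1}}\sum_{P\in G_1} \eta_P P^{(2:n)},
\end{equation}
where $G_1 := \{P\in G : P^{(1)} = \eta_P\mathbb{1}\}$ and $\eta_P$ is the corresponding phase.

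Next I would check that $G_1$ is a subgroup of $G$, which follows immediately because the set of Pauli strings whose first-qudit component is a scalar multiple of $\mathbb{1}$ is closed under multiplication. Then I would introduce the map $\phi : G_1 \to \tilde{\mathcal{P}}_{n-1}$ defined by $\phi(P) = \eta_P P^{(2:n)}$, which is a group homomorphism. Writing $\tilde{G} := \phi(G_1)$, this is an abelian subgroup of $\tilde{\mathcal{P}}_{n-1}$ since $G_1$ is abelian. The identity in $\phi$ is injective when $G$ contains no non-trivial central element (a property inherited by $G_1$, hence no non-trivial scalar multiple of $\mathbb{1}_{n-1}$ appears in $\tilde{G}$), so the sum above can be rewritten as $\Tr_1(\rho) = \frac{1}{d^{n-1}}\sum_{Q\in\tilde{G}} Q$, which by definition belongs to $\stab_0^{(n-1)}$.

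The only subtle point, and the main thing to check carefully, is the bookkeeping of phases: one must verify that no two distinct elements of $G_1$ project to Paulis differing only by a phase (otherwise cancellations would occur and the normalisation would be wrong). This is exactly the statement that $G$ contains no non-trivial central element, which is the standing requirement for $\rho$ to be a valid normalised state. Once this is addressed, the rest is routine algebraic manipulation.
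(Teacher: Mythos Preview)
Your proposal is correct and takes essentially the same approach as the paper: compute the partial trace directly, observe that only Pauli strings with $P^{(1)}\propto\mathbb{1}$ survive, and note that these form a commuting subgroup whose projection onto the remaining $n-1$ qudits is again abelian. Your version is more careful about the phase bookkeeping (the paper simply asserts $P_1=\mathbb{1}$ for surviving terms and that the remaining factors form a commuting Pauli group), but the core idea is identical.
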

\begin{equation}
    \Tr_1(\rho)=\frac{1}{d^n}\sum_{P\in G}\Tr(P_1)P_{2\ldots n}=\frac{1}{d^{n-1}}\sum_{P_{2\ldots n}\in\Tr_1(G)}P_{2\ldots n}
\end{equation}
where $P_1$ labels the Pauli operator on the first qudit of $P$. It is easy to observe that the only elements whose partial trace is different from $0$ are the ones with $P_1=\bbbone$. These elements that were in $G$ are still commuting Pauli operator in the traced group $\Tr_1{G}$. 
\begin{proposition}
    Given a state $\rho$ and $\{|i\rangle\}$ the 1-qudit computational basis, then $(\st{i}\otimes \bbbone_{n-1}) \rho (\st{i}\otimes \bbbone_{n-1} )/\Tr(\rho \st{i}\otimes \bbbone_{n-1})\in\stab_0$
\begin{proof}
For the sake of simplicity, let us consider the case for a multi-qubit system and $i=0$, it can be easily generalized for $i\neq 0$ and qudits.
    \begin{align}
    \frac{(\st{0}\otimes \bbbone_{n-1}) \rho (\st{0}\otimes \bbbone_{n-1} )}{\Tr(\rho \st{0}\otimes \bbbone_{n-1})}&=\frac{\sum_{P\in G}\Tr(\st{0}P_1) \st{0}\otimes P_{2\ldots n}}{\sum_{P\in G}\Tr(\st{0}P_1)\Tr(P_{2\ldots n})}\\
    &=\frac{1}{2^{n-1}}\frac{\sum_{P\in G}\Tr(P_1\st{0})\st{0}\otimes P_{2\ldots n}}{\sum_{P_1\in\Tr_{2\ldots n}G} \Tr(P_1\st{0})}\\
    &=\frac{1}{2^{n-1}}\frac{\sum_{P\in G|P_1\in\{\bbbone,Z\}}\st{0}\otimes P_{2\ldots n}}{\sum_{P_1\in{\bbbone,Z}\cap\Tr_{2\ldots n}G}}\\
    &=\st{0}\otimes\frac{1}{2^{n-1}}\frac{\sum_{P\in G|P_1\in\{\bbbone,Z\}}P_{2\ldots n}}{\sum_{P_1\in{\bbbone,Z}\cap\Tr_{2\ldots n}G}}\\
    \end{align}
    note that $\sum_{P_1\in{\bbbone,Z}\cap\Tr_{2\ldots n}G}$ can be either $1$ or $2$, due to the terms $\bbbone_{n}$ $Z\bbbone_{n-1}$. While on the numerator the only terms surviving have on the first qubit $\bbbone$ or $Z$. Now it is not difficult to see that for $G$ to be a commuting group if $\sum_{P_1\in{\bbbone,Z}\cap\Tr_{2\ldots n}G}=1$ then there will be no multiplying factor to the numerator, while in the other case, there will be a $2$ since each non-zero $P_{2\ldots n}$ has to repeat twice. Then it is not difficult to see that one has a stabilizer state, because $P_{2\ldots n} $ is still summing on a commuting Pauli subgroup. 
\end{proof}

\end{proposition}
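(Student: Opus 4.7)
My plan is to mirror the structure of the preceding three invariance proofs, writing $\rho \in \stab_0$ in its canonical form $\rho = \frac{1}{d^n}\sum_{P\in G} P$ for some abelian subgroup $G \subseteq \tilde{\mathcal{P}}_n$, and then show that the post-measurement state factorizes as $\st{i} \otimes \sigma$ with $\sigma \in \stab_0^{(n-1)}$. The first step is to decompose each Pauli string as $P = P_1 \otimes P_{2\ldots n}$ and compute
\[
(\st{i}\otimes \bbbone_{n-1})\rho(\st{i}\otimes \bbbone_{n-1}) = \frac{1}{d^n}\sum_{P\in G} \bra{i}P_1\ket{i}\, \st{i} \otimes P_{2\ldots n}.
\]
Since $\bra{i} P_1 \ket{i}$ is nonzero only when $P_1 \in \langle Z \rangle$ (i.e., $P_1$ is diagonal in the computational basis), this selects the subset $G_Z := \{P \in G \mid P_1 \in \langle Z \rangle\}$, and for such $P_1 = Z^{k}$ one has $\bra{i}P_1\ket{i} = \omega^{ki}$, a phase.

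Next I would compute the normalization $\Tr[(\st{i}\otimes \bbbone_{n-1})\rho] = d^{-n}\sum_{P\in G_Z} \omega^{k(P_1)i}\, d^{n-1}\Tr[P_{2\ldots n}]/d^{n-1}$, which reduces to counting those $P \in G_Z$ whose entire tail is proportional to identity; using the commuting group structure this evaluates to $|G_Z|/(d^n \cdot d^{-1}) = |G_Z|/d^{n-1}$ (up to the $\omega^{ki}$ phases which cancel by the orthogonality argument the authors use for the $i=0$ case). The crucial observation, which requires a short lemma, is that the phase factors $\omega^{k(P)i}$ can be absorbed into a redefinition $\tilde{P}_{2\ldots n} := \omega^{k(P)i} P_{2\ldots n}$ so that after normalization the state takes the form
\[
\st{i} \otimes \frac{1}{|G_Z|/d}\sum_{P\in G_Z}\tilde{P}_{2\ldots n},
\]
which is manifestly of the $\stab_0$ form on $n-1$ qudits, provided the set $\{\tilde{P}_{2\ldots n} : P\in G_Z\}$ is an abelian subgroup of $\tilde{\mathcal{P}}_{n-1}$.

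The main obstacle, and the step that I expect to require the most care, is verifying this final claim: that projecting $G_Z$ onto the last $n-1$ qudits yields an abelian subgroup whose cardinality is $|G_Z|/|\{P\in G_Z : P_{2\ldots n} \propto \bbbone\}|$ after the phase redefinition is consistent. Abelianness follows from abelianness of $G$ together with the fact that $Z$-type operators on the first qudit commute among themselves, so the commutation relation on $P_{2\ldots n}$ is inherited from $G$. The group structure on the tails requires checking that if $P, P' \in G_Z$ with $P_{2\ldots n} = P'_{2\ldots n}$ as operators but different first-qudit parts $Z^k, Z^{k'}$, then the phase redefinition remains consistent, i.e., $\omega^{ki} = \omega^{k'i}$ for all relevant $i$ — otherwise the summands would destructively interfere and reduce the effective group. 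This degeneracy is precisely what produces the factor of two in the authors' $d=2$ argument (the appearance of $\sum_{P_1 \in \{\bbbone, Z\} \cap \Tr_{2\ldots n} G} = 2$), and generalizing it to arbitrary $d$ amounts to computing the index $[G_Z : \{P\in G_Z : P_{2\ldots n} \propto \bbbone\}]$ and showing it cancels the appropriate factor in the normalization. Once this bookkeeping is handled, the conclusion $\st{i}\otimes\sigma \in \stab_0^{(n)}$ follows immediately, since $\st{i}$ itself lies in $\stab_0^{(1)}$ and $\stab_0$ is closed under tensor products by the previous proposition.
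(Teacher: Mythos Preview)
Your approach is essentially the same as the paper's: write $\rho = d^{-n}\sum_{P\in G}P$, decompose each Pauli string across the first qudit and the rest, observe that only diagonal $P_1$ survive the projection, and then argue that the surviving tails $P_{2\ldots n}$ form an abelian Pauli subgroup on $n-1$ qudits so that the result is $\st{i}\otimes\sigma$ with $\sigma\in\stab_0^{(n-1)}$. The paper restricts to $d=2$, $i=0$ (where all phases are $1$) and handles the degeneracy informally as a factor of $1$ or $2$; you are carrying out the same computation for general $d$ and $i$, which is why the phase bookkeeping and the index $[G_Z:\{P\in G_Z: P_{2\ldots n}\propto\bbbone\}]$ appear explicitly in your version but not theirs.
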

\subsection{Prof of~\cref{prop:qrf}}\label{qrf}
\begin{proof}
   Let us start by expanding the relative entropy, we have 
   \begin{align}
   \mathcal{F}_R(\rho)=-\min_{\sigma\in \mathrm{FLAT}^{(n)}}\Tr[\rho\log\sigma]-S(\rho).
   \end{align}
   Since the elements of FLAT are all proportional to projection operators,  it is possible to choose $\sigma$ such that $\sigma=\bbbone_{ r}/r\oplus 0_{d- r}$ is diagonal in the same basis as $\rho$ where $r\equiv \rank(\rho)$, $\bbbone_r$ is the identity on a subspace of dimension $r$, and $0_{d-r}$ is the zero-matrix of dimension $(d-r)\times (d-r)$. Hence the first term becomes $\log r\,\Tr\rho = \log r=S_{max}(\rho)$ and $\mathcal F_R(\rho)\leq S_{max}(\rho)-S(\rho)$. 

    To show that the minimum is attained for a rank $r$ density operator $\rho$ when $\mathcal F_R(\rho)= S_{max}(\rho)-S(\rho)$, suppose on the contrary that there exists $\sigma=\Pi_k/k$, where $\Pi_k$ is a projection operator of rank $k$ such that the first term is less than $\log r$ and making $\mathcal{F}_R(\rho)<S_{\rm max}(\rho)-S(\rho)$. Let $M=\rho\log\sigma$; in the diagonal basis of $\sigma$, where we denote the diagonal element by $\lambda_i$, one has  
    
  \begin{align}
  \Tr M=\sum_i M_{ii} = \sum_i \rho_{ij}\delta_{ji}\log\lambda_i = \sum_i \rho_{ii}\log\lambda_i.
  \end{align}
Let us note that when $i>k$, $\log\lambda_{i>k}=-\infty$. Then in order for the trace of $M=\rho\log\sigma$ to be finite, we need $\rho_{ii}=0$ for any $\lambda_i\ne 0$ to be $0$.
    On the other hand, we know that if the diagonal of a positive semi-definite matrix has a zero on the diagonal, then the corresponding rows and columns must be all $0$s. Then it implies that up to rearranging the rows and columns for the sake of clarity, 
    \begin{align}
    \rho = \begin{pmatrix}
        A & 0\\
        0 & 0
    \end{pmatrix}
    \end{align}
    where $A$ is a $k\times k$ block matrix of rank at most $k$. Therefore, $-\Tr M = \log k\, \Tr[AI] = \log k<\log r$ by assumption, we must have $k<r$. Because $\dim A\geq \rank(A)$, it follows that $\rank(A)=\rank(\rho)\leq k<r$, which is a contradiction. 
\end{proof}
\section{Proof of~\cref{th:magicdist}}\label{app:proofthmd}
In this section, we prove~\cref{th:magicdist}. Let us start from the upper bound. Being defined through two minima, we can arbitrarily choose a state $\psi$ and a stabilizer $\sigma$ to upper bound $M_{\text{dist}}^{(NL)}$. Consider the state $\ket{\psi_{AB}}$, whose Schmidt decomposition can be written as $\ket{\psi_{AB}}=\sum_{i}^{D}\lambda_i\ket{\lambda_i^{A}\lambda_i^{B}}$, where $\lambda_i$ are the Schmidt coefficients and $D$ its Schmidt rank. Due to the minimization over $U=U_{A}\otimes U_B$, the basis $\ket{\lambda_{i}^{A/B}}$ can be brought in the computational basis (or another complete stabilizer basis)
\be 
U\ket{\lambda_{i}^{A/B}}=\ket{s_i^{A/B}}
\ee
Then, we choose $\ket{\sigma}=\sum_{i}^{d^{\lfloor \log_d D \rfloor}}\frac{1}{d^{\lfloor \log_d D \rfloor/2}} \ket{s_i^As_i^B}$, where $\lfloor \cdot \rfloor$ labels the floor function. Let us then compute the upper-bound to $M_{\text{dist}}^{(NL)}(\psi_{AB})$. 
\begin{align}
M_{\text{dist}}^{(NL)}(\psi_{AB})&\le \frac{1}{2}\norm{\sum_{ij}^{D}\lambda_i\lambda_j\ketbra{s_i^A s_i^B}{s_j^A s_j^B}-\sum_{ij}^{d^{\lfloor \log D \rfloor}}d^{-\lfloor \log_d D \rfloor}\ketbra{s_i^A s_i^B}{s_j^A s_j^B}}\\ 
&=\sqrt{1-\sum_{ij}^{D}\sum_{kl}^{d^{\lfloor \log D \rfloor}}d^{-\lfloor \log D \rfloor}\lambda_{i}\lambda_{j}\braket{s_l^A s_l^B}{s_i^A s_i^B}\braket{s_j^A s_j^B}{s_k^As_k^B}}
\end{align}
where we first used the Fuchs-Van der Graaf inequality, where the equality comes by $\psi_{AB}$ and $\sigma$ being pure states, and then rewritten the states in their Schmidt decomposition. Now without loss of generality, since $d^{\lfloor \log_d D \rfloor}\le D$, due to our degrees of freedom in the choice of $\sigma$ and $\psi_{AB}$ we can order the basis states such that only the first $d^{\lfloor \log_d D \rfloor}$ have nonzero overlap, and so it follows: 
\begin{align}
M_{\text{dist}}^{(NL)}(\psi_{AB})&\le \sqrt{1-\sum_{ij}^{d^{\lfloor \log_d D \rfloor}}d^{-\lfloor \log_d D \rfloor}\lambda_{i}\lambda_{j}}\\ 
&\le \sqrt{1-\sum_{ij}^{d^{\lfloor \log_d D \rfloor}}D^{-1}\lambda_{i}\lambda_{j}}\\
&\le \sqrt{1-\sum_{i}^{d^{\lfloor \log_d D \rfloor}}D^{-1}\lambda_{i}^2}\\ 
&=\sqrt{1-\frac{1}{D}+\sum_{i=d^{\lfloor \log D \rfloor}}^{D}D^{-1}\lambda_{i}^2}\\ 
&=\sqrt{1-\frac{1}{D}+\lambda_{\max}^{2}
\left(1-\frac{d^{\lfloor \log_d D \rfloor}}{D}\right)}\\ 
&=\sqrt{1-e^{S_{max}(A)}+e^{S_{\infty}(A)}\left(1-\frac{e^{\log d \lfloor S_{max}(A)/\log d \rfloor}}{e^{S_{max}(A)}}\right)}
\end{align}
where we first utilized the inequality $D^{-1}\leq d^{-\log_d D}$, and since $\lambda_i>0$ by definition, we can upper bound $M_{\text{dist}}^{(NL)}(\psi_{AB})$ by simply considering the diagonal terms. Next, we employed $\sum_{i}^{D}\lambda_i^2=1$ to rewrite our inequality, and finally, we utilized $S_{\max}(A)=\log{D}$ and $S_{\infty}=\log \lambda_{\text{max}}^2$.

Let us now focus on the lower bound. To prove it let us first provide a bound between $\mathcal{F}(\psi)$ and $ M_{dist}(\psi)$.
\begin{lemma}\label{lemmaflatmdist}
Let $\psi$ be a state then its flatness $\mathcal{F}(\psi)$ is upper bounded by $M_{\text{dist}}$ as follows
\begin{equation}
    \mathcal{F}(\psi)\le 8M_{\text{dist}}(\psi).
\end{equation}
\begin{proof}
Starting from the flatness one can add a zero term to it; take a flat state $\sigma\in\stab_0$
\begin{equation}
  \mathcal{F}(\psi)=\mathcal{F}(\psi)-\mathcal{F}(\sigma).
  \label{eq:flatsigma}
\end{equation}
We can then bound the flatness as follows:
\begin{align}
  \mathcal{F}(\psi)&=\Tr(\psi^3-\sigma^3)-\Tr\left((\psi^2)^{\otimes 2}-(\sigma^2)^{\otimes 2}\right)\\
                     &=\left| \Tr(\psi^3-\sigma^3) \right| + \left| \Tr\left((\psi^2)^{\otimes 2}-(\sigma^2)^{\otimes 2}\right) \right| \\
                     &\le \left| \Tr(\psi^3-\sigma^3) \right|+ 2 \left|\Tr\left((\psi^2)-(\sigma^2)\right) \right|\\
                     &\le 1- (1-T)^3 +2 - 2(1-T)^2\le T^3+ 7T\le 8T
\end{align}
where $T=1/2 \left\| \psi -\sigma \right\|_1 $. In the second line we made use of the triangular inequality, in the third line, we used the following inequality
\begin{align}
    |\Tr(\psi^2)\Tr(\psi^2)-\Tr(\sigma^2)\Tr(\sigma^2)|&\le|\Tr(\psi^2)\left(\Tr(\psi^2)-\Tr(\sigma^2)\right)|+|\left(\Tr(\psi^2)-\Tr(\sigma^2)\right)\Tr(\sigma^2)|\\&\le 2|\Tr(\psi^2)-\Tr(\sigma^2)|
\end{align}
while in the fourth line we used \cite[ Lemma 1.2]{chen_sharp_2016} and then $T^3\le T$, since $0\le T\le 1$. 
By minimizing over $\sigma\in\stab$ we prove the lower bound with $M_{dist}(\psi)$. 
\end{proof}
\end{lemma}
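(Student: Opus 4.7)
The plan is to bound the anti-flatness directly in terms of the trace distance to some $\sigma \in \stab_0$, and only at the very end take the minimum over $\sigma$ to obtain $M_{\text{dist}}(\psi)$. The starting observation is that flatness vanishes on every element of $\stab_0$: by the definition of $\stab_0$ reviewed in Section \ref{sec:magicmeasure}, any $\sigma\in\stab_0$ is proportional to a projector and thus has a uniform spectrum on its support, so $\Tr(\sigma^3)=\Tr(\sigma^2)^2$ and $\mathcal{F}(\sigma)=0$. This lets me trivially insert $\mathcal{F}(\sigma)$ into the expression for $\mathcal{F}(\psi)$, writing
\begin{equation*}
\mathcal{F}(\psi)=\mathcal{F}(\psi)-\mathcal{F}(\sigma)=\bigl[\Tr(\psi^3)-\Tr(\sigma^3)\bigr]-\bigl[\Tr(\psi^2)^2-\Tr(\sigma^2)^2\bigr],
\end{equation*}
and then applying the triangle inequality to reduce the problem to bounding each of the two bracketed pieces separately.

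For the purity-squared difference, I would use the factorization $a^2-b^2=(a-b)(a+b)$ together with the fact that $\Tr(\psi^2),\Tr(\sigma^2)\le 1$, which yields $|\Tr(\psi^2)^2-\Tr(\sigma^2)^2|\le 2|\Tr(\psi^2)-\Tr(\sigma^2)|$. For the $k$-th moment differences with $k\in\{2,3\}$, I would invoke the sharp perturbation bound of \cite[Lemma 1.2]{chen_sharp_2016}, which gives $|\Tr(\psi^k)-\Tr(\sigma^k)|\le 1-(1-T)^k$ where $T=\tfrac{1}{2}\|\psi-\sigma\|_1$. Combining these two inputs produces
\begin{equation*}
\mathcal{F}(\psi)\le\bigl[1-(1-T)^3\bigr]+2\bigl[1-(1-T)^2\bigr]=7T-6T^2+T^3\le T^3+7T,
\end{equation*}
and since $0\le T\le 1$ one has $T^3\le T$, so $\mathcal{F}(\psi)\le 8T$. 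Minimizing the right-hand side over $\sigma\in\stab_0$ and invoking Definition \ref{def:mod_tr_dist_mag} then converts $T$ into $M_{\text{dist}}(\psi)$, which yields the claimed bound $\mathcal{F}(\psi)\le 8M_{\text{dist}}(\psi)$.

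The main obstacle I anticipate is the justification of the moment perturbation inequality $|\Tr(\psi^k)-\Tr(\sigma^k)|\le 1-(1-T)^k$. This is a clean but nontrivial spectral estimate that controls differences of polynomial trace functionals by trace distance, and quoting the existing lemma from \cite{chen_sharp_2016} bypasses a self-contained derivation. Should one prefer an independent argument, the natural route is to use the telescoping identity $\psi^k-\sigma^k=\sum_{j=0}^{k-1}\psi^{j}(\psi-\sigma)\sigma^{k-1-j}$ together with the Hölder-type bound $|\Tr(X(\psi-\sigma))|\le\|X\|_\infty\|\psi-\sigma\|_1$ and $\|\psi^{j}\sigma^{k-1-j}\|_\infty\le 1$, which produces the cruder but still usable estimate $|\Tr(\psi^k)-\Tr(\sigma^k)|\le 2kT$. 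This alternative route would change the final numerical constant from $8$ to something slightly larger but preserve the linear-in-$T$ structure. Everything else in the argument is elementary algebra.
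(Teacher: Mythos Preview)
Your proposal is correct and follows essentially the same argument as the paper: subtract the vanishing $\mathcal{F}(\sigma)$ for $\sigma\in\stab_0$, split via the triangle inequality, factor the purity-squared term, invoke \cite[Lemma 1.2]{chen_sharp_2016} for the moment differences, and minimize over $\sigma$. Your exposition is in fact slightly cleaner (you correctly write the intermediate polynomial $7T-6T^2+T^3$ before bounding), and the alternative telescoping route you sketch is a useful remark not present in the paper.
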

Using Lemma~\ref{lemmaflatmdist}, we can thus write,  
\begin{equation}
    \mathcal{F}(\psi_A)\le 8 \min_{U_A}M_{\text{dist}}(U_A\psi_A U_A^{\dag})\label{cscsc}
\end{equation}
where $\psi_A=\tr_B{\psi_{AB}}$ and we used that $\mathcal{F}(\psi_A)$  is invariant under the action of global unitaries. 
Now, let us show that $ \min_{U_A}M_{\text{dist}}(U_A\psi_A U_A^{\dag})\le M_{\text{dist}}^{(NL)}(\psi_{AB})$. First recall that given $\psi_A=\Tr_B(\psi_{AB}) $ due to the monotonicity of $M_{\text{dist}}$ one has $M_{\text{dist}}(\psi_A)\le M_{\text{dist}}(\psi_{AB})$. Now let us prove the statement by contradiction. First, let $U_A$ be the unitary attaining the minimum in Eq.~\eqref{cscsc}. Let us suppose that there exists a bipartite unitary $U\equiv V_A\otimes V_B$ obeying
\begin{equation}
    M_{\text{dist}}(U_A \psi_A U_A^{\dag})> M_{\text{dist}}(U \psi_{AB} U^{\dag})
\end{equation}
Then we have the following chain of inequalities
\begin{equation}
    M_{\text{dist}}(U_A \psi_A U_A^{\dag})> M_{\text{dist}}(U \psi_{AB} U^{\dag})\ge  M_{\text{dist}}(\tr_B U \psi_{AB} U^{\dag}) =  M_{\text{dist}}(V_A \psi_{A} V_A^{\dag}) 
\end{equation}
and this is a contradiction to the statement that $U_A$ attains the minimum. Therefore, one obtains that $ \min_{U_A}M_{\text{dist}}(U_A\psi_A U_A^{\dag})\le M_{\text{dist}}^{(NL)}(\psi_{AB})$. This result combined with ~\cref{lemmaflatmdist} concludes the proof.

\section{Stabilizer relative entropies}\label{app:nlentropies}
\subsection{Proof of~\cref{th:relstab}}\label{proofth2}
Let us start by proving the upper-bound to $M_{RS}^{(NL)}$. We choose $\sigma_{AB}=\bbbone_{d^{\lceil \log_d  D  \rceil}}/d^{\lceil\log_d  D  \rceil}\oplus 0_{n-\lceil \log_d D  \rceil}$ where $D$ is the Schmidt rank of $\rho_A$.  Then expanding the relative entropy expansion one obtains the following bound
\begin{align}
M_{RS}^{(NL)}(\psi_{AB})&\le-\Tr[\psi_{AB} \log\sigma_{AB}] = \Tr[\sum_{i,j=1}^D\lambda_i\lambda_j |s_i\rangle\langle s_j| \sum_{k=1}^{d^{\lceil\log_d  D  \rceil}} |s_k\rangle\langle s_k|\log d \lceil \log_d D  \rceil] \\&= \lceil \log_d  D  \rceil \log d\,   \Tr[\psi_{AB}]=\log d \lceil   S_{max}(A)/\log d  \rceil,
\end{align}
%
Concluding the proof for the upper bound. Shifting our focus on the lower bound instead, let us note that for any $\rho$, $M_{RS}(\rho)\geq \mathcal{F}(\rho)$. This is a simple consequence of $\stab_0^{(n)}\subset \mathrm{FLAT}^{(n)}$. Because $\mathcal{F}(\rho)$ is isospectral under any unitary conjugation, it must follow that $M_{RS}(U \rho U^{\dagger})\geq F(U\rho U^{\dagger})=F(\rho)$. Therefore, 
\begin{align}
\mathcal F_R(\rho_A)\leq \min_{U_A} M_{RS}(U_A\rho_A U_A^{\dagger}).
\end{align}

    On the other hand, for any $\rho_A$, from monotonicity it follows that $M_{RS}(\rho_{AB})\geq M_{RS}(\rho_A)$ where $\rho_A=\Tr_B[\rho_{AB}]$. Therefore, we must have \begin{equation}
       \min_{U_A} M_{RS}(U_A\rho_A U_A^{\dagger})\leq \min_{U=V_A\otimes V_B}M_{RS}(U\rho_{AB}U^{\dagger}) \equiv M_{RS}^{(NL)}(\rho_{AB}).
    \end{equation}
    We can see that this is true from a proof by contradiction. Suppose there exists some $U_{A},U$ which attains the respective minima but has $$M_{RS}(U_A\rho_A U_A^{\dagger})> M_{RS}(U\rho_{AB} U^{\dagger}),$$ then from monotonicity, we must have $$M_{RS}(U_A\rho_A U_A^{\dagger})>M_{RS}(U\rho_{AB}U^{\dagger})\geq M_{RS}(\Tr_B [U\rho_{AB} U^{\dagger}]) = M_{RS}(V_A\rho_A V_A^{\dagger})$$ for some local unitary $V_A$ which yields a lower distance than $U_A$. Since we assumed that $U_A$ attains the minimum, this violates our assumption, concluding the proof for the lower bound. 

    \subsection{Proof of~\cref{prop:RelativeEnt}}\label{proofprop4}

To bound the non-local magic of a pure state $\rho_{AB}$, consider a pure state $\psi_{AB}= U \rho_{AB}U^{\dagger}$ where $U=U_A\otimes U_B$ and $\psi_A$, $\psi_B$ are isospectral (up to truncation of 0 eigenvalues) to that of a subsystem $\rho_A,\rho_B$. Suppose they are states where we have removed the local magic such that both $\psi_A,\psi_B$ are diagonal in the computational (or another complete stabilizer basis). Again, this can be done by first rewriting the state $\rho_{AB}$ in the Schmidt basis, which is orthonormal. Then we replace the Schmidt basis with an orthonormal stabilizer basis to get $\psi_{AB}$. Since the mixture of stabilizer states is in the convex hull of stabilizer states, each $\psi_A,\psi_B$ must have zero local magic. Note that there are also other basis choices such that the basis state need not be a stabilizer, such states can also be in the convex hull of the stabilizer group as long as they are not pure states. 

By definition, $M^{(NL)}_{R}(\psi_{AB})\leq M_{R}(\psi_{AB})$ because we have chosen a particular instance of the local unitary $U_A\otimes U_B$ on the right hand side whereas the left hand side is minimized over all possible instances.  Now we evaluate the relative entropy of magic $M_R(\psi_{AB})=-S(\psi_{AB})-\min_{\sigma \in \mathrm{STAB}}\Tr[\psi_{AB}\log \sigma_{AB}]$. Since $\psi_{AB}$ is pure, $S(\psi_{AB})=0$. If $\sigma_{AB}$ is pure, then the relative entropy is either $0$ when $\sigma=\psi$ or $\infty$ for any other $\sigma$ that's mixed.

We pick a stabilizer state $\sigma_{AB} = \sum_i\lambda_i^2 |s_i\rangle\langle s_i|_{AB}$ where $\lambda_i$ are the Schmidt coefficients of $\psi_{AB} = \sum_i\lambda_i |s_i\rangle_{AB}$ where $|s_i\rangle$ are the stabilizer basis we chose. Then for the second term, we write

\begin{align}
   M_R(\psi_{AB})&= -\Tr[\psi_{AB}\log\sigma_{AB}] \\
   &= -\Tr[\sum_{ij}\lambda_i\lambda_j |s_i\rangle\langle s_j|\sum_k \log(\lambda_k^2)|s_k\rangle\langle s_k|]\\
    &=-\sum_{i,j,k}\delta_{ij}\delta_{ik} \lambda_i\lambda_j \log(\lambda_k^2) \\
    &=-\sum_k p_k \log p_k
\end{align}
where we have set $\lambda_k^2=p_k$ because each Schmidt coefficient is real. $\delta_{ij}$ are Kronecker deltas because we have chosen the basis $\{|s_k\rangle\}$ to be orthonormal. Note that $\sum_k p_k=1$.
In this case, the second term is nothing but $S(A)=S(B)$ which is the von Neumann entropy of a subsystem. 

Since we have chosen a particular stabilizer state $\sigma_{AB}$, this serves as an upper bound of the relative entropy of magic. Hence 
\begin{align}
M^{(NL)}_{R}(\rho_{AB})\leq M^{(NL)}_{R}(\psi_{AB})\leq S(A)=S(B).
\end{align}

\subsection{Proof of~\cref{th:smoothed}}\label{proofth3}

    Let $\rho_{AB}^{\epsilon}$ represent the state that minimizes the non-local magic.  Therefore $\sma{\rho_{AB}}=M_{RS}^{(NL)}(\rho^{\epsilon}_{AB})$.  Drawing from \cref{th:relstab}, we understand that: 
\begin{equation}
\begin{split}
    \sma{\rho_{AB}}\geq & S_{max}(\rho^{\epsilon}_A)-S(\rho^{\epsilon}_A)\geq \min_{\Vert\chi-\rho_A\Vert<\epsilon}\left(S_{max}(\chi)-S(\chi)\right).
\end{split}
\end{equation}

On the right-hand side, our goal is to identify a state $\chi$ within the $\epsilon$-ball of $\rho_A$ that minimizes the difference between $S_{max}(\chi)$ and $S(\chi)$. Interestingly, the state that minimizes this difference also reduces $S_{max}(\chi)$ to its lowest value $S_{max}^{\epsilon}$.  To illustrate, denote $\chi_A^{\epsilon}$ as the state that minimizes $S_{max}$ within the $\epsilon$-ball. Then consider increasing $S_{max}$ by modifying one eigenvalue of $\chi_A^{\epsilon}$ from zero to $\delta$.  This adjustment results in an increase  $\Delta S_{max}=e^{-S_{max}}$, while the change in entropy is capped at  $\Delta S\leq \delta\abs{\log\delta}$. Such a modification invariably elevates the entropy gap, i.e. $\Delta (S_{max}-S) \geq e^{-S_{max}}-\delta\abs{\log\delta}>0$, since $\delta$ can be arbitrarily small. 

To evaluate the von Neumann entropy of the state  $\chi_A^{\epsilon}$, as a modification from $\rho_A$ by dropping some eigenvalues whose total contribution to the trace is smaller than $\epsilon$. Let’s denote their contribution to the von Neumann entropy as  $S_{\epsilon}$. Then the entropy of the new state $\chi_A^{\epsilon}$ is given by $S(\chi_A^{\epsilon})= \frac{S(\rho_A)-S_{\epsilon}}{1-\epsilon}\leq \frac{S(\rho_A)}{1-\epsilon}$. Therefore, we get the following inequality:

\begin{equation}
    \begin{split}
         \sma{\rho_{AB}}\geq & S_{max}(\chi_A^{\epsilon})-S(\chi_A^{\epsilon})\\ 
        \geq & S_{max}^{\epsilon}(\rho_A)-(1-\epsilon)^{-1}S(\rho_A).
    \end{split}
\end{equation}

Regarding the upper bound, since $\chi_A^{\epsilon}$ minimizes the maximal entropy, it satisfies the following condition:
\begin{equation}
    S_{max}^{\epsilon}(\rho_A)=S_{max}(\chi_A^{\epsilon}).
\end{equation}

While this condition specifies the spectrum of $\chi_A^{\epsilon}$, we retain the flexibility to select a purification  $\chi_{AB}^{\epsilon}$, ensuring its deviation from $\rho_{AB}$ remains within an $\epsilon$ bound. Consequently, the process of minimizing the non-local magic leads us to the following inequality: 
\begin{equation}
\begin{split}
    \sma{\rho_{AB}}\leq & M_{RS}^{(NL)}(\chi_{AB}^{\epsilon})\\
    \leq &(\log d) \lceil \log_d\rank{\chi_{AB}^{\epsilon}} \rceil =\log d \lceil   S_{max}^{\epsilon}(A)/\log d  \rceil.
\end{split}
\end{equation}
where the second step is a result from~\cref{th:relstab}. 


\section{Estimate by Stabilizer-R\'enyi-entropy}\label{app:estimate}

\subsection{Proof of~\cref{thm:nlSRE}}

In this section, we provide an estimation of the second Stabilizer-R\'enyi-entropy measure of the non-local magic. It is defined in \cite{stabrenyi} as the second R\'enyi-entropy of a probability distribution, $p_a=\frac{1}{d}|\langle\psi|P_a|\psi\rangle|^2$, over all the Pauli-string basis $P_a$.
\begin{equation}
    \mathcal{M}_2(\ket{\psi}):=-\log(\sum_a p_a^2)-\log{d}.
\end{equation}

Given the entanglement spectrum $\{\lambda_i\}$, we construct a state $|\psi'\rangle$ with small local magic,   
\begin{equation}
    |\psi'\rangle_{AB} = \sum_{i=0}^{r-1}\sqrt{\lambda_i}|s_i\rangle_A|s_i\rangle_B. 
\end{equation}
where the rank $r$ is taken to be $2^n$ for integer $n$. The Pauli operators on the Hilbert space $\mathcal{H}_{AB}=\mathcal{H}_A\otimes\mathcal{H}_B$ can be  factorized as product of Pauli operators on $\mathcal{H}_A$ and $\mathcal{H}_B$ respectively, $P^{ab}=P^a\otimes P^b$. We denote their matrix elements as $P^{a,b}_{ij}:=\bra{s_i}P^{a,b}\ket{s_j}$, and compute the magic measure $\mathcal{M}_2$ as follows, 
\begin{equation}
\begin{split}
    &\mathcal{M}_2(\ket{\psi'})=-\log\left(\sum_{a=1}^{r^2}\sum_{b=1}^{r^2}\left|\sum_{i,j=0}^{r-1}\sqrt{\lambda_i}\sqrt{\lambda_j}P^a_{ij}P^b_{ij}\right|^4\right).
\end{split}
\end{equation}

The result is complicated and depends on specific choice of the basis $|s_i\rangle$'s. We simplify the analysis by assuming that the orthonormal basis $|s_i\rangle$'s are common eigenstates of a stabilizer group $\mathcal{S}=\{S_1,S_2,\cdots,S_n\}$.  This condition allows us to write the Pauli matrices $P^a_{ij}$ in computational basis. Substituting the matrix representation  of Pauli operators, we find that 

\begin{equation}\label{M2Estimate}
\begin{split}
    \mathcal{M}_2=&-\log\left(\sqrt{\lambda_{i_1}}\sqrt{\lambda_{i_2}}\sqrt{\lambda_{i_3}}\sqrt{\lambda_{i_4}}\sqrt{\lambda_{i_5}}\sqrt{\lambda_{i_6}}\sqrt{\lambda_{i_7}}\sqrt{\lambda_{i_8}}(\sum_a P^a_{i_1i_2}P^a_{i_3i_4}P^a_{i_5i_6}P^a_{i_7i_8})^2\right)\\
    =&-\log\left(\sum_{i_1,i_2,i_3,i_4=0}^{r-1}\sqrt{\lambda_{i_1}\lambda_{i_2}\lambda_{i_3}\lambda_{i_4}\lambda_{i_3\wedge i_2\wedge i_1}\lambda_{i_4\wedge i_2\wedge i_1}\lambda_{i_1\wedge i_3\wedge i_4}\lambda_{i_2\wedge i_3\wedge i_4}}\right).
\end{split}
\end{equation}
where $\wedge$ denotes the bitwise XOR operation. This expression depends on order of eigenvalues and takes minimum when the eigenvalues are ordered, $\lambda_i>\lambda_j$ for  $i<j$. If we take all the eigenvalues to be the same, then each term in the summation is equal to $\frac{1}{r^4}$. The number of terms is $r^4$ since we are summing over four indices. The argument is equal to 1 in this case. Therefore, the non-local Stabilizer R\'enyi entropy vanishes when the spectrum is flat. 

\cref{M2Comparison} gives a comparison of the direct SRE calculation against the estimation given by  \eqref{M2Estimate}. As can be observed in the plots, the approximation in  \eqref{M2Estimate} is correct up to numerical imprecision.
\begin{figure}[h]
\begin{center}
    \begin{subfigure}[b]{0.49\textwidth}
        \includegraphics[width=8cm]{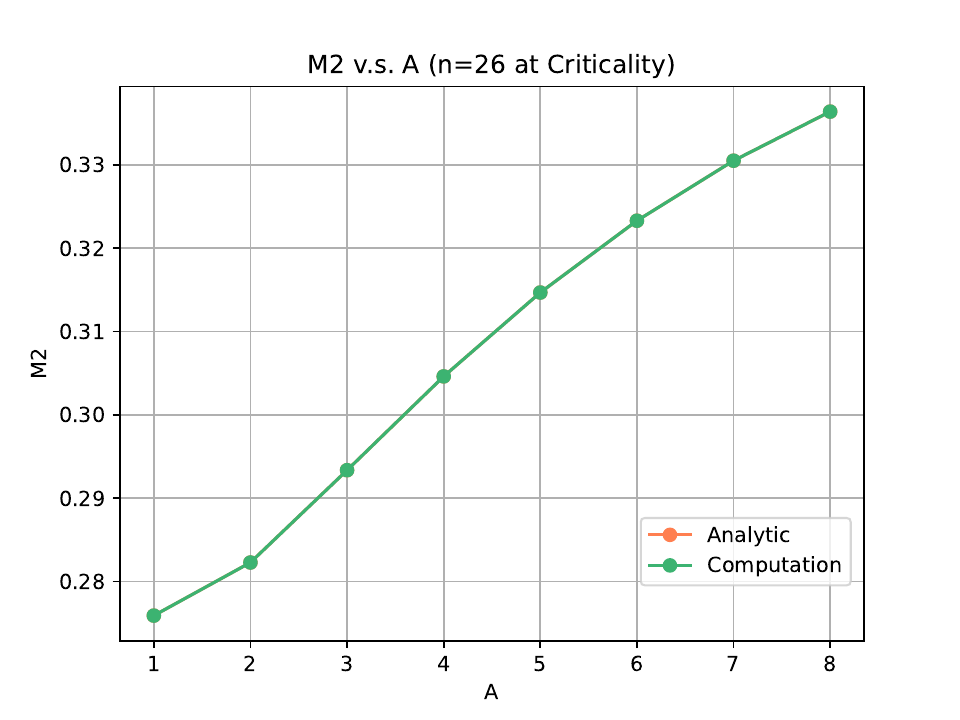}
        \caption{$\mathcal{M}_2$ computed from stabilizer R\'enyi entropy, compared to the estimation in \eqref{M2Estimate}.}
    \end{subfigure}
    \hfill
    \begin{subfigure}{0.49\textwidth}
        \includegraphics[width=8cm]{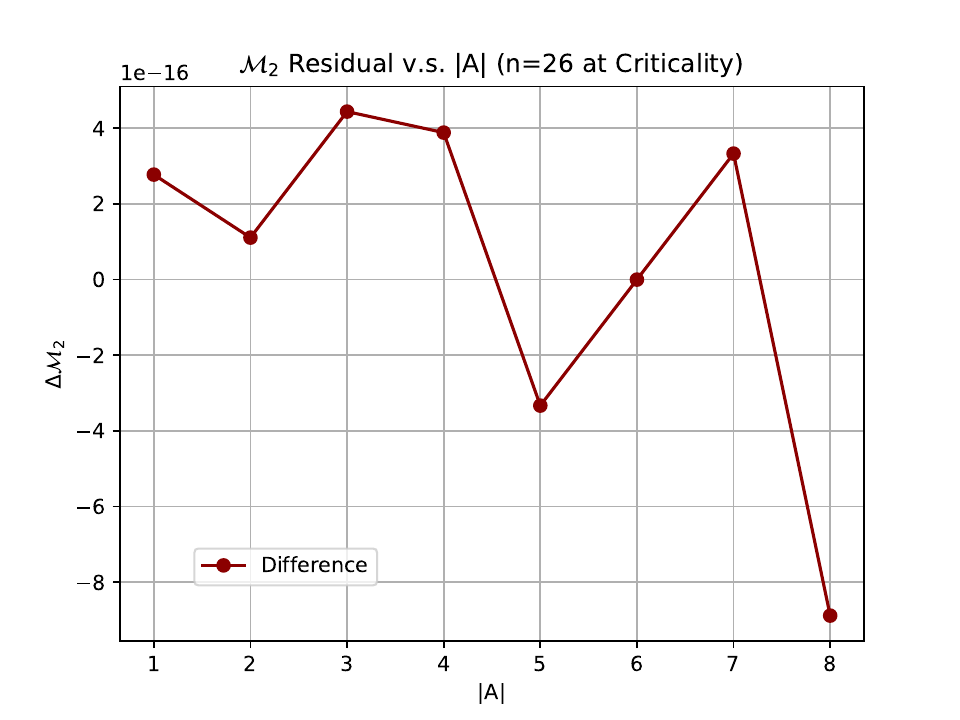}
    \caption{Residual from computational and analytic calculations of $\mathcal{M}_2$, accurate to one part in $10^{16}$.}
    \end{subfigure}
    \caption{}
    \label{M2Comparison}
\end{center}
\end{figure}
%


We can derive an upper bound for $\mathcal{M}_2$, by averaging over the permutations of eigenvalues, this gives us the expression,
\begin{equation}\label{M2UpperBound}
\begin{split}
    \mathcal{M}_2\leq\overline{\mathcal{M}}_2=-\log\left(\sum_{i=0}^{r-1}\lambda_i^4+7\sum_{0\leq i\neq j\leq r-1}\lambda_i^2\lambda_j^2+\frac{7}{r-3}\sum_{0\leq i\neq j\neq k\neq l\leq r-1}\lambda_i\lambda_j\lambda_k\lambda_l+\right.\\
    \left. \frac{\sum_{0\leq i_1\neq i_2\neq \cdots\neq i_8\leq r-1}\prod_{a=1}^8\sqrt{\lambda_{i_a}}}{(r-3)(r-5)(r-6)(r-7)}\right).
\end{split}
\end{equation}

In \eqref{M2UpperBound}, the sum inside the logarithm is taken over products of distinct eigenvalues. Computing this sum explicitly, and expressing the result in terms of different R\'enyi entropies $S_{\alpha}$, we obtain
\begin{equation}\label{eq:aveM}
\begin{split}
    \overline{\mathcal{M}}_2=-\log &\left(7e^{-2S_2}-6e^{-3S_4}+7e^{-S_0}(1-6e^{-S_2}+8e^{-2S_3}+3e^{-2S_2})+\right. \\
    &\left. e^{-4S_0}(e^{4S_{1/2}}+105e^{-3S_4}-420e^{S_{1/2}}+\cdots)\right)\\
    = -\log&\left(7e^{-2S_2}-6e^{-3S_4}+e^{4S_{1/2}-4S_0}\right)+O(e^{-S_{1/2}}).
\end{split}
\end{equation}
The averaged magic  $\overline{\mathcal{M}}_2$ is a complicated combination of R\'enyi entropies, ranging from $S_{1/2}$ to $S_4$. However, in the large Hilbert dimension limit, where $S_{1/2}\gg 1$, the averaged magic $\overline{\mathcal{M}}_2$ simplifies to the final expression in \eqref{eq:aveM}. It provides a straightforward estimate of $\mathcal{M}_2$ based on a few Rényi entropy terms. 

To establish a rigorous bound for $\mathcal{M}_2$, we start with \eqref{M2UpperBound}, leading to:
\begin{equation}\label{eq:bound1}
\begin{split}
    \overline{\mathcal{M}}_2&\leq -\log\left(\sum_{i=0}^{r-1}\lambda_i^4+7\sum_{0\leq i\neq j\leq r-1}\lambda_i^2\lambda_j^2\right)\\
    &\leq -\log\left(\left(\sum_{i}\lambda_i^2\right)^2\right)=2S_2.
\end{split}
\end{equation}

This holds for all spectrum distributions. Rewriting \eqref{M2UpperBound} in terms of the entropy difference $\delta_{1/2}=S_0-S_{1/2}$, we obtain the following expansion; 
\begin{equation}
\begin{split}
     \overline{\mathcal{M}}_2=&-\log\left( e^{-4\delta_{1/2}}+e^{-S_0}(7+21e^{-4\delta_{1/2}}-28e^{-3\delta_{1/2}})+O(e^{-2S_0})\right)\\
     \leq& 4\delta_{1/2}.
\end{split}
\end{equation}

Note that the coefficient associated with $e^{-S_0}$ in the expansion remains non-negative for any value of  $\delta_{1/2}$ and vanishes when $\delta_{1/2}=0$. Verifying that these coefficients are non-negative for every order of  $e^{-S_0}$ supports the inequality. Finally, combining this with the previously established bound finishes our proof that:
\begin{equation}
    \mathcal{M}_2(\{\lambda_i\})\leq \overline{\mathcal{M}}_2\leq \min\{2S_2,4(S_0-S_{1/2})\}.
\end{equation}

\subsection{Proof of~\cref{pp:branebound}}\label{app:branebound}

Let $\ket{\phi}$ denotes an entangled pair of qubits, with the entanglement spectrum given by $\{\lambda,1-\lambda\}$. We show that the non-local stabilizer R\'enyi entropy $\mathcal{M}_2(\lambda)$ of $\ket{\phi}$ is bounded by the non-flatness $\partial_n\tilde{S}_n$. 

From \eqref{eq:analyticalM}, we find that $\mathcal{M}_2(\lambda)$ is equal to, 
\begin{equation}
    \mathcal{M}_2(\lambda)=-\log\left(1-4\lambda+20\lambda^2-32\lambda^3+16\lambda^4\right).
\end{equation}

By definition  \eqref{eqn:holononflat}, the non-flatness is 
\begin{equation}
    -\partial_n\tilde{S}_n=n\frac{\lambda^n(1-\lambda)^n\left(\log\frac{\lambda}{1-\lambda}\right)^2}{\left(\lambda^n+(1-\lambda)^n\right)^2}.
\end{equation}

Both functions are zero at $\lambda=0,\ \frac{1}{2},\ 1$. So let's make a Taylor expansion around these value. Around $\lambda=\frac{1}{2}$, we have that 

\begin{equation}
    \begin{split}
    \mathcal{M}_2(\lambda)&=4(\lambda-1/2)^2-8(\lambda-1/2)^4+O((\lambda-1/2)^5)\\
    -\partial_n\tilde{S}_n\vert_{n=1}&=4(\lambda-1/2)^2-\frac{16}{3}(\lambda-1/2)^4+O((\lambda-1/2)^5)\\
    -\frac{1}{2}\partial_n\tilde{S}_n\vert_{n=2}&=4(\lambda-1/2)^2-\frac{160}{3}(\lambda-1/2)^4+O((\lambda-1/2)^5).
\end{split}
\end{equation}

Therefore for $\lambda$ close to $1/2$, the following inequality holds:
\begin{equation}
   -\frac{1}{2}\partial_n\tilde{S}_n\vert_{n=2} \leq \mathcal{M}_2(\lambda)\leq -\partial_n\tilde{S}_n\vert_{n=1}.
\end{equation}

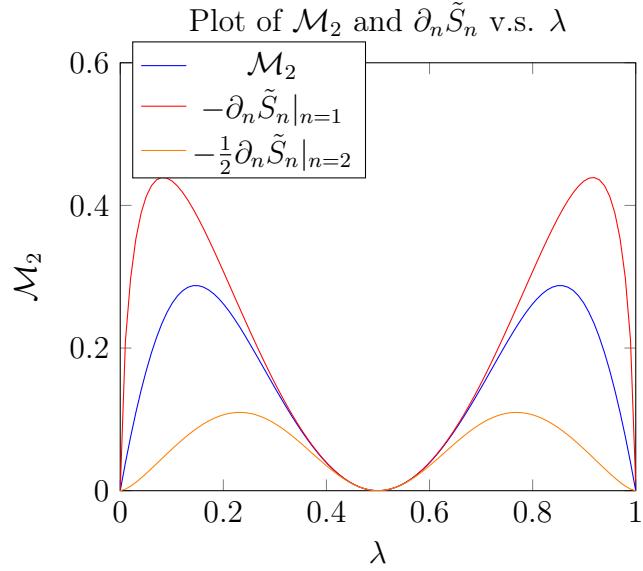
\begin{figure}
    \centering
    \begin{tikzpicture}
    \begin{axis}[
        title={Plot of $\mathcal{M}_2$ and $\partial_n\tilde{S}_n$ v.s. $\lambda$},
        xlabel={$\lambda$},
        ylabel={$\mathcal{M}_2$},
        xmin=0, xmax=1,
        ymin=0, ymax=0.6,
        legend style={at={(0.25,1.05)},anchor=north},
        grid style=dashed,
    ]
    
    \addplot[
        domain=0:1, 
        samples=100, 
        color=blue,
    ]{-ln(1-4*x+20*x^2-32*x^3+16*x^4)};
    \addlegendentry{$\mathcal{M}_2$}
    
    \addplot[
        domain=0:1, 
        samples=100, 
        color=red,
    ]{x*(1-x)*(ln(x/(1-x)))^2};
    \addlegendentry{$-\partial_n\tilde{S}_n\vert_{n=1}$}

    \addplot[
        domain=0:1, 
        samples=100, 
        color=orange,
    ]{x^2*(1-x)^2*(ln(x/(1-x)))^2/(x^2+(1-x)^2)^2};
    \addlegendentry{$-\frac{1}{2}\partial_n\tilde{S}_n\vert_{n=2}$}
    
    \end{axis}
    \end{tikzpicture}
    \caption{The non-local stabilizer R\'enyi entropy $\mathcal{M}_2$ is bounded by the anti-flatness $\partial_n\tilde{S}_n$ }
    \label{fig:singlequbit}
\end{figure}

Similarly, one can show that this inequality holds for $\lambda$ close to $0$ and $1$, where the functions are,

\begin{equation}
\begin{split}
    \mathcal{M}_2(\lambda)&=4\lambda-12\lambda^2+O(\lambda^3)\\
    -\partial_n\tilde{S}_n\vert_{n=1}&=\lambda\log^2 \lambda+(2\log \lambda-\log^2 \lambda)\lambda^2+O(\lambda^3)\\
    -\frac{1}{2}\partial_n\tilde{S}_n\vert_{n=2}&=\lambda^2\log^2 \lambda+2(\log \lambda+\log^2 \lambda)\lambda^3+O(\lambda^4).
\end{split}
\end{equation}

For other value of $\lambda$, we justify this inequality by the plot in \cref{fig:singlequbit}.

Both the Stabilizer R\'enyi entropy and  anti-flatness are additive. Therefore for state $\ket{\psi}$ that can be distilled into product of entangled pairs $U_A\otimes U_B\ket{\psi}_{AB}=\otimes_{i=1}^k\ket{\phi}_{a_ib_i}$, we have,
\begin{equation}\label{eq:branebound}
    \frac{1}{2}\left\vert\frac{\partial_n\mathcal{A}_n|_{n=2}}{4G}(\ket{\psi}_{AB})\right\vert\leq\mathcal{M}_2(\ket{\psi}_{AB})\leq\left\vert\frac{\partial_n\mathcal{A}_n|_{n=1}}{4G}(\ket{\psi}_{AB})\right\vert.
\end{equation}

\subsection{Distillation of Matrix Product State}\label{app:MPS}
We further elaborate our discussions from~\cref{section:MERA}. Building on the MERA representation of CFT, we transform the state on the boundary of the past light-cone, $\partial A$, into a Matrix Product State (MPS) using local unitaries, as defined in \eqref{eq:MPS} and illustrated in \cref{fig:MPS}. In this section, we further contend that this MPS state can approximately be distilled into a tensor product of entangled pairs:
\begin{equation}
    \ket{\chi}_{AB}\approx U_A\otimes U_B\left(\otimes_{i=1}^k \ket{\phi_i}_{a_ib_i}\right).
\end{equation}
where $k$ is the size of MPS state. It's clear that this approximation does not hold in general due to the disparity in the number of free parameters between the most general entanglement spectrum (contains $2^{k-1}$ parameters) and that of the tensor product of entangled pairs ($k$ parameters). However, for translationally invariant MPS states characterized by short correlation lengths, this approximation is valid.

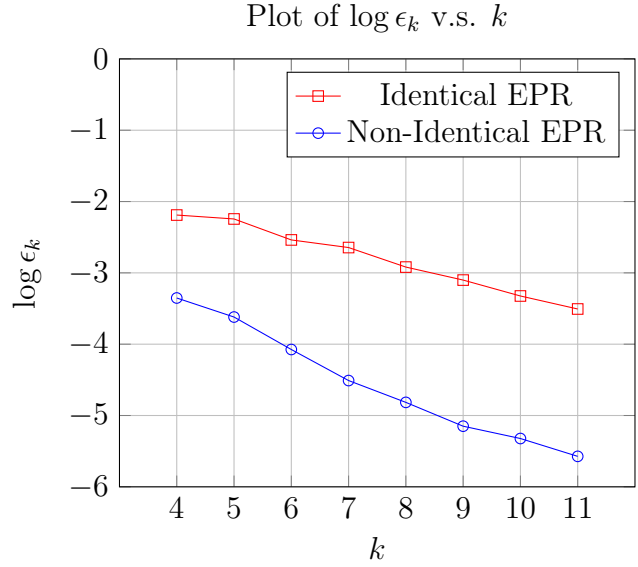
\begin{figure}
    \centering
    \begin{tikzpicture}
    \begin{axis}[
        title={Plot of $\log\epsilon_k$ v.s. $k$},
        xlabel={$k$},
        ylabel={$\log{\epsilon_k}$},
        xmin=3, xmax=12,
        ymin=-6, ymax=0,
        xtick={4,5,6,7,8,9,10,11},
        ytick={-6,-5,-4,-3,-2,-1,0},
        legend pos=north east,
        grid={both},
    ]
    
    \addplot[
        color=red,
        mark=square,
        ]
        coordinates {
        (4, -2.1892564076870427)
        (5, -2.24431618487007)
        (6, -2.5383074265151158)
        (7, -2.645075401940822)
        (8, -2.9187712324178627)
        (9, -3.101092789211817)
        (10, -3.3242363405260273)
        (11, -3.506557897319982)
        };
        \addlegendentry{Identical EPR}
    
    \addplot[
        color=blue,
        mark=o,
        ]
        coordinates {
        (4, -3.3524072174927233)
        (5, -3.619353391465326)
        (6, -4.074541934925921)
        (7, -4.509860006183766)
        (8, -4.815891217303744)
        (9, -5.149897361429764)
        (10, -5.322610059117081)
        (11, -5.572754212249797)
        };
        \addlegendentry{Non-Identical EPR}
    
    \end{axis}
    \end{tikzpicture}
    \caption{Scaling of error with system size.}
    \label{fig:error}
\end{figure}

\begin{figure}
    \centering
    \begin{subfigure}[b]{0.4\textwidth}
           \scalebox{0.75}{
        \begin{tikzpicture}
        \begin{axis}[
            title={Plot of $\mathcal{M}_2$ and $\partial_n\tilde{S}_n$ v.s. $k$},
            xlabel={$k$},
            ylabel={$\mathcal{M}_2$},
            legend pos=north west,
            grid style=dashed,
        ]

        \addplot[
            color=red,
            mark=*,
            ]
            coordinates {
            (2, 0.5284476869093326)
            (3, 0.8796597152606531)
            (4, 1.1786243687213358)
            (5, 1.5067020734480323)
            (6, 1.8209208299973332)
            (7, 2.149153547543458)
            (8, 2.4804444445207428)
            };
            \addlegendentry{$\partial_n \tilde{S}_n\vert_{n=1}$}

            \addplot[
            color=blue,
            mark=square,
            ]
            coordinates {
            (2, 0.2420997928053625)
            (3, 0.5296291745870216)
            (4, 0.6341573657488669)
            (5, 0.8412419311548289)
            (6, 0.9369987551747572)
            (7, 1.0234907616064521)
            (8, 1.1120380681237716)
            };
            \addlegendentry{$M_2$}
        
        \addplot[
            color=orange,
            mark=triangle,
            ]
            coordinates {
            (2, 0.044403073081369274)
            (3, 0.13215785067590613)
            (4, 0.2031747844453768)
            (5, 0.28858671321308377)
            (6, 0.3708788831847463)
            (7, 0.4552538910164904)
            (8, 0.5382145816013133)
            };
            \addlegendentry{$\frac{1}{2}\partial_n \tilde{S}_n\vert_{n=2}$}
        \end{axis}
        \end{tikzpicture}
           }
        \caption{}
    \end{subfigure}
    \hfill
    \begin{subfigure}[b]{0.4\textwidth}
        \centering
        \scalebox{0.75}{
        \begin{tikzpicture}
            \begin{axis}[
                title={$\mathcal{M}_2$ and $\partial_nS_n$ for random samples},
                xlabel={Samples},
                ylabel={$M_2$},
                xmin=0, xmax=13,
                ymin=0, 
                ymax=4, 
                xtick={1,2,...,12},
                ytick={0,1,...,4}, 
                legend pos=north west,
                grid style=dashed,
                scatter/classes={
                    a={mark=*,red},
                    b={mark=square,blue},
                    c={mark=triangle,orange}
                }
            ]
            
            \addplot[scatter,only marks,scatter src=explicit symbolic]
                coordinates {
                (1,1.6313001268796445)[a]
                (2,2.0960085432431907)[a]
                (3,2.8261617899455858)[a]
                (4,2.21701795524457)[a]
                (5,2.740026056362322)[a]
                (6,1.5321860145108637)[a]
                (7,1.7622913869001764)[a]
                (8,2.055347541498999)[a]
                (9,1.2638202464353943)[a]
                (10,2.3765282087280477)[a]
                (11,2.2131910133599826)[a]
                (12,1.7362921659012711)[a]
                };
                \addlegendentry{$\partial_n \tilde{S}_n\vert_{n=1}$}

            \addplot[scatter,only marks,scatter src=explicit symbolic]
                coordinates {
                (1,0.9368407085714446)[b]
                (2,1.0243860301630408)[b]
                (3,1.1735551691638577)[b]
                (4,1.0407553593874668)[b]
                (5,1.2056303937669703)[b]
                (6,0.9159318631290625)[b]
                (7,0.9496265212961039)[b]
                (8,1.0030008626048081)[b]
                (9,0.7647095619498042)[b]
                (10,1.0945398160220088)[b]
                (11,1.0546323613045863)[b]
                (12,0.8853456721851094)[b]
                };
                \addlegendentry{$M_2$}
            

            \addplot[scatter,only marks,scatter src=explicit symbolic]
                coordinates {
                (1,0.5567483087609976)[c]
                (2,0.54726892014064)[c]
                (3,0.44819896725805314)[c]
                (4,0.5455860132789314)[c]
                (5,0.597299095385845)[c]
                (6,0.6400977858963219)[c]
                (7,0.5802832928523738)[c]
                (8,0.5899668363746525)[c]
                (9,0.40702575591041695)[c]
                (10,0.5677481440401517)[c]
                (11,0.6851334147660781)[c]
                (12,0.39684244389082163)[c]
                };
                \addlegendentry{$\frac{1}{2}\partial_n \tilde{S}_n\vert_{n=2}$}
            \end{axis}
            \end{tikzpicture}
        }
        \caption{}
    \end{subfigure}
    \caption{(a) Scaling of $\mathcal{M}_2$ and $\partial_n\tilde{S}_n$ with state size $k$ for a particular sample of random matrix in MPS. (b) For randomly generated samples of MPS states with a fixed size $k=7$, $\mathcal{M}_2$ is bounded by anti-flatness $\partial_n\tilde{S}_n$.  }
    \label{fig:mpsmagic}
\end{figure}
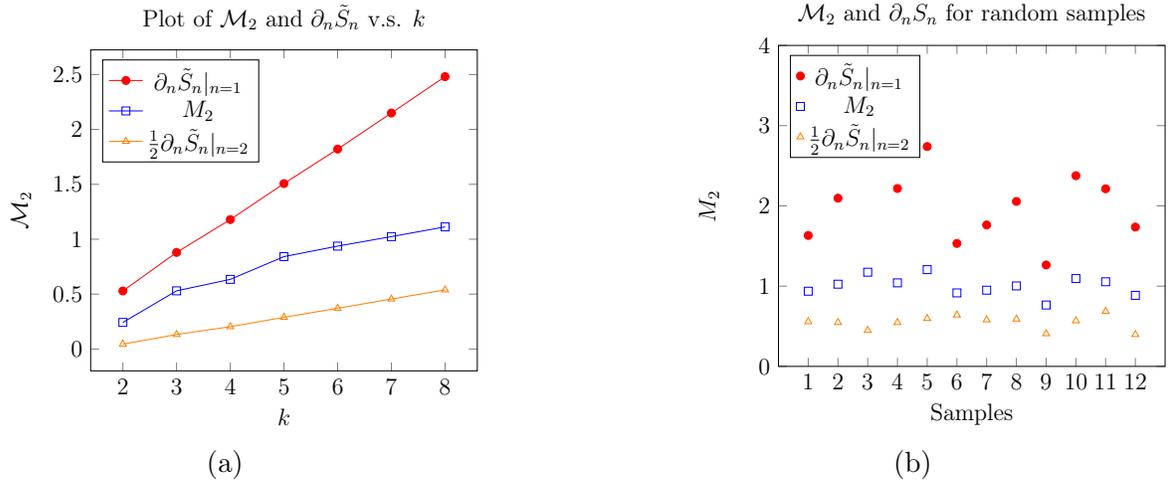

To substantiate this approximation, we simulate several MPS states using $k$ number of identical random matrices to construct the reduced state $\rho_A=\tr_B(\ket{\chi}\bra{\chi})$ and evaluate its entanglement spectrum. We then approximate this spectrum by fitting it to the tensor product of individual entangled pair spectra:
\begin{equation}
    \min_{\{\lambda_i\}}\left\vert\mathrm{Spec}(\rho_A)- \bigotimes_{i=1}^k \begin{pmatrix}
        \lambda_i & 0\\
        0 & 1-\lambda_i
    \end{pmatrix}\right\vert=\epsilon_k
\end{equation}
where $\epsilon_k$ quantifies the approximation error. Our numerical analysis up to $k=11$ reveals an exponential decrease in $\epsilon_k$ with increasing $k$. We present two distinct scenarios in \cref{fig:error}:  In the first scenario, we require all EPR pairs in the tensor product to be identical, yielding an error trend of $\epsilon_k\sim 0.1 \times 1.2^{-k}$. In the second scenario, we relax this constraint, allowing for variability among the EPR pairs, which results in a more pronounced error reduction, following $\epsilon_k\sim 0.05 \times 1.4^{-k}$.

With the distillation assumption justified we expect the inequality~\eqref{eq:branebound} to be true for general MPS state and therefore for a CFT. We plot the magic and the R\'enyi entropy (dual to brane area) for a set of randomly generated samples of MPS states in \cref{fig:mpsmagic} and verify the validity of the bound \eqref{eq:branebound}.

\section{Validity of various bound for magic}\label{app:bound}
In the main text, we introduced several approximations for non-local magic, noting its proportional relationship to anti-flatness in certain regimes and its closeness to entropy in others. This section delineates the conditions under which these approximations hold true.
\subsection*{Near flat limit}
 We begin by examining the approximation between non-local magic and anti-flatness, specifically:
\begin{equation}
    \mathcal{M}_2(|\psi\rangle_{AB})\approx \frac{\mathcal{F}(\rho_A)}{\pur ^2(\rho_A)},
\end{equation}

which is applicable primarily in the near-flat limit of the entanglement spectrum. This is because the left-hand side (LHS) is additive and scales linearly with $n$, while the right-hand side (RHS) can be expressed as:
\begin{equation}
    \frac{\mathcal{F}(\rho_A)}{\pur^2(\rho_A)}=\frac{\tr(\rho_A^3)-\tr(\rho_A^2)^2}{\tr(\rho_A^2)^2}=e^{2(S_2(A)-S_3(A))}-1,
\end{equation}
which becomes additive only at the linear order of the Taylor expansion in the entropy difference. Hence, the condition  $S_2(A)-S_3(A)<\frac{1}{2}$ must be met, indicating an almost flat spectrum or a small system size.

Additionally, this regime aligns with where the two anti-flatness measures defined previously converge, particularly when:
\begin{equation}
    \langle(\delta\log\rho)^2\rangle_{\rho}\approx \frac{\langle(\delta\rho)^2\rangle_{\rho}}{\langle\rho\rangle_{\rho}^2}
\end{equation}
under the condition $\sum_i\delta\lambda_i^2\ll \sum_i\bar{\lambda}_i^2$. In the flat limit 
$\sum_{i}\bar{\lambda}_i^2\sim e^{-S_0}$. and $\sum_i\delta\lambda_i^2\sim e^{-S_2}-e^{-S_0}$. So the approximation requires $e^{S_0-S_2}-1<1$, that is $S_0-S_2\sim O(1)$. This also corresponds to near-flat regime or small system size. 

\subsection*{Far from flat limit}
In contrast, for quantum states with a far-from-flat entanglement spectrum, where the entropy differences across Rényi indices are comparable to the entropy itself, the scenario changes. Referring to Theorem \ref{thm:nlSRE}, the upper bound for the second Stabilizer Rényi entropy measure of non-local magic is:

\begin{equation}
    \mathcal{M}_2^{NL}(|\psi\rangle_{AB})\leq\mathcal{M}_2(\{\lambda_i\})\leq\min\{2S_2(A),4(S_0(A)-S_{1/2}(A))\},
\end{equation} 
indicating a transitional crossover around $S_0(A)-S_{1/2}(A)\sim \frac{1}{2}S_2(A)$. Beyond this point, non-local magic transitions from being proportional to anti-flatness to being proportional to entropy. Our numerical analyses within the Ising model confirm this transition: in the disordered phase and at critical points, non-local magic correlates with entropy $S$ both when varying the model parameter and the subsystem size. However, in the symmetry-breaking phase (refer to \cref{app:symmbreak}), it deviates and becomes anti-correlated with entropy, as shown in \figref{fig:MvsA}.

\section{Supplemental results for Ising Model}
\subsection{Symmetry breaking phase}\label{app:symmbreak}

 In the $g<0$ regime, the Ising model enters the symmetry-breaking phase in the thermodynamic limit. However, our analysis is conducted on a finite-size lattice, where the ground state remains symmetric to spin flipping. Heuristically, we can think of this ground state being approximated by something similar to the GHZ state:

 \begin{equation}\label{eq:symG}
     \ket{G}_{sym}\approx\frac{1}{\sqrt{2}}(\ket{00\cdots0}+\ket{11\cdots1}).
 \end{equation}

 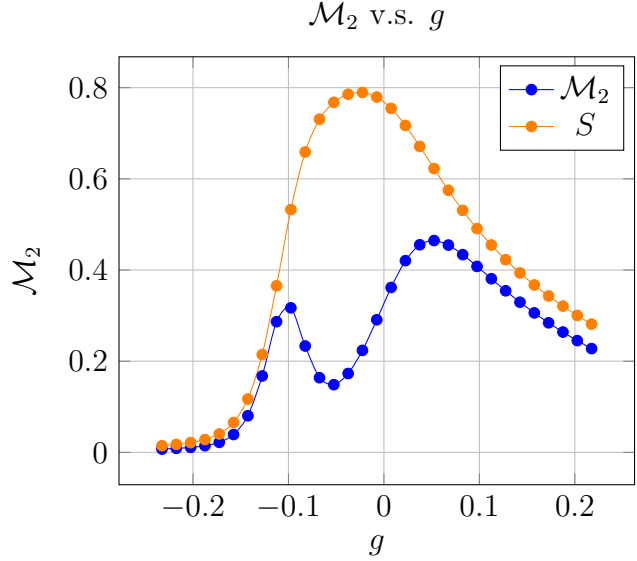
\begin{figure}
    \centering
    \begin{tikzpicture}
        \begin{axis}[
        xlabel={$g$},
        ylabel={$\mathcal{M}_2$},
        title={$\mathcal{M}_2$ v.s. $g$},
        grid={both}
    ]
    \addplot[color=blue,smooth,mark=*] coordinates {
        (-0.2324768366025517, 0.007223787236453515)
        (-0.2174768366025518, 0.008748730695846136)
        (-0.20247683660255178, 0.010926710694541172)
        (-0.18747683660255177, 0.01458468304470301)
        (-0.17247683660255175, 0.021989759114994827)
        (-0.15747683660255174, 0.03917264971677005)
        (-0.14247683660255173, 0.08035700169072464)
        (-0.1274768366025517, 0.16764680998652184)
        (-0.11247683660255181, 0.2867334739089084)
        (-0.09747683660255169, 0.3170054821476055)
        (-0.08247683660255178, 0.233293650708921)
        (-0.06747683660255177, 0.16375202920327683)
        (-0.05247683660255176, 0.14849097486137633)
        (-0.037476836602551744, 0.1730143829660695)
        (-0.02247683660255173, 0.2237686039807405)
        (-0.0074768366025518285, 0.2908969759535281)
        (0.007523163397448296, 0.3617439850129772)
        (0.022523163397448198, 0.42049487003948655)
        (0.03752316339744832, 0.45535876881380544)
        (0.052523163397448225, 0.4646698074563559)
        (0.06752316339744824, 0.4547522488064993)
        (0.08252316339744825, 0.43376417359303615)
        (0.09752316339744815, 0.40791779310947524)
        (0.11252316339744828, 0.38087234049479696)
        (0.12752316339744818, 0.3544456797479232)
        (0.1425231633974483, 0.32940432557212895)
        (0.1575231633974482, 0.30598887725715823)
        (0.17252316339744822, 0.28420249005417625)
        (0.18752316339744823, 0.2639536107301466)
        (0.20252316339744825, 0.24512150642745928)
        (0.21752316339744826, 0.22758407535400627)
    };    
    \addlegendentry{$\mathcal{M}_2$}
    \addplot[color=orange,mark=*,smooth] coordinates {
        (-0.2324768366025517, 0.01461178390556179)
        (-0.2174768366025518, 0.017416796403754247)
        (-0.20247683660255178, 0.0214297514406032)
        (-0.18747683660255177, 0.028014722238020845)
        (-0.17247683660255175, 0.04036083757912194)
        (-0.15747683660255174, 0.06549872277150799)
        (-0.14247683660255173, 0.11707699227960004)
        (-0.1274768366025517, 0.21456086879545458)
        (-0.11247683660255181, 0.3653251204975156)
        (-0.09747683660255169, 0.5325930140867744)
        (-0.08247683660255178, 0.6589732880629487)
        (-0.06747683660255177, 0.7310116119917656)
        (-0.05247683660255176, 0.7679080628052476)
        (-0.037476836602551744, 0.7855963300861846)
        (-0.02247683660255173, 0.7895360222061816)
        (-0.0074768366025518285, 0.7794869557462782)
        (0.007523163397448296, 0.75469489883342)
        (0.022523163397448198, 0.7170592299429965)
        (0.03752316339744832, 0.6712356927065845)
        (0.052523163397448225, 0.6225427177920325)
        (0.06752316339744824, 0.5750215914538588)
        (0.08252316339744825, 0.5308729056604539)
        (0.09752316339744815, 0.49086029922811086)
        (0.11252316339744828, 0.45492904778342713)
        (0.12752316339744818, 0.422669944398889)
        (0.1425231633974483, 0.3935800090837694)
        (0.1575231633974482, 0.3671815763943674)
        (0.17252316339744822, 0.3430640729680953)
        (0.18752316339744823, 0.32088993615140493)
        (0.20252316339744825, 0.3003866225713615)
        (0.21752316339744826, 0.28133492698065415)
    };
    \addlegendentry{$S$}
    \end{axis}
    \end{tikzpicture}
    \caption{Plot of $\mathcal{M}_2$ v.s. $g$, at $b=10^{-4}$ and $|A|=9$. }
    \label{fig:valley}
\end{figure}

\begin{figure}
    \centering
    \begin{subfigure}[b]{0.4\textwidth}
    \scalebox{0.75}{
    \begin{tikzpicture}
    \begin{axis}[
    title={$\mathcal{M}_2$ v.s. $|A|$},
    xlabel={$|A|$},
    ylabel={$\mathcal{M}_2$},
    legend style={at={(0.9,1.2)},anchor=north},
    grid=both,
    minor tick num=1,
]

\addplot coordinates {
    (1, 0.2689501026534531)
    (2, 0.24634756602268076)
    (3, 0.23742530624036945)
    (4, 0.2341664183399669)
    (5, 0.23313755401323172)
    (6, 0.23297365188896818)
    (7, 0.233092371527223)
    (8, 0.23323569080131712)
    (9, 0.233293650708921)
};
\addlegendentry{$g=-0.09$}

\addplot coordinates {
    (1, 0.24384767003162575)
    (2, 0.19876334120774347)
    (3, 0.1795866756827645)
    (4, 0.17091478801827334)
    (5, 0.16686418543136147)
    (6, 0.16499089767821182)
    (7, 0.16416477899486343)
    (8, 0.16383662079327974)
    (9, 0.16375202920327683)
};
\addlegendentry{$g=-0.08$}

\addplot coordinates {
    (1, 0.23839783569239717)
    (2, 0.18955310403593092)
    (3, 0.16809913606613852)
    (4, 0.15791302724907258)
    (5, 0.15286686130812485)
    (6, 0.15036294406760037)
    (7, 0.14916089721150316)
    (8, 0.14863627875794155)
    (9, 0.14849097486137633)
};
\addlegendentry{$g=-0.06$}

\addplot coordinates {
    (1, 0.24587523917318302)
    (2, 0.20549824561143945)
    (3, 0.18798522386416197)
    (4, 0.17986867395697714)
    (5, 0.17599510113906647)
    (6, 0.17418589317601396)
    (7, 0.17339509465078115)
    (8, 0.17309029627445577)
    (9, 0.1730143829660695)
};
\addlegendentry{$g=-0.04$}

\addplot coordinates {
    (1, 0.25826482031576664)
    (2, 0.23387389829982486)
    (3, 0.22518745256634734)
    (4, 0.2224731577645139)
    (5, 0.22206669463778722)
    (6, 0.22250574062437792)
    (7, 0.22312413998687008)
    (8, 0.22359615874074287)
    (9, 0.2237686039807405)
};
\addlegendentry{$g=-0.02$}

\end{axis}
    \end{tikzpicture}
    }
    \caption{}
    \label{fig:MvsAa}
    \end{subfigure}
    \hfill
    \begin{subfigure}[b]{0.4\textwidth}
    \scalebox{0.75}{
        \begin{tikzpicture}
        \begin{axis}[
    title={$S$ v.s. $|A|$},
    xlabel={$|A|$},
    ylabel={$S$},
    legend style={at={(0.85,0.44)}, anchor=north},
    grid=both,
    minor tick num=1,
]

\addplot coordinates {
    (1, 0.4882591491810362)
    (2, 0.5732661692916436)
    (3, 0.6110127980169322)
    (4, 0.6316082158129683)
    (5, 0.643830287493819)
    (6, 0.6513098074317467)
    (7, 0.6558042836506663)
    (8, 0.6582136549531553)
    (9, 0.6589732880629487)
};
\addlegendentry{$g=-0.09$}

\addplot coordinates {
    (1, 0.5251676197278689)
    (2, 0.6218593298388037)
    (3, 0.667228502555712)
    (4, 0.6933234776238961)
    (5, 0.7095691714738573)
    (6, 0.7199287515551955)
    (7, 0.7263602411156028)
    (8, 0.7298868121301564)
    (9, 0.7310116119917656)
};
\addlegendentry{$g=-0.08$}

\addplot coordinates {
    (1, 0.5315614343935706)
    (2, 0.6363808878074139)
    (3, 0.6881586158913452)
    (4, 0.7193994273394969)
    (5, 0.7396787045457957)
    (6, 0.7530676361495676)
    (7, 0.7616056520333332)
    (8, 0.7663733432572958)
    (9, 0.7679080628052476)
};
\addlegendentry{$g=-0.06$}

\addplot coordinates {
    (1, 0.5226859724979404)
    (2, 0.6334301601541248)
    (3, 0.6906057714327959)
    (4, 0.7265118349936612)
    (5, 0.7506177919230987)
    (6, 0.7669719090196478)
    (7, 0.7776168549095465)
    (8, 0.7836431182868497)
    (9, 0.7855963300861846)
};
\addlegendentry{$g=-0.04$}

\addplot coordinates {
    (1, 0.5059548186719761)
    (2, 0.6207548641952213)
    (3, 0.6820164453523454)
    (4, 0.7216323203848032)
    (5, 0.748869817389466)
    (6, 0.7676981101667162)
    (7, 0.7801247556788512)
    (8, 0.7872244076630824)
    (9, 0.7895360222061816)
};
\addlegendentry{$g=-0.02$}

\end{axis}
    \end{tikzpicture}
    }
    \caption{}
    \label{fig:MvsAb}
    \end{subfigure}
    \caption{(a)Non-local magic $\mathcal{M}_2$ v.s. $|A|$. (b) Entropy $S$ v.s. $|A|$, at $b=10^{-5}$ and $g>-0.1$. }
    \label{fig:MvsA}
\end{figure}
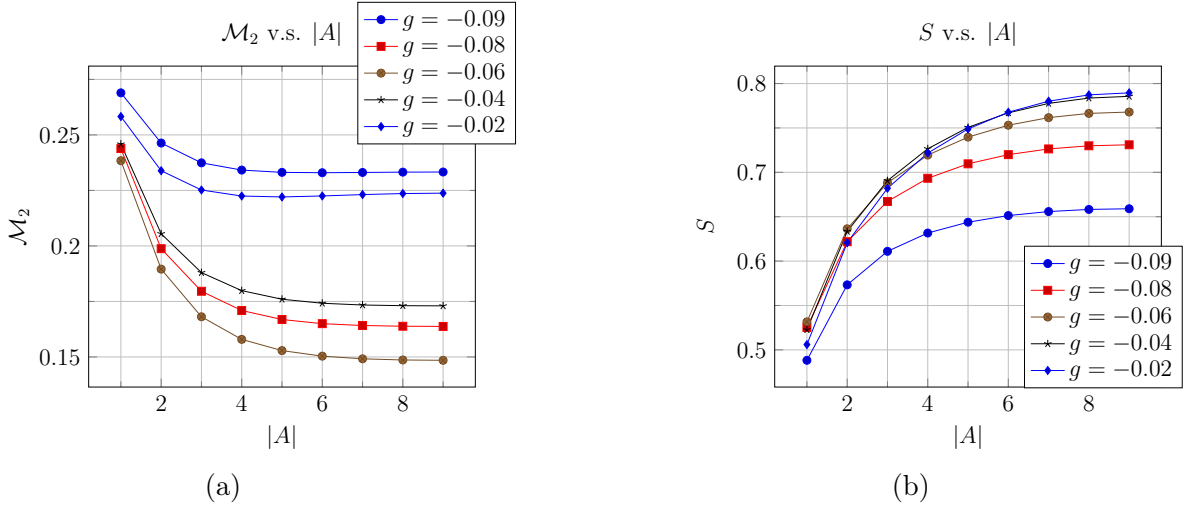

To approximate the true ground state achievable in the thermodynamic limit within our finite lattice model, we introduce a small bias field in the
$z$-direction: 
\ba
H=H_{\rm Ising}(g)+b\sum_i Z_i. 
\ea

As the bias $b$ increases,  the ground state transitions towards one of the two symmetry-broken states:
\ba
    &\ket{G}_{\uparrow}&=\ket{\uparrow\uparrow\cdots \uparrow}\\
    &\ket{G}_{\downarrow}&=\ket{\downarrow\downarrow\cdots \downarrow}.
\ea

Exploring how non-local magic $\mathcal{M}_2$ behaves as we adjust different parameters led to some fascinating results that are particularly noticeable when a non-zero bias field is applied. As shown in \cref{fig:valley}, a distinctive ``valley'' emerges in the  $\mathcal{M}_2$ plot within the $g<0$ regime.  We juxtapose entropy and non-local magic in our plots to underscore their divergent behaviors and the unique information conveyed by non-local magic.

This valley can be understood as arising from the competition between two types of ground states. Within the valley, the system's ground state approximates the symmetric GHZ-like state $\ket{G}_{sym}$, as defined in~\eqref{eq:symG}. In this region, non-local magic values are minimized because the reduced density matrix of  $\ket{G}_{sym}$ resembles that of a maximally mixed single qubit state, leading to a flat spectrum and, consequently, lower $\mathcal{M}_2$ follows from \cref{th:srebound}. Additionally, we observe diminished $\mathcal{M}_2$ values in regions far from the critical point, where $|g|$ is sufficiently large, as indicated by the plateau beyond $g<-0.2$ in \cref{fig:valley}.  Here, the ground state transitions to a symmetry-broken state $\ket{G}_{\uparrow/\downarrow}$, which lacks non-local magic due to its tensor product structure.

Despite the low non-local magic values associated with both  $\ket{G}_{sym}$ and $\ket{G}_{\uparrow/\downarrow}$, the transition between these states has to past through a regime of non-trivial non-local magic. This occurs because continuous parameter changes cannot be approximated by discrete Clifford transformations, resulting in a notable increase in non-local magic.   The $\mathcal{M}_2$ measure captures this as a pronounced peak, delineating the transition between the two ground states near $g\sim -0.1$ in \cref{fig:valley}.  

An additional noteworthy aspect of non-local magic inside the valley is its counterintuitive decrease with increasing subregion size $|A|$, as depicted in  \cref{fig:MvsAa}.  This phenomenon is unique to the valley. In contrast, entropy consistently increases with $|A|$. This unusual trend in $\mathcal{M}_2$ is also linked to the proximity to the symmetric state $\ket{G}_{sym}$, which results in an almost flat entanglement spectrum within the valley.  Consequently, $\mathcal{M}_2$ aligns more closely with the entropy differential $S_0-S$  rather than the entropy itself, as discussed in \cref{section:estimate}, offering an explanation for the inverse relationship observed between $S$ and $\mathcal{M}_2$ in this region.  

It's also important to note that the competition between  $\ket{G}_{sym}$ and $\ket{G}_{\uparrow/\downarrow}$  is a manifestation of finite-size effects. As demonstrated in  \cref{fig:size},   the valley tends to diminish with increasing lattice size $n$. Specifically, when we set $b=10^{-4}$ (see \cref{fig:sizea}), the peak of non-local magic shifts closer to $g=0$ with larger lattice sizes. Similarly, with $g=-0.11$ (see \cref{fig:sizeb}), the peak moves towards $b=0$ as the lattice size expands. This suggests that the parameter space favoring the symmetric state narrows in both dimensions with increasing lattice size.


\begin{figure}
    \centering
    \begin{subfigure}[b]{0.4\textwidth}
        \scalebox{0.75}{
        \begin{tikzpicture}
            \begin{axis}[
        xlabel={$g$}, 
        ylabel={$\mathcal{M}_2$},
        title={$\mathcal{M}_2$ v.s. $g$},
        legend style={at={(0.2,0.98)},anchor=north},
        grid={both}
        ]
    \addplot coordinates {
        (-0.2324768366025517, 0.007223787236453515)
        (-0.2174768366025518, 0.008748730695846136)
        (-0.20247683660255178, 0.010926710694541172)
        (-0.18747683660255177, 0.01458468304470301)
        (-0.17247683660255175, 0.021989759114994827)
        (-0.15747683660255174, 0.03917264971677005)
        (-0.14247683660255173, 0.08035700169072464)
        (-0.1274768366025517, 0.16764680998652184)
        (-0.11247683660255181, 0.2867334739089084)
        (-0.09747683660255169, 0.3170054821476055)
        (-0.08247683660255178, 0.233293650708921)
        (-0.06747683660255177, 0.16375202920327683)
        (-0.05247683660255176, 0.14849097486137633)
        (-0.037476836602551744, 0.1730143829660695)
        (-0.02247683660255173, 0.2237686039807405)
        (-0.0074768366025518285, 0.2908969759535281)
        (0.007523163397448296, 0.3617439850129772)
        (0.022523163397448198, 0.42049487003948655)
        (0.03752316339744832, 0.45535876881380544)
        (0.052523163397448225, 0.4646698074563559)
        (0.06752316339744824, 0.4547522488064993)
        (0.08252316339744825, 0.43376417359303615)
        (0.09752316339744815, 0.40791779310947524)
        (0.11252316339744828, 0.38087234049479696)
        (0.12752316339744818, 0.3544456797479232)
        (0.1425231633974483, 0.32940432557212895)
        (0.1575231633974482, 0.30598887725715823)
        (0.17252316339744822, 0.28420249005417625)
        (0.18752316339744823, 0.2639536107301466)
        (0.20252316339744825, 0.24512150642745928)
        (0.21752316339744826, 0.22758407535400627)
    };
    \addlegendentry{$n=18$}
    \addplot coordinates {
        (-0.2324768366025517, 0.007138323500646095)
        (-0.2174768366025518, 0.008478137420016225)
        (-0.20247683660255178, 0.010087538081067002)
        (-0.18747683660255177, 0.012036072476847123)
        (-0.17247683660255175, 0.014429732020143615)
        (-0.15747683660255174, 0.01747103174356087)
        (-0.14247683660255173, 0.021682284953683427)
        (-0.1274768366025517, 0.02875667639269592)
        (-0.11247683660255181, 0.04468008032379104)
        (-0.09747683660255169, 0.08927084067997512)
        (-0.08247683660255178, 0.20474773189791878)
        (-0.06747683660255177, 0.343935624133068)
        (-0.05247683660255176, 0.29458702358256)
        (-0.037476836602551744, 0.21639555782368375)
        (-0.02247683660255173, 0.2288576762236713)
        (-0.0074768366025518285, 0.29780205322798725)
        (0.007523163397448296, 0.3845718675314944)
        (0.022523163397448198, 0.4537579542896862)
        (0.03752316339744832, 0.4856460489390489)
        (0.052523163397448225, 0.4846369370554632)
        (0.06752316339744824, 0.4649908048348451)
        (0.08252316339744825, 0.43779072468821545)
        (0.09752316339744815, 0.4088117511954463)
        (0.11252316339744828, 0.38049969574410225)
        (0.12752316339744818, 0.35372882542412704)
        (0.1425231633974483, 0.3287227825545529)
        (0.1575231633974482, 0.30545833795560945)
        (0.17252316339744822, 0.2838286043166396)
        (0.18752316339744823, 0.26370581053604863)
        (0.20252316339744825, 0.2449641774289902)
        (0.21752316339744826, 0.22748741229171957)
    };
    \addlegendentry{$n=24$}
    \addplot coordinates {
        (-0.2324768366025517, 0.007138227912831164)
        (-0.2174768366025518, 0.008477694356282846)
        (-0.20247683660255178, 0.010085541655881018)
        (-0.18747683660255177, 0.012027317911429997)
        (-0.17247683660255175, 0.014392344398262721)
        (-0.15747683660255174, 0.01731547711723815)
        (-0.14247683660255173, 0.02105222614873403)
        (-0.1274768366025517, 0.026281070467837425)
        (-0.11247683660255181, 0.03536996261958408)
        (-0.09747683660255169, 0.05742683034275903)
        (-0.08247683660255178, 0.12352601687391683)
        (-0.06747683660255177, 0.2779640900307487)
        (-0.05247683660255176, 0.3525557793174458)
        (-0.037476836602551744, 0.25496987082659067)
        (-0.02247683660255173, 0.2367824080235811)
        (-0.0074768366025518285, 0.29989716620332707)
        (0.007523163397448296, 0.38971078542149157)
        (0.022523163397448198, 0.4607632643053709)
        (0.03752316339744832, 0.49084971817892564)
        (0.052523163397448225, 0.4871000147650576)
        (0.06752316339744824, 0.4656894698078343)
        (0.08252316339744825, 0.43774367409216186)
        (0.09752316339744815, 0.40856786113387106)
        (0.11252316339744828, 0.38026941416269877)
        (0.12752316339744818, 0.35356398349312906)
        (0.1425231633974483, 0.32861871489030337)
        (0.1575231633974482, 0.30539735413686264)
        (0.17252316339744822, 0.2837946474498429)
        (0.18752316339744823, 0.2636876197956447)
        (0.20252316339744825, 0.2449547347721947)
        (0.21752316339744826, 0.22748264215185654)
    };
    \addlegendentry{$n=26$}
    \end{axis}
        \end{tikzpicture}
        }
        \caption{}
        \label{fig:sizea}
    \end{subfigure}
    \hfill
    \begin{subfigure}[b]{0.4\textwidth}
        \scalebox{0.75}{
        \begin{tikzpicture}
\begin{axis}[
    title={$\mathcal{M}_2$ v.s. $\log b$},
    xlabel={$\log b$},
    ylabel={$\mathcal{M}_2$},
    xmin=-7, xmax=-2,
    ymin=0, ymax=0.4,
    legend pos=north east,
    grid={both},
    every axis plot/.append style={thick}
]

\addplot[
    color=orange,
    mark=*,
    ]
    coordinates {
    (-8, 0.035634345760970211646198095199)
    (-7, 0.03631833084924441)
    (-6, 0.03650516400246957)
    (-5.0, 0.05465961086097081)
    (-4.8, 0.08045085735915408)
    (-4.599999999999999, 0.13598974833468974)
    (-4.3999999999999995, 0.23023981318059344)
    (-4.2, 0.31365841803612904)
    (-4.199970640755866, 0.31366418738740776)
    (-4.179973576680279, 0.31694128531956933)
    (-4.159976512604692, 0.318873262956139)
    (-4.139979448529106, 0.31941765657095933)
    (-4.119982384453518, 0.3185608183049997)
    (-4.0999853203779315, 0.31631858715807953)
    (-4.079988256302346, 0.3127355487012461)
    (-4.059991192226759, 0.30788294744476635)
    (-4.039994128151172, 0.3018554297073823)
    (-4.019997064075586, 0.294766884738027)
    (-3.999999999999999, 0.2867457070623583)
    (-3.7999999999999994, 0.18484455356621748)
    (-3.599999999999999, 0.1041265640094637)
    (-3.3999999999999995, 0.062134224406794954)
    (-3.1999999999999997, 0.04349412266078626)
    (-2.9999999999999996, 0.03563434576097021)
    (-2, 0.02670719766200395)
    };
    \addlegendentry{n=18}

\addplot[
    color=blue,
    mark=square*,
    ]
    coordinates {
    (-5.999999999999999, 0.039754416471600426)
    (-5.799999999999999, 0.05011609228992938)
    (-5.6, 0.07469590800766968)
    (-5.4, 0.12813141910521825)
    (-5.199999999999999, 0.22107426054724227)
    (-5.0, 0.30964631505179396)
    (-4.979997073344623, 0.3138833533429611)
    (-4.959994146689245, 0.31681435786145246)
    (-4.9399912200338685, 0.31837814663203196)
    (-4.919988293378491, 0.3185412810675044)
    (-4.899985366723115, 0.3172996623181483)
    (-4.879982440067738, 0.3146787332854024)
    (-4.85997951341236, 0.31073226831681294)
    (-4.8399765867569835, 0.3055398559595673)
    (-4.819973660101606, 0.29920328721229517)
    (-4.8, 0.2918535976034281)
    (-4.79997073344623, 0.29184214223277216)
    (-4.599999999999999, 0.19165102401443654)
    (-4.3999999999999995, 0.10838340169117754)
    (-4.2, 0.06431979738533082)
    (-3.999999999999999, 0.04468314071476549)
    (-3, 0.030687732641793277)
    (-2, 0.026683527440224727)
    };
    \addlegendentry{n=24}
\end{axis}
\end{tikzpicture}
        }
        \caption{}
        \label{fig:sizeb}
    \end{subfigure}
    \caption{(a) Plot of $\mathcal{M}_2$ v.s. $g$, at $b=10^{-4}$.  The lattice size takes $n=18$, $n=24$ and $n=26$; (b) Plot of $\mathcal{M}_2$ v.s. $\log b$, at $g=-0.11$, with lattice size taking $n=18$ and $n=24$ }
    \label{fig:size}
\end{figure}
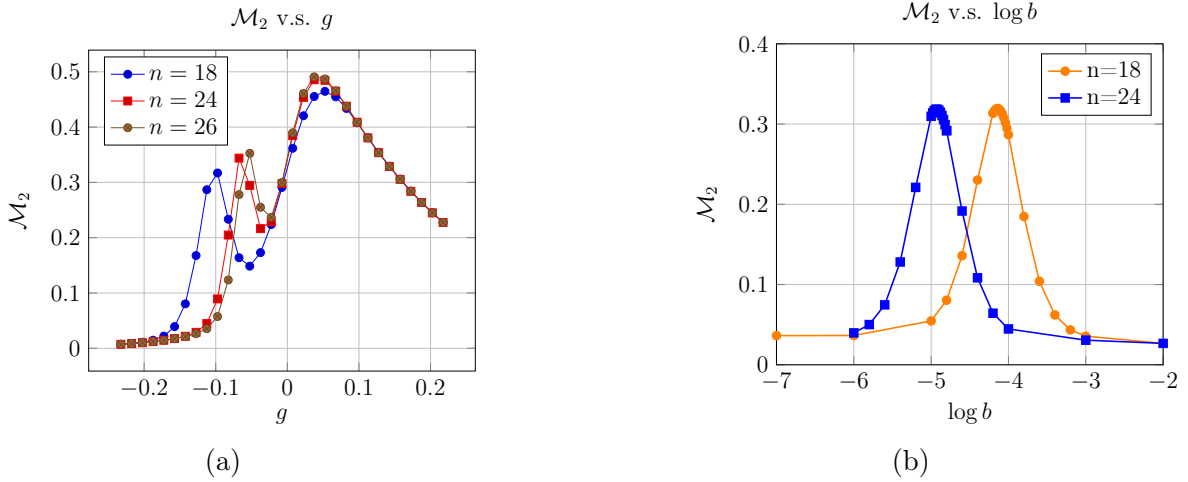



\begin{figure}
    \centering
    \includegraphics[scale=0.5]{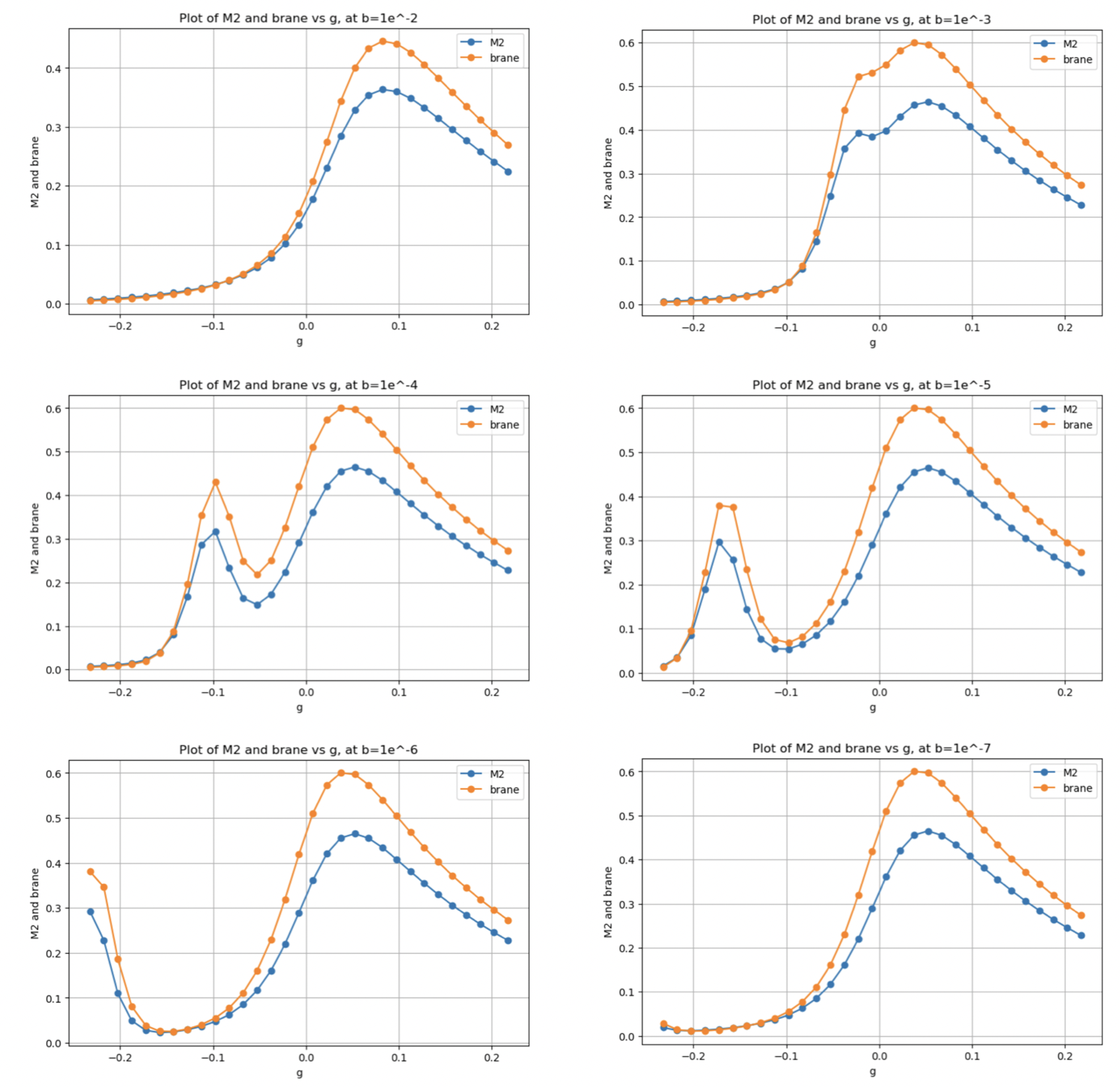}
    \caption{Comparison between $\mathcal{M}_2$ and $|\partial_n\tilde{S}_n|$ (labeled as brane) at various magnetic field $b$ and model parameter $g$. The small peak separates and is pushed to the left as $b$ decreases.}
    \label{fig:Mvsbrane}
\end{figure}

Expanding our analysis, \cref{fig:Mvsbrane} explores the non-local magic across a broader range of the bias field $b$. We find that beyond $b>0.01$ the valley disappears, and the  $g<0$ phase transitions to being governed by the symmetry-broken ground state $\ket{G}_{\uparrow/\downarrow}$. As $b$ decreases towards zero, the peak is pushed to the left where the valley widens, signifying the growing significance of the symmetric ground state $\ket{G}_{sym}$, which becomes dominant for all $g<0$ in the absence of $b$.



\figref{fig:NonLocalMagicSurface2} depicts the $\mathcal{M}_2$ surface as a function of subregion size $|A|$ and critical angle $\theta$. As we decrease the magnitude of the bias field, the symmetry-breaking peak is pushed towards lower and lower $\theta$ values.
\begin{figure}[H]
\begin{center}
	\includegraphics[width=11cm]{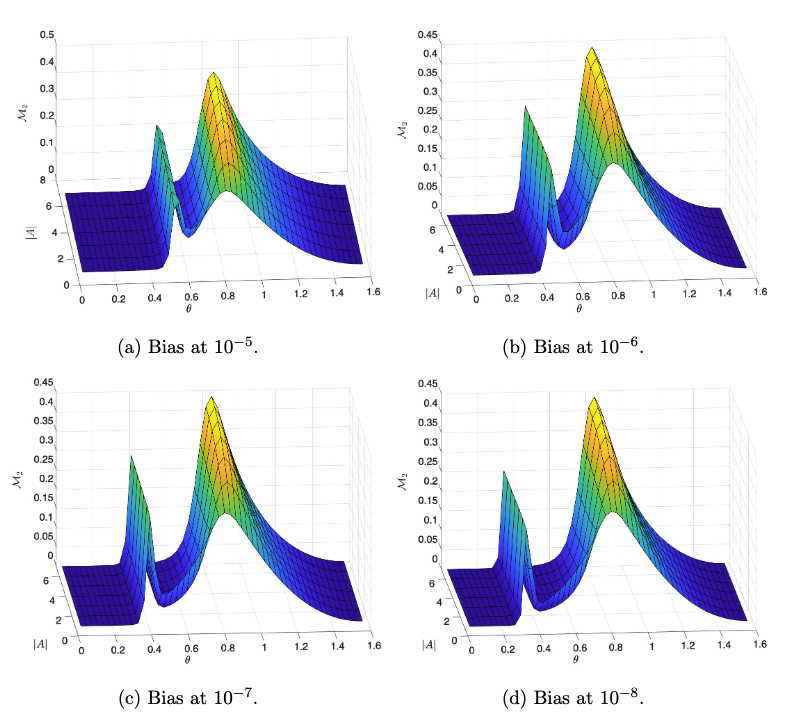}
    \caption{Non-local magic $\mathcal{M}_2$ surface, as a function of critical angle $\theta$ and subregion size $|A|$, for different bias offset fields. The bias magnetic field decreases, the peak indicating a symmetry-breaking effect in the system is pushed further away from criticality.}
    \label{fig:NonLocalMagicSurface2}
\end{center}
\end{figure}

\chapter{PERMISSION FOR REPLICATION}

\newpage
The contents of Chapters 2--7 comprise previously published work. I thank the coauthors of these papers for their permission to replicate the corresponding work here.

\end{document}